\let\MakeUppercase\MakeTextUppercase
\newcommand{\notprop}{\propto\kern-1\@ptsize pt \diagup}
\numberwithin{equation}{chapter}
\newcounter{mysubequations}
\newtheorem{proposition}{Proposition}[section]
\newtheorem{lemma}[proposition]{Lemma}
\newtheorem{corollary}[proposition]{Corollary}
\newtheorem{theorem}[proposition]{Theorem}
\newtheorem{definition}[proposition]{Definition}
\theoremstyle{definition}
 \gdef\tfn@fnt{0}%
\newenvironment{example}[1][]
{\refstepcounter{proposition}\par\medskip\noindent%
	\textbf{Example~\theproposition. #1} \rmfamily\par\nopagebreak%
	\begin{mdframed}[
		linewidth=1pt,
		linecolor=black,
		bottomline=false,topline=false,rightline=false,
		innerrightmargin=0pt,innertopmargin=0pt,innerbottommargin=0pt,
		innerleftmargin=1em,
		skipabove=.5\baselineskip
		]\itshape}
	{\end{mdframed}}
\newenvironment{remark}[1][]
  {\refstepcounter{proposition}\par\smallskip\noindent%
   \textbf{\underline{Remark}~\theproposition. #1} \rmfamily\par\nopagebreak%
  \begin{mdframed}[
     linewidth=1pt,
     linecolor=black,
     bottomline=false,topline=false,rightline=false,
     innerrightmargin=0pt,innertopmargin=0pt,innerbottommargin=0pt,
     innerleftmargin=1em,
     skipabove=.5\baselineskip
   ]}
  {\end{mdframed}}
  	\definecolor{darkcyan}{HTML}{0091A4}  
	\definecolor{brightcyan}{HTML}{dcf0f2}
	\definecolor{darkgray}{rgb}{0.3,0.3,0.3}
	\definecolor{brightgray}{rgb}{0.95,0.95,0.95}
	\newlength{\defparindent}
	\newmdenv[
	topline=false,  
	rightline=false,  
	bottomline=false,  
	leftline=true,  
	linecolor=darkgray,  
	linewidth=3pt,  
	backgroundcolor=brightgray,
	settings={\setlength{\parindent}{0cm}}
	]{mathematicasindent}
	\newmdenv[  
	topline=false,
	rightline=false,  
	bottomline=false,  
	leftline=true,  
	linecolor=darkgray,  
	linewidth=3pt,  
	backgroundcolor=brightgray,  
	]{mathematica0}   
	\newmdenv[  
	topline=false,
	rightline=false,  
	bottomline=false,  
	leftline=false,  
	linecolor=darkgray,  
	linewidth=3pt,  
	backgroundcolor=brightgray 
	]{cbox} 
	\newenvironment{mathematicas}
	[1][]{\begin{mathematicasindent}[frametitle={Mathematica commands: ~#1}]\setlength{\parindent}{\defparindent}\ignorespaces
		}
	{\end{mathematicasindent}}
	\newenvironment{mathematica}[1][]
	{\begin{mathematica0}[frametitle={Mathematica command: ~#1}]
		}
	{\end{mathematica0}}
\definecolor{orange}{rgb}{0.99,0.34,0.07}
\definecolor{jaune}{rgb}{0.85,0.5,0.07}
\definecolor{orangebracelet}{rgb}{0.996,0.678,0.255}
\newcommand{\be}{\begin{eqnarray}}
	\newcommand{\ee}{\end{eqnarray}}
\newcommand{\diff}{\mathrm{d}}
\newcommand{\bigzero}{\mbox{\normalfont\Large\bfseries 0}}
\newcommand{\sn}{\mathfrak{S}_{n}}
\newcommand{\Sn}[1]{\mathfrak{S}_{#1}}
\newcommand{\cn}{\mathcal{C}_n}
\newcommand{\Cn}[1]{\mathcal{C}_{#1}}
\newcommand{\bn}{B_n}
\newcommand{\Bn}[1]{B_{#1}}
\newcommand{\dbn}{\mathcal{B}_n}
\newcommand{\Stab}{\textup{Stab}}
\newcommand{\End}{\textup{End}}
\newcommand{\Hom}{\textup{Hom}}
\newcommand{\GL}{\textup{GL}}
\newcommand{\Or}{\textup{O}}
\newcommand{\C}{\mathbb{C}}
\newcommand{\R}{\mathbb{R}}
\newcommand{\Q}{\mathbb{Q}}
\newcommand{\stab}{\textup{st}}
\newcommand{\spec}{\textup{spec}}
\newcommand{\st}{\,|\,}
\newcommand{\lp}{\textup{(}}
\newcommand{\rp}{\textup{)}}
\newcommand{\Par}{\mathcal{P}}
\newcommand{\Irr}{\textup{Irr}}
\newcommand{\tab}{\mathrm{t}}
\newcommand{\Tab}{\textup{Tab}}
\newcommand{\blist}[1]{\{#1\}}
\newcommand{\down}[1]{\underline{#1}}
\newcommand{\GCT}{\textup{GCT}}
\newcommand{\CT}{\textup{CT}}
\DeclareFontFamily{U} {MnSymbolC}{}
\DeclareFontShape{U}{MnSymbolC}{m}{n}{
	<-6> MnSymbolC5
	<6-7> MnSymbolC6
	<7-8> MnSymbolC7
	<8-9> MnSymbolC8
	<9-10> MnSymbolC9
	<10-12> MnSymbolC10
	<12-> MnSymbolC12}{}
\DeclareSymbolFont{MnSyC} {U} {MnSymbolC}{m}{n}
\DeclareMathSymbol{\slashdiv}{\mathbin}{MnSyC}{29}
\newcommand*{\Scale}[2][4]{\scalebox{#1}{\ensuremath{#2}}}
\newcommand\scalemath[2]{\scalebox{#1}{\mbox{\ensuremath{\displaystyle #2}}}}
\newcommand{\bb}[1]{\mathbf{#1}}
\newcommand{\db}[1]{\dot{\mathbf{#1}}}
\newcommand{\dd}[1]{\ddot{\mathbf{#1}}}
\newcommand{\mione}{\scalebox{0.6}{-1}}
\newcommand{\mitwo}{\scalebox{0.6}{-2}}
\newcommand{\mithree}{\scalebox{0.6}{-3}}
\newcommand{\one}{\scalebox{0.6}{1}}
\newcommand{\two}{\scalebox{0.6}{2}}
\newcommand{\three}{\scalebox{0.6}{3}}
\newcommand{\zero}{\scalebox{0.6}{0}}
\newcommand{\LRp}{\, {\scriptstyle \otimes}\,}
\newcommand{\svdots}{%
	\vbox{
		\scriptsize \baselineskip 2.5pt \lineskiplimit 0pt
		\hbox {.}\hbox {.}\hbox {.}\kern-0.75pt
	}%
}
\def\MT_leftarrow_fill:{%
	\arrowfill@\leftarrow\relbar\relbar}
\def\MT_rightarrow_fill:{%
	\arrowfill@\relbar\relbar\rightarrow}
\newcommand{\xrightleftarrows}[2][]{\mathrel{%
		\raise.55ex\hbox{%
			$\ext@arrow 0359\MT_rightarrow_fill:{\phantom{#1}}{#2}$}%
		\setbox0=\hbox{%
			$\ext@arrow 3095\MT_leftarrow_fill:{#1}{\phantom{#2}}$}%
		\kern-\wd0 \lower.55ex\box0}}
\newcommand{\stimes}{\,{\scriptstyle \times}\,}
\newcommand{\id}{\mathds{1}}
\newcommand{\forceindent}{\leavevmode{\parindent=1em\indent}} 
\DeclareMathSymbol{\shortminus}{\mathbin}{AMSa}{"39}
\newcommand{\overbar}[1]{\mkern 1.5mu\overline{\mkern-1.5mu#1\mkern-1.5mu}\mkern 1.5mu}
\newcommand{\Dim}{\textup{d}}
\author{Thomas Helpin}
\date{19 Décembre 2023}
\title{Décompositions irréductibles des tenseurs via l'algèbre de Brauer et applications à la gravitation métrique-affine.}
\begin{document}
	\let\MakeUppercase\relax	
	
	\pagedegarde

	\frontmatter
	\newgeometry{hmargin=2.5cm,vmargin=3.5cm}
	\chapter*{Remerciements}

	
	Je tiens tout d'abord à exprimer ma sincère gratitude envers les rapporteurs de ma thèse, Nicolas Boulanger et Alfonso García-Parrado. Leur lecture détaillée et leurs commentaires ont grandement contribué à l'amélioration de ce manuscrit. Je remercie également les membres du jury, Xavier Bekaert, Romain Gicquaud, Karim Noui et Tomi S. Koivisto, pour avoir accepté de participer à l'examen de cette thèse ainsi qu'à ma soutenance. Je tiens à exprimer ma reconnaissance envers mon directeur de thèse, Mikhail S. Volkov, qui a rendu cette thèse possible et qui m'a fait confiance pour suivre mes propres projets de recherche. Aussi, je souhaite exprimer ma sincère gratitude envers mon collaborateur, Yegor Goncharov, pour son aide durant la rédaction de cette thèse et pour m'avoir introduit à l'algèbre de Brauer. Nos recherches mutuelles et nos nombreuses discussions ont grandement contribué à l'élaboration de ce manuscrit.\medskip
	
	Mes sincères remerciements vont également à tous mes collègues et amis. Merci aux membres du bureau 1360, Romain, Guillaume, Yegor et Andrii, pour l'ambiance incroyable qu'ils y ont apportée durant ces dernières années. Une mention spéciale pour tous les membres du \textit{club amande}, fournisseur quotidien de viennoiseries, organisé par notre cher président Guillaume. J'en profite pour remercier Floriane qui, à maintes reprises, nous a fait le plaisir de sa visite, gagnant ainsi le statut de stagiaire du bureau, et pour son amitié en général.  Merci aussi à l'ensemble des doctorants du laboratoire, Yohan, Léa, Antonin, Théo, Jad, Igor, Maxime, Romane, Marion, Julien et Vagif, pour tous les bons moments passés ensemble, autant au sein du laboratoire qu'en dehors. Je remercie aussi les anciens doctorants et post-doctorants que j'ai eu la chance de cotoyer : Mahmut, Rima, Hakim, Abraham, Nathan, Andreas, Aymane, Eda, Florestan, Jean-David, Charles, Clément et Julien.\medskip 
	
	Aussi, je tiens à remercier l'ensemble des membres de l'équipe pédagogique de l'institut Denis Poisson et du Greman, ainsi que Laetitia pour sa bonne humeur au quotidien.\medskip  
 	
	Un grand merci à tous mes amis de Tours et d'ailleurs. Pour ceux qui n'ont pas été mentionnés précédemment, je suis certain qu'ils se reconnaîtront. Enfin, je suis profondément reconnaissant envers ma famille pour son soutien indéfectible tout au long de cette période.

	

	\newgeometry{hmargin=2.5cm,vmargin=0.5cm}
	\chapter*{Résumé}
	
	\noindent Dans la première partie de cette thèse, on utilise la théorie des représentations des groupes et des algèbres afin d'obtenir une décomposition irréductible des tenseurs dans le contexte de la gravitation métrique-affine. En particulier, on considère l'action du groupe orthogonal $\Or(1, \Dim-1)$ sur le tenseur de Riemann associé à une connexion affine, avec torsion et non-métricité, définit sur une variété pseudo-Riemannienne. Cette connexion est l'ingrédient caractéristique de la gravitation métrique-affine.
	
	La décomposition irréductible du tenseur de Riemann effectuée dans cette thèse est conçue pour l'étude des théories invariantes projectives de la gravitation métrique-affine. Les propriétés suivantes impliquent l'unicité de la décomposition. Premièrement, elle s'effectue en deux étapes: on considère l'action de $\GL(\Dim,\mathbb{R})$ puis l'action de $\Or(1,\Dim-1)$. Deuxièmement, le nombre de tenseurs invariants projectifs dans la décomposition est maximal. Troisièmement, la décomposition est orthogonale par rapport au produit scalaire canonique induit par la métrique. La même procédure est appliquée à la distorsion de la connexion affine. Enfin, à partir de ces décompositions, nous obtenons les Lagrangiens quadratiques généraux en la distortion et en la courbure de Riemann.
	
	\medskip
	\noindent Dans la deuxième partie, nous construisons les opérateurs de projection utilisés pour obtenir les décompositions mentionnées précédemment. Ces opérateurs sont réalisés en termes de l'algèbre du groupe symétrique $\C\sn$ et de l'algèbre de Brauer $\bn(\Dim)$, qui sont respectivement liées à l'action de $\GL(\Dim,\C)$ (et sa forme réelle $\GL(\Dim,\mathbb{R}$)) et à l'action de $\Or(\Dim,\C)$ (et sa forme réelle $\Or(1,\Dim-1)$) sur les tenseurs via la dualité de Schur-Weyl.
	
	Tout d'abord, nous proposons une approche alternative aux formules connues pour les idempotents centraux de $\C\sn$. Ces éléments réalisent une décomposition réductible unique, connue sous le nom de décomposition isotypique. Cette décomposition s'avère remarquablement pratique pour aboutir à la décomposition irréductible par rapport à $\GL(\Dim,\mathbb{R})$.
	
	Ensuite, nous construisons les éléments de $\bn(\Dim)$ qui réalisent la décomposition isotypique d'un tenseur par rapport à l'action de $\Or(\Dim,\C)$. Cette décomposition est irréductible sous $\Or(\Dim,\C)$ lorsqu'elle est appliquée à un tenseur $\GL(\Dim,\C)$ irréductible d'ordre 5 ou moins. En conséquence directe de la construction, nous proposons une solution au problème de décomposition d'un tenseur arbitraire en sa partie sans trace, doublement sans trace, et ainsi de suite.
	
	Enfin, dans le dernier chapitre, nous présentons une technique qui optimise l'utilisation des opérateurs de projection par des systèmes de calcul formel.  Ces résultats ont conduit au développement de plusieurs paquets Mathematica liés au bundle \textit{xAct} pour le calcul tensoriel en théorie des champs. Des fonctions particulières sont présentées tout au long du manuscrit.\medskip
	
	\textbf{Mots-clés} : Gravitation métrique-affine, tenseur de Riemann, invariance projective, décomposition irreductible, dualités de Schur-Weyl, algèbre de Brauer.

	\chapter*{Abstract}
	
	\noindent In the first part of this thesis, we make use of representation theory of groups and algebras to perform an irreducible decomposition of tensors in the context of metric-affine gravity. In particular, we consider the action of the orthogonal group $\Or(1,\Dim-1)$ on the Riemann tensor associated with an affine connection defined on a $\Dim$-dimensional pseudo-Riemannian manifold. This connection, with torsion and non-metricity, is the characteristic ingredient of metric-affine theories of gravity.  \smallskip

	The irreducible decomposition of the Riemann tensor carried out in this thesis is devised for the study of projective invariant theories of metric-affine gravity. The following properties imply the uniqueness of the decomposition. Firstly, it is performed in two steps: we consider the action of $\GL(\Dim,\R)$ and then the action of $\Or(1,\Dim-1)$. Secondly, the number of projective invariant irreducible tensors in the decomposition is maximal. Thirdly, the decomposition is orthogonal with respect to the canonical scalar product induced by the metric. The same procedure is applied to the distortion of the affine connection. Finally, from these decompositions, we derive the general quadratic Lagrangians in the distortion and in the Riemann tensor.

	\medskip
	\noindent In the second part, we construct the projection operators used for the aforementioned decomposition. They are realized in terms of the symmetric group algebra $\C\sn$ and of the Brauer algebra $\bn(\Dim)$ which are related respectively to the action of $\GL(\Dim,\C)$ (and its real form $\GL(\Dim,\mathbb{R})$) and to the action of $\Or(\Dim,\C)$ (and its real form $\Or(1,\Dim-1)$) on tensors via the Schur-Weyl duality.

	\smallskip
	First of all, we give an alternative approach to the known formulas for the central idempotents of $\C\sn$. These elements provide a unique reducible decomposition, known as the isotypic decomposition. For our purposes, this decomposition is remarkably handy to arrive at the sought after irreducible decomposition with respect to $\GL(\Dim,\mathbb{R})$.

	\smallskip		
	Then, we construct the elements in $\bn(\Dim)$ which realize the isotypic decomposition of a tensor under the action of $\Or(\Dim,\C)$. This decomposition is irreducible under $\Or(\Dim,\C)$ when applied to an irreducible $\GL(\Dim,\C)$ tensor of order $5$ or less. As a by product of the construction, we give a solution to the problem of decomposing an arbitrary tensor into its traceless part, doubly traceless part and so on.

	\smallskip
	Finally, in the last chapter  we present a technique which optimizes the use of the projection operators by computer algebra systems. These results led to the development of several Mathematica packages linked to the \textit{xAct} bundle for tensor calculus in field theory. Particular functions are presented along the manuscript. \medskip  
	
	\textbf{Keywords:} Metric-affine gravity, Riemann tensor, projective invariance, irreducible decomposition, Schur-Weyl dualities, Brauer algebra.
	
	\newgeometry{hmargin=2.2cm,vmargin=2.8cm}
	\section*{Notations and conventions: metric-affine gravity}
	
	\begin{center}
	{\renewcommand{\arraystretch}{1.3}
		\begin{tabular}{ll}
			Notations & \\
			\hline
			$\mathcal{M}$ &  smooth manifold of dimension $\Dim$\\
			$g_{\alpha\beta}$ & components of the symmetric non degenerate Lorentzian metric tensor $g$\\
			$\Bigl \{ \tensor{}{^{\sigma}_{\alpha\beta}}\Bigr \}$ & components of the Levi-Civita connection $\accentset{g}{\nabla}$\\
			$\tensor{R}{_{\alpha \beta\sigma}^{\gamma}}$ & components of the Riemann tensor associated with the connection $\accentset{g}{\nabla}$\\
			$\tensor{R}{_{\alpha \beta}}$ & components of the Ricci tensor associated with the connection $\accentset{g}{\nabla}$\\
			$R$ & Ricci scalar associated with the connection $\accentset{g}{\nabla}$\\
			$\tensor{\Gamma}{^{\sigma}_{\alpha\beta}}$ & components of a connection with torsion and non-metricity $\nabla$\\
			$\tensor{T}{^{\,\sigma}_{\alpha\beta}}$ & components of the torsion tensor $T$ \\
			$\tensor{Q}{_{\sigma}_{\alpha\beta}}$& components of the non-metricity tensor $Q$\\
			$\tensor{C}{^{\sigma}_{\alpha\beta}}$ & components of the distortion tensor $C$\\
			$\tensor{\mathcal{R}}{_{\alpha \beta\sigma}^{\gamma}}$ & components of the Riemann tensor associated with the connection $\nabla$\\
			$\tensor{\accentset{(1)}{\mathcal{R}}}{_{\alpha \beta}}$ & components of the Ricci tensor associated with the connection $\nabla$\\
			$\tensor{\accentset{(2)}{\mathcal{R}}}{_{\alpha \beta}}$ & components of the co-Ricci tensor associated with the connection $\nabla$ and metric $g$\\
			$\tensor{\accentset{(3)}{\mathcal{R}}}{_{\alpha \beta}}$ & components of the homothetic tensor associated with the connection $\nabla$\\
			$\mathcal{R}$ & Ricci scalar associated with the connection $\nabla$ and metric $g$\\[6pt]
			\hline
		\end{tabular}
	}
	\end{center}
\vspace{0.1cm}
	
	\begin{center}
	{\renewcommand{\arraystretch}{1.3}
		\begin{tabular}{ll}
			Conventions - terminology  & \\
			\hline
			indices $\alpha\,,\,\beta\,,\,\gamma\,,\,\delta\,,\,\sigma$ & coordinate frame (holonomic) indices \\
			indices $a \,,\, b\,,\, c \,,\, d$  & moving frame indices  \\
			$\tensor{T}{_{(\alpha \beta)}}$ & totally symmetric part of $T_{\alpha\beta}$: $\dfrac{1}{2}\left(\tensor{T}{_{\alpha \beta}}+\tensor{T}{_{\beta \alpha}}\right)$ \\
			$\tensor{T}{_{[\alpha \beta]}}$ & totally antisymmetric part of  $T_{\alpha\beta}$: $\dfrac{1}{2}\left(\tensor{T}{_{\alpha \beta}}-\tensor{T}{_{\beta \alpha}}\right)$ \\
			$\tensor{T}{_{(\alpha| \beta |\gamma)}}$ & symmetric part of $T_{\alpha\beta\gamma}$ with respect to $\alpha\,,\gamma$: $\dfrac{1}{2}\left(\tensor{T}{_{\alpha \beta \gamma}}+\tensor{T}{_{\gamma \beta \alpha}}\right)$ \\
			$\tensor{T}{_{[\alpha| \beta |\gamma]}}$ & antisymmetric part  of $T_{\alpha\beta\gamma}$ with respect to $\alpha\,,\gamma$: $\dfrac{1}{2}\left(\tensor{T}{_{\alpha \beta \gamma}}-\tensor{T}{_{\gamma \beta \alpha}}\right)$ \\
			Einstein summation  & $\tensor{T}{_\alpha^\alpha_\beta}=\displaystyle{\sum_{k=1}^{\Dim}}\,\tensor{T}{_k^k_\beta}$ \\
			metric contraction  & $g^{\beta\gamma}\,\tensor{T}{_{\,\alpha_{1}\ldots\alpha_{i-1}\,\beta\,\alpha_{i+1}\ldots\alpha_{j-1}\,\gamma\,\alpha_{j+1}\ldots \alpha_{q}}}=\tensor{T}{_{\,\alpha_{1}\ldots\alpha_{i-1}}^{\,\beta\,}_{\alpha_{i+1}\ldots\alpha_{j-1}\,\beta\,\alpha_{j+1}\ldots \alpha_{q}}}$\\
			& $g_{\beta\gamma}\,\tensor{T}{^{\,\alpha_{1}\ldots\alpha_{i-1}\,\beta\,\alpha_{i+1}\ldots\alpha_{j-1}\,\gamma\,\alpha_{j+1}\ldots \alpha_{p}}}=\tensor{T}{^{\,\alpha_{1}\ldots\alpha_{i-1}}_{\,\beta\,}^{\alpha_{i+1}\ldots\alpha_{j-1}\,\beta\,\alpha_{j+1}\ldots \alpha_{p}}}$\\
			$(f+1)$-traceless tensor & tensor for which any $f+1$ metric contractions yield zero\\
			1-traceless tensor $\tensor{\underline{T}}{_{\,\alpha_{1}\ldots \alpha_{q}}}$ & totally traceless part of $\tensor{T}{_{\,\alpha_{1}\ldots \alpha_{q}}}$: for any $1\leqslant i <j\leqslant q$\\
			& $g^{\alpha_i\alpha_j}\,\tensor{\underline{T}}{_{\,\alpha_{1}\ldots\alpha_{i-1}\,\alpha_{i}\,\alpha_{i+1}\ldots\alpha_{j-1}\,\alpha_{j}\,\alpha_{j+1}\ldots \alpha_{q}}}=0$ \\[6pt]
			\hline
		\end{tabular}
	}
\end{center}

\newpage
	\section*{Notations: representation theory}

	\begin{center}
	{\renewcommand{\arraystretch}{1.3}
	\begin{tabular}{ll}
		\hline
		$V$, $V^{\otimes n}$ & complex vector space of dimension $\Dim$, $n$-fold tensor product of $V$\\
		$\GL(\Dim,\C)$ (resp. $\GL(\Dim,\mathbb{R})$) & complex (resp. real) Lie group of invertible $\Dim\times\Dim$ matrices\\
		$\Or(\Dim,\C)$ (resp. $\Or(1\,,\,\Dim-1)$) & complex (resp. real) Lie group of orthogonal (resp. pseudo-orthogonal)\\
		& $\Dim\times\Dim$ matrices\\
		$\sn$ & symmetric group which acts on $V^{\otimes n}$ (permutation group on $n$ symbols)  \\
		$\C\sn$ & symmetric group algebra over the complex field  \\
		$\bn(\Dim)$ (or simply $\bn$) & Brauer algebra which acts on $V^{\otimes n}$  \\
		$\mu$ & integer partitions of $n$ or Young diagrams with $n$ boxes associated\\
		&  with the irreducible representations of $\GL(\Dim,\C)$ or $\C\sn$ in $V^{\otimes n}$\\
		$\lambda$ & integer partitions or Young diagrams associated with\\
		& the irreducible representations of $\Or(\Dim,\C)$  or $\bn(\Dim)$ in $V^{\otimes n}$\\
		$V^\mu$ (resp. $D^\lambda$) & irreducible representations of $\GL(\Dim,\C)$ (resp. $\Or(\Dim,\C)$) in $V^{\otimes n}$\\
		$m_\mu$ (resp. $m_\lambda$) & multiplicity of $V^{\mu}$ (resp. $D^{\lambda}$) in $V^{\otimes n}$ \\
		$L^\mu$ (resp. $M^\lambda_n$) & irreducible representations of $\C\sn$ (resp. $\bn(\Dim)$) in $V^{\otimes n}$ \\
		$f$ (or $f_\lambda$) & non-negative integers implicitly parameterizing the irreducible \\
		&representations $M^\lambda_n$ of $\bn(\Dim)$ in $V^{\otimes n}$: $\lambda\vdash n-2\, f$; $f_\lambda=\dfrac{n-|\lambda|}{2}$\\
		$g_\mu$ (resp. $g_\lambda$) & multiplicity of $L^{\mu}$ (resp. $M^{\lambda}_n$) in $V^{\otimes n}$ \\
		$\tensor{C}{^\mu_\lambda_\nu}$ & Littlewood-Richardson coefficients\\
		$\mathcal{Z}_n$ (resp. $\mathcal{Z}_{\bn}$) & center of $\C\sn$ (resp. center of $\bn(\Dim)$) \\
		$Z^\mu$ & central Young idempotents (belong to $\mathcal{Z}_n$);\\
		&  projection operators onto the isotypic components $\left(V^\mu\right)^{\oplus m_{\mu}}$\\
		$P^\lambda_n$ & elements of $\bn(\Dim)$ which belong to $\mathcal{Z}_{\bn}$ when $\Dim \geqslant n-1$;\\
		&projection operators onto the isotypic components $\left(D^\lambda\right)^{\oplus m_{\lambda}}$\\
		$\tab$ & either paths in the Bratteli diagram associated with $\C\sn$\\
		&(standard Young tableaux)\\
		&or paths in the Bratteli diagram associated with $\bn(\Dim)$\\ 
		&(oscillating tableaux)\\
		$\mathcal{Y}^{\,\tab}$ & Young symmetrizers ($t$ is a standard tableau) \\
		$Y^{\,\tab}$ & Young seminormal idempotents ($t$ is a standard tableau): \\
		& primitive pairwise orthogonal idempotents in $\C\sn$\\
		$P_n^{\,\tab}$ & when $\Dim\geqslant n-1$, primitive pairwise orthogonal idempotents in $\bn(\Dim)$ \\
		& ($\tab$ is an oscillating tableau) \\
		$\cn$ & centralizer of $\sn$ in $\bn(\Dim)$  \\[6pt]
		\hline
	\end{tabular}
	}
	\end{center}

	\newgeometry{hmargin=2.5cm,vmargin=3.5cm}

	\tableofcontents
	\mainmatter	
	\clearpage
	\chapter*{Introduction}
\addcontentsline{toc}{chapter}{\protect\numberline{}Introduction}
\pagestyle{headings}
\markboth{Introduction}{}

While a large part of this thesis is more mathematical and dedicated to some algebraic aspects related to the irreducible decomposition of tensors, it is originally motivated by a class of modified theories of gravity called metric-affine gravity.  

\subsection*{General Relativity}
\addcontentsline{toc}{subsection}{\protect\numberline{} General Relativity}


During the $19$th century, the astronomer Urbain Le Verrier studied the orbit of Mercury around the Sun and observed an anomalous drift of its perihelion\footnote{The point of its orbit closer to the sun.} compared to the prediction of Newton's theory of gravity. 
In order to accommodate his observations within the theory of Newton, he postulated the existence of Vulcain: an hypothetical planet orbiting between the Sun and Mercury. Despite many efforts by the astronomers at that time, this planet was never observed. The other approach considered was to modify Newton's law of universal gravitation in the regime of a strong gravitational field, and the solution was provided in $1915$ by Albert Einstein with the theory of General Relativity.\medskip 

Within the Newtonian theory of gravity, when an object loses mass, the gravitational potential $\Phi$ and gravitational acceleration field $\vv{G}$ it generates are instantaneously modified: the interaction  propagates at infinite speed. This fact is in contradiction with the principles of Special Relativity \cite{gourgoulhon2016special}, a theory introduced by Einstein in $1905$, which postulates that no field can propagate faster than the constant speed of light $c$. To incorporate this causal constraint of physical interactions, the concept of absolute time of Newtonian physics had to be abandoned. In Special Relativity, space and time are joined together within a whole continuum, the flat Minkowski spacetime. Inertial observers, defined as having zero acceleration\footnote{More precisely, an inertial observer is modeled by an orthonormal frame transported along a timelike curve with zero four-acceleration and zero four-rotation.}, follow the straight lines in the Minkowski space. Each point on the observer's trajectory is equipped with a vector space (its tangent space) with its constant spacetime metric (the Minkowski metric $\eta_{ab}$) allowing for the measurement of spacetime distances and the construction of their causal structure (\textit{the light cone}).\medskip
\newpage 

In the theory of General Relativity, Einstein incorporated the gravitational interactions into the fabric of a curved spacetime, thereby generalizing the flat Minkowski space of Special Relativity. The spacetime of General Relativity is described by a differentiable manifold\footnote{A differentiable manifold is roughly speaking a space which can be ‘patched' by domains where one can define a coordinate system that maps the points of each domains to points of $\mathbb{R}^4$. Additionally one can define partial derivatives of functions on the manifold through the usual partial derivatives of functions on $\mathbb{R}^4$.} whose tangent spaces are endowed with a metric $g_{\alpha\beta}$ which may vary from point to point. This metric field, which replaces at the same time the Minkowski metric of Special Relativity and the gravitational potential $\Phi$ of Newton's theory of gravitation, is the core ingredient of General Relativity. It governs the curvature of spacetime: the Riemann curvature tensor $\tensor{R}{_{\alpha\beta\rho}^\sigma}$.\medskip

The latter is constructed from an affine connection which extends the concept of parallel transport of vectors along curves on an affine spaces\footnote{In the flat Minkowski space of Special Relativity vectors are transported between two infinitesimally close points $A$ and $B$ by a translation $\vv{AB}$.} to curves on a manifold, and hence the notion of flat straight lines to curved straight lines: the \textit{geodesics}. The affine connection of General Relativity is the Levi-Civita connection\footnote{The Levi-Civita connection coefficients will be denoted $\Bigl \{ \tensor{}{^\alpha_{\beta\rho}}\Bigr \}$ in the rest of the manuscript.} $\tensor{\accentset{g}{\Gamma}}{^\alpha_{\beta\rho}}$, which \textit{preserves the metric} under parallel transport and has no \textit{torsion}. As a result of these properties, $\tensor{\accentset{g}{\Gamma}}{^\alpha_{\beta\rho}}$ is constructed from the metric and its partial derivatives only, and replaces the gravitational acceleration field $\vv{G}$ of Newton which is given by the derivatives of the potential $\Phi$. The geodesics of $\tensor{\accentset{g}{\Gamma}}{^\alpha_{\beta\rho}}$ are the paths followed by small masses (observers) in a strong gravitational field, much like the trajectory of Mercury around the Sun.\medskip

%
%
%
%
%
%
%

The following question arises: how does the Sun influences the curvature of spacetime~?  The answer is given by Einstein field's equations for the metric $g_{\alpha\beta}$:
\begin{equation}\label{eq:EH_FH}
	\begin{array}{ll}
		R_{\alpha\beta}-\dfrac{1}{2} g_{\alpha\beta} R=T_{\alpha\beta}\,,\\
	\end{array}
\end{equation}
where  $R_{\alpha\beta}=\tensor{R}{_{\alpha\sigma\beta}^\sigma}$ (\textit{Ricci tensor}) and $R=g^{\alpha\beta}\tensor{R}{_{\alpha\sigma\beta}^\sigma}$ (\textit{Ricci scalar}) are constituents of the Riemann tensor which depends only  on the metric, its first and second derivatives, while $T_{\alpha\beta}$ is called the stress-energy tensor\footnote{This tensor generalizes to relativistic physics the stress tensor used in the mechanics of continuous media.} (also known as energy-momentum tensor). This latter tensor characterizes the flow of energy and momentum of the matter content of spacetime.\medskip 

The theory of General Relativity provides a coherent description of space, time, gravity and its interaction with matter, covering a vast range of scales with unprecedented agreement with observations. Among the successes of General Relativity, let us mention the prediction of the emission of gravitational waves propagating at the speed of light $c$. Originating from extremely violent astronomical processes like black hole and neutron star mergers, they were finally detected in the last
years by LIGO and Virgo Collaborations \cite{PhysRevLett.116.241103,PhysRevLett.119.161101}. 

\medskip

\newpage

\subsection*{Modified gravity}
\addcontentsline{toc}{subsection}{\protect\numberline{} Modified gravity}
The dynamics of a classical field theory is most often determined by its Lagrangian, whose construction is based on symmetry principles. For example, the Lagrangian of Yang-Mills\footnote{The Yang-Mills theory is at the core of our understanding of the fundamental interactions of particle physics.} theory, which is constructed from quadratic terms in the field strength (curvature) of the gauge field\footnote{After quantization, the gauge fields (for example the electromagnetic field) of the theory are called gauge bosons (for example photons).} (connection), is invariant under particular spacetime-dependent transformations associated with Lie groups. For General Relativity, the Lagrangian responsible for the left hand side of Einstein's field equations \eqref{eq:EH_FH} is the Ricci scalar $R$ which is invariant under arbitrary change of coordinates.\medskip

One of the motivations for modifying General Relativity is related to the quest for a quantum description of gravity. A first attempt to overcome the obstacles \cite{AIHPA_1974__20_1_69_0} for a quantum theory of gravity based on General Relativity involved introducing quadratic terms in curvature into the Lagrangian. Unfortunately, this approach did not prove entirely satisfactory \cite{PhysRevD.16.953}, and the problem remains open.\medskip

%

Nowadays, there are strong enough motivations to consider modified theories of gravity already at the classical level. These motivations are related to astrophysical and cosmological observations, like the gravitational lensing effect\footnote{Gravitational lensing corresponds to the deflection of light caused by the curvature of spacetime induced by a distribution of mass.} \cite{Massey_2010}, the rotational curve of galaxies \cite{vera_review,Jiao} or the present cosmic acceleration of the universe \cite{Riess_1998,Perlmutter_1999}. In order to accommodate these observations with the theory of General Relativity, the standard model of cosmology $\Lambda$CDM  introduces special ad-hoc contents to our universe called \textit{dark matter}, and \textit{dark energy}. Within this model the dark matter is viewed as a particular type of exotic matter which interacts very weakly with the rest of the ordinary matter. Its distribution is determined to account for the observed velocity of stars in galaxies and bending of light by the gravitational field. Dark energy on its part, is encoded in the theory by the introduction of the \textit{cosmological constant $\Lambda$} in the Lagrangian of General Relativity, and its value is determined by the observations of the accelerated expansion of the universe. As a consequence, the $\Lambda$CDM model does not account for the exact nature of these two constituents. The main goal of modified theories of gravity is to provide alternatives to the standard model of cosmology.\medskip 

A first approach relies on extending the Lagrangian of General Relativity by adding higher order curvature terms or more general tensor invariants constructed from $\tensor{R}{_{\alpha\beta\rho}^\sigma}$, $\tensor{R}{_{\alpha\beta}}$ and $R$. For example, there is the Starobinsky model of inflation \cite{Starobinsky:1980te}, or the $f(R)$ theories \cite{sotiriou1,DeFelice2010}. The Lagrangians of these theories can be rewritten as modifications of General Relativity, including a specific coupling of curvature to an additional dynamical scalar field, thus making them equivalent to certain tensor-scalar theories \cite{Dicke_1962,PhysRevLett.29.137,David_Wands_1994}. The latter fall within the framework of Horndeski theories and their extensions \cite{Horndeski1974,Deffayet:2009wt,Deffayet:2011gz,PhysRevD.89.064046,Gleyzes:2014dya,Langlois_2016,BenAchour:2016fzp,Kobayashi_2019}, aiming to construct and explore, in all generality, cosmological models of dark energy and inflation, including a dynamical scalar field.\medskip

A second approach relies on the extension of the geometry modeling the gravitational interactions. In its broadest form known as metric-affine gravity, this approach provides a framework offering numerous alternatives to the standard model of cosmology \citep{Iosifidis_2023, Gialamas_2023, Aoki:2023sum, Aoki_2020, Koivisto_2006, Koivisto_2010, Jimenez_2016, Shimada:2018lnm, Aoki:2018lwx, BELTRANJIMENEZ2016400, Li_2012, PhysRevD.100.064018, GALTSOV2019453, Helpin:2019kcq, PhysRevD.96.084023, PhysRevD.99.104020, PhysRevD.100.044037, Enqvist_2012}. Metric-affine gravity originates both in the works of Weyl and his attempt to unify gravity and electromagnetism \cite{WeylMAG, RevModPhys.72.1, Eddington}, and in the works of Cartan, who was the first to propose a generalization of General Relativity including torsion \cite{cartan1922generalisation} (for additional historical information on metric-affine gravity, see \cite{Goenner2014, blagojevic2012gauge}). Note that metric-affine gravity encompasses various classes of theories, such as Poincaré gauge theories \cite{hehl2023lectures, Obukhov_2018}, theories on Weyl-Cartan spacetime  \cite{Tresguerres1995, DirkPuetzfeld_2001, Puetzfeld2002, Jorge, Babourova_2006}, teleparallel gravity \cite{jimenez2018teleparallel, BELTRANJIMENEZ2020135422, Pereira_2019, Conroy2018, beltran2019geometrical, Koivisto_coincident, Cai_2016}, and Born-Infeld-inspired theories \cite{BELTRANJIMENEZ20181}.

\subsection*{Metric-affine gravity and projective invariance} 
\addcontentsline{toc}{subsection}{\protect\numberline{} Metric-affine gravity and projectve invariance}
In metric-affine gravity, just like in General Relativity, spacetime is modeled by a pseudo-Riemannian manifold, that is a differentiable manifold endowed with its  Lorentzian metric $g_{\alpha\beta}$. The particularity of metric-affine gravity relies in the introduction of another affine connection $\tensor{\Gamma}{^\alpha_{\beta\rho}}$ on top of the pseudo-Riemannian manifold. Contrary to the Levi-Civita connection $\tensor{\accentset{g}{\Gamma}}{^\alpha_{\beta\rho}}$, the affine connection $\tensor{\Gamma}{^\alpha_{\beta\rho}}$ has non-vanishing \textit{torsion}:
\begin{equation}
\tensor{T}{^\alpha_{\beta\rho}}=2\,\tensor{\Gamma}{^\alpha_{[\beta\rho]}}\neq 0\,,
\end{equation}
and does not preserve the metric under parallel transport, the \textit{non-metricity} tensor does not vanish: 
\begin{equation}
\tensor{Q}{_{\alpha\beta\rho}}=\partial_{\alpha} g_{\beta\rho}-2\,\tensor{\Gamma}{_{(\beta|\alpha|\rho)}}\neq 0 \,.
\end{equation}

One of the motivations for such generalization of the geometrical framework of gravity comes from the continuum limit of crystalline materials with defects \cite{hehl2007elie,Hehl_mom1}. In the continuum approach, a real crystal is modeled by a smooth manifold endowed with geometric fields which characterize the density of defects in the crystal. In this respect, distributed point defects and dislocations in crystals are encoded respectively via the non-metricity tensor $\tensor{Q}{_{\alpha\beta\rho}}$ \cite{Lazar_2007,Mindlin1964,Falk1981,kroner1986continuized,Kupferman2018,yavari2012weyl} and the torsion tensor $\tensor{T}{^\alpha_{\beta\rho}}$ \cite{RazKupferman2015,Kleinert_2000,Yavari2012},  while the application of a stress to the crystal leads to the deformation and propagation of these defects. In General Relativity the stress-energy tensor  $T_{\alpha\beta}$ is defined as the variational derivative of the matter action\footnote{The matter action is the coordinate invariant integral of the matter Lagrangian over the domain of study of space-time.} with respect to the metric. Similarly, the hypermomentum tensor of metric-affine gravity, denoted $\tensor{\Delta}{^\alpha_\beta_\rho}$, is defined as the variational derivative of the matter action with respect to the affine connection. Just as $T_{\alpha\beta}$ is responsible for the ‘elastic' deformations of spacetime, $\tensor{\Delta}{^\alpha_\beta_\rho}$ sources the deformations of ‘spacetime defects' \cite{Hehl_mom1,HehlKerlickHeyde_mom2,OBUKHOV_mom}. The hypermomentum tensor encompasses hypothetical properties of the micro-structure of matter\footnote{This tensor is often decomposed into separated significant physical quantities. They are the spin density current, as well as the dilation and shear currents \cite{Hehl_mom1}.}, which are expected to emerge at very high energy \cite{PhysRevD.10.1066,Yuval_Neeman_1997,puetzfeld2007propagation,Yasskin_1980} like for example during the early stages of the universe.  \medskip

%
To the connection $\tensor{\Gamma}{^\alpha_{\beta\rho}}$ one associates a Riemann curvature tensor $\tensor{\mathcal{R}}{_{\alpha\beta\sigma}^\rho}$, so that one may consider the two Riemann tensors:
\begin{equation}
\tensor{\mathcal{R}}{_{\alpha\beta\sigma}^\rho} \hspace{0.5cm} \text{\,(\textit{affine} Riemann tensor)\,}\,,\hspace{ 2cm}	\tensor{R}{_{\alpha\beta\sigma}^\rho} \hspace{0.5cm} \text{\,(\textit{metric} Riemann tensor)\,}.
\end{equation}
The metric Riemann tensor $\tensor{R}{_{\alpha\beta\sigma}^\rho}$ admits a unique decomposition into elementary pieces: 
\begin{equation}\label{eq:Ricci_decomposition}
	\tensor{R}{_{\alpha\beta\rho}^\sigma}=\tensor{W}{_{\alpha\beta\rho}^\sigma}+\tensor{E}{_{\alpha\beta\rho}^\sigma}+\tensor{S}{_{\alpha\beta\rho}^\sigma}\,,
\end{equation}
where $\tensor{W}{_{\alpha\beta\rho}^\sigma}$ is the Weyl tensor which is the totally traceless part of the Riemann tensor, while $\tensor{E}{_{\alpha\beta\rho}^\sigma}$ and $\tensor{S}{_{\alpha\beta\rho}^\sigma}$, are respectively constructed from the Ricci tensor and the Ricci scalar. This decomposition, known as the Ricci decomposition \cite{hawking_ellis_1973,lee2018introduction}, can be seen as the unique irreducible decomposition of the Riemann tensor with respect to the action of the group preserving the metric: the Lorentz group $\Or(1, \Dim-1)$\footnote{We consider a pseudo-Riemannian manifold of arbitrary dimension $\Dim$. Specific cases will be examined in detail.}. More precisely, in an orthonormal moving frame defined on an open domain $U$ in spacetime\footnote{Any open domain $U$ can be equipped with an orthonormal moving frame \cite[Prop. 2.8]{lee2018introduction}.}, each part of \eqref{eq:Ricci_decomposition} at a point in $U$ is an irreducible tensor\footnote{The vector space to which an irreducible tensor belongs is preserved with respect to the group action. Moreover, the dimension of this space is the smallest possible with this property.}. This decomposition plays  a structural role in the analyses of General Relativity, and more generally in our understanding of gravity. For example, the Weyl tensor is a measure of the deviation of a pseudo-Riemannian manifold from conformal flatness. It is the part of curvature at a point which is determined by the matter distribution at other points and governs the propagation of gravitational waves in vacuum \cite{hawking_ellis_1973}. The Ricci tensor and Ricci scalar constitute the components of curvature at a point that are determined by the matter distribution at that same point \eqref{eq:EH_FH}.\medskip

In metric-affine gravity, the metric Riemann tensor $\tensor{R}{_{\alpha\beta\sigma}^\rho}$ is contained in the affine Riemann tensor $\tensor{\mathcal{R}}{_{\alpha\beta\sigma}^\rho}$ and the latter reduces to the former in the limit of zero non-metricity and zero torsion. In this respect, it is reasonable to consider $\tensor{\mathcal{R}}{_{\alpha\beta\sigma}^\rho}$ as more fundamental and construct the  Lagrangians from its scalar invariants. While the only scalar entering $\tensor{\mathcal{R}}{_{\alpha\beta\sigma}^\rho}$ is the Ricci scalar $\mathcal{R}= g^{\alpha\beta}\tensor{\mathcal{R}}{_{\alpha\sigma\beta}^\sigma}$, which at linear level support no dynamics for $\tensor{\Gamma}{^\alpha_{\beta\rho}}$, kinetic terms $‘\partial \Gamma\partial\Gamma\,$', and hence dynamics is naturally provided by the introduction of curvature squared terms in the Lagrangian.\smallskip 

The main goal of the first chapter is to extend the Ricci decomposition of $\tensor{R}{_{\alpha\beta\sigma}^\rho}$ to the affine Riemann tensor $\tensor{\mathcal{R}}{_{\alpha\beta\sigma}^\rho}$ (see Theorem \ref{theo:irreducible_Riemann_O} and Proposition \ref{prop:small_d_Riemann}) and to construct the most general quadratic curvature Lagrangian from this decomposition (see Proposition \ref{prop:Lagrangian_Riemann}). As a matter of fact, due to greater complexity of the Riemann tensor associated with a connection with non-metricity and torsion, its irreducible decomposition is not unique. As such we give an alternative to the decomposition and quadratic action obtained in $1992$ by McCrea \cite{McCrea_1992} in the context of metric-affine gauge theory of gravity. While in \cite{McCrea_1992,HEHL_1999} it is claimed that the decomposition performed therein is unique and canonical, our approach shows that an alternative decomposition with minimal natural requirements is available. We will also be interested in the irreducible decomposition of the distortion tensor $\tensor{C}{^\alpha_{\beta\rho}}=\tensor{\Gamma}{^\alpha_{\beta\rho}}-\tensor{\accentset{g}{\Gamma}}{^\alpha_{\beta\rho}}$, which characterizes the deviation between the post-Riemannian geometry and pseudo-Riemannian geometry (see Theorem \ref{theo:irreducible_distortion_O} and Proposition \ref{prop:small_d_distortion}).\medskip


To achieve our goal, we are guided by the notion of projective transformation. The latter is a change of connection which preserves the direction of vectors under parallel transport
along any curve \cite{thomas1926asymmetric,eisenhart1929non}, and hence is related to the notion of scale invariance. In terms of the components of the connection, a projective transformation is given by 
\begin{equation}\label{eq:projective_transformation_intro}
	\tensor{\Gamma}{^\sigma_\alpha_\beta}\to\tensor{\overline{\Gamma}}{^\sigma_\alpha_\beta}= \tensor{\Gamma}{^\sigma_\alpha_\beta} + \tensor{\delta}{^\sigma_{\beta}} \tensor{\xi}{_{\alpha}}\,,
\end{equation}
where $\xi_\alpha$ is an arbitrary vector field. As a result of the construction, the obtained quadratic action is adapted for the study of projective invariant theories of metric-affine gravity.\smallskip 

It has been known for some time that the Einstein-Hilbert action of metric-affine gravity, constructed from the metric-affine Ricci scalar $\mathcal{R}$, is invariant under the projective transformation \eqref{eq:projective_transformation_intro} \cite{schrodinger_1985,Hehl1978}. However, the identification of this symmetry as a gauge symmetry was only made more recently \cite{BJulia_1998, Dadhich2012}. The second Noether's theorem \cite{haro_2022} indicates an algebraic \textit{off-shell} identity\footnote{This means that the identity is always valid independently of the equations of motion} associated with the Einstein-Hilbert action of metric-affine gravity. Indeed, the Palatini tensor, which arises from the variation of the Einstein-Hilbert action with respect to the connection:
\begin{equation}
	\dfrac{\delta S_{\textup{EH}}}{\delta \tensor{\Gamma}{^\alpha_\beta_\rho}}=\tensor{P}{_\alpha^\beta^\rho}\,,
\end{equation}
has the following property
\begin{equation}
	\tensor{P}{_\alpha^\beta^\alpha}=0.
\end{equation}
As a consequence of this identity, the hypermomentum tensor associated with the matter Lagrangian is subjected to the constraints  \cite{Sandberg_1975,Hehl1981,iosifidis2019scale,Iosifidis_2020,Garcia-Parrado_2021} :
\begin{equation}\label{eq:contrainte}
	\tensor{\Delta}{_\alpha^\beta^\alpha}=0,
\end{equation}
which are verified identically for a projective invariant matter Lagrangian.
It is argued in \cite{Hehl1981} that projective invariant theories are physically inconsistent and that the projective symmetry has to be broken by appropriate use of Lagrange multipliers. The authors' point of view can be grasped from the following two citations:
\begin{quote}
	‘‘\textit{We believe that the sources (matter fields) for a theory are fundamental and should determine the appropriate geometrical framework, not vice versa.}''
\end{quote}
Referring to the equation of motion for the connection $\tensor{P}{_\alpha^\beta^\rho}=\tensor{\Delta}{_\alpha^\beta^\rho}$ of the metric-affine Einstein-Hilbert action with matter, the authors continue as follows:
\begin{quote}
	‘‘\textit{The inconsistency is at once apparent. Since the Palatini tensor is traceless in its first and last indices, such a theory would be possible only for matter with a vanishing dilation current $\tensor{\Delta}{_\alpha^\beta^\alpha}$. Because the geometry is projectively invariant, it cannot respond to the degrees of freedom of the matter associated with projective transformations.}''
\end{quote}
However, it is possible to adopt an alternative perspective by imposing projective gauge symmetry on the matter Lagrangian. This viewpoint is discussed in the following articles \cite{jimenez2018teleparallel,Bernal:2016lhq,Sandberg_1975,Bejarano_2020,Garcia-Parrado_2021}. In this thesis, the projective symmetry is mainly used from a mathematical standpoint, serving as a tool to structure the irreducible decompositions of the Riemann tensor and distortion tensor, as well as for the construction of the associated quadratic Lagrangians.

\newpage

\subsection*{Irreducible decompositions of tensors via the Brauer algebra} 
\addcontentsline{toc}{subsection}{\protect\numberline{} Irreducible decompositions of tensors via Brauer algebra}
Since the seminal work of E. Wigner \cite{Wigner_1939} the elementary particles in field theories are understood as irreducible components of the underlying symmetry group/algebra of the theory. As a consequence, decomposing tensors into irreducible constituents with respect to the action of groups and algebras is of cornerstone importance in modern physics. 
In case when a theory includes a non-degenerate metric, elementary constituents are irreducible with respect to action of the metric-preserving group: the (pseudo-)orthogonal group.\medskip

In the second part of this thesis we construct the projection operators which were used to perform an irreducible decomposition of the Riemann tensor and of the distortion tensor with respect to the Lorentz group $\Or(1,\Dim-1)$. The construction is explicit and systematic as it applies to tensors with arbitrary symmetries and of arbitrary fixed ranks. As a preliminary step toward the irreducible decomposition with respect to $\Or(1,\Dim-1)$ we first carry out the irreducible decomposition with respect to $\GL(\Dim,\mathbb{R})$. This allows one to have more control over the properties of the decomposition, in particular with respect to the projective transformation of the connection.\medskip 

The construction is general enough to find applications in several areas of physics and applied mathematics: for example in higher-spin theory \cite{bekaert2021unitary,Bekaert_2007,Boul-Bek_mixed_GL,Nicolas_Boulanger_2009,Fronsdal_1978,Bonelli_2003,LABASTIDA1986101,SIEGEL1987125,Brink_2000,Isaev_2020}, in elasticity and generalized elasticity theory \cite{Toupin1962,Itin_2015,auffray_2015,Lazar_mech}, and in random tensor models and networks \cite{keppler2023duality,Collins2006,collins2009some}.\medskip

For the sake of generality and simplicity, in the second part of this thesis we consider a complex vector space $V$ endowed with a non-degenerate symmetric metric $g$. The irreducible decompositions of tensors in $V^{\otimes n}$ with respect to the action of $\GL(\Dim,\C)$ can be done using certain projection operators in the symmetric group algebras $\C\sn$, while when considering the action of orthogonal groups, irreducible tensors can be obtained using projection operators in the Brauer algebras $\bn(\Dim)$: introduced in 1937 by  R. Brauer \cite{Brauer_1937}. This fact is a consequence of Schur-Weyl dualities \cite{stevens2016schur,doty2007new}: on the one hand, for $\GL(\Dim,\C)$ and $\mathbb{C}\Sn{n}$, on the other hand, for $\Or(\Dim,\C)$ and $B_{n}(\Dim)$. This interplay can be summarised via the following diagram\footnote{Here we reproduce the diagram similar to the one presented in \cite{CDVM_Br_blocks}.}:
\begin{figure}[H]
	\centering
	\includegraphics[scale=1.1]{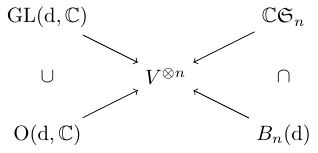}
	\caption{Schur-Weyl dualities for the classical Lie groups $\GL(\Dim,\C)$ and $\Or(\Dim,\C)$. The image in $\End(V^{\otimes n})$ of the horizontally aligned groups/algebras are mutual centralizer of each other.}
	\label{fig_proj:SW_dualities}
\end{figure}
\vskip 4pt

Any permutation $s\in \mathfrak{S}_n$ can be conveniently represented by a diagram with $2$ horizontal rows of $n$ vertices, with each vertex in the top row being connected to a vextex in the bottom row by a line.
For example, the permutations of $\mathfrak{S}_2$ are given by: 
\begin{equation*}
	\id=\raisebox{-.4\height}{\includegraphics[scale=0.32]{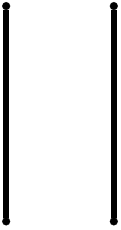}}\,, \hspace{0.3cm} (12)=\raisebox{-.4\height}{\includegraphics[scale=0.32]{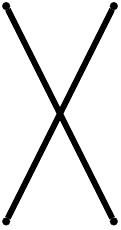}}\,, \hspace{0.3cm}
\end{equation*}
while the permutations of $\mathfrak{S}_3$ are given by: 
\begin{equation*}\label{eq:perms3intro}
	\id=\raisebox{-.4\height}{\includegraphics[scale=0.51]{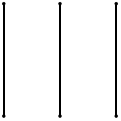}}\,, \hspace{0.3cm} (12)=\raisebox{-.4\height}{\includegraphics[scale=0.51]{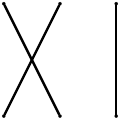}}\,, \hspace{0.3cm} (23)=\raisebox{-.4\height}{\includegraphics[scale=0.55]{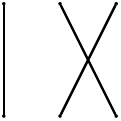}}\,, \hspace{0.3cm} (13)=\raisebox{-.4\height}{\includegraphics[scale=0.51]{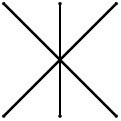}}\,, \hspace{0.3cm} (123)=\raisebox{-.4\height}{\includegraphics[scale=0.51]{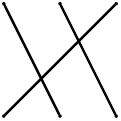}}\,, \hspace{0.3cm} (132)=\hspace{0.3cm} \raisebox{-.4\height}{\includegraphics[scale=0.51]{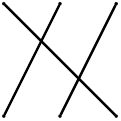}}\,.
\end{equation*}
Placing labels $i=1\,,\, \ldots  \,,\, n$, on the top row and $j=1\,,\,\ldots \,,\, n$, on the bottom row of a permutation diagram $s$, one can construct the associated tensor operator $\mathfrak{s}\in \End(V^{\otimes n})$ whose components are given by\footnote{This presentation is taken from \cite{keppler2023duality}.}: 
\begin{equation}\label{eq:operator_sn_intro}
	\tensor{\mathfrak{s}}{_{\,\, b_1\ldots b_n}^{a_1\ldots a_n}}=  \prod_{\Scale[0.7]{\begin{array}{c}
				{\scriptstyle i \text{ in the top row}}\\[-4pt]
				{\scriptstyle \text{connected to } j \text{ in to bottom row} }
	\end{array}}} \tensor{\delta}{^{a_i}_{b_j}}\,.
\end{equation}

It is well known that the projection to a particular irreducible representation of $V^{\otimes n}$ under $\GL(\Dim,\C)$ can be done using the so-called Young symmetrizers \cite{fulton2013representation} which are particular linear combinations of permutations in $\sn$. In this relation, tensors with Young symmetries are irreducible tensors under $\GL(\Dim,\C)$. The Young symmetrizers do not form a partition of unity of $\C\sn$ and thus are not adapted to the full irreducible decomposition of an arbitrary tensor\footnote{Already for $n=3$ the Young symmetrizers do not yield an orthogonal irreducible decomposition of $V^{\otimes n}$.}. In fact, there exist projection operators in $\C\sn$ which are well suited for this task: the \textit{Young seminormal idempotents} of $\C\sn$ \cite{jucys1966young,Murphy,Thrall_seminormalYoung_1941,molev2006fusion,doty2019canonical,Keppeler:2013yla}.   \medskip

The Brauer algebra $B_n(\Dim)$ plays a similar role when one considers the case of the orthogonal group $\Or(\Dim,¯\C)$. The diagrams of Brauer algebras extend the set of permutation diagrams: any pair of vertices is allowed to be connected. In particular, there can be paths connecting two vertices in the same row. 
For example, in addition to the permutations $\Sn{2}$ on has only one more diagram in  $B_2(\Dim)$:
\vskip 4pt
\begin{equation*}
	\raisebox{-.4\height}{\includegraphics[scale=0.29]{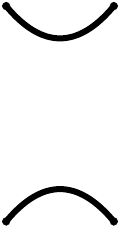}}\,
\end{equation*}
\vskip 4pt
\noindent while the additional diagrams of $B_3(\Dim)$ are: 
\vskip 4pt
\begin{equation*}\label{eq:brauer3intro}
	\raisebox{-.4\height}{\includegraphics[scale=0.5]{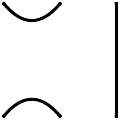}}\,, \hspace{0.5cm} \raisebox{-.4\height}{\includegraphics[scale=0.5]{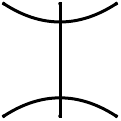}}\,, \hspace{0.5cm} \raisebox{-.4\height}{\includegraphics[scale=0.5]{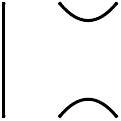}}\,, \hspace{0.5cm} \raisebox{-.4\height}{\includegraphics[scale=0.5]{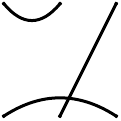}}\,, \hspace{0.5cm} \raisebox{-.4\height}{\includegraphics[scale=0.5]{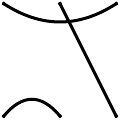}}\,, \hspace{0.5cm} \raisebox{-.4\height}{\includegraphics[scale=0.5]{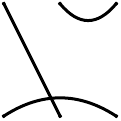}}\,,\hspace{0.5cm}
	\raisebox{-.4\height}{\includegraphics[scale=0.5]{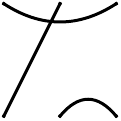}}\,,\hspace{0.5cm}  \raisebox{-.4\height}{\includegraphics[scale=0.5]{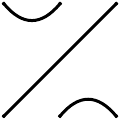}}\,, \hspace{0.5cm} \raisebox{-.4\height}{\includegraphics[scale=0.5]{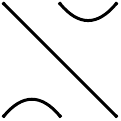}}\,.		
\end{equation*}
\vskip 4pt
\noindent From the point of view of tensor calculus, the arcs in the diagrams of $B_n(\Dim)$ mark the presence of the metric tensor $g$. Similarly to \eqref{eq:operator_bn_intro}, from a diagram in $\bn(\Dim)$ one can construct the associated tensor operator $\mathfrak{b}\in\End(V^{\otimes n})$ whose components are given by: 
\begin{equation}\label{eq:operator_bn_intro}
	\tensor{\mathfrak{b}}{_{\, b_1\ldots b_n}^{a_1\ldots a_n}}= \hspace{-0.5cm}\prod_{\Scale[0.7]{\begin{array}{c}
				{\scriptstyle i \text{ in the top row}}\\[-4pt]
				{\scriptstyle \text{connected to } j \text{ in to bottom row} }
	\end{array}}} \tensor{\delta}{^{a_i}_{b_j}}\,  \prod_{\Scale[0.7]{\begin{array}{c}
				{\scriptstyle i \text{ in the top row}}\\[-4pt]
				{\scriptstyle \text{connected to } j \text{ in to top row} }
	\end{array}}} \tensor{g}{^{a_ia_j}}\, \prod_{\Scale[0.7]{\begin{array}{c}
				{\scriptstyle i \text{ in the bottom row}}\\[-4pt]
				{\scriptstyle \text{connected to } j \text{ in to bottom row} }
	\end{array}}} \tensor{g}{_{b_i}_{b_j}}\,.
\end{equation}
\vskip 4 pt

To illustrate our point, let us write the projections operators in $\bn(\Dim)$ which realize the decomposition of an arbritary tensor of order $2$ under the action of $\Or(\Dim,\C)$: 
\begin{equation}\label{eq:projection_operator_2}
	P^{\Yboxdim{3pt}\yng(1,1)}_{2}=\dfrac{1}{2}\left(\,\,\raisebox{-.4\height}{\includegraphics[scale=0.25]{fig/id2.pdf}}\,-\,\raisebox{-.4\height}{\includegraphics[scale=0.25]{fig/s12.pdf}}\,\right)\,, \hspace{1cm}
	P^{\Yboxdim{3pt}\yng(2)}_{2}=\dfrac{1}{2}\left(\,\,\raisebox{-.4\height}{\includegraphics[scale=0.25]{fig/id2.pdf}}\,+\raisebox{-.4\height}{\includegraphics[scale=0.25]{fig/s12.pdf}}\,\,\right)-\,\,\dfrac{1}{\Dim}\,\,\raisebox{-.4\height}{\includegraphics[scale=0.25]{fig/b2.pdf}}\,\,, \hspace{1cm}
	P^{\emptyset}_{2}=\,\dfrac{1}{\Dim}\,\,\raisebox{-.4\height}{\includegraphics[scale=0.25]{fig/b2.pdf}}\,\,.
\end{equation}
Here $P^{\Yboxdim{3pt}\yng(1,1)}_{2}$ projects a tensor on the space of antisymmetric tensor, while $P^{\Yboxdim{3pt}\yng(2)}_{2}$ and $P^{\emptyset}_{2}$ projects a tensor on the space of symmetric traceless tensors and tracefull tensors respectively. As operators on $V^{\otimes n}$ they can be written as 
\begin{equation}
	\begin{aligned}
		&\tensor{{(\mathfrak{P}^{\Yboxdim{3pt}\yng(1,1)}_{2})}}{_{\,\, b_1b_2}^{a_1a_2}}=\dfrac{1}{2}\left(\tensor{\delta}{^{a_1}_{b_1}}\tensor{\delta}{^{a_1}_{b_1}}-\tensor{\delta}{^{a_1}_{b_2}}\tensor{\delta}{^{a_2}_{b_1}}\right)\,,\\
		&\tensor{{(\mathfrak{P}^{\Yboxdim{3pt}\yng(2)}_{2})}}{_{\,\, b_1b_2}^{a_1a_2}}=\dfrac{1}{2}\left(\tensor{\delta}{^{a_1}_{b_1}}\tensor{\delta}{^{a_1}_{b_1}}+\tensor{\delta}{^{a_1}_{b_2}}\tensor{\delta}{^{a_2}_{b_1}}\right) \,-\, \dfrac{1}{\Dim}\,\,\tensor{g}{^{a_1a_2}}\tensor{g}{_{b_1b_2}},\\
		&\tensor{{(\mathfrak{P}^{\emptyset}_{2})}}{_{\,\, b_1b_2}^{a_1a_2}}=\dfrac{1}{\Dim}\,\,\tensor{g}{^{a_1a_2}}\tensor{g}{_{b_1b_2}}\,.
	\end{aligned}
\end{equation}
The Young diagrams $\lbrace\Yboxdim{5.5pt}\yng(2)\,,\,\Yboxdim{5.5pt}\yng(1,1)\,,\,\emptyset\rbrace$ which appear in the above example parametrize the irreducible tensor representations of $\Or(\Dim,\C)$ \cite{Hamermesh:100343,Ful_Har}.\medskip


\paragraph{Isotypic decomposition.} The goal of the second part of this thesis is to provide new formulas for the construction of the projection operators which generalizes \eqref{eq:projection_operator_2} to tensors of arbitrary rank $n$ and arbitrary symmetries. In particular we describe an explicit and systematic way to construct the projection operators for the \textit{isotypic decomposition} of $V^{\otimes n}$ with respect to $\GL(\Dim,\C)$ and with respect to the orthogonal group $\Or(\Dim,\C)$.\medskip

The unique isotypic decomposition of a completely reducible representation of a group is a ‘coarse-grain' decomposition which consists in decomposing a representation into a direct sum of representations, each being a direct sum of all pairwise-equivalent irreducible representations. It may be seen as a structural intermediary step toward the desired irreducible decomposition. For example, an isotypic component of $V^{\otimes n}$ with respect to the action of $\GL(\Dim,\C)$ is parametrized by a Young diagram $\mu$ with $n$ boxes and is the direct sum of all pairwise equivalent irreducible representation of $V^{\otimes n}$ which are parametrized by $\mu$. More concretely, an isotypic tensor with respect to $\GL(\Dim,\C)$ is the sum of the irreducible tensors which are parametrized by standard tableaux of shape $\mu$ and obtained via the Young seminormal projection operators\footnote{If one replaces the Young seminormal idempotents by Young symmetrizers the assertion remains true only for tensors of order $n<5$ \cite{stembridge2011orthogonal}.}. The projection operators in $\C\sn$ realizing the isotypic decomposition of $V^{\otimes n}$ under $\GL(\Dim,\C)$ are the central idempotent of $\C\sn$, denoted $Z^{\mu}$ in this manuscript \cite{Janusz_1966,ceccherini2010representation,doty2019canonical}. As demonstrated in \cite{doty2019canonical}, there is a standard inductive procedure to arrive at an irreducible decomposition of $V^{\otimes n}$ with respect to both $\GL(\Dim,\C)$ and $\Or(\Dim,\C)$, using only isotypic projection operators. As a result, by solving the problem of the isotypic decomposition one actually solves the problem of the irreducible decomposition.\medskip 

In chapter \ref{chap:projectors_GL}, we will review some important aspects of the representation theory of the $\C\sn$ and present a new formula for the central Young idempotents (see Theorem \ref{Theorem:centralYoung}). The formula is constructed from a particular element $T_n\in \C\sn$:\medskip
\begin{equation}
	T_n=\sum_{1\leqslant i \, < \, j\leqslant n}\raisebox{-.3\height}{\includegraphics[scale=0.65]{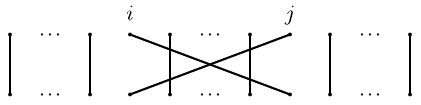}}\,,
\end{equation}
and an \textit{averaged} line induction map $\mathcal{L}\,:\, \C\Sn{n-1}\to \C\Sn{n}$ which is described in section \ref{subsec:inductiveformula}.\medskip

In chapter \ref{chap:projectors_O}, we give a construction for the operators in the Brauer algebra which projects $V^{\otimes n}$ on its various isotypic components with respect to $\Or(1,\Dim-1)$ (see Theorem \ref{theo:non_inductive_traceless_projectors} and Theorem \ref{theo:isotypic_projectors_2}). The obtained formulas are constructed from a particular element $A_n\in \bn(\Dim)$:
\begin{equation}
	A_n=\sum_{1\leqslant i \, < \, j\leqslant n}\raisebox{-.3\height}{\includegraphics[scale=0.65]{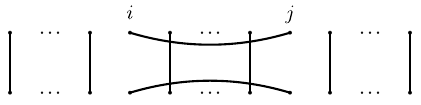}}\,,
\end{equation}
and an \textit{averaged} arc induction map $\mathcal{A}\,:\, \Bn{n-2}\to \bn$ which is described in section \ref{subsection:average_arc}.\medskip

\paragraph{Optimization for tensor calculus with computer algebra system.} In the last chapter, we introduce a method aimed at optimizing the results for application to tensor calculus. In particular, we formulate the known bijection between the  centralizer algebra of $\mathbb{C}\Sn{n}$, $\cn\subset B_{n}(\Dim)$, in terms of ternary bracelets, and propose a technique which allows one to expand the obtained formulas \eqref{eq:ctYoung_traceless5}-\eqref{eq:branching_f_traceless_central_idempotent_class_A25} successively by expressing the multiplication by $T_n$ and $A_n$ as differential operators on the space of ternary bracelets.\smallskip

The computations required for the explicit irreducible decomposition of tensors with respect to $\Or(1,\Dim-1)$ are too heavy to be performed manually, even for tensors of order $4$. This statement also applies to tensor computations in the framework of metric-affine gravity with quadratic curvature Lagrangians. Most of the time, one has to resort to a computer algebra system. The results obtained during this thesis are implemented in the Mathematica language within several packages designed for the irreducible decomposition of tensors with the Brauer algebra \cite{BrauerAlgebraPackage,xBrauerPackage}, and for tensor calculus in metric-affine gravity \cite{xMAGPackage}. A package named \textit{SymmetricFunctions} \cite{SymmetricFunctionsPackage}, allowing to work with symmetric functions and polynomials and containing numerous combinatorial tools used in the irreducible decomposition of tensors, has also been implemented. The packages developed during this thesis have the following inclusion structure:
\begin{equation}
\textit{SymmetricFunctions}\,\subset\,\textit{BrauerAlgebra}\,\subset\,\textit{xBrauer}\subset\, \textit{xMAG},
\end{equation}
and are linked to the \textit{xAct} bundle \cite{xActPackage} for tensor calculus, a widely utilized tool in gravity research. The \textit{SymmetricFunctions} and \textit{BrauerAlgebra} packages can be used independently of \textit{xAct}.\smallskip

	\chapter{Irreducible decompositions of tensors in metric-affine gravity}\label{chap:Irreducible_MAG}

\section{Introduction}\label{sec:introduction}

An irreducible decomposition with respect to $\Or(1,\Dim-1)$ of the Riemann tensor of metric-affine gravity was given by McCrea  \cite{McCrea_1992,Hehl:1994ue} (see also \cite{Nicolas_2006} and the appendix of \cite{jimenez2022metric}) in the language of exterior calculus. This decomposition was subsequently used for the formulation of a general quadratic curvature Lagrangian and has become a fundamental element in the study of quadratic curvature theories in the context of metric-affine gauge gravity \cite{obukhov1997irreducible,heinicke2005einstein,Nicolas_2006,PhysRevD.57.3457,PhysRevD.56.7769,jimenez2022metric,Tresguerres1995,HEHL_1999,Garc_a_2000,TRESGUERRES1995405}.\medskip

The main goal of this chapter is to provide an alternative to the aforementioned decomposition and quadratic Lagrangian. In the first section, we introduce the geometric framework of metric-affine gravity and outline the conventions that will be followed throughout the chapter. We will also recall the geometrical meaning of the notion of projective equivalent connections in the cases where torsion is non zero (see definition \ref{def:proj_tor}).\smallskip 

In section \ref{sec:Schur_Weyl_dualities}, we review the main algebraic tools necessary for the irreducible decompositions of the distortion tensor and of the Riemann tensor of metric-affine gravity. In particular, we recall the important consequences of the Schur-Weyl duality for $\GL(\Dim,\C)$ and the symmetric group $\C\sn$, and for the orthogonal group $\Or(\Dim,\C)$ and the Brauer algebra $\bn(\Dim)$. For our purposes, the main consequence of these dualities  being that the irreducible decomposition of a tensor with respect to $\GL(\Dim,\R)$ and with respect to $\Or(1,\Dim-1)$ can be done systematically using some projection operators in $\C\sn$ and in $\bn(\Dim)$ respectively. We also highlight the fact that for any tensor of order less or equal to five, each of its irreducible components with respect to $\GL(\Dim,\R)$ decomposes uniquely as an irreducible tensor with respect to $\Or(1,\Dim-1)$ (see Lemma \ref{lem:multiplicity_free_GL_O}). 
As a consequence, if the irreducible decomposition of a tensor, of order less or equal to $5$, with respect to $\GL(\Dim,\R)$ is ‘\textit{canonical}' in some sense, then its decomposition with respect to $\Or(1,\Dim-1)$  also exhibits this property. This fact leads to a natural two-step procedure for the $\Or(1,\Dim-1)$-irreducible decomposition of tensors in metric-affine gravity: first perform the decomposition with respect to $\GL(\Dim,\R)$ and then ‘\textit{subtract traces}', that is decompose each $\GL(\Dim,\R)$-irreducible tensors into $\Or(1,\Dim-1)$-irreducible tensors.\medskip

In sections \ref{sec:Distortion_Decomposition} and \ref{sec:Riemann_Decomposition}, we give the detailed procedure leading to a unique $\Or(1,\Dim-1)$-irreducible decomposition of the distortion tensor and of the Riemann tensor. The results are given for arbitrary high dimension $\Dim$ (see Theorem \ref{theo:irreducible_distortion_O} and Theorem \ref{theo:irreducible_Riemann_O}) and for small $\Dim$ (see Proposition \ref{prop:small_d_distortion} and Proposition \ref{prop:small_d_Riemann}).\medskip 

Finally, in the last section we obtain the general Lagrangians at most quadratic in the Riemann tensor and in the distortion tensor, while also establishing the conditions for their projective invariance (see Propositions \ref{prop:Lagrangian_Riemann} and \ref{prop:Lagrangian_distortion}). We will also present the general Lagrangian which reduces to the Einstein-Hilbert Lagrangian in the limit of zero distortion \eqref{eq:lagrangienR_EH}. In  \cite{percacci2020new} they were able to give the conditions for physically viable theories (no ghosts and no tachyons) around Minkwoski background, in the cases of $T\neq 0$, $Q=0$ (non zero torsion, zero non-metricity) and $T=0$, $Q\neq 0$ (zero torsion, non zero non-metricity). We hope that the proposed Lagrangian will help in the generalization of the above result to theories with both torsion and non-metricity. 


\paragraph{Conventions.} Greek letters are used for spacetime indices and Latin letters for moving frame indices. The Greek letters $\mu$, $\rho$, $\nu$, and $\lambda$ are reserved for labeling irreducible representations of $\GL(\Dim,\mathbb{C})$ and of $\Or(\Dim,\C)$, as well as the irreducible representations of their associated discrete algebras $\C\sn$ and $\bn(\Dim)$ within the context of Schur-Weyl duality.

%



\section{The geometry of metric-affine gravity}
We consider a $\Dim$-dimensional smooth manifold $\mathcal{M}$ equipped with a symmetric non-degenerate metric tensor field $g=g_{\alpha\beta} \mathrm{d}x^\alpha\otimes\mathrm{d}x^\beta$. The fundamental Theorem of pseudo-Riemannian geometry states that every pseudo-Riemannian manifold has a metric-preserving torsion-free affine connection. This connection is called the Levi-Civita connection $\accentset{g}\nabla$ and for any vector fields $X$ and $Y$ it satisfies:
\begin{equation}
	\begin{array}{ll}
\accentset{g}{\nabla} g\,=0 \,, \hspace{1cm} &\textit{metric preserving}\,,\\
\accentset{g}{\nabla}_X Y-\accentset{g}{\nabla}_Y X\,= [\, X \,,\, Y\,]\,, \hspace{1cm} &\textit{zero torsion}\,,
	\end{array}
\end{equation}
where $[X,Y]$ is the Lie Bracket of the vector fields $X$ and $Y$.
\begin{remark}[Notations]
The definition of an affine connection (also called linear connection in our context), is given in the appendix \ref{subsec:definitions_geo_diff}.
\end{remark}

\medskip 
\subsection{Connection and distortion tensor in metric-affine gravity}\label{subsec:connection}

In metric-affine gravity the spacetime manifold $\mathcal{M}$ is equipped with a supplementary affine connection $\nabla$. The Levi-Civita connection coefficients and the connection coefficients of $\nabla$ are defined by:

\begin{equation}\label{connection_coef_coord}
\accentset{g}{\nabla}_\alpha (\partial_\beta):=\Bigl \{ \tensor{}{^{\sigma}_{\alpha\beta}}\Bigr \}\,\partial_\sigma\,,\hspace{1cm} \nabla_\alpha (\partial_\beta):=\tensor{\Gamma}{^{\sigma}_{\alpha\beta}}\, \partial_\sigma\,.
\end{equation}

\paragraph{Torsion and non-metricity.} The connection $\nabla$ has both \textit{non-metricity} $Q$ and \textit{torsion} $T$; for any vector fields $X,Y,Z$ one has 
\begin{equation}
Q(Z,X,Y):=\left( \nabla_Z g \right)(X,Y)\,,\hspace{1cm} T(X,Y):=\nabla_{X} Y -\nabla_Y X -[X,Y]\,.
\end{equation}
In a coordinate system the components of non-metricity and torsion are given respectively by 
\begin{equation}\label{torsion}
\tensor{Q}{_{\sigma\alpha\beta}}=\nabla_{\sigma} g_{\alpha\beta}\,,\hspace{0.5cm}	\tensor{T}{^\sigma_{\alpha\beta}}=2\tensor{\Gamma}{^{\sigma}_{[\alpha\beta]}}\,.
\end{equation} 

Here and in what follows we adopt the usual notation, where symmetrized and antisymmetrized indices are grouped within round and square brackets, respectively. In addition, we use a vertical line to denote a break in a group
of (anti-)symmetrized indices, for example $T_{(\mu|\alpha|\nu)} =\frac{1}{2}\left(T_{\mu\alpha\nu} + T_{\nu\alpha\mu}\right)$. The non-metricity tensor has two independent traces, while torsion has only one trace: 
\begin{equation}\label{eq:vectors_nm_torsion}
\begin{aligned}
&Q_\sigma=g^{\alpha\beta}\tensor{Q}{_{\sigma\alpha\beta}}\, \hspace{1cm} \textit{Weyl co-vector}\\
&\tilde{Q}_\alpha=g^{\sigma\beta}\tensor{Q}{_{\sigma\alpha\beta}}\, \\
&\tensor{T}{_\alpha}=\tensor{T}{^\beta_\alpha_\beta}\hspace{1.8cm} \textit{torsion vector}
\end{aligned}
\end{equation}

\paragraph{Distortion tensor.} The distortion tensor characterizes the deviation of the geometry of metric-affine gravity from the purely (pseudo)-Riemaniann one. Its components  $\tensor{C}{^{\sigma}_{\alpha\beta}}$ are defined by the relation
\begin{equation}\label{eq:def_distortion}
	\tensor{C}{^{\sigma}_{\alpha\beta}}:=\tensor{\Gamma}{^{\sigma}_{\alpha\beta}}-\Bigl \{ \tensor{}{^{\sigma}_{\alpha\beta}}\Bigr \}.
\end{equation}
In all generality the distortion tensor has no symmetries with respect to permutations of indices. As a consequence one can define three independent traces: 
\begin{equation}\label{eq:def_traces_distortion}
	\tensor{\accentset{(1)}{C}}{_\alpha}:=\tensor{C}{_\alpha^\beta_\beta}\, , \hspace{0.8cm} \tensor{\accentset{(2)}{C}}{_\alpha}:=\tensor{C}{^\beta_\alpha_\beta}\, ,  \hspace{0.8cm}
	\tensor{\accentset{(3)}{C}}{_\alpha}:=\tensor{C}{^\beta_\beta_\alpha}\, .
\end{equation}
In terms the traces of the non-metricity and torsion tensors one has: 
\begin{equation}
\tensor{\accentset{(1)}{C}}{_\alpha}=\dfrac{1}{2}\, Q_\alpha-\tilde{Q}_\alpha+\tensor{T}{_\alpha}\,,\hspace{0.5cm}\tensor{\accentset{(2)}{C}}{_\alpha}=-\dfrac{1}{2}Q_\alpha \,,\hspace{0.5cm}\tensor{\accentset{(3)}{C}}{_\alpha}=-\dfrac{1}{2}\tilde{Q}_\alpha-\tensor{T}{_\alpha}\,.
\end{equation} 
With these definitions and conventions at hand, the distortion tensor can be expressed in terms of the torsion and non-metricity: 
\begin{equation}
	\tensor{C}{^\sigma_{\alpha\beta}}=\frac{1}{2}\tensor{T}{^\sigma_{\alpha\beta}}+\tensor{T}{_{(\alpha|}^\sigma_{\,|\beta)}}+\frac{1}{2}\tensor{Q}{^\sigma_{\alpha\beta}}-\tensor{Q}{_{(\alpha|}^\sigma_{\,|\beta)}}.
\end{equation}
The above expression makes it clear that the geodesics of a torsion-full connection explicitly depend on torsion. For completeness let us mention that the right-hand side of the previous equation is often rearranged through the definition of the contorsion tensor 
\begin{equation}
\tensor{K }{^\sigma_{\alpha\beta}}:=\frac{1}{2}\tensor{T}{^\sigma_{\alpha\beta}}+\tensor{T}{_{(\alpha|}^\sigma_{\,|\beta)}}\,,
\end{equation}
and of the disformation tensor 
\begin{equation}
\tensor{D}{^\sigma_{\alpha\beta}}:=\frac{1}{2}\tensor{Q}{^\sigma_{\alpha\beta}}-\tensor{Q}{_{(\alpha|}^\sigma_{\,|\beta)}}\,.
\end{equation}
In the restriction of the geometry of metric-affine gravity to the cases of zero non-metricity or zero torsion the distortion tensor reduces to the contorsion tensor or to the disformation tensor respectively.
\subsection{Projective transformations}\label{subsec:Projective}
In this section we recall the geometrical meaning of projective equivalence of connections, in particular in the presence of torsion. The content of the discussion was already familiar to the geometers of the beginning of the $20$th century \cite{thomas1926asymmetric,eisenhart1929non,schouten2013ricci}. Our understanding of this topic was significantly enriched during the summer of 2021 when, with my collaborator Yegor Goncharov, we attended a series of enlightening lectures on Cartan geometry and tractor geometry, delivered by Jordan François and Yannick Herfray at the University of Mons. I would like to express my sincere gratitude to Nicolas Boulanger for making this possible and for his warm hospitality during our visit.

\paragraph{Projective transformations (symmetric case).} Projective transformations of a torsion-free connection is related to the geometry of geodesic paths whose construction was initiated by H. Weyl, L.P. Eisenhart, O.Veblen, J.M. Thomas and T.Y. Thomas \cite{Weyl1921,Eisenhart_1922,Veblen_1922,Veblen_1923,Veblen_1926,yerkes1926projective,whitehead1931representation} (see also \cite{kobayashi1964projective,crampin2007projective,gover2017projectively,Lazzarini:2021kwi} for more modern approaches). A geodesic path (or just geodesic) is a curve $\gamma : \, t \mapsto \gamma(t)$ on $\mathcal{M}$ which is a solution of the differential equations: 
\begin{equation}\label{eq:geodesics}
	\frac{d^2 \gamma^\sigma}{dt^2}+ \tensor{\Gamma}{^\sigma_\alpha_\beta} \frac{d \gamma^\alpha}{dt}\dfrac{d \gamma^\beta}{dt}=0\,.
\end{equation}
Under reparametrization $t\to s(t)$ along the path $\gamma$ the previous equation becomes
\begin{equation}\label{eq:geodesic_param}
	\frac{d^2 \gamma^\sigma}{ds^2}+ \tensor{\Gamma}{^\sigma_\alpha_\beta} \frac{d \gamma^\alpha}{ds}\dfrac{d \gamma^\beta}{ds}=f(s) \dfrac{d \gamma^\sigma}{ds}\,,\quad \text{with}\quad f(s)=-\left(\frac{ds}{dt}\right)^{\shortminus 2}\,\frac{d^2 s}{dt^2}\,.
\end{equation}
Multiplying the previous equation by $\dfrac{d\gamma^\delta}{ds}$ on both sides and eliminating $f(s)$ one obtains the following equation:
\begin{equation}\label{eq:geodesic_unparam}
	\frac{d\gamma^\delta}{ds}\left(\frac{d^2 \gamma^\sigma}{ds^2}+ \tensor{\Gamma}{^\sigma_\alpha_\beta} \frac{d \gamma^\alpha}{ds}\dfrac{d \gamma^\beta}{ds}\right)-\frac{d\gamma^\sigma}{ds}\left(\frac{d^2 \gamma^\delta}{ds^2}+ \tensor{\Gamma}{^\delta_\alpha_\beta} \frac{d \gamma^\alpha}{ds}\dfrac{d \gamma^\beta}{ds}\right)=0\,.
\end{equation}
This equation is the equivalent parameter independent form of \eqref{eq:geodesics} and \eqref{eq:geodesic_param} (see for example \cite{Veblen_1923,berwald2013projective}). 

\begin{definition}\label{def:proj_sym}
Two connections $\nabla$ and $\overline\nabla$ are said to be \textit{symmetric}-projectively\footnote{In the mathematical literature one simply says that the two connections are projectively equivalent. For the Cartan geometry associated with the projective structure (a class of projectively equivalent connection) on the manifold see \cite{kobayashi1964projective}.} equivalent if each $\nabla$-geodesic is, after reparametrization, a $\overline\nabla$-geodesic.
\end{definition}

\begin{proposition} 
Two connections $\nabla$ and $\overline\nabla$ are \textit{symmetric}-projectively equivalent, if and only if their connection coefficients are related by a projective transformation \smallskip
\begin{equation}\label{eq:projective_transformation_sym}
	\tensor{\Gamma}{^\sigma_\alpha_\beta}\to\tensor{\overline{\Gamma}}{^\sigma_\alpha_\beta}= \tensor{\Gamma}{^\sigma_\alpha_\beta} + \tensor{\delta}{^\sigma_{(\alpha}} \tensor{\xi}{_{\beta)}}\,,
\end{equation}
for some vector field $\xi$.
\end{proposition}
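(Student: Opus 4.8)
The plan is to work with the difference tensor $\tensor{P}{^\sigma_\alpha_\beta} := \tensor{\overline{\Gamma}}{^\sigma_\alpha_\beta} - \tensor{\Gamma}{^\sigma_\alpha_\beta}$, which in the present symmetric (torsion-free) setting is symmetric in its lower indices, and to reduce the geometric statement of Definition~\ref{def:proj_sym} to a pointwise algebraic condition on $P$. The starting observation is that only the part of a connection symmetric in the lower indices enters the geodesic equation \eqref{eq:geodesics}, since it is contracted against $\tfrac{d\gamma^\alpha}{dt}\tfrac{d\gamma^\beta}{dt}$. Consequently, adding to a connection any term that produces a contribution proportional to the velocity does not change the parameter-independent geodesic equation \eqref{eq:geodesic_unparam}: in the antisymmetric combination $\tfrac{d\gamma^\delta}{ds}(\cdots)^\sigma - \tfrac{d\gamma^\sigma}{ds}(\cdots)^\delta$ appearing there, every term proportional to $\tfrac{d\gamma^\sigma}{ds}$ or $\tfrac{d\gamma^\delta}{ds}$ cancels. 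Thus $\nabla$ and $\overline\nabla$ share the same unparametrized geodesics precisely when $\tensor{P}{^\sigma_\alpha_\beta}\tfrac{d\gamma^\alpha}{ds}\tfrac{d\gamma^\beta}{ds}$ is proportional to $\tfrac{d\gamma^\sigma}{ds}$ along each geodesic.

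For the ``if'' direction I would simply substitute \eqref{eq:projective_transformation_sym} into the geodesic equation and compute $\tensor{\delta}{^\sigma_{(\alpha}}\tensor{\xi}{_{\beta)}}\,\tfrac{d\gamma^\alpha}{ds}\tfrac{d\gamma^\beta}{ds} = \bigl(\xi_\alpha \tfrac{d\gamma^\alpha}{ds}\bigr)\tfrac{d\gamma^\sigma}{ds}$, which is manifestly proportional to the velocity. Hence any affinely parametrized $\nabla$-geodesic solves the $\overline\nabla$-geodesic equation in the reparametrized form \eqref{eq:geodesic_param} with $f(s) = \xi_\alpha\tfrac{d\gamma^\alpha}{ds}$ (up to conventions), and therefore becomes an $\overline\nabla$-geodesic after the reparametrization obtained by solving $f(s)=-\left(\tfrac{ds}{dt}\right)^{-2}\tfrac{d^2 s}{dt^2}$. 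This direction is a one-line contraction.

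The substantive part is the ``only if'' direction. Since through every point and in every direction $v$ there passes a $\nabla$-geodesic, projective equivalence forces the pointwise identity
\begin{equation*}
	\tensor{P}{^\sigma_\alpha_\beta}\, v^\alpha v^\beta = \lambda(v)\, v^\sigma \qquad \text{for all } v,
\end{equation*}
for some scalar $\lambda(v)$, and the goal is to deduce $\tensor{P}{^\sigma_\alpha_\beta} = \tensor{\delta}{^\sigma_{(\alpha}}\tensor{\xi}{_{\beta)}}$. I expect the main obstacle to be showing that $\lambda$ is \emph{linear} in $v$. Rescaling $v\mapsto cv$ already yields homogeneity of degree one; then a polarization $v \mapsto v+tw$ shows $\lambda(v+tw)$ must be at most affine in $t$, because the left-hand side is quadratic in $t$ while each component $(v+tw)^\sigma$ is linear, which upgrades homogeneity to genuine linearity $\lambda(v) = \xi_\beta v^\beta$ for a well-defined covector $\xi$.

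Once linearity is established, I would symmetrize the identity $\tensor{P}{^\sigma_\alpha_\beta}v^\alpha v^\beta = (\xi_\beta v^\beta)\,v^\sigma$ in two independent vectors to read off $\tensor{P}{^\sigma_\alpha_\beta} = \tfrac12\bigl(\tensor{\delta}{^\sigma_\alpha}\xi_\beta + \tensor{\delta}{^\sigma_\beta}\xi_\alpha\bigr) = \tensor{\delta}{^\sigma_{(\alpha}}\tensor{\xi}{_{\beta)}}$, which is exactly \eqref{eq:projective_transformation_sym}. The only delicate point in this last step is the bookkeeping of the symmetric polarization, which is routine.
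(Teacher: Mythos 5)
Your proof is correct in both directions, and its skeleton matches the paper's: both arguments reduce the geometric definition to a pointwise algebraic condition by combining the parameter-independent geodesic equation \eqref{eq:geodesic_unparam} with the existence of a geodesic through every point in every direction, so that the velocity becomes an arbitrary vector $v$ and one gets $v^\delta P^\sigma_{\alpha\beta}v^\alpha v^\beta - v^\sigma P^\delta_{\alpha\beta}v^\alpha v^\beta = 0$. Where you genuinely differ is in how this condition is solved. The paper polarizes the cubic identity into the four-index tensor identity $\delta^{\delta}_{(\rho}\Lambda^{\sigma}_{\alpha\beta)}=\delta^{\sigma}_{(\rho}\Lambda^{\delta}_{\alpha\beta)}$ and then contracts $\delta$ with $\rho$; this single trace immediately yields $\Lambda^{\sigma}_{\alpha\beta}=\tfrac{2}{\Dim+1}\,\delta^{\sigma}_{(\alpha}\Lambda^{\rho}_{\beta)\rho}$, so the vector field $\xi$ is produced explicitly as (a multiple of) the trace of the difference tensor. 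You instead keep the proportionality scalar $\lambda(v)$ in $P(v,v)=\lambda(v)\,v$, prove that $\lambda$ is linear, and polarize once at the end. That route works, but the linearity step is the one place to tighten: writing $q_\sigma(t)=P(v+tw,v+tw)^\sigma$ (quadratic in $t$) and $\ell_\sigma(t)=(v+tw)^\sigma$ (linear in $t$), the bare degree count does not yet force $\lambda(v+tw)$ to be affine rather than merely rational; you should, e.g., pick indices $\sigma,\delta$ with $v^\sigma w^\delta - v^\delta w^\sigma\neq 0$, observe that $\ell_\sigma,\ell_\delta$ are then coprime, and use the identity $q_\sigma \ell_\delta = q_\delta \ell_\sigma$ to conclude that $\ell_\sigma$ divides $q_\sigma$, whence $\lambda(v+tw)=q_\sigma/\ell_\sigma$ is a degree-one polynomial in $t$; combined with degree-one homogeneity this gives genuine linearity $\lambda(v)=\xi_\beta v^\beta$. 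The trade-off between the two routes: the paper's trace trick is shorter and hands you $\xi$ in closed form, while your eigenvector-style argument is more conceptual and, as a bonus, spells out the ``if'' direction with the explicit reparametrization function $f(s)=\xi_\alpha\tfrac{d\gamma^\alpha}{ds}$, a point the paper's proof leaves implicit.
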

\begin{proof}
Let us inquire under which circumstances two connections $\nabla$ and $\overline\nabla$ with connection coefficients $\tensor{\Gamma}{^\sigma_\alpha_\beta}$ and $\tensor{\overline{\Gamma}}{^{\,\sigma}_\alpha_\beta}$ describe the same unparametrized geodesics \eqref{eq:geodesic_unparam}. One has directly
\begin{equation}\label{eq:condition_pro}
	\frac{d\gamma^\rho}{ds}\frac{d \gamma^\alpha}{ds}\dfrac{d \gamma^\beta}{ds}\left(\tensor{\delta}{^{\delta}_\rho}\tensor{\Lambda}{^\sigma_{\alpha\beta}}-\tensor{\delta}{^{\sigma}_\rho}\tensor{\Lambda}{^\delta_{\alpha\beta}}\right)=0\,,
\end{equation}
where
\begin{equation}
	\tensor{\Lambda}{^\delta_{\alpha\beta}}=\tensor{\Gamma}{^\delta_{(\alpha\beta)}}-\tensor{\overline{\Gamma}}{^{\,\delta}_{(\alpha\beta)}}\,,
\end{equation}
are the components of a tensor. Since \eqref{eq:condition_pro} must hold for any unparametrized geodesics, the derivatives $\frac{d \gamma^\sigma}{ds}$ may be chosen arbitrarily, this gives 
\begin{equation}
	\tensor{\delta}{^{\delta}_{(\rho\,|}}\tensor{\Lambda}{^\sigma_{|\,\alpha\beta)}}=\tensor{\delta}{^{\sigma}_{(\rho\,|}}\tensor{\Lambda}{^\delta_{|\,\alpha\beta)}}\,.
\end{equation}
Contracting the indices $\delta$ and $\rho$ yields
\begin{equation}
	\tensor{\Lambda}{^\sigma_{\alpha\beta}}=\dfrac{2}{\Dim+1}\tensor{\delta}{^\sigma_{(\alpha}} \tensor{\Lambda}{^\rho_{\beta)}_\rho}\,.
\end{equation}
\end{proof}

%

%


\paragraph{Projective transformations (asymmetric case).} There is another understanding of projective transformation which is more common in the literature of metric-affine gravity \cite{Garcia-Parrado_2021,JANSSEN2018462,Dadhich2012,olmo_Ricci_based,iosifidis2019scale,Iosifidis_2020,Alfonso_2017,Iosifidis_2019}. This notion does not assume a torsion-free connection and is related to the symmetry of the equations of parallel transport of a vector $X\big{|_{\gamma(0)}}$ along a given smooth curve $\gamma : \, t \mapsto \gamma(t)$ on $\mathcal{M}$:
\begin{equation}
\nabla_{\dot\gamma} X=0\,,
\end{equation}
which in components reads
\begin{equation}\label{eq:parallel_transport_0}
	\frac{d X^{\sigma}}{dt} + \tensor{\Gamma}{^{\sigma}_{\alpha\beta}}\,\frac{d\gamma^{\alpha}}{dt} X^{\beta} = 0\,.
\end{equation}
For any solution $X$ of \eqref{eq:parallel_transport_0} we will say that the vectors $X$ are $\nabla$-parallel along $\gamma$. The vectors $\overline{X}$ along $\gamma$ defined by their components as
\begin{equation}
\overline{X}^\alpha= \lambda(t)X^\alpha\,,
\end{equation}
have the same direction as $X$ along $\gamma$. For any rescaling $\lambda(t)$ there is a function $g(t) = \frac{d \ln \lambda(t)}{dt}$ such that the components of the rescaled vector satisfy 
\begin{equation}\label{eq:parallel_transport_param}
	\frac{d \overline{X}^{\sigma}}{dt} + \tensor{\Gamma}{^{\sigma}_{\alpha\beta}}\,\frac{d\gamma^{\alpha}}{dt} \overline{X}^{\beta} = g(t)\tensor{\overline{X}}{^{\sigma}}\,.
\end{equation}
Dropping the bars and multiplying the previous equation by $X^\delta$ on both sides one obtains the following equation:
\begin{equation}\label{eq:parallel_transport_1}
	X^\delta\left(\frac{d X^{\sigma}}{dt} + \tensor{\Gamma}{^{\sigma}_{\alpha\beta}}\,\frac{d\gamma^\alpha}{dt}X^\beta\right) -X^\sigma\left(\frac{d X^{\delta}}{dt} + \tensor{\Gamma}{^{\delta}_{\alpha\beta}}\frac{d\gamma^\alpha}{dt}X^\beta\right)=0\,.
\end{equation}
This equation is the rescaling independent form of the equation of parallelism \eqref{eq:parallel_transport_0}. That is for any solution $X$ of \eqref{eq:parallel_transport_1} there exist $\overline{X}^\alpha= \lambda(t)X^\alpha$ for some function $\lambda(t)$ such that $\overline{X}$ is a solution of \eqref{eq:parallel_transport_0} (see for example \cite[p. 13]{eisenhart1929non}).\footnote{In the sense of Eisenhart, $\overline{X}$ is also parallel along $\gamma$ \cite[p. 13]{eisenhart1929non}, but we do not use this terminology here.}

\begin{definition}\label{def:proj_tor}
Two connections $\nabla$ and $\overline\nabla$ are said to be \textit{asymmetric}-projectively equivalent\footnote{In the literature of metric-affine gravity one would say that the two connections are projectively equivalent, and at the end we will follow this convention.} if for any curve $\gamma$, each $\nabla$-parallel vectors along $\gamma$ are, after rescaling, $\overline\nabla$-parallel vectors along $\gamma$.
\end{definition}

In other word, a change of \textit{asymmetric}-projectively equivalent connections $\nabla\to\overline\nabla$ preserves the directions of vectors under parallel transport.
\begin{proposition} 
Two connections $\nabla$ and $\overline\nabla$ are \textit{asymmetric}-projectively equivalent, if and only if their connection coefficients are related by an \textit{asymmetric}-projective transformation \smallskip
\begin{equation}\label{eq:projective_transformation}
	\tensor{\Gamma}{^\sigma_\alpha_\beta}\to\tensor{\overline{\Gamma}}{^\sigma_\alpha_\beta}= \tensor{\Gamma}{^\sigma_\alpha_\beta} + \tensor{\delta}{^\sigma_{\beta}} \tensor{\xi}{_{\alpha}}\,,
\end{equation}
for some arbitrary vector field $\xi$.
\end{proposition}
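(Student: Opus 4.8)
The plan is to mirror the proof of the symmetric case, replacing the geodesic equation by the rescaling-independent equation of parallel transport \eqref{eq:parallel_transport_1}. First I would write that equation for both $\nabla$ and $\overline\nabla$ along an arbitrary curve $\gamma$ with tangent $u^\alpha=\frac{d\gamma^\alpha}{dt}$, and demand that the two describe the same parallel direction fields. Subtracting the two copies of \eqref{eq:parallel_transport_1}, the derivative terms $\frac{dX^\sigma}{dt}$ cancel identically, and introducing the difference tensor $\tensor{\Lambda}{^\sigma_\alpha_\beta}=\tensor{\Gamma}{^\sigma_\alpha_\beta}-\tensor{\overline{\Gamma}}{^\sigma_\alpha_\beta}$ one is left with a purely algebraic constraint. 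The crucial structural point, which distinguishes this from the symmetric case, is that the tangent $u^\alpha$ (contracted on the first lower slot) and the transported vector $X^\beta$ (contracted on the second lower slot) may be prescribed independently at any point; hence the full tensor $\tensor{\Lambda}{^\sigma_\alpha_\beta}$ survives, with no symmetrization over $(\alpha\beta)$.

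For necessity I would phrase the resulting pointwise condition as: for every $u$ and $X$ there is a scalar $h$ with
\[
\tensor{\Lambda}{^\sigma_\alpha_\beta}\,u^\alpha X^\beta = h\,X^\sigma .
\]
Fixing $u$, this asserts that the endomorphism $Y\mapsto\tensor{\Lambda}{^\sigma_\alpha_\beta}u^\alpha Y^\beta$ maps every vector to a multiple of itself. The key linear-algebra lemma is that such an operator is a scalar multiple of the identity, so $\tensor{\Lambda}{^\sigma_\alpha_\beta}\,u^\alpha=c(u)\,\tensor{\delta}{^\sigma_\beta}$ for some scalar $c(u)$. Taking the trace over $\sigma$ and $\beta$ and using linearity in $u$ identifies $c(u)=\xi_\alpha u^\alpha$ with $\xi_\alpha=\frac{1}{\Dim}\,\tensor{\Lambda}{^\sigma_\alpha_\sigma}$; since $u$ is arbitrary this gives $\tensor{\Lambda}{^\sigma_\alpha_\beta}=\tensor{\delta}{^\sigma_\beta}\,\xi_\alpha$, which is exactly the transformation \eqref{eq:projective_transformation} (the sign of $\xi$ being immaterial, as $\xi$ is an arbitrary vector field). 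This is precisely why the asymmetric projective term is $\tensor{\delta}{^\sigma_\beta}\xi_\alpha$ rather than the symmetrized $\tensor{\delta}{^\sigma_{(\alpha}}\xi_{\beta)}$ obtained before.

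For sufficiency I would substitute \eqref{eq:projective_transformation} back into the parallel-transport equation: if $X$ is $\nabla$-parallel along $\gamma$, then a one-line computation gives
\[
\frac{dX^\sigma}{dt}+\tensor{\overline{\Gamma}}{^\sigma_\alpha_\beta}\,\frac{d\gamma^\alpha}{dt}\,X^\beta=\Big(\xi_\alpha\,\frac{d\gamma^\alpha}{dt}\Big)X^\sigma=:g(t)\,X^\sigma ,
\]
which is the rescaled form \eqref{eq:parallel_transport_param}. Solving the scalar ODE $\dot\lambda=-g\,\lambda$ produces a rescaling $\lambda(t)$ such that $\overline{X}^\alpha=\lambda(t)X^\alpha$ is genuinely $\overline\nabla$-parallel, so every $\nabla$-parallel vector becomes $\overline\nabla$-parallel after rescaling. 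Exchanging the roles of $\nabla$ and $\overline\nabla$ (replacing $\xi$ by $-\xi$) gives the reverse inclusion, establishing equivalence in the sense of Definition \ref{def:proj_tor}.

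The main obstacle will be the clean justification of the pointwise reduction, namely arguing that requiring the two rescaling-independent equations to share the same solution direction fields is equivalent to the algebraic identity holding for all pairs $(u,X)$ at each point, together with the short but essential lemma that an endomorphism fixing every line is scalar. Once these are in place, the remaining steps are routine index manipulations entirely parallel to the symmetric case.
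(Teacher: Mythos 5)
Your proof is correct, and its skeleton matches the paper's: both directions hinge on the rescaling-independent parallel-transport equation \eqref{eq:parallel_transport_1}, and the necessity part starts, exactly as in the paper, by subtracting the two copies of that equation to isolate the difference tensor $\tensor{\Lambda}{^\sigma_\alpha_\beta}$. Where you diverge is in how the algebraic constraint is solved. The paper polarizes in $X$ (symmetrizing over the indices contracted with $X^\rho X^\beta$, since the condition \eqref{eq:condition_pro0} must hold for all curves and all $X$) and then contracts a pair of indices to extract $\Lambda$; you instead fix the tangent $u$, observe that $Y\mapsto\tensor{\Lambda}{^\sigma_\alpha_\beta}u^\alpha Y^\beta$ preserves every line, invoke the standard lemma that such an endomorphism is a scalar multiple of the identity, and recover the scalar by a trace. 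The two are mathematically equivalent, but your route skips the symmetrization bookkeeping and lands directly on $\tensor{\Lambda}{^\sigma_\alpha_\beta}=\tensor{\delta}{^\sigma_\beta}\,\xi_\alpha$ with the index placement of the statement \eqref{eq:projective_transformation} (the paper's final displayed formula in its proof in fact carries an index misprint, pairing $\tensor{\delta}{^\sigma_\alpha}$ with a trace on $\beta$). You also buy something the paper does not spell out: an explicit sufficiency argument, substituting the transformation back into the transport equation and solving $\dot\lambda=-g\lambda$ to produce the rescaling, whereas the paper's proof treats only the necessity direction and leaves the converse implicit in the discussion surrounding \eqref{eq:parallel_transport_param}--\eqref{eq:parallel_transport_1}. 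The one point you flag as an obstacle --- passing from ``same direction fields along all curves'' to the pointwise identity for all pairs $(u,X)$ --- is handled by the usual existence argument (through any point there is a curve with prescribed tangent and a parallel field with prescribed initial value), which is the same informal step the paper takes with ``since the equation must hold for any curve and for any $X^\beta$'', so it is not a genuine gap.
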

\begin{proof} 
Let us inquire under which circumstances two connections $\nabla$ and $\overline\nabla$ with connection coefficients $\tensor{\Gamma}{^\sigma_\alpha_\beta}$ and $\tensor{\overline{\Gamma}}{^{\,\sigma}_\alpha_\beta}$ have, for any curve, the same equations of parallelism up to rescaling of vectors \eqref{eq:parallel_transport_1}.  One has directly
\begin{equation}\label{eq:condition_pro0}
	\frac{d \gamma^\alpha}{ds}X^\rho X^\beta\left(\tensor{\delta}{^{\delta}_\rho}\tensor{\Lambda}{^\sigma_{\alpha\beta}}-\tensor{\delta}{^{\sigma}_\rho}\tensor{\Lambda}{^\delta_{\alpha\beta}}\right)=0\,,
\end{equation}
where
\begin{equation}
	\tensor{\Lambda}{^\delta_{\alpha\beta}}=\tensor{\Gamma}{^\delta_{\alpha\beta}}-\tensor{\overline{\Gamma}}{^{\,\delta}_{\alpha\beta}}\,,
\end{equation}
are the components of a tensor. Since the equation must hold for any curve and for any $X^\beta$
\begin{equation}
\tensor{\delta}{^{\delta}_{(\rho|}}\tensor{\Lambda}{^\sigma_{|\alpha|\beta)}}-\tensor{\delta}{^{\sigma}_{(\rho|}}\tensor{\Lambda}{^\delta_{|\alpha|\beta)}}=0\,.
\end{equation}
Contracting the indices $\delta$ and $\rho$ yields
\begin{equation}
\tensor{\Lambda}{^\sigma_{\alpha\beta}}=\dfrac{1}{\Dim}\tensor{\delta}{^\sigma_{\alpha}} \,\, \tensor{\Lambda}{^\rho_{\beta}_\rho}\,.
\end{equation}
\end{proof}
%

\vskip 6pt

Note that the \textit{asymmetric}-projective transformations do not preserve the class of torsion-free connections. From \eqref{eq:projective_transformation} it is clear that two symmetric connections which may be \textit{symmetric}-projectively equivalent can not be \textit{asymmetric}-projectively equivalent, while two \textit{asymmetric}-projectively equivalent connections are necessarily \textit{symmetric}-projectively equivalent. In fact, performing the transformation $\eqref{eq:projective_transformation}$ within the unparametrized geodesic equation \eqref{eq:geodesic_unparam} boils down to a \textit{symmetric}-projective transformation. In the sequel we focus exclusively on the bigger \textit{asymmetric}-projective transformations and we call them projective transformations. 

\newpage
\begin{remark}
An interesting problem would be to investigate the Cartan geometry associated with the asymmetric projective structure.
\end{remark}

%
\subsection{Conventions and properties of the Riemann tensor}\label{subsec:Riemann}
The Riemann curvature tensor $\mathcal{R}$ associated with the connection $\nabla$ is defined for any vector field $X,Y,Z$ and any one form field $\omega$ as follows:
\begin{equation}\label{eq:Geo_defRiemann}
	\mathcal{R}(X,Y,Z,\omega):=\omega(\nabla_X\nabla_Y Z -\nabla_Y\nabla_X Z - \nabla_{[X,Y]} Z).
\end{equation}
In a coordinate system, the above definition for the Riemann tensor induces the following expression for its components:
\begin{equation}\label{eq:defRiemann}
	\tensor{\mathcal{R}}{_{\alpha \beta\sigma}^{\gamma}}=\, 2 \, \tensor{\partial}{_{[ \alpha | \,}}\tensor{\Gamma}{^{\gamma}_{|\beta]\,}_{\sigma}}+\, 2\, \tensor{\Gamma}{^{\delta}_{[\alpha|\, \sigma \,}}\tensor{\Gamma}{^{\gamma}_{|\,\beta ]\, \delta \,}}.
\end{equation}
In the general context of metric affine theories of gravity the components of the Riemann tensor have no \textit{à priori} symmetries with respect to permutations of indices except for the obvious antisymmetry on the first two indices. Therefore, one can define three independent tensors of order 2 by contracting a pair of indices. They are the Ricci tensor $\tensor{\accentset{(1)}{\mathcal{R}}}{_{\alpha\beta}}$, the co-Ricci tensor $\tensor{\accentset{(2)}{\mathcal{R}}}{_{\alpha\beta}}$ and the homothetic tensor $\tensor{\accentset{(3)}{\mathcal{R}}}{_{[\alpha\beta]}}$:  
\begin{equation}\label{eq:defTracesRiemann}
	\tensor{\accentset{(1)}{\mathcal{R}}}{_{\alpha\beta}}=\tensor{\mathcal{R}}{_{\alpha\sigma\beta}^\sigma},    \qquad     \tensor{\accentset{(2)}{\mathcal{R}}}{_{\alpha\beta}}=\tensor{\mathcal{R}}{_{\sigma\alpha}^\sigma_\beta},   \qquad   \tensor{\accentset{(3)}{\mathcal{R}}}{_{\alpha\beta}}=\tensor{\mathcal{R}}{_{\alpha\beta\sigma}^\sigma}\, .
\end{equation}
Expanding the homothetic tensor in terms of the distortion tensor one find that it has a very simple expression
\begin{equation}\label{eq:homothetic_tensor}
	\tensor{\accentset{(3)}{\mathcal{R}}}{_{\alpha\beta}}=\, 2 \tensor{\accentset{g}{\nabla}}{_{[\alpha}}\tensor{\accentset{(2)}{C}}{_{\beta]}}\,.
\end{equation} 
The homothetic tensor quantifies how the length of a vector changes when transported along a closed loop. For a volume preserving connection, $\overline{\nabla}_\alpha(\sqrt{|g|})=0$, this tensor is identically zero \cite{eisenhart1929non}. \smallskip

The unique double trace of the Riemann tensor is the Ricci scalar: $\mathcal{R}=\tensor{\accentset{(1)}{\mathcal{R}}}{^{\ a}_{a}}=\tensor{\accentset{(2)}{\mathcal{R}}}{^{\ a}_{a}}$. Finally, we define $\tensor{\accentset{(1)}{\underline{\mathcal{R}}}}{_{\alpha\beta}}$ and $\tensor{\accentset{(2)}{\underline{\mathcal{R}}}}{_{\alpha\beta}}$ as the traceless part of the Ricci and co-Ricci tensor respectively: 
\begin{equation}
	\tensor{\accentset{(1)}{\underline{\mathcal{R}}}}{_{\,\alpha\beta}}\equiv\tensor{\accentset{(1)}{\mathcal{R}}}{_{\alpha\beta}}-\dfrac{1}{\Dim}g_{\alpha\beta}\mathcal{R},\qquad \tensor{\accentset{(2)}{\underline{\mathcal{R}}}}{_{\,\alpha\beta}}\equiv \tensor{\accentset{(2)}{\mathcal{R}}}{_{\alpha\beta}}-\dfrac{1}{\Dim}g_{\alpha\beta}\mathcal{R}\,.
\end{equation}
Here and in what follows underlined tensors are to be understood as totally traceless tensors.\\

Under a projective transformation \eqref{eq:projective_transformation} the Riemann tensor transforms as
\begin{equation}\label{eq:projective_trans_Riemann}
	\tensor{\mathcal{R}}{_{\alpha \beta\sigma}^{\gamma}}\to \tensor{\mathcal{R}}{_{\alpha \beta\sigma}^{\gamma}} + 2\, \tensor{\delta}{^\gamma_\sigma} \tensor{{\accentset{g}{\nabla}}}{_{[\alpha}} \tensor{\xi}{_{\beta]}}.
\end{equation}
The only traces components of the Riemann tensor which transform non-trivially  are the antisymmetric part of the Ricci and co-Ricci tensor, and the homothetic tensor: 
\begin{equation}\label{eq:proj_trans_Riccis}
	\begin{aligned}
		&\tensor{\accentset{(1)}{\mathcal{R}}}{_{[\alpha\beta]}}\to \tensor{\accentset{(1)}{\mathcal{R}}}{_{[\alpha\beta]}} + 2 \,\tensor{{\accentset{g}{\nabla}}}{_{[\alpha}} \tensor{\xi}{_{\beta]}}\, ,\\
		&\tensor{\accentset{(2)}{\mathcal{R}}}{_{[\alpha\beta]}}\to \tensor{\accentset{(2)}{\mathcal{R}}}{_{[\alpha\beta]}} - 2 \,\tensor{{\accentset{g}{\nabla}}}{_{[\alpha}} \tensor{\xi}{_{\beta]}}\, ,\\
		&\tensor{\accentset{(3)}{\mathcal{R}}}{_{\alpha\beta}}\to \tensor{\accentset{(3)}{\mathcal{R}}}{_{\alpha\beta}} + 2 \,\Dim \,\tensor{{\accentset{g}{\nabla}}}{_{[\alpha}} \tensor{\xi}{_{\beta]}} \,. 
	\end{aligned}
\end{equation}

\subsection{Linear frames and coframes}\label{subsec:Frames}

To define the point-wise action of the general linear group $\GL(\Dim,\mathbb{R})$ and of the (pseudo)-orthogonal group $\Or(1,\Dim-1)$ on tensors we use the notion of moving frame $\lbrace \bm{e}_a \st a=1\,,\ldots \Dim \rbrace$ (resp. coframe $\lbrace \bm{\theta}^a\st a=1\,,\ldots \Dim \rbrace$) on an open subset $U\subset\mathcal{M}$. They are vector fields (resp. covector fields) that are linearly independent in the tangent space $T_p U$ (resp. cotangent space of $T^*_p U$) at each point $p\in U$.\smallskip

With respect to a coordinate frame and co-frame on $U$ one as
\begin{equation}\label{eq:def_frames}
	\bm{e}_a:= \tensor{e}{_a^\alpha} \partial_\alpha \hspace{1.4cm} \bm{\theta}^a:=\tensor{e}{^a_\alpha} dx^\alpha, \hspace{1.4cm}\text{(for } a=0,1,\ldots,\Dim-1)
\end{equation}
where the components of the frame in a coordinate system $\tensor{e}{_a^\mu}$ are smooth functions on $U$ and the matrix $\left[\tensor{e}{_a^\mu}(p)\right]$ at a point $p\in U$ is an element of $\GL(\Dim,\mathbb{R})$. Note that $\bm{e}_a$ is also called $\textit{tetrad}$ or $\textit{vielbein}$ and the coefficients $\tensor{e}{_a^\alpha}$ are the tetrad coefficients. 
A moving frame $\bm{e}_a$ is a coordinate frame if and only if it satisfies the integrability conditions 
\begin{equation}\label{eq:holonomic_frame}
	[\bm{e}_a,\bm{e}_b]=0,
\end{equation}
where $[\cdot\,,\cdot\,]$ is the Lie bracket of two vector fields. In this case the moving frame ${\bm{e}_a}$ is said to be holonomic. For nonholonomic frame \eqref{eq:holonomic_frame} is not satisfied and in particular one has 
\begin{equation}\label{eq:nonholonomic_frame}
	[\bm{e}_a,\bm{e}_b]=\tensor{\Omega}{^c_{ab}} \bm{e}_c,
\end{equation} 
where $\tensor{\Omega}{^c_{ab}}=2\tensor{e}{_{[a|}^\alpha}\,\tensor{e}{_{|b]}^\beta}\left(\tensor{\partial}{_{\beta}}\tensor{e}{^c_{\alpha}}\right)$ are the coefficients of anholonomy.\medskip


The connection coefficients in the nonholonomic frame are define similarly to \eqref{connection_coef_coord}: 
	\begin{equation}\label{eq:def_connection_coefficients_spin}
		\nabla_b (\bm{e}_c):=\tensor{\omega}{^{a}_{bc}}\,\bm{e}_a,
	\end{equation} 
From the coordinate expression of the moving frame \eqref{eq:def_frames} and \eqref{eq:def_connection_coefficients_coordinate} we get the following relations between the connection coefficients in a moving frame and in a coordinate frame. 
\begin{equation*}
	\tensor{\omega}{^a_{bc}}=\tensor{\Gamma}{^{\alpha}_{\beta\sigma}}\tensor{e}{^a_\alpha}\tensor{e}{_b^\beta}\tensor{e}{_c^\sigma}-\tensor{e}{_b^\gamma}\tensor{e}{_c^\sigma}\partial_{\sigma}\tensor{e}{^a_\gamma}.
\end{equation*}
Note that this expression can be obtained directly from $\nabla(\bm{e})=0$ where $\bm{e}=\tensor{e}{^a_\mu}\, \bm{e}_a\,\otimes \,\diff x^\mu=\tensor{\delta}{^a_b}\bm{e}_a\,\otimes \bm{\theta}^b$ which is sometimes called the "tetrad postulate". 
The decomposition \eqref{eq:def_distortion} of the spacetime connection $\Gamma$ translates to
\begin{equation}
	\tensor{\omega}{^a_{bc}}=\tensor{\accentset{\circ}{\omega}}{^a_{bc}}+\tensor{C}{^a_{bc}},
\end{equation}
where the connection coefficients of the Levi-Civita connection in a moving frame are defined by the relation  $\tensor{\accentset{g}{\nabla}}{_b}(\tensor{\bm{e}}{_c}):=\tensor{\accentset{\circ}{\omega}}{^a_{bc}}\,\bm{e}_a$. The relation between the connection coefficients of the Levi-Civita connection in a moving frame and in a coordinate frame are given by :
\begin{equation}
\tensor{\accentset{\circ}{\omega}}{^a_{bc}}=\Bigl \{ \tensor{}{^{\alpha}_{\beta\sigma}}\Bigr \}_g\tensor{e}{^a_\alpha}\tensor{e}{_b^\beta}\tensor{e}{_c^\sigma}-\tensor{e}{_b^\gamma}\tensor{e}{_c^\sigma}\partial_{\sigma}\tensor{e}{^a_\gamma}\,.
\end{equation}

For completeness, we introduce the $\nabla$-connection one-form, the $\accentset{g}{\nabla}$-connection one-form and the distortion tensor one form
\begin{equation}
\tensor{\accentset{\circ}{\omega}}{^a_{b}}:=\tensor{\accentset{\circ}{\omega}}{^a_{\sigma b}}\diff x^\sigma\,,\hspace{1cm}	\tensor{\omega}{^a_{b}}:=\tensor{\omega}{^a_{\sigma b}}\diff x^\sigma\,, \hspace{1cm} 
\tensor{C}{^a_b}:=\tensor{C}{^a_{\sigma b}} \diff x^\sigma\,.
\end{equation}
For the symmetric and antisymmetric part of the distortion one form we have:
\begin{equation}
\tensor{C}{^{(ab)}}=-\dfrac{1}{2}\tensor{Q}{_\alpha^{ab}}\, \diff x^{\alpha}\,,\hspace{1cm}\tensor{C}{^{[ab]}}=\left(\tensor{Q}{^{[ab]}_\alpha}+\tensor{K}{^{[a|}_\alpha^{|b]}}\right)\diff x^{\alpha}\,.
\end{equation}
In the context of Cartan geometry and metric-affine gauge theories of gravity one also introduces the non-metricity one form and the torsion $2$-form:
\begin{equation}
\tensor{Q}{^{ab}}=\tensor{Q}{_\alpha^{ab}}\diff x^{\alpha}\,, \hspace{1cm} \tensor{\mathcal{T}}{^a}=\dfrac{1}{2}\tensor{T}{^a_\alpha_\beta}\diff x^{\alpha}\wedge\diff x^{\beta}\,.\footnote{We use a specific notation for the torsion $2$-form in order to avoid ambiguities with the torsion vector defined in \eqref{eq:vectors_nm_torsion}.}
\end{equation}
Note that in this context $Q=\tensor{Q}{^{a}_a}$ is called the Weyl one-form.\medskip

Any tensor field expressed in a local coordinate system can be translated to a $\GL(\Dim,\mathbb{R})$-tensor field via the isomorphisms $\tensor{e}{_a^\alpha}$. In particular the spacetime metric $g_{\alpha\beta}$ can be identified as a $\GL(\Dim,\mathbb{R})$ covariant metric $g_{ab}$ 
\begin{equation}
	g_{ab}=g_{\alpha\beta}\,\tensor{e}{_a^\alpha}\tensor{e}{_b^\beta}.
\end{equation}

\begin{remark}[Connection coefficients in an orthonormal frame]
Let $\lbrace \bm{e}_a, \st a=1,\ldots, \Dim \rbrace $ be a linear frame on $U\subset M$. The metricity condition $\accentset{g}{\nabla}_\mu g_{ab}=0$ leads to
\begin{equation}\label{eq:metricity_local}
\partial_{\sigma} g_{ab}= 2 \,\tensor{\accentset{\circ}{\omega}}{_{(a|\sigma|b)}}\,.
\end{equation}
Similarly, the non-metricity tensor $\nabla_\mu g_{ab}=\tensor{Q}{_\mu_{ab}}$ translates to 
\begin{equation}\label{eq:nonmetricity_local}
	\begin{aligned}
	\tensor{Q}{_{\sigma a b}}&=\partial_{\sigma} g_{ab}-2\,\tensor{\omega}{_{(a|\sigma|b)}}\, ,\\
	\end{aligned}
\end{equation}
Now choose an orthonormal frame on the local subset $U\subset\mathcal{M}$ which we still denote by $\bm{e}_a$. In this case, in term of this orthonormal frame one has: 
\begin{equation}
g(\bm{e}_a(x),\bm{e}_b(x))=\eta_{ab}, \hspace{1cm} \text{for all $x\in U$} 
\end{equation}
where $\eta_{ab}$ is the Minkowski metric. Then the equations \eqref{eq:metricity_local} and \eqref{eq:nonmetricity_local} reduce respectively to 
\begin{equation}
\tensor{\accentset{\circ}{\omega}}{_{(a|\sigma|b)}}=0\,, \hspace{0.5cm} \text{and} \hspace{0.5cm} \,\tensor{\omega}{_{(a|\sigma|b)}}=-\frac{1}{2}\, \tensor{Q}{_\sigma_{ab}}\,.
\end{equation}
\end{remark}
\medskip

\begin{remark} 
The analysis of metric-affine gravity as a gauge theory of gravity is beyond the scope of the present thesis. 
For special account on metric-affine gauge theory of gravity see for example \cite{jimenez2022metric,Hehl:1994ue} and references therein.\smallskip 

As a side remark related to gauge theories let us mention the introduction of the \textit{dressing field method} which clarifies the status of the notion of symmetry breaking in field theory \cite{berghofer2021gauge2,Attard2018}.
\end{remark}

Similarly, the expression of the distortion and Riemann tensor components in the (co)-frame basis are related to their coordinate basis components \eqref{eq:def_distortion}-\eqref{eq:defRiemann} by 
\begin{equation}
	\tensor{C}{^a_{bc}}=\tensor{e}{^a _\alpha}\tensor{e}{_b^\beta}\tensor{e}{_c^\sigma}\tensor{C}{^\alpha_{\beta\sigma}}\,, \hspace{1cm}\text{and}\hspace{1cm} \tensor{\mathcal{R}}{_{abc}^d}=\tensor{e}{_a^\alpha}\,\tensor{e}{_b^\beta}\,\tensor{e}{_c^\sigma}\,\tensor{e}{^d_\gamma}\,\tensor{\mathcal{R}}{_{\alpha\beta\sigma}^\gamma}\,.
\end{equation}
Choosing an orthonormal frame on a open subset $U$, the $\GL(\Dim,\mathbb{R})$-tensor fields translate to $\Or(1,\Dim-1)$-tensor fields. We recall that the main goal of this chapter is to obtain a uniquely defined irreducible decomposition of the tensor fields $\tensor{C}{^a_{bc}}$ and $\tensor{\mathcal{R}}{_{abc}^d}$ with respect to the action of $\Or(1,\Dim-1)$. \medskip

In the next section we present some of the mathematical tools required for a systematic approach to the irreducible decomposition of the tensor product of a complex vector space endowed with a symmetric metric, with respect to the action of both $\GL(\Dim,\C)$ and $\Or(\Dim,\C)$\footnote{We use this approach because the representation theory of complex Lie groups is more systematic and easier.}. In particular, we recall that the irreducible decomposition of tensors with respect to $\GL(\Dim,\C)$ (respectively $\Or(\Dim,\C)$) can be obtained systematically using certain projectors in $\C\sn$ (respectively in the Brauer algebra $B_n(\Dim)$).  This result is applicable to our framework \cite[Theo 5.7. F]{Weyl}, where we consider instead the tensor product of a real vector space endowed with a symmetric metric, acted upon by $\GL(\Dim,\R)$ and $\Or(1,\Dim-1)$\footnote{The main reason for this is because the projectors used for the decomposition have coefficients in $\Q$. 
}.\medskip

Recalling that the pairwise inequivalent irreducible tensor representations of $\GL(\Dim,\R)$ and $\Or(1,\Dim-1)$ are indexed by integer partitions/Young diagrams, the reader interested mainly in applications to the irreducible decomposition of tensors in metric-affine gravity may skip the following section. Reference to the important points of section \ref{sec:Schur_Weyl_dualities} will be made along the discussions in sections \ref{sec:Distortion_Decomposition} and \ref{sec:Riemann_Decomposition}.

\section{Irreducible decompositions of tensors and Schur-Weyl dualities}\label{sec:Schur_Weyl_dualities}
We consider a complex vector space $V$ with basis $\{\bm{e}_1,\ldots,\bm{e}_\Dim\}$ endowed with a non-degenerate symmetric metric $g$. We write $V^{\otimes n}$ for the $n$-fold tensor product of $V$. 

\paragraph{Young diagrams.} The set of pairwise inequivalent irreducible tensor representations of $\GL(\Dim,\C)$ and $\Or(\Dim,\C)$ are indexed by particular integer partitions. An integer partition $\mu=\left(\mu_1,\mu_2,\ldots,\mu_l\right)$ of $n$, denoted $\mu\vdash n$, is a sequence of non increasing positive integers such that $|\mu|=\sum_{i=1}^{l}\mu_i=n$. To each such integer partition we identify a \textit{Young diagram}: an array of $n$ boxes arranged in $l$ left justified rows. For example, the partitions of $n=4$ are: 

\begin{equation*}
	\left(4\right)=\Yboxdim{8pt}\yng(4)\hspace{0.8cm} \left(3,1\right)=\Yboxdim{8pt}\yng(3,1)\hspace{0.8cm} \left(2,2\right)=\Yboxdim{8pt}\yng(2,2)\hspace{0.8cm} \left(2,1,1\right)=\Yboxdim{8pt}\yng(2,1,1)\hspace{0.8cm}\left(1,1,1,1\right)=\Yboxdim{8pt}\yng(1,1,1,1)
\end{equation*}
To any partition $\mu= (\mu_1,\dots,\mu_r)$ of $n$ one constructs the {\it dual partition} $\mu^{\prime} = (\mu_1^{\prime},\dots,\mu^{\prime}_{\mu_1})$ by transposing the corresponding Young diagram. 


\begin{mathematicas}[\textit{Built-in functions}]
	IntegerPartitions[n]\,\\
	\textit{Returns a list of all integer partitions of $n$.}\\
	
	Head[expr]\,\\
	\textit{Returns the head of $expr$. For example, if expr is a list \textup{Head[$expr$]} returns \textup{List} while if expr is an integer \textup{Head[$expr$]} returns \textup{Integer}}.
\end{mathematicas}

\begin{mathematica}[\textit{SymmetricFunctions}]
	TableauForm[$\mu$]\,\\
	\textit{Returns a Young diagram with head \textup{Graphics} associated with the integer partition $\mu$.}
\end{mathematica}

\subsection{Irreducible decompositions of tensors with respect to $\GL(\Dim,\mathbb{C})$.}

\paragraph{Action of $\GL(\Dim,\C)$ on tensors.}The tensor product space $V^{\otimes n}$ is a representation of $\GL(\Dim,\mathbb{C})$ defined by the following canonical left action of any element $\Lambda\in \GL(\Dim,\mathbb{C})$ on tensors $T\in V^{\otimes n}$: 
\begin{equation}\label{eq:GLaction}
	\begin{aligned}
		\Lambda\cdot(T^{a_1\ldots a_n}\bm{e}_{a_1}\otimes\ldots\bm{e}_{a_n}):&=T^{a_1\ldots a_n}\left(\tensor{\Lambda}{^{b_1}_{a_1}}\bm{e}_{b_1}\otimes\ldots\otimes\tensor{\Lambda}{^{b_n}_{a_n}}\bm{e}_{b_n}\right)\,,\\
		&=\tensor{\Lambda}{^{a_1}_{b_1}}\ldots\tensor{\Lambda}{^{a_n}_{b_n}}T^{b_1\ldots b_n}\bm{e}_{a_1}\otimes\ldots\otimes\bm{e}_{a_n}\, ,\\
		&=(\Lambda\cdot T)^{a_1\ldots a_n}\bm{e}_{a_1}\otimes\ldots\otimes\bm{e}_{a_n}.
	\end{aligned}
\end{equation}

\paragraph{Irreducible representations of $\GL(\Dim,\mathbb{C})$ in $V^{\otimes n}$.}
Let $\Par_n(\Dim)$ denote the following set of Young diagrams: 
\begin{equation}\label{eq:P_d}
\Par_n(\Dim)=\lbrace \mu\vdash n \st \mu_1^{\prime}\leqslant \Dim \rbrace\,,
\end{equation}
$\mu_1^{\prime}\leqslant \Dim$ is a constraint on the number of rows in the first column of the diagrams $\mu$.\medskip

Due to classical results of Weyl, the vector space $V^{\otimes n}$ is completely reducible under the action \eqref{eq:GLaction} of $\GL(\Dim,\mathbb{C})$, and its irreducible decomposition is given by
\begin{equation}\label{eq:GL_decomposition}
	V^{\otimes n}\cong\bigoplus_{\mu\in \Par_n(\Dim)}
	 \left(V^{\mu}\right)^{\oplus m_{\mu}}\,,
\end{equation}
where the $V^{\mu}$'s are pairwise inequivalent irreducible tensor representation of $\GL(\Dim,\mathbb{C})$ and $m_{\mu}$ is the \textit{multiplicity} of $V^{\mu}$ in $V^{\otimes n}$. In other words, $m_\mu$ is the number of equivalent\footnote{Equivalent irreducible representations are related to each other by an isomorphism called intertwining operator.} irreducible representations of $\GL(\Dim,\mathbb{C})$, parameterized by the integer partition $\mu$, which appear in $V^{\otimes n}$. For the definition of direct sum see appendix \ref{subsec:definitions}.\medskip 
\newpage
\begin{remark} The irreducible decomposition \eqref{eq:GL_decomposition} is sometimes written 
\begin{equation}\label{eq:GL_decomposition_alt}
	V^{\otimes n}\cong\bigoplus_{\mu\in \Par_n(\Dim)}
	m_{\mu}V^{\mu}\,,
\end{equation}
where $m_\mu V^{\mu}$ is a direct sum of $m_\mu$ copies of $V^\mu$ \textup{(}see for example \cite{ceccherini2010representation} for more details\textup{)}.
\end{remark}

\paragraph{Isotypic components.} The unique $\GL(\Dim,\C)$-isotypic decomposition of $V^{\otimes n}$ may be written as
\begin{equation}
	V^{\otimes n}=\bigoplus_{\mu\in \Par_n(\Dim)}\mathcal{V}^\mu 
\end{equation}
where $\mathcal{V}^\mu$ is the isotypic component of weight $\mu$ (also called the $\mu$-isotypic component)  of $V^{\otimes n}$, that is:
\begin{equation}
\mathcal{V}^\mu \cong\left(V^{\mu}\right)^{\oplus m_\mu}\,.
\end{equation}
The definition of isotypic components and isotypic decomposition of a representation of a finite dimensional algebra\footnote{Note that representations of a group of linear transformations (for example $\GL(\Dim,\C)$ or $\Or(\Dim,\C)$) can be viewed as representations (modules) of the corresponding finite dimensional group algebra (for example $\C \GL(\Dim,\C)$ or $\C\Or(\Dim,\C)$). See for example \cite[p. $178$]{goodman2009symmetry}} is given in appendix \ref{subsec:rep_theo_Algebra} (see also \cite[Def. $1.2.9$, Def. $7.2.7$]{ceccherini2010representation} and \cite[Sec. $4.1.6$]{goodman2009symmetry}.\medskip

For any tensor $T\in V^{\otimes n}$, we write
\begin{equation}\label{eq:isotypic_decomp_GL_Gen}
T=\sum_{\mu\in\Par_n(\Dim)} T^\mu,\hspace{1cm}\text{with}\hspace{1cm} T^\mu\in \mathcal{V}^\mu\,,
\end{equation}
for the unique \textit{isotypic decomposition} of $T$ with respect to $\GL(\Dim,\C)$. The tensor $T^\mu$ is the \textit{isotypic component} of weight $\mu$ of $T$.\smallskip 
\begin{remark}\label{rem:isotypic_component_irreducible}\hphantom{0}\medskip
	\begin{itemize}
		\item[\it i)] The isotypic decomposition of $V^{\otimes n}$ is an orthogonal decomposition with respect to the canonical scalar product induced by any non-degenerate symmetric metric on $V$.
		\item[\it{ii)}] Isotypic components $T^\mu$ with multiplicity $m_\mu=1$ are irreducible tensors. For example, the totally symmetric tensors (parametrized by one row Young diagrams) and totally antisymmetric tensors (parametrized by one column Young diagrams) are at the same time isotypic components and irreducible components.
		\item[\it{iii)}] Isotypic components $T^\mu$ with multiplicity $m_\mu>1$ are not irreducible tensors, and the irreducible decomposition of $T^\mu$ is in this case not unique.\medskip
		
		When all the multiplicities $m_\mu$ of the irreducible representation $V^\mu\subset V^{\otimes n}$ are equal to $1$ the irreducible decomposition is said to be \textit{multiplicity free}. In this case, the isotypic decomposition of a tensor is also the unique irreducible decomposition.
	\end{itemize}
\end{remark}
\begin{example}
	For $n=2$ and $\Dim\geq 2$ any tensor can be decomposed into two irreducible tensors, a symmetric one and an antisymmetric one:
	\begin{equation}
		T=T^{\Yboxdim{4pt}\yng(2)}+T^{\Yboxdim{4pt}\yng(1,1)}\,.
	\end{equation}
	Here the Young diagrams $\Yboxdim{5pt}\yng(2)$ (resp.$\Yboxdim{5pt}\yng(1,1)$) parameterizes the symmetric (resp. antisymmetric) representation, and the irreducible decomposition is unique.
\end{example}   

\paragraph{Isotypic components in matrix form and $\GL(\Dim,\C)$ action again.} Upon decomposing $V^{\otimes n}$ into itsotypic components $\left(V^{\mu}\right)^{\oplus m_\mu}$ one can represent any $T\in V^{\otimes n} $ as follows: 
\begin{equation}\label{eq:iso_column_vec}
\begin{pmatrix}
T^{\mu_1}\\
\vdots\\
T^{\mu_k}
\end{pmatrix}\,,\hspace{1cm}
\text{with}\hspace{1cm} T^{\mu_i}\in \left(V^{\mu_i}\right)^{\oplus m_{\mu_i}}
\end{equation}
where $k=|\mathcal{P}_n(\Dim)|$. Note that upon choosing a basis in each vector space $\left(V^{\mu_i}\right)^{\oplus m_{\mu_i}}$, the elements $T^{\mu_i}$ in \eqref{eq:iso_column_vec} may be represented as column vector of length $m_{\mu_i}\dim(V^{\mu_i})$.\smallskip

The isotypic components $\left(V^{\mu}\right)^{\oplus m_\mu}$ are invariant subspaces with respect to $\GL(\Dim,\C)$ action, hence any operator $\mathrm{O}\in \End(V^{\otimes n})$ generated by the action of $\GL(\Dim,\C)$ on $V^{\otimes n}$ is a linear combination of elements which assume the following diagonal block matrix structure:\medskip

\begin{equation}
\begin{pmatrix}
\,\rho^{\mu_1}(\mathrm{\Lambda}) & \dots & \bigzero \\
\vdots & \ddots & \vdots \\
 \bigzero & \dots & \rho^{\mu_k}(\mathrm{\Lambda})
\end{pmatrix}\,\hspace{0.5cm} \text{where} \hspace{0.5cm} \Lambda\in \GL(\Dim,\C)\,.
\end{equation}
With this notation, each block $\rho^{\mu_i}(\Lambda)$ is an operator in $\GL(\left(V^{\mu_i}\right)^{\oplus m_{\mu_i}})$\,. More precisely the linear map 
\begin{equation}
\rho^{\mu_i} : \GL(\Dim,\C) \to \GL(\left(V^{\mu_i}\right)^{\oplus m_{\mu_i}})\,
\end{equation}
is called a representation\footnote{It obeys $\rho^{\mu_i}(\lambda_1\lambda_2)=\rho^{\mu_i}(\lambda_1)\rho^{\mu_i}(\lambda_2)$ for all $\Lambda_1\,,\Lambda_2\in  \GL(\Dim,\C)$.} of $\GL(\Dim,\C)$, and $\left(V^{\mu_i}\right)^{\oplus m_{\mu_i}}$ is the representation space. Each operator $\rho^{\mu_i}(\mathrm{\Lambda})$ may be seen as a matrix\footnote{These matrices do not belong the full complex matrix algebra, but to a particular subspace which is discussed below.} of size  $\left(m_{\mu_i}\dim(V^{\mu_i})\right)\times \left(m_{\mu_i}\dim(V^{\mu_i})\right)$.

\begin{remark}
	Depending on the context, by representation we either mean the linear map, the vector space or both. In the present thesis, a representation will almost always refer to the representation space.
\end{remark}

\paragraph{Irreducible decomposition of tensors.} The classical Schur-Weyl duality for $\GL(\Dim,\mathbb{C})$ and $\mathfrak{S}_n$ (see appendix \ref{subsec:Double Centralizer Theorem}) implies that the decomposition \eqref{eq:GL_decomposition} can be performed explicitly using certain projectors in the symmetric group algebra $\C\sn$. These projectors are the primitive idempotents in $\C\sn$.\medskip 

In order to be as explicit as possible we use the pairwise orthogonal primitive idempotents in $\C\sn$ \cite{jucys1966young,Murphy,Thrall_seminormalYoung_1941,molev2006fusion} which have various names in the literature (Young's seminormal idempotents \cite{doty2019canonical}, Young's seminormal units \cite{Garsia2020}, Hermitian Young operators \phantomsection\cite{Keppeler_2014}). We will called them the Young seminormal idempotents and denote them by $Y^{\,\tab}$ with $\mathrm{t}$ a standard Young tableau\footnote{A standard tableau is a Young diagram with $n$ boxes filled with integers in $\lbrace 1\,,\ldots, n\rbrace $ in a precise manner which is related to the representation theory of $\sn$ (see section \ref{sec:seminormalYoung} for more details and also \cite{fulton2013representation}).}. A tensor $T\in V^{\otimes n}$ can be decomposed into irreducible $\GL(\Dim,\C)$ tensors as follows,
		\begin{equation}\label{eq:Irreducible_decomposition_GL}
			T=\sum_{\mu\in \Par_n(\Dim)} \,\, \sum_{\scriptstyle \mathrm{t}\in \text{Tab}(\mu)} T^{\,\tab}\,, \hspace{0.5cm}\text{where}\hspace{0.5cm} T^{\,\tab}=Y^{\,\tab}(T)\,,
		\end{equation} 
		and $\text{Tab}(\mu)$ is the set of standard Young tableaux of shape $\mu$. An important fact is that the number of standard tableaux of shape $\mu$ is equal to the multiplicity of the irreducible representation $V^\mu$:
		\begin{equation}
			|\text{Tab}(\mu)|=m_\mu\,.
		\end{equation}
\begin{remark}
If we replace the Young seminormal idempotents by the Young symmetrizers $\mathcal{Y}^{\,\tab}$ \cite[Chapter 4]{fulton2013representation}, the equality \eqref{eq:Irreducible_decomposition_GL} does not hold for $n>4$ . This can be seen from the fact the Young symmetrizers are not pairwise orthogonal (for $n>4$)~\cite{stembridge2011orthogonal}, and hence do not form a partition of unity in $\C\sn$.
\end{remark}
Let $\tab_j(\mu)\in \textup{Tab}(\mu)$ with $j=1,\ldots, m_\mu$. Upon decomposing $V^{\otimes n}$ into irreducible components (with Young seminormal idempotents) one can represent any $T\in V^{\otimes n} $ as follows: 

\begin{equation}
	\begin{pmatrix}
		T^{\tab_1(\mu_1)}\\
		\vdots\\
		T^{\tab_{m_{\mu_1}}(\mu_1)}\\
  		\cmidrule(lr){1-1}
  		\vdots\\
  		\vdots\\
  		\cmidrule(lr){1-1}
		T^{\tab_1(\mu_k)}\\
		\vdots\\
		T^{\tab_{m_{\mu_k}}(\mu_k)}
	\end{pmatrix}\,,\hspace{1cm}
	\text{with}\hspace{1cm} T^{\tab_{j}(\mu_i)}\in V^{\tab_{j}(\mu_i)}
\end{equation}
where $k=|\mathcal{P}_n(\Dim)|$ and in relation with the isotypic decomposition \eqref{eq:iso_column_vec} one has
\begin{equation}
T^{\mu_i}=\sum_{j=1}^{m_{\mu_i}} T^{\tab_{j}(\mu_i)}\,.
\end{equation}

The decomposition being irreducible, any operator $\mathrm{G}\in \End(V^{\otimes n})$ generated by the action of $\GL(\Dim,\C)$ on $V^{\otimes n}$ assumes the following diagonal block matrix structure: 
\begin{equation}
	\begin{pmatrix}
		\begin{bmatrix}
			\,\, G^{\tab_1(\mu_1)} &  & \bigzero \\
			& \ddots&   \\
			\bigzero & & G^{\tab_{m_{\mu_1}}(\mu_1)}
		\end{bmatrix}
		&  & & &\bigzero \\
		& &
		\begin{matrix}
			\ddots &  &  \\
			& \ddots&   \\
			& &\ddots
		\end{matrix}
		& &	\\
		\bigzero &  & &  &
		\begin{bmatrix}
			 G^{\tab_1(\mu_k)} &  & \bigzero \\
			& \ddots&   \\
			\bigzero & &  G^{\tab_{m_{\mu_k}}(\mu_k)}
		\end{bmatrix}
	\end{pmatrix}
\end{equation}
Upon choosing a basis in each vector space $V^{\tab_{j}(\mu_i)}$ one has that each block $G^{\tab_{j}(\mu_i)}$ belongs to the full matrix algebra $M_{n_i}(\C)$ ($n_i\times n_i$ matrices with entry in $\C$), where $n_i=\dim(V^{\mu_i})$. This is a particular illustration of the Artin-Wedderburn Theorem \ref{theo:Wedderburn} which states that any completely reducible $\C$-algebra decomposes as a direct sum of full matrix algebras over $\C$. 
 


\paragraph{Permutation diagrams.}Any permutation $s\in \mathfrak{S}_n$ can be represented by a diagram with $2$ horizontal rows of $n$ vertices. Each vertex in the top row is connected to a vextex in the bottom row by a line.
For example the permutations of $\mathfrak{S}_3$ are given by: 
\begin{equation*}\label{eq:perms3}
	\id=\raisebox{-.4\height}{\includegraphics[scale=0.51]{fig/id3.pdf}}\,, \hspace{0.3cm} (12)=\raisebox{-.4\height}{\includegraphics[scale=0.51]{fig/s3a.pdf}}\,, \hspace{0.3cm} (23)=\raisebox{-.4\height}{\includegraphics[scale=0.55]{fig/s3b.pdf}}\,, \hspace{0.3cm} (13)=\raisebox{-.4\height}{\includegraphics[scale=0.51]{fig/s3c.pdf}}\,, \hspace{0.3cm} (123)=\raisebox{-.4\height}{\includegraphics[scale=0.51]{fig/s3d.pdf}}\,, \hspace{0.3cm} (132)=\hspace{0.3cm} \raisebox{-.4\height}{\includegraphics[scale=0.51]{fig/s3e.pdf}}\,.
\end{equation*}
The product of two permutations $s_1$ and $s_2$ is obtained by placing $s_1$ below $s_2$ and unfolding the path of the resulting lines. For example for $s_1=\raisebox{-.35\height}{\includegraphics[scale=0.35]{fig/s3c.pdf}}$ and $s_2=\raisebox{-.35\height}{\includegraphics[scale=0.35]{fig/s3d.pdf}}$ :
\begin{equation}
	s_1s_2=\begin{aligned}
		&\raisebox{-.4\height}{\includegraphics[scale=0.5]{fig/s3d.pdf}}\\[-6pt]
		&{\vspace{0.4cm}\raisebox{-.4\height}{\includegraphics[scale=0.5]{fig/s3c.pdf}}}
	\end{aligned}=\raisebox{-.4\height}{\includegraphics[scale=0.5]{fig/s3a.pdf}}
\end{equation}
The elements of the group algebra $\mathbb{C}\mathfrak{S}_n$ are the vectors $v=\sum_{s\in \mathfrak{S}_n} v_s s$ with $v_s\in \mathbb{C}$, and the multiplication in the algebra is defined naturally from the multiplication in the group.\medskip 

for any $s\in \mathfrak{S}_n$, the operation $(\cdot)^*$ of flipping a permutation diagram with respect to the middle horizontal line defined in the first chapter correspond to the inverse operation in $\mathfrak{S}_n$: $s^*=s^{\shortminus 1}$. This flip operation is extended to the algebra by linearity, and any element $z\in \mathbb{C}\mathfrak{S}_n $ such that 
\begin{equation}\label{eq:flip_inv_Sn}
	z=z^{*}
\end{equation}
is said to be \textit{flip invariant}.
\paragraph{Action of $\mathbb{C}\mathfrak{S}_n$ on tensors.} The tensor product space $V^{\otimes n}$ is a representation of the symmetric group $\mathfrak{S}_n$ defined by the following canonical right action on tensors: 
\begin{equation}\label{eq:Snaction}
	\begin{aligned}
		(T^{a_1a_2\ldots a_n}\bm{e}_{a_1}\otimes\bm{e}_{a_2}\otimes\ldots\bm{e}_{a_n})\cdot s:&=T^{a_1a_2\ldots a_n}\bm{e}_{a_{s^{\shortminus1}(1)}}\otimes\bm{e}_{a_{s^{\shortminus1}(2)}}\otimes\ldots\otimes\bm{e}_{a_{s^{\shortminus1}(n)}}\,,\\
		&=T^{a_{s(1)}a_{s(2)}\ldots a_{s(n)}}\bm{e}_{a_1}\otimes\bm{e}_{a_2}\otimes\ldots\otimes\bm{e}_{a_n}\, ,\\
		&=(T\cdot s)^{a_1a_2\ldots a_n} \, \bm{e}_{a_1}\otimes\bm{e}_{a_2}\otimes\ldots\bm{e}_{a_n}.
	\end{aligned}
\end{equation}
In terms of permutation diagrams, this action is realized by the following set of instructions: 
\begin{enumerate}
	\item  Write $T$ on the left below the diagram and place the indices $a_1$, $a_2$, ..., $a_n$ on the top row of the diagram.
	\item  Permute the indices according to the lines that connect the top row to the bottom row.
\end{enumerate}
For example, with the permutation diagrams $s_1$ and $s_2$ introduced above one has:
\vspace{0.5cm}
\begin{equation}
	\left(T \cdot s_1\right)^{a_1a_2a_3}=T\hspace{-0.15cm}\begin{matrix}
		\vspace{-1.3cm}\\
		\Scale[0.8]{{\scriptstyle a_1\,a_2\,a_3}}\\
		\raisebox{0.3\height}{\includegraphics[scale=0.25]{fig/s3c.pdf}}
	\end{matrix}=\tensor{T}{^{a_3a_2a_1}}\,,\hspace{3cm}
	\left(T \cdot s_2\right)^{a_1a_2a_3}=T\hspace{-0.15cm}\begin{matrix}
		\vspace{-1.3cm}\\
		\Scale[0.8]{{\scriptstyle a_1\,a_2\,a_3}}\\
		\raisebox{0.3\height}{\includegraphics[scale=0.25]{fig/s3d.pdf}}
	\end{matrix}=\tensor{T}{^{a_3a_1a_2}}\,.
\end{equation}
\vskip 4 pt
\begin{remark}
The Lie group $\GL(\Dim, \C)$ acts identically on all basis factors of $V^{\otimes n}$ while $\sn$ permutes them. Hence, the actions of $\GL(\Dim, \C)$ and of $\sn$ on $V^{\otimes n}$ commutes with each other:
\begin{equation}
	(\Lambda\cdot T)\cdot s = \Lambda\cdot( T\cdot s)\,, \hspace{2cm} \text{for all $\Lambda\in \GL(\Dim,\mathbb{C})$, $s\in \sn$ and $T\in V^{\otimes n}$ }.
\end{equation}
\end{remark}
These two actions have stronger connection, namely the algebras of operators in $\End(V^{\otimes n})$ generated by them are the mutual centralizer of each other \cite[Lem. 3.3]{stevens2016schur}. This is the subject of the Double Centralizer Theorem \cite{stevens2016schur} (see also \cite[Chapter 4]{goodman2000representations}) which in turns implies the Schur-Weyl duality.

\vskip 4 pt
To any permutation $s$ we identify an operator $\mathfrak{r}(s)=\mathfrak{s}\in \text{End}(V^{\otimes n})$ such that $\mathfrak{s}(T)= T\cdot s$, whose components are given by: 
\begin{equation}\label{eq:operator_sn}
	\tensor{\mathfrak{s}}{_{b_1b_2\ldots b_n}^{a_1a_2\ldots a_n}}=  \prod_{\Scale[0.7]{\begin{array}{c}
				{\scriptstyle i \text{ in the top row}}\\[-4pt]
				{\scriptstyle \text{connected to } j \text{ in to bottom row} }
	\end{array}}} \tensor{\delta}{^{a_i}_{b_j}}\,,
\end{equation}
and
\begin{equation}
(T\cdot s)^{a_1\ldots a_n}=\tensor{\mathfrak{s}}{_{b_1\ldots b_n}^{a_1\ldots a_n}}\, T^{b_1\ldots b_n}\,.
\end{equation}
The action of $s\in \mathfrak{S}_n$ on $V^{\otimes n}$ is extended to the group algebra $\mathbb{C}\mathfrak{S}_n$ by linearity.
\begin{remark}\label{rem:injectivity_sn}
	The homomorphism $\mathfrak{r}$ is injective if and only if $\Dim\geqslant n $.
\end{remark}



\paragraph{Schur-Weyl duality.}  As a consequence of the Double Centralizer Theorem, the vector space $V^{\otimes n}$ is completely reducible under the action of $\mathbb{C}\mathfrak{S}_n$, and in particular 
\begin{equation}
	V^{\otimes n}\cong\bigoplus_{\mu\in \Par_n(\Dim)} \left(L^{\mu}\right)^{\oplus g_{\mu}}\,,
\end{equation}
where $L^{\mu}$ are pairwise inequivalent irreducible tensor representation of $\mathbb{C}\mathfrak{S}_n$ and $g_{\mu}$ is the multiplicity of $L^{\mu}$ in $V^{\otimes n}$. 
The Schur-Weyl duality between $\GL(\Dim,\mathbb{C})$ and $\mathfrak{S}_n$ leads to the following important facts (see appendix \ref{subsec:Double Centralizer Theorem} for more details):
\begin{itemize}
	\item The multiplicity of the irreducible representation $L^\mu$ of $\mathbb{C}\mathfrak{S}_n$ in $V^{\otimes n}$ is equal to the dimension of the irreducible representation $V^\mu$ of $\GL(\Dim,\C)$:
	\begin{equation}
		g_{\mu}=\text{dim}(V^{\mu})
	\end{equation}
	\item The multiplicity of the irreducible representation $V^\mu$ of $\GL(\Dim,\mathbb{C})$ in $V^{\otimes n}$ is equal to the dimension of the irreducible representation $L^\mu$ of $\mathbb{C}\mathfrak{S}_n$:
	\begin{equation}
		m_\mu=\dim(L^\mu)
	\end{equation}
	\item Their exist a unique isotypic decomposition of $V^{\otimes n}$ with respect to $\GL(\Dim)$:
	\begin{equation}\label{eq:isotypic_decomposition_Schur_Weyl}
		V^{\otimes n}=\bigoplus_{\mu\in \Par_n(\Dim)}(V^{\otimes n})\cdot Z^{\mu},\hspace{0.5cm} \text{such that}\hspace{0.5cm} (V^{\otimes n})\cdot Z^{\mu}\cong\left(V^{\mu}\right)^{\oplus m_\mu}\,,
	\end{equation}
	where $Z^\mu$ are the central idempotents (projectors) of $\mathbb{C}\mathfrak{S}_n$. Their construction is the subject of chapter \ref{chap:projectors_GL}. 
	\item An irreducible decomposition of $V^{\otimes n}$ with respect to $\GL(\Dim,\mathbb{C})$ is given by: 
	\begin{equation}\label{eq:GL_decomposition_proj}
		V^{\otimes n}=\bigoplus_{\begin{array}{c}
				{\scriptstyle \mu\in \Par_n(\Dim)}\\
				{\scriptstyle \tab\,\in \Tab(\mu)}
		\end{array}}(V^{\otimes n})\cdot  \varepsilon^{\,\tab}\,,\hspace{0.5cm} \text{such that} \hspace{0.5cm}	(V^{\otimes n})\cdot  \varepsilon^{\,\tab}\cong V^{\mu(\,\tab)}\,,
	\end{equation}
	 where $\mu(\,\tab)$ denote the Young diagram corresponding to the shape of the standard tableau $\,\tab$,\footnote{Equivalently $\mu(\,\tab)$ is the last Young diagram in the path $\,\tab$ of the Bratteli diagram associated with the group algebra $\C\sn$.} and $ \varepsilon^{\,\tab}\in \mathbb{C}\mathfrak{S}_n$ are \textit{primitive} idempotents in $\C\sn$. The primitive idempotents $\varepsilon^{\,\tab}$ are either the Young symmetrizer $\mathcal{Y}^{\,\tab}$ or the Young seminormal idempotents $Y^{\,\tab}$.

\end{itemize}

\begin{mathematicas}[\textit{SymmetricFunctions}]
	StandardTableaux[\,$\mu$\,]\,\\
	\textit{Returns the standard tableaux with shape $\mu$.}\bigskip
	
	StandardTableaux[\,$n$\,]\,\\
	\textit{Returns all the standard tableaux associated with the integer partitions of $n$.}\\
	
	TableauForm[\,$\tab$\,]\,\\
	\textit{Returns the standard tableau with head \textup{Graphics} associated with the input standard tableau $\tab$ with head \textup{List}. This function can also be applied to the more evolved combinatorial tableaux called the semistandard Young tableaux which are also implemented in the SymmetricFunctions package.}
\end{mathematicas}

\begin{mathematicas}[\textit{xBrauer}]
	GLIsotypicProject[$\textit{T}\,[inds]$, $\mu$]\footnote{This function is equivalent to the older function GLCentralIrreducibleProject[$\textit{T}\,[inds]$, $\mu$] which can still be used in the new version of the package.}\,\\
	\textit{Projects the tensor $T$ with indices $inds$ onto the isotypic component $(V^{\mu})^{\oplus m_{\mu}}$ of $V^{\otimes n}$with respect to $\GL$ action using the central Young idempotent $Z^\mu$ associated with the integer partition $\mu$.}\\
	
	GLIrreducibleProject[$\textit{T}\,[inds]$, Method $\rightarrow$ SemiNormalYoungUnit, $\tab$]\,\\
	\textit{Projects the tensor $T$ with indices $inds$ onto the irreducible representation $V^{\mu}$ of $V^{\otimes n}$ with respect to $\GL$ action using the Young seminormal idempotent $Y^\tab$ associated with the standard tableau $\tab$ whose shape correspond to the Young diagram $\mu$.}\\
	
	GLIrreducibleProject[$\textit{T}\,[inds]$, Method $\rightarrow$ YoungSymmetrizer, $\tab$]\,\\
	\textit{Projects the tensor $T$ with indices $inds$ onto the irreducible component $V^{\mu}$ of $V^{\otimes n}$ with respect to $\GL$ action using the Young symmetrizer $\mathcal{Y}^\tab$ associated with the standard tableau $\tab$ whose shape correspond to the Young diagram $\mu$.}
\end{mathematicas}


\paragraph{Orthogonality of the irreducible decomposition with respect to $\GL(\Dim,\C)$.} There is a canonical scalar product on $V^{\otimes n}$ which is induced by the metric: 
\begin{equation}\label{eq:scalar_product}
	\begin{aligned}
		\langle \, T \, ,\, S \,\rangle&=\tensor{{T}}{^{a_1\ldots a_n}}\tensor{{S}}{^{b_1\ldots b_n}} \langle \,\bm{e}_{a_1}\otimes\ldots\otimes\bm{e}_{a_n}\,,\,\bm{e}_{b_1}\otimes\ldots\otimes\bm{e}_{b_n}\,\rangle\,,\\
		&:=\tensor{{T}}{^{a_1\ldots a_n}}\tensor{{S}}{^{b_1\ldots b_n}}g(\bm{e}_{a_1},\bm{e}_{b_1})\ldots g(\bm{e}_{a_n},\bm{e}_{b_n})\,,\\
		&=\tensor{{T}}{^{a_1\ldots a_n}}\tensor{{S}}{_{a_1\ldots a_n}}\,.
	\end{aligned}
\hspace{0.5cm}\text{for all $T,S\in V^{\otimes n}$}
\end{equation}
Here we somehow precipitate the introduction of the  scalar product in the sense that we are dealing with the action of $\GL(\Dim,\C)$ on $V^{\otimes n}$. Nevertheless, we do so keeping in mind that the goal is the irreducible decomposition with respect to $\Or(\Dim,\C)$ which is the group preserving the metric and hence the scalar product \eqref{eq:scalar_product}.\medskip

For any $z\in \mathbb{C}\mathfrak{S}_n$, we define the adjoint of $\mathfrak{r}(z)$ by the relation  
\begin{equation}
\langle \, T \, ,\, (\mathfrak{r}(z))(S)\,\rangle=\langle \,(\mathfrak{r}(z))^{\dagger}(T)\, ,\, S\,\rangle\,, \hspace{1cm} \text{for all $T,S\in V^{\otimes n}$.}
\end{equation}
For any $s\in \mathfrak{S}_n$ one has $\tensor{{T}}{^{a_1\ldots a_n}}\tensor{{S}}{_{a_{s(1)}\ldots a_{s(n)}}}=\tensor{{T}}{^{a_{s^{\shortminus 1}(1)}\ldots a_{s^{\shortminus 1}(n)}}}\tensor{{S}}{_{a_1\ldots a_n}}$, hence
\begin{equation}
\langle \, T \, ,\, S\cdot s \,\rangle=\langle \, T \cdot s^{*}\, ,\, S \,\rangle\,.
\end{equation}
where we recall that $(\cdot)^{*}$ is the flip operation on permutation diagrams. Then one has $(\mathfrak{r}(s))^{\dagger}=\mathfrak{r}(s^{*})$ which by linearity yields
\begin{equation}
(\mathfrak{r}(z))^{\dagger}=\mathfrak{r}(z^{*}),\hspace{1cm} \text{for all $z\in \C\sn$,} 
\end{equation}

\begin{definition}\label{def:orthogonal_decomposition}
An irreducible decomposition of $V^{\otimes n}$ with respect to $\GL(\Dim,\C)$
\begin{equation}\label{eq:GL_decomposition_proj2}
	V^{\otimes n}=\bigoplus_{\begin{array}{c}
			{\scriptstyle \mu\in \Par_n(\Dim)}\\
			{\scriptstyle \tab\,\in \Tab(\mu)}
	\end{array}}(V^{\otimes n})\cdot  \varepsilon^{\,\tab}\,,\hspace{0.5cm} \text{such that} \hspace{0.5cm}	(V^{\otimes n})\cdot  \varepsilon^{\,\tab}\cong V^{\mu(\,\tab)}\,,
\end{equation}
is said to be \textit{orthogonal} if it is orthogonal with respect to the scalar product \eqref{eq:scalar_product}. 
\end{definition}
The following lemma is a direct consequence of elementary results of linear algebra which is described by Lemma \ref{lem:orthogonal_direct_sum_decomposition} whose proof is given in the appendix \ref{subsec:definitions}.\medskip
\begin{lemma}\hphantom{a}\smallskip
\begin{itemize}
\item[\textit{i})] If the primitive idempotents $\lbrace\varepsilon^{\,\tab} \st \tab\,\in \Tab(\mu) \hspace{0.3cm}\text{with}\hspace{0.3cm} \mu\in \Par_n(\Dim) \rbrace$ realize an orthogonal decomposition of $V^{\otimes n}$ then their images in $\End(V^{\otimes n})$ form a partition of unity of $\End(V^{\otimes n})$.
\item[\textit{ii})] The primitive idempotents $\lbrace\varepsilon^{\,\tab} \st \tab\,\in \Tab(\mu) \hspace{0.3cm}\text{with}\hspace{0.3cm} \mu\in \Par_n(\Dim) \rbrace$ realize an orthogonal decomposition of $V^{\otimes n}$ if and only if their image in $\End(V^{\otimes n})$ are self-adjoint.
\end{itemize}
\end{lemma}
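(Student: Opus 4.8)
The plan is to push everything through the representation homomorphism $\mathfrak{r}$ and to reduce the statement to the elementary Lemma~\ref{lem:orthogonal_direct_sum_decomposition} on direct-sum decompositions of an inner-product space. First I set $P_{\tab}:=\mathfrak{r}(\varepsilon^{\,\tab})\in\End(V^{\otimes n})$. Since $\mathfrak{r}$ is an algebra homomorphism and each $\varepsilon^{\,\tab}$ is idempotent, every $P_{\tab}$ is an idempotent operator with image the irreducible submodule $W_{\tab}:=(V^{\otimes n})\cdot\varepsilon^{\,\tab}\cong V^{\mu(\tab)}$. That the family $\{\varepsilon^{\,\tab}\}$ realizes the decomposition \eqref{eq:GL_decomposition_proj2} means exactly that $V^{\otimes n}=\bigoplus_{\tab}W_{\tab}$ is an internal direct sum with the $P_{\tab}$ as its associated projections, so that $\operatorname{Im}P_{\tab}=W_{\tab}$ and $\ker P_{\tab}=\bigoplus_{\tab'\neq\tab}W_{\tab'}$. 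The single input from the scalar product is the adjoint identity $(\mathfrak{r}(z))^{\dagger}=\mathfrak{r}(z^{*})$ proved above, which gives $P_{\tab}^{\dagger}=\mathfrak{r}\big((\varepsilon^{\,\tab})^{*}\big)$ and, in particular, turns the condition ``$P_{\tab}$ is self-adjoint'' into the flip-invariance $(\varepsilon^{\,\tab})^{*}=\varepsilon^{\,\tab}$ whenever $\mathfrak{r}$ is injective.

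I would prove (ii) first, as the equivalence from which (i) follows. A self-adjoint idempotent is the same thing as an orthogonal projector, i.e. $P_{\tab}$ is self-adjoint if and only if $\ker P_{\tab}=(\operatorname{Im}P_{\tab})^{\perp}=W_{\tab}^{\perp}$. Using $\ker P_{\tab}=\bigoplus_{\tab'\neq\tab}W_{\tab'}$, this reads $\bigoplus_{\tab'\neq\tab}W_{\tab'}=W_{\tab}^{\perp}$ for every $\tab$. On the other hand, the decomposition is orthogonal, by definition, precisely when $W_{\tab}\perp W_{\tab'}$ for all $\tab\neq\tab'$, i.e. $\bigoplus_{\tab'\neq\tab}W_{\tab'}\subseteq W_{\tab}^{\perp}$; since $\dim\!\big(\bigoplus_{\tab'\neq\tab}W_{\tab'}\big)=\dim V^{\otimes n}-\dim W_{\tab}=\dim W_{\tab}^{\perp}$, this inclusion is forced to be an equality. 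Hence ``orthogonal decomposition'' and ``every $P_{\tab}$ self-adjoint'' are the same condition, which is exactly Lemma~\ref{lem:orthogonal_direct_sum_decomposition} transported to $V^{\otimes n}$.

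For (i), the operators $P_{\tab}$ are the projections of the internal direct sum $V^{\otimes n}=\bigoplus_{\tab}W_{\tab}$, so quite generally they satisfy $\sum_{\tab}P_{\tab}=\id$ and $P_{\tab}P_{\tab'}=\delta_{\tab\tab'}P_{\tab}$; that is, their images form a partition of unity of $\End(V^{\otimes n})$. When the decomposition is moreover orthogonal, part (ii) guarantees that each $P_{\tab}$ is additionally self-adjoint, so the partition of unity consists of genuine orthogonal projectors.

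The routine steps are the idempotency of the $P_{\tab}$ and the dimension count. The point that needs care is the identification $\ker P_{\tab}=\bigoplus_{\tab'\neq\tab}W_{\tab'}$: an idempotent with prescribed image only carries \emph{some} complement as its kernel, and it is the hypothesis that the whole family resolves $V^{\otimes n}$ into the direct sum (equivalently, that the $\varepsilon^{\,\tab}$ form a partition of unity in $\C\sn$, as the seminormal idempotents do but the Young symmetrizers do not for $n>4$) that pins this complement down; orthogonality then upgrades the oblique projectors to self-adjoint ones. One should also remember that $\mathfrak{r}$ need not be injective when $\Dim<n$ (Remark~\ref{rem:injectivity_sn}), so the argument must be run with the operators $P_{\tab}$ and their images in $\End(V^{\otimes n})$ rather than with the algebra elements $\varepsilon^{\,\tab}$ themselves.
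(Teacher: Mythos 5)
Your reduction is the same as the paper's: the paper proves this lemma by citing the elementary Lemma~\ref{lem:orthogonal_direct_sum_decomposition} (together with the adjoint identity $(\mathfrak{r}(z))^{\dagger}=\mathfrak{r}(z^{*})$), and your argument is a worked-out version of exactly that reduction --- self-adjoint idempotent equals orthogonal projector, the dimension count, and the partition-of-unity property of the projections attached to a direct sum. Your insistence on working with the operators $\mathfrak{r}(\varepsilon^{\,\tab})$ rather than the algebra elements when $\Dim<n$ is also correct.

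The flaw sits in the one step you yourself single out. To pin down $\ker P_{\tab}=\bigoplus_{\tab'\neq\tab}W_{\tab'}$ you assert that ``the family resolves $V^{\otimes n}$ into the direct sum'' is \emph{equivalent} to ``the $\varepsilon^{\,\tab}$ form a partition of unity in $\C\sn$''. Only the implication from partition of unity to direct sum holds; the converse is false, and your own example disproves it: for $n\geqslant 5$ the Young symmetrizers still give the classical direct-sum decomposition $V^{\otimes n}=\bigoplus_{\tab}\,(V^{\otimes n})\cdot\mathcal{Y}^{\,\tab}$, yet they are not a partition of unity --- which is exactly why \eqref{eq:Irreducible_decomposition_GL} fails for them, as the paper's remark states. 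Consequently, if ``realize a decomposition'' is read purely as the subspace statement of Definition~\ref{def:orthogonal_decomposition}, the kernel of an idempotent with image $W_{\tab}$ is not determined: inside an isotypic component of multiplicity $\geqslant 2$ one can build oblique equivariant idempotents (hence elements of $\mathfrak{r}(\C\sn)$, by the double centralizer theorem) whose images are pairwise orthogonal and direct-sum to the component, while neither the partition-of-unity property nor self-adjointness holds; under that reading both (i) and the forward direction of (ii) would actually fail. What the lemma --- and your proof --- needs is the stronger reading the paper adopts implicitly by invoking Lemma~\ref{lem:orthogonal_direct_sum_decomposition}, whose $P_k$ are by definition \emph{the} projections attached to the direct sum: the hypothesis must be that the $\varepsilon^{\,\tab}$ act as those canonical projections, i.e.\ form a partition of unity. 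Take that as the meaning of ``realize'' instead of trying to derive it from the direct sum alone, and the rest of your argument is precisely the paper's proof.
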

\vskip 6pt 

For $\Dim\geqslant n$\footnote{In this case the homomorphism $\mathfrak{r}$ is injective.} an element of $\C\sn$ is flip invariant if and only if its image in $\End(V^{\otimes n})$ is a self-adjoint operator. 
\begin{remark}
For $\Dim < n $ if an element of $\C\sn$ is flip invariant then its image in $\End(V^{\otimes n})$ is a self-adjoint operator. One can always add non-flip invariant elements of the kernel of $\mathfrak{r}$ to a flip invariant element of $\sn$ such that the equivalence does not hold anymore.
\end{remark}


Unfortunately, the Young symmetrizers $\mathcal{Y}^{\,\tab}$ are neither pairwise orthogonal (for $n>4$)~\cite{stembridge2011orthogonal} nor flip invariant for ($n>2$). Hence, they do not realize on orthogonal irreducible decomposition of $V^{\otimes n}$. On the contrary, the Young seminormal idempotents $Y^{\,\tab}$ enjoy these two properties for any $n$ (see Lemma \eqref{lem:flip_inv_SN}), and as result they realize an orthogonal irreducible decomposition of $V^{\otimes n}$.
%
%
%

\paragraph{Non-uniqueness of an orthogonal irreducible decomposition.} The Young seminormal idempotents $Y^{\,\tab}$ as constructed via the relation \eqref{eq:Young_seminormal_idempotents} may be seen as a standard choice for a complete set of primitive flip invariant pairwise orthogonal idempotents in $\mathbb{C}\mathfrak{S}_n$. But, let us stress that this choice is not unique. To see that, let us recall that the central Young idempotent $Z^\mu$ is the sum of Young seminormal idempotents $Y^{\,\tab}$ where $\tab$ are a standard tableau of shape $\mu$\footnote{This relation does not hold if one replaces the Young seminormal idempotents $Y^{\,\tab}$ by Young symmetrizers $\mathcal{Y}^{\,\tab}$.}:
	\begin{equation}\label{eq:central_decomp}
		Z^{\mu}=\displaystyle{\sum_{\tab \,\in \,\textup{Tab}({\mu})}}\, Y^{\,\tab}\,.
	\end{equation}
In view of this relation, one has the following freedom for a complete set of primitive pairwise orthogonal idempotent in $\C\sn$ (and hence for the irreducible decomposition with respect to $\GL(\Dim,\C)$):
\begin{equation}\label{eq:freedom_seminormal}
	Z^{\mu}=z\,Z^{\mu}\,z^{\shortminus1}=\displaystyle{\sum_{\,\tab\in \text{Tab}({\mu})}}\, z\, Y^{\,\tab} \,z^{\shortminus1}\,,
\end{equation}
where $z$ is any invertible element of $\mathbb{C}\mathfrak{S}_n$. In order to preserve flip invariance one must have $z^{\shortminus 1}=z^{*}$ (which is true for any $z\in \mathfrak{S}_n$).\medskip


\subsection{Irreducible decompositions of tensors with respect to $\Or(\Dim,\mathbb{C})$.}\label{subsection:Schur_Weyl_O}

\paragraph{Action of $\Or(\Dim,\mathbb{C})$ on tensors.} We consider here the subgroup $\Or(\Dim,\mathbb{C})\subset \GL(\Dim,\mathbb{C})$ of transformations preserving the form of the metric $g_{ab}$. 
More precisely, an invertible transformation $R \in \Or(\Dim,\mathbb{C})$ is such that
\begin{equation}\label{eq:OrthogonalGroup}
	g(R(\bm{e}_a),R(\bm{e}_b))=g(\bm{e}_a,\bm{e}_b)=g_{ab}.
\end{equation} 
The action of $\Or(\Dim,\mathbb{C})$ on $V^{\otimes n}$ is as in \eqref{eq:GLaction}. The property \eqref{eq:OrthogonalGroup} implies that the vector space of tensors generated by the metric $g$ is a trivial representation of $\Or(\Dim,\mathbb{C})$, that is all elements of $\Or(\Dim,\mathbb{C})$ act as the identity on $g$. 
\paragraph{(f+1)-traceless tensors and the trace decomposition problem.} For any tensor $T\in V^{\otimes n}$ we introduce the trace operation $\text{tr}_{ij}\, : \,  V^{\otimes n} \to V^{\otimes n-2}$ define on the components of $T$ as: 
\begin{equation}
	\begin{aligned}
		\text{tr}_{ij}(T^{a_1\ldots a_n})&=g_{a_ia_j}T^{a_1\ldots a_{i}\ldots a_{j}\ldots a_n}\,,\\
		&=\tensor{T}{^{a_1\ldots \, c \,\ldots\,}_{c \,}^{\ldots a_n}}\,.
	\end{aligned}
\end{equation}

A $1$-traceless tensor $T$ (or simply traceless tensor) is such that  $\text{tr}_{ij}(T^{a_1\ldots a_n})=0$ for all pairs $i<j\leqslant n$.
In addition, we define the 2-traceless tensors as the elements of $V^{\otimes n}$  which are such that $\text{tr}_{kl}(\text{tr}_{ij}(T^{a_1\ldots a_n}))=0$  for all pairs $1\leqslant i<j\leqslant n$ and $1\leqslant k<l\leqslant n$. The $(f+1)$-traceless tensors are defined analogously.\medskip


The tensor power of metrics $g^{\otimes f}$ is a trivial representation of $\Or(\Dim,\mathbb{C})$, and as a consequence, the subspace $g^{\otimes f} \otimes V^{\otimes n-2f}\subset V^{\otimes n}$ is an invariant subspace under the action of $\Or(\Dim,\mathbb{C})$. Further invariant-subspace in $g^{\otimes f} \otimes V^{\otimes n-2f}$ are the irreducible representation of $V^{\otimes n-2f}$ which are either traceless subspace or proportional to $g$. This argument allows one to identify an important class of $(f+1)$-traceless tensors: the one which have at least $f$ factors of the metric. In particular, their exists a \textit{unique} reducible decomposition \cite[See the discussion below Theo. 5.6]{weyl1946classical} (see also \cite[Chapter $10$]{Hamermesh:100343}): 
\begin{equation}\label{eq:trace_decomposition}
	V^{\otimes n}=\bigoplus_{f=0,\ldots,\lfloor \frac{n}{2}\rfloor-1}\left( D^{(f)}\right) \oplus \overline{D} \,,
\end{equation}
where $\lfloor x \rfloor$ denotes the integer part of $x$ and the full-trace space $\overline{D}$ is either as the space of scalar invariant when $n$ is even or the space of vectors when $n$ is odd. In this unique decomposition the $(f+1)$-traceless subspace $D^{(f)}$ are constituted of tensors of order $n$ whose components are a linear combinations of products of $f$ metrics with traceless tensors of order $n-2f$. For example, any $2$-traceless tensor $T^{(1)}\in D^{(1)}\subset V^{\otimes n}$ has the form 
\begin{equation}
\begin{aligned}
\hspace{-0.2cm}\tensor{T}{^{(1)}^{\,a_1\ldots a_n}}=\tensor{g}{^{a_1a_2}}\tensor{T{^{(a_1,a_2)}}}{^{\cancel{a}_1\cancel{a}_2a_3\ldots a_n}}+\ldots&+\tensor{g}{^{a_ia_j}}\tensor{T{^{(a_i,a_j)}}}{^{a_1\dots \cancel{a}_i\ldots \cancel{a}_j \ldots a_n}}\\[6pt]
&+\ldots +\tensor{g}{^{a_{n-1}a_{n}}}\tensor{T{^{(a_{n-1},a_{n})}}}{^{a_1\ldots \cancel{a}_{n-1}\cancel{a}_n}},
\end{aligned}
\end{equation}
where each tensor $T{^{(a_i,a_j)}}$ are traceless tensors of order $n-2$ and $\cancel{a}_i$ means that the index $a_i$ is missing. Schematically, any tensor $T^{(f)}\in D^{(f)}$ as the following structure 
\begin{equation}
	g^{\otimes f} \otimes T_{n-2f}\,,
\end{equation}
where $T_{n-2f}$ is a traceless tensor.\medskip

The decomposition of a tensor according to \eqref{eq:trace_decomposition} is sometimes referred to as the trace decomposition problem \cite{Mikes2011}. Note that the projectors \eqref{eq:non_inductive_traceless_projectors_0}-\eqref{eq:main_res_f_traceless_central_idempotent} constructed within the Brauer algebra in chapter \ref{chap:projectors_O} yield a systematic solution \eqref{eq:proj_trace_decomposition} to this problem.\footnote{Similar construction within the Walled-Brauer algebra \cite{bulgakova:tel-02554375} should solve the analogous problem \cite{krupka_trace_decomposition2006,krupka1995trace} in the absence of a metric.}\medskip

\begin{mathematica}[\textit{xBrauer}]
	MetricOfTensor[$\textit{T}$ ]\,\\
	\textit{Returns the metric (say $g$) associated with the tensor $T$. When $T$ is assigned to a metric $g$ its indices are then raised and lowered with $g$.}\\
	\textit{One can assign a metric $g$ to a tensor $T$ at definition time of $T$ using the option}\\ 
	\textit{\textup{Master} $\rightarrow$ $g$ in the command \textup{DefTensor} of the $\textit{xTensor}$ package.}\\
	\textit{One can assign a metric $g$ to a tensor $T$ after definition with the command}\\
	 \textup{MetricOfTensor[$\textit{T}$ ]}$\,\,\widehat{}= g$.
\end{mathematica}

\begin{mathematicas}[\textit{xBrauer}]
	TracelessProject[$\textit{T}\,[inds]$, $f+1$, $g$]\,\\
	\textit{Projects the tensor $\textit{T}$ with indices $inds$ onto the space of  $(f+1)$-traceless tensors $D^{f}$ using metric $g$. If \textup{MetricOfTensor[$\textit{T}$ ]} is non-empty the input $g$ is optional.}\\
	
	TraceProject[$\textit{T}\,[inds]$, $g$]\,\\
	\textit{Projects the tensor $\textit{T}$ with indices $inds$ onto the space of  full trace tensors $D^{\emptyset}$ or $D^{(1)}$ depending on the order of $T$. If \textup{MetricOfTensor[$\textit{T}$ ]} is non-empty the input $g$ is optional.}
\end{mathematicas}

\paragraph{Irreducible representation of $\Or(\Dim,\mathbb{C})$.} Let $\Lambda_n(\Dim)$ denote the following set of Young diagrams: 
\begin{equation}\label{eq:Lambda_d}
	\Lambda_n(\Dim)=\lbrace \lambda\vdash n-2f \,,  \hspace{0.3cm} \textit{with} \hspace{0.3cm}  f=0,\ldots, \lfloor \tfrac{n}{2}\rfloor \hspace{0.2cm} \st \hspace{0.2cm} \lambda_1^{\prime}+\lambda_2^{\prime}\leqslant\Dim\, \rbrace\,,
\end{equation}
where $\lambda_1^{\prime}+\lambda_2^{\prime}\leqslant\Dim$ is a constraint on the number of rows in the first two columns of the diagrams $\lambda$.\medskip

Due to classical result of Weyl \cite[Theo 5.7. A, Theo 5.7. C]{Weyl}, the vector space $V^{\otimes n}$ is completely reducible under the action of $\Or(\Dim,\mathbb{C})$ and in particular:
\begin{equation}\label{eq:O_decomposition}
	V^{\otimes n}\cong\bigoplus_{\lambda \in \Lambda_n(\Dim)} \left(D^{\lambda}\right)^{\oplus m_\lambda}\,,
\end{equation}
where $D^{\lambda}$ are pairwise inequivalent irreducible tensor representation of $\Or(\Dim,\mathbb{C})$ and $m_{\lambda}$ is the multiplicity of $D^{\lambda}$ in $V^{\otimes n}$. The $(f+1)$-traceless representations $D^{(f)}$ and full-trace representation $\overline{D}$ introduced above decompose into irreducible representations as \cite[Theo. 5.7.F, Theo. 5.7.G]{Weyl}
\begin{equation}\label{eq:irreducible_trace_decomposition}
	D^{(f)}\cong\bigoplus_{\begin{array}{c}
			{\scriptstyle \lambda \vdash n-2f}\\
			{\scriptstyle\lambda_1^{\prime}+\lambda_2^{\prime}\leqslant\Dim}
	\end{array}} \left(D^{\lambda}\right)^{\oplus m_\lambda}\,,\hspace{2cm}
	\overline{D}\cong\left\{\begin{array}{ll}
	(D^\emptyset)^{\oplus m_\emptyset}\, \hspace{1cm} &\text{for $n$ even,}\\
	(D^{\Yboxdim{3pt}\yng(1)})^{\oplus m_{\Yboxdim{3pt}\yng(1)}}\, \hspace{1cm} &\text{otherwise,}
	\end{array}\right.
\end{equation}
and hence the irreducible tensors $T\in D^{\lambda}\subset V^{\otimes n}$ with $|\lambda|=n-2f$ are $(f+1)$-traceless tensors. Schematically, any tensor $T\in D^{\lambda}\subset V^{\otimes n}$ with $f_\lambda=\frac{n-|\lambda|}{2}$ has the following structure 
\begin{equation}
g^{\otimes f_\lambda} \otimes T^{(\lambda)}_{n-2f_\lambda}\,
\end{equation}
where $T^{(\lambda)}_{n-2f_\lambda}\in D^{\lambda}\subset V^{\lambda} $ is a traceless irreducible tensor of order $n-2f_\lambda$.
\medskip

To perform the decomposition \eqref{eq:O_decomposition} explicitly and in a systematic manner, one needs to identify the centralizer algebra  of the action of the orthogonal group on $V^{\otimes n}$. This was done in $1937$ by R. Brauer \cite{Brauer_1937} who introduced a diagrammatic algebra which is now known as the Brauer algebra $B_n(\Dim)$. 

\paragraph{Brauer diagrams.}The Brauer algebra $B_n(\Dim)$, is a natural extension of the symmetric group algebra. The vector basis of $B_n(\Dim)$, denoted $\dbn$, are diagrams with $2$ horizontal rows of $n$ vertices, where all vertices are pairwise connected. Recall, that permutation diagrams have each vertex in the top row connected to a vertex in the bottom row and as such are basis vectors of $B_n(\Dim)$ and $\mathbb{C}\mathfrak{S}_n\subset B_n(\Dim)$. The diagrams in $B_n(\Dim)$ are more general and allow for the connection of two vertices in same row. For example, in addition to the permutations \eqref{eq:perms3} the diagrams of $B_3(\Dim)$ are 
\begin{equation*}
	\raisebox{-.4\height}{\includegraphics[scale=0.5]{fig/d3a.pdf}}\,, \hspace{0.5cm} \raisebox{-.4\height}{\includegraphics[scale=0.5]{fig/d3b.pdf}}\,, \hspace{0.5cm} \raisebox{-.4\height}{\includegraphics[scale=0.5]{fig/d3c.pdf}}\,, \hspace{0.5cm} \raisebox{-.4\height}{\includegraphics[scale=0.5]{fig/d3d.pdf}}\,, \hspace{0.5cm} \raisebox{-.4\height}{\includegraphics[scale=0.5]{fig/d3e.pdf}}\,, \hspace{0.5cm} \raisebox{-.4\height}{\includegraphics[scale=0.5]{fig/d3f.pdf}}\,,\hspace{0.5cm}
	\raisebox{-.4\height}{\includegraphics[scale=0.5]{fig/d3g.pdf}}\,,\hspace{0.5cm}  \raisebox{-.4\height}{\includegraphics[scale=0.5]{fig/d3h.pdf}}\,, \hspace{0.5cm} \raisebox{-.4\height}{\includegraphics[scale=0.5]{fig/d3i.pdf}}\,.		
\end{equation*}

The multiplication between two diagrams $b_1$ and $b_2$ in $\bn(\Dim)$ is defined by placing $b_1$ below $b_2$ and unfolding the path of the resulting lines, with the additional rule stating that when $l$ loops appear in the superposition of $b_1$ below $b_2$ the resulting diagrams gets an additional factor $\Dim^l$. For example,
with $b_1=\raisebox{-.4\height}{\includegraphics[scale=0.5]{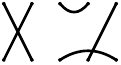}}$ and $b_2=\raisebox{-.4\height}{\includegraphics[scale=0.5]{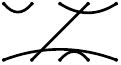}}$ one has: 
\begin{equation}\label{eq:multiplication_Bn}
	b_1\, b_2=\begin{aligned}
		&\raisebox{-.4\height}{\includegraphics[scale=0.9]{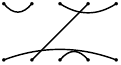}}\\[-8pt]
		&{\vspace{0.4cm}\raisebox{-.4\height}{\includegraphics[scale=0.9]{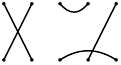}}}
	\end{aligned}=\Dim \,\, \raisebox{-.4\height}{\includegraphics[scale=0.9]{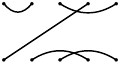}}\,.
\end{equation}
\newpage
\begin{remark}
	The flip operation $(\cdot)^{*}$ defined on $\C\sn$ in the previous section also applies to any element of $\bn(\Dim)$.
\end{remark}

\begin{mathematicas}[\textit{BrauerAlgebra}]\label{math:brauerdiagram}
	BrauerList\,\\
	\textit{\textup{BrauerList} is the head for the basis element of $\bn$. Usage: \textup{BrauerList[$b$]} represents a basis element of $\bn$ where $b$ is a list. The formatting of the input list $b$ is presented in section \ref{sec:parametrization_cn}}.\\
	
	BrauerDiagram[$b$]\,\\
	\textit{Returns the diagram associated with the \textup{BrauerList} $b$. The input $b$ can also be a linear combination of objects with heads \textup{BrauerList}.}\\
	
	BrauerElements[n]\,\\
	\textit{Returns a list of all basis element of $\bn$ with head \textup{BrauerList}.}\\
	
	BrauerProduct[$b_1$, $b_2$]\,\\
	\textit{Returns the product of the diagram $b_1$ with $b_2$, where $b_1$ and $b_2$ have head \textup{BrauerList}. The inputs $b_1$ and $b_2$ can also be linear combinations of objects with heads \textup{BrauerList}.}\\
	
	Flip[$b$]\,\\
	\textit{Flip the Brauer diagram $b$ with head \textup{BrauerList} with respect to the middle horizontal line. The input $b$ can also be a linear combination of objects with heads \textup{BrauerList}.}
\end{mathematicas}

\paragraph{Action of $B_n(\Dim)$ on tensors.}The \textit{right} action $T\cdot b$ of any diagram $b \in B_n(\Dim)$ on a tensor $T=T^{a_1a_2\ldots a_n}\bm{e}_{a_1}\otimes\bm{e}_{a_2}\otimes\ldots\otimes\bm{e}_{a_n}\in V^{\otimes n}$ can be described by the following set of instructions:
\begin{enumerate}
	\item  Write $T$ on the left below the diagram and place the indices $a_1$, $a_2$, ..., $a_n$ on the top row of the diagram.  
	\item  Permute the indices according to the lines that connect the top row to the bottom row.
	\item  Contract pairs of indices joined by arcs in the bottom row.
	\item  Insert $g^{a_ia_j}$ for each arc connecting two vertices ($i,j$) in the top row.
\end{enumerate}
For example, for the diagrams $b_1, b_2$ introduced above  
\vspace{0.5cm}
\begin{equation}
	\begin{aligned}
	&\left(T \cdot b_1\right)^{a_1a_2a_3a_4a_5}=T\hspace{-0.15cm}\begin{matrix}
	\vspace{-1.3cm}\\
	\Scale[0.8]{{\scriptstyle a_1a_2a_3a_4a_5}}\\
	\raisebox{0.3\height}{\includegraphics[scale=0.4]{fig/d1.pdf}}
\end{matrix}=\tensor{g}{^{a_3a_4}}\tensor{T}{^{a_2a_1ba_5}_b}\,,\\[15pt]
&\left(T \cdot b_2\right)^{a_1a_2a_3a_4a_5}=T\hspace{-0.15cm}\begin{matrix}
	\vspace{-1.3cm}\\
	\Scale[0.8]{{\scriptstyle a_1a_2a_3a_4a_5}}\\
	\raisebox{0.3\height}{\includegraphics[scale=0.4]{fig/d2.pdf}}
\end{matrix}=\tensor{g}{^{a_1a_2}}\tensor{g}{^{a_3a_5}}\tensor{T}{^{ba_4c}_{\,cb}}\,.
	\end{aligned}
\end{equation}
To any diagram $b\in B_n(\Dim) $ we identify an operator $\mathfrak{r}(b)=\mathfrak{b}\in \text{End}(V^{\otimes n})$ such that $\mathfrak{b}(T)= T\cdot b$, whose components are given by: 
\begin{equation}\label{eq:operator_bn}
	\tensor{\mathfrak{b}}{_{b_1\ldots b_n}^{a_1\ldots a_n}}= \hspace{-0.5cm}\prod_{\Scale[0.7]{\begin{array}{c}
				{\scriptstyle i \text{ in the top row}}\\[-4pt]
				{\scriptstyle \text{connected to } j \text{ in to bottom row} }
	\end{array}}} \tensor{\delta}{^{a_i}_{b_j}}\,  \prod_{\Scale[0.7]{\begin{array}{c}
				{\scriptstyle i \text{ in the top row}}\\[-4pt]
				{\scriptstyle \text{connected to } j \text{ in to top row} }
	\end{array}}} \tensor{g}{^{a_ia_j}}\, \prod_{\Scale[0.7]{\begin{array}{c}
				{\scriptstyle i \text{ in the bottom row}}\\[-4pt]
				{\scriptstyle \text{connected to } j \text{ in to bottom row} }
	\end{array}}} \tensor{g}{_{b_i}_{b_j}}\,,
\end{equation}
and 
\begin{equation}
(T\cdot b)^{a_1\ldots a_n}=\tensor{\mathfrak{b}}{_{b_1\ldots b_n}^{a_1\ldots a_n}}\, T^{b_1\ldots b_n}.
\end{equation}
The action of a diagram $b$ on $V^{\otimes n}$ is extended to the whole algebra $\bn$ by linearity.
\begin{mathematicas}[\textit{xBrauer}]
	TracePermuteIndices[$T[inds]$, $b$]\,\\
	\textit{Returns the image of the tensor $T$ with indices $inds$ under the action of the Brauer diagram $b$ with head \textup{BrauerList}. The input $b$ can also be a linear combination of elements with head \textup{BrauerList}.}\\
	
	BrauerToTensor[$b$, $inds$]\,\\
	\textit{Returns the image in $\End(V^{\otimes n})$ of the Brauer diagram $b\in \bn$ with head \textup{BrauerList}. The input $inds$ is an optional list of $2n$ indices.}
\end{mathematicas}

\begin{remark}\label{rem:injectivity_bn}
The homomorphism $\mathfrak{r}$ is injective if and only if $\Dim\geqslant n $ \cite{Brauer_1937}.
\end{remark}
\paragraph{Schur-Weyl duality.} As a consequence of the Double Centralizer Theorem, $V^{\otimes n}$ is completely reducible with respect to the action of $B_n$, and in particular 
\begin{equation}\label{eq:decomposition_V_bn}
	V^{\otimes n}\cong \bigoplus_{\lambda\vdash \Lambda_n(\Dim)} \left(M^{\lambda}_n\right)^{\oplus g_{\lambda}}\,,
\end{equation}
where $M^{\lambda}_n$ are pairwise inequivalent irreducible tensor representation of $B_n(\Dim)$ and $g_{\lambda}$ is the multiplicity of $M^{\lambda}_n$ in $V^{\otimes n}$. The Schur-Weyl duality for $\Or(\Dim,\mathbb{C})$ and $B_n(\Dim)$ leads to the following important facts:\\
\begin{itemize}
	\item The multiplicity of the irreducible representation $M^\lambda_n$ of $\bn(\Dim)$ in $V^{\otimes n}$ is equal to the dimension of the irreducible representation $D^\lambda$ of $\Or(\Dim,\mathbb{C})$:
	\begin{equation}
		g_{\lambda}=\text{dim}(D^{\lambda})
	\end{equation}
	\item The multiplicity of the irreducible representation $D^\lambda$ of $\Or(\Dim,\mathbb{C})$ in $V^{\otimes n}$ is equal to the dimension of the irreducible representation $M^\lambda_n$ of $B_n(\Dim)$: 
	\begin{equation}
		m_\lambda=\dim(M^\lambda_n)
	\end{equation}
	\item The \textit{unique} isotypic decomposition of $V^{\otimes n}$ with respect to $\Or(\Dim,\mathbb{C})$ is given by
	\begin{equation}\label{eq:central_decomposition_O}
		V^{\otimes n}=\bigoplus_{\lambda\in \Lambda_n(\Dim)}(V^{\otimes n})\cdot P^{\lambda}_n
	\end{equation}
	where each  $P^{\lambda}_n\in B_n(\Dim)$ is such that 
	\begin{equation}
	(V^{\otimes n})\cdot P^{\lambda}_n\cong\left(D^{\lambda}\right)^{\oplus m_\lambda}\,, \hspace{0.5cm} \text{and} \hspace{0.5cm} (V^{\otimes n})\cdot (P^{\lambda}_n P^{\lambda}_n)=(V^{\otimes n})\cdot P^{\lambda}_n\,.
	\end{equation}
	The construction of these elements is the subject of chapter \ref{chap:projectors_O} (see \eqref{eq:non_inductive_traceless_projectors_0} and \eqref{eq:main_res_f_traceless_central_idempotent} for the formulas). 
	\item An irreducible $\Or(\Dim,\mathbb{C})$ decomposition of $V^{\otimes n}$ can be obtained by the action of certain elements $P^{\tab}_n\in \bn(\Dim)$:\footnote{See the discussion around equation \eqref{eq:irreducible_projector_O} at the end of chapter \ref{chap:projectors_O}.}
	\begin{equation}\label{eq:irreducible_decomposition_O}
		V^{\otimes n}=\bigoplus_{\begin{array}{c}
				{\scriptstyle \tab\in \text{Path}(n,\Dim)}
		\end{array}}(V^{\otimes n})\cdot  P^{\,\tab}_n\,, \hspace{0.5cm} \text{such that}
	\end{equation}
	\begin{equation}
		(V^{\otimes n})\cdot P^{\,\tab}_n\cong D^{\lambda(\,\tab)}\,, \hspace{0.5cm} \text{and} \hspace{0.5cm} (V^{\otimes n})\cdot ( P^{\,\tab}_n P^{\,\tab}_n)=(V^{\otimes n})\cdot P^{\,\tab}_n\,,
	\end{equation}
	 where $\text{Path}(n,\Dim)$ is the set of \textit{permissible}\footnote{A path is permissible if all its Young diagrams are in $\Lambda_n(\Dim)$ \cite{Nazarov}.} paths (also referred to as \textit{up-down tableaux} in \cite{doty2019canonical} or \textit{oscilLating tableaux} in \cite{Isaev_2020}) in the Bratteli diagram associated with $\bn(\Dim)$, and $\lambda(\,\tab)$  is the last Young diagram in the path $\,\tab$ of the Bratteli diagram associated with $\bn(\Dim)$. For the definition of Bratteli diagram see Definition \ref{def:bratteli_diagram}.

\end{itemize}

\begin{mathematicas}[\textit{xBrauer}]
	OIsotypicProject[$\textit{T}\,[inds]$, $\lambda$]\footnote{This function is equivalent to the older function OCentralIrreducibleProject[$\textit{T}\,[inds]$, $\lambda$] which can still be used in the new version of the package.}\,\\
	\textit{Projects the tensor $T$ with indices $inds$ onto the isotypic component $(D^{\lambda})^{\oplus m_{\lambda}}$ of $V^{\otimes n}$ with respect to the (pseudo)-orthogonal group action using elements $P^\lambda_n\in \bn$ where $\lambda$ is an integer partition of $n-2f$.}\\
	
	OIrreducibleProject[$\textit{T}\,[inds]$, $\tab$]\,\\
	\textit{Projects the tensor $T$ with indices $inds$ onto the irreducible representation $D^{\lambda}$ of $V^{\otimes n}$ with respect to the (pseudo)-orthogonal group action using the element $P^{\,\tab}_n\in \bn$  where $\tab$ is a path ending by the Young diagram $\lambda$ at level $n$ in the Bratteli diagram associated with the algebra $\bn$.}
\end{mathematicas}



\paragraph{Orthogonality of an irreducible decomposition and quadratic tensor invariants.} We recall that the canonical scalar product on $V^{\otimes n}$ is defined in \eqref{eq:scalar_product}.\medskip

Similarly to the case of $\C\sn$, for any $z\in \bn$ we define the adjoint operator of $\mathfrak{r}(z)$ by the relation 
\begin{equation}
	\langle \, T \, ,\, (\mathfrak{r}(z))(S)\,\rangle=\langle \,(\mathfrak{r}(z))^{\dagger}(T)\, ,\, S\,\rangle\,, \hspace{1cm} \text{for all $T,S\in V^{\otimes n}$.}
\end{equation}
\vskip 4pt 

Let $b\in \dbn$ be in a single diagram in $\bn$, and let $\mathfrak{b}=\mathfrak{r}(b)$ be its image in $\End(V^{\otimes n})$. Then one has
\begin{equation}
\tensor{{\mathfrak{b}^{\dagger}}}{^{\,a_1\ldots a_n}_{b_1\ldots b_n}}=\tensor{\mathfrak{b}}{_{\,b_1\ldots b_n}^{a_1\ldots a_n}}\,.
\end{equation}
From the definition \eqref{eq:operator_bn} of the homomorphism $\mathfrak{r}$ one concludes that 
\begin{equation}
\left(\mathfrak{r}(b)\right)^{\dagger}=\mathfrak{r}(b^{*})\,,
\end{equation}
and by linearity, for any $z\in \bn$ one has that $\left(\mathfrak{r}(z)\right)^{\dagger}=\mathfrak{r}(z^{*})$. As a consequence when $\Dim\geqslant n$ any $\mathfrak{r}(z)\in \bn$ is self-adjoint if and only if $z$ is flip invariant.\medskip


Consider an orthogonal irreducible decomposition of a tensor $T\in V^{\otimes n}$ with respect to $\Or(\Dim,\C)$:
\begin{equation}\label{eq:irreducible_decomposition_T}
	T=\displaystyle{\sum_{i\in I}T\cdot \accentset{(i)}{P}}=\displaystyle{\sum_{i\in I}\accentset{(i)}{T}},
\end{equation}
where $I$ is the index set parametrizing all the irreducible tensor representations $D^i\subset V^{\otimes n}$ of $\Or(\Dim,\C)$ and $\accentset{(i)}{P}\in \bn(\Dim)$ are the projectors in $\bn(\Dim)$ realizing the decomposition. For any pair of equivalent representations $D^{i}\cong D^{j}$ we will write $i\sim j$. Then, one has the following proposition.

\begin{proposition}\label{prop:lagrangians}
	Let $T\in V^{\otimes n}$ with irreducible decomposition given by \eqref{eq:irreducible_decomposition_T}. For any irreducible representation $i\sim j$, let $b\in \bn$ be such that $b_{ij}=\accentset{(i)}{P} \, b \, \accentset{(j)}{P}\neq 0$. Then the quadratic tensor invariants of $T$ are generated by the following sets
	\begin{itemize}
\item[i)]	$\lbrace \langle \, \accentset{(i)}{T}\,,\, \accentset{(i)}{T}\, \rangle \, \st \, i\in I \, \rbrace\,$\,,
\item[ii)]  $\lbrace\langle \, \accentset{(i)}{T}\cdot b_{ij}\,,\, \accentset{(j)}{T}\, \rangle \st i,j\in I \hspace{0.3cm} \text{with} \hspace{0.3cm} i\sim j \,\rbrace $\,.
	\end{itemize} 
\end{proposition}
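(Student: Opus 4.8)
The plan is to reduce the statement to an algebraic identity inside the Brauer algebra and then exploit Schur's lemma block by block. The starting observation is that every quadratic $\Or(\Dim,\C)$-invariant of $T$ is of the form $\langle T, T\cdot b\rangle$ for some $b\in\bn(\Dim)$. Indeed, an $\Or(\Dim,\C)$-invariant bilinear form on $V^{\otimes n}$ corresponds, through the non-degenerate invariant scalar product \eqref{eq:scalar_product}, to an $\Or(\Dim,\C)$-equivariant endomorphism of $V^{\otimes n}$; by the Double Centralizer Theorem and the Schur--Weyl duality for $\Or(\Dim,\C)$ and $\bn(\Dim)$ recalled above, the space of such endomorphisms is exactly the image $\mathfrak{r}(\bn(\Dim))$. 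Hence it suffices to expand an arbitrary $\langle T, T\cdot b\rangle$ in terms of the generators in (i) and (ii).

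Next I would insert the orthogonal decomposition $T=\sum_{i\in I}\accentset{(i)}{T}$, with $\accentset{(i)}{T}=T\cdot\accentset{(i)}{P}$, recalling that the projectors $\accentset{(i)}{P}$ are mutually orthogonal idempotents summing to the identity and, because the decomposition is orthogonal, are flip invariant (self-adjoint images). Expanding both slots gives $\langle T, T\cdot b\rangle=\sum_{i,j}\langle\accentset{(i)}{T},\accentset{(j)}{T}\cdot b\rangle$. For each term I would write $\accentset{(j)}{T}\cdot b=T\cdot\accentset{(j)}{P}\,b$, resolve the identity on the right as $\sum_{l}\accentset{(l)}{P}$, and use that $T\cdot\accentset{(j)}{P}\,b\,\accentset{(l)}{P}$ lies in $D^{l}$ together with the orthogonality $D^{i}\perp D^{l}$ for $l\neq i$. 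Only $l=i$ survives, yielding the clean reduction $\langle\accentset{(i)}{T},\accentset{(j)}{T}\cdot b\rangle=\langle\accentset{(i)}{T},T\cdot b_{ji}\rangle$, where $b_{ji}=\accentset{(j)}{P}\,b\,\accentset{(i)}{P}$.

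The heart of the argument is then a case analysis via Schur's lemma, viewing $\mathfrak{r}(b_{ji})$ as an $\Or(\Dim,\C)$-equivariant map $D^{j}\to D^{i}$ (equivariant because $\mathfrak{r}(b)$ lies in the centralizer and the $\accentset{(\cdot)}{P}$ do as well). If $i\not\sim j$ the irreducibles $D^{i},D^{j}$ are inequivalent, so $\mathfrak{r}(b_{ji})=0$ and the term vanishes. If $i=j$ then $\mathfrak{r}(b_{ii})$ is an equivariant endomorphism of the irreducible $D^{i}$, hence a scalar $c_{i}$, so the term equals $c_{i}\langle\accentset{(i)}{T},\accentset{(i)}{T}\rangle$, a multiple of a generator in (i). If $i\neq j$ but $i\sim j$, I would move the operator back with the adjoint relation $\langle T\cdot z,S\rangle=\langle T,S\cdot z^{*}\rangle$ and $(b_{ji})^{*}=(b^{*})_{ij}$ to obtain $\langle\accentset{(i)}{T},T\cdot b_{ji}\rangle=\langle\accentset{(i)}{T}\cdot(b^{*})_{ij},\accentset{(j)}{T}\rangle$, again using orthogonality to replace $T$ by $\accentset{(j)}{T}$ in the second slot. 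Since $\Hom_{\Or(\Dim,\C)}(D^{i},D^{j})$ is one-dimensional, $(b^{*})_{ij}$ is proportional to the fixed representative $b_{ij}\neq0$, so this term is a scalar multiple of the generator $\langle\accentset{(i)}{T}\cdot b_{ij},\accentset{(j)}{T}\rangle$ in (ii). Summing over $i,j$ then exhibits $\langle T,T\cdot b\rangle$ as a linear combination of the invariants (i) and (ii), which is the claim.

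I expect the main obstacle to be the careful bookkeeping of the interplay between the right action of $\bn(\Dim)$, the flip/adjoint operation $(\cdot)^{*}$, and the projectors, so that every insertion of a resolution of the identity and every transpose lands in the correct isotypic block. The genuinely structural point --- and the reason a single representative $b_{ij}$ per equivalent pair is enough --- is the one-dimensionality of the intertwiner spaces, i.e.\ Schur's lemma for the irreducible $\Or(\Dim,\C)$-modules $D^{\lambda}$; this is what must be invoked cleanly. A minor point to handle is that for $\Dim<n$ the homomorphism $\mathfrak{r}$ fails to be injective, but this is harmless, since all the invariants depend only on the operator images $\mathfrak{r}(\cdot)$, where the scalar and vanishing conclusions of Schur's lemma hold verbatim.
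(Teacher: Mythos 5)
Your proof is correct and follows essentially the same route as the paper's: both arguments reduce, via non-degeneracy of the scalar product, to a block-by-block application of Schur's lemma for the irreducible $\Or(\Dim,\C)$-modules, with the same three cases (inequivalent blocks give zero, coinciding blocks give a multiple of the diagonal invariant $\langle\,\accentset{(i)}{T}\,,\,\accentset{(i)}{T}\,\rangle$, and equivalent distinct blocks give a multiple of the fixed representative $\langle\,\accentset{(i)}{T}\cdot b_{ij}\,,\,\accentset{(j)}{T}\,\rangle$ by one-dimensionality of the intertwiner space). The only substantive addition is that you make explicit, through the Double Centralizer Theorem, the preliminary step that every quadratic invariant arises as $\langle T\,,\,T\cdot b\rangle$ for some $b\in\bn(\Dim)$ --- a point the paper's proof, which works directly with the restrictions of the bilinear forms $\Upsilon_b$ to the blocks $D^i\times D^j$, leaves implicit.
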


 This result follows from from the non-degeneracy of the scalar product \eqref{eq:scalar_product} and from Schur's lemma. The proof is given in appendix \ref{subsec:proof_of_prop_lagrangians}. This proposition will be used in section \ref{sec:Lagrangians} for the construction of quadratic Lagrangians in the distortion tensor and in the Riemann curvature tensor of metric-affine gravity.\medskip


\subsection{Branching rules $\GL(\Dim,\mathbb{C})\downarrow \Or(\Dim,\mathbb{C})$ and the two-step decomposition.}\label{subsection:branching_GL_O}
Irreducible tensor representations of $\GL(\Dim,\mathbb{C})$ are representations of its subgroup $\Or(\Dim,\mathbb{C})$.  Upon restriction to $\Or(\Dim,\mathbb{C})\subset \GL(\Dim,\mathbb{C})$, the irreducible $\GL(\Dim,\C)$ representations decompose into a direct sum of irreducible $\Or(\Dim,\mathbb{C})$ representations
\begin{equation}\label{eq:branching_GL_O}
V^{\mu} \cong \bigoplus_{\begin{array}{c}
		{\scriptstyle \text{even }\nu \, \subset \mu}\\
\end{array}}\big(D^{\lambda}\big)^{\oplus \tensor{\overline{C}}{^{\,\mu}_\lambda_\nu}(\Dim)}\quad\text{(upon $\GL(\Dim,\mathbb{C})\downarrow \Or(\Dim,\mathbb{C})$)}\,,
\end{equation}
where by even Young diagram we mean that each row contains an even number of boxes. The branching rules \eqref{eq:branching_GL_O} are extensively presented in the literature \cite{Koike_Terada_banching,Enright_Willenbring_branching,Kwon_branching,Jang_Kwon_branching}. 
\vskip 6pt
The integer $\tensor{\overline{C}}{^{\,\mu}_\lambda}(\Dim)=\sum_{ \text{even }\nu\,\subset \, \mu} \tensor{\overline{C}}{^{\,\mu}_\lambda_\nu}(\Dim)$ gives the multiplicity of the irreducible representation $D^\lambda$ in $V^\mu$, the branching decomposition \eqref{eq:branching_GL_O} is said to be \textit{multiplicity free} if $\tensor{\overline{C}}{^{\,\mu}_\lambda}(\Dim)\in \{0,1\}$ for all $\lambda\in \Lambda_n(\Dim)$.

\begin{remark}\label{rem:littlewood_restriction_rule}\hphantom{0}\medskip 
\begin{itemize}
\item[\it i)] When $\mu\in\Lambda_{n}(\Dim)$ that is $\mu\vdash n$ and $\mu_1^{\prime}+\mu_2^{\prime}\leqslant\Dim$ one has 
\begin{equation}\label{eq:Littlewood_restriction_rule_1}
	\tensor{\overline{C}}{^{\,\mu}_\lambda_\nu}(\Dim)=  \tensor{C}{^{\,\mu}_\lambda_\nu}\,, \hspace{1cm} \text{(the Littlewood-Richardson coefficients)}
\end{equation}
and \eqref{eq:branching_GL_O} is referred to as the \textit{Littlewood's restriction rules}. Two combinatorial methods for the computation of the Littlewood-Richardson coefficients are presented in the appendix \ref{subsec:Littlewood_Richardson_rules}.
\item[\it ii)] 	The regime $\Dim\geqslant n$ is called the \textit{stable} case. There, all integer partitions $\mu\vdash n$ are both in $\Par_n(\Dim)$ and in $\Lambda_n(\Dim)$ so that the Littlewood's restriction rule applies. Hence $\tensor{\overline{C}}{^{\,\mu}_\lambda_\nu}(\Dim)=\tensor{C}{^{\,\mu}_\lambda_\nu}$ and the multiplicities $\tensor{C}{^\mu_\lambda}$ for fixed $\mu$ can be obtained efficiently using the \textit{jeu de taquin} formulation of the Littlewood-Richardson rule (see appendix \ref{subsec:Littlewood_Richardson_rules}).\medskip
\end{itemize}
\end{remark}

\begin{lemma}\label{lem:multiplicity_free_GL_O}
For all $\Dim\in \mathbb{N}$ and $\mu\vdash n$ with $n\leqslant 5$ the branching decomposition \eqref{eq:branching_GL_O} is multiplicity free.
\end{lemma}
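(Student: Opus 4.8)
The plan is to reduce the statement to a finite combinatorial check on Littlewood--Richardson coefficients, disposing of the stable range $\Dim\geqslant n$ first and the finitely many small dimensions afterwards. By Remark~\ref{rem:littlewood_restriction_rule}, the multiplicity of $D^\lambda$ in $V^\mu$ is $\overline{C}^{\,\mu}_\lambda(\Dim)=\sum_{\nu\text{ even}}\overline{C}^{\,\mu}_{\lambda\nu}(\Dim)$, and whenever $\mu\in\Lambda_n(\Dim)$ (in particular throughout the stable range) this equals $\sum_{\nu\text{ even}}C^{\,\mu}_{\lambda\nu}$, the sum of the Littlewood--Richardson coefficients over all partitions $\nu$ with even rows. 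Since $\mu\vdash n$, for a fixed $\lambda$ the size $|\nu|=n-|\lambda|$ is fixed, so this is a sum over even partitions of a single size; as $n\leqslant 5$ we have $|\nu|\in\{0,2,4\}$.

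First I would treat $|\nu|\in\{0,2\}$: here there is a \emph{unique} even partition ($\emptyset$ or $(2)$), so $\overline{C}^{\,\mu}_\lambda(\Dim)$ collapses to a single coefficient $C^{\,\mu}_{\lambda\nu}$. I would then invoke the fact that every Littlewood--Richardson coefficient $C^{\,\mu}_{\lambda\nu}$ with $|\mu|\leqslant 5$ lies in $\{0,1\}$ --- the smallest coefficient equal to $2$ being $C^{(3,2,1)}_{(2,1),(2,1)}$, which already has $|\mu|=6$ --- so each such multiplicity is automatically $0$ or $1$.

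The only remaining source of multiplicity is $|\nu|=4$, where the two even partitions $(4)$ and $(2,2)$ compete; this forces $|\lambda|=n-4$, hence $n\in\{4,5\}$ with $\lambda=\emptyset$ or $\lambda=(1)$. For $n=4$, $\lambda=\emptyset$ one has $C^{\,\mu}_{\emptyset\nu}=\delta_{\mu\nu}$, so the sum equals $1$ if $\mu$ is even and $0$ otherwise. For $n=5$, $\lambda=(1)$, the Pieri rule gives $C^{\,\mu}_{(1),(4)}=1$ exactly for $\mu\in\{(5),(4,1)\}$ and $C^{\,\mu}_{(1),(2,2)}=1$ exactly for $\mu\in\{(3,2),(2,2,1)\}$; these four diagrams being distinct, at most one summand survives for each $\mu\vdash 5$. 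This settles the stable range.

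The main obstacle is the non-stable range $\Dim<n$, where Littlewood's restriction rule fails and the orthogonal modification rules may merge or cancel terms, in principle producing multiplicity. Here $n\leqslant 5$ forces $\Dim\in\{1,2,3,4\}$, so only finitely many pairs $(\mu,\Dim)$ with $\mu\in\Par_n(\Dim)$ survive; I would verify multiplicity-freeness for these directly, either by applying the modification rules to the stable decomposition found above and checking that no two admissible diagrams collapse onto the same $\lambda$ with the same sign, or by the explicit branching computations implemented in the packages of the last chapter. The crux is confirming that, in this finite list, the modification rules never push a multiplicity past $1$.
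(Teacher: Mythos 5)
Your proposal is correct, and at its core it performs exactly the check that constitutes the paper's entire proof: verifying that $\sum_{\text{even}\,\nu} C^{\,\mu}_{\lambda\nu}\leqslant 1$ for every $\mu\vdash n\leqslant 5$ (the paper disposes of this in one line, ``by direct computation''). What you add is useful structure. In the stable regime your argument is human-checkable rather than a table lookup: since $|\nu|=n-|\lambda|\in\{0,2,4\}$, for $|\nu|\leqslant 2$ there is a unique even $\nu$ and every Littlewood--Richardson coefficient with $|\mu|\leqslant 5$ lies in $\{0,1\}$ (one tensor factor always has at most two boxes, so the Pieri rules apply), while for $|\nu|=4$ the two Pieri sets produced by $(4)$ and $(2,2)$ are disjoint. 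You are also more explicit than the paper about the non-stable range $\Dim<n$: strictly speaking, a computation of the unbarred coefficients $C^{\,\mu}_{\lambda}$ certifies the multiplicities only where Littlewood's restriction rule \eqref{eq:Littlewood_restriction_rule_1} applies (Remark~\ref{rem:littlewood_restriction_rule}), i.e.\ for $\mu\in\Lambda_n(\Dim)$ and in particular for $\Dim\geqslant n$. For the remaining finitely many pairs $(\mu,\Dim)$ with $\Dim\leqslant 4$ you reduce the claim to a finite check of the modification rules but do not execute it; the paper's proof is silent on this point as well, the verification being effectively carried out elsewhere (e.g.\ the dimension counts in Propositions~\ref{prop:small_d_distortion} and~\ref{prop:small_d_Riemann}, and the eigenvalue criterion of Proposition~\ref{prop:condition_A_branching_GL_O_2}). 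So your proof is at least as complete as the paper's; executing that finite non-stable check is the only step separating it from full rigor.
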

\begin{proof}
The result follows by direct computation of $\tensor{C}{^\mu_\lambda}=\sum_{ \text{even }\nu\,\subset \, \mu} \tensor{C}{^{\,\mu}_\lambda_\nu}$ for all $\mu\vdash n$ with $n\leqslant 5$.
\end{proof}
\begin{mathematica}[\textit{SymmetricFunctions}]
BranchingRule[$\mu$, GeneralLinearGroup[$\Dim$], OrthogonalGroup[$\Dim$]]\,\\
\textit{Returns the integer partitions $\lambda$ with multiplicities corresponding to the irreducible representation of $\Or(\Dim)$ present in the restriction of an irreducible representation of $\GL(\Dim)$ to $\Or(\Dim)$.}
\end{mathematica}

As a consequence of the multiplicity free property of \eqref{eq:branching_GL_O} for $n\leqslant 5$, the restriction of an irreducible representation under $\GL(\Dim,\C)$ to an irreducible representation under $\Or(\Dim,\C)$ is defined uniquely. In this respect, we will perform the explicit decomposition of the distortion and of the Riemann tensor following a two-step approach like it is often done in the higher-spin community \cite{bekaert2021unitary,Fronsdal_1978,Bonelli_2003,LABASTIDA1986101,SIEGEL1987125,Brink_2000}. The transition from an irreducible decomposition under $\GL(\Dim,\mathbb{R})$ to an irreducible decomposition under $\Or(1,\Dim-1)$, will be obtained from the action of the uniquely defined isotypic projection operators $P^\lambda_n\in B_n$ on each $\GL(\Dim,\mathbb{R})$-irreducible tensor. For the remaining of this chapter this procedure will be referred to as the \textit{two-step decomposition}.\medskip 

We will use the following notations in sections \ref{sec:Distortion_Decomposition} and \ref{sec:Riemann_Decomposition}. For $T^{(\overline{\mu})}\in V^{\mu}\subset V^{\otimes n}$ an irreducible tensor with respect to $\GL(\Dim,\mathbb{C})$ of order less than or equal to $5$, we denote by $T^{(\overline{\mu}, \lambda)}\in D^{\lambda}\subset V^\mu$ the irreducible tensor with respect to $\Or(\Dim,\mathbb{C})$ obtained by the action of $P^{\lambda}_n$ on $T^{(\overline{\mu})}$; that is $T^{(\overline{\mu},\lambda)}:=T^{(\overline{\mu})}\cdot P^{\lambda}_n\,$.

\begin{remark}
The last point \eqref{eq:irreducible_decomposition_O} indicates that the irreducible decomposition of a tensor with respect to the action of $\Or(\Dim,\mathbb{C})$ can be systematically achieved using certain projection operators $P^{\,\tab}_n$ which belong to the Brauer algebra. Consequently it is possible to totally bypass the decomposition with respect to $\GL(\Dim,\mathbb{C})$ to arrive at an irreducible decomposition with respect to $\Or(\Dim,\mathbb{C})$. This fact is an illustration of the statement given by Weyl~\cite[p.136]{weyl1946classical}:

\begin{quotation}
	\textit{‘‘ The importance of the full linear group $\GL(\Dim,\mathbb{C})$ lies in the fact that any group $G$ of linear transformations is a subgroup of $\GL(\Dim,\mathbb{C})$ and hence decomposition of tensor space with respect to $\GL(\Dim,\mathbb{C})$ must precede decomposition relative to $G$. One should, however, not overemphasize this relationship; for after all each group stands in its own right and does not deserve to be looked upon merely as a subgroup of something else, be it even Her All-embracing Majesty $\GL(\Dim,\mathbb{C})$."}
	\flushright Hermann Weyl
\end{quotation}
\end{remark}

\section{Irreducible decompositions of the distortion tensor}\label{sec:Distortion_Decomposition}
In this section, unless otherwise specified we assume $\Dim\geqslant 3$. We write $V$ for the tangent space at point $p$ of $U\subset \mathcal{M}$. The explicit tensor computations presented in this section and in section \ref{sec:Riemann_Decomposition} can be performed using the Mathematica package \textit{xMAG} \cite{xMAGPackage} which we recall contains all the other packages (\textit{SymmetricFunctions}, \textit{BrauerAlgebra} and \textit{xBrauer})  developed during this thesis. A Mathematica notebook going through all the steps presented here is available on gitHub \cite{xMAGdistortion}. Therein one can also find the irreducible decompositions of the non-metricity and torsion tensors. 
\subsection{$\GL(\Dim,\mathbb{R})$ isotypic decomposition}\label{subsec:CentralGLDistortion}
The distortion tensor of metric-affine gravity \eqref{eq:def_distortion} has no symmetry of indices and as such is an element of the tensor product $V^{\otimes 3}\,=\, V\otimes V \otimes V=V^{(\Yboxdim{3pt}\yng(1))}\otimes V^{(\Yboxdim{3pt}\yng(1))} \otimes V^{(\Yboxdim{3pt}\yng(1))}$. The Littlewood-Richardson rule (see \ref{subsec:Littlewood_Richardson_rules} for more details) gives the decomposition of the tensor product of two irreducible representations of $\GL(\Dim,\mathbb{R})$: 
\begin{equation}
	V^{\mu}\otimes V^{\rho}= \bigoplus_{\nu\,\in \Par_n(\Dim)} (V^{\nu})^{\oplus \tensor{C}{^\nu_{\mu\rho}}}\,, \hspace{0.5cm} \text{with } \text{$n=|\mu|+|\rho|$}\,,
\end{equation}
where $\tensor{C}{^\nu_{\mu\rho}}$ are again the Littlewood-Richardson coefficients. Iterative application of the Littlewood-Richardson rule yields the following direct sum decomposition of $V^{\otimes 3}$ with respect to $\GL(\Dim,\mathbb{R})$:
\begin{equation}\label{eq:centralGL_decompositionV3}
	V^{\otimes 3}\,=V^{(\Yboxdim{3pt}\yng(3))}\oplus 2 \, V^{\,(\Yboxdim{3pt}\yng(2,1))\,} \oplus V^{(\Yboxdim{3pt}\yng(1,1,1))}.
\end{equation}
\begin{mathematica}[\textit{SymmetricFunctions}]
	LittlewoodRichardsonRule[$\mu_1$, $\ldots$, $\mu_k$]\,\\
	\textit{returns a list of the partitions (with multiplicities) parameterizing to irreducible representations of $\GL(\Dim)$ appearing in the \
		tensor product of the irreducible representations $V^{\mu_1}\otimes\ldots \otimes V^{\mu_k}$.}
\end{mathematica}

One can project an arbitrary tensor onto the isotypic component $m_\mu V^{\mu}$ of $V^{\otimes n}$ under $\GL(\Dim,\mathbb{R})$ using the \textit{uniquely} defined central idempotent $Z^{\mu}\in \mathbb{C}\mathfrak{S}_n$ (see \eqref{eq:isotypic_decomposition_Schur_Weyl}). The isotypic decomposition of the distortion tensor is 
\begin{equation}\label{eq:GLcentral_distortion}
	\tensor{C}{_{abc}}=\sum_{\mu\vdash 3} \tensor{(C\cdot Z^{\mu})}{_{abc}},
\end{equation}
where $Z^{(\Yboxdim{3pt}\yng(3))}$ projects a tensor onto the space of totally symmetric tensors $V^{(\Yboxdim{3pt}\yng(3))}$, $Z^{(\Yboxdim{3pt}\yng(1,1,1))}$ projects onto space of totally antisymmetric tensors $V^{(\Yboxdim{3pt}\yng(1,1,1))}$ and finally $Z^{(\Yboxdim{3pt}\yng(2,1))}$ projects onto the direct sum $V^{\,(\Yboxdim{3pt}\yng(2,1))} \, \oplus \, V^{\,(\Yboxdim{3pt}\yng(2,1))}$. The above central Young idempotents are expressed in the conjugacy class sum basis of the center $\mathcal{Z}_3$ of $\C\Sn{3}$ in equations \eqref{eq:centralYoung_3}.\smallskip

We define the isotypic tensors (see \eqref{eq:isotypic_decomp_GL_Gen}) $\tensor{{\accentset{\left(\Yboxdim{2.5pt}\yng(3)\right)}{C}}}{_{abc}}$, $\tensor{{\accentset{\left(\Yboxdim{2.5pt}\yng(1,1,1)\right)}{C}}}{_{abc}}$ and $\tensor{{\accentset{\left(\Yboxdim{2.5pt}\yng(2,1)\right)}{C}}}{_{abc}}$ as the image of the action of the central Young idempotents on the distortion tensor, namely: 
\begin{equation}
	\begin{aligned}
		&\tensor{{\accentset{\left(\Yboxdim{2.5pt}\yng(3)\right)}{C}}}{_{abc}}:=\tensor{(C\cdot Z^{\Yboxdim{2.5pt}\yng(3)})}{_{abc}}=\tensor{C}{_{(abc)}}\,,\\
		&\tensor{{\accentset{\left(\Yboxdim{2.5pt}\yng(1,1,1)\right)}{C}}}{_{abc}}:=\tensor{(C\cdot Z^{\Yboxdim{2.5pt}\yng(1,1,1)})}{_{abc}}=\tensor{C}{_{[abc]}}\,, \\
		&\tensor{{\accentset{\left(\Yboxdim{2.5pt}\yng(2,1)\right)}{C}}}{_{abc}}:=\tensor{(C\cdot Z^{\Yboxdim{2.5pt}\yng(2,1)})}{_{abc}}=\tensor{C}{_{abc}}-\tensor{{\accentset{\left(\Yboxdim{2.5pt}\yng(3)\right)}{C}}}{_{abc}}-\tensor{{\accentset{\left(\Yboxdim{2.5pt}\yng(1,1,1)\right)}{C}}}{_{abc}}\,. \\
	\end{aligned}
\end{equation}

The component $\tensor{{\accentset{\left(\Yboxdim{2.5pt}\yng(2,1)\right)}{C}}}{_{abc}}$ of the isotypic decomposition of the distortion has no symmetry of indices. Instead it enjoys the following algebraic identities $\tensor{{\accentset{\left(\Yboxdim{2.5pt}\yng(2,1)\right)}{C}}}{_{(abc)}}=0$, and $\tensor{{\accentset{\left(\Yboxdim{2.5pt}\yng(2,1)\right)}{C}}}{_{[abc]}}=0$, which follows directly from the orthogonality property of the direct sum decomposition \eqref{eq:centralGL_decompositionV3}. 

\subsection{$\GL(\Dim,\mathbb{R})$ irreducible decomposition and projective invariance}\label{subsec:GLdistortion}
Totally symmetric, and totally antisymmetric tensors are irreducible. The only part of the decomposition \eqref{eq:GLcentral_distortion} which is not irreducible is $\tensor{{\accentset{\left(\Yboxdim{2.5pt}\yng(2,1)\right)}{C}}}{_{abc}}$. In order to obtain an irreducible decomposition of the distortion tensor one has to decompose $\tensor{{\accentset{\left(\Yboxdim{2.5pt}\yng(2,1)\right)}{C}}}{_{abc}}$ into two irreducible pieces.\medskip


Because the irreducible decomposition of $\tensor{{\accentset{\left(\Yboxdim{2.5pt}\yng(2,1)\right)}{C}}}{_{abc}}$ is not unique we require a reasonable guideline to fix a preferred irreducible decomposition of the distortion tensor. Such a guideline can be  provided by the geometrical notion of projective invariance (see section \ref{subsec:Projective}).\medskip 

Note that the isotypic tensor $\tensor{{\accentset{\left(\Yboxdim{2.5pt}\yng(2,1)\right)}{C}}}{_{abc}}$ transforms non trivially under projective change of connection\footnote{Note also that $\tensor{{\accentset{\left(\Yboxdim{2.5pt}\yng(1,1,1)\right)}{C}}}{_{abc}}$ is projective invariant, while $\tensor{{\accentset{\left(\Yboxdim{2.5pt}\yng(3)\right)}{C}}}{_{abc}}$ is not.}:
\begin{equation*}
	\tensor{{\accentset{\left(\Yboxdim{2.5pt}\yng(2,1)\right)}{C}}}{_{abc}}\to \tensor{{\accentset{\left(\Yboxdim{2.5pt}\yng(2,1)\right)}{C}}}{_{abc}} + \tensor{{\delta_\xi \accentset{\left(\Yboxdim{2.5pt}\yng(2,1)\right)}{C}}}{_{abc}}, \hspace{1cm} \text{with} \hspace{1cm} \tensor{{\delta_\xi \accentset{\left(\Yboxdim{2.5pt}\yng(2,1)\right)}{C}}}{_{abc}}=\frac{2}{3}\left(g_{ac}\xi_{b}-g_{b(c}\xi_{a)}\right)\,.
\end{equation*}
In order to settle the decomposition of $\tensor{{\accentset{\left(\Yboxdim{2.5pt}\yng(2,1)\right)}{C}}}{_{abc}}$ into two irreducible tensors we require one of them to be projective invariant. We are looking for elements $Y_{S},Y_{A}\in \mathbb{C}\mathfrak{S}_3$ such that $\accentset{\left(\Yboxdim{3pt}\yng(2,1)\right)}{C}= \accentset{\left(\Yboxdim{3pt}\yng(2,1)\right)}{C}\cdot Y_{S}+\accentset{\left(\Yboxdim{3pt}\yng(2,1)\right)}{C}\cdot Y_{A}$, with $\delta_\xi (\accentset{\left(\Yboxdim{3pt}\yng(2,1)\right)}{C}\cdot Y_{S})\neq 0$ and  $\delta_\xi (\accentset{\left(\Yboxdim{3pt}\yng(2,1)\right)}{C}\cdot Y_{A})= 0$. That is, $Y_{S}$ and $Y_{A}$ decompose $\accentset{\left(\Yboxdim{3pt}\yng(2,1)\right)}{C}$ into a projective variant and projective invariant part.  On the other hand, we want $Y_{S}$ and $Y_{A}$ to realize an orthogonal irreducible decomposition of $\accentset{\left(\Yboxdim{3pt}\yng(2,1)\right)}{C}$.\medskip

From the symmetry of $\tensor{{\delta_\xi\accentset{\left(\Yboxdim{2.5pt}\yng(2,1)\right)}{C}}}{_{abc}}$
\begin{equation*}
	\tensor{{\delta_\xi\accentset{\left(\Yboxdim{2.5pt}\yng(2,1)\right)}{C}}}{_{(a|b|c)}}=\tensor{{\delta_\xi\accentset{\left(\Yboxdim{2.5pt}\yng(2,1)\right)}{C}}}{_{abc}}\,,
\end{equation*}
we have a natural candidate for the sought-after decomposition
\begin{equation}\label{eq:irre_decomp_C21}
	\tensor{{\accentset{\left(\Yboxdim{3pt}\yng(2,1)\right)}{C}}}{_{abc}}=\tensor{{\accentset{\left(\Yboxdim{3pt}\yng(2,1)S\right)}{C}}}{_{abc}}+\tensor{{\accentset{\left(\Yboxdim{3pt}\yng(2,1)A\right)}{C}}}{_{abc}}\,, 	\hspace{1cm} \text{with} \hspace{1cm} \tensor{{\accentset{\left(\Yboxdim{3pt}\yng(2,1)S\right)}{C}}}{_{abc}}:=\tensor{{\accentset{\left(\Yboxdim{3pt}\yng(2,1)\right)}{C}}}{_{(a|b|c)}} \hspace{0.5cm} \text{and} \hspace{0.5cm} \,\tensor{{\accentset{\left(\Yboxdim{3pt}\yng(2,1)A\right)}{C}}}{_{abc}}:=\tensor{{\accentset{\left(\Yboxdim{3pt}\yng(2,1)\right)}{C}}}{_{[a|b|c]}}\,.
\end{equation}
The tensors $\tensor{{\accentset{\left(\Yboxdim{3pt}\yng(2,1)S\right)}{C}}}{_{abc}}$ and $\tensor{{\accentset{\left(\Yboxdim{3pt}\yng(2,1)A\right)}{C}}}{_{abc}}$ are clearly non-zero and orthogonal with respect to the scalar product \eqref{eq:scalar_product}. They are both irreducible tensors as they belong respectively to two complementary $\GL(\Dim,\mathbb{R})$-invariant subspaces in $V^{\Yboxdim{3pt}\yng(2,1)}\oplus V^{\Yboxdim{3pt}\yng(2,1)}$.\medskip 

This decomposition is realized by the action on $\accentset{\left(\Yboxdim{3pt}\yng(2,1)\right)}{C}$ of the following operators
\begin{equation}
	Y_{S}=s\, Z^{\Yboxdim{3.5pt}\yng(2)}\, s^{\shortminus 1}\,, \hspace{0.5cm} Y_{A}=s\, Z^{\Yboxdim{3.5pt}\yng(1,1)}\, s^{\shortminus 1}\hspace{1cm}\text{with}\hspace{1cm} s=\raisebox{-.4\height}{\includegraphics[scale=0.4]{fig/s3b.pdf}}\,.
\end{equation}
The irreducible tensors $\accentset{\left(\Yboxdim{3pt}\yng(2,1)S\right)}{C}$ (resp. $\accentset{\left(\Yboxdim{3pt}\yng(2,1)A\right)}{C}$\,) can also be obtained by the action on the distortion tensor of the Young seminormal idempotent $Y^{\,\Yboxdim{3.5pt}\yng(2,1)S}$ (resp. $Y^{\,\Yboxdim{3.5pt}\yng(2,1)A}$\,) defined by 
\begin{equation}
	Y^{\,\Yboxdim{3.5pt}\yng(2,1)S}:=s \, Y^{\,\Scale[0.5]{\young(12,3)}} \, s^{\shortminus 1}\, \hspace{0.5cm}\text{and}\hspace{0.5cm}  Y^{\,\Yboxdim{3.5pt}\yng(2,1)A}:=s \, Y^{\,\Scale[0.5]{\young(13,2)}} \, s^{\shortminus 1}\,, \hspace{1cm}\text{with}\hspace{1cm} s=\raisebox{-.4\height}{\includegraphics[scale=0.4]{fig/s3b.pdf}}\,.
\end{equation}
The \textit{standard} Young seminormal idempotents $Y^{\,\Scale[0.5]{\young(12,3)}}$ and $Y^{\,\Scale[0.5]{\young(13,2)}}$ are expressed in term of the central Young idempotents as (see formula \eqref{eq:Young_seminormal_idempotents_intro} and section \ref{sec:seminormalYoung} for more details):
\begin{equation}
Y^{\,\Scale[0.5]{\young(12,3)}}=Z^{\Yboxdim{4pt}\yng(2)}Z^{\Yboxdim{4pt}\yng(2,1)}\,\hspace{1cm} Y^{\,\Scale[0.5]{\young(13,2)}}=Z^{\Yboxdim{4pt}\yng(1,1)}Z^{\Yboxdim{4pt}\yng(2,1)}\,.
\end{equation}
Note that because $Z^{\Yboxdim{4pt}\yng(2,1)}$ commutes with $\sn$, we have
\begin{equation}
Y^{\,\Yboxdim{3.5pt}\yng(2,1)S}=Y_{S}\, Z^{\Yboxdim{3pt}\yng(2,1)}\,, \hspace{0.5cm} Y^{\,\Yboxdim{3.5pt}\yng(2,1)A}=Y_{A}\, Z^{\Yboxdim{3pt}\yng(2,1)}\,.
\end{equation}

\begin{proposition}\label{prop:irreducible_distortion_GL}
	The decomposition \eqref{eq:irre_decomp_C21} is the unique orthogonal irreducible decomposition of $\tensor{{\accentset{\left(\Yboxdim{2.5pt}\yng(2,1)\right)}{C}}}{_{abc}}$ with respect to $\GL(\Dim,\mathbb{R})$ which maximizes the number of projective invariant tensors within its parts. 
\end{proposition}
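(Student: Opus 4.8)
The plan is to reduce everything to the two-dimensional multiplicity space of $V^{(2,1)}$ and to analyse there how the projective variation and the scalar product \eqref{eq:scalar_product} interact. Recall from \eqref{eq:centralGL_decompositionV3} that the isotypic component has multiplicity two, so $\mathcal{V}^{(2,1)}\cong U\otimes V^{(2,1)}$ with $U\cong\mathbb{R}^{2}$ the multiplicity space, the factor $V^{(2,1)}$ being absolutely irreducible over $\mathbb{R}$ so that $\mathrm{End}_{\GL(\Dim,\mathbb{R})}(V^{(2,1)})=\mathbb{R}$. By Schur's lemma every $\GL(\Dim,\mathbb{R})$-irreducible subspace of $\mathcal{V}^{(2,1)}$ has the form $\ell\otimes V^{(2,1)}$ for a line $\ell\subset U$, so every irreducible decomposition of $\accentset{(2,1)}{C}$ corresponds bijectively to a splitting $U=\ell_{1}\oplus\ell_{2}$ into two lines. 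Since the scalar product \eqref{eq:scalar_product} is $\Or$-invariant, it restricts on $\mathcal{V}^{(2,1)}$ to a product $B\otimes\langle\,\cdot\,,\cdot\,\rangle_{V^{(2,1)}}$ with $B$ a nondegenerate form on $U$, and orthogonal decompositions in the sense of Definition \ref{def:orthogonal_decomposition} are exactly the $B$-orthogonal splittings $\ell_{2}=\ell_{1}^{\perp_{B}}$.

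First I would record the two elementary facts about the distinguished splitting \eqref{eq:irre_decomp_C21}. The symmetrization and antisymmetrization in the outer indices $a,c$ are realized by the elements $Y_{S},Y_{A}\in\C\Sn{3}$ introduced above, so they define complementary $\GL(\Dim,\mathbb{R})$-invariant subspaces $\ell_{S}\otimes V^{(2,1)}$ and $\ell_{A}\otimes V^{(2,1)}$, each a single copy of $V^{(2,1)}$ and hence irreducible. They are $B$-orthogonal because the contraction of a tensor symmetric in $a,c$ with one antisymmetric in $a,c$ vanishes, giving $B(e_{S},e_{A})=0$; as $B$ is nondegenerate and $\ell_{S}\neq\ell_{A}$, this forces $\ell_{A}=\ell_{S}^{\perp_{B}}$.

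The core of the argument is the behaviour of the projective variation. From the displayed formula $\delta_{\xi}\accentset{(2,1)}{C}_{abc}=\tfrac{2}{3}(g_{ac}\xi_{b}-g_{b(c}\xi_{a)})$ one checks directly that the right-hand side is symmetric under $a\leftrightarrow c$ for every covector $\xi$, so the linear map $v:\xi\mapsto\delta_{\xi}\accentset{(2,1)}{C}$ takes values in the symmetric copy $\ell_{S}\otimes V^{(2,1)}$; its image is a nonzero invariant subspace of this irreducible copy and therefore equals all of $\ell_{S}\otimes V^{(2,1)}$. Now fix any orthogonal irreducible decomposition $U=\ell_{1}\oplus\ell_{2}$ with components $W_{i}=\ell_{i}\otimes V^{(2,1)}$ and $B$-orthogonal projections $P_{i}$. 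A component $W_{i}$ is projective invariant precisely when $P_{i}\circ v=0$, i.e. when $\mathrm{image}(v)=\ell_{S}\otimes V^{(2,1)}$ lies in $W_{i}^{\perp_{B}}=W_{j}$, which forces $\ell_{j}=\ell_{S}$ and hence $\ell_{i}=\ell_{S}^{\perp_{B}}=\ell_{A}$. Since the two lines are distinct and $v\neq0$, at most one component can be projective invariant; and exactly one is invariant if and only if the splitting is $\{\ell_{S},\ell_{A}\}$, the invariant part being $\ell_{A}\otimes V^{(2,1)}=\accentset{(2,1)A}{C}$. This is exactly \eqref{eq:irre_decomp_C21}, so the maximal number of projective invariant parts is one and the maximizing orthogonal irreducible decomposition is unique.

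The main obstacle is purely the set-up: giving the correct coordinate-free description of all orthogonal irreducible decompositions via $B$-orthogonal lines in $U$, and verifying that $v$ surjects onto exactly one of the two copies of $V^{(2,1)}$. Once that is in place, maximality and uniqueness are immediate from the facts that $v$ is nonzero and lands in a single copy; the remaining checks (that the variation is $a,c$-symmetric, and that the two parts of \eqref{eq:irre_decomp_C21} are $B$-orthogonal and nonzero) are the routine index manipulations already indicated in the text preceding the proposition.
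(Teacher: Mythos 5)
Your argument takes a genuinely different route from the paper's: the paper parametrizes the possible orthogonal irreducible decompositions inside the group algebra, writing the candidate projectors as $\overline{Y}_{S}=z\,Y_{S}\,z^{-1}$, $\overline{Y}_{A}=z\,Y_{A}\,z^{-1}$ with $z^{-1}=z^{*}$, shows that projective invariance forces $\accentset{(2,1)}{C}\cdot\overline{Y}_{A}$ to be antisymmetric and $\accentset{(2,1)}{C}\cdot\overline{Y}_{S}$ symmetric in the first and third indices, and concludes from the uniqueness of the symmetric/antisymmetric split of an index pair. Your multiplicity-space formulation is an attractive alternative, but two of its steps are flawed as written.

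First, the claim that $\Or$-invariance of the scalar product \eqref{eq:scalar_product} forces its restriction to $\mathcal{V}^{(2,1)}$ to factor as $B\otimes\langle\cdot,\cdot\rangle_{V^{(2,1)}}$ is not justified. The factor $V^{(2,1)}$ is $\GL(\Dim,\R)$-irreducible but not $\Or(1,\Dim-1)$-irreducible: by \eqref{eq:branching_distortion} it splits as $D^{(2,1)}\oplus D^{(1)}$, so the space of $\Or$-invariant pairings between two $\GL$-copies is two-dimensional, and $\Or$-invariance alone only yields $\langle\cdot,\cdot\rangle\big|_{\mathcal{V}^{(2,1)}}=B_{1}\otimes\gamma_{1}+B_{2}\otimes\gamma_{2}$ with a priori unrelated forms $B_{1},B_{2}$ on $U$ (here $\gamma_{1},\gamma_{2}$ denote the invariant pairings of the $D^{(2,1)}$ and $D^{(1)}$ constituents). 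The factorization is in fact true, but to obtain it you must also invoke the compatibility $\langle T\cdot z\,,S\rangle=\langle T\,,S\cdot z^{*}\rangle$ of the scalar product with the flip anti-involution acting on the commutant $\End(U)$: this forces $B_{1}$ and $B_{2}$ to induce the same anti-involution of $\End(U)$ and hence to be proportional. Alternatively you can bypass factorization entirely: the restriction of the scalar product to $\mathcal{V}^{(2,1)}$ is non-degenerate (the isotypic decomposition is orthogonal), and if two distinct lines $\ell_{1}\neq\ell_{2}$ both gave copies orthogonal to $\ell_{S}\otimes V^{(2,1)}$, then all of $U\otimes V^{(2,1)}$ would be orthogonal to it, a contradiction; so each copy admits at most one orthogonal irreducible complement, which is all your uniqueness argument actually requires.

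Second, the statement that the image of $v:\xi\mapsto\delta_{\xi}\accentset{(2,1)}{C}$ equals all of $\ell_{S}\otimes V^{(2,1)}$ is false. That image is at most $\Dim$-dimensional (it is a copy of the $\Or$-vector representation $D^{(1)}$ inside the symmetric copy), whereas $\dim V^{(2,1)}=\Dim(\Dim^{2}-1)/3$; moreover the image is only $\Or$-invariant, not $\GL$-invariant, so the ``nonzero invariant subspace of an irreducible module'' argument does not apply. The error is harmless because only $v\neq0$ is needed: if $\mathrm{Im}(v)\subseteq\ell_{j}\otimes V^{(2,1)}$ with $\ell_{j}\neq\ell_{S}$, then $\mathrm{Im}(v)\subseteq(\ell_{S}\otimes V^{(2,1)})\cap(\ell_{j}\otimes V^{(2,1)})=\{0\}$, contradicting $v\neq0$. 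With these two repairs your proof goes through and is a valid alternative to the paper's group-algebra argument.
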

See appendix \ref{subsec:proof_of_prop_irreducible_distortion_GL} for the proof, while we recall that the definition of an orthogonal irreducible decomposition is given in \eqref{def:orthogonal_decomposition}.\medskip

Summarizing, we have the following corollary for the irreducible decomposition of the distortion with respect to $\GL(\Dim,\mathbb{R})$.
\begin{corollary}\label{cor:GLdecomposition_Distortion}
	The unique orthogonal irreducible decomposition of the distortion tensor with respect to $\GL(\Dim,\mathbb{R})$ maximizing the number of projective invariant tensors within its parts is:
	\begin{equation}
		\tensor{C}{_{abc}}=\tensor{{\accentset{\left(\Yboxdim{2.5pt}\yng(3)\right)}{C}}}{_{abc}}+\tensor{{\accentset{\left(\Yboxdim{2.5pt}\yng(2,1)S\right)}{C}}}{_{abc}}+\tensor{{\accentset{\left(\Yboxdim{2.5pt}\yng(2,1)A\right)}{C}}}{_{abc}}+\tensor{{\accentset{\left(\Yboxdim{2.5pt}\yng(1,1,1)\right)}{C}}}{_{abc}}\,,
	\end{equation}
	with
	\begin{equation}
		\tensor{{\accentset{\left(\Yboxdim{2.5pt}\yng(3)\right)}{C}}}{_{abc}}=\tensor{C}{_{(abc)}}\,,\hspace{0.5cm}
		\tensor{{\accentset{\left(\Yboxdim{2.5pt}\yng(1,1,1)\right)}{C}}}{_{abc}}=\tensor{C}{_{[abc]}}\,,\hspace{0.5cm}
		\tensor{{\accentset{\left(\Yboxdim{2.5pt}\yng(2,1)S\right)}{C}}}{_{abc}}=\tensor{C}{_{(a|b|c)}}-\tensor{{\accentset{\left(\Yboxdim{2.5pt}\yng(3)\right)}{C}}}{_{abc}}\,,\hspace{0.5cm}
		\tensor{{\accentset{\left(\Yboxdim{2.5pt}\yng(2,1)A\right)}{C}}}{_{abc}}=\tensor{C}{_{[a|b|c]}}-\tensor{{\accentset{\left(\Yboxdim{2.5pt}\yng(1,1,1)\right)}{C}}}{_{abc}}\,.\hspace{0.5cm}
	\end{equation}
	The projective invariant tensors of the decomposition are $\tensor{{\accentset{\left(\Yboxdim{2.5pt}\yng(2,1)A\right)}{C}}}{_{abc}}$ and $\tensor{{\accentset{\left(\Yboxdim{2.5pt}\yng(1,1,1)\right)}{C}}}{_{abc}}\,$.
\end{corollary}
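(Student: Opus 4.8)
The plan is to treat this statement as a genuine corollary: assemble the full decomposition from the unique isotypic decomposition of Section \ref{subsec:CentralGLDistortion} together with Proposition \ref{prop:irreducible_distortion_GL}, then read off the explicit component formulas and check the projective-invariance assertions. No new machinery is needed; the work is entirely in combining results that are already in place.

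First I would recall that the isotypic decomposition $\tensor{C}{_{abc}}=\tensor{{\accentset{\left(\Yboxdim{2.5pt}\yng(3)\right)}{C}}}{_{abc}}+\tensor{{\accentset{\left(\Yboxdim{2.5pt}\yng(2,1)\right)}{C}}}{_{abc}}+\tensor{{\accentset{\left(\Yboxdim{2.5pt}\yng(1,1,1)\right)}{C}}}{_{abc}}$ is unique and orthogonal, being realized by the central idempotents $Z^{\mu}$ (see \eqref{eq:isotypic_decomposition_Schur_Weyl} and Remark \ref{rem:isotypic_component_irreducible}). The key structural observation is that any irreducible subspace isomorphic to $V^{\mu}$ sits inside the isotypic component $\mathcal{V}^{\mu}$, so every orthogonal irreducible decomposition of $C$ must refine this isotypic decomposition. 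Since $\tensor{{\accentset{\left(\Yboxdim{2.5pt}\yng(3)\right)}{C}}}{_{abc}}=\tensor{C}{_{(abc)}}$ and $\tensor{{\accentset{\left(\Yboxdim{2.5pt}\yng(1,1,1)\right)}{C}}}{_{abc}}=\tensor{C}{_{[abc]}}$ have multiplicity one, they are already irreducible (Remark \ref{rem:isotypic_component_irreducible}) and admit no further splitting, so the only genuine freedom lies in decomposing the multiplicity-two piece $\tensor{{\accentset{\left(\Yboxdim{2.5pt}\yng(2,1)\right)}{C}}}{_{abc}}$. Proposition \ref{prop:irreducible_distortion_GL} resolves precisely this freedom, and combining the three parts yields both the displayed decomposition and its uniqueness.

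Next I would derive the explicit formulas for the mixed-symmetry pieces. Starting from $\tensor{{\accentset{\left(\Yboxdim{2.5pt}\yng(2,1)\right)}{C}}}{_{abc}}=\tensor{C}{_{abc}}-\tensor{{\accentset{\left(\Yboxdim{2.5pt}\yng(3)\right)}{C}}}{_{abc}}-\tensor{{\accentset{\left(\Yboxdim{2.5pt}\yng(1,1,1)\right)}{C}}}{_{abc}}$ and the definitions in \eqref{eq:irre_decomp_C21}, I would symmetrize (respectively antisymmetrize) over the outer pair $a,c$. Since $\tensor{C}{_{(abc)}}$ is totally symmetric it is annihilated by antisymmetrization and fixed by symmetrization over $a,c$, while dually $\tensor{C}{_{[abc]}}$ is annihilated by symmetrization and fixed by antisymmetrization; this gives at once $\tensor{{\accentset{\left(\Yboxdim{2.5pt}\yng(2,1)S\right)}{C}}}{_{abc}}=\tensor{C}{_{(a|b|c)}}-\tensor{{\accentset{\left(\Yboxdim{2.5pt}\yng(3)\right)}{C}}}{_{abc}}$ and $\tensor{{\accentset{\left(\Yboxdim{2.5pt}\yng(2,1)A\right)}{C}}}{_{abc}}=\tensor{C}{_{[a|b|c]}}-\tensor{{\accentset{\left(\Yboxdim{2.5pt}\yng(1,1,1)\right)}{C}}}{_{abc}}$.

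Finally I would verify the projective-invariance claims from the variation $\delta_\xi \tensor{C}{_{abc}}=g_{ac}\xi_{b}$, which follows from \eqref{eq:projective_transformation} and \eqref{eq:def_distortion}. Because $g$ is symmetric one has $g_{[ac}\xi_{b]}=0$, so $\tensor{C}{_{[abc]}}$ is projective invariant while $g_{(ac}\xi_{b)}\neq 0$ shows $\tensor{C}{_{(abc)}}$ is not; and antisymmetrizing over $a,c$ the known variation $\tfrac{2}{3}\left(g_{ac}\xi_{b}-g_{b(c}\xi_{a)}\right)$ of $\tensor{{\accentset{\left(\Yboxdim{2.5pt}\yng(2,1)\right)}{C}}}{_{abc}}$ gives zero, establishing invariance of $\tensor{{\accentset{\left(\Yboxdim{2.5pt}\yng(2,1)A\right)}{C}}}{_{abc}}$ and non-invariance of its symmetric counterpart. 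I expect the only delicate point to be the reduction of the uniqueness of the whole decomposition to that single multiplicity-two component, which rests on the uniqueness of the isotypic decomposition and on the multiplicity-one components being forced; everything else reduces either to citing the preceding results or to elementary $(anti)$symmetrization identities.
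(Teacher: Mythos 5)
Your proposal is correct and follows essentially the same route as the paper: it assembles the corollary from the unique orthogonal isotypic decomposition of Section \ref{subsec:CentralGLDistortion}, the fact that the multiplicity-one components $\tensor{C}{_{(abc)}}$ and $\tensor{C}{_{[abc]}}$ are already irreducible, and Proposition \ref{prop:irreducible_distortion_GL} for the only ambiguous piece $\tensor{{\accentset{\left(\Yboxdim{2.5pt}\yng(2,1)\right)}{C}}}{_{abc}}$, exactly as the paper's ``summarizing'' argument does. Your explicit derivation of the component formulas by (anti)symmetrizing over $a,c$ and your verification of the invariance claims from $\delta_\xi \tensor{C}{_{abc}}=g_{ac}\xi_b$ are both correct and consistent with the computations in Section \ref{subsec:GLdistortion}.
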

\subsection{A unique $\Or(1,\Dim-1)$ irreducible decomposition}\label{subsec:Odistortion}
Having performed an orthogonal $\GL(\Dim,\mathbb{R})$ irreducible decomposition of the distortion tensor as a first step we are know in position to precise this decomposition to the subgroup $\Or(1,\Dim-1)$ as a second step. The branching rules \eqref{eq:branching_GL_O} give us the following decomposition the irreducible representations \eqref{eq:centralGL_decompositionV3} of $\GL(\Dim,\mathbb{R})$ into irreducible representation of $\Or(1,\Dim-1)$:
\begin{equation}\label{eq:branching_distortion}
	V^{\Yboxdim{3pt}\yng(3)}= D^{\Yboxdim{3pt}\yng(3)}\oplus D^{\Yboxdim{3pt}\yng(1)}\,, \hspace{0.5cm} V^{\Yboxdim{3pt}\yng(2,1)}= D^{\Yboxdim{3pt}\yng(2,1)}\oplus D^{\Yboxdim{3pt}\yng(1)}\,, \hspace{0.5cm} V^{\Yboxdim{3pt}\yng(1,1,1)}= D^{\Yboxdim{3pt}\yng(1,1,1)}\,,
\end{equation}
and one has $V^{\otimes 3}=D^{\Yboxdim{3.5pt}\yng(3)}\oplus 2\, D^{\Yboxdim{3.5pt}\yng(2,1)}\oplus D^{\Yboxdim{3.5pt}\yng(1,1,1)}\oplus 3 \, D^{\Yboxdim{3.5pt}\yng(1)}$. Because some irreducible representations have non trivial multiplicities: $m(\Yboxdim{3.5pt}\yng(2,1))=2$ and $m(\Yboxdim{3.5pt}\yng(1))=3$, the irreducible decomposition of tensors with respect to $\Or(1,\Dim-1)$ is not defined uniquely. Nevertheless, the branching decompositions \eqref{eq:branching_distortion} are multiplicity free, hence projection of a $\GL(\Dim,\mathbb{R})$ irreducible tensor onto an irreducible representation of $\Or(1,\Dim-1)$ is defined uniquely.
\begin{mathematica}[\textit{SymmetricFunctions}]
	NewellLittlewoodRule[$\lambda_1$, $\ldots$, $\lambda_k$]\,\\
	\textit{Returns a list of the partitions (with multiplicities) parameterizing the irreducible representations of $\Or(\Dim,\C)$ appearing in the \
	tensor product of the irreducible representations $D^{\lambda_1}\otimes\ldots \otimes D^{\lambda_k}$.}
\end{mathematica}

\begin{theorem}\label{theo:irreducible_distortion_O}
	Within the two-step decomposition: 
	\begin{itemize}
		\item[i)] The unique orthogonal irreducible decomposition of the distortion tensor with respect to\\ $\Or(1,\Dim-1)$  which maximizes the number of projective invariant tensors within its part is 
		\begin{equation}\label{eq:Irreducible_distortion}
				\hspace{-1.3cm}\tensor{C}{_{\, abc}}=\underbrace{\tensor{{\accentset{\left(\Yboxdim{2.5pt}\yng(3)\right)}{\underline{C}}}}{_{\,abc}}+\tensor{{\accentset{\left(\Yboxdim{2.5pt}\yng(2,1)S\right)}{\underline{C}}}}{_{\,abc}}+\tensor{{\accentset{\left(\Yboxdim{2.5pt}\yng(2,1)A\right)}{\underline{C}}}}{_{\,abc}}+\tensor{{\accentset{\left(\Yboxdim{2.5pt}\yng(1,1,1)\right)}{\underline{C}}}}{_{\,abc}}}_{\text{traceless}}+\underbrace{\tensor{{\accentset{\left(\Yboxdim{2.5pt}\yng(3),\Yboxdim{2.5pt}\yng(1)\right)}{C}}}{_{\, abc}}+\tensor{{\accentset{\left(\Yboxdim{2.5pt}\yng(2,1)S,\Yboxdim{2.5pt}\yng(1)\right)}{C}}}{_{\, abc}}+\tensor{{\accentset{\left(\Yboxdim{2.5pt}\yng(2,1)A,\Yboxdim{2.5pt}\yng(1)\right)}{C}}}{_{\, abc}}}_{\text{full trace}}\,,
		\end{equation}
		where each irreducible components are given by $\,\tensor{\accentset{\left(\Yboxdim{3pt}\mu\,, \lambda \right)}{C}}{_{abc}}=\tensor{(\,\accentset{(\mu)}{C}\,\cdot P^{\lambda}_3)}{_{abc}}$ and $\tensor{{\accentset{\left(\Yboxdim{2.5pt}\yng(1,1,1)\right)}{\underline{C}}}}{_{\,abc}}=\tensor{{\accentset{\left(\Yboxdim{2.5pt}\yng(1,1,1)\right)}{C}}}{_{abc}}$.\smallskip 
		\item[ii)] Introducing the following vector building blocks
			\begin{equation}\label{eq:building_blocks_distortion}
				\tensor{{\accentset{(1)}{B}}}{_a}=\tensor{\accentset{(1)}{C}}{_a}+\tensor{\accentset{(2)}{C}}{_a}+\tensor{\accentset{(3)}{C}}{_a}\,,\hspace{0.5cm}
				\tensor{{\accentset{(2)}{B}}}{_a}=\tensor{\accentset{(1)}{C}}{_a}-\,2\, \tensor{\accentset{(2)}{C}}{_a}+\tensor{\accentset{(3)}{C}}{_a}\,,\hspace{0.5cm}
				\tensor{{\accentset{(3)}{B}}}{_a}=\tensor{\accentset{(1)}{C}}{_a}-\tensor{\accentset{(3)}{C}}{_a}\,,
			\end{equation}
	the explicit expressions of the irreducible tensor components are,
	\begin{equation}\label{eq:O_decomposition_distortion1}
	\hspace{-0.5cm}\begin{array}{lll}
			\tensor{{\accentset{\left(\Yboxdim{2.5pt}\yng(3),\Yboxdim{2.5pt}\yng(1)\right)}{C}}}{_{\, abc}}=\dfrac{g_{(ab}\tensor{{\accentset{(1)}{B}}}{_{c)}}}{\Dim+2}\, \,,\hspace{0.4cm}
			&\tensor{{\accentset{\left(\Yboxdim{2.5pt}\yng(2,1)S,\Yboxdim{2.5pt}\yng(1)\right)}{C}}}{_{\, abc}}=\dfrac{ \, \tensor{{\accentset{(2)}{B}}}{_{(a|}}\tensor{g}{_{b|c)}} -g_{ac} \tensor{{\accentset{(2)}{B}}}{_b} \,}{3\left(\Dim-1\right)}\,,\hspace{0.4cm}
			&\tensor{{\accentset{\left(\Yboxdim{2.5pt}\yng(2,1)A,\Yboxdim{2.5pt}\yng(1)\right)}{C}}}{_{\, abc}}=\dfrac{\tensor{{\accentset{(3)}{B}}}{_{[a|}}\tensor{g}{_{b|c]}}}{\Dim-1}\,,
		\end{array}
		\end{equation}
	and for the totally traceless parts :
		\begin{equation}\label{eq:O_decomposition_distortion2}
		\hspace{-0.5cm}\begin{array}{lll}
			\tensor{{\accentset{\left(\Yboxdim{2.5pt}\yng(3)\right)}{\underline{C}}}}{_{\,abc}}=\tensor{{\accentset{\left(\Yboxdim{2.5pt}\yng(3)\right)}{C}}}{_{abc}}-\tensor{{\accentset{\left(\Yboxdim{2.5pt}\yng(3),\Yboxdim{2.5pt}\yng(1)\right)}{C}}}{_{abc}}\,,\hspace{0.4cm}
			&\tensor{{\accentset{\left(\Yboxdim{2.5pt}\yng(2,1)S\right)}{\underline{C}}}}{_{\,abc}}=\tensor{{\accentset{\left(\Yboxdim{2.5pt}\yng(2,1)S\right)}{C}}}{_{abc}}-\tensor{{\accentset{\left(\Yboxdim{2.5pt}\yng(2,1)S,\Yboxdim{2.5pt}\yng(1)\right)}{C}}}{_{\,abc}}\,,\hspace{0.4cm}
			&\tensor{{\accentset{\left(\Yboxdim{2.5pt}\yng(2,1)A\right)}{\underline{C}}}}{_{\,abc}}=\tensor{{\accentset{\left(\Yboxdim{2.5pt}\yng(2,1)A\right)}{C}}}{_{abc}}-\tensor{{\accentset{\left(\Yboxdim{2.5pt}\yng(2,1)A,\Yboxdim{2.5pt}\yng(1)\right)}{C}}}{_{\,abc}}\,.
		\end{array}
	\end{equation}

		\item[iii)]The traceless tensors $\,\tensor{{\accentset{\left(\Yboxdim{2.5pt}\yng(3)\right)}{\underline{C}}}}{_{\,abc}}$, $\tensor{{\accentset{\left(\Yboxdim{2.5pt}\yng(2,1)S\right)}{\underline{C}}}}{_{\,abc}}$, $\tensor{{\accentset{\left(\Yboxdim{2.5pt}\yng(2,1)A\right)}{\underline{C}}}}{_{\,abc}}$, $\tensor{{\accentset{\left(\Yboxdim{2.5pt}\yng(1,1,1)\right)}{\underline{C}}}}{_{\,abc}}$, and $\,\,\,\tensor{{\accentset{\left(\Yboxdim{2.5pt}\yng(2,1)A,\Yboxdim{2.5pt}\yng(1)\right)}{C}}}{_{\, abc}}$ are the projective invariant tensors of the decomposition \eqref{eq:Irreducible_distortion}.
	\end{itemize}
\end{theorem}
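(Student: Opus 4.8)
The plan is to exploit the two-step structure and reduce almost everything to the $\GL(\Dim,\R)$-decomposition already established. Since the distortion is a rank-three tensor, we have $n=3\leqslant 5$, so Lemma \ref{lem:multiplicity_free_GL_O} applies and the branching $\GL(\Dim,\C)\downarrow\Or(\Dim,\C)$ is multiplicity free. Consequently each $\GL(\Dim,\R)$-irreducible summand of Corollary \ref{cor:GLdecomposition_Distortion} restricts to a \emph{uniquely} determined sum of $\Or(1,\Dim-1)$-irreducibles, realized by the isotypic projectors $P^{\lambda}_3\in\bn(\Dim)$, and the branching data \eqref{eq:branching_distortion} dictates exactly which $\Or$-irreducibles occur. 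This is the backbone of all three parts.

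For part \textit{i)}, uniqueness within two-step decompositions is then immediate: the first step is fixed by Corollary \ref{cor:GLdecomposition_Distortion} (itself the unique orthogonal $\GL(\Dim,\R)$-irreducible decomposition maximizing projective invariants, by Proposition \ref{prop:irreducible_distortion_GL}), and the second step carries \emph{no} freedom by multiplicity-freeness. Orthogonality is inherited because the $P^{\lambda}_3$ are isotypic (central, since $\Dim\geqslant n-1$) projectors whose images are mutually orthogonal for the canonical scalar product \eqref{eq:scalar_product}, as recorded in Remark \ref{rem:isotypic_component_irreducible}. The only genuinely new point is maximality of the projective-invariant count, which I would settle by showing (see part \textit{iii)}) that the projective variation of $C$ lies entirely in the trace sector; trace subtraction therefore neither creates nor destroys invariance beyond what the $\GL$-step already forces, so the invariant count is maximal precisely because the first step maximizes it.

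For part \textit{ii)}, the computation is concrete. I would form the three independent traces of $C_{abc}$, assemble the building blocks $\accentset{(i)}{B}_a$ of \eqref{eq:building_blocks_distortion}, and apply $P^{\Yboxdim{3pt}\yng(1)}_3$ (projection onto the vector component $D^{\Yboxdim{3pt}\yng(1)}$) to each $\GL$-irreducible piece. Each trace part is necessarily of the form $g\otimes(\text{vector})$ carrying the appropriate Young symmetry; fixing the normalizations $\tfrac{1}{\Dim+2}$, $\tfrac{1}{3(\Dim-1)}$, $\tfrac{1}{\Dim-1}$ by demanding that the trace part reproduce the corresponding trace of $C$ yields \eqref{eq:O_decomposition_distortion1}, and the totally traceless pieces are then \emph{defined} by subtraction in \eqref{eq:O_decomposition_distortion2}. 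Verifying that the underlined tensors are genuinely traceless is a direct contraction check, and summing all seven pieces must return $C_{abc}$.

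For part \textit{iii)}, the key computation is that under \eqref{eq:projective_transformation} one finds $\delta_\xi C_{abc}=g_{ac}\,\xi_b$, a pure-metric term; every totally traceless projector annihilates it, so all underlined tensors are automatically projective invariant. For the trace sector I would compute $\delta_\xi\accentset{(1)}{C}_a=\xi_a$, $\delta_\xi\accentset{(2)}{C}_a=\Dim\,\xi_a$ (cross-checked through $\accentset{(2)}{C}=-\tfrac12 Q$ together with $\delta_\xi Q_{\sigma\alpha\beta}=-2\,\xi_\sigma g_{\alpha\beta}$) and $\delta_\xi\accentset{(3)}{C}_a=\xi_a$, giving $\delta_\xi\accentset{(1)}{B}_a=(\Dim+2)\xi_a$, $\delta_\xi\accentset{(2)}{B}_a=-2(\Dim-1)\xi_a$, and $\delta_\xi\accentset{(3)}{B}_a=0$. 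Thus $\accentset{(3)}{B}$ is the unique invariant trace direction, so only the $\Yboxdim{3pt}\yng(2,1)A$-trace part survives as projective invariant, completing the list. The main obstacle is the maximality claim in part \textit{i)}: although uniqueness and orthogonality reduce cleanly to the $\GL$-step, one must track the projective variation through the trace subtraction and confirm that $\accentset{(3)}{B}$ is the \emph{only} invariant trace combination, which is what ultimately pins down the invariant count.
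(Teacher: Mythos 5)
Your proposal is correct and follows essentially the same route as the paper's proof: uniqueness of part \textit{i)} from Corollary \ref{cor:GLdecomposition_Distortion} together with the multiplicity-free branchings \eqref{eq:branching_distortion}, part \textit{ii)} by direct computation with the projector $P^{\Yboxdim{3pt}\yng(1)}_3$, and part \textit{iii)} from the observation that $\delta_\xi C_{abc}=g_{ac}\,\xi_b$ is proportional to the metric, hence annihilated by the traceless projection. The only (minor, harmless) difference is how the trace-sector invariance in part \textit{iii)} is pinned down: you compute the projective variations of the three trace vectors directly and find that $\accentset{(3)}{B}_a$ is the unique invariant combination, whereas the paper deduces invariance of $\accentset{\left(\Yboxdim{2.5pt}\yng(2,1)A,\Yboxdim{2.5pt}\yng(1)\right)}{C}$ as the difference of the $\GL$-level invariant $\accentset{\left(\Yboxdim{2.5pt}\yng(2,1)A\right)}{C}$ and its (invariant) traceless part; your version is slightly more computational but has the bonus of showing explicitly that this is the \emph{only} invariant trace direction, which is exactly what the maximality claim needs.
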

\begin{proof}
	The uniqueness of the decomposition is a result of Corollary \eqref{cor:GLdecomposition_Distortion} and of the multiplicity free property of the branching decompositions \eqref{eq:branching_distortion}. From the Schur-Weyl duality \eqref{eq:central_decomposition_O} one has indeed $\,\tensor{\accentset{\left(\Yboxdim{3pt}\mu\,, \lambda \right)}{C}}{_{abc}}=\tensor{(\,\accentset{(\mu)}{C}\,\cdot P^{\lambda}_3)}{_{abc}}$ and the expression of the projector $P^{\Yboxdim{3.pt}\yng(1)}_3$ used for the decomposition is given in equation \eqref{eq:proj_Distortion} in terms of the conjugacy class sum basis of the centralizer $\mathcal{C}_3$ of $\Sn{3}$ in $\Bn{3}(\Dim)$. The second point and the definition of the building blocks $\tensor{{\accentset{(1)}{B}}}{_a}$, $\tensor{{\accentset{(2)}{B}}}{_a}$, $\tensor{{\accentset{(3)}{B}}}{_a}$ are the results of direct calculations which are detailed in the Mathematica notebook \cite{xMAGdistortion}. Let $\nabla$ and $\hat\nabla$ be two projectively equivalent connection: $\tensor{\hat C}{_{abc}} = \tensor{C}{_{abc}}+\tensor{\delta_\xi C}{_{abc}}$. Because $\tensor{\delta_\xi C}{_{abc}}$ is proportional to the metric, its traceless part is zero. In other words $\tensor{\delta_\xi C}{_{abc}}$ belong to the full trace subspace of $V^{\otimes 3}$ which is orthogonal to the traceless subspace \eqref{eq:trace_decomposition}.
	As a direct consequence $\,\tensor{{\accentset{\left(\Yboxdim{2.5pt}\yng(3)\right)}{\underline{C}}}}{_{\,abc}}$, $\tensor{{\accentset{\left(\Yboxdim{2.5pt}\yng(2,1)S\right)}{\underline{C}}}}{_{\,abc}}$, $\tensor{{\accentset{\left(\Yboxdim{2.5pt}\yng(2,1)A\right)}{\underline{C}}}}{_{\,abc}}$, and  $\tensor{{\accentset{\left(\Yboxdim{2.5pt}\yng(1,1,1)\right)}{\underline{C}}}}{_{\,abc}}$ are projective invariant. Because $\tensor{{\accentset{\left(\Yboxdim{2.5pt}\yng(2,1)A\right)}{C}}}{_{abc}}$ and $\tensor{{\accentset{\left(\Yboxdim{2.5pt}\yng(2,1)A\right)}{\underline{C}}}}{_{\,abc}}$ are projective invariant so is $\,\,\,\tensor{{\accentset{\left(\Yboxdim{2.5pt}\yng(2,1)A,\Yboxdim{2.5pt}\yng(1)\right)}{C}}}{_{\, abc}}$.
\end{proof}
\begin{remark} In terms of the traces of non-metricity and torsion \eqref{eq:vectors_nm_torsion} the building blocks \eqref{eq:building_blocks_distortion} are given by
	\begin{equation*}
		\tensor{{\accentset{(1)}{B}}}{_a}=-\tensor{\tilde{Q}}{_a}-\frac{1}{2}\,\tensor{Q}{_a}\,, \hspace{0.5cm} \tensor{{\accentset{(2)}{B}}}{_a}=\tensor{Q}{_a}-\tensor{\tilde{Q}}{_a}\,,\hspace{0.5cm} \tensor{{\accentset{(3)}{B}}}{_a}=\tensor{Q}{_a}-\tensor{\tilde{Q}}{_a}+\,2\, \tensor{T}{_a}\,. 
	\end{equation*} 
\end{remark}
For the cases $\Dim< 3$ one has the following proposition.
\begin{proposition}\label{prop:small_d_distortion} For $\Dim < 3$ the irreducible decomposition of the distortion tensor of Theorem \ref{theo:irreducible_distortion_O} reduces to:
\begin{itemize}
\item[\it{i)}] For $\Dim=1$, 
\begin{equation}
	\tensor{C}{_a_b_c}=\tensor{{\accentset{\left(\Yboxdim{2.5pt}\yng(3),\Yboxdim{2.5pt}\yng(1)\right)}{C}}}{_{\, abc}}\,.
\end{equation}
\item[\it{ii)}] For $\Dim=2$, 
\begin{equation}
	\tensor{C}{_a_b_c}=\tensor{{\accentset{\left(\Yboxdim{2.5pt}\yng(3)\right)}{\underline{C}}}}{_{\,abc}}+\tensor{{\accentset{\left(\Yboxdim{2.5pt}\yng(3),\Yboxdim{2.5pt}\yng(1)\right)}{C}}}{_{\, abc}}+\tensor{{\accentset{\left(\Yboxdim{2.5pt}\yng(2,1)S,\Yboxdim{2.5pt}\yng(1)\right)}{C}}}{_{\, abc}}+\tensor{{\accentset{\left(\Yboxdim{2.5pt}\yng(2,1)A,\Yboxdim{2.5pt}\yng(1)\right)}{C}}}{_{\, abc}}\,.
\end{equation}
\end{itemize}
\end{proposition}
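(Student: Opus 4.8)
The plan is to specialize the general-dimension decomposition of Theorem~\ref{theo:irreducible_distortion_O} to $\Dim=1$ and $\Dim=2$ by deciding which of its seven irreducible components can be nonzero, using only the two dimensional selection rules already recorded in \eqref{eq:P_d} and \eqref{eq:Lambda_d}. Each traceless component $\accentset{(\mu)}{\underline{C}}$ lies in $D^{\mu}\subset V^{\mu}$, and each full-trace component $\accentset{(\mu,\lambda)}{C}$ lies in $D^{\lambda}\subset V^{\mu}$ with $\lambda=\Yboxdim{3pt}\yng(1)$. Hence such a component can be nonzero only if \emph{both} its general-linear parent survives, i.e. $\mu\in\Par_{3}(\Dim)$, equivalently $\mu_{1}'\leqslant\Dim$, \emph{and} its orthogonal constituent survives, i.e. $\lambda\in\Lambda_{3}(\Dim)$, equivalently $\lambda_{1}'+\lambda_{2}'\leqslant\Dim$. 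Since every $(\mu,\lambda)$ pair occurring in Theorem~\ref{theo:irreducible_distortion_O} already appears in the stable branching \eqref{eq:branching_distortion}, the whole argument reduces to checking these two inequalities component by component.

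First I would record the relevant column heights. For the four traceless pieces the pairs are $(\Yboxdim{3pt}\yng(3),\Yboxdim{3pt}\yng(3))$, then $(\Yboxdim{3pt}\yng(2,1),\Yboxdim{3pt}\yng(2,1))$ for each of the $S$ and $A$ copies, and $(\Yboxdim{3pt}\yng(1,1,1),\Yboxdim{3pt}\yng(1,1,1))$; here $\mu_{1}'=1,2,2,3$ while $\lambda_{1}'+\lambda_{2}'=2,3,3,3$. For the three full-trace pieces $\lambda=\Yboxdim{3pt}\yng(1)$, so $\lambda_{1}'+\lambda_{2}'=1$, while the parent has $\mu_{1}'=1$ for $\accentset{(\Yboxdim{3pt}\yng(3),\Yboxdim{3pt}\yng(1))}{C}$ and $\mu_{1}'=2$ for $\accentset{(\Yboxdim{3pt}\yng(2,1)S,\Yboxdim{3pt}\yng(1))}{C}$ and $\accentset{(\Yboxdim{3pt}\yng(2,1)A,\Yboxdim{3pt}\yng(1))}{C}$.

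For $\Dim=1$ the rules demand $\mu_{1}'\leqslant 1$ and $\lambda_{1}'+\lambda_{2}'\leqslant 1$. The first inequality eliminates every component built on $\mu=\Yboxdim{3pt}\yng(2,1)$ or $\mu=\Yboxdim{3pt}\yng(1,1,1)$, and the second eliminates $\accentset{(\Yboxdim{3pt}\yng(3))}{\underline{C}}$; only $\accentset{(\Yboxdim{3pt}\yng(3),\Yboxdim{3pt}\yng(1))}{C}$ passes both, which is item (i). For $\Dim=2$ the rules become $\mu_{1}'\leqslant 2$ and $\lambda_{1}'+\lambda_{2}'\leqslant 2$: the component on $\mu=\Yboxdim{3pt}\yng(1,1,1)$ disappears, the traceless pieces $\accentset{(\Yboxdim{3pt}\yng(2,1)S)}{\underline{C}}$ and $\accentset{(\Yboxdim{3pt}\yng(2,1)A)}{\underline{C}}$ are killed by $\lambda_{1}'+\lambda_{2}'=3>2$, and the survivors are exactly $\accentset{(\Yboxdim{3pt}\yng(3))}{\underline{C}}$, $\accentset{(\Yboxdim{3pt}\yng(3),\Yboxdim{3pt}\yng(1))}{C}$, $\accentset{(\Yboxdim{3pt}\yng(2,1)S,\Yboxdim{3pt}\yng(1))}{C}$ and $\accentset{(\Yboxdim{3pt}\yng(2,1)A,\Yboxdim{3pt}\yng(1))}{C}$, which is item (ii). It remains only to note that each survivor is generically nonvanishing, so that the reduced lists are genuinely complete.

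The one genuinely delicate point is the logical order of the two selection rules. For $\Dim=1$ the two full-trace tensors $\accentset{(\Yboxdim{3pt}\yng(2,1)S,\Yboxdim{3pt}\yng(1))}{C}$ and $\accentset{(\Yboxdim{3pt}\yng(2,1)A,\Yboxdim{3pt}\yng(1))}{C}$ \emph{do} satisfy the orthogonal constraint $\lambda_{1}'+\lambda_{2}'=1\leqslant 1$, so they are removed \emph{not} by the orthogonal rule but because their general-linear parent $V^{\Yboxdim{3pt}\yng(2,1)}$ is already absent at $\Dim=1$; this is also precisely what renders the apparent pole at $\Dim=1$ in the formula \eqref{eq:O_decomposition_distortion1} for $\accentset{(\Yboxdim{3pt}\yng(2,1)S,\Yboxdim{3pt}\yng(1))}{C}$ harmless. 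Recognising that a component can die either through its $\mu$-label or through its $\lambda$-label, and keeping these two mechanisms straight, is the heart of the verification; everything else is the bookkeeping of the inequalities above.
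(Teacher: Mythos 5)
Your elimination arguments are correct and are in substance the paper's own proof: a component dies either because its $\GL$ parent $\mu$ violates $\mu_1'\leqslant\Dim$, so that $\mu\notin\Par_3(\Dim)$ and the isotypic projection $C\cdot Z^{\mu}$ already vanishes, or because its orthogonal label $\lambda$ violates $\lambda_1'+\lambda_2'\leqslant\Dim$, so that $D^{\lambda}=0$ and the projector $P^{\lambda}_3$ annihilates everything. Your component-by-component bookkeeping, including the observation that the apparent pole at $\Dim=1$ in \eqref{eq:O_decomposition_distortion1} is harmless because the parent $V^{(2,1)}$ is already absent, reproduces the paper's reasoning exactly. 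Moreover, since the seven components of Theorem \ref{theo:irreducible_distortion_O} sum identically to $C_{abc}$, elimination alone already establishes the two displayed equations.

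The gap is your closing step, ``it remains only to note that each survivor is generically nonvanishing.'' The principle you implicitly invoke --- that a pair $(\mu,\lambda)$ passing both dimensional selection rules automatically contributes a nonzero component --- is false, and the paper itself contains the counterexample: for the Riemann tensor at $\Dim=2$ one has $\mu=(2,2)\in\Par_4(2)$ and $\lambda=(2)\in\Lambda_4(2)$, yet the corresponding component is identically zero, because $\dim V^{(2,2)}=1<2=\dim D^{(2)}$ (see the proof of Proposition \ref{prop:small_d_Riemann} and Example \ref{ex:Branching_O_GL}). The reason is that once the parent $\mu$ falls outside $\Lambda_n(\Dim)$, the Littlewood restriction rule \eqref{eq:Littlewood_restriction_rule_1} no longer applies, and the true branching multiplicity $\overline{C}^{\,\mu}_{\lambda}(\Dim)$ may vanish even though $\lambda\in\Lambda_n(\Dim)$. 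This is precisely the situation of your two surviving components with parent $(2,1)$ at $\Dim=2$, since $(2,1)\notin\Lambda_3(2)$: their presence requires a separate argument, which the paper supplies by noting that $V^{(2,1)}$ is nonzero in $V^{\otimes 3}$ while its traceless constituent $D^{(2,1)}$ vanishes, so that $D^{(1)}$ must occur in it; equivalently, $\Dim=2=n-1$ places $B_3$ in the limiting semisimple regime, where Lemma \ref{lem:Branching_0_GL_Bn_semisimple} gives $\overline{C}^{\,(2,1)}_{(1)}(2)=C^{(2,1)}_{(1)}=1$. As written, your list of survivors is only an upper bound, and your method applied verbatim to the Riemann tensor of Proposition \ref{prop:small_d_Riemann} would return a wrong answer.
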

\begin{proof}\hphantom{c}\smallskip
\begin{itemize}
	\item[\it i)] This part is quite obvious but let us still detail the proof in the light of \eqref{eq:GL_decomposition} and \eqref{eq:O_decomposition}. All traceless tensors in \eqref{eq:O_decomposition_distortion2} are identically zero because the length of the first two columns of each diagram is greater than 1, that is they do not belong to $\Lambda_3(1)$ (recall the definition of $\Lambda_n(\Dim)$ in \eqref{eq:Lambda_d} and \eqref{eq:O_decomposition}).\smallskip
	
	 The expressions for $\,\,\tensor{{\accentset{\left(\Yboxdim{2.5pt}\yng(2,1)A,\Yboxdim{2.5pt}\yng(1)\right)}{C}}}{_{\, abc}}\,\,$ and $\,\,\tensor{{\accentset{\left(\Yboxdim{2.5pt}\yng(2,1)S,\Yboxdim{2.5pt}\yng(1)\right)}{C}}}{_{\, abc}}\,\,$ in \eqref{eq:O_decomposition_distortion1} are singular. In this case, because $\,\Yboxdim{5.5pt}\yng(2,1)\,\notin \mathcal{P}_3(1)$ (recall the definition of $\mathcal{P}_n(\Dim)$ \eqref{eq:P_d} and \eqref{eq:GL_decomposition}) the associated irreducible $\GL(\Dim,\R)$ representation is not present in the decomposition. Hence, the tensors $\,\,\tensor{{\accentset{\left(\Yboxdim{2.5pt}\yng(2,1)A,\Yboxdim{2.5pt}\yng(1)\right)}{C}}}{_{\, abc}}$ and $\,\,\tensor{{\accentset{\left(\Yboxdim{2.5pt}\yng(2,1)S,\Yboxdim{2.5pt}\yng(1)\right)}{C}}}{_{\, abc}}$ are identically zero. Finally, one has $\Yboxdim{5pt}\yng(3)\in \mathcal{P}_3(1)$ and $\Yboxdim{5pt}\yng(1)\in \Lambda_3(1)$ and the result follows.\smallskip
	 
\item[\it ii)] For the traceless irreducible tensors one has $\Yboxdim{5pt}\yng(3)\in \Lambda_3(2)$,  while $\Yboxdim{5pt}\yng(2,1)\,,\,\Yboxdim{5pt}\yng(1,1,1)\notin\Lambda_3(2)$. Hence the only non-zero traceless component is $\tensor{{\accentset{\left(\Yboxdim{2.5pt}\yng(3)\right)}{\underline{C}}}}{_{\,abc}}$.\medskip 

For the vector component $\tensor{{\accentset{\left(\Yboxdim{2.5pt}\yng(3),\Yboxdim{2.5pt}\yng(1)\right)}{C}}}{_{\, abc}}$ one has $\Yboxdim{5pt}\yng(3)\in \Lambda_3(2)$ and hence the Littlewood's restriction rules \ref{rem:littlewood_restriction_rule} apply: $\tensor{{\accentset{\left(\Yboxdim{2.5pt}\yng(3),\Yboxdim{2.5pt}\yng(1)\right)}{C}}}{_{\, abc}}$ is present in the decomposition. For the other two components $\,\,\tensor{{\accentset{\left(\Yboxdim{2.5pt}\yng(2,1)S,\Yboxdim{2.5pt}\yng(1)\right)}{C}}}{_{\, abc}}$ and $\,\,\tensor{{\accentset{\left(\Yboxdim{2.5pt}\yng(2,1)A,\Yboxdim{2.5pt}\yng(1)\right)}{C}}}{_{\, abc}}$, the Littlewood's restriction rule does not apply but the irreducible representation $V^{\Yboxdim{3.5pt}\yng(2,1)}$ is present with multiplicity two ($\Yboxdim{5pt}\yng(2,1)\in \mathcal{P}_3(2)$). Hence $\,\,\tensor{{\accentset{\left(\Yboxdim{2.5pt}\yng(2,1)S,\Yboxdim{2.5pt}\yng(1)\right)}{C}}}{_{\, abc}}$ and $\,\,\tensor{{\accentset{\left(\Yboxdim{2.5pt}\yng(2,1)A,\Yboxdim{2.5pt}\yng(1)\right)}{C}}}{_{\, abc}}$ are present in the decomposition.
\end{itemize}
\end{proof}
\vskip 4pt
\section{Irreducible decompositions of the Riemann tensor}\label{sec:Riemann_Decomposition}

In this section, unless otherwise specified we assume $\Dim\geqslant 4$. A Mathematica notebook going through all the steps presented here is available on gitHub \cite{xMAGRiemann}.

\subsection{$\GL(\Dim,\mathbb{R})$ isotypic decomposition}\label{subsec:CentralGLRiemann}

In metric-affine gravity the Riemann tensor associated with the independent connection $\nabla$ is antisymmetric in the first pair of indices, and has no other symmetries with respect to permutation of indices. Therefore we identify the space of metric-affine Riemann tensors $V_{\mathcal{R}}$ as the tensor product $V^{\Yboxdim{3pt}\yng(1,1)}\otimes V^{\Yboxdim{3pt}\yng(1)}\otimes V^{\Yboxdim{3pt}\yng(1)}$ of $\GL(\Dim,\mathbb{R})$ irreducible representations. Note that in metric-affine gravity the algebraic Bianchi identity is not an identity for the Riemann tensor $\tensor{\mathcal{R}}{_{abcd}}$ per se as it involves covariant derivatives of the torsion tensor\footnote{Expanding the algebraic Bianchi identity in terms of the distortion tensor using \eqref{eq:def_distortion} one recover the algebraic Bianchi identity of the Riemann tensor $R$ of (pseudo)-Riemannian geometry which is related to the Jacobi identity for the Lie bracket of vector fields.}.
Assuming $\Dim\geqslant4$, the Littlewood-Richardson rule give us the following direct sum decomposition of $V_{\mathcal{R}}$ into irreducible representations of $\GL(\Dim,\mathbb{R})$: 
\begin{equation}\label{eq:centralGL_decompositionVR}
	V_{\mathcal{R}}=\, V^{\Yboxdim{3pt}\yng(3,1)}\oplus \, V^{\Yboxdim{3pt}\yng(2,2)}\oplus \,2 \, V^{\Yboxdim{3pt}\yng(2,1,1)} \oplus V^{\Yboxdim{3pt}\yng(1,1,1,1)} \, .
\end{equation}
We define $\tensor{{\accentset{\left(\Yboxdim{2.5pt}\yng(3,1)\right)}{\mathcal{R}}}}{_{abcd}}$, $\tensor{{\accentset{\left(\Yboxdim{2.5pt}\yng(2,2)\right)}{\mathcal{R}}}}{_{abcd}}$, $\tensor{{\accentset{\left(\Yboxdim{2.5pt}\yng(2,1,1)\right)}{\mathcal{R}}}}{_{abcd}}$ and $\tensor{{\accentset{\left(\Yboxdim{2.5pt}\yng(1,1,1,1)\right)}{\mathcal{R}}}}{_{abcd}}$ as the image of the action of the central Young idempotents on the Riemann tensor, namely: 
\begin{equation}\label{eq:def_central_Riemann}
	\tensor{{\accentset{\left(\Yboxdim{2.5pt}\yng(3,1)\right)}{\mathcal{R}}}}{_{abcd}}:=\tensor{{(\mathcal{R}\cdot Z^{\Yboxdim{2.5pt}\yng(3,1)})}}{_{abcd}}\,,\hspace{0.5cm}
	\tensor{{\accentset{\left(\Yboxdim{2.5pt}\yng(2,2)\right)}{\mathcal{R}}}}{_{abcd}}:=\tensor{{(\mathcal{R}\cdot Z^{\Yboxdim{2.5pt}\yng(2,2)})}}{_{abcd}}\,,\hspace{0.5cm}
	\tensor{{\accentset{\left(\Yboxdim{2.5pt}\yng(2,1,1)\right)}{\mathcal{R}}}}{_{abcd}}:=\tensor{{(\mathcal{R}\cdot Z^{\Yboxdim{2.5pt}\yng(2,1,1)})}}{_{abcd}}\,,\hspace{0.5cm}
	\tensor{\accentset{\left(\Yboxdim{2.5pt}\yng(1,1,1,1)\right)}{\mathcal{R}}}{_{abcd}}:=\tensor{(\mathcal{R}\cdot Z^{\Yboxdim{2.5pt}\yng(1,1,1,1)})}{_{abcd}}\,.
\end{equation}
The above central Young idempotents are expressed in the conjugacy class sum basis of the center $\mathcal{Z}_4$ of $\C\Sn{4}$ in the equations \eqref{eq:centralYoung_4}. Explicit computations yield the following expressions in terms of symmetrization of indices of the Riemann tensor:
\begin{equation}\label{eq:central_Riemann_GL_explicit}
	\begin{array}{ll}
		\tensor{{\accentset{\left(\Yboxdim{2.5pt}\yng(3,1)\right)}{\mathcal{R}}}}{_{abcd}}=\dfrac{3}{4}\left(\tensor{{\mathcal{R}}}{_{a(bcd)}}+\tensor{{\mathcal{R}}}{_{(a|b|cd)}}\right),\hspace{0.3cm}
		&\tensor{{\accentset{\left(\Yboxdim{2.5pt}\yng(2,2)\right)}{\mathcal{R}}}}{_{abcd}}=\tensor{{\mathcal{R}}}{_{[ab]cd}}-\tensor{{\accentset{\left(\Yboxdim{2.5pt}\yng(1,1,1,1)\right)}{\mathcal{R}}}}{_{abcd}}-\tensor{{\accentset{\left(\Yboxdim{2.5pt}\yng(2,1,1)\right)}{\mathcal{R}}}}{_{abcd}}-\tensor{{\accentset{\left(\Yboxdim{2.5pt}\yng(3,1)\right)}{\mathcal{R}}}}{_{abcd}}\,,\\
		\tensor{{\accentset{\left(\Yboxdim{2.5pt}\yng(2,1,1)\right)}{\mathcal{R}}}}{_{abcd}}=\dfrac{3}{4}\left(\tensor{{\mathcal{R}}}{_{[abc]d}}+\tensor{{\mathcal{R}}}{_{[ab|c|d]}}+\tensor{{\mathcal{R}}}{_{[a|c|bd]}}-\tensor{{\mathcal{R}}}{_{[a|d|bc]}}\right)\,,\hspace{0.3cm}
		&\tensor{\accentset{\left(\Yboxdim{2.5pt}\yng(1,1,1,1)\right)}{\mathcal{R}}}{_{abcd}}=\tensor{{\mathcal{R}}}{_{[abcd]}}\,.
	\end{array}
\end{equation}
The only component of this decomposition which is not irreducible and contains two equivalent irreducible pieces is $\tensor{{\accentset{\left(\Yboxdim{2.5pt}\yng(2,1,1)\right)}{\mathcal{R}}}}{_{abcd}}$. The other components are already irreducibles and defined uniquely. They enjoy the following symmetries of indices and Young symmetries: 
\begin{equation}
	\begin{aligned}
		&\tensor{{\accentset{\left(\Yboxdim{2.5pt}\yng(3,1)\right)}{\mathcal{R}}}}{_{[ab](cd)}}=\tensor{{\accentset{\left(\Yboxdim{2.5pt}\yng(3,1)\right)}{\mathcal{R}}}}{_{abcd}}\,,\hspace{1cm}\tensor{{\accentset{\left(\Yboxdim{2.5pt}\yng(3,1)\right)}{\mathcal{R}}}}{_{[abc]d}}=\tensor{{\accentset{\left(\Yboxdim{2.5pt}\yng(3,1)\right)}{\mathcal{R}}}}{_{[ab|c|d]}}=0 \, ,\\
		&
		\tensor{{\accentset{\left(\Yboxdim{2.5pt}\yng(2,2)\right)}{\mathcal{R}}}}{_{[ab][cd]}}=\tensor{{\accentset{\left(\Yboxdim{2.5pt}\yng(2,2)\right)}{\mathcal{R}}}}{_{abcd}}\,,\hspace{1cm}\tensor{{\accentset{\left(\Yboxdim{2.5pt}\yng(2,2)\right)}{\mathcal{R}}}}{_{[abc]d}}=\tensor{{\accentset{\left(\Yboxdim{2.5pt}\yng(2,2)\right)}{\mathcal{R}}}}{_{[ab|c|d]}}=0\,.
	\end{aligned}
\end{equation}

\subsection{$\GL(\Dim,\mathbb{R})$ irreducible decomposition and projective invariance}\label{subsec:GLRiemann}
To define a unique orthogonal $\GL(\Dim,\mathbb{R})$ irreducible decomposition of $\tensor{{\accentset{\left(\Yboxdim{2.5pt}\yng(2,1,1)\right)}{\mathcal{R}}}}{_{abcd}}$ (recall the definition of orthogonal irreducible decomposition in \eqref{def:orthogonal_decomposition}), and hence of the Riemann tensor, we follow the same guideline as for the distortion tensor. That is, we require that one of the irreducible pieces in $\tensor{{\accentset{\left(\Yboxdim{2.5pt}\yng(2,1,1)\right)}{\mathcal{R}}}}{_{abcd}}$ be projective invariant. We are looking for $Y_{S},Y_{A}\in \mathbb{R}\mathfrak{S}_4$  such that $\accentset{\left(\Yboxdim{2.5pt}\yng(2,1,1)\right)}{\mathcal{R}}= \accentset{\left(\Yboxdim{2.5pt}\yng(2,1,1)\right)}{\mathcal{R}}\cdot Y_{S}+\accentset{\left(\Yboxdim{2.5pt}\yng(2,1,1)\right)}{\mathcal{R}}\cdot Y_{A}$, with $\delta_\xi (\accentset{\left(\Yboxdim{2.5pt}\yng(2,1,1)\right)}{\mathcal{R}}\cdot Y_{S})\neq 0$ and  $\delta_\xi (\accentset{\left(\Yboxdim{2.5pt}\yng(2,1,1)\right)}{\mathcal{R}}\cdot Y_{A})= 0$, and we want $Y_{S}$ and $Y_{A}$ to realize an orthogonal irreducible decomposition of $\accentset{\left(\Yboxdim{2.5pt}\yng(2,1,1)\right)}{\mathcal{R}}$. Under a change of the connection \eqref{eq:projective_transformation}, $\tensor{{\accentset{\left(\Yboxdim{2.5pt}\yng(2,1,1)\right)}{\mathcal{R}}}}{_{abcd}}$ transforms as 
\begin{equation}
\tensor{{\accentset{\left(\Yboxdim{2.5pt}\yng(2,1,1)\right)}{\mathcal{R}}}}{_{abcd}}\to  \tensor{{\accentset{\left(\Yboxdim{2.5pt}\yng(2,1,1)\right)}{\mathcal{R}}}}{_{abcd}}+\tensor{{\delta_\xi \accentset{\left(\Yboxdim{2.5pt}\yng(2,1,1)\right)}{R}}}{_{abcd}}\,,
\end{equation}
with 
\begin{equation}
\tensor{{\delta_\xi \accentset{\left(\Yboxdim{2.5pt}\yng(2,1,1)\right)}{R}}}{_{abcd}}=\tensor{g}{_c_d}\tensor{\accentset{(g)}{\nabla}}{_{[a}}\tensor{\xi}{_{c]}}+\dfrac{1}{2}\left(\tensor{g}{_{ac}}\tensor{\accentset{(g)}{\nabla}}{_{[b}}\tensor{\xi}{_{d]}}+\tensor{g}{_{ad}}\tensor{\accentset{(g)}{\nabla}}{_{[b}}\tensor{\xi}{_{c]}}-\tensor{g}{_{bc}}\tensor{\accentset{(g)}{\nabla}}{_{[a}}\tensor{\xi}{_{d]}}-\tensor{g}{_{bd}}\tensor{\accentset{(g)}{\nabla}}{_{[a}}\tensor{\xi}{_{c]}}\right).
\end{equation}
Notice that $\tensor{{\delta_\xi \accentset{\left(\Yboxdim{2.5pt}\yng(2,1,1)\right)}{R}}}{_{abcd}}$ enjoys the following symmetry of indices 
\begin{equation}
	\tensor{{\delta_\xi \accentset{\left(\Yboxdim{2.5pt}\yng(2,1,1)\right)}{R}}}{_{ab(cd)}}=\tensor{{\delta_\xi \accentset{\left(\Yboxdim{2.5pt}\yng(2,1,1)\right)}{R}}}{_{abcd}}\,,
\end{equation}
which gives use a natural candidate for the sought-after decomposition. We write
\begin{equation}\label{eq:irre_decomp_R211}
	\tensor{{\accentset{\left(\Yboxdim{2.5pt}\yng(2,1,1)\right)}{\mathcal{R}}}}{_{abcd}}=\tensor{{\accentset{\left(\Yboxdim{2.5pt}\yng(2,1,1) S\right)}{\mathcal{R}}}}{_{abcd}}+\tensor{{\accentset{\left(\Yboxdim{2.5pt}\yng(2,1,1)A\right)}{\mathcal{R}}}}{_{abcd}}\,, \hspace{1cm}\text{with}\hspace{1cm} \tensor{{\accentset{\left(\Yboxdim{2.5pt}\yng(2,1,1)S\right)}{\mathcal{R}}}}{_{abcd}}:=\tensor{{\accentset{\left(\Yboxdim{2.5pt}\yng(2,1,1)\right)}{\mathcal{R}}}}{_{ab(cd)}}\hspace{0.5cm}\text{and}\hspace{0.5cm} \tensor{{\accentset{\left(\Yboxdim{2.5pt}\yng(2,1,1)A\right)}{\mathcal{R}}}}{_{abcd}}:=\tensor{{\accentset{\left(\Yboxdim{2.5pt}\yng(2,1,1)\right)}{\mathcal{R}}}}{_{ab[cd]}}\,.
\end{equation}
The irreducible tensors $\tensor{{\accentset{\left(\Yboxdim{2.5pt}\yng(2,1,1)S\right)}{\mathcal{R}}}}{_{abcd}}$ and $\tensor{{\accentset{\left(\Yboxdim{2.5pt}\yng(2,1,1)A\right)}{\mathcal{R}}}}{_{abcd}}$ \eqref{eq:irre_decomp_R211} are clearly non zero and orthogonal.\medskip

This decomposition is realized by the action on $\accentset{\left(\Yboxdim{2.5pt}\yng(2,1,1)\right)}{\mathcal{R}}$ of the following operators:
\begin{equation}
	Y_{S}=s\, Z^{\Yboxdim{3.5pt}\yng(2)}\, s^{\shortminus 1}\,, \hspace{0.5cm} Y_{A}=s\, Z^{\Yboxdim{3.5pt}\yng(1,1)}\, s^{\shortminus 1}\,\hspace{1cm}
	\text{with}\hspace{1cm} s=\raisebox{-.4\height}{\includegraphics[scale=0.55]{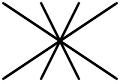}}\,.
\end{equation}
\begin{remark}
By direct computation one find that the irreducible tensors $\accentset{\left(\Yboxdim{2.5pt}\yng(2,1,1)S\right)}{\mathcal{R}}$ and $\accentset{\left(\Yboxdim{2.5pt}\yng(2,1,1)A\right)}{\mathcal{R}}$ can be obtained by the action on the Riemann tensor of the self adjoint operators $Y^{\,\Yboxdim{3.5pt}\yng(2,1,1)S}$ and $Y^{\,\Yboxdim{3.5pt}\yng(2,1,1)A}$ defined by:
\begin{equation}
	Y^{\,\Yboxdim{3.5pt}\yng(2,1,1)S}:=s \, Y^{\,\Scale[0.55]{\young(12,3,4)}}\, s^{-1}\, \hspace{1cm} 
	Y^{\,\Yboxdim{3.5pt}\yng(2,1,1)A}:=s \,\left(Y^{\,\Scale[0.55]{\young(13,2,4)}}+Y^{\,\Scale[0.55]{\young(14,2,3)}}\,\,\right)  \, s^{-1}\,, \hspace{1cm} \text{with}\hspace{1cm} s=\raisebox{-.4\height}{\includegraphics[scale=0.55]{fig/cycle14_23.pdf}}\,.
\end{equation}
The \textit{standard} Young seminormal idempotents are expressed in term of the central Young idempotents as (see formula \eqref{eq:Young_seminormal_idempotents_intro} and section \ref{sec:seminormalYoung} for more details):
\begin{equation}
	Y^{\,\Scale[0.5]{\young(12,3,4)}}=Z^{\Yboxdim{4pt}\yng(2)}Z^{\Yboxdim{4pt}\yng(2,1)}Z^{\Yboxdim{4pt}\yng(2,1,1)}\,,\hspace{1cm}Y^{\,\Scale[0.5]{\young(13,2,4)}}=Z^{\Yboxdim{4pt}\yng(1,1)}Z^{\Yboxdim{4pt}\yng(2,1)}Z^{\Yboxdim{4pt}\yng(2,1,1)}\,,\hspace{1cm} Y^{\,\Scale[0.5]{\young(14,2,3)}}=Z^{\Yboxdim{4pt}\yng(1,1)}Z^{\Yboxdim{4pt}\yng(1,1,1)}Z^{\Yboxdim{4pt}\yng(2,1,1)}\,.
\end{equation}
To sum up one has:
\begin{equation}
\accentset{\left(\Yboxdim{2.5pt}\yng(2,1,1)S\right)}{\mathcal{R}}=\mathcal{R}\cdot Y^{\,\Yboxdim{3.5pt}\yng(2,1,1)S}=\mathcal{R}\cdot Y_{S}\, Z^{\Yboxdim{3pt}\yng(2,1,1)}\,,\hspace{1cm} 
\accentset{\left(\Yboxdim{2.5pt}\yng(2,1,1)A\right)}{\mathcal{R}}=\mathcal{R}\cdot Y^{\,\Yboxdim{3.5pt}\yng(2,1,1)A}=\mathcal{R}\cdot Y_{A}\, Z^{\Yboxdim{3pt}\yng(2,1,1)}\,.
\end{equation}
\end{remark}
\medskip

\begin{proposition}
	The decomposition \eqref{eq:irre_decomp_R211} is the unique orthogonal irreducible decomposition of $ \,\tensor{{\accentset{\left(\Yboxdim{2.5pt}\yng(2,1,1)\right)}{\mathcal{R}}}}{_{abcd}} \,$ with respect to $\GL(\Dim,\mathbb{R})$ which maximizes the number of projective invariant tensors within its parts. 
\end{proposition}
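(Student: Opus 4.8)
The plan is to follow the route already used for the distortion tensor in Proposition~\ref{prop:irreducible_distortion_GL}: the only obstruction to uniqueness comes from the single isotypic component of multiplicity two, namely $\accentset{\left(\Yboxdim{2.5pt}\yng(2,1,1)\right)}{\mathcal{R}}$ (recall from \eqref{eq:centralGL_decompositionVR} that $V^{\Yboxdim{2.5pt}\yng(2,1,1)}$ occurs with multiplicity $2$ in $V_{\mathcal{R}}$, all other summands being irreducible and unique), so everything reduces to analysing how the projective variation sits inside $\mathcal{V}^{(2,1,1)}\cong (V^{\Yboxdim{2.5pt}\yng(2,1,1)})^{\oplus 2}$. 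By Schur's lemma each irreducible copy of $V^{\Yboxdim{2.5pt}\yng(2,1,1)}$ inside $\mathcal{V}^{(2,1,1)}$ corresponds to a line $\ell$ in the two-dimensional multiplicity space $M$, and two \emph{distinct} lines $\ell\neq\ell'$ give copies intersecting only in $0$. An orthogonal irreducible decomposition thus corresponds to a pair of orthogonal lines in $M$; in particular it has exactly two parts, so the number of projective invariant parts is at most two.

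First I would show the maximum is exactly one and is attained by \eqref{eq:irre_decomp_R211}. If both parts $\accentset{\left(\Yboxdim{2.5pt}\yng(2,1,1)\right)}{\mathcal{R}}\cdot P_1$ and $\accentset{\left(\Yboxdim{2.5pt}\yng(2,1,1)\right)}{\mathcal{R}}\cdot P_2$ were projective invariant, then their sum $\accentset{\left(\Yboxdim{2.5pt}\yng(2,1,1)\right)}{\mathcal{R}}$ would be invariant too (since $P_1+P_2=Z^{\Yboxdim{2.5pt}\yng(2,1,1)}$), contradicting the explicit non-vanishing of $\delta_\xi\accentset{\left(\Yboxdim{2.5pt}\yng(2,1,1)\right)}{\mathcal{R}}$ computed just above the statement. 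For the proposed decomposition, since $\delta_\xi\accentset{\left(\Yboxdim{2.5pt}\yng(2,1,1)\right)}{\mathcal{R}}$ is totally symmetric in $(cd)$, it is fixed by $Y_S$ and annihilated by $Y_A$; hence $\accentset{\left(\Yboxdim{2.5pt}\yng(2,1,1)A\right)}{\mathcal{R}}=\accentset{\left(\Yboxdim{2.5pt}\yng(2,1,1)\right)}{\mathcal{R}}\cdot Y_A$ is projective invariant while $\delta_\xi\accentset{\left(\Yboxdim{2.5pt}\yng(2,1,1)S\right)}{\mathcal{R}}=\delta_\xi\accentset{\left(\Yboxdim{2.5pt}\yng(2,1,1)\right)}{\mathcal{R}}\neq 0$, giving precisely one invariant part.

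The main step is uniqueness, and the key object is the subspace $W\subseteq\mathcal{V}^{(2,1,1)}$ spanned by $\delta_\xi\accentset{\left(\Yboxdim{2.5pt}\yng(2,1,1)\right)}{\mathcal{R}}=(\delta_\xi\mathcal{R})\cdot Z^{\Yboxdim{2.5pt}\yng(2,1,1)}$ as $\xi$ varies. Because every such variation is $(cd)$-symmetric, $W$ is a \emph{nonzero} subspace of the $(cd)$-symmetric copy $\operatorname{im}(Y_S Z^{\Yboxdim{2.5pt}\yng(2,1,1)})=\ell_S\otimes V^{\Yboxdim{2.5pt}\yng(2,1,1)}$. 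Now take any orthogonal irreducible decomposition with idempotents $P_1,P_2$: the part $\accentset{\left(\Yboxdim{2.5pt}\yng(2,1,1)\right)}{\mathcal{R}}\cdot P_i$ is projective invariant if and only if $(\delta_\xi\accentset{\left(\Yboxdim{2.5pt}\yng(2,1,1)\right)}{\mathcal{R}})\cdot P_i=0$ for all $\xi$, that is, $W\subseteq\operatorname{im}(P_j)$ for the complementary idempotent $P_j$. Since distinct lines yield copies meeting trivially, the condition $0\neq W\subseteq\operatorname{im}(P_j)\cap\operatorname{im}(Y_S Z^{\Yboxdim{2.5pt}\yng(2,1,1)})$ forces $\operatorname{im}(P_j)=\operatorname{im}(Y_S Z^{\Yboxdim{2.5pt}\yng(2,1,1)})$; orthogonality then gives $\operatorname{im}(P_i)=\operatorname{im}(P_j)^{\perp}=\operatorname{im}(Y_A Z^{\Yboxdim{2.5pt}\yng(2,1,1)})$, using that $Y_S,Y_A$ realise an orthogonal decomposition. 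As the orthogonal projection onto a fixed subspace is unique, $\mathfrak{r}(P_i)$ coincides with $\mathfrak{r}(Y_A Z^{\Yboxdim{2.5pt}\yng(2,1,1)})$ on $\mathcal{V}^{(2,1,1)}$, so the invariant part equals $\accentset{\left(\Yboxdim{2.5pt}\yng(2,1,1)A\right)}{\mathcal{R}}$ and its complement equals $\accentset{\left(\Yboxdim{2.5pt}\yng(2,1,1)S\right)}{\mathcal{R}}$, which is exactly \eqref{eq:irre_decomp_R211}.

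The step I expect to be most delicate is the bookkeeping around $W$: confirming from the explicit formula that $\delta_\xi\accentset{\left(\Yboxdim{2.5pt}\yng(2,1,1)\right)}{\mathcal{R}}$ is genuinely nonzero, so $W\neq 0$, and verifying that inside $\mathcal{V}^{(2,1,1)}$ the orthogonal complement of $\operatorname{im}(Y_S Z^{\Yboxdim{2.5pt}\yng(2,1,1)})$ is $\operatorname{im}(Y_A Z^{\Yboxdim{2.5pt}\yng(2,1,1)})$, which rests on the flip-invariance and mutual orthogonality of the seminormal idempotents recorded earlier. Once the multiplicity-two structure and the $(cd)$-symmetry of the variation are in place, the remainder is formal.
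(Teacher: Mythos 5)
Your proof is correct, and it reaches the paper's conclusion by a genuinely different formal route. The paper disposes of this proposition in one line (``the proof is similar to the one given for Proposition \ref{prop:irreducible_distortion_GL}''), and that earlier proof parametrizes every alternative orthogonal irreducible decomposition as a conjugate pair $\overline{Y}_S = z\,Y_S\,z^{-1}$, $\overline{Y}_A = z\,Y_A\,z^{-1}$ with $z^{-1}=z^{*}$, observes that projective invariance of the $\overline{Y}_A$-part is equivalent to $Y_S\,\overline{Y}_A=0$ (i.e.\ antisymmetry in the relevant index pair), and then concludes from the uniqueness of the splitting of a tensor into parts symmetric and antisymmetric in two indices. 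You instead work intrinsically in the two-dimensional multiplicity space: Schur's lemma identifies irreducible copies with lines, the span $W$ of the variations $\delta_\xi$ is a nonzero subspace of the $(cd)$-symmetric copy, and containment of $W$ in the image of the complementary projector forces that image to coincide with the symmetric copy, after which orthogonality and uniqueness of orthogonal projections pin down both parts. What your route buys: it does not presuppose that conjugation by flip-compatible invertible elements exhausts all orthogonal irreducible decompositions (a fact the paper asserts in its ``non-uniqueness'' paragraph but does not prove in detail), and it makes the maximality clause explicit --- at most one of the two parts can be projective invariant, since they sum to the non-invariant isotypic tensor. What the paper's route buys: brevity, and an immediate reduction to the elementary uniqueness of the symmetric/antisymmetric split. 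Both arguments ultimately rest on the same observation, namely that $\delta_\xi$ applied to the $(2,1,1)$-isotypic component of the Riemann tensor is nonzero and symmetric in the last pair of indices; your closing caveats (checking $W\neq 0$ from the explicit variation, and that the orthogonal complement of the symmetric copy inside the isotypic component is the antisymmetric copy) are exactly the right points to verify and are supported by the explicit formulas and the flip-invariance results recorded in the paper.
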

The proof is similar to the one given for Proposition \eqref{prop:irreducible_distortion_GL}.

\begin{corollary}\label{cor:GLdecomposition_Riemann}
	The unique orthogonal irreducible decomposition of the Riemann tensor with respect to $\GL(\Dim,\mathbb{R})$ which maximizes the number of projective invariant tensors within its parts is :
	\begin{equation}
		\tensor{{\mathcal{R}}}{_{[ab]cd}}=\tensor{{\accentset{\left(\Yboxdim{2.5pt}\yng(3,1)\right)}{\mathcal{R}}}}{_{abcd}}+\tensor{{\accentset{\left(\Yboxdim{2.5pt}\yng(2,2)\right)}{\mathcal{R}}}}{_{abcd}}+\tensor{{\accentset{\left(\Yboxdim{2.5pt}\yng(2,1,1)S\right)}{\mathcal{R}}}}{_{abcd}}+\tensor{{\accentset{\left(\Yboxdim{2.5pt}\yng(2,1,1)A\right)}{\mathcal{R}}}}{_{abcd}}+\tensor{{\accentset{\left(\Yboxdim{2.5pt}\yng(1,1,1,1)\right)}{\mathcal{R}}}}{_{abcd}}
	\end{equation}
	with
	\begin{equation}
		\begin{array}{ll}
			\tensor{{\accentset{\left(\Yboxdim{2.5pt}\yng(3,1)\right)}{\mathcal{R}}}}{_{abcd}}=\dfrac{3}{4}\left(\tensor{{\mathcal{R}}}{_{a(bcd)}}+\tensor{{\mathcal{R}}}{_{(a|b|cd)}}\right) \,,\hspace{1cm}
			&\tensor{{\accentset{\left(\Yboxdim{2.5pt}\yng(2,1,1)S\right)}{\mathcal{R}}}}{_{abcd}}=\dfrac{3}{8}\left(\tensor{{\mathcal{R}}}{_{[abc]d}}+\tensor{{\mathcal{R}}}{_{[abd]c}}+\tensor{{\mathcal{R}}}{_{[ab|c|d]}}+\tensor{{\mathcal{R}}}{_{[ab|d|c]}}\right)\,, \hspace{1cm}\\
			\tensor{{\accentset{\left(\Yboxdim{2.5pt}\yng(2,1,1)A\right)}{\mathcal{R}}}}{_{abcd}}=\dfrac{1}{2}\left(\tensor{{\mathcal{R}}}{_{ab[cd]}}-\tensor{{\mathcal{R}}}{_{cd[ab]}}\right)\,,
			&\tensor{{\accentset{\left(\Yboxdim{2.5pt}\yng(1,1,1,1)\right)}{\mathcal{R}}}}{_{abcd}}=\tensor{{\mathcal{R}}}{_{[abcd]}}\,,\\
			\tensor{{\accentset{\left(\Yboxdim{2.5pt}\yng(2,2)\right)}{\mathcal{R}}}}{_{abcd}}=\tensor{{\mathcal{R}}}{_{[ab]cd}}-\tensor{{\accentset{\left(\Yboxdim{2.5pt}\yng(1,1,1,1)\right)}{\mathcal{R}}}}{_{abcd}}-\tensor{{\accentset{\left(\Yboxdim{2.5pt}\yng(2,1,1)\right)}{\mathcal{R}}}}{_{abcd}}-\tensor{{\accentset{\left(\Yboxdim{2.5pt}\yng(3,1)\right)}{\mathcal{R}}}}{_{abcd}}\,.
		\end{array}
	\end{equation}
	The projective invariant components of the decomposition are $\tensor{{\accentset{\left(\Yboxdim{2.5pt}\yng(2,2)\right)}{\mathcal{R}}}}{_{abcd}}$, $\tensor{{\accentset{\left(\Yboxdim{2.5pt}\yng(2,1,1)A\right)}{\mathcal{R}}}}{_{abcd}}$ and $\tensor{{\accentset{\left(\Yboxdim{2.5pt}\yng(1,1,1,1)\right)}{\mathcal{R}}}}{_{abcd}}$. 
\end{corollary}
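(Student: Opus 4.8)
The plan is to obtain the stated decomposition by combining two facts already established: the uniqueness of the $\GL(\Dim,\R)$-isotypic decomposition of the space $V_{\mathcal{R}}$, and the uniqueness of the splitting of its single reducible isotypic component proved in the Proposition preceding this Corollary. First I would invoke the isotypic decomposition \eqref{eq:centralGL_decompositionVR}, realized by the uniquely defined central idempotents $Z^{\mu}$ as in \eqref{eq:isotypic_decomposition_Schur_Weyl}. Reading off the multiplicities, the isotypic components of Young type $(3,1)$, $(2,2)$ and $(1,1,1,1)$ each occur with multiplicity one, so by Remark \ref{rem:isotypic_component_irreducible} they are already irreducible and uniquely determined, with explicit expressions given in \eqref{eq:central_Riemann_GL_explicit}. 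The only isotypic component that fails to be irreducible is the one of type $(2,1,1)$, which carries multiplicity two; for it, the preceding Proposition states that \eqref{eq:irre_decomp_R211} is the unique orthogonal irreducible decomposition maximizing the number of projective invariant parts, splitting it into its $(cd)$-symmetric and $[cd]$-antisymmetric refinements. Collecting the four multiplicity-one pieces together with this splitting yields precisely the five-term decomposition of the Corollary.

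For the uniqueness of the full decomposition I would argue that every orthogonal irreducible decomposition necessarily refines the isotypic decomposition, since each irreducible summand lies in exactly one isotypic component and the isotypic components are mutually orthogonal by Remark \ref{rem:isotypic_component_irreducible}. On the three multiplicity-one summands the refinement is forced, so the only remaining freedom lies in splitting the type-$(2,1,1)$ component, which is pinned down by the preceding Proposition. The explicit formulas in the Corollary are then verified by direct symmetrization of $\tensor{{\accentset{\left(\Yboxdim{2.5pt}\yng(2,1,1)\right)}{\mathcal{R}}}}{_{abcd}}$ over the last index pair, exactly as in \eqref{eq:irre_decomp_R211}; these computations are recorded in the accompanying notebook \cite{xMAGRiemann}.

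The substantive part is the projective-invariance classification, where the key observation is the symmetry type of the variation. Under \eqref{eq:projective_transformation}, formula \eqref{eq:projective_trans_Riemann} gives
\[
\delta_\xi \mathcal{R}_{abcd} = 2\, g_{cd}\, \accentset{g}{\nabla}_{[a}\xi_{b]},
\]
which is antisymmetric in $ab$ and manifestly symmetric in $cd$. Consequently any irreducible piece that is antisymmetric in the pair $cd$ annihilates this variation: this covers the totally antisymmetric component $(1,1,1,1)$, the component of type $(2,2)$ (which is antisymmetric in $cd$ by its Young symmetry), and the $[cd]$-antisymmetric refinement of the $(2,1,1)$ component. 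These three are therefore projective invariant. The two remaining pieces, the $(3,1)$ component and the $(cd)$-symmetric refinement of $(2,1,1)$, are exactly the $(cd)$-symmetric summands into which the $[ab](cd)$-type tensor $\delta_\xi\mathcal{R}$ distributes, so they receive the variation and are not invariant; since $\delta_\xi\mathcal{R}\neq 0$, at most one of the two refinements of the $(2,1,1)$ component can be invariant, which shows that three is the maximal number of projective invariant parts, attained uniquely by the stated decomposition.

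The main obstacle I anticipate is the bookkeeping in this last step: one must confirm that the $(cd)$-symmetric variation does not leak into the $(2,2)$ component through its mixed symmetry, and that the projection of $\delta_\xi\mathcal{R}$ onto the $(3,1)$ component is genuinely nonzero rather than accidentally vanishing. Both reduce to $g_{[cd]}=0$ together with the explicit projectors \eqref{eq:central_Riemann_GL_explicit}, but they must be checked rather than assumed.
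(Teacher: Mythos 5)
Your proposal is correct and follows essentially the same route as the paper, which presents this corollary as the direct combination of the isotypic decomposition \eqref{eq:centralGL_decompositionVR} with explicit components \eqref{eq:central_Riemann_GL_explicit}, the preceding Proposition on the unique splitting \eqref{eq:irre_decomp_R211} of the multiplicity-two $(2,1,1)$ component, and the $(cd)$-symmetry of the projective variation $\delta_\xi\mathcal{R}_{abcd}=2\,g_{cd}\,\accentset{g}{\nabla}_{[a}\xi_{b]}$. The ``leak'' check you flag at the end is settled by the orthogonality of the decomposition already established in the paper (the central idempotents and the operators $Y_{S}$, $Y_{A}$ are flip invariant, hence self-adjoint), since a tensor symmetric in $(cd)$ is orthogonal to every tensor antisymmetric in $(cd)$, so its image under a self-adjoint projector whose image is $(cd)$-antisymmetric vanishes; equivalently, $\delta_\xi\mathcal{R}$ lies in $V^{(1,1)}\otimes V^{(2)}\cong V^{(3,1)}\oplus V^{(2,1,1)}$, which contains no $(2,2)$ or $(1,1,1,1)$ isotypic piece.
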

\subsection{A unique $\Or(1,\Dim-1)$ irreducible decomposition}\label{subsec:ORiemann}
The branching rules \eqref{eq:branching_GL_O} applied to the irreducible representations of $\GL(\Dim,\mathbb{C})$  appearing in the Riemann tensor are:
\begin{equation}\label{eq:branching_Riemann}
	V^{\Yboxdim{4pt}\yng(3,1)}= D^{\Yboxdim{4pt}\yng(3,1)}\oplus D^{\Yboxdim{4pt}\yng(2)}\oplus D^{\Yboxdim{4pt}\yng(1,1)}\,, \hspace{0.5cm}
	V^{\Yboxdim{4pt}\yng(2,2)}= D^{\Yboxdim{4pt}\yng(2,2)}\oplus D^{\Yboxdim{4pt}\yng(2)}\oplus D^{\emptyset}\,, \hspace{0.5cm}
	V^{\Yboxdim{4pt}\yng(2,1,1)}= D^{\Yboxdim{4pt}\yng(2,1,1)}\oplus D^{\Yboxdim{4pt}\yng(1,1)}\,,\hspace{0.5cm} 
	V^{\Yboxdim{4pt}\yng(1,1,1,1)}= D^{\Yboxdim{4pt}\yng(1,1,1,1)}\,,
\end{equation}
and one has $V_\mathcal{R}=D^{\Yboxdim{3pt}\yng(3,1)}\oplus D^{\Yboxdim{3pt}\yng(2,2)}\oplus 2\, D^{\Yboxdim{3pt}\yng(2,1,1)}\oplus D^{\Yboxdim{3pt}\yng(1,1,1,1)} \oplus 2\, D^{\Yboxdim{3pt}\yng(2)}\oplus 3\, D^{\Yboxdim{3pt}\yng(1,1)}\oplus D^{\Yboxdim{3pt}\emptyset}$. Similarly to the case of the distortion tensor, the irreducible decomposition of the Riemann tensor with respect to $\Or(1,\Dim-1)$ is not unique because some irreducible representation of $\Or(1,\Dim-1)$ appear with multiplicities in $V_\mathcal{R}$. Nevertheless,
the branchings \eqref{eq:branching_Riemann} are multiplicity free and therefore the projection of a $\GL(\Dim,\mathbb{C})$ irreducible tensor onto an irreducible representation with respect to $\Or(1,\Dim-1)$ is defined uniquely.
\begin{theorem}\label{theo:irreducible_Riemann_O}
	Within the two-step decomposition: 
	\begin{itemize}
		\item[i)] The unique orthogonal irreducible decomposition of the Riemann tensor with respect to\\
		 $\Or(1,\Dim-1)$  which maximizes the number of projective invariant tensors within its parts is 
		\begin{equation}\label{eq:Irreducible_Riemann}
		\begin{array}{ll}
		\tensor{\mathcal{R}}{_{abcd}}=\underbrace{\tensor{{\accentset{\left(\Yboxdim{3pt}\yng(3,1)\right)}{\underline{\mathcal{R}}}}}{_{\, abcd}}+\tensor{{\accentset{\left(\Yboxdim{3pt}\yng(2,2)\right)}{\underline{\mathcal{R}}}}}{_{\, abcd}}+\tensor{{\accentset{\left(\Yboxdim{3pt}\yng(2,1,1)S\right)}{\underline{\mathcal{R}}}}}{_{\,abcd}}+\tensor{{\accentset{\left(\Yboxdim{3pt}\yng(2,1,1)A\right)}{\underline{\mathcal{R}}}}}{_{\,abcd}}+\tensor{{\accentset{\left(\Yboxdim{3pt}\yng(1,1,1,1)\right)}{\underline{\mathcal{R}}}}}{_{\, abcd}}}_{\text{traceless}}&\\[25pt]
		&\hspace{-5.5cm}+\,\,\underbrace{\tensor{{\accentset{\left(\Yboxdim{3pt}\yng(3,1),\,\Yboxdim{3pt}\yng(2)\right)}{\mathcal{R}}}}{_{\, abcd}}+\tensor{{\accentset{\left(\Yboxdim{3pt}\yng(2,2),\,\Yboxdim{3pt}\yng(2)\right)}{\mathcal{R}}}}{_{\, abcd}}+\tensor{{\accentset{\left(\Yboxdim{3pt}\yng(3,1),\,\Yboxdim{3pt}\yng(1,1)\right)}{\mathcal{R}}}}{_{\, abcd}}+\,\,\,\tensor{{\accentset{\left(\Yboxdim{3pt}\yng(2,1,1)A,\,\Yboxdim{3pt}\yng(1,1)\right)}{{\mathcal{R}}}}}{_{abcd}}+\,\,\,\tensor{{\accentset{\left(\Yboxdim{3pt}\yng(2,1,1)S,\,\Yboxdim{3pt}\yng(1,1)\right)}{{\mathcal{R}}}}}{_{abcd}}}_{\text{2-traceless}}+\underbrace{\tensor{{\accentset{\left(\Yboxdim{3pt}\yng(2,2),\,\emptyset\right)}{\mathcal{R}}}}{_{\, abcd}}}_{\text{full trace}}\,,
		\end{array}
		\end{equation} 
		where each irreducible components are given by $\,\tensor{{\accentset{\left(\Yboxdim{3pt}\mu\,, \lambda \right)}{{\mathcal{R}}}}}{_{abcd}}=\tensor{(\,\accentset{(\mu)}{\mathcal{R}}\,\cdot P^{\lambda}_4)}{_{abcd}}$ and $\tensor{{\accentset{\left(\Yboxdim{2.5pt}\yng(1,1,1,1)\right)}{\underline{\mathcal{R}}}}}{_{\,abcd}}=\tensor{{\accentset{\left(\Yboxdim{2.5pt}\yng(1,1,1)\right)}{\,\mathcal{R}\,}}}{_{abcd}}$.\medskip
		\item[ii)] Introducing the following traceless building block tensors\tablefootnote{The traces of the Riemann tensor $\tensor{\accentset{(1)}{\mathcal{R}}}{_{ab}}$, $\tensor{\accentset{(2)}{\mathcal{R}}}{_{ab}}$ and $\tensor{\accentset{(3)}{\mathcal{R}}}{_{ab}}$ are defined in \eqref{eq:defTracesRiemann}.} 
		\begin{equation}\label{eq:building_blocks_Riemann}
		\begin{array}{lll}
		\tensor{\accentset{(1)}{\mathcal B}}{_{ab}}=\tensor{\accentset{(1)}{\mathcal{R}}}{_{(ab)}}-\tensor{\accentset{(2)}{\mathcal{R}}}{_{(ab)}}\,, \hspace{0.5cm}&
		 \tensor{\accentset{(2)}{\mathcal B}}{_{ab}}=\tensor{\underline{\accentset{(1)}{\mathcal{R}}}}{_{(ab)}}+\tensor{\underline{\accentset{(2)}{\mathcal{R}}}}{_{(ab)}}\,,
		&\tensor{\accentset{(3)}{\mathcal B}}{_{ab}}=\tensor{\accentset{(1)}{\mathcal{R}}}{_{[ab]}}+\tensor{\accentset{(2)}{\mathcal{R}}}{_{[ab]}}\,,\\[6pt]
		\tensor{\accentset{(4)}{\mathcal B}}{_{ab}}=\tensor{\accentset{(1)}{\mathcal{R}}}{_{[ab]}}-\tensor{\accentset{(2)}{\mathcal{R}}}{_{[ab]}}-\tensor{\accentset{(3)}{\mathcal{R}}}{_{ab}}\,,& \tensor{\accentset{(5)}{\mathcal B}}{_{ab}}=\tensor{\accentset{(1)}{\mathcal{R}}}{_{[ab]}}-\tensor{\accentset{(2)}{\mathcal{R}}}{_{[ab]}}+\tensor{\accentset{(3)}{\mathcal{R}}}{_{ab}} \,,
		\end{array}
		\end{equation}
		the explicit expressions of the irreducible tensor components of the Riemann tensor are:\medskip  
					\begin{equation}\label{eq:O_decomposition_Riemann1}
		\hspace{-0.91cm}\begin{array}{ll}
			\tensor{{\accentset{\left(\Yboxdim{3pt}\yng(3,1),\,\Yboxdim{3pt}\yng(2)\right)}{\mathcal{R}}}}{_{\, abcd}}=\dfrac{\tensor{\accentset{(1)}{\mathcal B}}{_{[a|d|}}\tensor{g}{_{b]c}} +\tensor{\accentset{(1)}{\mathcal B}}{_{[a|c|}}\tensor{g}{_{b]d}}}{\Dim} , \hspace{0.7cm}
			&\tensor{{\accentset{\left(\Yboxdim{3pt}\yng(3,1),\,\Yboxdim{3pt}\yng(1,1)\right)}{\mathcal{R}}}}{_{\, abcd}}=\dfrac{ \tensor{\accentset{(5)}{\mathcal B}}{_{ab}}\tensor{g}{_{cd}}+\tensor{\accentset{(5)}{\mathcal B}}{_{[a|d|}}\tensor{g}{_{b]c}}+\tensor{\accentset{(5)}{\mathcal B}}{_{[a|c|}}\tensor{g}{_{b]d}}}{2\left(\, \Dim+2 \, \right)} , \\[10pt]
			\tensor{{\accentset{\left(\Yboxdim{3pt}\yng(2,2),\,\Yboxdim{3pt}\yng(2)\right)}{\mathcal{R}}}}{_{\, abcd}}=\dfrac{\tensor{\accentset{(2)}{\mathcal B}}{_{[a|c|}}\tensor{g}{_{b]d}} - \tensor{\accentset{(2)}{\mathcal B}}{_{[a|d|}}\tensor{g}{_{b]c}}}{\Dim-2},  \hspace{0.7cm}
			&\tensor{{\accentset{\left(\Yboxdim{3pt}\yng(2,2),\,\emptyset\right)}{\mathcal{R}}}}{_{\, abcd}}=\dfrac{\mathcal{R}\Bigl(\tensor{g}{_{ac}}\tensor{g}{_{bd}}-\tensor{g}{_{ad}}\tensor{g}{_{bc}}\Bigr)}{\Dim \left( \Dim-1  \right)} , \\[10pt]
			\tensor{{\accentset{\left(\Yboxdim{3pt}\yng(2,1,1)A,\,\Yboxdim{3pt}\yng(1,1)\right)}{{\mathcal{R}}}}}{_{abcd}}=\dfrac{\tensor{\accentset{(3)}{\mathcal B}}{_{[b|d|}}\tensor{g}{_{a]c}}-\tensor{\accentset{(3)}{\mathcal B}}{_{[b|c|}}\tensor{g}{_{a]d}}}{\Dim-2},\hspace{0.7cm}
			&\tensor{{\accentset{\left(\Yboxdim{3pt}\yng(2,1,1)S,\,\Yboxdim{3pt}\yng(1,1)\right)}{{\mathcal{R}}}}}{_{abcd}}=-\dfrac{\tensor{\accentset{(4)}{\mathcal B}}{_{ab}}\,\tensor{g}{_{cd}}-\tensor{\accentset{(4)}{\mathcal B}}{_{[a|d|}}\tensor{g}{_{b]c}}-\tensor{\accentset{(4)}{\mathcal B}}{_{[a|c|}}\tensor{g}{_{b]d}}}{2(\Dim-2)},
		\end{array}
	\end{equation}	
and for the totally traceless parts:
				\begin{equation}\label{eq:O_decomposition_Riemann2}
			\hspace{-0.91cm}\begin{array}{ll}
				\tensor{{\accentset{\left(\Yboxdim{3pt}\yng(3,1)\right)}{\underline{\mathcal{R}}}}}{_{\, abcd}}=\tensor{{\accentset{\left(\Yboxdim{3pt}\yng(3,1)\right)}{\mathcal{R}}}}{_{abcd}}-\tensor{{\accentset{\left(\Yboxdim{3pt}\yng(3,1),\,\Yboxdim{3pt}\yng(2)\right)}{\mathcal{R}}}}{_{\, abcd}}-\tensor{{\accentset{\left(\Yboxdim{3pt}\yng(3,1),\,\Yboxdim{3pt}\yng(1,1)\right)}{\mathcal{R}}}}{_{\, abcd}}\,,\hspace{0.5cm}
				&\tensor{{\accentset{\left(\Yboxdim{3pt}\yng(2,2)\right)}{\underline{\mathcal{R}}}}}{_{\, abcd}}=\tensor{{\accentset{\left(\Yboxdim{3pt}\yng(2,2)\right)}{\mathcal{R}}}}{_{abcd}}-\tensor{{\accentset{\left(\Yboxdim{3pt}\yng(2,2),\,\Yboxdim{3pt}\yng(2)\right)}{\mathcal{R}}}}{_{\, abcd}}-\tensor{{\accentset{\left(\Yboxdim{3pt}\yng(2,2),\,\emptyset\right)}{\mathcal{R}}}}{_{\, abcd}}\,,\\[10pt]
				\tensor{{\accentset{\left(\Yboxdim{3pt}\yng(2,1,1)S\right)}{\underline{\mathcal{R}}}}}{_{\, abcd}}=\tensor{{\accentset{\left(\Yboxdim{3pt}\yng(2,1,1)S\right)}{{\mathcal{R}}}}}{_{abcd}}-\,\,\,\tensor{{\accentset{\left(\Yboxdim{3pt}\yng(2,1,1)S,\,\Yboxdim{3pt}\yng(1,1)\right)}{{\mathcal{R}}}}}{_{abcd}}\,,\hspace{0.5cm}
				&\tensor{{\accentset{\left(\Yboxdim{3pt}\yng(2,1,1)A\right)}{\underline{\mathcal{R}}}}}{_{\, abcd}}=\,\,\tensor{{\accentset{\left(\Yboxdim{3pt}\yng(2,1,1)A\right)}{{\mathcal{R}}}}}{_{abcd}}-\,\,\,\tensor{{\accentset{\left(\Yboxdim{3pt}\yng(2,1,1)A,\,\Yboxdim{3pt}\yng(1,1)\right)}{{\mathcal{R}}}}}{_{abcd}}\,.
			\end{array}
		\end{equation}
	\item[iii)]The irreducible traceless part of the Riemann tensor, the full trace part  $\,\,\tensor{{\accentset{\left(\Yboxdim{3pt}\yng(2,2),\,\emptyset\right)}{\mathcal{R}}}}{_{\, abcd}}$, and the 2-traceless parts $\,\,\tensor{{\accentset{\left(\Yboxdim{3pt}\yng(3,1),\,\Yboxdim{3pt}\yng(2)\right)}{\mathcal{R}}}}{_{\, abcd}}$, $\,\,\tensor{{\accentset{\left(\Yboxdim{3pt}\yng(2,2),\,\Yboxdim{3pt}\yng(2)\right)}{\mathcal{R}}}}{_{\, abcd}}$, and $\,\,\,\,\tensor{{\accentset{\left(\Yboxdim{3pt}\yng(2,1,1)A,\,\Yboxdim{3pt}\yng(1,1)\right)}{{\mathcal{R}}}}}{_{abcd}}$ which are associated with the building blocks $\tensor{\accentset{(1)}{\mathcal B}}{_{ab}}$, $\tensor{\accentset{(2)}{\mathcal B}}{_{ab}}$ and $\tensor{\accentset{(3)}{\mathcal B}}{_{ab}}$, are the projective invariant tensors of the decomposition.
	\end{itemize} 
\end{theorem}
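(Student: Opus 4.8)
The plan is to mirror the proof of Theorem \ref{theo:irreducible_distortion_O}, using the fact that the Riemann tensor lives in $V_{\mathcal{R}}=V^{\Yboxdim{3pt}\yng(1,1)}\otimes V^{\Yboxdim{3pt}\yng(1)}\otimes V^{\Yboxdim{3pt}\yng(1)}$, a fourth-order tensor product, so that Lemma \ref{lem:multiplicity_free_GL_O} is applicable. For part (i), the uniqueness rests on two inputs. On the one hand, Corollary \ref{cor:GLdecomposition_Riemann} already furnishes the unique orthogonal $\GL(\Dim,\R)$-irreducible decomposition that maximizes the number of projective invariant pieces. On the other hand, since $n=4\leqslant 5$, Lemma \ref{lem:multiplicity_free_GL_O} guarantees that the branchings \eqref{eq:branching_Riemann} are multiplicity free, so that the restriction of each $\GL(\Dim,\R)$-irreducible tensor into $\Or(1,\Dim-1)$-irreducible pieces is canonically defined. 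The two-step decomposition therefore inherits uniqueness, and by the Schur-Weyl duality \eqref{eq:central_decomposition_O} each component is produced as $\tensor{{\accentset{\left(\Yboxdim{3pt}\mu\,,\lambda\right)}{\mathcal{R}}}}{_{abcd}}=\tensor{(\accentset{(\mu)}{\mathcal{R}}\cdot P^{\lambda}_4)}{_{abcd}}$, with $P^{\lambda}_4\in B_4(\Dim)$ the isotypic projectors.

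Part (ii) is then settled by explicit computation: one acts with the projectors $P^{\lambda}_4$ of Chapter \ref{chap:projectors_O} on each $\accentset{(\mu)}{\mathcal{R}}$ and re-expresses the outputs in terms of the five independent trace tensors, whose natural dimension-dependent combinations are precisely the building blocks \eqref{eq:building_blocks_Riemann}. These calculations are the ones detailed in the Mathematica notebook \cite{xMAGRiemann}; the role of the building blocks is to exhibit each two-traceless and full-trace component as a manifestly orthogonal, metric-dressed expression of the correct trace-order.

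For part (iii), I would first invoke the projective transformation \eqref{eq:projective_trans_Riemann}, which shows that $\delta_\xi\mathcal{R}$ is proportional to $\tensor{g}{_{cd}}$ and hence has vanishing totally traceless part; since the traceless subspace is orthogonal to the partial- and full-trace subspaces in \eqref{eq:trace_decomposition}, all five traceless components are projective invariant. The remaining components are governed by the building blocks, whose transformation is read off from \eqref{eq:proj_trans_Riccis}: the symmetric traces $\accentset{(1)}{\mathcal{R}}_{(ab)}$ and $\accentset{(2)}{\mathcal{R}}_{(ab)}$ are invariant, while the antisymmetric traces and the homothetic tensor shift. One then checks directly that $\accentset{(1)}{\mathcal{B}}_{ab}$, $\accentset{(2)}{\mathcal{B}}_{ab}$ are invariant (being built from symmetric traces), that the cancellation $\accentset{(1)}{\mathcal{R}}_{[ab]}+\accentset{(2)}{\mathcal{R}}_{[ab]}$ makes $\accentset{(3)}{\mathcal{B}}_{ab}$ invariant, and that the Ricci scalar $\mathcal{R}$ is invariant (only its invariant symmetric part survives the symmetric-metric contraction), so the full-trace piece $\tensor{{\accentset{\left(\Yboxdim{3pt}\yng(2,2),\,\emptyset\right)}{\mathcal{R}}}}{_{abcd}}$ is too; by contrast $\accentset{(4)}{\mathcal{B}}_{ab}$ and $\accentset{(5)}{\mathcal{B}}_{ab}$ pick up nonzero multiples of $\accentset{g}{\nabla}_{[a}\xi_{b]}$. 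This pins down exactly the projective invariant components listed in (iii).

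The hard part will be part (ii). Unlike the order-$3$ distortion case, where only the trace levels $f=0,1$ occur, the Riemann tensor is of order $4$, so its trace decomposition spans three levels $f=0,1,2$. Correctly disentangling the five two-traceless ($f=1$) pieces from the single full-trace ($f=2$) piece, and determining the precise $\Dim$-dependent coefficients so that each $\accentset{(\mu,\lambda)}{\mathcal{R}}$ has the stated orthogonal traceless structure, genuinely requires the explicit Brauer projectors $P^{\lambda}_4$ and constitutes the bulk of the work.
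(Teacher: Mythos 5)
Your proposal is correct and follows essentially the same route as the paper: uniqueness from Corollary \ref{cor:GLdecomposition_Riemann} together with the multiplicity-free branchings (your appeal to Lemma \ref{lem:multiplicity_free_GL_O} for $n=4\leqslant 5$ is the same content as the paper's citation of \eqref{eq:branching_Riemann}), part (ii) by explicit action of the Brauer projectors $P^\lambda_4$ as in the Mathematica notebook, and part (iii) from $\delta_\xi\mathcal{R}_{abcd}\propto g_{cd}$ plus the transformation rules \eqref{eq:proj_trans_Riccis} applied to the building blocks. Your explicit verification that $\accentset{(1)}{\mathcal{B}},\accentset{(2)}{\mathcal{B}},\accentset{(3)}{\mathcal{B}}$ and the Ricci scalar are invariant while $\accentset{(4)}{\mathcal{B}},\accentset{(5)}{\mathcal{B}}$ shift by $(4\mp 2\Dim)\accentset{g}{\nabla}_{[a}\xi_{b]}$ is just a slightly more explicit rendering of what the paper calls ``simple calculations.''
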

\begin{proof}
	The uniqueness of the decomposition is a result of Corollary \eqref{cor:GLdecomposition_Riemann} and of the multiplicity free property of the branching decompositions \eqref{eq:branching_Riemann}. From the Schur-Weyl duality \eqref{eq:central_decomposition_O} one has indeed $\,\tensor{{\accentset{\left(\Yboxdim{3pt}\mu\,, \lambda \right)}{{\mathcal{R}}}}}{_{abcd}}=\tensor{(\,\accentset{(\mu)}{\mathcal{R}}\,\cdot P^{\lambda}_4)}{_{abcd}}$ and the expressions of the projectors $P^{\Yboxdim{3.pt}\yng(2)}_4$, $P^{\Yboxdim{3.pt}\yng(1,1)}_4$ and $P^{\emptyset}_4$ used for the decomposition is given in equation \eqref{eq:proj_Riemann} in terms of the conjugacy class sum basis of the centralizer $\mathcal{C}_4$ of $\Sn{4}$ in $\Bn{4}(\Dim)$. The second point and the definition of the traceless building blocks $\tensor{\accentset{(k)}{\mathcal B}}{_{ab}}$ ($k=1,\ldots, 5$) are the results of direct calculations which are detailed in the Mathematica notebook \cite{xMAGRiemann}. For the last point, let $\nabla$ and $\hat\nabla$ be two projectively equivalent connections. Then there exist a vector field $\xi$ such that 
	\begin{equation}
		\tensor{\mathcal{\hat R}}{_{abcd}}=\tensor{\mathcal{R}}{_{abcd}} + \tensor{\delta_{\xi} \mathcal{R}}{_{abcd}},\hspace{0.5cm}\text{with $\tensor{\delta_{\xi} \mathcal{R}}{_{abcd}}=2\tensor{g}{_{cd}} \tensor{{\accentset{g}{\nabla}}}{_{[a}} \tensor{\xi}{_{b]}}$}\,.
	\end{equation}	
	Because $\tensor{\delta_{\xi} \mathcal{R}}{_{abcd}}$ is proportional to the metric, its traceless part is zero. As a consequence the irreducible traceless parts of the Riemann tensor are projective invariant. The component $\,\,\tensor{{\accentset{\left(\Yboxdim{3pt}\yng(2,2),\,\emptyset\right)}{\mathcal{R}}}}{_{\, abcd}}$ is expressed in term of the Ricci scalar which is projective invariant. From the transformation properties of the Ricci tensor, co-Ricci tensor and homothetic tensor \eqref{eq:proj_trans_Riccis} it is a matter of simple calculations to show that $\tensor{\accentset{(1)}{\mathcal B}}{_{ab}}$, $\tensor{\accentset{(2)}{\mathcal B}}{_{ab}}$ and $\tensor{\accentset{(3)}{\mathcal B}}{_{ab}}$ are the projective invariant building blocks of the decomposition.
\end{proof}
\begin{remark}
The Weyl tensor of metric-affine gravity is defined as the totally traceless part of the Riemann tensor. It is given by the sum of all traceless irreducible parts of the Riemann tensor \eqref{eq:irreducible_trace_decomposition}
\begin{equation}
	\begin{aligned}
		\tensor{\mathcal{W}}{_{abcd}}&=\tensor{{\accentset{\left(\Yboxdim{3pt}\yng(3,1)\right)}{\underline{\mathcal{R}}}}}{_{\, abcd}}+\tensor{{\accentset{\left(\Yboxdim{3pt}\yng(2,2)\right)}{\underline{\mathcal{R}}}}}{_{\, abcd}}+\tensor{{\accentset{\left(\Yboxdim{3pt}\yng(2,1,1)S\right)}{\underline{\mathcal{R}}}}}{_{\, abcd}}+\tensor{{\accentset{\left(\Yboxdim{3pt}\yng(2,1,1)A\right)}{\underline{\mathcal{R}}}}}{_{\, abcd}}+\tensor{{\accentset{\left(\Yboxdim{3pt}\yng(1,1,1,1)\right)}{\underline{\mathcal{R}}}}}{_{\, abcd}}\,,\\
		&=\tensor{\mathcal{R}}{_{abcd}}-\tensor{{\accentset{\left(\Yboxdim{3pt}\yng(3,1),\,\Yboxdim{3pt}\yng(2)\right)}{\mathcal{R}}}}{_{\, abcd}}-\tensor{{\accentset{\left(\Yboxdim{3pt}\yng(3,1),\,\Yboxdim{3pt}\yng(1,1)\right)}{\mathcal{R}}}}{_{\, abcd}}-\tensor{{\accentset{\left(\Yboxdim{3pt}\yng(2,2),\,\Yboxdim{3pt}\yng(2)\right)}{\mathcal{R}}}}{_{\, abcd}}-\tensor{{\accentset{\left(\Yboxdim{3pt}\yng(2,2),\,\emptyset\right)}{\mathcal{R}}}}{_{\, abcd}}-\tensor{{\accentset{\left(\Yboxdim{3pt}\yng(2,1,1)A,\,\Yboxdim{3pt}\yng(1,1)\right)}{{\mathcal{R}}}}}{_{abcd}}-\tensor{{\accentset{\left(\Yboxdim{3pt}\yng(2,1,1)S,\,\Yboxdim{3pt}\yng(1,1)\right)}{{\mathcal{R}}}}}{_{abcd}}\,.
	\end{aligned}
\end{equation}
As a consequence of the next Proposition the Weyl tensor of metric-affine gravity vanishes identically for $\Dim=2$, which reminiscent to the pseudo-Riemann framework. However, it does not vanish identically for $\Dim=3$ as in this case $\tensor{\mathcal{W}}{_{abcd}}=\tensor{{\accentset{\left(\Yboxdim{3pt}\yng(3,1)\right)}{\underline{\mathcal{R}}}}}{_{\, abcd}}\,$.
\end{remark}

In order to generalize Theorem \ref{theo:irreducible_Riemann_O} for small dimensions ($\Dim < 4$) we will make use of remark \ref{rem:littlewood_restriction_rule}, and of the dimensions of the irreducible representations of both $\GL(\Dim,\C)$ and $\Or(\Dim,\C)$. The general dimension formulas for the irreducible representations of $\GL(\Dim,\C)$ and $\Or(\Dim,\C)$ can be found in \cite{Samra_King_1979}. 
\begin{mathematicas}[\textit{SymmetricFunctions}]
DimOfIrrep[$\mu$, GeneralLinearGroup[$\Dim$]]\,\\
\textit{Returns the dimension of the irreducible representation $V^\mu$ of $\GL(\Dim)$\,.}\\
	
DimOfIrrep[$\lambda$, OrthogonalGroup[$\Dim$]]\,\\
\textit{Returns the dimension of the irreducible representation $D^\lambda$ of $\Or(\Dim)$\,.}
\end{mathematicas}

\begin{proposition}\label{prop:small_d_Riemann} For $\Dim < 4$ the irreducible decomposition of the Riemann tensor of Theorem \ref{theo:irreducible_Riemann_O} reduces to:
	\begin{itemize}
		\item[\it{i)}] For $\Dim=2$, 
	\begin{equation}\label{eq:Irreducible_Riemann_d2}
	\tensor{\mathcal{R}}{_a_b_c_d}=\underbrace{\,\,\tensor{{\accentset{\left(\Yboxdim{3pt}\yng(3,1),\,\Yboxdim{3pt}\yng(2)\right)}{\mathcal{R}}}}{_{\, abcd}}+\tensor{{\accentset{\left(\Yboxdim{3pt}\yng(3,1),\,\Yboxdim{3pt}\yng(1,1)\right)}{\mathcal{R}}}}{_{\, abcd}}}_{\text{2-traceless}}+\underbrace{\tensor{{\accentset{\left(\Yboxdim{3pt}\yng(2,2),\,\emptyset\right)}{\mathcal{R}}}}{_{\, abcd}}}_{\text{full trace}}\,.
	\end{equation}
		\item[\it{ii)}] For $\Dim=3$, 
	\begin{equation}\label{eq:Irreducible_Riemann_d3}
	\begin{array}{ll}
		\tensor{\mathcal{R}}{_{abcd}}=\underbrace{\tensor{{\accentset{\left(\Yboxdim{3pt}\yng(3,1)\right)}{\underline{\mathcal{R}}}}}{_{\, abcd}}}_{\text{traceless}}
		+\,\,\underbrace{\tensor{{\accentset{\left(\Yboxdim{3pt}\yng(3,1),\,\Yboxdim{3pt}\yng(2)\right)}{\mathcal{R}}}}{_{\, abcd}}+\tensor{{\accentset{\left(\Yboxdim{3pt}\yng(2,2),\,\Yboxdim{3pt}\yng(2)\right)}{\mathcal{R}}}}{_{\, abcd}}+\tensor{{\accentset{\left(\Yboxdim{3pt}\yng(3,1),\,\Yboxdim{3pt}\yng(1,1)\right)}{\mathcal{R}}}}{_{\, abcd}}+\,\,\,\tensor{{\accentset{\left(\Yboxdim{3pt}\yng(2,1,1)A,\,\Yboxdim{3pt}\yng(1,1)\right)}{{\mathcal{R}}}}}{_{abcd}}+\,\,\,\tensor{{\accentset{\left(\Yboxdim{3pt}\yng(2,1,1)S,\,\Yboxdim{3pt}\yng(1,1)\right)}{{\mathcal{R}}}}}{_{abcd}}}_{\text{2-traceless}}+\underbrace{\tensor{{\accentset{\left(\Yboxdim{3pt}\yng(2,2),\,\emptyset\right)}{\mathcal{R}}}}{_{\, abcd}}}_{\text{full trace}}.
	\end{array}
\end{equation} 
	\end{itemize}
\end{proposition}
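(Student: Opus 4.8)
My plan is to decide, dimension by dimension, which of the seven irreducible pieces of \eqref{eq:Irreducible_Riemann} survive. A component $\tensor{{\accentset{\left(\mu,\lambda\right)}{\mathcal{R}}}}{_{abcd}}=\tensor{(\accentset{(\mu)}{\mathcal{R}}\cdot P^\lambda_4)}{_{abcd}}$ is forced to vanish by either of two admissibility constraints: its source space $V^\mu$ is absent when $\mu\notin\Par_4(\Dim)$, i.e. when $\mu$ has more than $\Dim$ rows (see \eqref{eq:P_d}, \eqref{eq:GL_decomposition}), and its target $D^\lambda$ is absent when $\lambda\notin\Lambda_4(\Dim)$, i.e. when $\lambda_1'+\lambda_2'>\Dim$ (see \eqref{eq:Lambda_d}, \eqref{eq:O_decomposition}). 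Because $n=4$ and $\Dim<4$ we are below the stable range, so by Remark \ref{rem:littlewood_restriction_rule} the branching \eqref{eq:branching_Riemann} is guaranteed to hold only for those $\mu$ with $\mu\in\Lambda_4(\Dim)$; for the remaining $\mu$ I would read off the surviving pieces from the closed forms \eqref{eq:O_decomposition_Riemann1}--\eqref{eq:O_decomposition_Riemann2}, using the principle that a component whose prefactor becomes singular at the chosen $\Dim$ must in fact vanish there (the decomposition of the finite-dimensional space $V_{\mathcal{R}}$ cannot contain a pole). Every case would be cross-checked by the identity $\dim V_{\mathcal{R}}=\sum_\lambda\dim D^\lambda$, with the dimensions supplied by the formulas for $\GL(\Dim,\C)$ and $\Or(\Dim,\C)$.

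I would dispose of $\Dim=3$ first, since there all vanishings come from admissibility alone. The row constraint removes only $V^{\Yboxdim{3pt}\yng(1,1,1,1)}$ (four rows), eliminating $\tensor{{\accentset{\left(\Yboxdim{3pt}\yng(1,1,1,1)\right)}{\underline{\mathcal{R}}}}}{_{abcd}}$. The column constraint $\lambda_1'+\lambda_2'\leqslant 3$ discards the $\Or$-labels $\Yboxdim{3pt}\yng(2,2)$ and $\Yboxdim{3pt}\yng(2,1,1)$, so among the traceless pieces only $\tensor{{\accentset{\left(\Yboxdim{3pt}\yng(3,1)\right)}{\underline{\mathcal{R}}}}}{_{abcd}}$ remains (here $\Yboxdim{3pt}\yng(3,1)$ is admissible, with $\lambda_1'+\lambda_2'=3$). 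Since $\Yboxdim{3pt}\yng(3,1)\in\Lambda_4(3)$ its branching is the genuine Littlewood restriction $\dim V^{\Yboxdim{3pt}\yng(3,1)}=15=7+5+3$, while the dimension checks $\dim V^{\Yboxdim{3pt}\yng(2,2)}=6=5+1$ and $\dim V^{\Yboxdim{3pt}\yng(2,1,1)}=3$ show that deleting the forbidden labels from \eqref{eq:branching_Riemann} leaves exactly $D^{\Yboxdim{3pt}\yng(2)}\oplus D^{\emptyset}$ and $D^{\Yboxdim{3pt}\yng(1,1)}$ respectively, with no modification. This yields precisely the seven surviving terms of \eqref{eq:Irreducible_Riemann_d3}, the doubled $\Yboxdim{3pt}\yng(2,1,1)$ of \eqref{eq:centralGL_decompositionVR} providing the $S$ and $A$ copies.

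For $\Dim=2$ the row constraint kills both $V^{\Yboxdim{3pt}\yng(2,1,1)}$ and $V^{\Yboxdim{3pt}\yng(1,1,1,1)}$, leaving $V^{\Yboxdim{3pt}\yng(3,1)}$ and $V^{\Yboxdim{3pt}\yng(2,2)}$; the column constraint $\lambda_1'+\lambda_2'\leqslant 2$ annihilates every totally traceless piece, so the whole traceless sector disappears (the metric-affine Weyl tensor vanishing in two dimensions). For $V^{\Yboxdim{3pt}\yng(3,1)}$ the prefactors $1/\Dim$ and $1/2(\Dim+2)$ in \eqref{eq:O_decomposition_Riemann1} are regular at $\Dim=2$, and the corresponding building blocks $\accentset{(1)}{\mathcal B}$ and $\accentset{(5)}{\mathcal B}$ of \eqref{eq:building_blocks_Riemann} are of symmetric-traceless type $\Yboxdim{3pt}\yng(2)$ and antisymmetric type $\Yboxdim{3pt}\yng(1,1)$, so the pieces $\tensor{{\accentset{\left(\Yboxdim{3pt}\yng(3,1),\,\Yboxdim{3pt}\yng(2)\right)}{\mathcal{R}}}}{_{abcd}}$ and $\tensor{{\accentset{\left(\Yboxdim{3pt}\yng(3,1),\,\Yboxdim{3pt}\yng(1,1)\right)}{\mathcal{R}}}}{_{abcd}}$ persist, their dimensions $2+1$ exhausting $\dim V^{\Yboxdim{3pt}\yng(3,1)}=3$. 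The main obstacle, and the only genuinely non-formal point, is $V^{\Yboxdim{3pt}\yng(2,2)}$: here both admissibility tests are passed by the label $\Yboxdim{3pt}\yng(2)$ (indeed $\Yboxdim{3pt}\yng(2,2)\in\Par_4(2)$ and $\Yboxdim{3pt}\yng(2)\in\Lambda_4(2)$), yet the piece $\tensor{{\accentset{\left(\Yboxdim{3pt}\yng(2,2),\,\Yboxdim{3pt}\yng(2)\right)}{\mathcal{R}}}}{_{abcd}}$ must still be absent. I would establish this from the fact that $V^{\Yboxdim{3pt}\yng(2,2)}$ is one-dimensional over $\GL(2,\R)$ and hence restricts to the single scalar $D^{\emptyset}$, which is exactly the deviation from the stable branching flagged by $\Yboxdim{3pt}\yng(2,2)\notin\Lambda_4(2)$ in Remark \ref{rem:littlewood_restriction_rule}; concretely it is signalled by the prefactor $1/(\Dim-2)$ of $\tensor{{\accentset{\left(\Yboxdim{3pt}\yng(2,2),\,\Yboxdim{3pt}\yng(2)\right)}{\mathcal{R}}}}{_{abcd}}$ in \eqref{eq:O_decomposition_Riemann1} being singular at $\Dim=2$, which forces that component to vanish. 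Only $\tensor{{\accentset{\left(\Yboxdim{3pt}\yng(2,2),\,\emptyset\right)}{\mathcal{R}}}}{_{abcd}}$ then survives from $V^{\Yboxdim{3pt}\yng(2,2)}$, the final tally $2+1+1=4=\dim V_{\mathcal{R}}$ confirming \eqref{eq:Irreducible_Riemann_d2}. Throughout, orthogonality and the identification of the projective-invariant pieces are inherited directly from Theorem \ref{theo:irreducible_Riemann_O}.
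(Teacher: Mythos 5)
Your proposal is correct and follows essentially the same route as the paper's own proof: admissibility of the Young-diagram labels (the row bound $\mu_1'\leqslant\Dim$ for the $\GL$ sources, the two-column bound $\lambda_1'+\lambda_2'\leqslant\Dim$ for the $\Or$ targets) eliminates components, and dimension counting of $V^\mu$ against the surviving $D^\lambda$ settles the cases lying outside the Littlewood restriction regime — precisely the paper's treatment of the $(2,2)$-sourced piece with target $(2)$ at $\Dim=2$ and of the $(2,1,1)$-sourced pieces at $\Dim=3$. One caution: your blanket principle that a singular prefactor in \eqref{eq:O_decomposition_Riemann1} "forces" a component to vanish is only a heuristic flag, since those closed forms (and the projectors behind them) are derived in the stable regime and cannot simply be continued to the singular dimension; the paper notes the singularities but always proves vanishing independently, and since your admissibility and dimension arguments do the same, nothing load-bearing in your proof actually rests on that principle.
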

\begin{proof}
First recall that for $\Dim=1$ the Riemann tensor is identically zero. 
\begin{itemize}
\item[\it{i)}] All traceless tensors in \eqref{eq:O_decomposition_Riemann2} are identically zero because the first two columns of each diagram is greater than 2.\medskip

The expressions for $\,\,\tensor{{\accentset{\left(\Yboxdim{3pt}\yng(2,2),\,\Yboxdim{3pt}\yng(2)\right)}{\mathcal{R}}}}{_{\, abcd}}\,\,$, $\,\,\tensor{{\accentset{\left(\Yboxdim{3pt}\yng(2,1,1)A,\,\Yboxdim{3pt}\yng(1,1)\right)}{{\mathcal{R}}}}}{_{abcd}}\,\,$, and $\,\,\tensor{{\accentset{\left(\Yboxdim{3pt}\yng(2,1,1)S,\,\Yboxdim{3pt}\yng(1,1)\right)}{{\mathcal{R}}}}}{_{abcd}}\,\,$ in \eqref{eq:O_decomposition_Riemann1} are singular, let us show that these tensors are actually identically zero. Because $\,\Yboxdim{5pt}\yng(2,1,1)\,\notin \mathcal{P}_4(2)$ the associated $\GL(\Dim,\R)$ irreducible representations are not present in the decomposition. Accordingly, $\,\,\tensor{{\accentset{\left(\Yboxdim{3pt}\yng(2,1,1)A,\,\Yboxdim{3pt}\yng(1,1)\right)}{{\mathcal{R}}}}}{_{abcd}}\,\,$ and $\,\,\tensor{{\accentset{\left(\Yboxdim{3pt}\yng(2,1,1)S,\,\Yboxdim{3pt}\yng(1,1)\right)}{{\mathcal{R}}}}}{_{abcd}}\,\,$ are identically zero. For $\tensor{{\accentset{\left(\Yboxdim{3pt}\yng(2,2),\,\Yboxdim{3pt}\yng(2)\right)}{\mathcal{R}}}}{_{\, abcd}}$, we note that $\Yboxdim{5pt}\yng(2,1,1)\,\notin \Lambda_4(2)$ so the Littlewood's restriction rules do not apply, yet $\Yboxdim{5pt}\yng(2)\,\in \Lambda_4(2)$ and we cannot conclude directly on the presence or absence of the irreducible representation. We resort to an analysis of the dimensions. One has $\dim(V^{\Yboxdim{3pt}\yng(2,2)})=1$ and $\dim(D^{\Yboxdim{3pt}\yng(2)})=2$, hence $\tensor{{\accentset{\left(\Yboxdim{3pt}\yng(2,2),\,\Yboxdim{3pt}\yng(2)\right)}{\mathcal{R}}}}{_{\, abcd}}$ is identically zero. The fact that $\dim(V^{\Yboxdim{3pt}\yng(2,2)})=1$ marks the presence of the scalar component $\tensor{{\accentset{\left(\Yboxdim{3pt}\yng(2,2),\,\emptyset\right)}{\mathcal{R}}}}{_{\, abcd}}$ in the decomposition.

It remains to decide on the presence or absence of $\,\,\,\tensor{{\accentset{\left(\Yboxdim{3pt}\yng(3,1),\,\Yboxdim{3pt}\yng(2)\right)}{\mathcal{R}}}}{_{\, abcd}}$ and $\,\,\,\tensor{{\accentset{\left(\Yboxdim{3pt}\yng(3,1),\,\Yboxdim{3pt}\yng(1,1)\right)}{\mathcal{R}}}}{_{\, abcd}}$ in $V_{\mathcal{R}}$. For these cases note that $\Yboxdim{3pt}\yng(3,1)\in \Par_{4}(2)$, but $\Yboxdim{3pt}\yng(3,1)\notin \Lambda_{4}(2)$ so that the Littlewood's restriction rules do not apply. Again, we resort to dimension counting arguments. One has $\dim(V^{\Yboxdim{3pt}\yng(3,1)})=3$, $\dim(D^{\Yboxdim{3pt}\yng(2)})=2$ and $\dim(D^{\Yboxdim{3pt}\yng(1,1)})=1$ so that $\dim(V^{\Yboxdim{3pt}\yng(3,1)})=\dim(D^{\Yboxdim{3pt}\yng(2)})+\dim(D^{\Yboxdim{3pt}\yng(1,1)})$. Hence both tensors are present in the decomposition.
\item[\it{ii)}]  For the traceless parts one has $\Yboxdim{5pt}\yng(3,1)\in \Lambda_4(3)$ while $\Yboxdim{5pt}\yng(2,2)\,, \Yboxdim{5pt}\yng(2,1,1)\,, \Yboxdim{5pt}\yng(1,1,1,1)\notin \Lambda_4(3)$. As a consequence $\,\,\tensor{{\accentset{\left(\Yboxdim{3pt}\yng(3,1)\right)}{\underline{\mathcal{R}}}}}{_{\, abcd}}$ is not zero while  $\,\,\tensor{{\accentset{\left(\Yboxdim{3pt}\yng(2,2)\right)}{\underline{\mathcal{R}}}}}{_{\, abcd}}$,$\,\,	\tensor{{\accentset{\left(\Yboxdim{3pt}\yng(2,1,1)S\right)}{\underline{\mathcal{R}}}}}{_{\, abcd}}$ and $\,\,	\tensor{{\accentset{\left(\Yboxdim{3pt}\yng(2,1,1)A\right)}{\underline{\mathcal{R}}}}}{_{\, abcd}}$ are identically zero.\medskip 

The 2-traceless tensors $\tensor{{\accentset{\left(\Yboxdim{3pt}\yng(3,1),\,\Yboxdim{3pt}\yng(2)\right)}{\mathcal{R}}}}{_{\, abcd}},$ $\tensor{{\accentset{\left(\Yboxdim{3pt}\yng(2,2),\,\Yboxdim{3pt}\yng(2)\right)}{\mathcal{R}}}}{_{\, abcd}}$, and $\tensor{{\accentset{\left(\Yboxdim{3pt}\yng(3,1),\,\Yboxdim{3pt}\yng(1,1)\right)}{\mathcal{R}}}}{_{\, abcd}}$ are present in the decomposition because they fall within the regime of the Littlewood's restriction rule. The two remaining 2-traceless tensors $\,\,\,\tensor{{\accentset{\left(\Yboxdim{3pt}\yng(2,1,1)A,\,\Yboxdim{3pt}\yng(1,1)\right)}{{\mathcal{R}}}}}{_{abcd}}$ and $\,\,\,\tensor{{\accentset{\left(\Yboxdim{3pt}\yng(2,1,1)S,\,\Yboxdim{3pt}\yng(1,1)\right)}{{\mathcal{R}}}}}{_{abcd}}$ do not fall within the regime of the Littlewood's restriction rules. From $\dim(V^{\Yboxdim{3pt}\yng(2,1,1)})=3$ and $\dim(D^{\Yboxdim{3pt}\yng(1,1)})=3$, we conclude that they are both present in the decomposition.
\end{itemize}
\end{proof}
\vskip 4pt

For completeness the irreducible decomposition of the Riemann tensor in the cases of zero torsion and of zero non-metricity are presented in the appendix \ref{app:Irreducible_Riemanns}.
\section{Quadratic curvature Lagrangians with distortion mass terms}\label{sec:Lagrangians}

When the irreducible decomposition of a tensor is multiplicity free (see remark \ref{rem:isotypic_component_irreducible}), finding all quadratic scalar invariants is straightforward. They are the scalar product of two copies of the same irreducible tensor, which we recall is defined uniquely in this case.\medskip 

Consider for example the irreducible decomposition of the metric Riemann tensor (also known as the Ricci decomposition) \cite{hawking_ellis_1973,lee2018introduction}: 
\begin{equation}\label{eq:irreducible_Riemann_metric}
	\tensor{R}{_{[ab][cd]}}=\accentset{\left(\Yboxdim{3pt}\yng(2,2)\right)}{\underline{R}}\tensor{\vphantom{\mathcal{R}}}{_{\,\,[ab][cd]}}+\,\,\accentset{\left(\Yboxdim{3pt}\yng(2,2)\,,\yng(2)\right)}{R}\tensor{\vphantom{\mathcal{R}}}{_{\,\,[ab][cd]}}+\,\,\accentset{\left(\Yboxdim{3pt}\yng(2,2)\,,\emptyset\right)}{R}\tensor{\vphantom{\mathcal{R}}}{_{\,\,[ab][cd]}},
\end{equation}
where $\accentset{\left(\Yboxdim{3pt}\yng(2,2)\right)}{\underline{R}}\tensor{\vphantom{\mathcal{R}}}{_{\,\,[ab][cd]}}$ is the metric Weyl tensor of pseudo-Riemannian geometry, while $\,\,\,\accentset{\left(\Yboxdim{3pt}\yng(2,2)\,,\yng(2)\right)}{R}\tensor{\vphantom{\mathcal{R}}}{_{\,\,[ab][cd]}}$ and $\,\,\,\accentset{\left(\Yboxdim{3pt}\yng(2,2)\,,\emptyset\right)}{R}\tensor{\vphantom{\mathcal{R}}}{_{\,\,[ab][cd]}}$ are the 2-traceless and full trace part of the Riemann tensor, associated with the traceless part of the Ricci tensor and the Ricci scalar respectively. In this case, the quadratic scalar invariant are $\langle\,\, \accentset{\left(\Yboxdim{3pt}\yng(2,2)\right)}{\underline{R}}\,\,,\, \,\accentset{\left(\Yboxdim{3pt}\yng(2,2)\right)}{\underline{R}}\,\,\rangle$, $\langle \hspace{0.2cm} \accentset{\left(\Yboxdim{3pt}\yng(2,2)\,,\yng(2)\right)}{R}\hspace{0.1cm},\, \hspace{0.2cm}\accentset{\left(\Yboxdim{3pt}\yng(2,2)\,,\yng(2)\right)}{R}\hspace{0.2cm}\,\rangle$, $\langle \hspace{0.2cm} \accentset{\left(\Yboxdim{3pt}\yng(2,2)\,,\,\emptyset\right)}{R}\hspace{0.1cm},\, \hspace{0.2cm}\accentset{\left(\Yboxdim{3pt}\yng(2,2)\,,\emptyset\right)}{R}\hspace{0.2cm}\rangle$, and as it is well known, there is at most three terms in the most general Lagrangian  quadratic in the Riemann tensor: the square of the Weyl tensor, of the Ricci tensor and of the Ricci scalar.\footnote{For $\Dim=2$ and $\Dim=3$, the Weyl tensor is identically zero. Additionally for $\Dim=2$ the traceless part of the Ricci tensor is zero \cite{hawking_ellis_1973,lee2018introduction}. For $\Dim=4$, the Gauss-Bonnet term is a boundary term \cite{Lanczos_1938}. This last fact generalizes to metric-affine gravity with torsion and with non-metricity containing the Weyl vector only (Weyl-Cartan spacetime) \cite{VONDERHEYDE_1975,hayashi1981gravity,Hehl_1991,Babourova_1997}. Note that the metric-affine generalization of Lovelock theories are projective invariant \cite{JANSSEN2019}.   We refer the reader to the thesis manuscript \cite[Part I, Chapter 4]{jimenez2022metric} and references therein for a general account on metric-affine Lovelock gravity.} When the connection has zero non-metricity, the decomposition of the Riemann tensor is also multiplicity-free (see appendix \ref{app:IrredRiemannT}) and the same reasoning applies for the construction quadratic curvature Lagrangians \cite{Hayashi_1980}.\medskip

It is the presence of non-metricity which is responsible for the non-uniqueness of the irreducible decomposition of the Riemann tensor (see appendix \ref{app:IrredRiemannT}). As a consequence, the same reasoning does not apply for the construction quadratic curvature Lagrangians. The construction of the quadratic Lagrangians in the Riemann curvature tensor $\tensor{\mathcal{R}}{_{abc}^d}$ and in the distortion tensor $\tensor{C}{^a_{bc}}$ are then guided by Proposition \eqref{prop:lagrangians}. The following computations are detailed in the Mathematica notebooks \cite{xMAGRiemann,xMAGdistortion}.\medskip


\paragraph{Quadratic curvature Lagrangians.} For convenience we recall here the definition of the traceless order two building blocks tensors of the irreducible decomposition \eqref{eq:Irreducible_Riemann} of a metric-affine Riemann tensor: 
		\begin{equation}\label{eq:building_blocks_Riemann_2}
	\begin{array}{lll}
		\tensor{\accentset{(1)}{\mathcal B}}{_{ab}}=\tensor{\accentset{(1)}{\mathcal{R}}}{_{(ab)}}-\tensor{\accentset{(2)}{\mathcal{R}}}{_{(ab)}}\,, \hspace{0.5cm}&
		\tensor{\accentset{(2)}{\mathcal B}}{_{ab}}=\tensor{\underline{\accentset{(1)}{\mathcal{R}}}}{_{(ab)}}+\tensor{\underline{\accentset{(2)}{\mathcal{R}}}}{_{(ab)}}\,,
		&\tensor{\accentset{(3)}{\mathcal B}}{_{ab}}=\tensor{\accentset{(1)}{\mathcal{R}}}{_{[ab]}}+\tensor{\accentset{(2)}{\mathcal{R}}}{_{[ab]}}\,,\\[6pt]
		\tensor{\accentset{(4)}{\mathcal B}}{_{ab}}=\tensor{\accentset{(1)}{\mathcal{R}}}{_{[ab]}}-\tensor{\accentset{(2)}{\mathcal{R}}}{_{[ab]}}-\tensor{\accentset{(3)}{\mathcal{R}}}{_{ab}}\,,& \tensor{\accentset{(5)}{\mathcal B}}{_{ab}}=\tensor{\accentset{(1)}{\mathcal{R}}}{_{[ab]}}-\tensor{\accentset{(2)}{\mathcal{R}}}{_{[ab]}}+\tensor{\accentset{(3)}{\mathcal{R}}}{_{ab}} \,.
	\end{array}
\end{equation}
The tensors $\tensor{\accentset{(1)}{\mathcal B}}{_{ab}}$ and $\tensor{\accentset{(2)}{\mathcal B}}{_{ab}}$ are symmetric with respect to permutation of indices while 
$\tensor{\accentset{(3)}{\mathcal B}}{_{ab}}$, $\tensor{\accentset{(4)}{\mathcal B}}{_{ab}}$ and $\tensor{\accentset{(5)}{\mathcal B}}{_{ab}}$ are antisymmetric. 
\begin{proposition}\label{prop:Lagrangian_Riemann}
	The general Lagrangian at most quadratic in the Riemann tensor obtained from the irreducible decomposition \ref{theo:irreducible_Riemann_O} is given by:
	\begin{equation}\label{eq:lagrangienR}
		\begin{aligned}
			\mathcal{L}_{\mathcal{R}}= \gamma_1\mathcal{R}& + \alpha_1 \,\, \accentset{\left(\Yboxdim{3pt}\yng(3,1)\right)}{\underline{\mathcal{R}}}\tensor{\vphantom{\mathcal{R}}}{^{\,abcd}}\,\, \tensor{{\accentset{\left(\Yboxdim{3pt}\yng(3,1)\right)}{\underline{\mathcal{R}}}}}{_{\, abcd}}+\alpha_2\,\, \accentset{\left(\Yboxdim{3pt}\yng(2,2)\right)}{\underline{\mathcal{R}}}\tensor{\vphantom{\mathcal{R}}}{^{\,abcd}}\,\, \tensor{{\accentset{\left(\Yboxdim{3pt}\yng(2,2)\right)}{\underline{\mathcal{R}}}}}{_{\, abcd}}+\alpha_3 \,\, \accentset{\left(\Yboxdim{3pt}\yng(2,1,1)S\right)}{\underline{\mathcal{R}}}\tensor{\vphantom{\mathcal{R}}}{^{\,abcd}}\,\,\,\, \tensor{{\accentset{\left(\Yboxdim{3pt}\yng(2,1,1)S\right)}{{\underline{\mathcal{R}}}}}}{_{\, abcd}}+\alpha_4 \,\, \accentset{\left(\Yboxdim{3pt}\yng(2,1,1)A\right)}{\underline{\mathcal{R}}}\tensor{\vphantom{\mathcal{R}}}{^{\,abcd}}\,\,\,\, \tensor{{\accentset{\left(\Yboxdim{3pt}\yng(2,1,1)A\right)}{{\underline{\mathcal{R}}}}}}{_{\, abcd}}\\
			&+\alpha_5\, \accentset{\left(\Yboxdim{3pt}\yng(1,1,1,1)\right)}{\underline{\mathcal{R}}}\tensor{\vphantom{\mathcal{R}}}{^{\,abcd}}\,\,\tensor{{\accentset{\left(\Yboxdim{3pt}\yng(1,1,1,1)\right)}{\underline{\mathcal{R}}}}}{_{\, abcd}}
			+\alpha_{34} \,\, \accentset{\left(\Yboxdim{3pt}\yng(2,1,1)S\right)}{\underline{\mathcal{R}}}\tensor{\vphantom{\mathcal{R}}}{^{\,abcd}}\,\,\,\,\tensor{{\accentset{\left(\Yboxdim{3pt}\yng(2,1,1)A\right)}{{\underline{\mathcal{R}}}}}}{_{\, acbd}}+ \beta_1 \, \tensor{\accentset{(1)}{\mathcal B}}{^{ab}} \,\tensor{\accentset{(1)}{\mathcal B}}{_{ab}}+\beta_2 \, \tensor{\accentset{(2)}{\mathcal B}}{^{ab}} \,\tensor{\accentset{(2)}{\mathcal B}}{_{ab}}+\beta_3 \, \tensor{\accentset{(3)}{\mathcal B}}{^{ab}} \,\tensor{\accentset{(3)}{\mathcal B}}{_{ab}}+\beta_4 \, \tensor{\accentset{(4)}{\mathcal B}}{^{ab}}\, \tensor{\accentset{(4)}{\mathcal B}}{_{ab}}\\
			&+\beta_5 \, \tensor{\accentset{(5)}{\mathcal B}}{^{ab}}\, \tensor{\accentset{(5)}{\mathcal B}}{_{ab}}+\beta_{12}\,\tensor{\accentset{(1)}{\mathcal B}}{^{ab}}\,\tensor{\accentset{(2)}{\mathcal B}}{_{ab}}+\beta_{34}\,\tensor{\accentset{(3)}{\mathcal B}}{^{ab}}\,\tensor{\accentset{(4)}{\mathcal B}}{_{ab}}+\,\beta_{35}\,\tensor{\accentset{(3)}{\mathcal B}}{^{ab}}\,\tensor{\accentset{(5)}{\mathcal B}}{_{ab}}+\beta_{45}\,\tensor{\accentset{(4)}{\mathcal B}}{^{ab}}\,\tensor{\accentset{(5)}{\mathcal B}}{_{ab}}+\,\gamma_2 \,\mathcal{R}^2\,.
		\end{aligned}
	\end{equation}
The conditions for the projective invariance of $\mathcal{L}_{\mathcal{R}}$ are
\begin{equation}
	\frac{\beta_5}{\beta_4}=\left(\frac{\Dim-2}{\Dim+2}\right)^2 \, \,, \hspace{1cm} \frac{\beta_{45}}{\beta_{4}}=2\left(\frac{\Dim-2}{\Dim+2 }\right)\,\,, \hspace{1cm} \frac{\beta_{34}}{\beta_{35}}=\left(\frac{\Dim+2}{\Dim-2}\right)\,.
\end{equation}
\end{proposition}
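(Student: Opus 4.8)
The plan is to prove Proposition \ref{prop:Lagrangian_Riemann} in two stages: first establishing that the listed terms generate all quadratic scalar invariants, and second extracting the projective invariance conditions. For the first stage, I would invoke Proposition \ref{prop:lagrangians} applied to the irreducible decomposition \eqref{eq:Irreducible_Riemann} of the Riemann tensor. The diagonal invariants $\langle \accentset{(i)}{\mathcal{R}}, \accentset{(i)}{\mathcal{R}}\rangle$ give the squared-norm terms $\alpha_1,\ldots,\alpha_5$ for the five traceless components, and the squared-norm terms of the building blocks $\accentset{(k)}{\mathcal B}$ give $\beta_1,\ldots,\beta_5$. The off-diagonal invariants of part (ii) of Proposition \ref{prop:lagrangians} arise precisely when two irreducible components carry equivalent $\Or(1,\Dim-1)$ representations: from the branching \eqref{eq:branching_Riemann} the multiplicities are $m(\yng(2,1,1))=2$ (whence the cross term $\alpha_{34}$ between the two $\accentset{(2,1,1)}{\underline{\mathcal{R}}}$ pieces) and $m(\yng(1,1))=3$, $m(\yng(2))=2$, so the building blocks must be grouped by their $\Or$-type. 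The symmetric blocks $\accentset{(1)}{\mathcal B},\accentset{(2)}{\mathcal B}$ are of type $\yng(2)$, giving the single cross term $\beta_{12}$, while the three antisymmetric blocks $\accentset{(3)}{\mathcal B},\accentset{(4)}{\mathcal B},\accentset{(5)}{\mathcal B}$ are of type $\yng(1,1)$, giving the cross terms $\beta_{34},\beta_{35},\beta_{45}$. The two scalar terms $\gamma_1\mathcal{R}$ and $\gamma_2\mathcal{R}^2$ account for the full-trace representation $D^\emptyset$ and the linear invariant.

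The main obstacle is organizational rather than conceptual: I would need to verify that the building blocks $\accentset{(k)}{\mathcal B}$ introduced in \eqref{eq:building_blocks_Riemann_2} really are the correct $\Or(1,\Dim-1)$-irreducible traces, and that no further off-diagonal invariants occur. In particular one must check that the 2-traceless components and the genuinely traceless components carry inequivalent representations even when their Young labels coincide (they sit in different trace-degree sectors $D^{(f)}$, hence are orthogonal by \eqref{eq:trace_decomposition}), so that no cross terms connect a traceless and a 2-traceless piece. I would also confirm that the remaining cross term $\alpha_{34}$ between $\accentset{(2,1,1)S}{\underline{\mathcal{R}}}$ and $\accentset{(2,1,1)A}{\underline{\mathcal{R}}}$ is realized by a concrete $b\in B_4(\Dim)$ with $b_{ij}\neq 0$, which is the index rearrangement producing $\tensor{{\accentset{(2,1,1)A}{\underline{\mathcal{R}}}}}{_{\, acbd}}$. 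All these are direct applications of Schur's lemma and the orthogonality of the decomposition, and are recorded in the Mathematica notebook \cite{xMAGRiemann}.

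For the second stage I would compute how each quadratic term transforms under the projective change of connection. By part (iii) of Theorem \ref{theo:irreducible_Riemann_O}, all traceless components, the full-trace component, and the 2-traceless pieces built from $\accentset{(1)}{\mathcal B},\accentset{(2)}{\mathcal B},\accentset{(3)}{\mathcal B}$ are projective invariant; hence the coefficients $\alpha_1,\ldots,\alpha_5,\alpha_{34},\gamma_1,\gamma_2,\beta_1,\beta_2,\beta_3,\beta_{12}$ multiply automatically invariant terms. The only projectively non-invariant objects are the blocks $\accentset{(4)}{\mathcal B},\accentset{(5)}{\mathcal B}$, which from \eqref{eq:proj_trans_Riccis} and the definitions \eqref{eq:building_blocks_Riemann_2} shift under $\xi$ by multiples of $\accentset{g}{\nabla}_{[a}\xi_{b]}$. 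Writing $\accentset{(4)}{\mathcal B}{}_{ab}\to \accentset{(4)}{\mathcal B}{}_{ab}+c_4\,\accentset{g}{\nabla}_{[a}\xi_{b]}$ and $\accentset{(5)}{\mathcal B}{}_{ab}\to \accentset{(5)}{\mathcal B}{}_{ab}+c_5\,\accentset{g}{\nabla}_{[a}\xi_{b]}$ with the coefficients $c_4,c_5$ read off from \eqref{eq:proj_trans_Riccis}, the variation of the terms $\beta_4,\beta_5,\beta_{34},\beta_{35},\beta_{45}$ collects into expressions proportional to $\accentset{g}{\nabla}_{[a}\xi_{b]}$ times the blocks, plus terms quadratic in $\accentset{g}{\nabla}_{[a}\xi_{b]}$. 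Requiring these variations to vanish for arbitrary $\xi$ yields a homogeneous linear system in the $\beta$'s, whose solution gives the three stated ratios $\beta_5/\beta_4$, $\beta_{45}/\beta_4$, $\beta_{34}/\beta_{35}$. The factors $(\Dim\pm2)$ enter through $c_4=\tfrac{1}{2}(\Dim+2)$-type and $c_5=\tfrac12(\Dim-2)$-type normalizations coming from the homothetic contribution $2\Dim\,\accentset{g}{\nabla}_{[a}\xi_{b]}$ in \eqref{eq:proj_trans_Riccis}. I expect the bookkeeping of these dimension-dependent coefficients to be the delicate computational point, but it is mechanical once the transformation laws of $\accentset{(4)}{\mathcal B}$ and $\accentset{(5)}{\mathcal B}$ are pinned down.
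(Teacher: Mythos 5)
Your overall strategy is the one the paper itself follows: invoke Proposition \ref{prop:lagrangians} to enumerate the diagonal invariants of each irreducible piece together with the off-diagonal invariants attached to pairs of equivalent $\Or(1,\Dim-1)$ representations (the two $D^{(2,1,1)}$'s giving $\alpha_{34}$, the two $D^{(2)}$'s giving $\beta_{12}$, the three $D^{(1,1)}$'s giving $\beta_{34},\beta_{35},\beta_{45}$, and $D^{\emptyset}$ giving $\gamma_2$), express the quadratic invariants of the 2-traceless parts through the building blocks $\accentset{(k)}{\mathcal B}$, and then impose projective invariance via the transformation laws of $\accentset{(4)}{\mathcal B}$ and $\accentset{(5)}{\mathcal B}$; the paper relegates the explicit checks to the Mathematica notebook. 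Your worry about couplings between traceless and 2-traceless sectors is in fact vacuous here, since equivalence of two $\Or$-irreducibles $D^{\lambda}$ forces the same label $\lambda$ and hence, at fixed $n=4$, the same trace degree $f_\lambda$.

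There is, however, one concrete error in your sketch which, carried through literally, would produce the reciprocals of the stated conditions: you attribute the factor $(\Dim+2)$ to $c_4$ and $(\Dim-2)$ to $c_5$, but it is the other way around. From \eqref{eq:proj_trans_Riccis} and the definitions \eqref{eq:building_blocks_Riemann_2} one has
\begin{equation*}
\delta_\xi\tensor{\accentset{(4)}{\mathcal B}}{_{ab}}=(2+2-2\Dim)\,\tensor{{\accentset{g}{\nabla}}}{_{[a}}\tensor{\xi}{_{b]}}=-2(\Dim-2)\,\tensor{{\accentset{g}{\nabla}}}{_{[a}}\tensor{\xi}{_{b]}}\,,\qquad
\delta_\xi\tensor{\accentset{(5)}{\mathcal B}}{_{ab}}=(2+2+2\Dim)\,\tensor{{\accentset{g}{\nabla}}}{_{[a}}\tensor{\xi}{_{b]}}=2(\Dim+2)\,\tensor{{\accentset{g}{\nabla}}}{_{[a}}\tensor{\xi}{_{b]}}\,.
\end{equation*}
Vanishing of the coefficient of $\tensor{\accentset{(4)}{\mathcal B}}{^{ab}}\tensor{{\accentset{g}{\nabla}}}{_{[a}}\tensor{\xi}{_{b]}}$ in the variation reads $2\beta_4 c_4+\beta_{45}c_5=0$, i.e.\ $\beta_{45}/\beta_4=-2c_4/c_5$: with your swapped values this is proportional to $(\Dim+2)/(\Dim-2)$, contradicting the proposition, whereas with the correct values it gives $2(\Dim-2)/(\Dim+2)$ as stated; the conditions on $\beta_5/\beta_4$ and $\beta_{34}/\beta_{35}$ follow in the same way, and the terms quadratic in $\xi$ then cancel automatically. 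A cross-check that fixes the bookkeeping at once: the unique projectively invariant combination is $(\Dim+2)\accentset{(4)}{\mathcal B}+(\Dim-2)\accentset{(5)}{\mathcal B}$, so the invariant sector of the Lagrangian must be spanned by its square and by its contraction with $\accentset{(3)}{\mathcal B}$, which yields the three ratios immediately. With this correction, the rest of your plan goes through and coincides with the paper's construction.
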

For the construction of this Lagrangian we have used the fact that each 2-traceless parts in the decomposition \eqref{eq:Irreducible_Riemann} is determined by exactly one of the building block tensors $\accentset{(i)}{\mathcal{B}}$. Indeed we have the following relations:
\begin{equation*}
	\begin{aligned}
		&\accentset{\left(\Yboxdim{3pt}\yng(3,1)\,,\Yboxdim{3pt}\yng(2)\right)}{\mathcal{R}}\tensor{\vphantom{\mathcal{R}}}{^{\hspace{0.3cm} abcd}}\hspace{0.3cm} \tensor{{\accentset{\left(\Yboxdim{3pt}\yng(3,1)\,,\Yboxdim{3pt}\yng(2)\right)}{\mathcal{R}}}}{_{\,\, abcd}}=\frac{\tensor{\accentset{(1)}{\mathcal{B}}}{^{ab}}\,\tensor{\accentset{(1)}{\mathcal{B}}}{_{ab}}}{\Dim}\, \,,\hspace{1cm}
		\accentset{\left(\Yboxdim{3pt}\yng(2,2)\,,\Yboxdim{3pt}\yng(2)\right)}{\mathcal{R}}\tensor{\vphantom{\mathcal{R}}}{^{\hspace{0.3cm} abcd}}\hspace{0.3cm}
		\tensor{{\accentset{\left(\Yboxdim{3pt}\yng(2,2)\,,\Yboxdim{3pt}\yng(2)\right)}{\mathcal{R}}}}{_{\,\, abcd}}=\frac{\tensor{\accentset{(2)}{\mathcal{B}}}{^{ab}}\,\tensor{\accentset{(2)}{\mathcal{B}}}{_{ab}}}{\Dim-2}\,,\hspace{1cm}
		\accentset{\left(\Yboxdim{3pt}\yng(2,1,1)A\,,\Yboxdim{3pt}\yng(1,1)\right)}{{\mathcal{R}}}\tensor{\vphantom{{\mathcal{R}}}}{^{\hspace{0.1cm} abcd}}\hspace{0.3cm} \tensor{{\accentset{\left(\Yboxdim{3pt}\yng(2,1,1)A\,,\Yboxdim{3pt}\yng(1,1)\right)}{{\mathcal{R}}}}}{_{\,\, abcd}}=\frac{\tensor{\accentset{(3)}{\mathcal{B}}}{^{ab}}\,\tensor{\accentset{(3)}{\mathcal{B}}}{_{ab}}}{\Dim-2}\,,\\
		&
		\accentset{\left(\Yboxdim{3pt}\yng(2,1,1)S\,,\Yboxdim{3pt}\yng(1,1)\right)}{{\mathcal{R}}}\tensor{\vphantom{{\mathcal{R}}}}{^{\hspace{0.1cm} abcd}}\hspace{0.3cm} \tensor{{\accentset{\left(\Yboxdim{3pt}\yng(2,1,1)S\,,\Yboxdim{3pt}\yng(1,1)\right)}{{\mathcal{R}}}}}{_{\,\, abcd}}=\dfrac{\tensor{\accentset{(4)}{\mathcal{B}}}{^{ab}}\,\tensor{\accentset{(4)}{\mathcal{B}}}{_{ab}}}{2(\Dim-2)}\,,\hspace{1cm}
		\accentset{\left(\Yboxdim{3pt}\yng(3,1)\,,\Yboxdim{3pt}\yng(1,1)\right)}{\mathcal{R}}\tensor{\vphantom{\mathcal{R}}}{^{\hspace{0.3cm} abcd}}\hspace{0.3cm} \tensor{{\accentset{\left(\Yboxdim{3pt}\yng(3,1)\,,\Yboxdim{3pt}\yng(1,1)\right)}{\mathcal{R}}}}{_{\,\, abcd}}=\dfrac{\tensor{\accentset{(5)}{\mathcal{B}}}{^{ab}}\,\tensor{\accentset{(5)}{\mathcal{B}}}{_{ab}}}{2(\Dim+2)}\, \,,\hspace{0.8cm}
		\accentset{\left(\Yboxdim{3pt}\yng(3,1)\,,\Yboxdim{3pt}\yng(2)\right)}{\mathcal{R}}\tensor{\vphantom{\mathcal{R}}}{^{\hspace{0.3cm} a}_c^c^b}\hspace{0.3cm}\accentset{\left(\Yboxdim{3pt}\yng(2,2)\,,\Yboxdim{3pt}\yng(2)\right)}{\mathcal{R}}\tensor{\vphantom{\mathcal{R}}}{_{\hspace{0.3cm} a}_c^c_b}=-\frac{1}{4}\tensor{\accentset{(1)}{\mathcal{B}}}{^{ab}}\,\tensor{\accentset{(2)}{\mathcal{B}}}{_{ab}}\,,\\
		&
		\accentset{\left(\Yboxdim{3pt}\yng(2,1,1)A\,,\Yboxdim{3pt}\yng(1,1)\right)}{\mathcal{R}}\tensor{\vphantom{\mathcal{R}}}{^{\hspace{0.2cm} a}_c^c^b}\hspace{0.2cm}\accentset{\left(\Yboxdim{3pt}\yng(2,1,1)S\,,\Yboxdim{3pt}\yng(1,1)\right)}{\mathcal{R}}\tensor{\vphantom{\mathcal{R}}}{_{\hspace{0.3cm} a}_c^c_b}=-\frac{1}{8}\tensor{\accentset{(3)}{\mathcal{B}}}{^{ab}}\,\tensor{\accentset{(4)}{\mathcal{B}}}{_{ab}}\,,\hspace{0.6cm}
		\accentset{\left(\Yboxdim{3pt}\yng(2,1,1)A\,,\Yboxdim{3pt}\yng(1,1)\right)}{\mathcal{R}}\tensor{\vphantom{\mathcal{R}}}{^{\hspace{0.3cm} a}_c^c^b}\hspace{0.3cm}\accentset{\left(\Yboxdim{3pt}\yng(3,1)\,,\Yboxdim{3pt}\yng(1,1)\right)}{\mathcal{R}}\tensor{\vphantom{\mathcal{R}}}{_{\hspace{0.1cm} a}_c^c_b}=-\frac{1}{8}\tensor{\accentset{(3)}{\mathcal{B}}}{^{ab}}\,\tensor{\accentset{(5)}{\mathcal{B}}}{_{ab}}\,,\hspace{0.6cm}
		\accentset{\left(\Yboxdim{3pt}\yng(2,1,1)S\,,\Yboxdim{3pt}\yng(1,1)\right)}{\mathcal{R}}\tensor{\vphantom{\mathcal{R}}}{^{\hspace{0.3cm} a}_c^c^b}\hspace{0.3cm}\accentset{\left(\Yboxdim{3pt}\yng(3,1)\,,\Yboxdim{3pt}\yng(1,1)\right)}{\mathcal{R}}\tensor{\vphantom{\mathcal{R}}}{_{\hspace{0.3cm} a}_c^c_b}=\frac{1}{16}\tensor{\accentset{(4)}{\mathcal{B}}}{^{ab}}\,\tensor{\accentset{(5)}{\mathcal{B}}}{_{ab}}\,.
	\end{aligned}
\end{equation*}

\begin{remark}
\begin{itemize}
\item[\it{i)}] For $\Dim=2$, one has (see Proposition \ref{prop:small_d_Riemann}) 
	\begin{equation}\label{eq:lagrangienR_2d}
		\mathcal{L}_{\mathcal{R}}= \gamma_1\mathcal{R}+ \beta_1 \, \tensor{\accentset{(1)}{\mathcal B}}{^{ab}} \,\tensor{\accentset{(1)}{\mathcal B}}{_{ab}}+\beta_5 \, \tensor{\accentset{(5)}{\mathcal B}}{^{ab}}\, \tensor{\accentset{(5)}{\mathcal B}}{_{ab}}\,+\,\gamma_2 \,\mathcal{R}^2\,,
\end{equation}
and the condition of projective invariance of the action is $\beta_5=0$\,.
\item[\it{ii)}] For $\Dim=3$, one has (see Proposition \ref{prop:small_d_Riemann}) 
	\begin{equation}\label{eq:lagrangienR_3d}
	\begin{aligned}
		\mathcal{L}_{\mathcal{R}}= &\gamma_1\mathcal{R} + \alpha_1 \,\, \accentset{\left(\Yboxdim{3pt}\yng(3,1)\right)}{\underline{\mathcal{R}}}\tensor{\vphantom{\mathcal{R}}}{^{\,abcd}}\,\, \tensor{{\accentset{\left(\Yboxdim{3pt}\yng(3,1)\right)}{\underline{\mathcal{R}}}}}{_{\, abcd}}+ \beta_1 \, \tensor{\accentset{(1)}{\mathcal B}}{^{ab}} \,\tensor{\accentset{(1)}{\mathcal B}}{_{ab}}+\beta_2 \, \tensor{\accentset{(2)}{\mathcal B}}{^{ab}} \,\tensor{\accentset{(2)}{\mathcal B}}{_{ab}}+\beta_3 \, \tensor{\accentset{(3)}{\mathcal B}}{^{ab}} \,\tensor{\accentset{(3)}{\mathcal B}}{_{ab}}+\beta_4 \, \tensor{\accentset{(4)}{\mathcal B}}{^{ab}}\, \tensor{\accentset{(4)}{\mathcal B}}{_{ab}}\\
		&+\beta_5 \, \tensor{\accentset{(5)}{\mathcal B}}{^{ab}}\, \tensor{\accentset{(5)}{\mathcal B}}{_{ab}}+\beta_{12}\,\tensor{\accentset{(1)}{\mathcal B}}{^{ab}}\,\tensor{\accentset{(2)}{\mathcal B}}{_{ab}}+\beta_{34}\,\tensor{\accentset{(3)}{\mathcal B}}{^{ab}}\,\tensor{\accentset{(4)}{\mathcal B}}{_{ab}}+\,\beta_{35}\,\tensor{\accentset{(3)}{\mathcal B}}{^{ab}}\,\tensor{\accentset{(5)}{\mathcal B}}{_{ab}}+\beta_{45}\,\tensor{\accentset{(4)}{\mathcal B}}{^{ab}}\,\tensor{\accentset{(5)}{\mathcal B}}{_{ab}}+\,\gamma_2 \,\mathcal{R}^2\,.
	\end{aligned}
\end{equation}
\end{itemize}
\end{remark}
\vskip 8pt

The conditions for which $\mathcal{L}_{\mathcal{R}}$ reduces to the Einstein-Hilbert action of General Relativity in the limit of zero distortion are
\begin{equation}
	\alpha_2=\beta_2=\gamma_2=0.
\end{equation}
These last conditions can be understood from the form of the irreducible decomposition of the metric Riemann tensor \eqref{eq:irreducible_Riemann_metric}. As a consequence the most general projective invariant Lagrangian which reduces to the Einstein-Hilbert Lagrangian in the limit of zero distortion has $11$ free parameters and may be written as: 

\begin{equation}\label{eq:lagrangienR_EH}
	\begin{aligned}
		\mathcal{L}_{\mathcal{R}}=\gamma_1\mathcal{R} & + \alpha_1 \,\, \accentset{\left(\Yboxdim{3pt}\yng(3,1)\right)}{\underline{\mathcal{R}}}\tensor{\vphantom{\mathcal{R}}}{^{\,abcd}}\,\, \tensor{{\accentset{\left(\Yboxdim{3pt}\yng(3,1)\right)}{\underline{\mathcal{R}}}}}{_{\, abcd}}+\alpha_3 \,\, \accentset{\left(\Yboxdim{3pt}\yng(2,1,1)S\right)}{\underline{\mathcal{R}}}\tensor{\vphantom{\mathcal{R}}}{^{\,abcd}}\,\,\,\, \tensor{{\accentset{\left(\Yboxdim{3pt}\yng(2,1,1)S\right)}{{\underline{\mathcal{R}}}}}}{_{\, abcd}}+\alpha_4 \,\, \accentset{\left(\Yboxdim{3pt}\yng(2,1,1)A\right)}{\underline{\mathcal{R}}}\tensor{\vphantom{\mathcal{R}}}{^{\,abcd}}\,\,\,\, \tensor{{\accentset{\left(\Yboxdim{3pt}\yng(2,1,1)A\right)}{{\underline{\mathcal{R}}}}}}{_{\, abcd}}
		+\alpha_5\, \accentset{\left(\Yboxdim{3pt}\yng(1,1,1,1)\right)}{\underline{\mathcal{R}}}\tensor{\vphantom{\mathcal{R}}}{^{\,abcd}}\,\,\tensor{{\accentset{\left(\Yboxdim{3pt}\yng(1,1,1,1)\right)}{\underline{\mathcal{R}}}}}{_{\, abcd}}\\
		&+\alpha_{34} \,\, \accentset{\left(\Yboxdim{3pt}\yng(2,1,1)S\right)}{\underline{\mathcal{R}}}\tensor{\vphantom{\mathcal{R}}}{^{\,abcd}}\,\,\,\,\tensor{{\accentset{\left(\Yboxdim{3pt}\yng(2,1,1)A\right)}{{\underline{\mathcal{R}}}}}}{_{\, acbd}}+\beta_4 \left(\, \tensor{\accentset{(4)}{\mathcal B}}{^{ab}}\, \tensor{\accentset{(4)}{\mathcal B}}{_{ab}}+ \, f_{45}(\Dim)\,\tensor{\accentset{(4)}{\mathcal B}}{^{ab}}\,\tensor{\accentset{(5)}{\mathcal B}}{_{ab}}+ f_{5}(\Dim)\tensor{\accentset{(5)}{\mathcal B}}{^{ab}}\, \tensor{\accentset{(5)}{\mathcal B}}{_{ab}}\right)\,\\
		&+ \,\beta_{35}\left(\tensor{\accentset{(3)}{\mathcal B}}{^{ab}}\,\tensor{\accentset{(5)}{\mathcal B}}{_{ab}}+f_{34}(\Dim)\,\tensor{\accentset{(3)}{\mathcal B}}{^{ab}}\,\tensor{\accentset{(4)}{\mathcal B}}{_{ab}}\right)+\beta_1 \, \tensor{\accentset{(1)}{\mathcal B}}{^{ab}} \,\tensor{\accentset{(1)}{\mathcal B}}{_{ab}}+\beta_3 \, \tensor{\accentset{(3)}{\mathcal B}}{^{ab}} \,\tensor{\accentset{(3)}{\mathcal B}}{_{ab}}+\beta_{12}\,\tensor{\accentset{(1)}{\mathcal B}}{^{ab}}\,\tensor{\accentset{(2)}{\mathcal B}}{_{ab}},\\
	\end{aligned}
\end{equation}
where 
\begin{equation}
f_{5}(\Dim)=\left(\frac{\Dim-2}{\Dim+2}\right)^2\,, \hspace{0.5cm} f_{45}(\Dim)=2\left(\frac{\Dim-2}{\Dim+2 }\right)\,, \hspace{0.5cm} f_{34}(\Dim)=\left(\frac{\Dim+2}{\Dim-2}\right)\,.
\end{equation}

\vskip 6pt
\paragraph{Quadratic distortion Lagrangians.} 
\begin{proposition}\label{prop:Lagrangian_distortion}
	The most general Lagrangian quadratic in the distortion tensor constructed from the decomposition \ref{theo:irreducible_distortion_O} is 
	\begin{equation}\label{eq:lagrangienC}
		\begin{aligned}
			\mathcal{L}_C=&c_1\,\, \accentset{\left(\Yboxdim{3pt}\yng(3)\right)}{\underline{C}}\tensor{\vphantom{C}}{^{\,\,abc}}\,\,\tensor{{\accentset{\left(\Yboxdim{3pt}\yng(3)\right)}{\underline{C}}}}{_{\,abc}}+c_2\,\, \accentset{\left(\Yboxdim{3pt}\yng(2,1)S\right)}{\underline{C}}\tensor{\vphantom{C}}{^{\,\,abc}}\,\,\tensor{{\accentset{\left(\Yboxdim{3pt}\yng(2,1)S\right)}{\underline{C}}}}{_{\,abc}}+c_3\,\, \accentset{\left(\Yboxdim{3pt}\yng(2,1)A\right)}{\underline{C}}\tensor{\vphantom{C}}{^{\,\,abc}}\,\,\tensor{{\accentset{\left(\Yboxdim{3pt}\yng(2,1)A\right)}{\underline{C}}}}{_{\,abc}}+c_4\,\, \accentset{\left(\Yboxdim{3pt}\yng(1,1,1)\right)}{\underline{C}}\tensor{\vphantom{C}}{^{\,\,abc}}\,\,\tensor{{\accentset{\left(\Yboxdim{3pt}\yng(1,1,1)\right)}{\underline{C}}}}{_{\,abc}}+ c_{23}\,\,\accentset{\left(\Yboxdim{3pt}\yng(2,1)S\right)}{\underline{C}}\tensor{\vphantom{C}}{^{\,\,abc}}\,\,\tensor{{\accentset{\left(\Yboxdim{3pt}\yng(2,1)A\right)}{\underline{C}}}}{_{\,acb}}\,\\
			&+b_1 \,\accentset{(1)}{B}\tensor{\vphantom{B}}{^{\, a}}\,\tensor{{\accentset{(1)}{B}}}{_a}+b_2 \,\accentset{(2)}{B}\tensor{\vphantom{B}}{^{\, a}}\,\tensor{{\accentset{(2)}{B}}}{_a}+b_3 \,\accentset{(3)}{B}\tensor{\vphantom{B}}{^{\, a}}\,\tensor{{\accentset{(3)}{B}}}{_a}+b_{12} \,\accentset{(1)}{B}\tensor{\vphantom{B}}{^{\, a}}\,\tensor{{\accentset{(2)}{B}}}{_a}+b_{13} \,\accentset{(1)}{B}\tensor{\vphantom{B}}{^{\, a}}\,\tensor{{\accentset{(3)}{B}}}{_a}+b_{23} \,\accentset{(2)}{B}\tensor{\vphantom{B}}{^{\, a}}\,\tensor{{\accentset{(3)}{B}}}{_a}\,.
		\end{aligned}
	\end{equation}
The conditions for the projective invariance of \eqref{eq:lagrangienC} are 
\begin{equation}
	b_{1}=4\,\left(\frac{\Dim-1}{\Dim+2}\right)^2 b_{2}\,, \hspace{1cm}  b_{12}=4\,\left(\frac{\Dim-1}{\Dim+2}\right) b_2\,, \hspace{1cm} b_{13}=2\,\left(\frac{\Dim-1}{\Dim+2}\right) b_{23}\,.
\end{equation}
\end{proposition}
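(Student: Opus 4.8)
The plan is to read off the quadratic invariants directly from Proposition~\ref{prop:lagrangians}, applied to the irreducible $\Or(1,\Dim-1)$-decomposition of the distortion tensor supplied by Theorem~\ref{theo:irreducible_distortion_O}, and then to cut the resulting list down by imposing projective invariance. First I would enumerate the irreducible pieces of $\tensor{C}{_{abc}}$: four traceless tensors, of types $D^{(3)}$, $D^{(2,1)}$ (occurring twice) and $D^{(1,1,1)}$, together with three full-trace vectors, all of type $D^{(1)}$. Part~(i) of Proposition~\ref{prop:lagrangians} contributes one diagonal invariant $\langle\,\accentset{(i)}{C}\,,\,\accentset{(i)}{C}\,\rangle$ per piece, giving the four traceless squares $c_1,\dots,c_4$ and the three vector squares. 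Part~(ii) contributes a cross invariant only between equivalent representations; since $D^{(3)}$, $D^{(2,1)}$, $D^{(1,1,1)}$ and $D^{(1)}$ are pairwise inequivalent, the only couplings are the one between the two equivalent $D^{(2,1)}$ traceless tensors and the three among the $D^{(1)}$ vectors. For the $D^{(2,1)}$ pair the nonzero intertwiner $b_{ij}$ is realized by the transposition exchanging the last two indices, hence the contraction $\accentset{(\cdots)}{\underline{C}}{}^{abc}\,\accentset{(\cdots)}{\underline{C}}{}_{acb}$ producing $c_{23}$. This reproduces exactly the eleven terms and, in particular, forbids any term linking a traceless tensor to a vector.

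Next I would trade the three full-trace invariants for the building blocks. By \eqref{eq:O_decomposition_distortion1} each full-trace component is proportional to a single vector, and these vectors are precisely $\accentset{(1)}{B}$, $\accentset{(2)}{B}$, $\accentset{(3)}{B}$ of \eqref{eq:building_blocks_distortion}. The three diagonal and three off-diagonal vector invariants then become the $b_1,b_2,b_3$ and $b_{12},b_{13},b_{23}$ terms contracted with the canonical scalar product, completing the form of $\mathcal{L}_C$.

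Finally, for projective invariance I would insert $\tensor{\delta_\xi C}{_{abc}}=g_{ac}\xi_b$, read off from \eqref{eq:projective_transformation}, into the definitions \eqref{eq:def_traces_distortion} of the three traces to obtain $\tensor{\delta_\xi\accentset{(1)}{C}}{_a}=\xi_a$, $\tensor{\delta_\xi\accentset{(2)}{C}}{_a}=\Dim\,\xi_a$ and $\tensor{\delta_\xi\accentset{(3)}{C}}{_a}=\xi_a$, whence $\delta_\xi\accentset{(1)}{B}=(\Dim+2)\,\xi$, $\delta_\xi\accentset{(2)}{B}=-2(\Dim-1)\,\xi$ and $\delta_\xi\accentset{(3)}{B}=0$. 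The projectively invariant vector combinations are therefore $\accentset{(3)}{B}$ and $W:=2(\Dim-1)\,\accentset{(1)}{B}+(\Dim+2)\,\accentset{(2)}{B}$; requiring the vector part of $\mathcal{L}_C$ to be built solely from $\langle W,W\rangle$, $\langle W,\accentset{(3)}{B}\rangle$ and $\langle\accentset{(3)}{B},\accentset{(3)}{B}\rangle$ and matching coefficients yields the three stated ratios, leaving $b_3$ free. The traceless coefficients $c_1,\dots,c_4,c_{23}$ remain unconstrained because every traceless piece is projective invariant, its variation being pure trace.

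The main obstacle I anticipate is not the computation but the structural input: justifying that Proposition~\ref{prop:lagrangians} genuinely exhausts the quadratic invariants (this is where Schur's lemma and the non-degeneracy of the scalar product enter) and correctly pinning down the nonzero intertwiner $b_{ij}$ for the $D^{(2,1)}$ pair so that the cross term carries the right index permutation. Once these are secured, the transformation law of the traces and the linear algebra of matching coefficients are routine and can be checked in the accompanying Mathematica notebook \cite{xMAGdistortion}.
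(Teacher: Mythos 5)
Your proposal is correct and follows essentially the same route as the paper: enumeration of the eleven quadratic invariants via Proposition~\ref{prop:lagrangians} (Schur's lemma plus non-degeneracy of the scalar product), reduction of the full-trace invariants to the building blocks $\accentset{(i)}{B}$, and the invariance conditions from the transformation laws $\delta_\xi C_{abc}=g_{ac}\xi_b$ of the traces. The only difference is presentational: where the paper defers the invariance computation to the Mathematica notebook \cite{xMAGdistortion}, you carry it out by hand via the projectively invariant combinations $\accentset{(3)}{B}$ and $W=2(\Dim-1)\,\accentset{(1)}{B}+(\Dim+2)\,\accentset{(2)}{B}$, which indeed reproduces exactly the three stated ratios, the quadratic-in-$\xi$ condition then being automatically satisfied.
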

We recall that the definition of the trace building block tensors $\accentset{(i)}{B}$ are given in \eqref{eq:building_blocks_distortion}. Again we took advantage of the fact that each vector trace parts in the decomposition \eqref{eq:Irreducible_distortion} is determined by exactly one of the building block tensors $\accentset{(i)}{B}$. Indeed we have the following relations:
\begin{equation*}
	\begin{aligned}
		&\accentset{\left(\Yboxdim{3pt}\yng(3), \Yboxdim{3pt}\yng(1)\right)}{\underline{C}}\tensor{\vphantom{C}}{^{\,\,\,\,ab}_b}\,\,\,\,\tensor{{\accentset{\left(\Yboxdim{3pt}\yng(3),\Yboxdim{3pt}\yng(1)\right)}{\underline{C}}}}{_{\,\,\,abc}}=\frac{\tensor{\accentset{(1)}{B}}{_{\,a}}\,\tensor{\accentset{(1)}{B}}{^{\,a}}}{3(\Dim+2)}\,, \hspace{1.3cm}
		\accentset{\left(\Yboxdim{3pt}\yng(2,1)S, \Yboxdim{3pt}\yng(1)\right)}{\underline{C}}\tensor{\vphantom{C}}{^{\,\,\,\,abc}}\,\,\,\,\tensor{{\accentset{\left(\Yboxdim{3pt}\yng(2,1)S,\Yboxdim{3pt}\yng(1)\right)}{\underline{C}}}}{_{\,\,\,abc}}=\frac{\tensor{\accentset{(2)}{B}}{_{\,a}}\,\tensor{\accentset{(2)}{B}}{^{\,a}}}{6(\Dim-1)}\,, \hspace{1.3cm}
		\accentset{\left(\Yboxdim{3pt}\yng(2,1)A, \Yboxdim{3pt}\yng(1)\right)}{\underline{C}}\tensor{\vphantom{C}}{^{\,\,\,\,abc}}\,\,\,\,\tensor{{\accentset{\left(\Yboxdim{3pt}\yng(2,1)A,\Yboxdim{3pt}\yng(1)\right)}{\underline{C}}}}{_{\,\,\,abc}}=\frac{\tensor{\accentset{(3)}{B}}{_{\,a}}\,\tensor{\accentset{(3)}{B}}{^{\,a}}}{2(\Dim-1)}\,, \hspace{1.3cm}\\
		&\accentset{\left(\Yboxdim{3pt}\yng(3), \Yboxdim{3pt}\yng(1)\right)}{\underline{C}}\tensor{\vphantom{C}}{^{\,\,\,\,ab}_b}\,\,\,\,\accentset{\left(\Yboxdim{3pt}\yng(2,1)S,\Yboxdim{3pt}\yng(1)\right)}{\underline{C}}\tensor{\vphantom{C}}{_{\,\,\,\,a}^b_b}=\frac{1}{18}\,\tensor{\accentset{(1)}{B}}{_{\,a}}\,\tensor{\accentset{(2)}{B}}{^{\,a}}\,, \hspace{1.1cm}
		\accentset{\left(\Yboxdim{3pt}\yng(3), \Yboxdim{3pt}\yng(1)\right)}{\underline{C}}\tensor{\vphantom{C}}{^{\,\,\,\,ab}_b}\,\,\,\,\accentset{\left(\Yboxdim{3pt}\yng(2,1)A,\Yboxdim{3pt}\yng(1)\right)}{\underline{C}}\tensor{\vphantom{C}}{_{\,\,\,\,a}^b_b}=\frac{1}{6}\,\tensor{\accentset{(1)}{B}}{_{\,a}}\,\tensor{\accentset{(3)}{B}}{^{\,a}}\,, \hspace{1.2cm}
		\accentset{\left(\Yboxdim{3pt}\yng(2,1)S, \Yboxdim{3pt}\yng(1)\right)}{\underline{C}}\tensor{\vphantom{C}}{^{\,\,\,\,ab}_b}\,\,\,\,\accentset{\left(\Yboxdim{3pt}\yng(2,1)A,\Yboxdim{3pt}\yng(1)\right)}{\underline{C}}\tensor{\vphantom{C}}{_{\,\,\,\,a}^b_b}=\frac{1}{12}\,\tensor{\accentset{(2)}{B}}{_{\,a}}\,\tensor{\accentset{(3)}{B}}{^{\,a}}\,. \hspace{1cm}\\
	\end{aligned}
\end{equation*}

\paragraph{Perspectives and future work.} \hphantom{A}\medskip
\begin{itemize}
\item[\it{i)}] One of the first things to do would be to provide the corresponding irreducible decompositions and quadratic Lagrangians in the language differential forms which is widely use in metric-affine gravity. As it was done in \cite{McCrea_1992}, one should also extend these irreducible decompositions to the Bianchi identities of metric-affine gravity.
\item[\it{ii)}] One should try to generalize the result of \cite{percacci2020new} by analyzing the physical viability (absence of ghosts and tachyons) for the sum of the projective invariant Lagrangians \eqref{eq:lagrangienR_EH}-\eqref{eq:lagrangienC} when perturbing around Minkowski and $\textup{(A)dS}$ backgrounds with a zero vaccum distortion tensor. For this, it would be convenient to consider a volume-preserving connection \cite{thomas1926asymmetric,Hehl1981}:
\begin{equation}\label{eq:normal_projective_connection}
	\tensor{\overline{\Gamma}}{^\sigma_\alpha_\beta}:=\tensor{\Gamma}{^\sigma_\alpha_\beta}-\tensor{\delta}{^\sigma_\beta}\tensor{\accentset{(2)}{C}}{_\alpha}=\tensor{\Gamma}{^\sigma_\alpha_\beta}+\dfrac{1}{2}\tensor{\delta}{^\sigma_\beta} Q_\alpha \,.
\end{equation}
The connection $\tensor{\overline{\Gamma}}{^{\,\sigma}_\alpha_\beta}$ is indeed a volume-preserving connection: $\tensor{\overline{\Gamma}}{^{\,\sigma}_\alpha_\sigma}=\Bigl \{ \tensor{}{^{\sigma}_{\alpha\sigma}}\Bigr \}$ which implies
\begin{equation}
	\overline{\nabla}_\alpha (\sqrt{|g|})=\accentset{(g)}{\nabla}_\alpha (\sqrt{|g|})=0\,.
\end{equation}
With this gauge fixing of the projective symmetry, the homothetic tensor $\tensor{\accentset{(3)}{\mathcal{R}}}{_{ab}}$ is identically zero (recall \eqref{eq:homothetic_tensor}). As a consequence, one has $\tensor{\accentset{(4)}{\mathcal B}}{_{ab}}=\tensor{\accentset{(5)}{\mathcal B}}{_{ab}}$ (see \eqref{eq:building_blocks_Riemann}) and the expression of the Lagrangian \eqref{eq:lagrangienR_EH} simplifies significantly.\smallskip

While performing linear analysis of the theory, particular attention should be paid to the totally traceless symmetric part of the non-metricity, which is identified as a spin-$3$ field \cite{Nicolas_2006}: although the free massless spin-3 field can be embedded in linearized metric-affine gravity, the associated linearized gauge symmetry is absent at the non-linear level (see, e.g., \cite{percacci2020new}). Note that there exists exact solutions in particular non-linear metric-affine gravity theories with propagating totally symmetric traceless non-metricity \cite{Nicolas_2006} which is interesting from the perspective of bypassing higher-spin no-go theorems \cite{bekaert2012higher,bekaert2020nonlinear}.
\item[\it{iii)}] It would also be interesting to investigate the Cartan geometry associated with the projective structure of metric-affine gravity \eqref{def:proj_tor}, and to analyze the constructed Lagrangians in the light of the dressing field method \cite{berghofer2021gauge2,franccois2014reduction}. 
\end{itemize}

	\chapter{The projection operators for $\GL(\Dim,\mathbb{C})$ irreducible decomposition of tensors}\label{chap:projectors_GL}
\vspace{0.5 cm}

\section{Introduction}\label{sec:Introduction_chapter3}

In this chapter we aim at constructing the projectors $Z^\mu\in \C\sn$ which perform the isotypic decomposition of $V^{\otimes n}$ with respect to $\GL(\Dim,\C)$:
\begin{equation}\label{eq:isotypic_decomposition_Schur_Weyl_2}
	V^{\otimes n}=\bigoplus_{\mu\in \Par_n(\Dim)}(V^{\otimes n})\cdot Z^{\mu},\hspace{0.5cm} \text{such that}\hspace{0.5cm} (V^{\otimes n})\cdot Z^{\mu}=\left(V^{\mu}\right)^{\oplus m_\mu}\,.
\end{equation}
Recall the definition of $\Par_n(\Dim)$ in \eqref{eq:P_d}.\, Any element $Z^\mu$ acts by the identity on $V^\mu$ and annihilates any irreducible module\footnote{In the sequel we find it more convenient to use the terminology of modules instead of representations. In the context of algebras the two notions are equivalent (see appendix \ref{subsec:definitions} for more details).}(representation) of $\GL(\Dim,\C)$ not isomorphic to $V^\mu$.\medskip

By Maschke's Theorem, any module over $\C\sn$ is completely reducible. In particular the tensor product space $V^{\otimes n}$ is a completely reducible $\C \mathfrak{S}_n$-module: 
\begin{equation}
	V^{\otimes n}\cong\bigoplus_{\mu\in \Par_n(\Dim)} \left(L^{\mu}\right)^{\oplus g_{\mu}}\,.
\end{equation}
As a consequence of the Schur-Weyl duality for $\GL(\Dim,\mathbb{C})$ and $\mathfrak{S}_n$
\begin{equation}\label{eq:schur_weyl_iso}
	\left(V^{\mu}\right)^{\oplus m_\mu}=\left(L^{\mu}\right)^{\oplus g_\mu}\,,
\end{equation}
where $g_{\mu}=\dim(V^\mu)$ and $m_{\mu}=\dim(L^\mu)$.
The last equation implies that projecting to an isotypic component of $V^{\otimes n}$ with respect to $\GL(\Dim,\C)$ action amounts to projecting to the corresponding isotypic component with respect to $\C\sn$ action. Hence, each element $Z^\mu$ acts by the identity on the irreducible module $L^\mu$ and annihilates any irreducible $\C\sn$-module not isomorphic to $L^\mu$. Such element is very well known to mathematicians: they are the central idempotents of $\C\sn$. In this manuscript we call them the central Young idempotents.\medskip

The central Young idempotents form a complete set of central pairwise orthogonal idempotents:\medskip 
\vskip 4pt
\begin{equation}\label{eq:properties_central_idempotents}
	\begin{aligned}
		&Z^\mu \ v= v \ Z^\mu \quad \text{for any } \, v\in \C\sn\,, \quad &(\textit{central}\,) \\
		&Z^\rho Z^\mu=0  \quad \text{for any } \rho\neq\lambda\,, \quad &(\textit{pairwise orthogonal}\,) \\
		&Z^\mu Z^\mu=Z^\mu\,, \quad &(\textit{idempotent \text{\slash} projector}\,)\\
		&\id_{\C\sn}=\sum_{\mu\vdash n}Z^\mu\,. \quad &(\textit{partition of unity}\,)\\
	\end{aligned}
\end{equation} \medskip
\vskip 4pt
The central Young idempotents can be constructed via the projection character formula \cite{Janusz_1966} presented in section \ref{subsec:irreduciblecharacter}. In this chapter we propose an alternative approach based on a Lagrange-interpolation-type formula (see \cite[Section 2]{doty2019canonical} for a general account on this type of formula). The first ingredient of the construction is a particular element $T_n$ of the center of $\C\sn$ which is diagonal on irreducible $\C\sn$-module $L^\mu$, with eigenvalue given by the content of the Young diagram $\mu$ (see \eqref{eq:content_YD} for the definition of the content of a Young diagram). The second ingredient is an induction map $\mathcal{L}: \C\Sn{n-1}\to \C\sn$ defined in section \ref{subsec:inductiveformula}. The important property of $\mathcal{L}$ can be formulated as follows (Proposition \ref{prop:line_induction_sn}): for any $\,\nu\vdash n-1 $ and $\mu\vdash n$, 
\begin{equation}
	\mathcal{L}(Z^{\nu})Z^{\mu}=\left\{
	\begin{array}{ll}
		z_{\mu\backslash\nu} \, Z^{\mu}\, \quad \text{ with $\,z_{\mu\backslash\nu}=n \, \dfrac{\dim(L^\nu)}{\dim(L^\mu)}$}\,, &\text{if} \quad \mu\in \mathcal{A}_{\nu}\,,\\
		\quad 0\,\, \quad &\text{otherwise,}
	\end{array}
	\right.
\end{equation}
where $\mathcal{A}_\nu$ (respectively $\mathcal{R}_\nu$) is the set of Young diagram with one box added to $\nu$ (respectively one box removed from $\nu$). As a consequence of the properties of $T_n$ and $\mathcal{L}$ one arrives at the main result of this chapter (see Theorem \ref{Theorem:centralYoung}): for any $\mu\vdash n$ and $\nu \in \mathcal{R}_{\mu}$ the element 

\begin{equation}\label{eq:gen_central_idempotent_Sn_Intro}
	\mathcal{Z}^{\mu}(\nu)=\frac{\mathcal{L}(Z^{\nu})}{z_{\mu\backslash\nu}}\prod_{\begin{array}{c}
			{\scriptstyle \rho\in \mathcal{A}_\nu}\\
			{\scriptstyle \rho\neq\mu}
	\end{array}}\dfrac{c_\rho-T_n}{c_\rho-c_\mu}\,, 
\end{equation}  
coincides with the central Young idempotent $Z^\mu$ of $\C\sn$.
\vskip 6pt

We also aim at constructing the projectors which perform the irreducible decomposition of tensors with respect to $\GL(\Dim,\C)$ \eqref{eq:Irreducible_decomposition_GL}. As already mentioned in the previous chapter, these projectors are the Young seminormal idempotents $Y^{\,\tab}$ which are labeled by the standard tableaux $\tab$. It appears that there exists a bridge connecting the isotypic decomposition of a tensor to its irreducible decomposition. This bridge is embodied by the following relation
\begin{equation}\label{eq:Young_seminormal_idempotents_intro}
	Y^{\,\tab}=\prod_{\mu \in \tab} Z^{\mu} \,,
\end{equation}
where the standard tableau $\tab$ is understood as a path in the \textit{Bratteli diagram} for the chain of embedded algebras 
\begin{equation}\label{eq:chain_embedding_sn_0}
	\C\cong\C\mathfrak{S}_0\subseteq\C\mathfrak{S}_1\subset \ldots \subset \C\mathfrak{S}_{n-1}\subset \C\mathfrak{S}_{n}\,.
\end{equation}
In \eqref{eq:Young_seminormal_idempotents_intro} the natural embedding $\C\Sn{k}\hookrightarrow\C\Sn{k+1}$ is realized by the insertion of a vertical line at the right end of permutation diagrams. This formula originates from the Okounkov-Vershik approach to the representation theory of symmetric groups \cite{Okounkov1996,OV_2} (see also \cite{ceccherini2010representation,block2018representation}), and was demonstrated explicitly in \cite[Prop. 4.7]{doty2019canonical}. The formula \eqref{eq:Young_seminormal_idempotents_intro} will be explained in some details in the last section.

\section{Ideals and regular representation of $\mathbb{C}\mathfrak{S}_n$}\label{sec:ideals}
\paragraph{Ideals of $\C\sn$.} A right (resp. left) ideal of an algebra $A$ is a subalgebra $I\subseteq A$ such that $I\,a\subseteq I$ (resp. $a\, I\subseteq I$) for all $a\in A$. In other word, an ideal of an algebra $A$ is a subalgebra which is also an invariant subspace with respect to the action of the algebra on itself. Hence, ideals are sub-modules (sub-representations) of the module (representation) corresponding to the action of the algebra on itself. This module is called the right (resp. left) regular module, and the decomposition of an algebra into \textit{minimal} ideals is the same as the decomposition of the right (resp. left) regular module into irreducible modules.\medskip

The symmetric group algebra splits into a direct sum of (two sided\footnote{A two sided ideal is both a right and a left ideal.}) ideals 
\begin{equation}
\C\sn=\bigoplus_{\mu\vdash n} I^{\mu}\,,\hspace{0.5cm}\text{with}\hspace{0.5cm} I^\mu=Z^\mu \C\sn\,,
\end{equation}
and one says that the ideal $I^\mu$ is generated by the central idempotent $Z^\mu$. This decomposition of $\C\sn$ is not \text{maximal} in the sense that it can be decomposed into smaller ideals called \text{minimal} ideals. These minimal ideals are generated by primitive idempotents of $\C\sn$ (for example the Young symmetrizers). A special class of such idempotents are the one which are pairwise orthogonal and hence form a partition of unity. As already mentioned in the previous chapter and in the introduction above, these idempotents are the Young seminormal idempotents, and one has\footnote{We choose to work with right ideals but we could just as well work with left ideals.}: 
\begin{equation}
\C\sn=\bigoplus_{\tab\in \Tab(n)} I^{\,\tab}\,,\hspace{0.5cm}\text{with}\hspace{0.5cm} I^{\,\tab}=Y^{\,\tab} \C\sn\,,
\end{equation}
where $\Tab(n)$ is the set of standard tableaux with $n$ boxes. One may say that the right ideal $I^{\,\tab}$ is generated by the left action of $Y^{\,\tab}$ on $\C\sn$. An important remark is that the set of Young seminormal idempotents parametrized by standard tableaux of shape $\mu$ generate a partition of unity of the ideal $I^\mu$:
\begin{equation}\label{eq:partition_Zmu}
Z^\mu=\sum_{\tab\in \textup{Tab}(\mu)} Y^{\,\tab}\,.
\end{equation}
\medskip

\paragraph{The right regular module.} A module over an algebra $A$ (that is a $A$-module) is semisimple if it is a direct sum of irreducible module, and $A$ is said to be semisimple if all its $A$-module are semisimple. Besides, the right regular module over a semisimple algebra contains all its pairwise inequivalent irreducible modules. \medskip

By Maschke's Theorem $\C\sn$ is a semisimple algebra and writing $L^\mu$ \footnote{For our purpose we may think of $L^\mu$ as the irreducible tensor representations of $\C\sn$ appearing in $V^{\otimes n}$ hence we keep the same name.} for a representative in the class of isomorphic irreducible $\C\sn$-module one has:
\begin{equation}\label{eq:irreducible_decomp_Maschke}
	\C\sn\cong\, \bigoplus_{\mu\,\vdash n} \left(L^\mu\right)^{\oplus m_\mu} \hspace{1cm} \text{(isomorphism of vector spaces)}\,.
\end{equation}
According to the Wedderburn-Artin theorem \ref{theo:Wedderburn} one has
\begin{equation}
	\mathbb{C}\mathfrak{S}_n=\bigoplus_{\mu\,\vdash\,n} Z^{\mu}\mathbb{C}\mathfrak{S}_n\cong\bigoplus_{\mu\vdash n} \textup{End}(L^\mu) \hspace{1cm} \text{(isomorphism of algebras)}\,,
\end{equation}
where $Z^\mu\mathbb{C}\mathfrak{S}_n$ is the ideal isomorphic to $\textup{End}(L^\mu)$. As a consequence, one has that the multiplicity $m_\mu$ in $\eqref{eq:irreducible_decomp_Maschke}$ is equal to the dimension of $L^\mu$, and hence the multiplicities $m_\mu$ in \eqref{eq:schur_weyl_iso} and \eqref{eq:irreducible_decomp_Maschke} are the same. Still, we will write $\mathrm{d}_\mu$ for the dimension of an irreducible $\C\sn$-module $L^\mu$. 
\section{The center  $\mathcal{Z}_n$ of $\mathbb{C}\mathfrak{S}_n$}\label{sec:centerSn}

In this section we present selected results which can be found in many textbooks related to the symmetric group. For the most part they can be generalized to the centralizer $\cn$ of $\sn$ in the Brauer algebra which is the subject of chapter \ref{chap:cn}. 
\paragraph{Integer partitions, Young diagrams and unary bracelets.} For convenience we recall that an integer partition $\mu=\left(\mu_1,\mu_2,\ldots,\mu_l\right)$ of $n$ denoted $\mu\vdash n$ is a sequence of non increasing positive integers such that $|\mu|=\sum_{i=1}^{l}\mu_i=n$. To each such integer partition we identify a \textit{Young diagram}: an array of $n$ boxes arranged in $l$ left justified rows. For example, the partitions of $n=4$ are: 
\begin{equation*}
	\left(4\right)=\Yboxdim{8pt}\yng(4)\hspace{0.8cm} \left(3,1\right)=\Yboxdim{8pt}\yng(3,1)\hspace{0.8cm} \left(2,2\right)=\Yboxdim{8pt}\yng(2,2)\hspace{0.8cm} \left(2,1,1\right)=\Yboxdim{8pt}\yng(2,1,1)\hspace{0.8cm}\left(1,1,1,1\right)=\Yboxdim{8pt}\yng(1,1,1,1)
\end{equation*}
To each box of a Young diagram at a position $(i,j)$ we associate its {\it content} $c(i,j) = j - i$. In this respect we define the \textit{content} of a Young diagram $\mu$ as the sum:
\begin{equation}\label{eq:content_YD}
	c_\mu = \sum_{(i,j)\in \mu } c(i,j)\,.
\end{equation}
\begin{example}
	The contents of the Young diagrams with $3$ and $4$ boxes are 
	\begin{equation}
		c_{\Yboxdim{3pt}\yng(3)}=3\,,\hspace{0.3cm} c_{\Yboxdim{3pt}\yng(2,1)}=0\,,\hspace{0.3cm} c_{\Yboxdim{3pt}\yng(1,1,1)}=-3\,;\hspace{0.6cm}
		c_{\Yboxdim{3pt}\yng(4)}=6\,,\hspace{0.3cm} c_{\Yboxdim{3pt}\yng(3,1)}=2\,,\hspace{0.3cm}c_{\Yboxdim{3pt}\yng(2,2)}=0\,,\hspace{0.3cm}c_{\Yboxdim{3pt}\yng(2,1,1)}=-2\,,\hspace{0.3cm}c_{\Yboxdim{3pt}\yng(1,1,1,1)}=-6\,.
	\end{equation}
\end{example}
We will use the condensed notation $\mu=(\mu_1^{m_1},\mu_2^{m_2},\ldots,\mu_k^{m_k})$ when $m_j$ parts of $\mu$ are equal to $\mu_j$ $(j=1,\ldots ,k)$. For example $(2,1,1)=(2,1^2)$ and $(3,3,3,2,2,1)=(3^3,2^2,1)$. Note also that integer partitions can be seen as concatenation of unary bracelets, where each $\mu_i$ in $\mu=\left(\mu_1,\mu_2,\ldots,\mu_l\right)$ correspond to a bracelet with $\mu_i$ beads with only one color. For example one has the correspondence
\begin{equation}\label{eq:example_partition_bracelets}
(6,4,2)=\left\{\raisebox{-.4\height}{\includegraphics[scale=0.7]{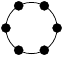}}\,,\,\raisebox{-.4\height}{\includegraphics[scale=0.7]{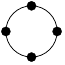}}\,,\,\raisebox{-.4\height}{\includegraphics[scale=0.7]{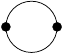}}\right\}, \hspace{0.2cm} (3,2,1^2)=\left\{\raisebox{-.4\height}{\includegraphics[scale=0.7]{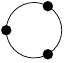}}\,,\,\raisebox{-.4\height}{\includegraphics[scale=0.7]{fig/unary_bracelets_2.pdf}}\,,\,\raisebox{-.4\height}{\includegraphics[scale=0.7]{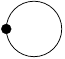}}\,,\,\raisebox{-.4\height}{\includegraphics[scale=0.7]{fig/unary_bracelets_1.pdf}}\right\}.
\end{equation}
As we will see in chapter \ref{chap:cn} this last representation of integer partitions as concatenation of unary bracelets generalize to particular ternary bracelets which parametrize the conjugacy classes in the Brauer algebra $B_n(\Dim)$.

\subsection{Conjugacy classes} \label{subsec:conjugacyclassSn}
Two permutations $s_1,\, s_2 \in \mathfrak{S}_n$ are said to be \textit{conjugate}, which is denoted $s_1\sim s_2$, if there exist a permutation $\sigma \in \mathfrak{S}_n$ such that $s_1=\sigma\, s_2 \, \sigma^{\shortminus 1}$. To each permutation $s\in\mathfrak{S}_n$ we associate its \textit{cycle type} $\CT(s)$ which is given by an integer partition where each integer correspond the length of a cycle present in the decomposition of $s$ into a product of disjoint cycles. 
\begin{example}
For the following permutations of $\mathfrak{S}_6$
\begin{equation}
	s_1=\raisebox{-.4\height}{\includegraphics[scale=0.81]{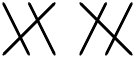}}=(123)(465)\,, \hspace{1cm} s_2=\raisebox{-.4\height}{\includegraphics[scale=0.81]{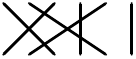}}=(13)(25)\,,
\end{equation}
one has
\begin{equation}
\CT(s_1)=(3,3)=\Yboxdim{6pt}\yng(3,3)=\left\{\raisebox{-.25\height}{\includegraphics[scale=0.7]{fig/unary_bracelets_3.pdf}}\,,\,\raisebox{-.25\height}{\includegraphics[scale=0.7]{fig/unary_bracelets_3.pdf}}\right\}\,,
\end{equation}
and
\begin{equation}
\CT(s_2)=(2,2,1,1)=\Yboxdim{6pt}\yng(2,2,1,1)=\left\{\raisebox{-.25\height}{\includegraphics[scale=0.7]{fig/unary_bracelets_2.pdf}}\,,\,\raisebox{-.25\height}{\includegraphics[scale=0.7]{fig/unary_bracelets_2.pdf}}\,,\,\raisebox{-.25\height}{\includegraphics[scale=0.7]{fig/unary_bracelets_1.pdf}}\,,\,\raisebox{-.25\height}{\includegraphics[scale=0.7]{fig/unary_bracelets_1.pdf}}\right\}\,.
\end{equation}
\end{example}
\vskip 6pt
The importance of the cycle type of a permutation results from the following theorem:
\begin{theorem}\label{theo:classes_partition}
Two permutations are \textit{conjugate} if and only if they have the same cycle type $\mu$. 
\end{theorem}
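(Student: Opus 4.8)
The plan is to prove both implications via the fundamental relationship between conjugation and cycle decomposition. The cornerstone is the conjugation formula for cycles: for any cycle $(a_1\,a_2\,\cdots\,a_k)\in\sn$ and any $\sigma\in\sn$, one has $\sigma\,(a_1\,a_2\,\cdots\,a_k)\,\sigma^{\shortminus 1}=(\sigma(a_1)\,\sigma(a_2)\,\cdots\,\sigma(a_k))$. First I would establish this by a direct check: evaluated at $\sigma(a_i)$ the left-hand side returns $\sigma(a_{i+1})$ (cyclic indices), while it fixes any $b\notin\{\sigma(a_1),\ldots,\sigma(a_k)\}$, since such $b$ equals $\sigma(c)$ for some $c$ outside the support of the cycle. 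This is precisely the action of the cycle on the right-hand side.

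Next I would extend this to arbitrary permutations. Writing $s=c_1c_2\cdots c_r$ as a product of disjoint cycles, the fact that conjugation is an automorphism gives $\sigma s\sigma^{\shortminus 1}=(\sigma c_1\sigma^{\shortminus 1})\cdots(\sigma c_r\sigma^{\shortminus 1})$; by the cycle formula each $\sigma c_i\sigma^{\shortminus 1}$ is a cycle of the same length as $c_i$, and because $\sigma$ is a bijection these conjugated cycles have pairwise disjoint supports. Hence $\sigma s\sigma^{\shortminus 1}$ has the same multiset of cycle lengths as $s$, that is $\CT(\sigma s\sigma^{\shortminus 1})=\CT(s)$. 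This settles the forward implication: conjugate permutations share the same cycle type.

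For the converse, suppose $\CT(s_1)=\CT(s_2)=\mu$. I would write both permutations in cycle notation with their cycles listed in non-increasing order of length, including the $1$-cycles arising from fixed points, so that each listing is a partition of $\{1,\ldots,n\}$ in which every element appears exactly once. Since the cycle types agree, the two listings have cycles of matching length in each position, and I can define $\sigma$ to be the map sending the $j$-th entry of the $i$-th cycle of $s_2$ to the $j$-th entry of the $i$-th cycle of $s_1$. This $\sigma$ is a well-defined permutation, and applying the conjugation formula cycle-by-cycle yields $\sigma s_2\sigma^{\shortminus 1}=s_1$, so $s_1\sim s_2$.

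The only genuinely delicate points, which I would state explicitly rather than the computation itself, are the well-definedness of $\sigma$ in the converse and the preservation of disjointness under conjugation. The former rests on the observation that a complete cycle listing (fixed points included) partitions $\{1,\ldots,n\}$, so the entry-matching assigns $\sigma$ a value on each element exactly once; the latter on the elementary fact that a bijection carries disjoint sets to disjoint sets. Neither is hard, but both deserve mention because it is the uniqueness of the disjoint-cycle decomposition up to reordering that makes the matching canonical at the level of cycle type.
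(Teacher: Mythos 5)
Your proposal is correct and complete: the conjugation formula $\sigma\,(a_1\,\cdots\,a_k)\,\sigma^{\shortminus 1}=(\sigma(a_1)\,\cdots\,\sigma(a_k))$, its extension through the disjoint cycle decomposition for the forward direction, and the explicit entry-matching conjugator (with fixed points listed as $1$-cycles) for the converse constitute the standard textbook argument, and you correctly flag the two points that need care (well-definedness of $\sigma$ and preservation of disjointness under a bijection). Note that the paper itself states this theorem without proof, treating it as a classical fact, so there is no in-paper argument to compare against; your proof fills that gap in exactly the expected way.
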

Conjugacy classes are then parametrized by the cycle type of its elements, and  we write $C_\mu$ for the conjugacy classes of permutations with cycle type $\mu$.\medskip
\begin{remark} 
For any $s\in \sn$ such that $\CT(s)=\mu$, the conjugacy class of $s$ is the orbit of $s$ under the action of conjugation by the elements of $\sn$: $C_s:=\lbrace \sigma \,s \,\sigma^{-1} \st \sigma \in \sn  \rbrace= C_\mu$. 
\end{remark}
We denote by $\Stab_{\sn}(s)$ the stabilizer of a permutation $s\in\sn$ with respect to conjugation, that is, the set of elements $\sigma\in \sn$ such that $\sigma\,s \sigma^{-1}=s$. From the orbit-stabilizer theorem one obtains the cardinality of a conjugacy class: 
\begin{equation}
|C_\mu|=\frac{n!}{|\Stab_{\sn}(s)|}\,, \hspace{0.5cm} \text{where} \hspace{0.5cm} \CT(s)=\mu\,.
\end{equation}
\vskip 4pt
For any integer partition $\mu=(\mu_1^{m_1},\mu_2^{m_2},\ldots,\mu_k^{m_k})$, we define its \textit{stability index} $\stab(\mu)$ as follows
\begin{equation}\label{eq:stability_index_symmetric_groups}
	\stab(\mu)=\prod_{i=1}^{k} \stab(\mu_i)^{m_i} m_i!\,,\hspace{0.5cm} \text{with} \hspace{0.5cm} \stab(\mu_i)=\mu_i\,.
\end{equation}
\begin{lemma}
For any $s\in\sn$ with $\CT(s)=\mu$ holds $|\Stab_{\sn}(s)|=\stab(\mu)$. 
\end{lemma}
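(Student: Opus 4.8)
The plan is to identify the conjugation-stabilizer with the \emph{centralizer} of $s$ and then to count its elements by analysing how a centralizing permutation must act on the disjoint cycles of $s$.

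First I would record the reduction $\Stab_{\sn}(s)=\lbrace \sigma\in\sn \st \sigma s \sigma^{-1}=s\rbrace = \lbrace \sigma\in\sn \st \sigma s = s\sigma \rbrace$, so that the stabilizer under conjugation is exactly the centralizer $C_{\sn}(s)$. The computation then rests on the elementary identity that conjugating a cycle rotates its entries: for any cycle $(a_1\,a_2\,\cdots\,a_\ell)$ and any $\sigma\in\sn$,
\begin{equation*}
\sigma\,(a_1\,a_2\,\cdots\,a_\ell)\,\sigma^{-1}=(\sigma(a_1)\,\sigma(a_2)\,\cdots\,\sigma(a_\ell))\,.
\end{equation*}
Writing $s=\prod_{r}\gamma_r$ as a product of disjoint cycles, this identity shows that $\sigma s\sigma^{-1}$ is again a product of disjoint cycles, obtained by replacing each $\gamma_r$ by its image $\sigma\gamma_r\sigma^{-1}$. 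Hence $\sigma\in C_{\sn}(s)$ if and only if the collection of cycles of $s$ is permuted among itself by $\gamma_r\mapsto\sigma\gamma_r\sigma^{-1}$; since conjugation preserves cycle length, $\sigma$ can only send a $\mu_i$-cycle to another $\mu_i$-cycle.

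Next I would turn this characterization into an exact count. For each distinct part length $\mu_i$ occurring with multiplicity $m_i$, a centralizing $\sigma$ is free to (i) permute the $m_i$ cycles of length $\mu_i$ in any of $m_i!$ ways, and (ii) for each such cycle, once its target cycle has been fixed, choose which entry of the target the first entry maps to — there are exactly $\mu_i$ such cyclic alignments, since the remaining assignments are then forced by the displayed identity. These choices are independent across the different lengths, so
\begin{equation*}
|C_{\sn}(s)|=\prod_{i=1}^{k}\mu_i^{\,m_i}\,m_i!=\stab(\mu)\,,
\end{equation*}
which is the claim.

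The main point requiring care is that this enumeration is a genuine bijection: every datum consisting of a length-preserving permutation of the cycles together with a cyclic alignment for each cycle determines a unique $\sigma\in\sn$ (the disjoint cycles exhaust $\lbrace 1,\ldots,n\rbrace$, so $\sigma$ is defined on every symbol), this $\sigma$ indeed centralizes $s$, and conversely every centralizing $\sigma$ arises from exactly one such datum. As a consistency check one may combine the result with the orbit--stabilizer formula $|C_\mu|=n!/|\Stab_{\sn}(s)|$ already recorded above, which recovers the classical cardinality $|C_\mu|=n!\,\big/\prod_{i}\mu_i^{m_i}m_i!$ of the conjugacy class.
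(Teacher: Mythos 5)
Your proof is correct. Note first that the paper itself gives no proof of this lemma --- it is stated as a classical fact immediately after the definition of $\stab(\mu)$ in \eqref{eq:stability_index_symmetric_groups} --- so the relevant comparison is with the proof of its Brauer-algebra generalization, Lemma \ref{lem:sym}, given in appendix \ref{app:Lemma_sym}. There the argument is structurally different: one picks a convenient representative $b$ in the conjugacy class and shows that every element of $\Stab_{\Sn{n}}(b)$ acts as a dihedral symmetry (cyclic shift, possibly composed with inversion) of the word read off from the diagram, thereby identifying $\Stab_{\Sn{n}}(b)$ with the turnover stabilizer $S(w_b)$ of the associated bracelet. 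Your route is the classical direct one: identify $\Stab_{\sn}(s)$ with the centralizer of $s$, use the identity $\sigma\,(a_1\cdots a_\ell)\,\sigma^{-1}=(\sigma(a_1)\cdots\sigma(a_\ell))$ to see that centralizing elements permute same-length cycles among themselves, and count $m_i!$ cycle permutations times $\mu_i^{m_i}$ cyclic alignments; your explicit bijectivity check closes the only gap such counting arguments usually have, and in effect exhibits the centralizer as a direct product of wreath products $\mathbb{Z}_{\mu_i}\wr\Sn{m_i}$. What each approach buys: yours is elementary, self-contained, and makes the group structure of the stabilizer visible; the paper's bracelet argument is designed to treat permutation diagrams and genuine Brauer diagrams (with arcs) uniformly --- for a pure $\ell$-cycle the bracelet is $[\bb{p}^{\ell}]$, whose turnover stabilizer has order $\ell$, which is exactly your per-cycle factor $\mu_i$, so your lemma re-emerges as the arc-free special case of Lemma \ref{lem:sym}.
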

\begin{remark}
The stability index $\stab(\mu)$ is often denoted $z_\mu$ in the literature related with symmetric functions and characters of the symmetric group (see for example \cite{goupil2000central}). This coefficient generalizes nicely to conjugacy classes of the Brauer algebra \eqref{eq:stab_bracelets}.\medskip
\end{remark}
\subsection{Conjugacy class sum.}\label{subsection:conjugacyclass_sum}

\paragraph{A basis of $\mathcal{Z}_n$.} The \textit{normal conjugacy class sum} (or simply \textit{conjugacy class sum}) $K_\mu$ is the formal sum of the elements of $C_\mu$ 
\begin{equation}
K_\mu= \sum_{\sigma \,\in C_\mu} \sigma\,.
\end{equation}
We also define the $\textit{averaged}$ conjugacy class sum $\bar{K}_\mu$ as 
\begin{equation}
\bar{K}_\mu=\sum_{\sigma \in \sn } \sigma s \sigma^{\shortminus 1}= \stab(\mu) K_\mu\,, \hspace{0.5cm} \text{for any $s\in \sn$ such that $CT(s)=\mu$}.
\end{equation}
\begin{lemma}
The set of conjugacy class sums $\lbrace K_\mu \st \mu\vdash n \rbrace$ forms a basis of the center $\mathcal{Z}_n$ of $\mathbb{C}\mathfrak{S}_n$. 
\end{lemma}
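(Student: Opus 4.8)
The plan is to establish the three defining properties of a basis — that each $K_\mu$ lies in $\mathcal{Z}_n$, that the family $\blist{K_\mu \st \mu\vdash n}$ is linearly independent, and that it spans $\mathcal{Z}_n$ — and then to match cardinalities using Theorem \ref{theo:classes_partition}.

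First I would reduce the centrality condition to a manageable form. An element $z\in \C\sn$ is central if and only if $\sigma z \sigma^{\shortminus 1}=z$ for every $\sigma\in\sn$, since the group elements span the algebra and it suffices to commute with a spanning set. For $z=K_\mu=\sum_{s\in C_\mu} s$ one computes $\sigma K_\mu \sigma^{\shortminus 1}=\sum_{s\in C_\mu}\sigma s \sigma^{\shortminus 1}$; because conjugation by $\sigma$ is a bijection of $\sn$ preserving cycle type (Theorem \ref{theo:classes_partition}), it permutes the class $C_\mu$ among itself, so the sum is unchanged and $K_\mu\in\mathcal{Z}_n$.

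For linear independence, I would exploit that the conjugacy classes $\blist{C_\mu \st \mu\vdash n}$ partition $\sn$. Hence the $K_\mu$ are sums over pairwise disjoint, nonempty subsets of the standard basis $\blist{s \st s\in\sn}$ of $\C\sn$, and such sums are automatically independent: a relation $\sum_{\mu} a_\mu K_\mu=0$, read off on any basis vector $s\in C_\mu$, forces $a_\mu=0$. For the spanning property, I would take an arbitrary central element $z=\sum_{s\in\sn} v_s\, s$. The identity $\sigma z \sigma^{\shortminus 1}=z$ rewritten on the basis gives $v_{\sigma^{\shortminus 1} s \sigma}=v_s$ for all $\sigma, s\in\sn$, that is, the coefficient function $s\mapsto v_s$ is constant on each conjugacy class. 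Writing $c_\mu$ for the common value on $C_\mu$ then yields $z=\sum_{\mu\vdash n} c_\mu K_\mu$, so the class sums span $\mathcal{Z}_n$.

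Putting the three steps together shows that $\blist{K_\mu \st \mu\vdash n}$ is a basis of $\mathcal{Z}_n$, whose cardinality equals the number of conjugacy classes, which by Theorem \ref{theo:classes_partition} is exactly the number of integer partitions of $n$. The only genuinely substantive point is the spanning step, where the abstract centrality condition must be translated into the concrete statement that the coefficient vector is a class function; the other two steps become immediate once centrality is rephrased as invariance under conjugation.
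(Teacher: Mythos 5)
Your proof is correct, and it is in substance the same argument the paper relies on: the paper states this lemma without proof as a textbook fact, but proves the analogous statement for the centralizer $\mathcal{C}_n\subset B_n(\Dim)$ in Chapter 5 by rewriting centrality as invariance under conjugation and averaging, $u=\tfrac{1}{n!}\sum_{s\in\sn}s\,u\,s^{\shortminus 1}$, which is exactly your observation that the coefficient function of a central element is constant on conjugacy classes. Your additional explicit step on linear independence (class sums have disjoint supports in the group basis) is a correct detail that the paper leaves implicit.
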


\begin{mathematica}[\textit{BrauerAlgebra}]
	ClassSum\,\\
	\textit{\textup{ClassSum} is the head for the conjugacy class sum of the symmetric group algebra and of the Brauer algebra. Usage: \textup{ClassSum[$\xi$]} where $\xi$ is a ternary bracelet with head \textup{Bracelets} or an integer partition (for the conjugacy class sums of the symmetric group algebra).}
\end{mathematica}
\paragraph{Product of conjugacy class sums.} The center of $\C\sn$ being an algebra, the multiplication of two conjugacy class sums yield a linear combination of conjugacy class sums. Let $\mu$ and $\zeta$ be two integer partitions of $n$. The product of two class sums $K_\mu$ and $K_\zeta$ in $\mathcal{Z}_n$ is given by 
\begin{equation}\label{eq:product_classes_Sn}
	K_\mu K_\zeta= \sum_{\xi\vdash n} \tensor{C}{^\xi_\mu_\zeta} K_{\xi}\,, 
\end{equation}
where $\tensor{C}{^\xi_\mu_\zeta}$ are the \textit{connection coefficients} or \textit{structure constants} of $\mathcal{Z}_n$ \cite{goupil1998factoring,goulden2000transitive,corteel2004content}.
\newpage
\begin{remark} The coefficient $\tensor{C}{^\xi_\mu_\zeta}$  corresponds to the number of ways of expressing any permutation $s_\xi\in C_\xi$ as a product of permutations $s_\mu s_\zeta$ where $s_\mu\in C_\mu$ and $s_\zeta\in C_\zeta$ (up to a constant depending on $\stab(\xi)$, $\stab(\mu)$ and $\stab(\zeta)$). 
\end{remark}

There are mainly two approach for the computation of $\tensor{C}{^\xi_\mu_\zeta}$. The first one involves the irreducible character $\chi^\rho$ associated with irreducible representation $L^\rho$ (see the next section for the definitions)~\cite{jackson_1987,jackson_1988,goupil1990products}: 
\begin{equation}\label{eq:struc_constants_Sn}
	\tensor{C}{^\xi_\mu_\zeta}=\frac{n!}{\stab(\mu)\stab(\zeta)}\sum_{\rho\vdash n}\frac{1}{\mathrm{d}_\rho}\,\chi^{\rho}_{\mu}\,\chi^{\rho}_{\zeta}\,\chi^{\rho}_{\xi}\,, \hspace{0.5cm} \text{with} \hspace{0.5cm} \text{$\xi\,,\mu\,,\zeta \vdash n$\,.}
\end{equation}
This formula can be easily derived from the character projection formula \eqref{eq:character_central_idempotent_Sn} and \eqref{eq:centralYoung_Characters} presented below. 

In the second approach, multiplication by a conjugacy class sum is described by the action of particular differential operator defined on the space of power symmetric functions \cite{goulden1994differential,goulden1997transitive,goupil2005katriel}. 

\begin{mathematica}[\textit{BrauerAlgebra}]
	ConjugacyClassProduct[ClassSum[$\mu$],ClassSum[$\zeta$]]\,\\
	\textit{Returns the product of $K_\mu$ with $K_\zeta$ where $\mu$ and $\zeta$ are integer partitions of $n$\,.}
\end{mathematica}

\section{Central idempotents in $\mathbb{C}\mathfrak{S}_n$}\label{sec:central_idempotent_Sn}
\subsection{Central idempotents and irreducible characters}\label{subsec:irreduciblecharacter}
Apart from the conjugacy class sums $K_\mu$ there exist another basis of the center of $\C\sn$ which is given by the central Young idempotents $Z^\mu$. In section \ref{sec:ideals} we have seen that the latter basis is intimately related to the representation theory of $\C\sn$ via its regular module. The representation theory of groups can be treated to large extend from the point of view of character theory,\footnote{See for example \cite{ceccherini2010representation} where irreducible characters play a major role in the discussions of the representation theory of finite groups. See also \cite{Huppert_1998} for a standard textbook on character theory of finite groups.} and as we have already mentioned in the introduction irreducible characters can be used to construct the central Young idempotents. The privileged status of characters stems from the fact that they remain constant on conjugacy classes.  \medskip

The character $\chi^{\rho}\,:\, G\to \C$ of a representation $\rho\,:\, G \to \GL(V)$ of a group $G$ on a finite dimensional vector space $V$ over $\C$ is the trace of the representation $\rho$:
\begin{equation}
\chi^\rho(g)=\text{Tr}(\rho(g))\,\hspace{1cm}\text{for all } g\in G \,.
\end{equation}
This definition is extended naturally to any group algebra.
\paragraph{The character formula.} Let $L^\mu$ be an irreducible $\C\mathfrak{S}_n$-module. The \textit{irreducible character} of $L^\mu$ is the function $\chi^{\mu}\,:\, \C\mathfrak{S}_n\to \C$ given by
	\begin{equation}
		\chi^{\mu}(z)=\text{Tr}(\mu(z)), \quad \text{for all } z\in \C\sn,
	\end{equation}   
where with a slight abuse of notation we also used $\mu$ to denote the irreducible representation homomorphism. 
We denote by $\chi^{\mu}_\rho$ the character of the irreducible module $L^\mu$ evaluated at any permutation with cycle type $\rho$. With this definition one has 
\begin{equation}
\chi^{\mu}(K_{\rho})=\frac{n!}{\stab(\mu)}\chi^\mu_\rho\,.
\end{equation}
\begin{theorem}[The character formula \cite{Janusz_1966}]
Let $L^\mu$ be an irreducible $\C\mathfrak{S}_n$-module. In terms of the irreducible characters $\chi^{\mu}$ the central Young idempotent $Z^\mu$ is given by the projection formula: 
\begin{equation}\label{eq:character_central_idempotent_Sn}
	Z^\mu=\frac{\mathrm{d}_\mu}{n!}\sum_{\rho\vdash n} \chi^{\mu}_\rho \, K_\rho\,,
\end{equation}
where $\mathrm{d}_\mu$ is the dimension of $L^\mu$ (it is also the multiplicity of $L^\mu$ in the right regular module).
\end{theorem}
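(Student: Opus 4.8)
The plan is to verify that the element on the right-hand side, which I will denote
\[
E^\mu := \frac{\mathrm{d}_\mu}{n!}\sum_{\rho\vdash n}\chi^{\mu}_\rho\, K_\rho,
\]
satisfies the characterization of the central Young idempotent $Z^\mu$: it acts as the identity on the irreducible module $L^\mu$ and annihilates every irreducible $\C\sn$-module $L^\nu$ with $\nu\neq\mu$. By the Wedderburn--Artin decomposition (Theorem~\ref{theo:Wedderburn}), the center $\mathcal{Z}_n$ is spanned by the orthogonal idempotents each acting as the identity on a single isotypic block, so such an element is \emph{unique}; establishing these two properties therefore forces $E^\mu=Z^\mu$.

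First I would note that $E^\mu$ is manifestly central, being a linear combination of the conjugacy class sums $K_\rho$, which form a basis of $\mathcal{Z}_n$. Hence, by Schur's lemma, $E^\mu$ acts on each irreducible module $L^\nu$ as a scalar $\omega^\nu\,\mathrm{id}_{L^\nu}$. Taking the trace, this scalar is recovered from the irreducible character through $\chi^\nu(E^\mu)=\omega^\nu\,\mathrm{d}_\nu$, so it suffices to evaluate $\chi^\nu(E^\mu)$.

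The core computation uses that the irreducible character is constant on conjugacy classes, giving $\chi^\nu(K_\rho)=|C_\rho|\,\chi^\nu_\rho=\frac{n!}{\stab(\rho)}\chi^\nu_\rho$, whence
\[
\chi^\nu(E^\mu)=\frac{\mathrm{d}_\mu}{n!}\sum_{\rho\vdash n}\chi^{\mu}_\rho\,\frac{n!}{\stab(\rho)}\,\chi^\nu_\rho
=\mathrm{d}_\mu\sum_{\rho\vdash n}\frac{\chi^{\mu}_\rho\,\chi^\nu_\rho}{\stab(\rho)}.
\]
The remaining sum is exactly the first orthogonality relation for the irreducible characters of $\sn$ (which are real-valued), namely $\sum_{\rho\vdash n}\frac{1}{\stab(\rho)}\chi^\mu_\rho\,\chi^\nu_\rho=\delta_{\mu\nu}$, obtained from the standard relation $\frac{1}{n!}\sum_{g}\chi^\mu(g)\overline{\chi^\nu(g)}=\delta_{\mu\nu}$ after grouping by classes and using $|C_\rho|=n!/\stab(\rho)$. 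This yields $\chi^\nu(E^\mu)=\mathrm{d}_\mu\,\delta_{\mu\nu}$, hence $\omega^\nu=\delta_{\mu\nu}$.

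The one genuinely nontrivial input is the character orthogonality relation; everything else is bookkeeping with the passage from a sum over group elements to a sum over conjugacy classes. I would either invoke it as a standard fact from the cited references \cite{ceccherini2010representation,Huppert_1998} or derive it from the decomposition of the regular module, in which $L^\nu$ occurs with multiplicity $\mathrm{d}_\nu$. With $\omega^\nu=\delta_{\mu\nu}$ in hand, $E^\mu$ acts as the identity on $L^\mu$ and as zero on every other irreducible, which is precisely the defining property of $Z^\mu$, so the uniqueness argument of the first paragraph completes the proof.
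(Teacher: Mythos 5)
Your proof is correct, but there is no internal proof to compare it against: the paper quotes this theorem from Janusz (1966) and states it without proof, so your argument fills a gap rather than paralleling one. What you wrote is the standard projection-formula argument, and it meshes exactly with the toolkit the paper assembles around the statement: you observe $E^\mu$ is central (it lies in the span of the class sums $K_\rho$, a basis of $\mathcal{Z}_n$), reduce its action on each irreducible $L^\nu$ to a scalar via Schur's lemma, compute that scalar by a trace, and recognize the resulting sum as the first orthogonality relation --- precisely the relation the paper records as \eqref{eq:first_ortho_Characters}, although there it is flagged only as an ingredient for Proposition \ref{prop:line_induction_sn}. Your uniqueness step via Wedderburn--Artin (Theorem \ref{theo:Wedderburn}) is sound, since the center of $\bigoplus_\nu \End(L^\nu)$ consists of block scalars, so a central element is determined by its eigenvalues on the irreducibles, and $Z^\mu$ is central by definition. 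Two minor observations. First, your intermediate identity $\chi^\nu(K_\rho)=\tfrac{n!}{\stab(\rho)}\,\chi^\nu_\rho$ is the correct one; the corresponding display in the paper reads $\tfrac{n!}{\stab(\mu)}\,\chi^\mu_\rho$, which is a typo (the stabilizer must be that of the class $\rho$ being summed, not of the label $\mu$ of the representation), so your bookkeeping silently corrects the source. Second, invoking real-valuedness of $\sn$-characters to drop the complex conjugate from the standard orthogonality relation is legitimate (these characters are in fact integer-valued) and is needed to match the conjugate-free form \eqref{eq:first_ortho_Characters} used in the paper.
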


\begin{remark}\label{rem:characters}
\begin{itemize}
	\item[\textit{i)}] There exist an efficient algorithm for computing the irreducible characters of $\C\sn$ called the Murnaghan-Nakayama rule \cite{murnaghan1937representations,nakayama1940some1,nakayama1940some2} (see also \cite[Section 1.3]{GOUPIL1994102} and \cite[Chapter 3]{ceccherini2010representation}).

	\item[\it ii)] The previous formula is a change of basis in $\mathcal{Z}_n$, from the conjugacy class sum basis to the central Young basis. The inverse formula is given by
	\begin{equation}\label{eq:centralYoung_Characters}
		K_\mu=\frac{n!}{\stab(\mu)}\sum_{\rho\vdash n} \frac{\chi^{\rho}_\mu}{\mathrm{d}_\rho}\,Z^\rho\,.
	\end{equation}
	\item[\it iii)] Any element $z$ of the center can be expanded in the basis $\lbrace Z^\mu \st \mu\vdash n \rbrace $:
	\begin{equation}
	z=\sum_{\rho\vdash n} z_\rho Z^\rho \hspace{0.5cm} \text{with} \hspace{0.5cm} z_\rho\in \C \,.
	\end{equation}
	From the orthogonality property of the central Young idempotents (recall \eqref{eq:properties_central_idempotents}) one has 
	\begin{equation}
	Z^\nu z = z_\nu Z^\nu\,.
	\end{equation}
	Hence, the central Young projectors are eigenvectors of the conjugacy class sums:  $Z^\nu K_\mu = \omega^{\nu}_{\mu} Z^\nu$  where the eigenvalues 
	\begin{equation}\label{eq:central_characters}
	\omega^{\nu}_{\mu}=\dfrac{n!}{\mathrm{d}_\nu\,\stab(\mu)} \chi^{\nu}_\mu
	\end{equation}
	are called \textit{the central characters} \cite{lassalle2007explicit,katriel1996explicit,goupil2000central,vershik1981asymptotic}.  Equivalently one may say that the conjugacy class sum $K_\mu$ acts\tablefootnote{We consider right regular module so in our mind $K_\mu$ acts on $L^{\mu}$ by the right. Nevertheless $K_\mu$ is central so left action of $K_\mu$ is the same as its right action.} by the scalar $\omega^{\nu}_{\mu}$ on irreducible $\C\sn$-module $L^\nu$. 
	\item[\it iv)] The dimension $\mathrm{d}_\mu$ of an irreducible $\C\sn$-module $L^\mu$ is the number of standard tableaux of shape $\mu$ which can be computed via a simple combinatorial formula called the Hook length formula \cite[chapter 4]{fulton1997young}.
\end{itemize}
\end{remark}
\vskip 6pt

We recall the orthogonality relations for the irreducible character, as the first one will be used for the proof of Proposition \ref{prop:line_induction_sn}.

\begin{theorem}[{First orthogonality relation for the irreducible characters \cite[Theo. 3.5]{Huppert_1998}}]
	Let $L^\mu$ and $L^\nu$ be two irreducible representations of $\C\sn$. Then 
	\begin{equation}\label{eq:first_ortho_Characters}
		\sum_{\rho\vdash n}\frac{1}{\mathrm{st(\rho)}} \chi^\mu_\rho\,\chi^\nu_\rho=\left\{
		\begin{array}{ll}
			1 \quad &\text{if} \quad L^\mu\sim L^\nu\,,\\
			0\,\,,  \quad &\text{otherwise.}
		\end{array}
		\right.
	\end{equation}
\end{theorem}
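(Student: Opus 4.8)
The plan is to deduce the orthogonality relation \eqref{eq:first_ortho_Characters} purely from the idempotent algebra of the central Young projectors, feeding in the character projection formula \eqref{eq:character_central_idempotent_Sn} and the central character formula \eqref{eq:central_characters}. The key observation is that the left-hand side of \eqref{eq:first_ortho_Characters} appears, up to a ratio of dimensions, as the structure coefficient of the product $Z^{\mu}Z^{\nu}$ when this product is expanded in the central Young basis. Since the same product is completely pinned down by the idempotency and pairwise orthogonality properties \eqref{eq:properties_central_idempotents}, comparing the two expressions will immediately isolate the sum.

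Concretely, I would first expand one of the two factors with the character formula \eqref{eq:character_central_idempotent_Sn}, writing $Z^{\nu}=\frac{\mathrm{d}_{\nu}}{n!}\sum_{\rho\vdash n}\chi^{\nu}_{\rho}K_{\rho}$, and then multiply on the left by $Z^{\mu}$. Each product $Z^{\mu}K_{\rho}$ is evaluated by the central character relation of \eqref{eq:central_characters}, namely $Z^{\mu}K_{\rho}=\omega^{\mu}_{\rho}Z^{\mu}$ with $\omega^{\mu}_{\rho}=\frac{n!}{\mathrm{d}_{\mu}\,\mathrm{st}(\rho)}\chi^{\mu}_{\rho}$. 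Collecting the scalars gives
\begin{equation}
Z^{\mu}Z^{\nu}=\frac{\mathrm{d}_{\nu}}{\mathrm{d}_{\mu}}\left(\sum_{\rho\vdash n}\frac{\chi^{\mu}_{\rho}\,\chi^{\nu}_{\rho}}{\mathrm{st}(\rho)}\right)Z^{\mu}\,.
\end{equation}
On the other hand, the defining relations \eqref{eq:properties_central_idempotents} give $Z^{\mu}Z^{\nu}=\delta_{\mu\nu}Z^{\mu}$; here I use that distinct partitions label inequivalent irreducibles, so the condition $L^{\mu}\sim L^{\nu}$ is just $\mu=\nu$. Since $Z^{\mu}$ is a nonzero idempotent it may be cancelled, and comparing coefficients yields $\frac{\mathrm{d}_{\nu}}{\mathrm{d}_{\mu}}\sum_{\rho}\frac{\chi^{\mu}_{\rho}\chi^{\nu}_{\rho}}{\mathrm{st}(\rho)}=\delta_{\mu\nu}$. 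For $\mu\neq\nu$ the nonzero prefactor forces the sum to vanish, while for $\mu=\nu$ the prefactor equals $1$, so the sum is $1$; this is precisely \eqref{eq:first_ortho_Characters}.

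The only genuine point that deserves care, rather than any hard computation, is guarding against circularity: the properties \eqref{eq:properties_central_idempotents} and the scalar action \eqref{eq:central_characters} that I invoke must be available independently of character orthogonality itself. This is ensured in the present setting because the central Young idempotents are produced in this chapter through the interpolation formula \eqref{eq:gen_central_idempotent_Sn_Intro} of Theorem \ref{Theorem:centralYoung}, built from $T_{n}$ and the induction map $\mathcal{L}$, and not via characters; their centrality, idempotency and pairwise orthogonality, together with the fact that each central $K_{\rho}$ acts as a scalar on $L^{\mu}$ by Schur's lemma, are thus established on independent grounds. With that bookkeeping settled, the proof reduces to the short comparison above and requires no estimates.
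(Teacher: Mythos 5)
The paper never proves this statement itself --- it is imported verbatim from Huppert \cite[Theo. 3.5]{Huppert_1998} --- so there is no internal proof to compare against; your derivation is self-contained, and its algebra is correct. Expanding $Z^\nu$ via the character formula \eqref{eq:character_central_idempotent_Sn}, using $Z^\mu K_\rho=\omega^\mu_\rho Z^\mu$ with $\omega^\mu_\rho=\tfrac{n!}{\mathrm{d}_\mu\,\mathrm{st}(\rho)}\chi^\mu_\rho$ from \eqref{eq:central_characters}, and comparing with $Z^\mu Z^\nu=\delta_{\mu\nu}Z^\mu$ indeed yields $\tfrac{\mathrm{d}_\nu}{\mathrm{d}_\mu}\sum_{\rho\vdash n}\tfrac{1}{\mathrm{st}(\rho)}\chi^\mu_\rho\chi^\nu_\rho=\delta_{\mu\nu}$, and since $Z^\mu\neq 0$ the claim follows. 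This is the classical Wedderburn-style route to row orthogonality, and it is a perfectly legitimate one.

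The one genuine weak point is your guard against circularity: it leans on the wrong pillar. You argue the inputs are safe because the $Z^\mu$ are produced by the interpolation formula of Theorem \ref{Theorem:centralYoung}. But within this paper that theorem rests on Proposition \ref{prop:line_induction_sn}, whose proof (appendix \ref{subsec:proof_line_induction_sn}) explicitly invokes the first orthogonality relation \eqref{eq:first_ortho_Characters} for $\C\Sn{n-1}$ --- so pointing to Theorem \ref{Theorem:centralYoung} does not break the circle unless you additionally set up an induction on $n$, which you do not. Moreover, the interpolation formula is not what underwrites your two character-theoretic inputs in any case. The correct grounding is: (i) the properties \eqref{eq:properties_central_idempotents} of the $Z^\mu$ come from Maschke plus Artin--Wedderburn, with no characters involved; (ii) the central-character relation \eqref{eq:central_characters} comes from Schur's lemma plus a trace (the central element $K_\rho$ acts on $L^\mu$ by a scalar $\omega^\mu_\rho$, and taking traces gives $|C_\rho|\,\chi^\mu_\rho=\omega^\mu_\rho\,\mathrm{d}_\mu$); and (iii) the character formula \eqref{eq:character_central_idempotent_Sn} must be taken with its regular-representation proof (Janusz's route: the coefficient of $s$ in $Z^\mu$ equals $\tfrac{1}{n!}\,\mathrm{tr}_{\mathrm{reg}}(Z^\mu s^{-1})=\tfrac{\mathrm{d}_\mu}{n!}\,\chi^\mu(s^{-1})=\tfrac{\mathrm{d}_\mu}{n!}\,\chi^\mu(s)$ for $\sn$), not with the common textbook derivation that itself presupposes orthogonality. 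With the inputs re-grounded this way, your short computation is a complete and non-circular proof; as written, the non-circularity justification is the part that fails.
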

\vspace{1.5cm}

\begin{theorem}[{Second orthogonality relation for the irreducible characters \cite[Theo. 3.10]{Huppert_1998}}]
	Let $\mu$ and $\nu$ be two integer partitions of $n$. Then 
	\begin{equation}\label{eq:second_ortho_Characters}
		\frac{1}{\mathrm{st(\mu)}}\sum_{\rho\vdash n} \chi^\rho_\mu\,\chi^\rho_\nu=\left\{
		\begin{array}{ll}
			1 \quad &\text{if} \quad \mu=\nu\,,\\
			0\,\,,  \quad &\text{otherwise.}
		\end{array}
		\right.
	\end{equation}
\end{theorem}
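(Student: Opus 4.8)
The plan is to deduce the second orthogonality relation from the first by a square-matrix argument applied to the character table. First I would assemble the characters into a single matrix. Since the irreducible $\C\sn$-modules $L^\rho$ and the conjugacy classes $C_\sigma$ are both indexed by the integer partitions of $n$ --- the classes by Theorem \ref{theo:classes_partition}, the modules by the decomposition of the regular module recalled in Section \ref{sec:ideals} --- the character table $X=\big(\chi^{\rho}_{\sigma}\big)_{\rho,\sigma\vdash n}$, whose rows are indexed by the irreducibles $L^\rho$ and whose columns by the cycle types $\sigma$, is a \emph{square} matrix, say of size $p(n)\times p(n)$.

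Next I would recast the first orthogonality relation \eqref{eq:first_ortho_Characters} as a matrix identity. Introducing the diagonal matrix $D=\mathrm{diag}\big(\tfrac{1}{\stab(\sigma)}\big)_{\sigma\vdash n}$, the statement $\sum_{\sigma\vdash n}\tfrac{1}{\stab(\sigma)}\,\chi^\mu_\sigma\,\chi^\nu_\sigma=\delta_{\mu\nu}$ reads exactly
\begin{equation*}
  X\,D\,X^{\mathrm{T}}=I,
\end{equation*}
where $I$ is the $p(n)\times p(n)$ identity; no complex conjugation intervenes, in agreement with the form in which \eqref{eq:first_ortho_Characters} is stated.

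The crucial step is then purely linear-algebraic: a square matrix admitting a right inverse has that same matrix as a two-sided inverse. From $X\,(D\,X^{\mathrm{T}})=I$ one concludes that $X$ is invertible with $X^{-1}=D\,X^{\mathrm{T}}$, whence $(D\,X^{\mathrm{T}})\,X=I$, that is $X^{\mathrm{T}}X=D^{-1}$. Reading off the $(\mu,\nu)$ entry of this identity yields
\begin{equation*}
  \sum_{\rho\vdash n}\chi^{\rho}_{\mu}\,\chi^{\rho}_{\nu}=\stab(\mu)\,\delta_{\mu\nu},
\end{equation*}
and dividing through by $\stab(\mu)$ gives precisely \eqref{eq:second_ortho_Characters}.

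I expect the only genuine obstacle to be establishing that $X$ is square, since the passage from a one-sided to a two-sided inverse --- and hence the whole argument --- fails unless the number of irreducible representations equals the number of conjugacy classes. Here this is guaranteed because both families are parametrized by the partitions of $n$, a fact already in place from the conjugacy-class classification (Theorem \ref{theo:classes_partition}) together with the Maschke and Schur--Weyl decomposition of the regular module recalled earlier. Once squareness is secured, everything else is routine bookkeeping.
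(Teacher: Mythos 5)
Your argument is correct. Note that the paper does not prove this statement at all --- it is quoted from Huppert with a citation --- and your matrix argument (square character table, first orthogonality as $X D X^{\mathrm{T}} = I$, right inverse of a square matrix is two-sided, hence $X^{\mathrm{T}} X = D^{-1}$) is precisely the standard textbook derivation of column orthogonality from row orthogonality, so there is nothing to bridge. The only point worth making explicit is that for $\sn$ the characters are real (indeed integer-valued), which is why the relation can be stated and proved without complex conjugation, consistent with the form of \eqref{eq:first_ortho_Characters} used as input; for a general finite group the same argument goes through with $X^{\mathrm{T}}$ replaced by the conjugate transpose.
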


\begin{mathematicas}[\textit{SymmetricFunctions}]
CharacterSymmetricGroup[\,$\mu$\,,\,$\rho$\,]\,\\
\textit{Returns the irreducible character $\chi^\mu$ evaluated at a permutation $s\in C_\rho$\,.}\\	
	
DimOfIrrepSn[$\mu$]\\
\textit{Returns the dimension of the irreducible representation $L^\mu$.}
\end{mathematicas}

\begin{mathematica}[\textit{BrauerAlgebra}]
	CentralYoungProjector[$\mu$]\\ 
	\textit{Returns $Z^\mu$ in the conjugacy class sum basis of $\mathcal{Z}_n$.}
\end{mathematica}

\subsection{The class $T_n$: principal building block of the construction}
\paragraph{Jucys-Murphy elements.} Let $s_{ij}$ denote the transposition $(i,j)\in \sn$:
\begin{equation}\label{eq:sij}
	s_{ij}=\raisebox{-.3\height}{\includegraphics[scale=0.65]{fig/Tn.pdf}}\,.
\end{equation}
We introduce the following set of pairwise-commuting elements of $\C\sn$, known as the {\it Jucys-Murphy elements} $j_k$ ($k = 1,\dots, n$) \cite{Jucys,Murphy}: 
\begin{equation}\label{eq:JM_sn}
	\quad j_{k} = \sum_{j=1}^{k-1} s_{jk}\quad \text{for}\quad k\geqslant 2\,,
\end{equation}
and $j_1=0$.\medskip

For any standard Young tableau $\tab$ with $n$ boxes of a given shape $\mu$ we denote by $c_{\tab}(k)$ ($1 \leqslant k \leqslant n $), the content (recall \eqref{eq:content_YD}) of the boxe in $\mu$ which contains the integer $k$ in the standard tableau $\tab$.

\begin{example} Consider the following two standard tableaux with $4$ boxes:
\begin{equation}
\tab_1=\Scale[0.8]{\young(13,24)}\,, \hspace{1cm} \tab_2=\Scale[0.8]{\young(123,4)}\,.
\end{equation}
One has 
\begin{equation}
	\begin{aligned}
&c_{\tab_1}(1)=0 \,,\hspace{0.3cm} c_{\tab_1}(2)=-1 \,,\hspace{0.3cm} c_{\tab_1}(3)=1 \,,\hspace{0.3cm} c_{\tab_1}(4)=0 \,,\\
&c_{\tab_2}(1)=0 \,,\hspace{0.3cm} c_{\tab_2}(2)=1 \,,\hspace{0.63cm} c_{\tab_2}(3)=2 \,,\hspace{0.3cm} c_{\tab_1}(4)=-1 \,.
	\end{aligned}
\end{equation}
\end{example}

\begin{theorem}[Theo. 3.3 \cite{Garsia2020}]\label{theo:JM_sn}
For any standard Young tableau $\tab$ with $n$ boxes we have for $2\leqslant k\leqslant n$,
\begin{equation}
Y^{\,\tab} j_k = c_\tab(k)Y^{\,\tab}\,, \hspace{0.5cm} \text{besides} \hspace{0.5cm} Y^{\,\tab} j_k =  j_k Y^{\,\tab}\,,
\end{equation}
where we recall that $Y^{\,\tab}$ are the Young seminormal idempotents.
\end{theorem}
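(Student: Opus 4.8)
The plan is to reduce the statement to the analogous claim for the central element $T_k$ and then exploit the factorisation \eqref{eq:Young_seminormal_idempotents_intro} of $Y^{\,\tab}$ along the path $\tab$. I would write the standard tableau $\tab$ as its path $\emptyset=\mu^{(0)}\subset\mu^{(1)}\subset\dots\subset\mu^{(n)}$ in the Bratteli diagram of the chain \eqref{eq:chain_embedding_sn_0}, so that $\mu^{(k)}\vdash k$ is the shape occupied by the entries $1,\dots,k$, and set $T_k:=\sum_{1\leqslant i<j\leqslant k}s_{ij}\in\C\Sn{k}$. Summing the defining relation \eqref{eq:JM_sn} gives $\sum_{l=1}^{k}j_l=\sum_{1\leqslant i<j\leqslant k}s_{ij}=T_k$, hence $j_k=T_k-T_{k-1}$. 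Since $\mu^{(k)}$ is obtained from $\mu^{(k-1)}$ by adjoining the box carrying the integer $k$, and the content of a diagram \eqref{eq:content_YD} is additive over its boxes, one has $c_\tab(k)=c_{\mu^{(k)}}-c_{\mu^{(k-1)}}$. Thus it suffices to prove, for $1\leqslant k\leqslant n$,
\begin{equation}
Y^{\,\tab}\,T_k=c_{\mu^{(k)}}\,Y^{\,\tab}\qquad\text{and}\qquad T_k\,Y^{\,\tab}=Y^{\,\tab}\,T_k\,,
\end{equation}
after which the theorem follows by subtracting the identities for $k$ and $k-1$.

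For the second ingredient I would use that $T_k$ is the class sum of transpositions of $\Sn{k}$, hence central in $\C\Sn{k}$, and that it acts on the $\mu^{(k)}$-isotypic component by the content: by the defining property of $T_k$ recalled above (its eigenvalue on $L^{\mu}$ is $c_{\mu}$, equivalently $\omega^{\mu^{(k)}}_{(2,1^{k-2})}=c_{\mu^{(k)}}$ in the notation of \eqref{eq:central_characters}), one has $Z^{\mu^{(k)}}T_k=c_{\mu^{(k)}}Z^{\mu^{(k)}}$. The key observation is that $T_k$ commutes with every factor $Z^{\mu^{(l)}}$ appearing in \eqref{eq:Young_seminormal_idempotents_intro}: if $l\geqslant k$ then $Z^{\mu^{(l)}}$ is central in $\C\Sn{l}\supseteq\C\Sn{k}\ni T_k$, while if $l\leqslant k$ then $T_k$ is central in $\C\Sn{k}\supseteq\C\Sn{l}\ni Z^{\mu^{(l)}}$; in either case the two commute. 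Consequently $T_k$ commutes with the whole product $Y^{\,\tab}=\prod_{l=1}^{n}Z^{\mu^{(l)}}$, which already gives the commutation relation $T_k Y^{\,\tab}=Y^{\,\tab}T_k$.

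Collecting the factors I would then compute
\begin{equation}
Y^{\,\tab}\,T_k=Z^{\mu^{(1)}}\cdots Z^{\mu^{(k)}}\,T_k\,Z^{\mu^{(k+1)}}\cdots Z^{\mu^{(n)}}=Z^{\mu^{(1)}}\cdots\bigl(c_{\mu^{(k)}}Z^{\mu^{(k)}}\bigr)\cdots Z^{\mu^{(n)}}=c_{\mu^{(k)}}\,Y^{\,\tab}\,,
\end{equation}
where the first equality moves $T_k$ leftward past $Z^{\mu^{(k+1)}},\dots,Z^{\mu^{(n)}}$ using the commutations just established, and the second inserts $Z^{\mu^{(k)}}T_k=c_{\mu^{(k)}}Z^{\mu^{(k)}}$ in place (no reordering of the $Z^{\mu^{(l)}}$ among themselves is needed). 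Subtracting the corresponding identities for $k$ and $k-1$ and using $j_k=T_k-T_{k-1}$ yields $Y^{\,\tab}j_k=(c_{\mu^{(k)}}-c_{\mu^{(k-1)}})Y^{\,\tab}=c_\tab(k)Y^{\,\tab}$, while $Y^{\,\tab}j_k=j_kY^{\,\tab}$ is inherited from the commutation of $Y^{\,\tab}$ with each $T_k$.

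The only genuinely non-formal input is the content eigenvalue $Z^{\mu^{(k)}}T_k=c_{\mu^{(k)}}Z^{\mu^{(k)}}$, that is, the identification of the central character of the transposition class sum with the content of the diagram; this is precisely the property of $T_n$ recalled in the paper, so within the present framework it may be invoked directly. I expect the main obstacle to be purely bookkeeping: keeping the embeddings $\C\Sn{l}\hookrightarrow\C\sn$ consistent so that the centrality dichotomy ($l\geqslant k$ versus $l\leqslant k$) is applied correctly, and checking the boundary cases $k=1$ (where $T_1=T_0=0$, $j_1=0$ and $c_\tab(1)=0$) so that the telescoping remains valid at the ends of the path.
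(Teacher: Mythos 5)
The paper gives no proof of this statement: it is quoted from \cite{Garsia2020}, where it is established through the explicit seminormal representation matrices. So the comparison is with that classical route. Your argument is correct and is essentially the Gelfand--Tsetlin route of \cite{doty2019canonical}: taking the product formula \eqref{eq:Young_seminormal_idempotents} as the definition of $Y^{\,\tab}$, the telescoping $j_k=T_k-T_{k-1}$, the additivity of contents, and the two-case centrality argument ($T_k$ central in $\C\Sn{k}$ when $l\leqslant k$, versus $Z^{\mu^{(l)}}$ central in $\C\Sn{l}$ when $l\geqslant k$) reduce everything to the single eigenvalue identity $Z^{\mu^{(k)}}T_k=c_{\mu^{(k)}}Z^{\mu^{(k)}}$, and your bookkeeping (including the boundary case $T_1=0$, $c_{\mu^{(1)}}=0$) is sound. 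What this buys, relative to \cite{Garsia2020}, is that no information about the seminormal matrices is needed; the price is that the factorisation \eqref{eq:Young_seminormal_idempotents} must be assumed, which is legitimate here since the paper imports it from \cite{doty2019canonical} independently of Theorem \ref{theo:JM_sn}.

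One point must be repaired to avoid circularity inside this paper. Your ``only non-formal input'', the identity $Z^{\mu}T_k=c_{\mu}Z^{\mu}$ for $\mu\vdash k$, is exactly Lemma \ref{lem:ev_Tn}, and the paper deduces that lemma \emph{from} Theorem \ref{theo:JM_sn} (via $T_n=\sum_k j_k$). So you cannot invoke it as stated in the text; you must supply it independently. That is classical and short: $T_k$ is the conjugacy class sum of transpositions, hence central in $\C\Sn{k}$, so by Schur's lemma it acts on the irreducible module $L^\mu$ by a scalar $\omega^\mu$; taking traces gives $\omega^\mu\,\mathrm{d}_\mu=\binom{k}{2}\,\chi^\mu_{(2,1^{k-2})}$, and Frobenius' character formula $\chi^{\mu}_{(2,1^{k-2})}=\tfrac{2\,\mathrm{d}_\mu\, c_\mu}{k(k-1)}$ yields $\omega^\mu=c_\mu$ (this is the inversion of the logic by which the paper obtains $\chi^{\rho}_{(2,1^{n-2})}$ from \eqref{eq:central_characters}). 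With that substitution your proof is complete and non-circular within the paper's framework.
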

As a direct consequence of this Theorem (recall \eqref{eq:partition_Zmu}) one has
\begin{equation}\label{eq:JM_simple_module}
\text{For any } \mu\vdash n\,, \hspace{1.5cm} Z^{\mu} j_k = \sum_{\tab\in \textup{Tab}(\mu)} c_\tab(k) Y^{\,\tab}\,.
\end{equation}
Hence (as expected from Schur's Lemma) the Jucys-Murphy elements are not proportional to unity on simple $\C\sn$-modules $L^\mu$.
\begin{remark}
\begin{itemize}
\item[\it i)] The center of $\C\sn$ is generated by particular symmetric polynomial in $j_k$ (see for example \cite[Sec. 8]{lassalle2007explicit} and \cite{corteel2004content}).
\item[\it ii)] The Jucys-Murphy elements generate the maximal commutative subalgebra of $\C\sn$ called the Gelfand-Tsetlin algebra \cite[Prop. 2.1]{Okounkov1996}.
\item[\it iii)] The notion of Jucys-Murphy elements can be generalized to any multiplicity-free family of finite dimensional algebras \cite[Def. 3.1]{doty2019canonical}\,. Their generalization to the Brauer algebra is given in section \ref{subsec:JM_Xn}.
\end{itemize}
\end{remark}

\paragraph{The class sum $T_n$.} The principal building block of the construction presented in this section is a particular element $T_n$ of the center $\mathcal{Z}_n$. It is define as the sum of transpositions in $\mathfrak{S}_n$: 
\begin{equation}\label{eq:Tn_def}
	T_n=\sum_{1\leqslant i \, < \, j\leqslant n}s_{ij}\,,
\end{equation}
\newpage
\begin{example}For $n=3$ and $n=4$ one has: 
\begin{equation}
	T_3=\raisebox{-.4\height}{\includegraphics[scale=0.4]{fig/s3a.pdf}}+\raisebox{-.4\height}{\includegraphics[scale=0.4]{fig/s3c.pdf}}+\raisebox{-.4\height}{\includegraphics[scale=0.4]{fig/s3b.pdf}}\,,
\end{equation}
\begin{equation}
	T_4=\raisebox{-.4\height}{\includegraphics[scale=0.55]{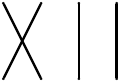}}+\raisebox{-.4\height}{\includegraphics[scale=0.55]{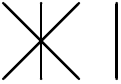}}+\raisebox{-.4\height}{\includegraphics[scale=0.55]{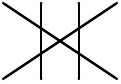}}+\raisebox{-.4\height}{\includegraphics[scale=0.55]{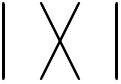}}+\raisebox{-.4\height}{\includegraphics[scale=0.55]{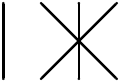}}+\raisebox{-.4\height}{\includegraphics[scale=0.55]{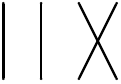}}\,.
\end{equation}
\end{example}
\vskip 4pt

\begin{remark}\label{rem:def_Tn}\hphantom{0}\medskip
\begin{itemize}
\item[\it i)] $T_n$ is the conjugacy class sum parametrized by the integer partition $(2,1^{n-2})$:
\begin{equation}
	T_{n}=K_{(2,1^{n-2})}\,.
\end{equation}
\item[\it ii)] $T_n$ is the sum of Jucys-Murphy elements 
\begin{equation}\label{eq:Tn_jm_sum}
T_n=\sum_{k=1}^{n} \, j_k
\end{equation}
\item[\it iii)] Conversely the Jucys-Murphy element $j_k$ is the difference between $T_k$ and $T_{k-1}$:
\begin{equation}
j_k=T_k-T_{k-1}
\end{equation}
\end{itemize}
\end{remark}

As a direct consequence of \eqref{eq:Tn_jm_sum} and Theorem \ref{theo:JM_sn}, on has the following well known fact. 
\begin{lemma}\label{lem:ev_Tn}
	The element $T_n$ is proportional to the identity on irreducible $\C\sn$-module $L^\mu$ with eigenvalue $\omega^\mu_{(2,1^{n-2})}$ given by the content of the Young diagram $\mu$:
	\begin{equation}
		Z^\mu \, T_n =c_\mu\, Z^\mu \, \hspace{0.5cm} \text{with} \hspace{0.5cm} \mu\vdash n.
	\end{equation}
\end{lemma}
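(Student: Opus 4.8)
The plan is to prove Lemma \ref{lem:ev_Tn} by combining the additive decomposition of $T_n$ into Jucys-Murphy elements, given in Remark \ref{rem:def_Tn} as $T_n=\sum_{k=1}^n j_k$ (equation \eqref{eq:Tn_jm_sum}), with the eigenvalue property of the Jucys-Murphy elements established in Theorem \ref{theo:JM_sn}. The key observation is that although each individual $j_k$ acts by the content $c_\tab(k)$ which \emph{depends on the standard tableau} $\tab$ and not only on its shape $\mu$ (so $j_k$ is not scalar on $L^\mu$, as noted around \eqref{eq:JM_simple_module}), the \emph{sum} over all $k$ of these contents depends only on $\mu$. This is because summing $c_\tab(k)$ over $k=1,\dots,n$ runs over all boxes of the diagram, recovering the content of the Young diagram $c_\mu$ from its definition \eqref{eq:content_YD}.

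First I would apply Theorem \ref{theo:JM_sn} together with relation \eqref{eq:JM_simple_module} to write, for a fixed $\mu\vdash n$,
\begin{equation}
Z^\mu\, T_n = Z^\mu \sum_{k=1}^n j_k = \sum_{k=1}^n Z^\mu j_k = \sum_{k=1}^n \sum_{\tab\in \Tab(\mu)} c_\tab(k)\, Y^{\,\tab}\,.
\end{equation}
Interchanging the two finite sums, this becomes
\begin{equation}
Z^\mu\, T_n = \sum_{\tab\in \Tab(\mu)} \left(\sum_{k=1}^n c_\tab(k)\right) Y^{\,\tab}\,.
\end{equation}
The inner sum $\sum_{k=1}^n c_\tab(k)$ is the crux: by definition $c_\tab(k)$ is the content of the box containing the integer $k$ in the standard tableau $\tab$. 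Since $\tab$ fills each of the $n$ boxes of $\mu$ with exactly one integer from $\{1,\dots,n\}$, as $k$ ranges over $1,\dots,n$ the box containing $k$ ranges over all boxes of $\mu$ exactly once. Hence
\begin{equation}
\sum_{k=1}^n c_\tab(k) = \sum_{(i,j)\in\mu} c(i,j) = c_\mu\,,
\end{equation}
which is precisely the content of the Young diagram as defined in \eqref{eq:content_YD}, and crucially this value is independent of the particular tableau $\tab$.

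With this, I would substitute back to obtain
\begin{equation}
Z^\mu\, T_n = \sum_{\tab\in \Tab(\mu)} c_\mu\, Y^{\,\tab} = c_\mu \sum_{\tab\in \Tab(\mu)} Y^{\,\tab} = c_\mu\, Z^\mu\,,
\end{equation}
where the last equality is the partition relation \eqref{eq:partition_Zmu} expressing the central Young idempotent $Z^\mu$ as the sum of the Young seminormal idempotents over standard tableaux of shape $\mu$. This completes the proof. I do not expect any serious obstacle here: the entire argument is a bookkeeping step, and the only point requiring care is the tableau-independence of the inner sum, which is exactly why passing from the non-scalar action of the individual $j_k$ to the scalar action of their sum $T_n$ works. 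Alternatively, one could bypass the Jucys-Murphy elements entirely and invoke Remark \ref{rem:def_Tn}\,\textit{i)} identifying $T_n=K_{(2,1^{n-2})}$ together with the central character formula \eqref{eq:central_characters}, reducing the claim to the identity $\omega^\mu_{(2,1^{n-2})}=c_\mu$; but the Jucys-Murphy route is cleaner and self-contained given what has been developed.
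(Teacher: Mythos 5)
Your proof is correct and is exactly the argument the paper intends: the paper presents the lemma as ``a direct consequence of'' the decomposition $T_n=\sum_{k} j_k$ and Theorem \ref{theo:JM_sn}, which is precisely the Jucys--Murphy computation you spelled out. The only substantive step --- the tableau-independence of $\sum_{k=1}^n c_\tab(k)=c_\mu$, followed by resumming via $Z^\mu=\sum_{\tab}Y^{\,\tab}$ --- is the same bookkeeping in both.
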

\vskip 6pt
Let the multiset $ \spec(T_n)=(c_\mu \st \mu \vdash n)$ denote the \textit{spectrum} of $T_n$. Following \cite{doty2019canonical} we say that a spectrum is \textit{simple} if it has no repeated entries.\medskip 
\begin{lemma}\label{lem:simple_ev_Tn}
The spectrum of $T_n$ is simple if and only if $n<6$ or $n=7$.
\end{lemma}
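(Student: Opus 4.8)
The plan is to use Lemma~\ref{lem:ev_Tn}, which identifies the spectrum of $T_n$ with the multiset of contents $(c_\mu \mid \mu\vdash n)$; thus the spectrum is simple precisely when the map $\mu\mapsto c_\mu$ is injective on partitions of $n$. The first thing I would record is the reflection symmetry of the content under conjugation: transposing a Young diagram sends a box at position $(i,j)$ to $(j,i)$, so its content $c(i,j)=j-i$ becomes $-c(i,j)$, and summing over all boxes via \eqref{eq:content_YD} gives $c_{\mu'}=-c_\mu$. In particular every self-conjugate partition ($\mu=\mu'$) satisfies $c_\mu=0$. This single observation drives the negative direction for large $n$: as soon as $n$ admits two distinct self-conjugate partitions, the value $0$ is attained twice and the spectrum cannot be simple.

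For the failure of simplicity when $n\geqslant 8$ I would exhibit two self-conjugate partitions of $n$. Recalling the classical correspondence between self-conjugate partitions and partitions into distinct odd parts (given by the principal diagonal hook lengths), it suffices to produce two different sets of distinct odd numbers summing to $n$. For $n$ even with $n\geqslant 8$ the sets $\{1,n-1\}$ and $\{3,n-3\}$ work and are distinct (here $n-1,n-3$ are odd and $\geqslant 5$, and $n-3\neq 3$ since $n\neq 6$). For $n$ odd with $n\geqslant 9$ the sets $\{n\}$ and $\{1,3,n-4\}$ work (here $n-4$ is odd and $\geqslant 5$, distinct from $1$ and $3$). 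Each set yields a self-conjugate partition of content $0$, and the two partitions are distinct, so $0$ occurs with multiplicity at least two in $\spec(T_n)$. The leftover case $n=6$ admits only the single self-conjugate partition $(3,2,1)$, so here I would instead point to the explicit collision $c_{(3,3)}=c_{(4,1,1)}=3$ (equivalently, by conjugation, $c_{(2,2,2)}=c_{(3,1,1,1)}=-3$), which already shows the spectrum is not simple.

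It remains to verify the positive direction, namely that the contents are pairwise distinct for $n\in\{1,2,3,4,5\}$ and for $n=7$. This is a finite check: for each such $n$ one lists the partitions and their contents using \eqref{eq:content_YD} and observes no repetition, the symmetry $c_{\mu'}=-c_\mu$ halving the work. The only genuinely tight case is $n=7$, where the fifteen partitions realise the fifteen distinct values $\{\pm21,\pm14,\pm9,\pm7,\pm6,\pm3,\pm1,0\}$, while the smaller cases are immediate. I expect this finite verification to be routine rather than the crux. The real content of the argument, and the only step requiring care, is the uniform construction of two self-conjugate partitions for every $n\geqslant 8$, which the even/odd split above settles; combining the three parts yields that the spectrum of $T_n$ is simple if and only if $n<6$ or $n=7$.
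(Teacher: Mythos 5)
Your proof is correct and follows essentially the same route as the paper's: a finite check of contents for $n<6$ and $n=7$, the explicit collision $c_{(3,3)}=c_{(4,1,1)}=3$ at $n=6$, and for $n>7$ a pair of distinct self-conjugate partitions forcing the eigenvalue $0$ to repeat. Indeed, your distinct-odd-parts sets $\{1,n-1\},\{3,n-3\}$ (even $n$) and $\{n\},\{1,3,n-4\}$ (odd $n$) produce exactly the partitions $(n/2,2,1^{n/2-2})$, $(n/2-1,3,2,1^{n/2-4})$ and $((n+1)/2,1^{(n-1)/2})$, $((n-3)/2,3,3,1^{(n-9)/2})$ written out in the paper's proof.
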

See appendix \ref{subsec:proof_simple_ev_Tn} for the proof.

\paragraph{Lagrange-interpolation-type formula.} For convenience we recall the definition of Lagrange polynomial in appendix \ref{subsec:definitions}. A general account on the construction of central idempotent in multiplicity-free families of algebras via Lagrange-interpolation-type formula can found in \cite{doty2019canonical}[Section $2$]. Therein the authors give the following formula which is presented here as a proposition. 
\begin{proposition}
The central idempotents of $\mathbb{C}\mathfrak{S}_n$ for $n<6$ and $n=7$ is given by
	\begin{equation}\label{eq:central_young_res}
		Z^{\mu}=\prod_{\begin{array}{c}
				{\scriptstyle \rho\,\vdash\, n}\\
				{\scriptstyle \rho\neq\mu }
		\end{array}}\dfrac{c_\rho-T_n}{c_\rho-c_\mu}\,.
	\end{equation} 
\end{proposition}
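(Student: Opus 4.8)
The plan is to recognise the right-hand side as a Lagrange interpolation polynomial evaluated at $T_n$, and to exploit the fact that $T_n$ is simultaneously diagonalised by the central Young idempotents. First I would record how $T_n$ decomposes in the central basis. Since $\{Z^\mu \st \mu\vdash n\}$ is a partition of unity \eqref{eq:properties_central_idempotents} and $Z^\mu T_n = c_\mu Z^\mu$ by Lemma \ref{lem:ev_Tn}, and since $T_n$ is central and therefore commutes with each $Z^\nu$, writing $T_n = T_n\,\id_{\C\sn}$ and inserting the partition of unity gives
\begin{equation*}
	T_n=\sum_{\nu\vdash n} T_n Z^\nu=\sum_{\nu\vdash n} c_\nu\, Z^\nu\,.
\end{equation*}

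Next I would set up the functional calculus. Because the central Young idempotents are pairwise orthogonal and idempotent, $Z^\rho Z^\nu=\delta_{\rho\nu}Z^\nu$, the powers of $T_n$ are $T_n^k=\sum_{\nu\vdash n} c_\nu^k Z^\nu$ for every $k\geqslant 0$ (with $T_n^0=\id_{\C\sn}=\sum_\nu Z^\nu$). Consequently, for any polynomial $P\in\C[x]$ one has
\begin{equation*}
	P(T_n)=\sum_{\nu\vdash n} P(c_\nu)\, Z^\nu\,.
\end{equation*}
It then remains to identify the claimed product with $P_\mu(T_n)$ for an appropriate $P_\mu$. Rewriting each factor as $\tfrac{c_\rho-T_n}{c_\rho-c_\mu}=\tfrac{T_n-c_\rho}{c_\mu-c_\rho}$, the product equals $P_\mu(T_n)$, where $P_\mu$ is the Lagrange basis polynomial
\begin{equation*}
	P_\mu(x)=\prod_{\substack{\rho\vdash n\\ \rho\neq\mu}}\frac{x-c_\rho}{c_\mu-c_\rho}\,.
\end{equation*}

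Finally I would invoke the simplicity of the spectrum. By Lemma \ref{lem:simple_ev_Tn}, for $n<6$ and for $n=7$ the multiset $\spec(T_n)=(c_\mu\st\mu\vdash n)$ has no repeated entries, so the contents $c_\rho$ are pairwise distinct. This is precisely what makes $P_\mu$ well defined (no vanishing denominators) and guarantees the interpolation identity $P_\mu(c_\nu)=\delta_{\mu\nu}$. Combining the functional-calculus formula with this identity yields
\begin{equation*}
	P_\mu(T_n)=\sum_{\nu\vdash n}\delta_{\mu\nu}\,Z^\nu=Z^\mu\,,
\end{equation*}
which is the assertion.

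The step carrying the genuine content is the simplicity of $\spec(T_n)$, namely Lemma \ref{lem:simple_ev_Tn}, which pins down the admissible values of $n$ and is established separately in the appendix; everything else is a clean diagonalisation/interpolation argument. I expect the main obstacle to be exactly this hypothesis: whenever two distinct diagrams share the same content, the interpolation nodes collide, a denominator $c_\mu-c_\rho$ vanishes, and at best $P_\mu(T_n)$ would project onto the \emph{sum} of the isotypic components sharing that content rather than onto the single $Z^\mu$. Thus the restriction to $n<6$ and $n=7$ is not a technical artefact but is forced by the method, since $T_n$ alone cannot separate partitions of equal content.
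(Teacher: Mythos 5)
Your proof is correct and is essentially the paper's own argument: both rest on Lemma \ref{lem:ev_Tn} (that $T_n$ acts as $c_\nu$ on each isotypic component) together with the simplicity of $\spec(T_n)$ from Lemma \ref{lem:simple_ev_Tn}, so that the Lagrange product evaluates to $\delta_{\mu\nu}$ on $Z^\nu$ and summing over the partition of unity yields $Z^\mu$. The paper simply states this in one line by applying the product directly to $Z^\nu$, whereas you spell out the same content via the functional calculus $P(T_n)=\sum_\nu P(c_\nu)Z^\nu$; the mathematics is identical.
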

\begin{proof}
From Lemma \eqref{lem:ev_Tn}, when $\spec(T_n)$ is simple one has 
\begin{equation}
	\prod_{\begin{array}{c}
			{\scriptstyle \rho\,\vdash\, n}\\
			{\scriptstyle \rho\neq\mu }
	\end{array}}\dfrac{c_\rho-T_n}{c_\rho-c_\mu}\, Z^\nu=\delta_{\mu\nu} Z^\nu, \hspace{1cm} \text{for any $\nu \vdash n$\,},
\end{equation}
and the result follows.
\end{proof}
Let us stress that when $n=6$ and for $n>7$, \eqref{eq:central_young_res} can not be applied uniformly. Indeed, in these cases $\spec(T_n)$ is not simple, and as a result the denominator in \eqref{eq:central_young_res} may be zero.

\begin{example}[Central Young idempotents for $n=3$]\label{ex:CYI_n3}
The contents of the Young diagrams with 3 boxes are 
\begin{equation}
	c_{\Yboxdim{3pt}\yng(3)}=3\,,\hspace{1cm} c_{\Yboxdim{3pt}\yng(2,1)}=0\,,\hspace{1cm} c_{\Yboxdim{3pt}\yng(1,1,1)}=-3\,,
\end{equation}
By direct computation, the central Young idempotents of $\C\Sn{3}$ take the form: 
\begin{equation}\label{eq:factorized_Z3}
Z^{\Yboxdim{3pt}\yng(3)}=\frac{1}{18}\, T_3\,\left(3+T_3\right)\,,\hspace{0.4cm} Z^{\Yboxdim{3pt}\yng(2,1)}=\frac{1}{9}\left(3-T_3\right)\left( 3+T_3 \right)\,,\hspace{0.4cm}
Z^{\Yboxdim{3pt}\yng(1,1,1)}=-\frac{1}{18}T_3\, \left(3-T_3\right)\,.
\end{equation}
\end{example}
\begin{example}[Central Young idempotents for $n=4$]\label{ex:CYI_n4}
The contents of the Young diagrams with $4$ boxes are: 
\begin{equation}
	c_{\Yboxdim{3pt}\yng(4)}=6\,,\hspace{1cm}c_{\Yboxdim{3pt}\yng(3,1)}=2\,,\hspace{1cm}c_{\Yboxdim{3pt}\yng(2,2)}=0\,,\hspace{1cm}c_{\Yboxdim{3pt}\yng(2,1,1)}=-2\,,\hspace{1cm}c_{\Yboxdim{3pt}\yng(1,1,1,1)}=-6\,.
\end{equation}
By direct computation, the central Young idempotents of $\C\Sn{4}$ take the form: 
\begin{equation}\label{eq:factorized_Z4}
\begin{array}{ll}
		Z^{\Yboxdim{3pt}\yng(4)}=\dfrac{1}{2304}\, T_4\left(6+T_4\right)\left(2+T_4\right)\left(T_4-2\right)\,,
		&Z^{\Yboxdim{3pt}\yng(1,1,1,1)}=-\dfrac{1}{2304}\, T_4\left(6-T_4\right)\left(2+T_4\right)\left(T_4-2\right),\\[10pt]
		Z^{\Yboxdim{3pt}\yng(3,1)}=\dfrac{1}{256}\, T_4\left(6-T_4\right)\left(2+T_4\right)\left(6+T_4\right)\,,
		&Z^{\Yboxdim{3pt}\yng(2,1,1)}=-\dfrac{1}{256}\, T_4\left(6-T_4\right)\left(2-T_4\right)\left(6+T_4\right)\,,\\[10pt]
		Z^{\Yboxdim{3pt}\yng(2,2)}=\dfrac{1}{144}\, \left(6-T_4\right)\left(2-T_4\right)\left(2+T_4\right)\left(6+T_4\right)\,.
	\end{array}
\end{equation} 
\end{example}
\paragraph{Product of $T_n$ with other class sums.} For the purpose of application it is desirable to be able to expand efficiently formula \eqref{eq:central_young_res} in the conjugacy class sums basis of $\mathcal{Z}_n$. This can be done using the differential operator described in \cite[Section 2]{goulden1997transitive}. In chapter \ref{chap:cn} we propose a quite similar technique which also applies to products of $T_n$ with conjugacy class sums of the Brauer algebra. The expanded expressions of the central Young idempotents of examples \ref{ex:CYI_n3} - \ref{ex:CYI_n4} in the conjugacy class sum basis of $\mathcal{Z}_n$ are presented in section \ref{sec:applications}.\medskip
 
For completeness let us apply formula \eqref{eq:struc_constants_Sn} to multiplication by $T_n$, that is $\mu=(2,1^{n-2})$ in \eqref{eq:struc_constants_Sn}. First note that, 
\begin{equation}
\stab(2,1^{n-2})=2\,(n-2)!
\end{equation}
and from $\omega^{\rho}_{(2,1^{n-2})}=c_\rho$ (see Lemma \eqref{lem:ev_Tn}) and the definition of central character \eqref{eq:central_characters} one has 
\begin{equation}
\chi^{\rho}_{(2,1^{n-2})}=\dfrac{2 \,\mathrm{d}_\rho \, c_\rho}{n(n-1)}\,.
\end{equation}
This yield the following expression for the connection coefficient $C^{\xi}_{(2,1^{n-2})\,\zeta}$ in terms of irreducible characters 
\begin{equation}
C^{\xi}_{(2,1^{n-2})\,\zeta}=\frac{1}{\stab(\zeta)}\sum_{\rho\vdash n} c_\rho \, \chi^{\rho}_{\zeta}\,\chi^{\rho}_{\xi}.
\end{equation}

\subsection{The inductive formula for $Z^\mu$}\label{subsec:inductiveformula}
The content of the present section are original results. 
\paragraph{The averaged line induction map $\mathcal{L}$.}  For the purpose of generalizing formula \eqref{eq:central_young_res} to arbitrary $n$ we introduce the line induction map $\mathcal{L}\,:\, \mathbb{C}\mathfrak{S}_{n-1}\to \mathbb{C}\mathfrak{S}_n$ defined on a single permutation diagram $s\in \mathfrak{S}_{n-1}$ as the sum of all possible diagrams obtained from $s$ by the insertion of a vertical line. For example, 
\begin{equation}
	\mathcal{L}(\raisebox{-.4\height}{\includegraphics[scale=0.4]{fig/s3a.pdf}})=2\,\raisebox{-.4\height}{\includegraphics[scale=0.55]{fig/T4a.pdf}}+\raisebox{-.4\height}{\includegraphics[scale=0.55]{fig/T4b.pdf}}+\raisebox{-.4\height}{\includegraphics[scale=0.55]{fig/T4d.pdf}}\,.
\end{equation}
Let us define the map $\mathfrak{l}_i$ which adds a vertical line at node $i$ of a permutation diagram of $\sn$ as:
\begin{equation}\label{eq:ci}
	\begin{aligned}
		\mathfrak{l}_{i}\,:\, \Sn{n-1}&\to \sn \\
		s&\mapsto c_{i}\, s \,c_{i}^{-1}\,
	\end{aligned}
\end{equation}
where $c_i\in \sn$ is the cycle $\left(i,i+1,\ldots, n\right)$:
\begin{equation}
c_i=\raisebox{-.3\height}{\includegraphics[scale=0.3]{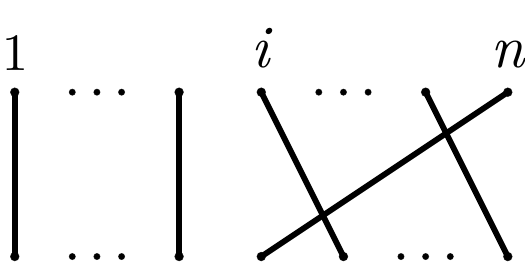}}\,,
\end{equation}
 and $c_n=\id$. In terms of $\mathfrak{l}_i$ the action of the line induction map $\mathcal{L}$ on a single permutation is given by 
\begin{equation}\label{eq:def_L_li}
\mathcal{L}(s)=\sum_{1\leqslant i\leqslant n} \mathfrak{l}_i(s)\,.
\end{equation}
The map $\mathcal{L}$ is extended to the group algebra by linearity.\medskip 

The following two results describe the action of $\mathcal{L}$ on the center of $\C\Sn{n-1}$. We denote by $\nu_{[1]}$ the Young diagram obtained from $\nu$ by adding a box below its last row. The action of $\mathcal{L}$ on conjugacy class sums of $\C\Sn{n-1}$ is described by the following lemma whose proof is given in the appendix \ref{subsec:proof_induction_center_Sn}.

\begin{lemma}\label{lem:induction_center_Sn}
The line induction $\mathcal{L}$ is a map from $\mathcal{Z}_{n-1}$ to $\mathcal{Z}_{n}$, and in particular for any $\nu \vdash n-1 $ one has 
\begin{equation}
\mathcal{L}(K_\nu)=\frac{\stab(\nu_{[1]})}{\stab(\nu)}K_{\nu_{[1]}}\,.
\end{equation}
\end{lemma}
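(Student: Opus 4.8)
The plan is to unravel the definition of the line induction map $\mathcal{L}$ and track what it does to a conjugacy class sum $K_\nu$ with $\nu\vdash n-1$. Recall that $\mathcal{L}(s)=\sum_{1\leqslant i\leqslant n}\mathfrak{l}_i(s)$, where $\mathfrak{l}_i(s)=c_i\,s\,c_i^{-1}$ inserts a vertical line at node $i$. Geometrically, inserting a vertical line at node $i$ turns a permutation of $\mathfrak{S}_{n-1}$ into a permutation of $\mathfrak{S}_n$ that fixes a new point, so the cycle type is augmented by a single part of size $1$; that is, every diagram appearing in $\mathcal{L}(s)$ for $s\in C_\nu$ has cycle type $\nu_{[1]}$ (the partition $\nu$ with one extra part equal to $1$, which as a Young diagram is $\nu$ with a box added below its last row). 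Hence $\mathcal{L}(K_\nu)$ is automatically supported on $C_{\nu_{[1]}}$, so it must be a scalar multiple of $K_{\nu_{[1]}}$. This immediately shows $\mathcal{L}$ maps $\mathcal{Z}_{n-1}$ into $\mathcal{Z}_n$ (since the $K_\nu$ form a basis of the center, by the Lemma in Section~\ref{subsection:conjugacyclass_sum}), and reduces the problem to computing the single scalar coefficient.

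To pin down the coefficient, I would use a counting/averaging argument. Write $\mathcal{L}(K_\nu)=\kappa\,K_{\nu_{[1]}}$ and determine $\kappa$ by comparing the total number of diagrams (with multiplicity) on each side, or equivalently by extracting the coefficient of one fixed permutation $t\in C_{\nu_{[1]}}$. On the left, $\mathcal{L}(K_\nu)=\sum_{s\in C_\nu}\sum_{i=1}^{n}\mathfrak{l}_i(s)$, which is a sum of $|C_\nu|\cdot n$ diagrams, each a permutation in $C_{\nu_{[1]}}$. On the right, $\kappa\,K_{\nu_{[1]}}$ is a sum in which each of the $|C_{\nu_{[1]}}|$ distinct permutations appears with coefficient $\kappa$. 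Matching total diagram counts gives
\begin{equation}
\kappa\,|C_{\nu_{[1]}}| = n\,|C_\nu|\,.
\end{equation}
Now I invoke the cardinality formula $|C_\mu|=n!/\stab(\mu)$ from Section~\ref{subsection:conjugacyclassSn}: with $|C_\nu|=(n-1)!/\stab(\nu)$ and $|C_{\nu_{[1]}}|=n!/\stab(\nu_{[1]})$, the displayed relation becomes
\begin{equation}
\kappa\,\frac{n!}{\stab(\nu_{[1]})} = n\,\frac{(n-1)!}{\stab(\nu)} = \frac{n!}{\stab(\nu)}\,,
\end{equation}
so that $\kappa=\stab(\nu_{[1]})/\stab(\nu)$, which is precisely the claimed coefficient.

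The one point requiring care — and the main obstacle — is the justification that each distinct permutation of $C_{\nu_{[1]}}$ arises with the \emph{same} multiplicity on the left, so that $\mathcal{L}(K_\nu)$ really is a uniform scalar times $K_{\nu_{[1]}}$ rather than an arbitrary class function. The clean way to secure this is to observe that $K_\nu$ is central, hence conjugation-invariant, and that $\mathcal{L}$ intertwines conjugation by $\mathfrak{S}_{n-1}$ appropriately; more robustly, one argues directly that $\mathcal{L}(K_\nu)$ is invariant under conjugation by all of $\mathfrak{S}_n$. Indeed, since any element of $C_{\nu_{[1]}}$ has a distinguished fixed point coming from the inserted line and $\mathfrak{S}_n$ acts transitively on the choice of that fixed point while $\mathfrak{S}_{n-1}$ permutes the remaining structure, the sum over $s\in C_\nu$ and over insertion nodes $i$ sweeps out a full $\mathfrak{S}_n$-orbit with constant multiplicity. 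Once centrality of $\mathcal{L}(K_\nu)$ in $\mathcal{Z}_n$ is established, the support computation forces proportionality to $K_{\nu_{[1]}}$, and the counting argument above fixes the constant, completing the proof. I expect the bookkeeping of the insertion map $\mathfrak{l}_i$ and the verification of uniform multiplicity to be the only genuinely delicate steps; the arithmetic with $\stab$ is routine given the formula \eqref{eq:stability_index_symmetric_groups}.
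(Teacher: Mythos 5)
Your proof is correct, and it reaches the coefficient by a genuinely different route than the paper. The paper works with the \emph{averaged} class sums $\bar K_\nu=\sum_{\gamma\in\mathfrak{S}_{n-1}}\gamma s\gamma^{-1}$ and proves the exact identity $\mathcal{L}(\bar K_\nu)=\bar K_{\nu_{[1]}}$ in one stroke: it splits $\bar K_{\nu_{[1]}}=\sum_{\sigma\in\mathfrak{S}_n}\sigma s\sigma^{-1}$ according to the value $i=\sigma(n)$, notes that $\sigma(n)=i$ forces $c_i^{-1}\sigma\in\mathfrak{S}_{n-1}$, and thereby matches each fibre with $\sum_{\gamma\in\mathfrak{S}_{n-1}}c_i\gamma s\gamma^{-1}c_i^{-1}=\mathfrak{l}_i(\bar K_\nu)$. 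The normalization $\bar K_\mu=\stab(\mu)K_\mu$ then converts this into the stated formula. The virtue of that device is that the question you correctly single out as the main obstacle --- uniformity of the multiplicities across $C_{\nu_{[1]}}$ --- never arises: the coset decomposition $\mathfrak{S}_n=\bigsqcup_i c_i\,\mathfrak{S}_{n-1}$ does all the work. Your decomposition into support argument, centrality, and a count via total coefficients (the augmentation map) is a perfectly valid alternative, and your arithmetic $\kappa\,|C_{\nu_{[1]}}|=n\,|C_\nu|$ giving $\kappa=\stab(\nu_{[1]})/\stab(\nu)$ is right.

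The one step you should tighten is precisely the uniformity argument, because as written (``distinguished fixed point'', ``sweeps out a full $\mathfrak{S}_n$-orbit with constant multiplicity'') it gestures at the idea without proving it; note in particular that an element of $C_{\nu_{[1]}}$ has no \emph{distinguished} fixed point --- it may have several, and that is exactly why the multiplicity exceeds $1$. The clean fix is a one-line preimage count: for fixed $t\in C_{\nu_{[1]}}$, the equation $\mathfrak{l}_i(s)=t$ forces $t(i)=i$ (since $\mathfrak{l}_i(s)$ fixes $i$) and then determines $s=c_i^{-1}tc_i$, which lies in $\mathfrak{S}_{n-1}$ and has cycle type $\nu$. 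Hence the coefficient of $t$ in $\mathcal{L}(K_\nu)$ equals the number of fixed points of $t$, a class function, so it is constant on $C_{\nu_{[1]}}$ and equal to $m_1(\nu)+1$, where $m_1(\nu)$ is the number of parts of $\nu$ equal to $1$. This simultaneously proves uniformity and gives the scalar directly, consistently with your count since $\stab(\nu_{[1]})/\stab(\nu)=(m_1(\nu)+1)!/m_1(\nu)!=m_1(\nu)+1$ by the formula \eqref{eq:stability_index_symmetric_groups}.
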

Let $\mathcal{A}_\nu$ be the set of Young diagrams obtained by adding a box to $\nu$. The action of $\mathcal{L}$ on the orthogonal basis of $Z_{n-1}$ is described by the following proposition.
\begin{proposition}\label{prop:line_induction_sn}
	Let $\,\nu\vdash n-1 $ and $\mu\vdash n$  , then 
	\begin{equation}
		\mathcal{L}(Z^{\nu})Z^{\mu}=\left\{
		\begin{array}{ll}
			z_{\mu\backslash\nu} \, Z^{\mu}\, \quad \text{ with $\,z_{\mu\backslash\nu}=n\,\dfrac{\,\mathrm{d}_\nu}{\mathrm{d}_\mu}$}\,, &\text{if} \quad \mu\in \mathcal{A}_{\nu}\,,\\
			\quad 0\,\, \quad &\text{otherwise.}
		\end{array}
		\right.
	\end{equation}
Hence, 
\begin{equation}
\mathcal{L}(Z^\nu)=\sum_{\rho\in \mathcal{A_\nu}} z_{\rho\backslash\nu} Z^\rho\,.
\end{equation}
\end{proposition}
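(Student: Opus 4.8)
The plan is to prove Proposition \ref{prop:line_induction_sn} by combining the character formula \eqref{eq:character_central_idempotent_Sn} for the central idempotents with the already-established action of $\mathcal{L}$ on conjugacy class sums, Lemma \ref{lem:induction_center_Sn}. Since both $Z^\nu$ and $Z^\mu$ are central, the product $\mathcal{L}(Z^\nu)Z^\mu$ lies in $\mathcal{Z}_n$; moreover $Z^\mu$ is a primitive central idempotent, so $\mathcal{L}(Z^\nu)Z^\mu$ must be a scalar multiple of $Z^\mu$. Thus the entire content of the statement is the computation of a single scalar $z_{\mu\backslash\nu}$, together with the verification that this scalar vanishes exactly when $\mu\notin\mathcal{A}_\nu$.

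First I would expand $Z^\nu$ in the conjugacy class sum basis using \eqref{eq:character_central_idempotent_Sn}, apply $\mathcal{L}$ term by term via Lemma \ref{lem:induction_center_Sn} (so that each $K_\rho$ with $\rho\vdash n-1$ maps to $\tfrac{\stab(\rho_{[1]})}{\stab(\rho)}K_{\rho_{[1]}}$), and then read off the eigenvalue by which the resulting central element acts on the module $L^\mu$, using Remark \ref{rem:characters}(iii): any central element $z=\sum_\xi z_\xi Z^\xi$ satisfies $Z^\mu z = z_\mu Z^\mu$, and $K_\xi$ acts on $L^\mu$ by the central character $\omega^\mu_\xi$ of \eqref{eq:central_characters}. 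This converts the problem into evaluating a character sum. The cleaner route, which I would prefer, bypasses the explicit class-sum manipulation: since $\mathcal{L}(Z^\nu)Z^\mu = z_{\mu\backslash\nu}Z^\mu$ for some scalar, I can extract $z_{\mu\backslash\nu}$ by taking traces in the right regular representation, or more transparently by recognizing $\mathcal{L}$ as the induction functor $\mathrm{Ind}_{\Sn{n-1}}^{\Sn{n}}$ (up to the normalization built into the averaged line insertion). Under this identification the branching rule for $\Sn{n-1}\subset\Sn{n}$ states precisely that $L^\mu$ appears in $\mathrm{Ind}(L^\nu)$ with multiplicity $1$ iff $\mu\in\mathcal{A}_\nu$ and $0$ otherwise, which accounts for the dichotomy in the statement.

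To pin down the constant $z_{\mu\backslash\nu}=n\,\mathrm{d}_\nu/\mathrm{d}_\mu$, I would compare dimensions: applying $\mathcal{L}(Z^\nu)$ followed by $Z^\mu$ projects onto the $\mu$-isotypic part of the induced module, and the trace of the idempotent-like operator $z_{\mu\backslash\nu}^{-1}\mathcal{L}(Z^\nu)Z^\mu$ must equal $\dim\mathrm{End}(L^\mu)=\mathrm{d}_\mu^2$ in the regular module, while $\mathrm{tr}(\mathcal{L}(Z^\nu)Z^\mu)$ can be computed directly from the fact that $\mathcal{L}$ raises the rank by inserting one line in $n$ positions. Concretely, the factor $n$ arises because $\mathcal{L}$ sums over $n$ insertion positions, and the ratio $\mathrm{d}_\nu/\mathrm{d}_\mu$ is exactly the ratio of multiplicities of $L^\nu$ and $L^\mu$ in the respective regular modules, as recorded in \eqref{eq:irreducible_decomp_Maschke}. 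I would verify the normalization on the conjugacy-class computation: tracking the coefficient of the identity class $K_{(1^n)}$ on both sides, using $\mathrm{d}_\nu=\chi^\nu_{(1^{n-1})}$ and $\mathrm{d}_\mu=\chi^\mu_{(1^n)}$, yields the claimed value after the Frobenius reciprocity identity $\sum_{\mu\in\mathcal{A}_\nu}\mathrm{d}_\mu = n\,\mathrm{d}_\nu$ is taken into account. The final display $\mathcal{L}(Z^\nu)=\sum_{\rho\in\mathcal{A}_\nu}z_{\rho\backslash\nu}Z^\rho$ then follows immediately by summing over $\mu$, since $\sum_{\mu\vdash n}Z^\mu=\id$ and the cross terms $\mathcal{L}(Z^\nu)Z^\mu$ vanish for $\mu\notin\mathcal{A}_\nu$.

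The main obstacle I anticipate is fixing the precise constant rather than merely its proportionality, because the averaged line induction map $\mathcal{L}$ carries an implicit normalization (it sums insertions at all $n$ nodes without dividing) that must be matched carefully against the Frobenius-reciprocity statement of ordinary induction. The cleanest way to dispatch this is to do the character computation honestly for the identity conjugacy class only — where all branching multiplicities are $1$ and the dimensions $\mathrm{d}_\nu,\mathrm{d}_\mu$ enter transparently — rather than attempting the full class-sum expansion, and then to invoke Schur's lemma to conclude that agreement on this one coefficient forces the scalar to be $z_{\mu\backslash\nu}=n\,\mathrm{d}_\nu/\mathrm{d}_\mu$ on the whole module $L^\mu$.
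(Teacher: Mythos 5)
Your overall architecture is reasonable, and your first route (expand $Z^\nu$ in class sums via \eqref{eq:character_central_idempotent_Sn}, push it through Lemma \ref{lem:induction_center_Sn}, then read off the eigenvalue on $L^\mu$ through the central characters \eqref{eq:central_characters}) is exactly the paper's proof, which closes the computation with the branching rule $\chi^\mu_{\rho_{[1]}}=\sum_{\beta\in\mathcal{R}_\mu}\chi^\beta_\rho$ and the first orthogonality relation \eqref{eq:first_ortho_Characters}. The reduction by centrality (note this needs Lemma \ref{lem:induction_center_Sn}, not merely the centrality of $Z^\nu$ in $\C\Sn{n-1}$) and the dichotomy via the branching rule for induction are also sound, granted that you justify identifying $\mathcal{L}$ restricted to $\mathcal{Z}_{n-1}$ with induction of class functions; this does follow from \eqref{eq:def_L_li}, since $\mathcal{L}(z)=\sum_{i=1}^{n}c_i z c_i^{-1}$ where the $c_i$ form a set of coset representatives of $\Sn{n}/\Sn{n-1}$ and $z$ is invariant under $\Sn{n-1}$-conjugation.

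The genuine gap is in your normalization step, which is precisely the point you flag as the main obstacle. Matching the coefficient of the identity in $\mathcal{L}(Z^\nu)=\sum_{\rho\in\mathcal{A}_\nu}z_{\rho\backslash\nu}Z^\rho$ yields the single equation $n^2\mathrm{d}_\nu^2=\sum_{\rho\in\mathcal{A}_\nu}z_{\rho\backslash\nu}\,\mathrm{d}_\rho^2$, which cannot determine the $|\mathcal{A}_\nu|\geqslant 2$ unknowns individually; the identity $\sum_{\rho\in\mathcal{A}_\nu}\mathrm{d}_\rho=n\,\mathrm{d}_\nu$ only shows the claimed values are \emph{consistent} with that equation. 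If instead you compare identity coefficients in the product identity $\mathcal{L}(Z^\nu)Z^\mu=z_{\mu\backslash\nu}Z^\mu$, recall that the identity coefficient of a product $ab$ is $\sum_g a_g b_{g^{-1}}$, a sum over the whole group which equals (up to normalization) $\langle \mathrm{Ind}\,\chi^\nu,\chi^\mu\rangle$; so there is no ``identity class only'' computation available, and evaluating that sum is exactly the orthogonality computation you wanted to bypass. The correct realization of your trace idea is to apply the character $\chi^\mu$ to both sides of $\mathcal{L}(Z^\nu)Z^\mu=z_{\mu\backslash\nu}Z^\mu$: since $Z^\mu$ acts as the identity on $L^\mu$,
\begin{equation*}
	z_{\mu\backslash\nu}\,\mathrm{d}_\mu \;=\; \chi^\mu\bigl(\mathcal{L}(Z^\nu)\bigr)\;=\;\sum_{i=1}^{n}\chi^\mu\bigl(c_i Z^\nu c_i^{-1}\bigr)\;=\;n\,\chi^\mu(Z^\nu)\,,
\end{equation*}
by conjugation-invariance of the trace; and $\chi^\mu(Z^\nu)$ is the trace of the projector onto the $L^\nu$-isotypic part of the restriction of $L^\mu$ to $\C\Sn{n-1}$, which by \eqref{eq:branching_restriction_Sn} equals $\mathrm{d}_\nu$ if $\mu\in\mathcal{A}_\nu$ and $0$ otherwise. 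This one computation delivers the dichotomy and the constant simultaneously, requires neither the orthogonality relations nor the class-sum expansion, and would constitute a legitimately shorter alternative to the paper's argument; as written, however, your proposal does not get there.
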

See appendix \ref{subsec:proof_line_induction_sn} for the proof.\medskip

Let $\mathcal{R}_\nu$ be the set of Young diagrams obtained by removing a box from $\nu$.
\begin{lemma}\label{lem:addable_removable_simple}
	For any $\nu\vdash n$ the multisets $(c_\rho\,, \rho\in \mathcal{A}_\nu)$ and $(c_\rho\,, \rho\in \mathcal{R}_\nu)$ are simple.
\end{lemma}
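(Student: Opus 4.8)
The plan is to reduce the statement to an elementary monotonicity property of the corner boxes of $\nu$. The first step is to observe that the content is additive over boxes: if $\rho$ is obtained from $\nu$ by adjoining a box at position $(i,j)$, then $c_\rho = c_\nu + c(i,j) = c_\nu + (j-i)$, and likewise $c_\rho = c_\nu - (j-i)$ if $\rho$ is obtained by deleting the box at $(i,j)$. Since $c_\nu$ is a fixed additive shift, the multiset $(c_\rho \st \rho\in\mathcal{A}_\nu)$ (respectively $(c_\rho \st \rho\in\mathcal{R}_\nu)$) is simple if and only if the addable (respectively removable) boxes of $\nu$ have pairwise distinct contents. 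This is the claim I would then establish directly.

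Second, I would parametrize the corner boxes by their rows. Writing $\nu=(\nu_1\geqslant\cdots\geqslant\nu_\ell>0)$ with the conventions $\nu_0=+\infty$ and $\nu_{\ell+1}=0$, an addable box sits in row $i$ at column $\nu_i+1$ precisely when $\nu_i<\nu_{i-1}$, in which case its content equals $\nu_i+1-i$; a removable box sits in row $i$ at column $\nu_i$ precisely when $\nu_i>\nu_{i+1}$, with content $\nu_i-i$. The only care needed here is the bookkeeping of the extremal rows (row $1$ is always addable, and the empty row $\ell+1$ supplies the bottom addable box), which the conventions on $\nu_0$ and $\nu_{\ell+1}$ handle uniformly.

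Third, the corner contents are strictly monotone in the row index. For two addable rows $i_1<i_2$ one computes $(\nu_{i_1}+1-i_1)-(\nu_{i_2}+1-i_2)=(\nu_{i_1}-\nu_{i_2})+(i_2-i_1)\geqslant 1>0$, using $\nu_{i_1}\geqslant\nu_{i_2}$ and $i_2>i_1$; the identical computation for two removable rows gives $(\nu_{i_1}-i_1)-(\nu_{i_2}-i_2)\geqslant 1>0$. Hence along increasing rows the corner contents strictly decrease, so they are pairwise distinct, and combined with the first step this proves that both multisets are simple.

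There is essentially no serious obstacle: the content $j-i$ is constant along each diagonal, and the substance of the argument is exactly that no diagonal meets two addable (or two removable) boxes, which the monotonicity makes transparent. The only point requiring attention is the indexing convention for the extremal corner boxes, chosen so that the single formula $\nu_i+1-i$ (respectively $\nu_i-i$) covers all cases without exception.
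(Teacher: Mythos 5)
Your proof is correct and follows essentially the same route as the paper: both arguments compare the contents of the added (or removed) corner boxes in two distinct rows $i<j$ and observe that $(\nu_i-\nu_j)+(j-i)\geqslant 1$ forces strict monotonicity, hence distinctness. Your explicit reduction to box contents via $c_\rho=c_\nu\pm c(i,j)$ and the conventions $\nu_0=+\infty$, $\nu_{\ell+1}=0$ are just a more careful write-up of the same computation the paper performs directly on $c_{\bar\nu}-c_{\tilde\nu}$.
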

\begin{proof}
	For any pair of diagrams  $\bar\nu,\tilde\nu \in \mathcal{A}_\nu$ such that $\bar\nu$ and $\tilde{\nu}$ are obtained by adding a box at row $i$ and $j$ respectively with $i<j$ one has $c_{\bar\nu}-c_{\tilde\nu}=\nu_i-\nu_j +j-i>0$. Hence all pairs of diagrams in $\mathcal{A}_\nu$ have different contents. Similarly, for any pair of diagrams  $\bar\nu,\tilde\nu \in \mathcal{R}_\nu$ such that $\bar\nu$ and $\tilde{\nu}$ are obtained by removing a box at row $i$ and $j$ respectively with $i<j$ one has $c_{\bar\nu}-c_{\tilde\nu}=\nu_j-\nu_i +i-j<0$. Hence all pairs of diagrams in $\mathcal{R}_\nu$ have different contents. 
\end{proof}	

\paragraph{The formula.} With Lemma \ref{lem:ev_Tn}, Lemma \ref{lem:addable_removable_simple} and Proposition \ref{prop:line_induction_sn} at hand, we are now in position to introduce the element $\mathcal{Z}^\mu(\nu)\in \mathcal{Z}_n$ defined for any $\nu \in \mathcal{R}_{\mu}$ by
\begin{equation}\label{eq:gen_central_idempotent_Sn}
	\mathcal{Z}^{\mu}(\nu)=\frac{\mathcal{L}(Z^{\nu})}{z_{\mu\backslash\nu}}\prod_{\begin{array}{c}
			{\scriptstyle \rho\in \mathcal{A}_\nu}\\
			{\scriptstyle \rho\neq\mu}
	\end{array}}\dfrac{c_\rho-T_n}{c_\rho-c_\mu}\,.
\end{equation}   
Due to Lemma \ref{lem:addable_removable_simple} it is clear that $\mathcal{Z}^{\mu}(\nu)$ never diverges. The main result of this chapter is described by the following theorem.
\begin{theorem}\label{Theorem:centralYoung}
For any $\mu\vdash n$ and $\nu \in \mathcal{R}_{\mu}$ the element $\mathcal{Z}^\mu(\nu)\in \mathcal{Z}_n$ coincides with the central Young projector $Z^\mu$.
\end{theorem}
\begin{proof}
Let $\nu\in \mathcal{R}_\mu $. For any $\beta\in \mathcal{A}_\nu$ one as $\nu \in \mathcal{R}_\beta$ and from Lemma \eqref{lem:ev_Tn} and Proposition  \eqref{prop:line_induction_sn}
\begin{equation}
\mathcal{Z}^{\mu}(\nu)Z^\beta=\delta_{\beta\mu} Z^\mu,
\end{equation}
whereas, for any $\beta\notin \mathcal{A}_\nu$, one has $\nu \notin \mathcal{R}_\beta$ and from Proposition \eqref{prop:line_induction_sn}
\begin{equation}
	\mathcal{Z}^{\mu}(\nu)Z^\beta=0,
\end{equation}
and the result follows.
\end{proof}
\section{Construction of the Young seminormal idempotents}\label{sec:seminormalYoung}

\subsection{Bratteli diagrams and branching rules $\C\sn\downarrow\C\Sn{n-1}$.}
The recent approach to the representation theory of $\mathbb{C}\mathfrak{S}_n$ by Okunkov and Vershik \cite{Okounkov1996} is structured around the existence of the natural chain of embedding
\begin{equation}\label{eq:chain_embedding_sn}
	\mathfrak{S}_1\subset \ldots \subset \mathfrak{S}_{n-1}\subset \mathfrak{S}_{n}\,.
\end{equation}
The chain of embedding \eqref{eq:chain_embedding_sn} is by convention realized at each step by the homomorphism $\iota\,:\, \Sn{n-1} \hookrightarrow \sn$ defined by the insertion of a vertical line at the right end of any permutation diagram. We follow this convention and $\Sn{n-1}$ is identify as a subgroup of $\sn$ without further mention of the homomorphism $\iota$. Note that each step $\mathfrak{S}_{k-1}\hookrightarrow \mathfrak{S}_{k}$ of the embedding chain \eqref{eq:chain_embedding_sn} can be realized with the map $\mathfrak{l}_i\,:\,\mathfrak{S}_{k-1} \hookrightarrow \mathfrak{S}_{k}$ \eqref{eq:ci} for any integer $i$ from 1 to $k$.\medskip

\paragraph{Bratteli diagrams.} In order to introduce the notion of Bratteli diagram in all generality, consider a semisimple algebra $\mathcal{A}_n$ over $\C$ such that there exist a unital preserving embedding chain of algebras:
\begin{equation}\label{eq:chain_embedding_An}
	\mathcal{A}_{0}\subset\mathcal{A}_{1}\subset \ldots \subset \mathcal{A}_{n-1}\subset \mathcal{A}_n\,,
\end{equation}
where $\mathcal{A}_{0}\cong\C$. Following \cite{doty2019canonical}, we denote by $\text{Irr}(\mathcal{A}_k)$ the set of all classes of isomorphic irreducible $\mathcal{A}_k$-modules and by $W^\rho$ a representative of the class $\rho\in \text{Irr}(\mathcal{A}_k)$. 
\begin{definition}\label{def:bratteli_diagram}
The Bratteli diagram for the chain \eqref{eq:chain_embedding_An} (or equivalently for the family of algebras $\lbrace\mathcal{A}_k \,\st\, 0\geqslant k\geqslant n\, \rbrace$) is the multigraph whose vertices and edges are such that:\footnote{A Bratteli diagram is often defined for an infinite chain $\mathcal{A}_{0}\subset\mathcal{A}_{1}\subset \ldots \,$ of semisimple algebras. Our definition is similar to the definition in \cite[p. $75$]{ceccherini2010representation} where the chain of algebras is truncated at a given level $n$.} 
\begin{itemize}
	\item The vertices are the disjoint union $\displaystyle\coprod_{k=0}^{n}\Irr(\mathcal{A}_k)=\displaystyle\bigcup_{k=0}^{n}\lbrace \left(\lambda, k\right)\,\st\, \lambda \in \Irr(\mathcal{A}_k) \rbrace$: on each level $k$ place one vertex for every element of $\Irr(\mathcal{A}_k)$.
	\item We put $\tensor{m}{^{\mu}_{\rho}}$ edges $\rho\rightarrow \mu$ from the vertex $\rho\in \Irr(\mathcal{A}_{k})$ to the vertex $\mu\in\Irr(\mathcal{A}_{k-1})$ if $W^\mu$ appears with multiplicity $\tensor{m}{^{\mu}_{\rho}}$ in the restriction of $W^{\rho}$ into irreducible $\mathcal{A}_{k-1}$-module.
\end{itemize}
\end{definition}
\begin{remark}
\begin{itemize}
\item[\it i)] In the first chapter we use the terminology ‘‘ Bratteli diagram associated with $\mathcal{A}_n$" to refer to the Bratteli diagram for the chain \eqref{eq:chain_embedding_An}. This terminology may not be conventional.
\item[\it ii)] When the multiplicities $\tensor{m}{^{\mu}_{\rho}}$ are all bounded by $1$ the family of algebras $\lbrace\mathcal{A}_k\,\st 0\leqslant k\leqslant n \rbrace$ is called multiplicity-free \cite[Def. $1.1$]{doty2019canonical}. In this case the corresponding Bratteli diagram is a directed graph.
\end{itemize}
\end{remark}


\paragraph{Branching rules $\C\sn\downarrow\C\Sn{n-1}$.}
We recall that $\mathcal{R}_{\mu}$ is the set of Young diagrams obtained from $\mu$ by removing a box. The following theorem is a well known result which is proved for example in \cite{Okounkov1996}.
\begin{theorem} Let $L^\mu$ be an irreducible $\C\sn$-module. Under the restriction to the subalgebra $\C\Sn{n-1}\subset\C\Sn{n}$, $L^\mu$ decomposes as the following direct sum of irreducible $\C\Sn{n-1}$-module
	\begin{equation}\label{eq:branching_restriction_Sn}
		L^{\mu}\cong\bigoplus_{\rho\in\mathcal{R}_{\mu}} L^{\rho} \,, \hspace{1cm} \textup{(\textit{upon} $\C\sn\downarrow\C\Sn{n-1}$)}\,.
	\end{equation}  
\end{theorem}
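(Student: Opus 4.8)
The plan is to prove the branching rule $L^\mu \downarrow \C\Sn{n-1} \cong \bigoplus_{\rho\in\mathcal{R}_\mu} L^\rho$ using the character-theoretic and idempotent machinery already developed in the chapter, rather than the Okounkov--Vershik Gelfand--Tsetlin analysis. The cleanest route available from what precedes uses the central Young idempotents together with Lemma~\ref{lem:induction_center_Sn} and Proposition~\ref{prop:line_induction_sn}, which encode exactly the adjacency structure between levels $n-1$ and $n$ in the Bratteli diagram.

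\textbf{Approach via the line induction map and Frobenius reciprocity.} First I would set up the restriction as the adjoint of induction. For $\nu \vdash n-1$ and $\mu\vdash n$, the branching multiplicity of $L^\nu$ in $L^\mu\downarrow\C\Sn{n-1}$ equals, by Frobenius reciprocity, the multiplicity of $L^\mu$ in $\mathrm{Ind}_{\Sn{n-1}}^{\sn} L^\nu$. The key observation is that the averaged line induction map $\mathcal{L}$ of Section~\ref{subsec:inductiveformula} realizes this induction at the level of central idempotents: Proposition~\ref{prop:line_induction_sn} states precisely that $\mathcal{L}(Z^\nu)Z^\mu = z_{\mu\backslash\nu} Z^\mu$ when $\mu\in\mathcal{A}_\nu$ and vanishes otherwise. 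I would therefore argue that the edges of the Bratteli diagram from level $n-1$ to level $n$ are exactly the pairs $(\nu,\mu)$ with $\mu\in\mathcal{A}_\nu$, each with multiplicity one, and then dualize: $\mu\in\mathcal{A}_\nu$ is equivalent to $\nu\in\mathcal{R}_\mu$, so the restriction of $L^\mu$ picks up exactly the summands $L^\rho$ with $\rho\in\mathcal{R}_\mu$, each once.

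\textbf{Key steps in order.} I would proceed as follows. (1) Verify that $\mathcal{L}$ genuinely computes induction up to the scalar $z_{\mu\backslash\nu}$; the nonzero-multiplicity condition $\mu\in\mathcal{A}_\nu$ in Proposition~\ref{prop:line_induction_sn} is the load-bearing fact, telling us which $L^\mu$ appear when inducing $L^\nu$. (2) Translate the multiplicity from the idempotent equation into an $\End$-module statement: since $Z^\nu$ projects onto the $\nu$-isotypic part and $\mathcal{L}(Z^\nu)$ lands in $\mathcal{Z}_n$ with a simple (multiplicity-one) expansion $\sum_{\rho\in\mathcal{A}_\nu} z_{\rho\backslash\nu} Z^\rho$ by Proposition~\ref{prop:line_induction_sn}, the coefficient $1$ on each $\mathcal{A}_\nu$ shows the induced module $\mathrm{Ind}_{\Sn{n-1}}^{\sn}L^\nu$ contains each $L^\mu$, $\mu\in\mathcal{A}_\nu$, with multiplicity exactly one. (3) Apply Frobenius reciprocity to transfer this to the restriction $L^\mu\downarrow\C\Sn{n-1}$, obtaining multiplicity one for $L^\nu$ precisely when $\nu\in\mathcal{R}_\mu$. (4) Finally, confirm the decomposition is complete by a dimension count: $\dim L^\mu = \sum_{\rho\in\mathcal{R}_\mu}\dim L^\rho$, which is the standard additivity of the number of standard tableaux under removing a corner box (Remark~\ref{rem:characters}, item \textit{iv}, via the hook length / tableau count), ensuring no further summands are missed.

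\textbf{The main obstacle.} The delicate point is justifying that the scalar relation $\mathcal{L}(Z^\nu)Z^\mu = z_{\mu\backslash\nu}Z^\mu$ actually certifies a \emph{module} multiplicity of one rather than merely detecting the support. The cleanest way around this is to use the complementary dimension count in step~(4): the support statement from Proposition~\ref{prop:line_induction_sn} gives that the only possible constituents of $L^\mu\downarrow$ are the $L^\rho$ with $\rho\in\mathcal{R}_\mu$, each appearing at least once, and then the equality $\dim L^\mu = \sum_{\rho\in\mathcal{R}_\mu}\dim L^\rho$ forces every multiplicity to be exactly one and accounts for the whole space. Alternatively, one could bypass induction entirely and prove the dimension identity directly from the hook length formula, using the support from Proposition~\ref{prop:line_induction_sn} only to identify the candidate set $\mathcal{R}_\mu$; I would likely present both the idempotent argument and the dimension check so that the multiplicity-free conclusion is airtight.
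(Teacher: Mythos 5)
There is a genuine gap, and it is structural: your argument is circular within this paper. Your load-bearing ingredient is Proposition~\ref{prop:line_induction_sn} (the statement that $\mathcal{L}(Z^\nu)Z^\mu = z_{\mu\backslash\nu}Z^\mu$ exactly when $\mu\in\mathcal{A}_\nu$, and $0$ otherwise). But the paper's proof of that proposition (Appendix~\ref{subsec:proof_line_induction_sn}) explicitly invokes the branching rule \eqref{eq:branching_restriction_Sn} in the form of the character identity
\begin{equation*}
\chi^{\mu}_{\rho_{[1]}}=\sum_{\beta \in \mathcal{R}_\mu}\chi^{\beta}_\rho\,,
\end{equation*}
which is precisely the statement you are trying to prove, read at the level of characters. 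So using Proposition~\ref{prop:line_induction_sn} to derive the branching rule establishes nothing: the entire content of the theorem is smuggled in through the proposition. Note also that the paper itself does not prove this theorem; it cites Okounkov--Vershik \cite{Okounkov1996}, so there is no independent in-paper derivation you can lean on. To rescue your approach you would need a proof of Proposition~\ref{prop:line_induction_sn} (or at least of its support statement) that does not pass through \eqref{eq:branching_restriction_Sn} --- for instance via the Pieri rule for Schur functions, or a direct combinatorial computation of the products $\mathcal{L}(K_\nu)Z^\mu$ using Murnaghan--Nakayama --- and that independent proof is where all the real work of the branching theorem would then live.

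Setting the circularity aside, your logical skeleton is otherwise sound and worth keeping in mind: since $\mathcal{L}(Z^\nu)=\sum_i c_i Z^\nu c_i^{-1}$ is central in $\C\sn$, taking traces on $L^\mu$ gives $\alpha_{\mu\nu}\,\mathrm{d}_\mu = n\,\mathrm{d}_\nu\,[\,L^\mu{\downarrow}:L^\nu\,]$ where $\alpha_{\mu\nu}$ is the scalar by which $\mathcal{L}(Z^\nu)$ acts on $L^\mu$; this is the correct bridge from the idempotent relation to module multiplicities (your step~(2) as written, reading multiplicity off ``the coefficient $1$'', is not quite right --- the coefficients are $z_{\rho\backslash\nu}=n\,\mathrm{d}_\nu/\mathrm{d}_\rho$, not $1$). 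Combined with your dimension count $\mathrm{d}_\mu=\sum_{\rho\in\mathcal{R}_\mu}\mathrm{d}_\rho$ from the standard-tableau recursion, this correctly forces all multiplicities to be one --- but only once the support statement has been established by non-circular means.
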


This theorem allows one to construct the Bratteli diagram for the chain 
\begin{equation}\label{eq:chain_embedding_Csn}
	\C\cong\C\mathfrak{S}_0\subseteq\C\mathfrak{S}_1\subset \ldots \subset \C\mathfrak{S}_{n-1}\subset \C\mathfrak{S}_{n}\,,
\end{equation}
which is call Young's lattice, and implies that the decomposition of any irreducible $\C\Sn{n}$-module into irreducible $\C\Sn{n-1}$-module is multiplicity-free.
\begin{example}[The Bratteli diagrams (Young's lattices) for $\lbrace\C\mathfrak{S}_k\rbrace_{k\leqslant 3}$ and for $\lbrace\C\mathfrak{S}_k\rbrace_{k\leqslant 4}$]
\begin{equation*}
	\includegraphics[scale=0.7]{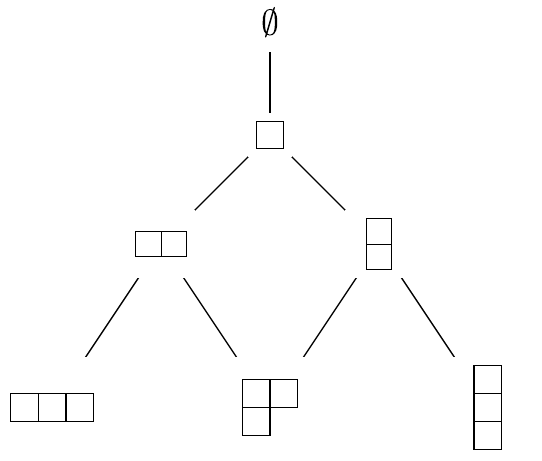}	\hspace{2.2cm}	\includegraphics[scale=0.86]{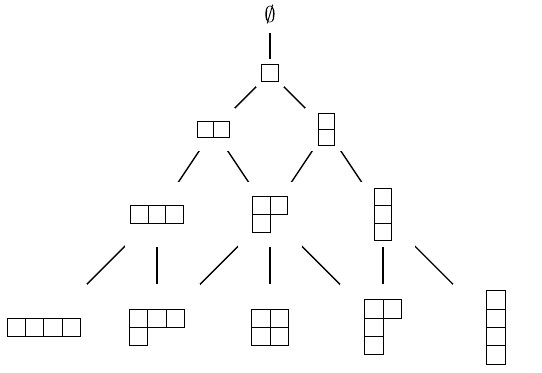}
\end{equation*}
\noindent Each vertex of the lattice at a given horizontal level $k$ is a Young diagram $\mu$ which labels the class of isomorphic irreducible $\C\mathfrak{S}_k$-modules. A Young diagram $\mu$ at a level $k$ is connected to the diagrams $\nu\in \mathcal{R}_{\mu}$ at level $k-1$.
\end{example}
\begin{mathematicas}[\textit{BrauerAlgebra}]
	BratteliDiagramSn[\,$n$\,]\,\\
	\textit{Returns the Bratteli diagram for the chain $\C\mathfrak{S}_1\subset \ldots \subset \C\mathfrak{S}_{n-1}\subset \C\mathfrak{S}_{n}$ as a list of paths.}\\
	
	BratteliDiagramSn[\,$n$\,,\,Output$\rightarrow$Graph\,]\,\\
	\textit{Returns the Bratteli diagram for the chain $\C\mathfrak{S}_0\subset\C\mathfrak{S}_1\subset \ldots \subset \C\mathfrak{S}_{n-1}\subset \C\mathfrak{S}_{n}$\, with head \textup{Graph}.}
\end{mathematicas}

\subsection{Standard tableaux and the Young Seminormal idempotents}\label{subsec:standard_tableau_seminormal_Young}

For any $\mu\vdash n$, let $\Tab(\mu)$ denote the set of \textit{paths} in the Young lattice of $\sn$ starting from $\emptyset$ and terminating at $\mu$. An element of $\Tab(\mu)$ as the form 
\begin{equation}
\tab=\lbrace \emptyset\,,\, \mu_1\,,\, \ldots \,,\, \mu_n\rbrace\,,
\end{equation}
where $\mu_n=\mu$.
As a consequence of the multiplicity-free property of the branching rule \eqref{eq:branching_restriction_Sn} the paths in the Young's lattice parametrize the basis of irreducible $\C\sn$-modules. This can be seen from the construction of the so-called Gelfand-Tsetlin basis for irreducible $\C\sn$-modules~\cite{doty2019canonical},  and the paths in $\Tab(\mu)$ are in to one correspondence with the standard tableaux of $\sn$.\smallskip


It is demonstrated in \cite{doty2019canonical} that the Young seminormal idempotents can be constructed from the formula
\begin{equation}\label{eq:Young_seminormal_idempotents}
	Y^{\,\tab}=\prod_{\mu \in \tab} Z^{\mu} \,,
\end{equation}
where the standard tableau $\tab$ is understood as a path in the Young's lattice of $\sn$. In the following two examples we shall just unpack the above formula for the construction of Young seminormal idempotents of $\C\Sn{3}$ and $\C\Sn{4}$. 

\begin{example}[The Young seminormal idempotents of $\C\mathfrak{S}_3$]
	
	\begin{figure}[H]
		\centering
		\includegraphics[scale=0.75]{fig/BratteliDiagramSn3.pdf}
		\caption{Bratteli diagram for $\lbrace\C\mathfrak{S}_k\rbrace_{k\leqslant 3}$.}
		\label{fig_proj:bratteli_s3}
	\end{figure}
	\noindent The number of elements in an equivalence class $\mu$ is given by the number of paths starting at $\emptyset$ and ending at $\mu$ and correspond to the number of standard tableau of shape $\mu$. In this example the paths are:
	\begin{equation*}
		\tab_1=\lbrace\,\emptyset\,,\, \Yboxdim{7pt}\yng(1),\,\Yboxdim{7pt}\yng(2),\,\Yboxdim{7pt}\yng(3) \,\rbrace\,, \hspace{0.3cm}
		\tab_2=\lbrace \,\emptyset\,,\,\Yboxdim{7pt}\yng(1),\,\Yboxdim{7pt}\yng(2),\,\Yboxdim{7pt}\yng(2,1)\, \rbrace\,, \hspace{0.3cm}
		\tab_3=\lbrace \,\emptyset\,,\,\Yboxdim{7pt}\yng(1),\,\Yboxdim{7pt}\yng(1,1),\,\Yboxdim{7pt}\yng(2,1) \,\rbrace\,, \hspace{0.3cm}
		\tab_4=\lbrace\,\emptyset\,,\, \Yboxdim{7pt}\yng(1),\,\Yboxdim{7pt}\yng(1,1),\,\Yboxdim{7pt}\yng(1,1,1) \,\rbrace\,. 
	\end{equation*}
	There is two paths ending at the vertex $\Yboxdim{6pt}\yng(2,1)$ which indicates that the dimension of the representation $\Yboxdim{6pt}\yng(2,1)$ is $2$.
	These paths are in one to one correspondence with the standard Young tableaux: 
	\begin{equation}\label{eq:Standard_tableau3}
		\tab_1=\Scale[0.8]{\young(123)}\,, \quad \tab_2=\Scale[0.8]{\young(12,3)}\,, \quad \tab_3 = \Scale[0.8]{\young(13,2)}\,, \quad \tab_4= \Scale[0.8]{\young(1,2,3)}\,.
	\end{equation} 
	Applying formula \eqref{eq:Young_seminormal_idempotents} with $Z^\emptyset=1$, the Young seminormal idempotents of $\C\mathfrak{S}_3$ are given by:
	\begin{equation}
		\begin{aligned}
			Y^{\,\tab_1}=Z^{\Yboxdim{3pt}\yng(3)}\, ,  \quad Y^{\,\tab_2}=Z^{\Yboxdim{3pt}\yng(2)}Z^{\Yboxdim{3pt}\yng(2,1)}
			\, , \quad Y^{\,\tab_3}=Z^{\Yboxdim{3pt}\yng(1,1)}Z^{\Yboxdim{3pt}\yng(2,1)}\, , \quad Y^{\,\tab_4}=Z^{\Yboxdim{3pt}\yng(1,1,1)}\,,
		\end{aligned}
	\end{equation}
	where we have use the facts that $Z^{\Yboxdim{3pt}\yng(1)}=1$, and
	\begin{equation}\label{eq:property_CY}
		Z^{(k-1)}Z^{(k)}=Z^{(k)}\, , \hspace{1cm} Z^{(1^{k-1})}Z^{(1^{k})}=Z^{(1^k)}\,.
	\end{equation}
\end{example}

\begin{example}[The Young seminormal idempotents of $\C\mathfrak{S}_4$]
	\begin{figure}[H]
	\centering
		\includegraphics[scale=0.98]{fig/BratteliDiagramSn4.pdf}
	\caption{Bratteli diagram for $\lbrace\C\mathfrak{S}_k\rbrace_{k\leqslant 4}$.}
	\label{fig_proj:bratteli_s4}
	\end{figure}
From the paths of the above Bratteli diagram one obtains the following standard tableaux:
\begin{equation*}
	\arraycolsep=10pt
	\begin{array}{lllll}
		\tab_1=\Scale[0.8]{\young(1234)}\, &\tab_2=\Scale[0.8]{\young(123,4)}\, &\tab_3=\Scale[0.8]{\young(124,3)}\, &\tab_4=\Scale[0.8]{\young(134,2)}\,, &\tab_5=\Scale[0.8]{\young(12,34)}\,,\\[10pt]
		\tab_6=\Scale[0.8]{\young(13,24)}\,, &\tab_7=\Scale[0.8]{\young(12,3,4)}, &\tab_8=\Scale[0.8]{\young(13,2,4)}\, &\tab_9=\Scale[0.8]{\young(14,2,3)}\,, &\tab_{10}=\Scale[0.8]{\young(1,2,3,4)}\,.
	\end{array}
\end{equation*}
Applying formula \eqref{eq:Young_seminormal_idempotents}, the associated Young seminormal idempotents of $\C\Sn{4}$ are given by:
\begin{equation}\label{eq:seminormalYoung4}
	\arraycolsep=10pt
	\begin{array}{llll}
		Y^{\,\tab_1}=Z^{\Yboxdim{3pt}\yng(4)}\,, 
		&Y^{\,\tab_2}=Z^{\Yboxdim{3pt}\yng(3)}Z^{\Yboxdim{3pt}\yng(3,1)}\,, 
		&Y^{\,\tab_3}=Z^{\Yboxdim{3pt}\yng(2)}Z^{\Yboxdim{3pt}\yng(2,1)}Z^{\Yboxdim{3pt}\yng(3,1)}\, &Y^{\,\tab_4}=Z^{\Yboxdim{3pt}\yng(1,1)}Z^{\Yboxdim{3pt}\yng(2,1)}Z^{\Yboxdim{3pt}\yng(3,1)}\,,\\[10pt]
		Y^{\,\tab_5}=Z^{\Yboxdim{3pt}\yng(2)}Z^{\Yboxdim{3pt}\yng(2,1)}Z^{\Yboxdim{3pt}\yng(2,2)}\,,
		&Y^{\,\tab_6}=Z^{\Yboxdim{3pt}\yng(1,1)}Z^{\Yboxdim{3pt}\yng(2,1)}Z^{\Yboxdim{3pt}\yng(2,2)}\,, &Y^{\,\tab_7}=Z^{\Yboxdim{3pt}\yng(2)}Z^{\Yboxdim{3pt}\yng(2,1)}Z^{\Yboxdim{3pt}\yng(2,1,1)}\,, &Y^{\,\tab_8}=Z^{\Yboxdim{3pt}\yng(1,1)}Z^{\Yboxdim{3pt}\yng(2,1)}Z^{\Yboxdim{3pt}\yng(2,1,1)}\,,\\[10pt]
		Y^{\,\tab_9}=Z^{\Yboxdim{3pt}\yng(1,1,1)}Z^{\Yboxdim{3pt}\yng(2,1,1)}\,, 
		&Y^{\,\tab_{10}}=Z^{\Yboxdim{3pt}\yng(1,1,1,1)}\,.
	\end{array}
\end{equation}
\end{example}


\begin{mathematicas}[\textit{Package BrauerAlgebra}]	
	SemiNormalYoungUnit[\,$\tab$\,]\,\\
	\textit{Returns the Young Seminormal idempotent associated with $\tab$\,.}\medskip
	
	YoungSymmetrizer[\,$\tab$\,]\,\\
	\textit{Returns the Young symmetrizer associated with $\tab$\,.}
\end{mathematicas}

Let us mention that their exists alternative formulas for the construction of the Young seminormal idempotents. They can either be constructed from the Jucys-Murphy elements  \cite{jucys1966young,Murphy,molev2006fusion} or from the Young symmetrizers \cite{Thrall_seminormalYoung_1941}. Yet, for the purpose of the irreducible decomposition of a tensor under the action of $\GL(\Dim)$ the formula \eqref{eq:Young_seminormal_idempotents} is very attractive as it offers control and transparency for the decomposition. It allows one the possibility to go slowly, first perform the unique isotypic decomposition and then, if needed, ‘reduce' the isotypic decomposition to an irreducible decomposition by following the path on the Bratteli diagram. This procedure was followed for the irreducible decomposition of the distortion and Riemann tensor with respect to $\GL(\Dim,\R)$ in the sections \ref{sec:Distortion_Decomposition} and \ref{sec:Riemann_Decomposition} of chapter \ref{chap:Irreducible_MAG}. \smallskip 

We conclude this chapter with the following lemma which guarantees that an irreducible decomposition of $V^{\otimes n}$ obtained from the Young seminormal idempotents will be orthogonal with respect the scalar product \eqref{eq:scalar_product} defined in chapter \ref{chap:Irreducible_MAG}. Recall that for any $s\in \mathfrak{S}_n$, the operation $(\cdot)^*$ of flipping a permutation diagram with respect to the middle horizontal line defined in the first chapter correspond to the inverse operation in $\mathfrak{S}_n$: $s^*=s^{\shortminus 1}$. This flip operation is extended to the algebra by linearity and any element $z\in \mathbb{C}\mathfrak{S}_n $ such that 
\begin{equation}\label{eq:flip_inv_Sn_1}
	z=z^{*}
\end{equation}
is said to be \textit{flip invariant}.

\begin{lemma}\label{lem:flip_inv_SN}
	The Young seminormal idempotents are flip invariant elements of $\mathbb{C}\mathfrak{S}_n$.
\end{lemma}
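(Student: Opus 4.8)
The claim is that the Young seminormal idempotents $Y^{\,\tab}$ are flip invariant, i.e. $(Y^{\,\tab})^{*}=Y^{\,\tab}$. The plan is to exploit the key structural formula \eqref{eq:Young_seminormal_idempotents}, namely $Y^{\,\tab}=\prod_{\mu\in\tab}Z^{\mu}$, together with the behaviour of the flip operation $(\cdot)^{*}$ on products and on the central Young idempotents $Z^{\mu}$. First I would record that the flip is an anti-automorphism of $\C\sn$: since $s^{*}=s^{\shortminus 1}$ for any permutation $s$, we have $(z_1 z_2)^{*}=z_2^{*}\,z_1^{*}$ for all $z_1,z_2\in\C\sn$, extended by linearity. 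So applying the flip to \eqref{eq:Young_seminormal_idempotents} reverses the order of the factors: $(Y^{\,\tab})^{*}=\prod_{\mu\in\tab}(Z^{\mu})^{*}$ with the product taken in reversed order along the path.

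The second ingredient is that each central Young idempotent is itself flip invariant, $(Z^{\mu})^{*}=Z^{\mu}$. The cleanest way to see this is from the character formula \eqref{eq:character_central_idempotent_Sn}, $Z^{\mu}=\frac{\mathrm{d}_\mu}{n!}\sum_{\rho\vdash n}\chi^{\mu}_{\rho}K_{\rho}$: the flip sends each conjugacy class sum $K_{\rho}$ to the class sum of inverse elements, and since a permutation and its inverse share the same cycle type we have $(K_{\rho})^{*}=K_{\rho}$, hence $(Z^{\mu})^{*}=Z^{\mu}$ because the coefficients $\tfrac{\mathrm{d}_\mu}{n!}\chi^{\mu}_{\rho}$ are unchanged. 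Once this is established, the reversed-order product equals $\prod_{\mu\in\tab}Z^{\mu}$ up to reordering; because the $Z^{\mu}$ appearing in a single path $\tab$ all lie in the centers $\mathcal{Z}_k$ of the various $\C\Sn{k}$ in the chain \eqref{eq:chain_embedding_Csn} and are realized as honest central elements of $\C\sn$ after the standard embedding, they commute pairwise. Therefore reversing the order of the product changes nothing, and $(Y^{\,\tab})^{*}=\prod_{\mu\in\tab}Z^{\mu}=Y^{\,\tab}$.

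The point that requires the most care, and which I expect to be the main obstacle, is the claim that the factors $Z^{\mu}$ in \eqref{eq:Young_seminormal_idempotents} commute as elements of $\C\sn$. Strictly speaking $Z^{\mu_k}$ is central in $\C\Sn{k}$ but only lies in $\C\Sn{k}\subset\C\sn$, so it is \emph{not} central in the full algebra $\C\sn$; what one genuinely needs is that the particular elements $Z^{\mu_1},\dots,Z^{\mu_n}$ occurring along a fixed path commute with one another. This follows because they all lie in the Gelfand--Tsetlin (maximal commutative) subalgebra of $\C\sn$ generated by the Jucys--Murphy elements --- each $Z^{\mu_k}$ is a polynomial in $j_1,\dots,j_k$ via $T_k=\sum_{i\leqslant k}j_i$ and the inductive construction, and the $j_i$ pairwise commute. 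I would spell this out, invoking the facts recorded after \eqref{eq:Tn_jm_sum} and in the remarks on Jucys--Murphy elements. With pairwise commutativity in hand, the anti-automorphism property of the flip together with $(Z^{\mu})^{*}=Z^{\mu}$ closes the argument.
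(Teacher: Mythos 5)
Your proposal is correct and follows essentially the same route as the paper: flip invariance of each $Z^{\mu}$ (the paper deduces it from $(K_{\rho})^{*}=K_{\rho}$ for all class sums spanning $\mathcal{Z}_k$, you via the character formula, which is the same observation), pairwise commutativity of the factors along a path (the paper via the Gelfand--Tsetlin algebra generated by the centers $\mathcal{Z}_1,\dots,\mathcal{Z}_n$, you via the same algebra presented through the Jucys--Murphy elements), and the anti-involution property of the flip applied to the product formula \eqref{eq:Young_seminormal_idempotents}. Your extra care about $Z^{\mu_k}$ being central only in $\C\Sn{k}$ and not in $\C\sn$ is well placed and matches the role the commutativity statement plays in the paper's argument.
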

\begin{proof}
	Choose any conjugacy class sum $K^{\rho}$ with $\rho\vdash k$ and $1 \leqslant k\leqslant n$. By construction one clearly has $K^{\rho}=(K^{\rho})^{*}$. Hence any element of center of $\mathbb{C}\mathfrak{S}_k$ is flip invariant. In particular any element $Z^{\rho}$ with $\rho\vdash k$ and $1 \leqslant k\leqslant n$ is flip invariant. Also recall that the central Young projectors $Z^\rho$ with $\rho\vdash k$ forms a basis of the center $\mathcal{Z}_k$ of $\mathbb{C}\mathfrak{S}_k$. 
	The Gelfand-Tsetlin algebra of $\mathbb{C}\mathfrak{S}_n$, that is the algebra generated by the centers 
	\begin{equation*}
		\mathcal{Z}_1,\,\mathcal{Z}_2,\ldots,\mathcal{Z}_{n-1},\,\mathcal{Z}_n \,,
	\end{equation*}
	is a maximal commutative subalgebra of $\mathbb{C}\mathfrak{S}_n$~\cite[Proposition 1.1]{Okounkov1996}. Finally, because the flip operation is an anti-involution, the product of commuting flip invariant elements is a flip invariant element. From the formula \eqref{eq:Young_seminormal_idempotents} the Young seminormal idempotents are therefore flip invariant.
\end{proof}

	\chapter{The projection operators for $\Or(\Dim,\mathbb{C})$ irreducible decomposition of tensors}\label{chap:projectors_O}
\vspace{0.5 cm}

\section{Introduction}\label{sec:Introduction_chapter4}

Consider the Brauer algebra $B_n(\Dim)$ over the complex field $\mathbb{C}$ with parameter $\Dim$. Unless otherwise stated, $\Dim$ is an integer corresponding to the dimension of a vector space $V$ with a symmetric non-degenerate metric, where the orthogonal group $\Or(\Dim,\C)$ acts. For the sake of brevity, in what follows we will omit the parameter $\Dim$ when referring to the Brauer algebra.\medskip

We aim at constructing the elements $P^{\lambda}_n \in \bn$ which perform the isotypic decomposition of $V^{\otimes n}$ with respect to $\Or(\Dim,\C)$
\begin{equation}\label{eq:isotypic_decomposition_O_2}
	V^{\otimes n}=\bigoplus_{\lambda\in \Lambda_n(\Dim)}(V^{\otimes n})\cdot P^{\lambda}_n\,,
\end{equation}
such that
\begin{equation}\label{eq:properties_isotypic_proj}
	\hspace{0.5cm} (V^{\otimes n})\cdot P^{\lambda}_n=\left(D^{\lambda}\right)^{\oplus m_\lambda}\,,\hspace{1cm}(V^{\otimes n})\cdot (P^{\lambda}_n P^{\lambda}_n)=(V^{\otimes n})\cdot P^{\lambda}_n\,.
\end{equation}
Recall the definition of $\Lambda_n(\Dim)$ in \eqref{eq:Lambda_d}. The elements $P^{\lambda}_n$ act by the identity on $D^\lambda$ and annihilates any irreducible module over $\Or(\Dim,\C)$ not isomorphic to $D^\lambda$.\medskip

As already mentioned in chapter \ref{chap:Irreducible_MAG}, $\bn$ acts on $V^{\otimes n}$ such that the latter decomposes as a direct sum of irreducible $\bn$-modules:
\begin{equation}\label{eq:decomposition_V_bn_2}
	V^{\otimes n}\cong\bigoplus_{\lambda\vdash \Lambda_n(\Dim)} \left(M^{\lambda}_n\right)^{\oplus g_{\lambda}}\,,
\end{equation}
where $g_\lambda=\dim(D^\lambda)$. The main idea is that by Schur-Weyl duality one has 
\begin{equation}
\left(D^{\lambda}\right)^{\oplus m_\lambda}=\left(M^{\lambda}_n\right)^{\oplus g_\lambda}\,,
\end{equation}
which means that projecting to an isotypic component of $V^{\otimes n}$ under $\Or(\Dim,\C)$ action amounts to projecting to the corresponding isotypic component of $V^{\otimes n}$ under $\bn$ action. Hence,
the sought elements $P^{\lambda}_n$ act by the identity on $M^\lambda_n$ and annihilate any irreducible $\bn$-module not isomorphic to $M^\lambda_n$.\medskip

Let us stress that by definition, the image of each $P^{\lambda}_n$ in $\End(V^{\otimes n})$ is a projector, while they are not necessarily projectors as elements of $\bn$. Because we are interested essentially in their action on $V^{\otimes n}$ we will still refer to $P^\lambda_n$ as projectors. Besides, when $\Dim\geqslant n-1$, $\bn$ is semisimple \cite{Brown_semisimplicity_1956,Rui_Br_semisimple} and each $P^{\lambda}_n$ is a central idempotent in $\bn$ \cite{doty2019canonical}. In this case, they enjoy the following properties:\medskip 

\begin{equation}\label{eq:properties_central_idempotents}
	\begin{aligned}
		&P^{\lambda}_{n} \ v= v \ P^{\lambda}_{n} \quad \text{for any } \, v\in B_n\,, \quad &(\textit{central}\,) \\
		&P^{\beta}_{n}P^{\lambda}_{n}=0  \quad \text{for any } \beta\neq\lambda\,, \quad &(\textit{pairwise orthogonal}\,) \\
		&P^{\lambda}_{n}P^{\lambda}_{n}=P^{\lambda}_{n}\,, \quad &(\textit{idempotent \text{\slash} projector}\,)\\
		&\id_n=\sum_{\lambda\in \Lambda_n}P_{n}^\lambda\,. \quad &(\textit{partition of unity}\,)\\
	\end{aligned}
\end{equation}
where the set  $\Lambda_n$ corresponds to $\Lambda_n(\Dim)$ where the restrictions on the number of rows in the first two columns of the Young diagrams are dropped. \medskip

The construction proposed here resembles the one presented in the previous chapter for the central idempotent of $\mathbb{C}\mathfrak{S}_n$. It relies on a modified Lagrange-interpolation-type formula involving a particular element $A_{n}\in B_n$ which commutes with the symmetric group (that is, $A_{n}\in  \cn$, see chapter \ref{chap:cn} for a study of $\cn$). Remarkably $A_n$  is diagonalizable on each irreducible components $M^\lambda_n$, with eigenvalues parametrized by pairs of Young diagrams and given by: 
\medskip
\begin{equation}
	a_{\mu\backslash\lambda}=(\Dim-1)f_\lambda\, + c_\mu-c_\lambda\,, \hspace{0.5cm} \text{with}\hspace{0.5cm} f_\lambda=\frac{n-|\lambda|}{2}\,.
\end{equation}
The other ingredient of the construction is an induction map $\mathcal{A}\,:\, \Bn{n-2}\to\bn$ defined in section \ref{subsection:average_arc}.
The main results of this chapter (Theorem \eqref{theo:non_inductive_traceless_projectors} and Theorem \eqref{theo:isotypic_projectors_2}) can be presented as:
\begin{itemize}
	\item[\textit{i)}]If $f_\lambda=0$, that is $\lambda\vdash n$\,,
	\begin{equation}\label{eq:non_inductive_traceless_projectors_0}
		\hspace{-2.5cm}\text{then}\hspace{2cm}P^{\lambda}_{n}=Z^\lambda \prod_{\begin{array}{c}
				{\scriptstyle \beta \in \overbar{\Lambda}_{\lambda}(\Dim)}\\
		\end{array}}\left(1 - \frac{A_n}{a_{\lambda\backslash\beta}}\,\right)\,.
	\end{equation}
	\item[\textit{ii)}]If $f_\lambda\geqslant 1$\,,
	\begin{equation}\label{eq:main_res_f_traceless_central_idempotent}
		\hspace{0.5cm}\text{then}\hspace{2cm}	P_n^{\lambda}=\sum_{\mu\,\in \overbar{\mathrm{M}}_{n,\lambda}(\Dim)}P_n^{\mu\backslash\lambda}\,,\hspace{0.5cm} \text{with} \hspace{0.5cm} P_n^{\mu\backslash\lambda}=\,\frac{\mathcal{A}(P^{\lambda}_{n-2})\, Z^\mu}{a_{\mu\backslash\lambda}}\,. \hspace{2cm} 
	\end{equation}
\end{itemize}
The set of Young diagrams $ \overbar{\Lambda}_{\lambda}(\Dim)$ and $\overbar{\mathrm{M}}_{n,\lambda}(\Dim)$ are defined in \eqref{eq:subsets_M_L}.\medskip

As direct consequence of \eqref{eq:irreducible_trace_decomposition} we construct the elements $P_n^{(f)}$ which by their action on $V^{\otimes n}$ realize the trace decomposition of Weyl \eqref{eq:trace_decomposition}:
\begin{equation}\label{eq:proj_trace_decomposition}
P_n^{(f)}=\displaystyle{\sum_{\begin{array}{c}
			{\scriptstyle \lambda\in \Lambda_n(\Dim)}\\
			{\scriptstyle |\lambda|=n-2f}
\end{array}}} P^{\lambda}_{n}\,.
\end{equation}
\section{The traceless projectors $\lbrace P^{\lambda}_n\,,\lambda\vdash n \rbrace$}\label{sec:central_traceless}

\subsection{The element $X_n$ and Jucys-Murphy elements}\label{subsec:JM_Xn}

In what follows we make use of the results of \cite{Nazarov}. Consider the following set of pairwise-commuting elements known as the {\it Jucys-Murphy elements} $J_k$ ($k = 1,\dots, n$): 
\begin{equation}\label{eq:JM_Brauer}
\quad J_{k} = \sum_{j=1}^{k-1} (s_{jk} - d_{jk})\quad \text{for}\quad k\geqslant 2\,,
\end{equation}
with 
\begin{equation}\label{eq:sij_dij}
s_{ij}=\raisebox{-.3\height}{\includegraphics[scale=0.65]{fig/Tn.pdf}}\,,\hspace{0.5cm}	d_{ij}=\raisebox{-.3\height}{\includegraphics[scale=0.7]{fig/An.pdf}}\,,
\end{equation}
and we define $J_1=0$. 
In \eqref{eq:JM_Brauer} we follow the definition in \cite{doty2019canonical} where an unnecessary shift of $(\Dim-1)/2$ as been removed compared to the definition in \cite{Nazarov}. The following proposition is contained in Corollary $2.4$ and in the proof of Theorem $2.6$ in \cite{Nazarov}.
\begin{proposition}
The element 
\begin{equation}\label{eq:def_Xn}
	X_n = \sum_{k = 1}^{n} J_k\,,
\end{equation}
is central in $B_n$  and acts on an irreducible $B_n$-module $M_n^\lambda$ as the scalar 
\begin{equation}
	x_\lambda=(1-\Dim)f_\lambda+c_\lambda\,, \hspace{0.5cm} \text{with}\hspace{0.5cm} f_\lambda=\dfrac{n-|\lambda|}{2}\,.
\end{equation}
\end{proposition}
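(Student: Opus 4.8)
The plan is to establish the claim in two stages: first show that $X_n$ is central in $B_n$, and then compute its scalar action on each irreducible module $M_n^\lambda$ by reducing to the action of the Jucys--Murphy elements $J_k$. For centrality, I would invoke the known structure of the Jucys--Murphy elements of the Brauer algebra established in \cite{Nazarov}. The key point is that the $J_k$ are pairwise commuting, and that their symmetric combination $\sum_k J_k$ lies in the center. Rather than verify centrality by brute force, I would argue that $X_n$ commutes with each generator of $B_n$: the transpositions $s_{i,i+1}$ and the contraction elements $d_{i,i+1}$. For the transpositions, conjugation by $s_{i,i+1}$ permutes the index set $\{1,\dots,n\}$ and hence permutes the summands $J_k$ only up to the lower-order correction terms; summing over all $k$ makes the total invariant. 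The interaction with the arc generators $d_{i,i+1}$ is the delicate part and is precisely what the Nazarov analysis handles through the relations between the $s_{jk}$ and $d_{jk}$. Since this is exactly Corollary~2.4 of \cite{Nazarov}, I would cite it rather than reprove it.

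For the eigenvalue computation, the strategy is to exploit the branching structure along the chain $B_0\subset B_1\subset\dots\subset B_n$ together with the fact that each $J_k$ acts by a scalar on the one-dimensional weight spaces indexed by oscillating tableaux (paths in the Bratteli diagram for $\bn$). Each step $k-1\to k$ in a path either adds a box or removes a box from the current Young diagram, and the eigenvalue of $J_k$ on that weight space depends on whether a box is added or removed and on the content of that box. Concretely, adding a box of content $c$ contributes $c$, while removing a box of content $c$ contributes $(1-\Dim)-c$, reflecting the shift by $(\Dim-1)$ that the arc contractions introduce. Summing $J_k$ over $k=1,\dots,n$ along any path terminating at $\lambda$ (with $f_\lambda$ box-removals, hence $f_\lambda$ arcs), the content contributions of added and removed boxes must telescope so that only the net content $c_\lambda$ survives among the permutation-type terms, while each of the $f_\lambda$ removal steps contributes a factor $(1-\Dim)$. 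This yields
\begin{equation*}
	x_\lambda = (1-\Dim)f_\lambda + c_\lambda\,.
\end{equation*}

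The main obstacle I anticipate is the telescoping argument for the content terms. One must verify that, although individual $J_k$ have eigenvalues depending on the full path (the intermediate diagrams), the sum $X_n = \sum_k J_k$ depends only on the terminal diagram $\lambda$ and the number $f_\lambda$ of arcs. This is forced by centrality: since $X_n$ lies in the center, Schur's lemma guarantees it acts as a single scalar on all of $M_n^\lambda$, independent of which path (basis vector) one evaluates on. Thus I would first establish centrality, then evaluate $X_n$ on a single conveniently chosen path to $\lambda$ --- for instance the path that adds $|\lambda|$ boxes and then performs $f_\lambda$ add/remove cancellations in the simplest order --- and read off the scalar. Centrality then propagates this value to the whole module. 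As with the symmetric-group case treated via Lemma~\ref{lem:ev_Tn}, the cleanest route is to cite the precise statements of \cite{Nazarov} (Corollary~2.4 and the proof of Theorem~2.6) for both centrality and the eigenvalue, since the formula $x_\lambda=(1-\Dim)f_\lambda+c_\lambda$ is exactly what is extracted there after removing the $(\Dim-1)/2$ shift, as noted following \eqref{eq:JM_Brauer}.
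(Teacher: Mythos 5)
Your proposal is correct and matches the paper's treatment: the paper gives no independent proof but attributes the proposition to Corollary~2.4 and the proof of Theorem~2.6 of Nazarov, which is exactly the citation you settle on for both centrality and the eigenvalue. Your supplementary sketch --- $J_k$ acting on the basis labelled by oscillating tableaux with eigenvalue $c$ when step $k$ adds a box of content $c$ and $(1-\Dim)-c$ when it removes one, so that the contents telescope to $c_\lambda$ while each of the $f_\lambda$ removal steps contributes $(1-\Dim)$ --- is a faithful account of the argument underlying that citation.
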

We recall that $c_\lambda$ is the content \eqref{eq:content_YD} of the Young diagram $\lambda$.
\begin{remark}
	\begin{itemize}
\item[\it i)] For the purpose of applications to tensor calculus one make use of Schur-Weyl duality, so that the previous lemma implies: 
		\begin{equation}
			\text{for any $T\in V^{\otimes n}$ one has}	\hspace{0.5cm} T\cdot (P^\lambda_n X_n)=x_\lambda \left(T \cdot P^\lambda_n \right) .
		\end{equation}
		Equivalently,
		\begin{equation}
			\text{for any $T\in D^\lambda$ one has}	\hspace{0.5cm} T\cdot X_n=x_\lambda T\,.
		\end{equation}
\item[\it ii)] In the semisimple regime of Brauer the previous Lemma implies
\begin{equation}
	X_n P^\lambda_n= x_\lambda P^\lambda_n\,, \hspace{1cm} \text{for all $\lambda\in \Lambda_n(\Dim)$}.
\end{equation}
The same relation holds when replacing central idempotents by primitive idempotents. \medskip
\end{itemize}
\end{remark} 

\subsection{Restriction of $\bn$ to $\C\sn$ and the element $A_n$ of $\bn$} 

\medskip
\paragraph{Branching rules $\bn\downarrow\C\sn $.}
Recall that any irreducible tensor representation of $\GL(\Dim)$  decomposes into a direct sum of irreducible tensor representations of $\Or(\Dim,\C)$\cite{Koike_Terada_banching,Enright_Willenbring_branching,Kwon_branching,Jang_Kwon_branching}:
\begin{equation}\label{eq:branching_GL_O_2}
	V^{\mu} \cong \bigoplus_{\begin{array}{c}
			{\scriptstyle \text{even }\nu \, \subset \mu}\\
	\end{array}}\big(D^{\lambda}\big)^{\oplus \tensor{\overline{C}}{^{\,\mu}_\lambda_\nu}(\Dim)}\quad\text{(upon $\GL(\Dim,\mathbb{C})\downarrow \Or(\Dim,\mathbb{C})$)}\,.
\end{equation}
One the other hand, any irreducible $B_{n}$-modules decompose into a direct sum of irreducible $\mathbb{C}\Sn{n}$-modules upon restriction to the subalgebra $\mathbb{C}\Sn{n} \subset B_{n}$. The branching rules $\bn\downarrow\C\sn $ for the irreducible $B_{n}$-modules $M^{\lambda}_n$ appearing in $V^{\otimes n}$ can be obtained via comparing the two multiplicity-free decompositions (see appendix \ref{subsec:Double Centralizer Theorem} for more detail on Schur-Weyl duality)

\begin{equation}
V^{\otimes n}\cong\bigoplus_{\begin{array}{c}
		{\scriptstyle \mu \, \in \mathcal{P}_n(\Dim)}\\
\end{array}}V^{\mu}\otimes L^\mu\,,
\hspace{1cm}V^{\otimes n}\cong\bigoplus_{\begin{array}{c}
		{\scriptstyle \lambda \, \in \Lambda_n(\Dim)}\\
\end{array}}D^{\lambda}\otimes M^\lambda_n\,,
\end{equation}
of the same space $V^{\otimes n}$. This was done in \cite[Lemma 4.2]{Gavarini_LitRich} and one has 
\begin{equation}\label{eq:branching_Bn_Sn}
M^{\lambda}_n \cong \bigoplus_{\begin{array}{c}
				{\scriptstyle \text{even }\nu \, \subset \mu}\\
		\end{array}} \big(L^{\mu}\big)^{\oplus \tensor{\overline{C}}{^{\,\mu}_\lambda_\nu}(\Dim)}\quad\text{(upon $B_{n}(\Dim)\downarrow \mathbb{C}\Sn{n}$)}\,.
\end{equation}
Obtaining the coefficient $\tensor{\overline{C}}{^{\,\mu}_\lambda_\nu}(\Dim)$ when $\Dim<n-1$ is a difficult problem. It was solved recently in \cite[Theorem 4.17 and Remark 4.19]{Jang_Kwon_branching} with a sophisticated combinatorial approach.\medskip

We recall that when $\mu\in\Lambda_{n}(\Dim)$ that is $\mu\vdash n$ and $\mu_1^{\prime}+\mu_2^{\prime}\leqslant\Dim$ one has 
\begin{equation}\label{eq:Littlewood_restriction_rule}
	\tensor{\overline{C}}{^{\,\mu}_\lambda_\nu}(\Dim)=  \tensor{C}{^{\,\mu}_\lambda_\nu}\,, \hspace{1cm} \text{(the Littlewood-Richardson coefficients)}
\end{equation}
and \eqref{eq:branching_GL_O_2} is referred to as the \textit{Littlewood's restriction rules}. Two combinatorial methods for the computation of the Littlewood-Richardson coefficients are presented in the appendix \ref{subsec:Littlewood_Richardson_rules}.
\begin{remark}\label{rem:stable_case}
	The regime $\Dim\geqslant n$ is called the \textit{stable} case. There, all integer partitions $\mu\vdash n$ are both in $\Par_n(\Dim)$ and in $\Lambda_n(\Dim)$ so that the Littlewood's restriction rule applies. 
\end{remark}

\noindent The following lemma is contained in Lemma $3.2.$ of \cite{bulgakova2022construction}.
\begin{lemma}\label{lem:Branching_0_GL_Bn_semisimple}
	Let $\Dim=n-1$ \textup{(}which correspond to the limiting case of the semisimple regime of $\bn$\textup{)}.
	\begin{equation}
		\hspace{-0.5cm}\text{Then,\hspace{0.5cm} for any  $\mu\vdash n $ and $\lambda\neq\mu$}\hspace{0.5cm}\text{holds}\hspace{0.5cm}\tensor{\overline{C}}{^{\,\mu}_\lambda_\nu}(\Dim)=  \tensor{C}{^{\,\mu}_\lambda_\nu}\,\,.
	\end{equation}
\end{lemma}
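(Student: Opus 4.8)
The plan is to reduce this branching statement to a question purely about Littlewood--Richardson coefficients and about which representations actually occur. Recall that the general branching rule \eqref{eq:branching_Bn_Sn} expresses the coefficient as $\tensor{\overline{C}}{^{\,\mu}_\lambda_\nu}(\Dim)$, which agrees with the Littlewood--Richardson coefficient $\tensor{C}{^{\,\mu}_\lambda_\nu}$ precisely in the range covered by Littlewood's restriction rule \eqref{eq:Littlewood_restriction_rule}, namely when $\mu \in \Lambda_n(\Dim)$. So the heart of the matter is: for $\Dim = n-1$, $\mu \vdash n$, and $\lambda \neq \mu$ with $\lambda \subset \mu$ (and $|\mu|-|\lambda|$ even, so that $\nu$ with $|\nu| = |\mu|-|\lambda|$ can be an even diagram contained in $\mu$), show that the equality to $\tensor{C}{^{\,\mu}_\lambda_\nu}$ persists even though $\mu$ may sit at the boundary of the stable range.

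First I would set up the dimensional bookkeeping. Since $\mu \vdash n$ and $\Dim = n-1$, the constraint $\mu_1^{\prime} + \mu_2^{\prime} \leqslant \Dim = n-1$ can fail only for very special $\mu$: the sum of the lengths of the first two columns is at most $n$, with equality exactly when $\mu$ has at most two columns, i.e. $\mu_1 \leqslant 2$. So the only diagrams $\mu \vdash n$ with $\mu \notin \Lambda_n(n-1)$ are those with $\mu_1^{\prime}+\mu_2^{\prime} = n$, meaning every box lies in the first two columns and the diagram is full (a single column $(1^n)$ or a near-rectangle of width two). I would enumerate these exceptional shapes explicitly. For all other $\mu$ we already have $\mu \in \Lambda_n(n-1)$ and Littlewood's restriction rule \eqref{eq:Littlewood_restriction_rule} gives the claim immediately. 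Thus the proof reduces to checking the handful of exceptional two-column shapes.

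For the exceptional shapes the strategy is to show that the correction terms distinguishing $\tensor{\overline{C}}{^{\,\mu}_\lambda_\nu}(\Dim)$ from $\tensor{C}{^{\,\mu}_\lambda_\nu}$ vanish under the hypothesis $\lambda \neq \mu$. The modification to Littlewood's rule in the non-stable range involves subtracting contributions indexed by diagrams whose first two columns are too long to fit in dimension $\Dim$; with $\Dim = n-1$ these ``overflow'' diagrams would have to have $|\lambda| = n$ as well (since any proper even $\nu \subsetneq \mu$ forces $|\lambda| < n$, leaving room), and I expect that for $\lambda \neq \mu$ with $|\lambda| \leqslant n-2$ no such overflow term arises. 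Concretely I would invoke the precise combinatorial description of \cite{Jang_Kwon_branching}[Theorem 4.17, Remark 4.19] to verify that the only place the stable and non-stable coefficients can differ is at $\lambda = \mu$ (the $f_\lambda = 0$, $\nu = \emptyset$ term), which is exactly the case excluded by hypothesis.

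The main obstacle, and the step requiring the most care, will be the last one: pinning down exactly when the non-stable correction in the branching coefficient is nonzero at $\Dim = n-1$. I would expect the cleanest route to be not a direct attack on \cite{Jang_Kwon_branching} but rather a comparison-of-dimensions argument. Since $\bn$ is semisimple at $\Dim = n-1$ and $V^{\otimes n}$ decomposes compatibly under both $\GL(\Dim,\C)$ and $\Or(\Dim,\C)$ via Schur--Weyl duality, I would compute $\dim M^\lambda_n$ for $\lambda \vdash n$ and compare the two sides of \eqref{eq:branching_Bn_Sn} summed against the known stable multiplicities; any discrepancy would have to be concentrated on the diagonal term $\lambda = \mu$. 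Establishing that the off-diagonal terms carry no correction is the crux, and I would treat the two-column exceptional shapes as the tightest test cases, verifying them by explicit content and hook-length computations if the general combinatorial argument proves delicate.
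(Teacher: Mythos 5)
Your skeleton is the right one, and its first two steps are correct: at $\Dim=n-1$ the stability constraint $\mu_1'+\mu_2'\leqslant\Dim$ fails exactly for the two-column shapes $\mu=(2^k,1^{n-2k})$, and every off-diagonal label occurring in Littlewood's sum is automatically standard, because $\lambda\neq\mu$ with $C^{\mu}_{\lambda\nu}\neq0$ and $\nu\neq\emptyset$ even gives $\lambda_1'+\lambda_2'\leqslant|\lambda|\leqslant n-2<\Dim$. (For the record, the paper proves nothing here itself: it refers to Lemma~3.2 of \cite{bulgakova2022construction}.)

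The genuine gap is the step you yourself call the crux, and neither fallback you offer closes it. Passing from the stable to the non-stable branching is not a matter of deleting the non-standard terms of Littlewood's sum: by the modification rules (King, Koike--Terada), a non-standard character $[\lambda]$ equals $0$ \emph{or} $\pm[\tilde\lambda]$ for some standard $\tilde\lambda$, and in the second case it shifts the multiplicity of a \emph{standard} label. Your executed reasoning cannot distinguish these two outcomes --- indeed it applies verbatim at $\Dim=n-2$, where the lemma is false. The paper's own Example~\ref{ex:Branching_O_GL}\,(ii) is the counterexample: for $n=4$, $\Dim=2$, $\mu=(2,2)$, the unique non-standard label in the sum is again the diagonal one $\lambda=\mu$, and yet $\overline{C}^{\,\mu}_{\lambda}(2)=0\neq1=C^{\,\mu}_{\lambda}$ for $\lambda=(2)$, precisely because $[(2,2)]$ modifies to $-[(2)]$ over $\Or(2)$. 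So ``any discrepancy is concentrated on the diagonal term'' is exactly the assertion that fails one dimension lower; at $\Dim=n-1$ it must be proved, not assumed. The Brauer-side dimension comparison cannot supply this either: for each fixed $\lambda$ it yields a single linear relation mixing signed, unknown corrections from all exceptional $\mu$, which does not determine individual coefficients. What closes the gap is one short computation: for $\mu=(2^k,1^{n-2k})$ at $\Dim=n-1$, the modification rule removes a boundary rim hook of length $2\mu_1'-\Dim=n-2k+1$ starting at the foot of the first column; that hook is exactly the column-one segment in rows $k,\dots,n-k$, and its removal leaves row $k$ with a box only in the second column --- not a partition --- so $[\mu]$ modifies to zero and no off-diagonal coefficient is touched. (The case $\mu=(1^n)$ is vacuous, since there the only nonzero $C^{\mu}_{\lambda\nu}$ with $\nu$ even has $\lambda=\mu$.) With that computation inserted, your argument becomes a complete proof.
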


Let $\lambda\in \Lambda_n(\Dim)$ with $|\lambda|<n$. We denote by $\mathrm{M}_{n,\lambda}(\Dim)$ the set of Young diagrams $\mu$ parametrizing the pairwise inequivalent irreducible $\C\sn$-modules which appear in the irreducible $\bn$-module $M^\lambda_n$ upon restriction to $\C\sn$:
\begin{equation}\label{eq:set_Mnlambda}
	\mathrm{M}_{n,\lambda}(\Dim)=\lbrace \mu \in \Par_n(\Dim)\, \st \, \tensor{\overline{C}}{^{\,\mu}_\nu_\lambda}(\Dim)\neq 0 \text{ \, for some $\nu\neq \emptyset$ even} \rbrace\,.
\end{equation}

Conversely, for $L^\mu$ a given simple $\C\sn$-module, we define $\Lambda_\mu(\Dim)\subset \Lambda_n(\Dim)$ as the set of young diagrams $\lambda$ parametrizing the pairwise inequivalent simple $\bn$-modules $M^\lambda_n$ with $|\lambda|<n$ where $L^\mu$ is present upon restriction to $\C\sn$: 
\begin{equation}
	\Lambda_\mu(\Dim)=\lbrace \lambda \in \Lambda_{|\mu|}(\Dim)  \,\st\, \tensor{\overline{C}}{^{\,\mu}_\nu_\lambda}(\Dim)\neq 0 \text{ \, for some $\nu\neq \emptyset$ even} \rbrace\,.
\end{equation}


\paragraph{$A_n$ : the principal building block of the construction.}

Recalling that $\mathcal{C}_n$ denotes the centralizer of $\sn$ in $\bn$ we define the element $A_{n}\in \cn \subset B_n$ as the sum of flip invariant one arc diagrams: 
\begin{equation}\label{eq:master_class}
	A_n =\sum_{1\leqslant i<j\leqslant n} \, d_{ij}\,\,,
\end{equation}
where we recall that $d_{ij}$ is given by
\begin{equation}\label{eq:dij}
	d_{ij}=\raisebox{-.3\height}{\includegraphics[scale=0.7]{fig/An.pdf}}\,.
\end{equation}
From the definition \eqref{eq:def_Xn} of the central element $X_n$ one has
\begin{equation}\label{eq:X_nA_n}
		X_n = T_n - A_n \,,
\end{equation}
where $T_n$ were introduced in chapter \ref{chap:projectors_GL} (see \eqref{eq:Tn_def}).
\begin{lemma}[{\cite[Lem. $3.3$]{bulgakova2022construction}}]\label{lem:A_block_diagonal}
	Let $M^\lambda_n$ be a simple $B_n$-module present in $V^{\otimes n}$, and let a simple $\mathbb{C}\Sn{n}$-module $L^\mu$ occurs in the decomposition of $M^\lambda_n$ into irreducible summands upon restriction to $\mathbb{C}\Sn{n}$. Then 
	\begin{equation}\label{eq:eigenvalue_A}
		\text{for any}\quad v\in L^{\mu}\,, \quad v\cdot A_n = a_{\mu\backslash\lambda}\,v 
		\hspace{0.3cm}\text{with}\quad a_{\mu\backslash\lambda} = (\Dim-1)f_\lambda\, + c(\mu)-c(\lambda)\,.
	\end{equation}
\end{lemma}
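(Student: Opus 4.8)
The statement to prove is Lemma \ref{lem:A_block_diagonal}: that $A_n$ acts on the $L^\mu$-component of a simple $B_n$-module $M^\lambda_n$ by the scalar $a_{\mu\backslash\lambda} = (\Dim-1)f_\lambda + c(\mu)-c(\lambda)$.

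The plan is to reduce the claim to already-established facts about $T_n$ and $X_n$ via the key identity \eqref{eq:X_nA_n}, namely $X_n = T_n - A_n$. First I would recall that $X_n$ is central in $B_n$ and acts on the whole simple module $M^\lambda_n$ by the scalar $x_\lambda = (1-\Dim)f_\lambda + c_\lambda = -(\Dim-1)f_\lambda + c(\lambda)$; this is the Proposition preceding \eqref{eq:def_Xn}, taken from \cite{Nazarov}. In particular, for any vector $v$ lying in any $\C\Sn{n}$-submodule of $M^\lambda_n$ one has $v\cdot X_n = x_\lambda\, v$, since $X_n$ acts as a scalar on all of $M^\lambda_n$. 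Second I would recall from Lemma \ref{lem:ev_Tn} (transported to the present setting via $T_n = \sum_k j_k$ and Theorem \ref{theo:JM_sn}) that $T_n$ is the central element of $\C\sn$ whose eigenvalue on the simple $\C\Sn{n}$-module $L^\mu$ is the content $c(\mu)$; concretely $Z^\mu T_n = c_\mu Z^\mu$, so for any $v\in L^\mu$ one has $v\cdot T_n = c(\mu)\, v$.

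The heart of the argument is then that $A_n = T_n - X_n$ acts on $L^\mu\subset M^\lambda_n$ as the difference of these two scalars. For $v\in L^\mu$, the subtle point is that although $T_n$ lives in $\C\sn$ and acts on the $\C\sn$-module structure, while $X_n$ is a genuinely $B_n$-central element, both act on $v$ by honest scalars once $v$ is pinned to the $L^\mu$-isotypic piece: the $\C\sn$-action gives $v\cdot T_n = c(\mu)v$, and the restriction of the $B_n$-action of $X_n$ to this subspace gives $v\cdot X_n = x_\lambda v$. Subtracting, $v\cdot A_n = v\cdot(T_n - X_n) = \bigl(c(\mu) - x_\lambda\bigr) v = \bigl(c(\mu) + (\Dim-1)f_\lambda - c(\lambda)\bigr)v = a_{\mu\backslash\lambda}\, v$, which is exactly the claimed eigenvalue. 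Since $A_n\in\cn$ commutes with $\sn$, it preserves each $\C\sn$-isotypic component, so this scalar action is well defined on the full $L^\mu$-summand and not merely on a single chosen copy.

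The main obstacle, and the step requiring genuine care rather than formal manipulation, is justifying that $T_n$ and $X_n$ may be evaluated \emph{simultaneously} on the same vector $v\in L^\mu\subset M^\lambda_n$ and that their actions are compatible. One must verify that the $\C\Sn{n}$-module $L^\mu$ appearing in the branching $M^\lambda_n\downarrow\C\sn$ of \eqref{eq:branching_Bn_Sn} really does inherit the $T_n$-eigenvalue $c(\mu)$ from its abstract $\C\sn$-module structure, i.e.\ that restriction of the $B_n$-action along $\C\sn\hookrightarrow B_n$ is identified with the intrinsic $\C\sn$-action for which Lemma \ref{lem:ev_Tn} was proved. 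This is where the embedding $\C\Sn{n}\subset B_n$ (as the span of permutation diagrams) and the fact that $T_n\in\C\sn$ is sent to the corresponding element of $B_n$ must be invoked explicitly; once this identification is in place, the eigenvalue computation is immediate. As noted in the statement, this lemma is contained in \cite[Lem.\ $3.3$]{bulgakova2022construction}, so I would cite that source for the compatibility and present the subtraction argument above as the self-contained derivation of the value $a_{\mu\backslash\lambda}$.
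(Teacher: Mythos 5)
Your proof is correct and follows essentially the same route as the paper: both rest on the identity $X_n = T_n - A_n$, the fact that $X_n$ acts by the scalar $x_\lambda$ on all of $M^\lambda_n$ and $T_n$ by the content $c(\mu)$ on $L^\mu$, and then a subtraction of eigenvalues. Your extra care about identifying the intrinsic $\C\sn$-action on $L^\mu$ with the restricted $B_n$-action is a sensible elaboration of what the paper leaves implicit, but it does not change the argument.
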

\begin{proof}
 The relation \eqref{eq:X_nA_n}, together with the fact that $X_n$ and $T_n$ are both proportional to identity on $L^{\mu}\subset M^{\lambda}_n$, implies that $A_n$ is proportional to identity on $L^{\mu}$ as well. The eigenvalue in the assertion is a direct consequence of \eqref{eq:X_nA_n} for $X_n$ (respectively, $T_n$) restricted to $M^{\lambda}_n$ (respectively, to $L^{\mu}$).
\end{proof}
\begin{remark}
\begin{itemize}
\item[\it{i)}] The element $A_n$ acts by zero on traceless modules $M^{\lambda}_n$ which are such that $f_\lambda=0$. Indeed, for such integer partitions $\lambda$ one has $a_{\lambda\backslash\lambda}=0$.
\item[\it{ii)}] From the point of view of tensor calculus Lemma \eqref{lem:A_block_diagonal} implies, via Schur-Weyl duality, that for any $T^{(\mu)}\in V^\mu$ such that
$D^\lambda$ occurs in the direct sum decomposition of $V^\mu$ into irreducible components under restriction to $\Or(\Dim,\C)$
one has
\begin{equation}
	T^{(\mu)}\cdot (P^\lambda_n A_n)=a_{\lambda\backslash\mu} \,\left(T^{(\mu)}\cdot P^\lambda_n\right)\,.
\end{equation}
\item[\it{iii)}] In the semisimple regime of $\bn$ one may write
\begin{equation}
	P^\lambda_n\, A_n= (T_n- x_\lambda)P^\lambda_n\,.
\end{equation}
Multiplication by the central Young projector $Z^\mu\in \mathcal{Z}_n$ on both sides of the previous equation yields
\begin{equation}
	 P^{\mu\backslash\lambda}_n\, A_n = a_{\mu\backslash\lambda}\, P^{\mu\backslash\lambda}_n\, \hspace{0.5cm} \text{with} \hspace{0.5cm}  a_{\mu\backslash\lambda}=c_\mu -x_\lambda\,,
\end{equation} 
where $P^{\mu\backslash\lambda}_n:= P^\lambda_n Z^\mu $. Theses relations hold when replacing central idempotent by primitive idempotents. \medskip
\end{itemize}
\end{remark}
\begin{lemma}[{\cite[Lem. $3.1$]{bulgakova2022construction}}]\label{lem:diag_ev_A_2}
	The action of $A_n$ on $V^{\otimes n}$ is diagonalizable. The subspace $\mathrm{Ker}\,A_n \subset V^{\otimes n}$ is exactly the space of traceless tensors, while non-zero eigenvalues of $A_n$ are positive integers. 
\end{lemma}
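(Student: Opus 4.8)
The statement to prove is Lemma \ref{lem:diag_ev_A_2}: the action of $A_n$ on $V^{\otimes n}$ is diagonalizable, $\mathrm{Ker}\, A_n$ is exactly the space of traceless tensors, and the non-zero eigenvalues are positive integers.

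The plan is to derive everything from the spectral information already established in Lemma \ref{lem:A_block_diagonal}. The key observation is that $V^{\otimes n}$ decomposes, via Schur-Weyl duality and the branching $\bn\downarrow\C\sn$, into a direct sum of subspaces on each of which $A_n$ acts by a scalar. Concretely, first I would decompose $V^{\otimes n}\cong\bigoplus_{\lambda\in\Lambda_n(\Dim)}\left(M^\lambda_n\right)^{\oplus g_\lambda}$ as a $\bn$-module, and then further refine each $M^\lambda_n$ according to its restriction to $\C\sn$ using \eqref{eq:branching_Bn_Sn}. Lemma \ref{lem:A_block_diagonal} tells us that on the $L^\mu$-isotypic piece sitting inside $M^\lambda_n$, the element $A_n$ acts as the scalar $a_{\mu\backslash\lambda}=(\Dim-1)f_\lambda+c_\mu-c_\lambda$. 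Since $V^{\otimes n}$ is thereby written as a direct sum of simultaneous eigenspaces of $A_n$, the operator is diagonalizable. This is the core of the argument and requires only that the decomposition of $V^{\otimes n}$ into these scalar-action pieces is a genuine direct sum spanning the whole space, which follows from semisimplicity (Maschke's theorem for $\C\sn$ and complete reducibility under $\bn$).

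Next I would identify the kernel. By remark \textit{i)} following Lemma \ref{lem:A_block_diagonal}, $A_n$ annihilates precisely the traceless modules $M^\lambda_n$ with $f_\lambda=0$, i.e. those with $\lambda\vdash n$, since then $a_{\lambda\backslash\lambda}=(\Dim-1)\cdot 0+c_\lambda-c_\lambda=0$. The content of \eqref{eq:irreducible_trace_decomposition} is that the irreducible tensors $T\in D^\lambda$ with $|\lambda|=n$ are exactly the $1$-traceless (totally traceless) tensors, while those with $|\lambda|=n-2f_\lambda<n$ carry at least $f_\lambda\geqslant 1$ metric factors. Thus the claim that $\mathrm{Ker}\,A_n$ is the traceless subspace amounts to showing that $a_{\mu\backslash\lambda}=0$ if and only if $f_\lambda=0$. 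The forward direction is the delicate point: I must check that for $f_\lambda\geqslant 1$ the scalar $a_{\mu\backslash\lambda}=(\Dim-1)f_\lambda+c_\mu-c_\lambda$ is strictly positive, which simultaneously yields the positivity of the non-zero eigenvalues.

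The main obstacle is precisely this positivity estimate. Here I would use that whenever $L^\mu$ occurs in $M^\lambda_n$ upon restriction, $\mu$ is obtained from $\lambda$ by adding an even horizontal strip $\nu$ with $|\nu|=2f_\lambda$ (the Littlewood-type branching \eqref{eq:branching_Bn_Sn}), so $c_\mu-c_\lambda$ equals the total content of the added boxes. Each such box lies in a row $i$ and column $j$ with content $j-i$; since boxes are added to form a valid diagram $\mu$ with $\mu_1'+\mu_2'\leqslant\Dim$, the number of rows is bounded by $\Dim$, giving $c_\mu-c_\lambda\geqslant -(\Dim-1)f_\lambda$ with equality only in degenerate configurations that do not actually occur for $f_\lambda\geqslant 1$. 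A cleaner route, which I would prefer, is to invoke the relation \eqref{eq:X_nA_n}, namely $X_n=T_n-A_n$, together with the known eigenvalue $x_\lambda=(1-\Dim)f_\lambda+c_\lambda$ of the central element $X_n$; combined with $\mathrm{Ker}\,A_n$ being an $\Or(\Dim,\C)$-invariant subspace containing all traceless tensors, one shows the traceless subspace is exactly the zero-eigenspace and that the remaining eigenvalues, being associated with genuinely trace-bearing modules where $f_\lambda\geqslant 1$, are forced to be positive integers by the integrality of contents and of $(\Dim-1)f_\lambda$. Assembling these pieces completes the proof.
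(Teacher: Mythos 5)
Your argument for diagonalizability and for integrality of the eigenvalues is fine: complete reducibility of $V^{\otimes n}$ under $\bn$ and under $\C\sn$, combined with Lemma \ref{lem:A_block_diagonal}, writes $V^{\otimes n}$ as a direct sum of subspaces on which $A_n$ acts by the scalars $a_{\mu\backslash\lambda}\in\mathbb{Z}$. The genuine gap is the positivity step, and it cannot be closed along the lines you sketch. For $\Dim<n-1$ the stable Littlewood--Richardson branching really does produce pairs with $f_\lambda\geqslant 1$ and $a_{\mu\backslash\lambda}\leqslant 0$: take $n=4$, $\Dim=2$, $\lambda=(2)$, $\mu=(2,2)$, where $C^{\mu}_{\lambda}=1$, $\mu\in\Par_4(2)$, and $a_{\mu\backslash\lambda}=(\Dim-1)+c_{(2,2)}-c_{(2)}=1+0-1=0$. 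If $L^{(2,2)}$ occurred inside $M^{(2)}_4\subset V^{\otimes 4}$, your argument would place non-traceless tensors in $\mathrm{Ker}\,A_4$ and a zero eigenvalue outside the traceless subspace. The rescue you invoke --- that such ``degenerate configurations do not actually occur'' --- is precisely Proposition \ref{prop:condition_A_branching_GL_O_2}\,(ii), which the paper \emph{deduces from} Lemma \ref{lem:diag_ev_A_2}; proving it independently would require the generalized branching coefficients $\overline{C}{}^{\,\mu}_{\lambda\nu}(\Dim)$ in the non-semisimple regime (the Jang--Kwon combinatorics that this whole construction is designed to bypass). So your route is circular. Your fallback via $X_n=T_n-A_n$ \eqref{eq:X_nA_n} only re-derives the eigenvalue formula of Lemma \ref{lem:A_block_diagonal}: it yields integrality, and positivity does not follow from integrality. (A smaller slip: in \eqref{eq:branching_Bn_Sn} the even partition $\nu$ is generally not a horizontal strip, so $\mu\backslash\lambda$ need not be one, and the naive row-bound estimate $c_\mu-c_\lambda\geqslant -(\Dim-1)f_\lambda$ is not justified by column-height considerations alone.)

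Note that the paper does not prove this lemma itself but defers to \cite{bulgakova2022construction}, and the proof there is of a different nature --- metric rather than combinatorial, which is exactly what breaks the circle. Working in a basis where $g_{ab}=\delta_{ab}$ and restricting to the real form, each $\mathfrak{r}(d_{ij})$ is self-adjoint for the induced positive-definite scalar product and satisfies $\mathfrak{r}(d_{ij})^2=\Dim\,\mathfrak{r}(d_{ij})$, so $\Dim^{-1}\mathfrak{r}(d_{ij})$ is an orthogonal projector whose kernel consists of the tensors with vanishing $(i,j)$-trace. Hence $\mathfrak{r}(A_n)$ is a sum of positive semi-definite self-adjoint operators: it is diagonalizable, its eigenvalues are non-negative, and $\langle T, T\cdot A_n\rangle=\sum_{i<j}\Dim^{-1}\lVert T\cdot d_{ij}\rVert^2$ shows that $\mathrm{Ker}\,A_n=\bigcap_{i<j}\mathrm{Ker}\,\mathfrak{r}(d_{ij})$, i.e.\ exactly the traceless tensors. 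Only after this does one invoke Schur--Weyl duality (Lemma \ref{lem:A_block_diagonal}) to identify every eigenvalue with some integer $a_{\mu\backslash\lambda}$, so the non-zero ones are positive integers. The a priori positivity is the entire point of the lemma, since the paper then turns it around to decide which modules actually occur.
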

\noindent See \cite{bulgakova2022construction} for the proof.\medskip 

We recall that the non-negative integer $\tensor{\overline{C}}{^{\,\mu}_\lambda}(\Dim)=\sum_{ \text{even }\nu\,\subset \, \mu} \tensor{\overline{C}}{^{\,\mu}_\lambda_\nu}(\Dim)$ gives the multiplicity of the irreducible representation $L^\mu$ in $M^\lambda_n$  (respectively $D^\lambda$ in $V^\mu$), which is often denoted $[\, M^\lambda_n\,:\, L^\mu\,]$.
The following proposition may be seen as a reformulation of Proposition $3.4.$ in \cite{bulgakova2022construction}.
\begin{proposition}[{\cite[Prop. $3.4$]{bulgakova2022construction}}]\label{prop:condition_A_branching_GL_O_2}
	Let $\lambda\in \Lambda_n(\Dim)$ such that $f_\lambda>0$ ($\, |\lambda|\neq n$). For any $\mu\in \Par_n(\Dim)$ such that $\tensor{C}{^\mu_\lambda}\neq 0$,
	\begin{itemize}
		\item[i)] If $\,\Dim\geqslant n-1$,  \hspace{0.5cm} then \hspace{0.3cm} $\tensor{\overline{C}}{^{\,\mu}_\lambda}(\Dim)=\tensor{C}{^{\,\mu}_\lambda}$, \hspace{0.3cm} \text{and} \hspace{0.3cm} $a_{\mu\backslash\lambda}$ as defined in \eqref{lem:A_block_diagonal} is a positive eigenvalue of $A_n$ on $V^{\otimes n}$.
		\item[ii)] If $\,\Dim<n-1$, \hspace{0.5cm} then  \hspace{0.3cm} $a_{\mu\backslash\lambda}\leqslant 0\,\hspace{0.3cm} \text{implies} \hspace{0.3cm}\tensor{\overline{C}}{^{\,\mu}_\lambda}(\Dim)=0 \hspace{0.2cm}(\text{$L^\mu$ is not present in $M^\lambda_n$})\,.$
	\end{itemize}
\end{proposition}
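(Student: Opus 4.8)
\textbf{Proof strategy for Proposition \ref{prop:condition_A_branching_GL_O_2}.}

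The plan is to reduce everything to the two facts already established: the eigenvalue formula for $A_n$ from Lemma \ref{lem:A_block_diagonal}, which states that $a_{\mu\backslash\lambda} = (\Dim-1)f_\lambda + c(\mu) - c(\lambda)$ whenever $L^\mu$ occurs in $M^\lambda_n$ upon restriction, and the spectral information from Lemma \ref{lem:diag_ev_A_2}, which tells us that $A_n$ acts diagonalizably on $V^{\otimes n}$ with non-negative integer eigenvalues, the kernel being exactly the traceless tensors. The key observation linking these is that $a_{\mu\backslash\lambda}$ is \emph{a priori} only a formal expression attached to a pair $(\mu,\lambda)$, but becomes an actual eigenvalue of $A_n$ on $V^{\otimes n}$ precisely when the multiplicity $\tensor{\overline{C}}{^{\,\mu}_\lambda}(\Dim)$ is non-zero.

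For part \textit{i)}, I would argue as follows. When $\Dim\geqslant n-1$, Remark \ref{rem:stable_case} and Lemma \ref{lem:Branching_0_GL_Bn_semisimple} give $\tensor{\overline{C}}{^{\,\mu}_\lambda}(\Dim)=\tensor{C}{^{\,\mu}_\lambda}$ (one must treat the stable case $\Dim\geqslant n$ via the Littlewood restriction rule and the boundary case $\Dim=n-1$ via Lemma \ref{lem:Branching_0_GL_Bn_semisimple}, noting that $\lambda\neq\mu$ since $f_\lambda>0$ forces $|\lambda|<n=|\mu|$). Hence $\tensor{C}{^\mu_\lambda}\neq 0$ implies $L^\mu$ genuinely occurs in $M^\lambda_n$, so by Lemma \ref{lem:A_block_diagonal} the number $a_{\mu\backslash\lambda}$ is an eigenvalue of $A_n$ on $V^{\otimes n}$. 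It then remains to show this eigenvalue is strictly positive. Since $f_\lambda>0$, the tensors in the corresponding isotypic piece are not traceless, so by Lemma \ref{lem:diag_ev_A_2} the eigenvalue cannot be zero; combined with the same lemma's statement that all non-zero eigenvalues of $A_n$ are positive integers, we conclude $a_{\mu\backslash\lambda}>0$.

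Part \textit{ii)} is the contrapositive packaged as an implication, and is where the main subtlety lies. We must show that $a_{\mu\backslash\lambda}\leqslant 0$ forces $\tensor{\overline{C}}{^{\,\mu}_\lambda}(\Dim)=0$. The argument is again by Lemma \ref{lem:diag_ev_A_2}: if $L^\mu$ \emph{were} present in $M^\lambda_n$, i.e. $\tensor{\overline{C}}{^{\,\mu}_\lambda}(\Dim)\neq 0$, then Lemma \ref{lem:A_block_diagonal} would make $a_{\mu\backslash\lambda}$ an honest eigenvalue of $A_n$ on the non-traceless module $M^\lambda_n$ (non-traceless because $f_\lambda>0$), and such eigenvalues are strictly positive by Lemma \ref{lem:diag_ev_A_2}, contradicting $a_{\mu\backslash\lambda}\leqslant 0$. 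The delicate point I expect to be the main obstacle is justifying cleanly that when $\Dim<n-1$ one cannot simply invoke the Littlewood–Richardson identification $\tensor{\overline{C}}{^{\,\mu}_\lambda}(\Dim)=\tensor{C}{^{\,\mu}_\lambda}$; here the branching multiplicities genuinely differ from the stable ones, so the whole content of \textit{ii)} is that the \emph{sign} of $a_{\mu\backslash\lambda}$ detects the vanishing of the true multiplicity. Thus the proof in \textit{ii)} must not route through $\tensor{C}{^\mu_\lambda}$ at all, but rather directly through the eigenvalue positivity of Lemma \ref{lem:diag_ev_A_2} applied to the actual module $M^\lambda_n\subset V^{\otimes n}$.
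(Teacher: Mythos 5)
Your proposal is correct and follows essentially the same route as the paper's own proof: part \textit{i)} is handled by splitting into the stable case $\Dim\geqslant n$ and the boundary case $\Dim=n-1$ (via Lemma \ref{lem:Branching_0_GL_Bn_semisimple}) to identify $\tensor{\overline{C}}{^{\,\mu}_\lambda}(\Dim)$ with $\tensor{C}{^{\,\mu}_\lambda}$, and then Lemmas \ref{lem:A_block_diagonal} and \ref{lem:diag_ev_A_2} give positivity of $a_{\mu\backslash\lambda}$, while part \textit{ii)} is exactly the paper's contrapositive argument through Lemma \ref{lem:diag_ev_A_2} applied to the actual module $M^\lambda_n$. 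Your write-up merely makes explicit two points the paper leaves implicit (that $f_\lambda>0$ forces $\lambda\neq\mu$ so Lemma \ref{lem:Branching_0_GL_Bn_semisimple} applies, and that $f_\lambda>0$ rules out the zero eigenvalue since $M^\lambda_n$ meets the traceless subspace trivially), which is a welcome clarification but not a different proof.
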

\begin{proof}
	For the first point, if $\,\Dim\geqslant n$ we are in the stable regime so $\tensor{\overline{C}}{^{\,\mu}_\lambda}(\Dim)=\tensor{C}{^{\,\mu}_\lambda}$, hence $L^\mu \subset M^\lambda_n$ (resp. $D^\lambda\subset V^\mu$). If $\,\Dim = n-1$, by Lemma \ref{lem:Branching_0_GL_Bn_semisimple} one has $\tensor{\overline{C}}{^{\,\mu}_\lambda}(\Dim)=\tensor{C}{^{\,\mu}_\lambda}$ and hence $L^\mu\subset M^\lambda_n$ (resp. $D^\lambda\subset V^\mu$). By Lemma \ref{lem:diag_ev_A_2}, $a_{\mu\backslash\lambda}$ is a positive eigenvalue of $A_n$ on $V^{\otimes n}$.\medskip 
	
	\noindent For the second point, note that if $\tensor{\overline{C}}{^{\,\mu}_\lambda}(\Dim)\neq 0$ then by Lemma \ref{lem:diag_ev_A_2} $a_{\mu\backslash\lambda}>0$ is an eigenvalue of $A_n$ on $V^{\otimes n}$.
\end{proof}

In order to construct the elements $P^\lambda_n$ it will be sufficient to consider the set of Young diagrams $\overbar{\mathrm{M}}_{n,\lambda}(\Dim)$ and $\overbar{\Lambda}_{\mu}(\Dim)$ defined as 
\begin{equation}\label{eq:subsets_M_L}\
	\begin{aligned}
		\overbar{\mathrm{M}}_{n,\lambda}(\Dim)&=\lbrace \mu \in \Par_n(\Dim) \,\st\, \tensor{C}{^\mu_\nu_\lambda}\neq 0 \text{ \, for some $\nu$ even, and $a_{\mu\backslash\lambda}\geqslant1$} \rbrace\,, \\[8pt]
		\overbar{\Lambda}_{\mu}(\Dim)&=\lbrace \lambda \in  \Lambda_{|\mu|}(\Dim) \,\st\, \tensor{C}{^\mu_\nu_\lambda}\neq 0 \text{ \, for some $\nu$ even, and $a_{\mu\backslash\lambda}\geqslant1$} \rbrace\,.
	\end{aligned}
\end{equation}
These two sets can be constructed efficiently (at least for a computer) from the first and second formulations (\textit{jeu de taquin}) of the Littlewood-Richardson rule presented in section \ref{subsec:Littlewood_Richardson_rules} of appendix \ref{app:maths}. As a result, we will avoid the computation of the generalized Littlewood-Richardson coefficients $\tensor{\overline{C}}{^{\,\mu}_\lambda_\nu}(\Dim)$ which is a rather tedious task \cite{Jang_Kwon_branching}. As a direct consequence of the Proposition \ref{prop:condition_A_branching_GL_O_2} one has the following lemma. 
\begin{lemma}\label{lem:lemma_sets} In the semisimple regime of $B_n$ ($\Dim\geqslant n-1$) one has
	\begin{equation}\label{eq:sets_semisimple}
		\mathrm{M}_{n,\lambda}(\Dim)=\overbar{\mathrm{M}}_{n,\lambda}(\Dim)\,, \hspace{0.5cm} \text{and} \hspace{0.5cm} \mathrm{\Lambda}_\mu(\Dim)=\overbar{\Lambda}_{\mu}(\Dim)\,,
	\end{equation}
while for $\Dim < n-1$ 
\begin{equation}\label{eq:sets_not_semisimple}
	\mathrm{M}_{n,\lambda}(\Dim)\subseteq \overbar{\mathrm{M}}_{n,\lambda}(\Dim)\,, \hspace{0.5cm} \text{and} \hspace{0.5cm}	\mathrm{\Lambda}_\mu(\Dim)\subseteq 	\overbar{\Lambda}_{\mu}(\Dim)\,.
\end{equation}
\end{lemma}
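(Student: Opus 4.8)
The plan is to unwind the two pairs of sets and compare them condition by condition, treating the semisimple regime $\Dim\geqslant n-1$ and the non-semisimple regime $\Dim<n-1$ separately. The only differences between $\mathrm{M}_{n,\lambda}(\Dim)$ and $\overbar{\mathrm{M}}_{n,\lambda}(\Dim)$ are twofold: the former is phrased with the generalized branching coefficients $\overline{C}^{\mu}_{\lambda\nu}(\Dim)$ whereas the latter uses the ordinary Littlewood--Richardson coefficients $C^{\mu}_{\lambda\nu}$, and in addition $\overbar{\mathrm{M}}$ carries the extra requirement $a_{\mu\backslash\lambda}\geqslant 1$. I would first record that, since $f_\lambda>0$ throughout, any nonzero $C^{\mu}_{\lambda\nu}$ or $\overline{C}^{\mu}_{\lambda\nu}(\Dim)$ forces $|\nu|=n-|\lambda|>0$, so the qualifiers ``$\nu\neq\emptyset$'' and ``$\nu$ even'' single out the same family of diagrams; the two definitions thus genuinely differ only in the two points above. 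The engine throughout is Proposition \ref{prop:condition_A_branching_GL_O_2} together with the spectral Lemmas \ref{lem:A_block_diagonal} and \ref{lem:diag_ev_A_2} for $A_n$.

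For the semisimple regime I would argue that the coefficient conditions coincide verbatim and that the eigenvalue condition becomes redundant. By the Littlewood restriction rule (Remark \ref{rem:stable_case}) when $\Dim\geqslant n$, and by Lemma \ref{lem:Branching_0_GL_Bn_semisimple} when $\Dim=n-1$ (which applies because $|\lambda|<n$ forces $\lambda\neq\mu$), one has $\overline{C}^{\mu}_{\lambda\nu}(\Dim)=C^{\mu}_{\lambda\nu}$ for every relevant triple. Hence ``$\overline{C}^{\mu}_{\lambda\nu}(\Dim)\neq0$ for some even $\nu$'' and ``$C^{\mu}_{\lambda\nu}\neq0$ for some even $\nu$'' are equivalent. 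It then remains to see that, on the set where this common condition holds, the constraint $a_{\mu\backslash\lambda}\geqslant1$ appearing in $\overbar{\mathrm{M}}$ is automatic: this is exactly Proposition \ref{prop:condition_A_branching_GL_O_2}(i), which says that when $C^{\mu}_{\lambda}\neq0$ the number $a_{\mu\backslash\lambda}$ is a positive eigenvalue of $A_n$, hence a positive integer. Therefore $\mathrm{M}_{n,\lambda}(\Dim)=\overbar{\mathrm{M}}_{n,\lambda}(\Dim)$, and the equality $\Lambda_\mu(\Dim)=\overbar{\Lambda}_\mu(\Dim)$ follows by the identical argument with the roles of the fixed and the summed diagram exchanged.

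For the non-semisimple regime I would establish the inclusion $\mathrm{M}_{n,\lambda}(\Dim)\subseteq\overbar{\mathrm{M}}_{n,\lambda}(\Dim)$ by fixing $\mu$ with $\overline{C}^{\mu}_{\lambda}(\Dim)\neq0$, i.e. $L^\mu\subset M^\lambda_n$, and verifying the two defining conditions of $\overbar{\mathrm{M}}$. The eigenvalue condition is the easy half: by Lemma \ref{lem:A_block_diagonal}, $A_n$ acts on the $L^\mu$-summand of $M^\lambda_n$ by the scalar $a_{\mu\backslash\lambda}$; since $f_\lambda>0$ these tensors are not traceless (they carry $f_\lambda\geqslant1$ metric factors), so by Lemma \ref{lem:diag_ev_A_2}, whose kernel is exactly the traceless tensors and whose nonzero eigenvalues are positive integers, one gets $a_{\mu\backslash\lambda}\geqslant1$ without any appeal to $C^{\mu}_{\lambda}$. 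The delicate half is showing $C^{\mu}_{\lambda}\neq0$, and this is where I expect the main obstacle to lie: Proposition \ref{prop:condition_A_branching_GL_O_2} is stated only for $\mu$ with $C^{\mu}_{\lambda}\neq0$, so on its own it cannot exclude a spurious $\mu$ having $\overline{C}^{\mu}_{\lambda}(\Dim)\neq0$ yet $C^{\mu}_{\lambda}=0$. I would close this gap with the structural fact that the actual restriction multiplicities never exceed the stable ones, $\overline{C}^{\mu}_{\lambda}(\Dim)\leqslant C^{\mu}_{\lambda}$, so that $\mathrm{supp}\,\overline{C}\subseteq\mathrm{supp}\,C$; concretely this follows from the cell-module structure of $\bn$ (the irreducible $M^\lambda_n$ is a quotient of the cell module, whose restriction to $\C\sn$ is $\Dim$-independent and carries precisely the factors $L^\mu$ with multiplicity $C^{\mu}_{\lambda}$), or equivalently from the combinatorial branching rule of Jang--Kwon cited above. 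Granting this, $C^{\mu}_{\lambda}\neq0$ and the inclusion is complete; the companion inclusion $\Lambda_\mu(\Dim)\subseteq\overbar{\Lambda}_\mu(\Dim)$ is obtained word for word. I would add only that this inclusion need not be an equality, reflecting the reduction of stable multiplicities to zero by the modification rules outside the semisimple range.
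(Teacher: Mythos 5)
Your proposal is correct, and it is in fact more complete than what the paper records: the paper gives no written proof, asserting the lemma ``as a direct consequence of Proposition~\ref{prop:condition_A_branching_GL_O_2}''. For the semisimple regime your argument is exactly that intended consequence: the coefficient equality $\overline{C}^{\,\mu}_{\lambda\nu}(\Dim)=C^{\,\mu}_{\lambda\nu}$ from Remark~\ref{rem:stable_case} (for $\Dim\geqslant n$) and Lemma~\ref{lem:Branching_0_GL_Bn_semisimple} (for $\Dim=n-1$, applicable because $f_\lambda>0$ forces $\lambda\neq\mu$) makes the two coefficient conditions identical, and Proposition~\ref{prop:condition_A_branching_GL_O_2}(i) together with Lemma~\ref{lem:diag_ev_A_2} shows the extra constraint $a_{\mu\backslash\lambda}\geqslant 1$ in $\overbar{\mathrm{M}}_{n,\lambda}(\Dim)$ is automatic, so both inclusions hold.

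Where you genuinely go beyond the paper is the non-semisimple inclusion, and your diagnosis of the gap is accurate: Proposition~\ref{prop:condition_A_branching_GL_O_2}(ii) is stated under the standing hypothesis $C^{\,\mu}_{\lambda}\neq 0$, so on its own it cannot rule out a $\mu$ with $\overline{C}^{\,\mu}_{\lambda}(\Dim)\neq 0$ yet $C^{\,\mu}_{\lambda}=0$, which is precisely what $\mathrm{M}_{n,\lambda}(\Dim)\subseteq\overbar{\mathrm{M}}_{n,\lambda}(\Dim)$ requires one to exclude. Your two-step repair is sound. The eigenvalue half indeed needs no Littlewood--Richardson input: by Lemma~\ref{lem:A_block_diagonal}, $A_n$ acts on $L^\mu\subset M^\lambda_n$ by $a_{\mu\backslash\lambda}$, and by Lemma~\ref{lem:diag_ev_A_2} this scalar is either zero (placing $L^\mu$ among traceless tensors, contradicting $f_\lambda>0$) or a positive integer. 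The support containment $\overline{C}^{\,\mu}_{\lambda}(\Dim)\leqslant C^{\,\mu}_{\lambda}$ via the cell-module structure is the right structural fact: the restriction of the standard module $\Delta^{\lambda}_n$ to the semisimple subalgebra $\C\sn$ is $\Dim$-independent with multiplicities $C^{\,\mu}_{\lambda}$, and $M^\lambda_n$ is a quotient of it. The one step you use tacitly and should anchor with a citation is the labeling consistency in the non-semisimple regime, namely that the Schur--Weyl irreducible $M^\lambda_n$ is the head of the cell module $\Delta^\lambda_n$ carrying the \emph{same} label $\lambda$; this is standard (cellularity in the sense of \cite{Graham_Lehrer_cellular}, and it is implicit in the branching rule of \cite{Jang_Kwon_branching} which the paper invokes for the coefficients $\overline{C}^{\,\mu}_{\lambda\nu}(\Dim)$), but it is the hinge on which your monotonicity argument turns.
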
 

\begin{remark}
\begin{itemize}
\item[\it i)] To our knowledge, whether or not \eqref{eq:sets_semisimple} holds for $\Dim < n-1$ is an open problem. 
\item[\it ii)] Let $\beta\in \Lambda_n(\Dim)$ with $|\beta|<n$, and $v\in M^{\beta}_n$. Then $v$ decomposes as 
\begin{equation}
	v=\sum_{\rho\in \overbar{\mathrm{M}}_{n,\lambda}}v^{\rho}\,, \hspace{0.5cm} \text{with} \hspace{0.5cm} v^{\rho}=v\cdot Z^{\rho}\,,
\end{equation}
where for $\Dim < n-1$ some of the $v^{\rho}$ in the sum may be identically zero. Indeed, $L^\rho$ \lp respectively  $D^\beta$ \rp\,  may not present in $M^\beta_n$ \lp respectively $V^\rho$\rp \,, whereas $a_{\rho\backslash\beta}$ may still be a positive integer.
\end{itemize}
\end{remark}

In the following examples we demonstrate how Proposition \ref{prop:condition_A_branching_GL_O_2} can be used to conclude on the presence (or absence) of certain irreducible module $L^\mu$ in  $M^\lambda_n$ (respectively $D^\lambda$ in  $V^\mu$) for small $\Dim$, without having to compute the coefficients $\tensor{\overline{C}}{^{\,\mu}_\lambda}(\Dim)$. 
\begin{example}\label{ex:Branching_O_GL} The first three examples are relevant for the irreducible decomposition of the distortion and Riemann tensor for small $\Dim$ presented in chapter \ref{chap:Irreducible_MAG}. In particular, these examples give an alternative to the dimensional analysis given in the Propositions \ref{prop:small_d_distortion} and \ref{prop:small_d_Riemann}.\medskip 
	
	\noindent Consider the case $n=3$ with $\Dim=2$ (semisimple regime of $\bn$).
	\begin{itemize}
		\item[i)] Take the integer partitions $\mu=(2,1)$ and $\lambda=(1)$. One has $\mu\notin \Lambda_3(2)$ and $\lambda\in \Lambda_3(2)$. One can check that $\tensor{C}{^\mu_\lambda}=1$ so $\tensor{\overline{C}}{^{\,\mu}_\lambda}=1$ and $L^\mu$ is present in $M^\lambda_n$.  In this case one has $a_{\mu\backslash\lambda}=1$.
	\end{itemize}
	Consider the case $n=4$ with $\Dim=2$.	
	\begin{itemize}
		\item[ii)] Take the integer partitions $\mu=(2,2)$ and $\lambda=(2)$. One has $\mu\notin \Lambda_4(2)$ and $\lambda\in \Lambda_4(2)$. One can check that $\tensor{C}{^\mu_\lambda}=1$. In this case $a_{\mu\backslash\lambda}=0$, and hence $\tensor{\overline{C}}{^{\,\mu}_\lambda}=0$: the irreducible module $L^\mu$ is absent from $M^\lambda_n$.
	\end{itemize}
	Consider the case $n=4$ with $\Dim=3$  (semisimple regime of $\bn$).
	\begin{itemize}
		\item[iii)] Take the integer partitions $\mu=(2,1,1)$ and $\lambda=(1,1)$. One has $\mu\notin \Lambda_4(3)$ and $\lambda\in \Lambda_4(3)$. One can check that $\tensor{C}{^\mu_\lambda}=1$ so $\tensor{\overline{C}}{^{\,\mu}_\lambda}=1$ and $L^\mu$ is present in $M^\lambda_n$. In this case $a_{\mu\backslash\lambda}=1$.
	\end{itemize}
	Consider the case $n=6$ with $\Dim=2$.
	\begin{itemize}
		\item[iv)] Take the integer partitions $\mu=(2,2,1,1)$ and $\lambda=(1,1)$. One has $\mu\notin \Lambda_6(2)$ and $\lambda\in \Lambda_6(2)$. One can check that $\tensor{C}{^\mu_\lambda}=1$. In this case $a_{\mu\backslash\lambda}=-2$ and hence $\tensor{\overline{C}}{^{\,\mu}_\lambda}=0$: the irreducible module $L^\mu$ is absent from $M^\lambda_n$.
	\end{itemize}
\end{example}

\smallskip

\paragraph{Spectrum of $A_n$.} We denote by $\mathrm{spec}_{\,\mu}(A_n)$ the set of positive eigenvalues of the operator $A_n$ on the irreducible $\C\sn$-module $L^{\mu}$ which occurs in the irreducible $\bn$-modules $M^{\lambda}_n$ ($\lambda\in \Lambda_n(\Dim)$). One has
\begin{equation}\label{eq:spec_An_mu}
	\mathrm{spec}_{\,\mu}(A_n)= \Big\{a_{\mu\backslash\lambda}\,\st \hspace{0.2cm} \lambda \in \Lambda_\mu(\Dim)\, \Big\}\,.
\end{equation}
Similarly, we denote by $\mathrm{spec}^{\lambda}(A_n)$ the set of positive eigenvalues of the operator $A_n$ on irreducible $\C\sn$-modules $L^{\mu}$ which occur in a fixed irreducible $\bn$-module $M^{\lambda}_n$. One has
\begin{equation}\label{eq:spec_An_lambda}
	\mathrm{spec}^{\,\lambda}(A_n)= \Big\{a_{\mu\backslash\lambda} \,\st \hspace{0.2cm} \mu \in \mathrm{M}_{n,\lambda}(\Dim)\, \Big\}\,.
\end{equation}
With the previous definitions, the set of all positive eigenvalues $\mathrm{spec}(A_n)$ of the operator $A_n$ on $V^{\otimes n}$ is:
\begin{equation}\label{eq:spec_An_full}
	\mathrm{spec}(A_n)=\bigcup_{\mu\,\in \,\mathcal{P}_n(\Dim)}\mathrm{spec}_{\,\mu}(A_n)=\bigcup_{\lambda\,\in\,\Lambda_n(\Dim)}\mathrm{spec}^{\,\lambda}(A_n)\,.
\end{equation}

Again, in order to avoid the computation of the generalized Littlewood-Richardson coefficients $\tensor{\overline{C}}{^{\,\mu}_\lambda_\nu}(\Dim)$ we introduce the extended spectrum of $A_n$, denoted $\overbar{\mathrm{spec}}(A_n)$. First, we denote by $\overbar{\mathrm{spec}}_{\mu}(A_n)$ and $\overbar{\mathrm{spec}}^{\lambda}(A_n)$ the following sets of positive integers: 
\begin{equation}\label{eq:spec_An_mu_lambda_mod}
	\overbar{\mathrm{spec}}_{\,\mu}(A_n)= \Big\{a_{\mu\backslash\lambda} \,\st \hspace{0.2cm} \lambda \in \overbar{\Lambda}_{\mu}(\Dim)\, \Big\}\,,\hspace{0.5cm} \overbar{\mathrm{spec}}^{\,\lambda}(A_n)= \Big\{a_{\mu\backslash\lambda} \,\st \hspace{0.2cm}  \mu \in \overbar{\mathrm{M}}_{n,\lambda}(\Dim)\, \Big\}\,.
\end{equation}
The extended spectrum $\overbar{\mathrm{spec}}(A_n)$ is defined as
\begin{equation}\label{eq:spec_An_full_mod}
	\overbar{\mathrm{spec}}(A_n):=\bigcup_{\mu\in\mathcal{P}_n(\Dim)}	\overbar{\mathrm{spec}}_{\,\mu}(A_n)=\bigcup_{\lambda\in\Lambda_n(\Dim)}\overbar{\mathrm{spec}}^{\,\lambda}(A_n)\,.
\end{equation}
As a direct consequence of Proposition \ref{prop:condition_A_branching_GL_O_2} one has the following lemma. 
\begin{lemma}\label{lem:lemma_spec} In the semisimple regime of $B_n$ ($\Dim\geqslant n-1$) one has
	\begin{equation}\label{eq:spec_semisimple}
		\mathrm{spec}^{\,\lambda}(A_n)=\,\overbar{\mathrm{spec}}^{\,\lambda}(A_n)\,, \hspace{0.5cm} \text{and} \hspace{0.5cm} \mathrm{spec}_{\,\mu}(A_n)=\,\overbar{\mathrm{spec}}_{\,\mu}(A_n)\,,
	\end{equation}
	while for $\Dim < n-1$ one has:
	\begin{equation}\label{eq:spec_not_semisimple}
		\mathrm{spec}^{\,\lambda}(A_n)\subseteq \,\overbar{\mathrm{spec}}^{\,\lambda}(A_n)\,, \hspace{0.5cm} \text{and} \hspace{0.5cm}	\mathrm{spec}_{\,\mu}(A_n)\subseteq \,\overbar{\mathrm{spec}}_{\,\mu}(A_n)\,.
	\end{equation}
\end{lemma}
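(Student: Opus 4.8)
The plan is to deduce Lemma~\ref{lem:lemma_spec} directly from Proposition~\ref{prop:condition_A_branching_GL_O_2}, treating the spectral statement as a packaging of the branching-multiplicity statement via the defining formulas \eqref{eq:spec_An_lambda}, \eqref{eq:spec_An_mu} for the genuine spectra and \eqref{eq:spec_An_mu_lambda_mod} for the extended spectra. The key observation is that each eigenvalue $a_{\mu\backslash\lambda}$ enters $\mathrm{spec}^{\,\lambda}(A_n)$ (respectively $\overbar{\mathrm{spec}}^{\,\lambda}(A_n)$) exactly when $\mu$ ranges over $\mathrm{M}_{n,\lambda}(\Dim)$ (respectively $\overbar{\mathrm{M}}_{n,\lambda}(\Dim)$); hence the two spectral claims reduce, set-theoretically, to the corresponding inclusions between the index sets, which are precisely the content of Lemma~\ref{lem:lemma_sets}.

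First I would invoke Lemma~\ref{lem:lemma_sets}, already available in the excerpt, which gives $\mathrm{M}_{n,\lambda}(\Dim)=\overbar{\mathrm{M}}_{n,\lambda}(\Dim)$ and $\mathrm{\Lambda}_\mu(\Dim)=\overbar{\Lambda}_{\mu}(\Dim)$ in the semisimple regime $\Dim\geqslant n-1$, with the respective inclusions $\subseteq$ when $\Dim<n-1$. Applying the set map $\mu\mapsto a_{\mu\backslash\lambda}$ to the equality $\mathrm{M}_{n,\lambda}(\Dim)=\overbar{\mathrm{M}}_{n,\lambda}(\Dim)$ and comparing with the definitions \eqref{eq:spec_An_lambda} and \eqref{eq:spec_An_mu_lambda_mod} yields $\mathrm{spec}^{\,\lambda}(A_n)=\overbar{\mathrm{spec}}^{\,\lambda}(A_n)$; likewise the map $\lambda\mapsto a_{\mu\backslash\lambda}$ applied to $\mathrm{\Lambda}_\mu(\Dim)=\overbar{\Lambda}_{\mu}(\Dim)$ together with \eqref{eq:spec_An_mu} and \eqref{eq:spec_An_mu_lambda_mod} gives $\mathrm{spec}_{\,\mu}(A_n)=\overbar{\mathrm{spec}}_{\,\mu}(A_n)$. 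The same argument run with $\subseteq$ in place of $=$ handles the regime $\Dim<n-1$, producing \eqref{eq:spec_not_semisimple}.

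The one point requiring genuine care, and the step I expect to be the main obstacle, is the monotonicity of the image under the eigenvalue map: an inclusion of index sets passes to an inclusion of spectra automatically, but an \emph{equality} of spectra could in principle fail even from an equality of index sets if the map $\mu\mapsto a_{\mu\backslash\lambda}$ (for fixed $\lambda$) were non-injective, since distinct indices could collapse to the same eigenvalue, or conversely one must ensure no spurious value is introduced. Here, however, the spectra are defined as \emph{sets} of eigenvalues, so any such collapse affects both $\mathrm{spec}^{\,\lambda}(A_n)$ and $\overbar{\mathrm{spec}}^{\,\lambda}(A_n)$ identically, and the set equality is preserved regardless of injectivity; thus the subtlety dissolves once one is explicit that the image of equal sets under a fixed map is equal. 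I would therefore state the elementary principle that for a function $\phi$ and sets $S\subseteq S'$ one has $\phi(S)\subseteq\phi(S')$, with equality when $S=S'$, and apply it with $\phi(\mu)=a_{\mu\backslash\lambda}$ and $\phi(\lambda)=a_{\mu\backslash\lambda}$ respectively.

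Finally I would note that no appeal to the structure of the eigenvalues $a_{\mu\backslash\lambda}=(\Dim-1)f_\lambda+c(\mu)-c(\lambda)$ is needed for this lemma: its entire content is combinatorial bookkeeping transporting Lemma~\ref{lem:lemma_sets} through the definitions of the spectra. The statements for the full spectra $\mathrm{spec}(A_n)$ and $\overbar{\mathrm{spec}}(A_n)$, if desired, then follow by taking unions over $\lambda\in\Lambda_n(\Dim)$ (or $\mu\in\mathcal{P}_n(\Dim)$) as in \eqref{eq:spec_An_full} and \eqref{eq:spec_An_full_mod}, using that a union of equal (respectively nested) families is equal (respectively nested).
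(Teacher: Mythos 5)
Your proof is correct and matches the paper's intent: the paper states this lemma as a direct consequence of Proposition~\ref{prop:condition_A_branching_GL_O_2}, and since the spectra \eqref{eq:spec_An_lambda}, \eqref{eq:spec_An_mu}, \eqref{eq:spec_An_mu_lambda_mod} are by definition the images of the index sets of Lemma~\ref{lem:lemma_sets} under the eigenvalue maps $\mu\mapsto a_{\mu\backslash\lambda}$ and $\lambda\mapsto a_{\mu\backslash\lambda}$, pushing that lemma's equalities/inclusions through those maps is exactly the intended argument. Your explicit remark that set images preserve equality regardless of injectivity is a sound (if elementary) way to close the only conceivable gap.
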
 

\newpage
\begin{remark}
The set of positive integers $\overbar{\mathrm{spec}}_{\,\mu}(A_n)$ and $\overbar{\mathrm{spec}}(A_n)$ correspond respectively to $\mathrm{spec}_{\,\mu}^{\stimes}(A_n)$ and $\mathrm{spec}^{\stimes}(A_n)$ in the notation of the article \cite{bulgakova2022construction}.
\end{remark}

\subsection{The formula and applications.}\label{subsec:traceless_projectors_app}
 
For any $\lambda\in \Lambda_n(\Dim)$ with $f_\lambda=0$ and $\Dim\geqslant1$ define the element 
\begin{equation}\label{eq:non_inductive_traceless_projectors}
	\mathcal{P}^{\lambda}_{n}=Z^\lambda \prod_{\begin{array}{c}
			{\scriptstyle \beta \in \overbar{\Lambda}_\lambda(\Dim)}\\
	\end{array}}\left(1 - \frac{A_n}{a_{\lambda\backslash\beta}}\,\right)\,.
\end{equation}
\begin{remark}
By definition of the set $\overbar{\Lambda}_\lambda(\Dim)$ the integers $a_{\lambda\backslash\beta}$ present in the factors of \eqref{eq:non_inductive_traceless_projectors} are positive and hence \eqref{eq:non_inductive_traceless_projectors} is well defined.
\end{remark}

\begin{theorem}[{\cite[Theo. 4.3]{bulgakova2022construction}}]\label{theo:non_inductive_traceless_projectors} For any $\lambda\vdash n$ such that $\lambda\in \Lambda_n(\Dim)$ and $\Dim\geqslant1$, the elements $\mathcal{P}^\lambda_n$ defined above are the traceless projectors which realize the projection of tensors to the isotypic components $(D^{\lambda})^{\oplus m_\lambda}$:
	\begin{equation}
	P^\lambda_n=\mathcal{P}^\lambda_n\,.
	\end{equation} 
\end{theorem}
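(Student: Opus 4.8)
The plan is to use Schur--Weyl duality to turn the asserted identity $P^\lambda_n=\mathcal{P}^\lambda_n$ into a statement about the action of $\mathcal{P}^\lambda_n$ on $V^{\otimes n}$, and then to verify that action by evaluating the product $\prod_{\beta}(1-A_n/a_{\lambda\backslash\beta})$ on the eigenspaces of $A_n$. Concretely, by \eqref{eq:properties_isotypic_proj} it suffices to show that the operator $\mathfrak{r}(\mathcal{P}^\lambda_n)\in\End(V^{\otimes n})$ is the identity on the traceless isotypic component $(D^\lambda)^{\oplus m_\lambda}$ and annihilates every other irreducible $\Or(\Dim,\C)$-summand of $V^{\otimes n}$; this manifestly idempotent action then identifies $\mathfrak{r}(\mathcal{P}^\lambda_n)$ with the sought projector.

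First I would record two structural facts. Since $\lambda\vdash n$, the central Young idempotent $Z^\lambda$ projects $V^{\otimes n}$ onto its $L^\lambda$-isotypic component $(V^\lambda)^{\oplus m_\lambda}$ and kills every other $\C\sn$-isotypic type. Moreover $A_n\in\cn$ lies in the centralizer of $\sn$ in $\bn$, so $A_n$ commutes with all of $\C\sn$, in particular with $Z^\lambda$; hence the two halves of $\mathcal{P}^\lambda_n$ commute and $\mathcal{P}^\lambda_n$ preserves the subspace $W:=Z^\lambda\cdot V^{\otimes n}$. This already shows $\mathcal{P}^\lambda_n$ annihilates everything outside $W$, so the entire content of the theorem is about the action of $\prod_{\beta\in\overbar{\Lambda}_\lambda(\Dim)}(1-A_n/a_{\lambda\backslash\beta})$ inside $W$. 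The denominators are nonzero by the defining condition $a_{\lambda\backslash\beta}\geqslant 1$ in \eqref{eq:subsets_M_L}, so $\mathcal{P}^\lambda_n$ is well defined.

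Next I would decompose $W$ using the branching rule $\bn\downarrow\C\sn$ \eqref{eq:branching_Bn_Sn}: the copies of $L^\lambda$ inside $V^{\otimes n}$ are distributed among the modules $M^\beta_n$, with $L^\lambda\subset M^\beta_n$ for $\beta=\lambda$ (the traceless module, coming from the term $\nu=\emptyset$) and for $\beta\in\Lambda_\lambda(\Dim)$ (the non-traceless modules, $f_\beta>0$). On the copy in $M^\lambda_n$, Lemma \ref{lem:A_block_diagonal} gives the eigenvalue $a_{\lambda\backslash\lambda}=(\Dim-1)f_\lambda=0$, consistent with Lemma \ref{lem:diag_ev_A_2} identifying $\mathrm{Ker}\,A_n$ with the traceless tensors; there every factor $(1-A_n/a_{\lambda\backslash\beta})$ restricts to the identity, so the product acts as the identity. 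On each non-traceless copy $L^\lambda\subset M^\beta_n$ with $\beta\in\Lambda_\lambda(\Dim)$, the element $A_n$ acts by the positive integer $a_{\lambda\backslash\beta}$, so the factor indexed by $\beta$ vanishes and the whole product is zero. The essential bookkeeping input is Lemma \ref{lem:lemma_sets}, which gives $\Lambda_\lambda(\Dim)\subseteq\overbar{\Lambda}_\lambda(\Dim)$: this guarantees that for every genuine non-traceless summand of $W$ the product really does contain an annihilating factor. The extra indices $\beta\in\overbar{\Lambda}_\lambda(\Dim)\setminus\Lambda_\lambda(\Dim)$, which can occur only when $\Dim<n-1$, contribute merely nonzero scalars $1-a_{\lambda\backslash\beta'}/a_{\lambda\backslash\beta}$ (and the value $1$ on the traceless eigenspace, where $A_n=0$), and hence never obstruct the vanishing already forced by another factor.

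Combining these evaluations, $\mathfrak{r}(\mathcal{P}^\lambda_n)$ is the identity on the traceless piece $D^\lambda\subset V^\lambda$ and zero on every other summand of $V^{\otimes n}$, so it realises the projection onto $(D^\lambda)^{\oplus m_\lambda}$ and equals $P^\lambda_n$ in the sense of \eqref{eq:properties_isotypic_proj}. I expect the main obstacle to be not the eigenvalue computation, which is immediate from Lemmas \ref{lem:A_block_diagonal} and \ref{lem:diag_ev_A_2}, but the careful treatment of the non-semisimple regime $\Dim<n-1$: one must argue that replacing the exact but combinatorially inaccessible set $\Lambda_\lambda(\Dim)$ by the extended set $\overbar{\Lambda}_\lambda(\Dim)$ leaves the action on $V^{\otimes n}$ unchanged. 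This is precisely the role of Lemma \ref{lem:lemma_sets} together with the elementary observation that a single vanishing factor annihilates an entire product, and it is why the argument does not require the positive eigenvalues of $A_n$ on $W$ to be distinct.
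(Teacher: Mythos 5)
Your proposal is correct and is essentially the paper's own argument: the paper decomposes $V^{\otimes n}$ into irreducible $\bn$-modules $M^\rho_n$ and then splits each into $\C\sn$-isotypic pieces via the $Z^\mu$, whereas you first cut down to $W=Z^\lambda\cdot V^{\otimes n}$ and then sort the copies of $L^\lambda$ by the $\bn$-module they sit in — the same bookkeeping in transposed order, resting on the same ingredients (orthogonality of the central Young idempotents, Lemma \ref{lem:A_block_diagonal}, Lemma \ref{lem:diag_ev_A_2}, and the inclusion $\Lambda_\lambda(\Dim)\subseteq\overbar{\Lambda}_\lambda(\Dim)$ for the non-semisimple regime). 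Your closing observations — that extra factors from $\overbar{\Lambda}_\lambda(\Dim)\setminus\Lambda_\lambda(\Dim)$ are harmless since every factor equals $1$ on the kernel of $A_n$, and that no distinctness of eigenvalues is needed — are accurate refinements of what the paper leaves implicit.
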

\begin{proof}
Consider the decomposition of $V^{\otimes n}$ into a direct sum of irreducible
	$\bn$-modules $M^\rho_n$. We need to show that $\mathcal{P}^\lambda_n$ annihilates any irreducible module not isomorphic to $M^\lambda_n$ and acts by the identity on $M^\lambda_n$.\medskip  

Let $\rho$ be such that $f_\rho>0$, and take $v\in M^{\rho}_n$. Upon restriction of $ M^{\rho}_n$ to $\C\sn$, $v$ decomposes as 
	\begin{equation}
		v=\sum_{\mu\,\in \,\overbar{\mathrm{M}}_{n,\rho}(\Dim)}v^{\mu}\,, \hspace{0.5cm} \text{with} \hspace{0.5cm} v^{\mu}=v\cdot Z^{\mu}\,.
	\end{equation}
For $\mu\neq\lambda$ one has $v^\mu\cdot \mathcal{P}^{\lambda}_{n}=0$ due to the orthogonality property of the central Young idempotents. In the case $\mu=\lambda$,  $\mathcal{P}^{\lambda}_{n}$ will also annihilates $v^{\mu}$ due to the presence of the factor $1-\dfrac{A_n}{a_{\lambda\backslash\rho}}$ in the product. Hence, $v\cdot \mathcal{P}^{\lambda}_{n}=0$.\medskip 
	
	Let $\rho$ be such that $f_\rho=0$, and take $v\in M^{\rho}_n$. For any $\rho\neq \lambda$ one has $v\cdot Z^{\lambda}=0$, so $v\cdot \mathcal{P}^{\lambda}_{n}=0$. When $\rho=\lambda$, because $v\cdot A_n=0$ and $v\cdot Z^{\lambda}=v$, one has $v\cdot\mathcal{P}^{\lambda}_{n}=v$. Hence, $\mathcal{P}^\lambda_n$ acts by the identity on $M^\lambda_n$ and annihilates any irreducible module not isomorphic to $M^\lambda_n$.\medskip 
	
	
\end{proof}
\vskip 6pt

\begin{corollary}\label{cor:non_inductive_traceless_projectors}
The projection operators on traceless isotypic components parametrized by $\lambda$ with $f_\lambda=0$ may also be written as 
\begin{equation}
P^\lambda_n=Z^\lambda \prod_{\begin{array}{c}
		{\scriptstyle \alpha \, \in \,\overbar{\mathrm{spec}}_{\,\lambda}(A_n)}\\
\end{array}}\left(1 - \frac{1}{\alpha}A_n\,\right)\,.
\end{equation}
\end{corollary}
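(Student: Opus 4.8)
The plan is to show that the reformulated product over $\overbar{\mathrm{spec}}_{\,\lambda}(A_n)$ agrees with the product over $\overbar{\Lambda}_\lambda(\Dim)$ appearing in Theorem \ref{theo:non_inductive_traceless_projectors}, after which the corollary follows immediately from that theorem. Since $\lambda$ is fixed with $f_\lambda = 0$, the eigenvalue $a_{\lambda\backslash\beta}$ depends on $\beta$ through the content $c(\beta)$ via Lemma \ref{lem:A_block_diagonal}; the key observation is that the two index sets $\overbar{\Lambda}_\lambda(\Dim)$ and $\overbar{\mathrm{spec}}_{\,\lambda}(A_n)$ are related by precisely this map $\beta \mapsto a_{\lambda\backslash\beta}$.

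First I would recall the definition of $\overbar{\mathrm{spec}}_{\,\lambda}(A_n)$ from \eqref{eq:spec_An_mu_lambda_mod}, namely $\overbar{\mathrm{spec}}_{\,\lambda}(A_n) = \{\, a_{\lambda\backslash\beta} \,\st\, \beta \in \overbar{\Lambda}_{\lambda}(\Dim)\,\}$. Here the role of $\mu$ in \eqref{eq:spec_An_mu_lambda_mod} is played by our traceless $\lambda$ (with $|\lambda|=n$), and the summation variable $\beta$ ranges over $\overbar{\Lambda}_\lambda(\Dim)$. This is exactly the set of values taken by $a_{\lambda\backslash\beta}$ as $\beta$ runs over $\overbar{\Lambda}_\lambda(\Dim)$, which is the index set of the product in \eqref{eq:non_inductive_traceless_projectors}. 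Thus the equality of the two products reduces to checking that the map $\beta \mapsto a_{\lambda\backslash\beta}$ is injective on $\overbar{\Lambda}_\lambda(\Dim)$, so that no factor is repeated or lost when passing from a product indexed by diagrams $\beta$ to a product indexed by the resulting eigenvalues $\alpha = a_{\lambda\backslash\beta}$.

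The main step is therefore this injectivity. I would argue it as follows: for $\beta \in \overbar{\Lambda}_\lambda(\Dim)$ with $|\lambda|=n$, the condition $\tensor{C}{^\lambda_\nu_\beta}\neq 0$ for some even $\nu$ forces $\beta \subset \lambda$ with $|\lambda| - |\beta| = |\nu|$ even, and by Lemma \ref{lem:A_block_diagonal} one has $a_{\lambda\backslash\beta} = (\Dim-1)f_\beta + c(\lambda) - c(\beta)$ with $f_\beta = (n - |\beta|)/2$. Two distinct diagrams $\beta, \beta'$ of the same size $|\beta|=|\beta'|$ have the same $f_\beta = f_{\beta'}$, so equality of eigenvalues would force $c(\beta) = c(\beta')$; I would invoke the fact (as in the reasoning behind Lemma \ref{lem:addable_removable_simple}) that distinct diagrams appearing as subdiagrams of a fixed $\lambda$ obtained by removing an even skew shape have distinct contents in the relevant range, so the eigenvalues separate them. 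For diagrams of different sizes, the $(\Dim-1)f_\beta$ term dominates and again separates the eigenvalues since $\Dim \geqslant 1$ and the content differences are bounded. The delicate point I expect to be the real obstacle is ruling out accidental coincidences $a_{\lambda\backslash\beta} = a_{\lambda\backslash\beta'}$ across different sizes $|\beta| \neq |\beta'|$ in low dimension $\Dim$, where the $(\Dim-1)f_\beta$ term may be small; here I would rely on the positivity and integrality of these eigenvalues established in Lemma \ref{lem:diag_ev_A_2} together with the explicit content bounds, possibly restricting to the semisimple-regime behaviour captured by Lemma \ref{lem:lemma_spec} and checking the non-semisimple cases do not introduce spurious equal factors that would alter the product acting on $V^{\otimes n}$. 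Once injectivity holds, the two products coincide factor-by-factor, and Theorem \ref{theo:non_inductive_traceless_projectors} gives $P^\lambda_n = \mathcal{P}^\lambda_n$ equal to the claimed expression.
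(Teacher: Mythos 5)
Your reduction of the corollary to Theorem \ref{theo:non_inductive_traceless_projectors} via the definition \eqref{eq:spec_An_mu_lambda_mod} of $\overbar{\mathrm{spec}}_{\,\lambda}(A_n)$ as the image of $\overbar{\Lambda}_{\lambda}(\Dim)$ under $\beta\mapsto a_{\lambda\backslash\beta}$ is the right starting point, but the step you single out as the main one --- injectivity of this map --- is both unnecessary and false in general, so the proof as proposed cannot be completed. Concretely, take $n=8$, $\lambda=(4,3,1)$, $\beta=(3,3)$, $\beta'=(4,1,1)$ and any $\Dim\geqslant 5$. Both $\beta$ and $\beta'$ are partitions of $6$ (so $f_\beta=f_{\beta'}=1$) with equal content $c_\beta=c_{\beta'}=3$ (the first content coincidences among partitions occur at size $6$: $(3,3)$ vs.\ $(4,1,1)$ and $(2,2,2)$ vs.\ $(3,1,1,1)$); by the Pieri rule the skew shapes $\lambda\backslash\beta$ and $\lambda\backslash\beta'$ are horizontal $2$-strips, so $C^{\lambda}_{(2)\,\beta}=C^{\lambda}_{(2)\,\beta'}=1$ with the even partition $\nu=(2)$, and
\begin{equation*}
a_{\lambda\backslash\beta}=a_{\lambda\backslash\beta'}=(\Dim-1)\cdot 1 + c_\lambda - 3 = \Dim \geqslant 1\,,
\end{equation*}
since $c_{(4,3,1)}=4$. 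Hence both diagrams lie in $\overbar{\Lambda}_{\lambda}(\Dim)$ and are identified by the eigenvalue map: your claim that distinct subdiagrams obtained by removing an even skew shape have distinct contents is exactly what breaks. Consequently the diagram-indexed product in \eqref{eq:non_inductive_traceless_projectors} can carry a factor with multiplicity two while the spectrum-indexed product carries it once; the two products are then different polynomials in $A_n$ and cannot be matched factor-by-factor.

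The corollary nevertheless holds, because multiplicities of factors are irrelevant for the action on $V^{\otimes n}$, which is what the equality $P^\lambda_n=\ldots$ asserts. Rerun the proof of Theorem \ref{theo:non_inductive_traceless_projectors} with the spectrum-indexed product: for $v\in M^\rho_n$ with $f_\rho>0$, decompose $v=\sum_\mu v^\mu$ upon restriction to $\C\sn$; the components with $\mu\neq\lambda$ are killed by $Z^\lambda$, while $v^\lambda$ (if nonzero) is an $A_n$-eigenvector with eigenvalue $a_{\lambda\backslash\rho}$ by Lemma \ref{lem:A_block_diagonal}, and this value lies in $\mathrm{spec}_{\,\lambda}(A_n)\subseteq\overbar{\mathrm{spec}}_{\,\lambda}(A_n)$ by Lemma \ref{lem:lemma_spec}, so the single factor $1-A_n/a_{\lambda\backslash\rho}$ present in the corollary's product annihilates $v^\lambda$. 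For $f_\rho=0$, either $\rho\neq\lambda$ and $Z^\lambda$ kills $M^\rho_n$, or $\rho=\lambda$ and $A_n$ acts by zero on $M^\lambda_n$ (Lemma \ref{lem:diag_ev_A_2}), so every factor acts as the identity and $v\cdot Z^\lambda=v$. Thus the spectrum-indexed expression acts as the identity on $M^\lambda_n$ and by zero on every other irreducible constituent of $V^{\otimes n}$, i.e.\ it equals $P^\lambda_n$. In short, the corollary is a re-indexing of the theorem's product by the \emph{set} of eigenvalues (collapsing duplicates), which preserves the realized projection because each factor serves only to annihilate the corresponding eigenspace; replace your injectivity argument by this observation and the proof is complete.
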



In the examples below we use the following notations. For a pair $\lambda\subset\mu $ define the {\it skew-shape Young diagram} $\mu\backslash \lambda$ as a set-theoretical difference of the corresponding Young diagrams. We set by definition $|\mu\backslash \lambda| = |\mu| - |\lambda|$. For example, 
\begin{equation}\label{eq:skew-shape_example}
	\text{given}\quad\mu=\Yboxdim{9pt}\yng(4,2,2,1) \quad\text{and}\quad \lambda=\Yboxdim{9pt}\yng(2,1)\,,\quad\text{one has}\quad \mu\backslash \lambda =\Yboxdim{9pt}\young(\times\times ~~,\times ~,~~,~)
\end{equation}
We denote by $\mu\slashdiv\nu$ the set of diagrams $\lambda$ such that $\tensor{C}{^{\mu}_{\lambda\nu}} \neq 0$ (see appendix \ref{subsec:Littlewood_Richardson_rules}). We also assume $\Dim\geqslant n-1$ so that $\tensor{\overline{C}}{^{\,\mu}_\lambda_\nu}(\Dim)=  \tensor{C}{^{\,\mu}_\lambda_\nu}$ and $\mathrm{spec}_{\,\mu}(A_n)=\,\overbar{\mathrm{spec}}_{\,\mu}(A_n)$ (see Lemma \ref{lem:lemma_spec}).

\begin{example}[Totally symmetric $O(\Dim,\C)$ irreducible projectors]\hphantom{ex}\medskip
	
For the fixed partition $\lambda =(n)$ one constructs $\mathrm{spec}_{\lambda}(A_n)$ for the skew-shape diagrams $\lambda\backslash\beta$ for all $\beta \in (n)\slashdiv (2f)$, $f = 1,\dots,\lfloor{\frac{n}{2}\rfloor}$. This leads to
\begin{displaymath}
	\ytableausetup{mathmode,boxsize=1.5em,centertableaux}
	\lambda\backslash\beta=
	\begin{ytableau}{\scriptstyle}
		\times & \none[\scriptstyle{\cdots}]& \times & {\scriptstyle f} &\none[\scriptstyle{\cdots}] & {\scriptstyle n-1}
	\end{ytableau}
	\quad \Rightarrow \quad \mathrm{spec}_{(n)}(A_n) = \left\{\big(\Dim + 2\ (n - f - 1)\big)\,f\;:\; f = 1,\dots, \lfloor \tfrac{n}{2}\rfloor\right\}\,.
\end{displaymath}
Hence the projector \eqref{eq:non_inductive_traceless_projectors} takes the form
\begin{equation}\label{eq:projector_symmetric0}
	P_n^{(n)} = Z^{(n)}\,\displaystyle{\prod_{f=1}^{\lfloor \tfrac{n}{2}\rfloor}} \left(1 - \frac{A_n}{\big(\Dim+2\ (n-f-1)\big)f}\right)\,.
\end{equation}
\end{example}
\vskip 4 pt
\begin{example}[Maximally-antisymmetric hook $O(\Dim,\C)$ isotypic projectors]\hphantom{ex}\medskip

For the partition $\lambda=(2,1^{n-2})$ one constructs $\mathrm{spec}_{\lambda}(A_n)$ for the skew shape diagrams $\lambda\backslash\beta$ with $\beta \in (2,1^{n-2})\slashdiv (2)$, which leads to the only possibility: \\
	\begin{equation}
		\ytableausetup{mathmode,boxsize=1.5em,centertableaux}
		\lambda\backslash\beta=
		\begin{ytableau}{\scriptstyle}
			\times & {\scriptstyle 1} \\
			\times\\
			\none[\svdots]\\
			\times\\
			{\scriptstyle 2-n}\\
		\end{ytableau} 
		\hspace{1cm} \Rightarrow \hspace{1cm} \mathrm{spec}_{(2,1^{n-2})}(A_n) = \big\{\Dim - n + 2 \, \big\}\,,
	\end{equation}
so
\begin{equation}
P_n^{\lambda}=Z^{\lambda}\left(1- \dfrac{A_n}{\Dim-n+2}\right)
\end{equation}
Note that in order for the $\GL(\Dim,\C)$-module in question to be present in $V^{\otimes n}$, one assumes $\Dim \geqslant n-1$, so the denominator is non-singular. 
\end{example}
\newpage
\begin{example}[Arbitrary hook $O(\Dim,\C)$ isotypic projectors ]\hphantom{aa}\medskip
For the partition $\lambda=(m,1^{n-m})$ with $m\geqslant 2$ one constructs $\mathrm{spec}_{\lambda}(A_n)$ by reconstructing the skew shape diagrams $\lambda\backslash\beta$ \lp with $|\lambda\backslash\beta| = 2f$\rp using the reverse jeu de taquin. The only starting tableaux are
\begin{displaymath}
	\ytableausetup{mathmode,boxsize=0.9em,centertableaux}
	\begin{ytableau}{\scriptstyle}
		\one &\one & \none[\scriptstyle{\cdots}]& \one & \times &\none[\scriptstyle{\cdots}]& \times \\
		\times\\
		\none[\svdots]\\
		\times\\
		\times\\
	\end{ytableau}
\hspace{1cm} \text{\lp with $2f$ entries `1', for all $f = 1,\dots, \lfloor \tfrac{m}{2}\rfloor$\rp,}
\end{displaymath}
which leads to the following possibilities:
\begin{displaymath}
	\ytableausetup{mathmode,boxsize=0.9em,centertableaux}
	\lambda\backslash\beta=
	\begin{ytableau}{\scriptstyle}
		\times &\times & \none[\scriptstyle{\cdots}]& \times & &\none[\scriptstyle{\cdots}]&\\
		\times\\
		\none[\svdots]\\
		\times\\
		\\
	\end{ytableau}
\hspace{1cm}\text{\lp for all $f = 1,\dots, \lfloor \tfrac{m}{2}\rfloor$\rp}
\end{displaymath}
and,
\begin{displaymath}
	\ytableausetup{mathmode,boxsize=0.9em,centertableaux}	
	\lambda\backslash\beta=\begin{ytableau}{\scriptstyle}
	\times & \none[\scriptstyle{\cdots}]& \times & &\none[\scriptstyle{\cdots}]&\\
	\times\\
	\times\\
	\none[\svdots]\\
	\times\\
\end{ytableau} 
\hspace{1cm}\text{\lp for all $f = 1,\dots, \lfloor \tfrac{m-1}{2}\rfloor$\rp}
\end{displaymath}
so,
\begin{equation}
\mathrm{spec}_{(\lambda)}(A_n) =\left\{f\,\big(\Dim+2\ \left(m-f\right)\big)-n  \right\} \  \cup \  \left\{f\,\big(\Dim+2\ \left(m-1-f\right)\big) \right\}\,.
\end{equation}
Then, the projector \eqref{eq:non_inductive_traceless_projectors}, is given by 
\begin{equation}\label{eq:projector_hook1}
	P_n^{\lambda} = Z^{\lambda}\,\displaystyle{\prod_{f=1}^{\lfloor \tfrac{m}{2}\rfloor}} \left(1 - \frac{A_n}{f(\Dim+2\ (m-f))-n}\,\right)\displaystyle{\prod_{f=1}^{\lfloor \tfrac{m-1}{2}\rfloor}} \left(1 - \frac{A_n}{f(\Dim+2\ (m-1-f))}\,\right)\,.
\end{equation}
\end{example}
\begin{remark}\label{rem:above_central}
These examples are specific cases where the Littlewood-Richardson rule (see appendix \ref{subsec:Littlewood_Richardson_rules}) simplifies in such a way that a more explicit formulation of \eqref{eq:non_inductive_traceless_projectors} can be obtained. In particular, in these restricted cases the product runs over the integers $f$ which implicitly parameterizes the irreducible modules $M^\lambda_n$ as $|\lambda|=n-2f$.
\end{remark}

\begin{mathematica}[\textit{BrauerAlgebra}]
	CentralIdempotent[\,$n$\,,\,$\lambda$\,,\,$\Dim$\,]\,\\
	\textit{Returns the element in $\Bn{n}(\Dim)$ realizing the $\lambda$-isotypic projection of tensors with respect to the action of $\Or(\Dim,\C)$ on $V^{\otimes}$. When $\Bn{n}(\Dim)$ is semisimple this element is central.}
\end{mathematica}

Below we give two examples of the projection operators obtained with the Mathematica package~\textit{BrauerAlgebra}. The expressions are given in their expanded form in the conjugacy class sum basis of the centralizer $\cn$ of $\sn$ in $B_n$. More precisely, the elements $K_{\xi}$ which appear in the equations below are conjugacy class sums of $\bn$ where the list $\xi$ are the ternary bracelets which parameterize the conjugacy classes of $\bn$ (see Theorem \ref{theo:classes-bracelets}). More details are given in chapter \ref{chap:cn}.
\begin{example}[Isotypic traceless projection operators for $n=3$ and $n=4$.]
\begin{itemize}
\item[\it i)] $n=3$:
\begin{equation}
\begin{aligned}
P^{(3)}_3&=\raisebox{-0.43\height}{\includegraphics[scale=1.]{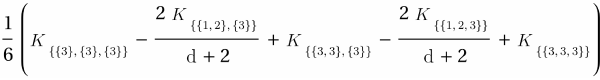}}\\
P^{(2,1)}_3&=\raisebox{-0.43\height}{\includegraphics[scale=0.8]{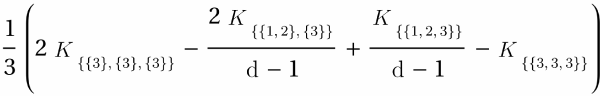}}\\
Z^{(1^3)}=P^{(1^3)}_3&=\raisebox{-0.3\height}{\includegraphics[scale=0.55]{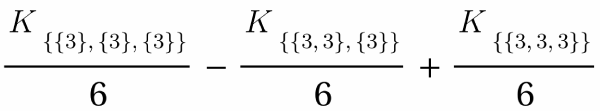}}
\end{aligned}
\end{equation}
\item[\it ii)] $n=4$:

\begin{equation}\label{eq:proj_4TL}
\hspace{-1.5cm}	\begin{aligned}
P^{(4)}_4&=\raisebox{-0.74\height}{\includegraphics[scale=1.2]{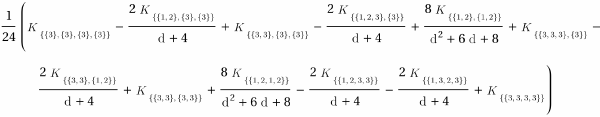}}\\
P^{(3,1)}_4&=\raisebox{-0.74\height}{\includegraphics[scale=1.2]{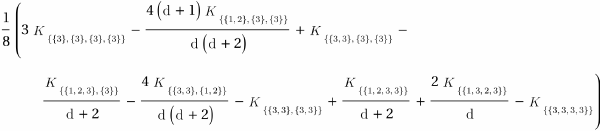}}\\
P^{(2,2)}_4&=\raisebox{-0.74\height}{\includegraphics[scale=1.2]{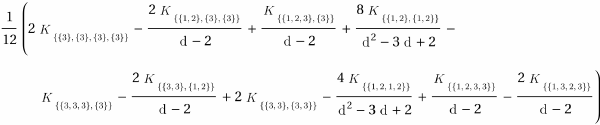}}\\
P^{(2,1^2)}_4&=\raisebox{-0.74\height}{\includegraphics[scale=1.]{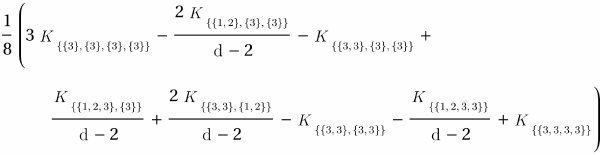}}\\
Z^{(1^4)}=P^{(1^4)}_4&=\raisebox{-0.43\height}{\includegraphics[scale=1.2]{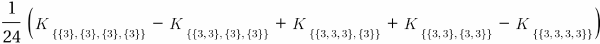}}
	\end{aligned}
\end{equation}
\end{itemize}
\end{example}

\begin{example}[More traceless projection operators obtained with Mathematica]
\begin{equation}
	P^{(3,2)}_5=\raisebox{-0.87\height}{\includegraphics[scale=1.4]{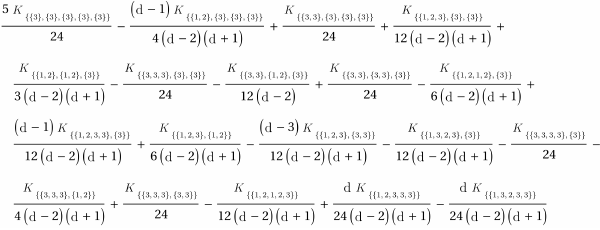}}\,
\end{equation}
\begin{equation}
	P^{(2,2,2)}_6=\raisebox{-0.92\height}{\includegraphics[scale=1.4]{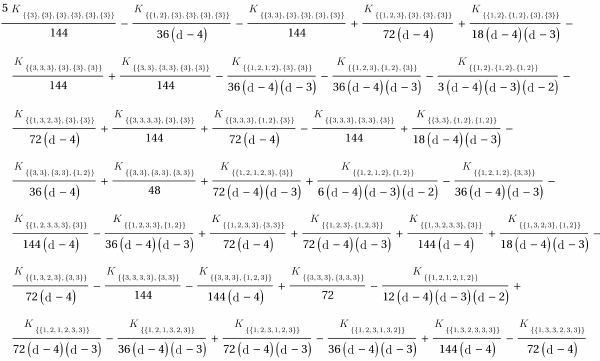}}\,
\end{equation}
\end{example}

\subsection{The traceless projector $P_n$}\label{subsec:universal_traceless_projectors}
Restricting our attention on the set of $\Or(\Dim,\C)$-modules $D^\lambda$ such that $\lambda \in\Lambda_n(\Dim)$ and $f_\lambda=0$ (traceless tensors), we construct the following element in $\bn$:
\begin{equation}\label{eq:factorized_traceless_expanded}
P_n=\sum_{\lambda\in\Lambda_n(\Dim) } P^\lambda_n.\,
\end{equation}
The $\mathfrak{r}$-image of $P_n$, $\mathfrak{r}(P_n)\in \End(V^{\otimes n})$ is the universal traceless projector presented in \cite{bulgakova2022construction} (see \cite[Cor. 4.5]{bulgakova2022construction}). The universal traceless projection operator also admits the following form \cite[Theo. 4.1]{bulgakova2022construction}
\begin{equation}\label{eq:factorized_traceless_projector}
	P_n=\prod_{\alpha\in\,\,\overbar{\mathrm{spec}}(A_n) } \left(1-\frac{1}{\alpha} A_n\right)\,.
\end{equation}
In the following two examples we assume $\Dim\geqslant n-1$ so that $\tensor{\overline{C}}{^{\,\mu}_\lambda_\nu}(\Dim)=  \tensor{C}{^{\,\mu}_\lambda_\nu}$ and $\mathrm{spec}_{\,\mu}(A_n)=\,\overbar{\mathrm{spec}}_{\,\mu}(A_n)$ (see Lemma \ref{lem:lemma_spec}).

\begin{example}[\textbf{The traceless projector for $n=3$}]\label{ex:traceless_proj_n3} 
To arrange the set of eigenvalues let us make use of the fact that 
\begin{equation}
	\def\arraystretch{0.7}
	\mathrm{spec}(A_3) = \mathrm{spec}_{(3)}(A_n) \cup \mathrm{spec}_{(2,1)}(A_n)\,,
\end{equation}
so one uses \eqref{eq:spec_An_full} defined in section \ref{subsec:traceless_projectors_app}. Applying the Littlewood-Richardson rule (see appendix \ref{subsec:Littlewood_Richardson_rules})  one finds:
\begin{equation}\label{eigenvalues_A3}
		\mathrm{spec}_{(3)}(A_3)=\lbrace \Dim+2\rbrace, \qquad \mathrm{spec}_{(2,1)}(A_3) =\lbrace \,\Dim-1 \rbrace\,.
\end{equation}
\noindent The operator \eqref{eq:factorized_traceless_projector} takes the form
\begin{align}
	P_3 &=\left(1-\frac{A_3}{\Dim-1}\,\right)\left(1-\frac{A_3}{\Dim+2}\,\right) \label{eq:factorized_traceless_projector_Br_3}\\
	&=1-\frac{2\Dim+1}{(\Dim-1)(\Dim+2)}\,A_3+\frac{1}{(\Dim-1)(\Dim+2)}\,(A_3)^2\,.
\end{align}
Upon expanding the previous expression in the diagrammatic basis of $\bn$ one finds:
\begin{equation}\label{eq:projector_Br_3}
	\begin{aligned}
P_3=&\Scale[0.8]{\raisebox{-.4\height}{\includegraphics[scale=0.6]{fig/id3.pdf}}}
		\,-\,\frac{\Dim+1}{(\Dim-1)(\Dim+2)}\,\Bigl(\  
			\Scale[0.8]{\raisebox{-.4\height}{\includegraphics[scale=0.6]{fig/d3a.pdf}}+\raisebox{-.4\height}{\includegraphics[scale=0.6]{fig/d3b.pdf}}+\raisebox{-.4\height}{\includegraphics[scale=0.6]{fig/d3c.pdf}}\  }
			\Bigr)\,\\[10pt]
			&+  \frac{1}{(\Dim-1)(\Dim+2)}\, \ \Bigl(
			\Scale[0.8]{ \raisebox{-.4\height}{\includegraphics[scale=0.6]{fig/d3d.pdf}}+\raisebox{-.4\height}{\includegraphics[scale=0.6]{fig/d3e.pdf}}+\raisebox{-.4\height}{\includegraphics[scale=0.6]{fig/d3f.pdf}}+\raisebox{-.4\height}{\includegraphics[scale=0.6]{fig/d3g.pdf}}+\raisebox{-.4\height}{\includegraphics[scale=0.6]{fig/d3h.pdf}}+\raisebox{-.4\height}{\includegraphics[scale=0.6]{fig/d3i.pdf}}}
			\Bigr)\,\,.
	\end{aligned}
	\end{equation}
The latter expansion can be very tedious and time consuming (even for a computer) due to the large number of products of diagrams involved, specially for $n>3$. A technique which allows one to bypass diagrammatic computations is presented in chapter \ref{chap:cn}. In particular, see section \ref{sec:applications} for the application of the technique.
\end{example}
\begin{example}[\textbf{The traceless projector for $n=4$.}]\label{ex:traceless_proj_n4} 
For the eigenvalues of $A_4$ we have:
\begin{equation}
	\def\arraystretch{0.7}
	\mathrm{spec}(A_4) = \mathrm{spec}_{(4)}(A_n) \cup \mathrm{spec}_{(3,1)}(A_n) \cup \mathrm{spec}_{(2,2)}(A_n) \cup \mathrm{spec}_{(2,1,1)}(A_n)\,,
\end{equation}
Applying the Littlewood-Richardson rule (appendix \ref{subsec:Littlewood_Richardson_rules})  one finds:
\begin{equation}\label{eigenvalues_A4}
	\begin{aligned}
		\mathrm{spec}_{(4)}(A_4) &=\lbrace{\Dim+4\, ,\, 2(\Dim+2)}\rbrace, \qquad \mathrm{spec}_{(3,1)}(A_4) =\lbrace{\Dim\,,\,\Dim+2}\rbrace \qquad\\
		\mathrm{spec}_{(2,2)}(A_4) &=\lbrace{\Dim-2\,,\, 2(\Dim-1)}\rbrace, \qquad 
		\mathrm{spec}_{(2,1,1)}(A_4) =\lbrace{\Dim-2}\rbrace\,,
	\end{aligned}
\end{equation}
so that 
\begin{equation}
	\begin{aligned}
		\mathrm{spec}(A_4)=\lbrace\, \Dim+4\,,2(\Dim+2) \,, \Dim\,,\,\Dim+2\,,\Dim-2\,, 2(\Dim-1)\rbrace\,.
	\end{aligned}
\end{equation}

A this level one notes that some elements in the above sets vanish for particular non-zero integer values $\Dim\in \{1,2\}$, which mark exactly the non-semisimple regime of $B_{n}(\Dim)$ \cite{Rui_Br_semisimple}. Also for $\Dim = 1$, $\mathrm{spec}_{(2,2)}(A_4)$ and $\mathrm{spec}_{(2,1,1)}(A_4)$ contain a negative element to be excluded from the consideration according to Lemma \ref{lem:diag_ev_A_2}. Nevertheless we proceed by keeping $\Dim$ generic, such that all the above elements enter $\mathrm{spec}(A_4)$.

\noindent The operator \eqref{eq:factorized_traceless_projector} takes the following factorized form
\begin{equation}\label{eq:factorized_traceless_projector_Br_4}
	\hspace{-0.5cm}	P_4=\scalemath{0.91}{\left(1-\frac{A_4}{\Dim+4}\,\right)\left(1-\frac{A_4}{2(\Dim+2)}\,\right)\left(1-\frac{A_4}{\Dim}\,\right)\left(1-\frac{A_4}{\Dim+2}\,\right) \left(1-\frac{A_4}{\Dim-2}\,\right)\left(1-\frac{A_4}{2(\Dim-1)}\,\right)\,.}
\end{equation}
Here we do not present the expanded diagrammatic form of the projector as it is too cumbersome. In section \ref{sec:applications} we give the expression of the expanded form in the basis of $\mathcal{C}_4$ (the centralizer of $\Sn{4}$ in $\Bn{4})$).  
\end{example}
\begin{mathematica}[\textit{BrauerAlgebra}]
	CentralIdempotent[\,$n$\,,\,0\,,\,$\Dim$\,]\,\\
	\textit{Returns the element in $\Bn{n}(\Dim)$ realizing the 1-traceless projection of tensors. When $\Bn{n}(\Dim)$ is semisimple this element is central and idempotent.}
\end{mathematica}

\section{The (f+1)-traceless and full-trace projectors $\lbrace P^{\lambda}_n \,, \lambda\vdash n-2f\rbrace$.}\label{sec:central_h_traceless}

\subsection{The averaged arc induction map $\mathcal{A}$}\label{subsection:average_arc}
The averaged arc induction linear map $\mathcal{A}\,:\,B_{n-2}\to B_n$ is defined on a single diagram $b\in B_{n-2}$ as the sum of all possible diagrams obtained from $b$ by the insertion of a pair of arcs, symmetric with respect to the horizontal middle line. For example, 
\begin{equation}
	\begin{aligned}
		&\mathcal{A}(\,\,\raisebox{-.4\height}{\includegraphics[scale=0.2]{fig/id2.pdf}}\,\,)=\,\raisebox{-.4\height}{\includegraphics[scale=0.55]{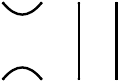}}+\raisebox{-.4\height}{\includegraphics[scale=0.55]{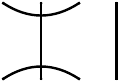}}+\raisebox{-.4\height}{\includegraphics[scale=0.55]{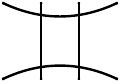}}+\raisebox{-.4\height}{\includegraphics[scale=0.55]{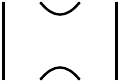}}+\raisebox{-.4\height}{\includegraphics[scale=0.55]{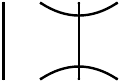}}+\raisebox{-.4\height}{\includegraphics[scale=0.55]{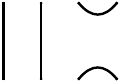}}\,, \\[6pt]
		&\mathcal{A}(\,\,\raisebox{-.4\height}{\includegraphics[scale=0.2]{fig/s12.pdf}}\,\,)=\,\raisebox{-.4\height}{\includegraphics[scale=0.55]{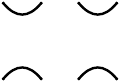}}+\raisebox{-.4\height}{\includegraphics[scale=0.55]{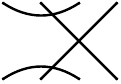}}+\raisebox{-.4\height}{\includegraphics[scale=0.55]{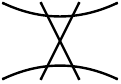}}+\raisebox{-.4\height}{\includegraphics[scale=0.55]{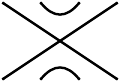}}+\raisebox{-.4\height}{\includegraphics[scale=0.55]{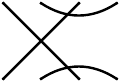}}+\,\raisebox{-.4\height}{\includegraphics[scale=0.55]{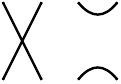}}\,,\\[6pt]
		&\mathcal{A}(\,\,\raisebox{-.4\height}{\includegraphics[scale=0.2]{fig/b2.pdf}}\,\,)=\,2\,\raisebox{-.4\height}{\includegraphics[scale=0.55]{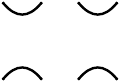}}+2\,\raisebox{-.4\height}{\includegraphics[scale=0.55]{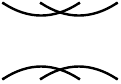}}+2\,\raisebox{-.4\height}{\includegraphics[scale=0.55]{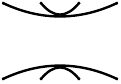}}\,.
	\end{aligned}
\end{equation}
For any pair of integer $(i,j)$ such that $1\leqslant i<j \leqslant n$, we define the linear map $\mathfrak{a}_{ij}$ which adds a pair of symmetric arcs to any diagram $b\in B_{n-2}\subset B_n$: one of the arc joins the vertices $i$ and $j$ of the upper row of $b$, while the second arc is symmetric to the first arc with respect to the middle horizontal line of $b$. In terms of $\mathfrak{a}_{ij}$ the action of the averaged arc induction map $\mathcal{A}$ is given by 
\begin{equation}\label{eq:def_A_aij}
	\begin{aligned}
		\mathcal{A}\,:\, B_{n-2}&\to B_n \\
		b\,&\mapsto\sum_{1\leqslant i<j \leqslant n} \mathfrak{a}_{ij} (b).    
	\end{aligned}
\end{equation}
We denote by $\cn$ the centralizer of $\sn$ in $\bn$.
\begin{remark}\label{remark_arc_induction}\hphantom{h}
\begin{itemize}
	\item[\it{i)}] $\mathcal{A}$ sends the identity in $B_{n-2}$ to $A_n$.
	\item[\it{ii)}] $\mathcal{A}$ maps elements of the centralizer $\Cn{n-2}$ to elements of $\cn$.
	\item[\it{iii)}] When the particular expression of $x\in B_{n-2}$ is irrelevant for the discussion we may represent it as a rectangle with $n-2$ vertices
	\begin{equation}\label{eq:Rect_pic}
		x\mapsto \raisebox{-.35\height}{\includegraphics[scale=0.45]{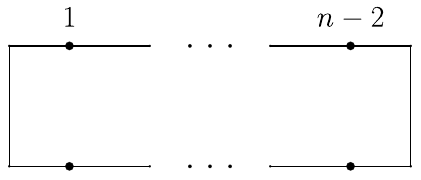}} \,,
	\end{equation}
	and the action of $\mathcal{A}$ on $x$ can be pictured as  
	\begin{equation}\label{eq:AP}
		\mathcal{A}(x)\mapsto \sum_{1\leqslant i<j \leqslant n}\raisebox{-.35\height}{\includegraphics[scale=0.7]{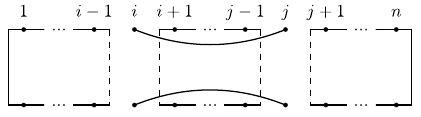}} .
	\end{equation}
\end{itemize}
\end{remark}


\subsection{Action of $\mathcal{A}(P^\lambda_{n-2})$ on irreducible tensors}\label{subsection:action_A_on_irreducible_tensors}
\paragraph{Diagrammatic presentation of irreducible tensors.} 

Let $T^{\beta}_n\in D^{\beta}\subset V^{\otimes n}$ denote an irreducible tensor with respect to $\Or(\Dim,\C)$. As already mentioned in section \ref{subsection:Schur_Weyl_O} of chapter \ref{chap:Irreducible_MAG} such a tensor belongs \textit{schematically} to the space $g^{\otimes f} \otimes D^{\beta}$ where $f=\frac{n-|\beta|}{2}$ and $D^{\beta}$ is the space of traceless tensors. 
Hence, still \textit{schematically}, $T^{\beta}_n$ as the following form
\begin{equation}
	g^{\otimes f} \otimes T^{(\beta)}_{n-2f}\,
\end{equation}
where $T^{(\beta)}_{n-2f}\in D^{\beta}\subset V^{\beta} $ is a traceless irreducible tensor of order $n-2f$. Moreover one has
\begin{equation}
	T^{(\beta)}_{n-2f}\cdot Z^{\lambda}=\delta{_{\beta\lambda}}\, T^{(\beta)}_{n-2f}\,,
\end{equation} 
where $\delta$ is the Kronecker delta symbol and $Z^{\lambda}$ is a Young central idempotent. In more details, the components of $T^{\beta}_n$ are sums of expressions with $f$ product of the metric times the components of a traceless tensor of order $|\beta|$, and $T^{\beta}_n$ is a $(f+1)$-traceless tensor. With respect to the action of Brauer diagrams on irreducible tensors $T^{\beta}_n$, the latter can be illustrated by a sum of one-row diagrams on $n$ nodes (each symbolising a factor $V$ in $V^{\otimes n}$), with distinguished $f$ pairs of nodes joined by an arc (symbolising insertions of the metric $g$), and the remaining $n-2f$ nodes representing a traceless tensor in $D^{\beta}$. For example, for $|\beta|=n-2$, that is $f=1$, a general irreducible 2-traceless tensor $T^{\beta}_{n}$ is represented as
\begin{equation}
	T^{\beta}_{n}= \sum_{1\leqslant k <l \leqslant n}  \raisebox{-.5\height}{\includegraphics[scale=0.7]{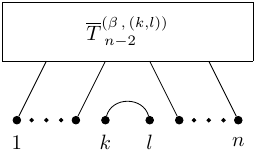}}\,.
\end{equation}
Recall the structure the irreducible decomposition \eqref{theo:irreducible_Riemann_O} of the Riemann tensor presented in chapter \ref{chap:Irreducible_MAG}.\medskip

Then Brauer diagrams act on the \textit{irreducible tensor diagrams} by identifying the upper nodes of the former and the nodes of the latter, followed by straightening the lines and omitting closed loops. The resulting irreducible tensor diagram is multiplied by $\Dim^{\ell}$ if $\ell$ closed loops were omitted \phantomsection\cite{Brauer,Brown_Br,HW_discriminants_Br_algebra,Hartmann_Paget_modules_Br}. In particular the action of $\mathcal{A}(P^\lambda_{n-2})$ on $T^{\beta}_{n}$ can be represented as follow :\medskip

\begin{equation}
	T^{\beta}_{n}\cdot \mathcal{A}(P^\lambda_{n-2})= \sum_{\begin{array}{c}
			{\scriptstyle 1\leqslant k <l \leqslant n}\\
			{\scriptstyle 1\leqslant i <j \leqslant n}
	\end{array}} \,\,\, \raisebox{-.5\height}{\includegraphics[scale=0.7]{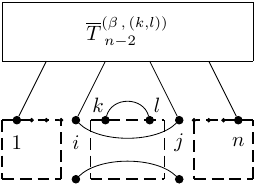}}\,.
\end{equation}

\begin{remark}
When translated to the action of $\bn$ on itself, that is to the right regular $\bn$-module, this diagrammatic construction of the action of $\bn$ on irreducible tensors gives rise to the diagrammatic basis in the standard $\bn$-modules $\Delta^{\lambda}_n$ (see \cite{HW_discriminants_Br_algebra,Hartmann_Paget_modules_Br} and \cite{Graham_Lehrer_cellular} in relation to cell modules). 
\end{remark}

The elements $P^\lambda_{n-2}$ belong to $\Cn{n-2}$ while $\mathcal{A}(P^\lambda_{n-2})$ belong to $\cn$. Hence, in order to analyze the action of  $\mathcal{A}(P^\lambda_{n-2})$ on $D^\beta$ it is sufficient to focus attention on the following particular element of $D^\beta$:
\begin{equation}\label{eq:std_module_farcs}
	T^{(\beta,f)}= \raisebox{-.5\height}{\includegraphics[scale=0.8]{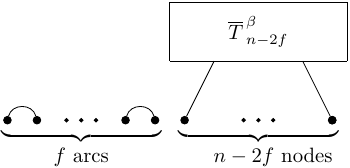}}\,,
\end{equation}
where $\overline{T}^\beta_{n-2f}$ is a traceless tensor. From the definition of the product of Brauer diagrams \eqref{eq:multiplication_Bn} it follows that the number of arcs in a diagram can not be reduced via multiplication by other diagrams. We denote by $J_{f} \subset B_{n}$ the span of all diagrams with at least $f$ arcs.\footnote{By saying that a diagram has $f$ arcs we mean that it has $f$ arcs in its upper (or equivalently, lower) row.}\medskip

The action of $\mathcal{A}(P^{\lambda}_{n-2})$ on $T^{(\beta,f)}$ is represented as follows
\begin{equation}
	T^{(\beta,f)}\cdot \mathcal{A}(P^{\lambda}_{n-2})=\sum_{1\leqslant i<j \leqslant n}\,\,\raisebox{-.5\height}{\includegraphics[scale=0.7]{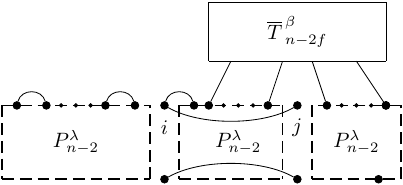}}\,.
\end{equation}
Let $f_\lambda=\frac{n-|\lambda|}{2}$, such that $P^{\lambda}_{n-2}$ is an $f_\lambda$-traceless operator annihilated by $J_{f_\lambda}$ and $\mathcal{A}(P^{\lambda}_{n-2})$ is annihilated by $J_{f_\lambda+1}$.\medskip
\begin{flushleft}
	\begin{tabular}{ll}
		\textbf{Fact 0:} & \hspace{2cm} If $\,f\neq f_\lambda\,$  then $\, T^{(\beta,f)}\cdot\mathcal{A}(P^{\lambda}_{n-2})=0$.
	\end{tabular}
\end{flushleft}
Indeed, recall that $T^{(\beta,f)}$ is an $(f+1)$-traceless tensor.  If  $f<f_\lambda\,$, then $T^{(\beta,f)}\cdot\mathcal{A}(P^{\lambda}_{n-2})=0$, because  $\mathcal{A}(P^{\lambda}_{n-2})\in J_{f_\lambda}$ and $T^{(\beta,f)}$ is annihilated by $J_{f+1}$. If $f>f_\lambda$, then $T^{(\beta,f)}\cdot\mathcal{A}(P^{\lambda}_{n-2})=0$, because $\mathcal{A}(P^{\lambda}_{n-2})$ is annihilated by $J_{f_\lambda+1}$.\medskip


To prepare for the proof of the next Lemma, let us demonstrate:\medskip
\begin{flushleft}
\begin{tabular}{ll}
\textbf{Fact 1:} & \hspace{2cm} If $\,|\beta|=|\lambda|\,$ and $\,\beta\neq \lambda\,$ then $\,T^{(\beta,1)}\cdot \mathcal{A}(P^\lambda_{n-2})=0$.
\end{tabular}
\end{flushleft}

\noindent The next lemma is a mere generalization of \textbf{Fact 1} to arbitrary $f\geqslant2$. 

%
%
%

\begin{mdframed}[style=mystyle,frametitle=Proof of Fact 1.]\label{ex:preLemma_Arc_induction_1}
	\begin{equation}
		T^{(\beta,1)}\cdot \mathcal{A}(P^\lambda_{n-2})=\sum_{1\leqslant i <j \leqslant n} \,\,\, \raisebox{-.42\height}{\includegraphics[scale=0.6]{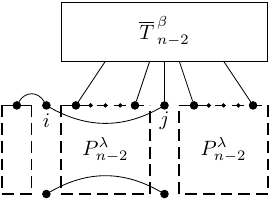}}
	\end{equation}
	There are three possible configurations for the position of the arc $(ij)$:
	\begin{itemize}
		\item[$1)$] The arc $(ij)$ is  directly connected to $\overline{T}^\beta_{n-2}$:
		\begin{equation}
			T^{(\beta,1)}\cdot \mathfrak{a}_{ij}(P^\lambda_{n-2})=\raisebox{-.38\height}{\includegraphics[scale=0.6]{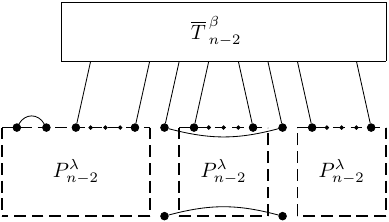}}\,.
		\end{equation}
		The tensor $\overline{T}^\beta_{n-2}$ is traceless hence $T^{(\beta,1)}\cdot \mathfrak{a}_{ij}(P^\lambda_{n-2})=0$. 
		\item[$2)$] The arc $(ij)$ is connected to the only avaible  arc of $T^{(\beta,1)}$ (which means $i=1$, $j=2$): 
		\begin{equation}
			T^{(\beta,1)}\cdot \mathfrak{a}_{ij}(P^\lambda_{n-2})=\raisebox{-.43\height}{\includegraphics[scale=0.6]{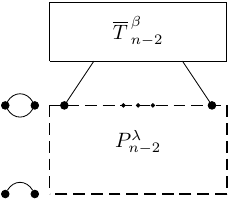}}\,.
		\end{equation}
		All the nodes of $P^\lambda_{n-2}$ are directly connected to $\overline{T}^\beta_{n-2}$. Because $\overline{T}^\beta_{n-2}\in D^{\beta}$ and by construction
		$D^{\beta}\cdot P^\lambda_{n-2}=0$, we have $T^{(\beta,1)}\cdot \mathfrak{a}_{ij}(P^\lambda_{n-2})=0$.
		\item[$3)$] One node of the arc $(ij)$ is connected to a node of the arc of $T^{(\beta,1)}$ and the other one is connected to $\overline{T}^\beta_{n-2}$ (which means $i=1$ or $i=2$, and  $j>2$):
		\begin{equation}\label{eq:f_1_action_AP}
			T^{(\beta,1)}\cdot \mathfrak{a}_{ij}(P^\lambda_{n-2})=\raisebox{-.43\height}{\includegraphics[scale=0.8]{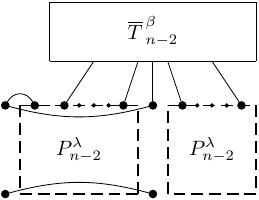}}
		\end{equation}
		To show that \eqref{eq:f_1_action_AP} equals zero, let us analyze the structure of $P^{\lambda}_{n-2}$ in more detail. First note that all diagrams entering $P^{\lambda}_{n-2}$ which are in $J_1$ annihilate $\overline{T}^\beta_{n-2}$. Indeed the upper arc(s) of such diagrams will either be connected directly to $\overline{T}^\beta_{n-2}$ \textup{(}which yield zero contribution according to first case considered above\textup{)}, or connected to a node of $\overline{T}^\beta_{n-2}$ and to the only remaining arc node of $T^{(\beta,1)}$. The latter configuration result in an arc contraction of $\overline{T}^\beta_{n-2}$ as depicted in the figure below
		\begin{equation}\label{eq:arc_contraction_zero_Fact1}
			\raisebox{-.42\height}{\includegraphics[scale=0.7]{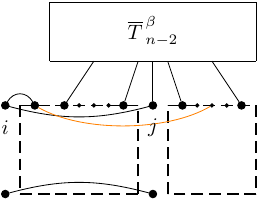}}\,.
		\end{equation}
		Hence it is enough to focus on the diagrams within $P^{\lambda}_{n-2}$ which contain no arcs, $i.e.$ permutation diagrams. In particular, note that
		\begin{equation}
			P^{\lambda}_{n-2}=Z^\lambda+\ldots
		\end{equation}
		where ellipses denote elements in $J_1$.\medskip
		
		Hence one has 
		\begin{equation}
			T^{(\beta,1)}\cdot \mathfrak{a}_{ij}(P^\lambda_{n-2})=T^{(\beta,1)}\cdot \mathfrak{a}_{ij}(Z^\lambda)=	\raisebox{-.48\height}{\includegraphics[scale=0.7]{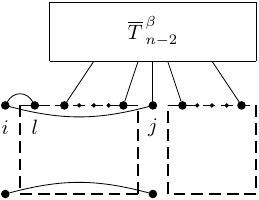}}
		\end{equation}
		The previous configuration can be cast into the following equivalent forms
		\begin{equation}
			\raisebox{-.48\height}{\includegraphics[scale=0.65]{fig/case_2_Arc_indcution_proof_2_traceless_l_2floor.pdf}}=\frac{1}{\Dim}\hspace{0.3cm}	\raisebox{-.48\height}{\includegraphics[scale=0.65]{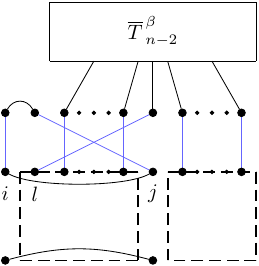}}=\frac{1}{\Dim}\hspace{0.3cm}\raisebox{-.48\height}{\includegraphics[scale=0.65]{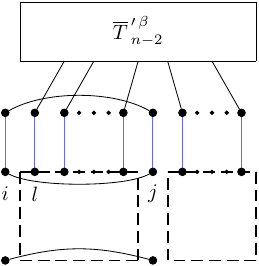}}\,
		\end{equation}
		The above sequence of diagrams translates to
		\begin{equation}
			T^{(\beta,1)}\cdot \mathfrak{a}_{ij}(Z^{\lambda})=\dfrac{1}{\Dim}\,T^{(\beta,1)}\cdot \left(s_{lj}\, \mathfrak{a}_{ij}(Z^{\lambda}) \right)=\dfrac{1}{\Dim}\,T^{\,\prime\,(\beta,1)}\cdot \mathfrak{a}_{ij}(Z^{\lambda})\,,
		\end{equation}
		where 
		\begin{equation}
			T^{\,\prime\,(\beta,1)}=T^{(\beta,1)}\cdot s_{lj}=\raisebox{-.5\height}{\includegraphics[scale=0.7]{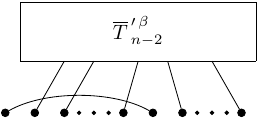}}\,\,\,\in D^\beta\,.
		\end{equation}
		\vskip 6pt
		
		The central Young projector $Z^{\lambda}$ acts directly on $\overline{T}^{\,\prime\,\beta}_{n-2}$, which gives zero.\medskip
		
		Finally we conclude that 
		\begin{equation}
			T^{(\beta,1)}\cdot\mathcal{A}(P^{\lambda}_{n-2})=0\,.
		\end{equation}
	\end{itemize}
\end{mdframed}
\hphantom{hh}\medskip
The following important lemma generalizes the previous fact. 
\begin{lemma}\label{lem:arcinduction_tens} Let $\lambda\,,\beta \in \Lambda_n(\Dim)$, and let $T\in D^\beta \subset V^{\otimes n}$,
\begin{equation}
\text{If} \hspace{1cm} \beta\neq \lambda \hspace{1cm} \text{then} \hspace{1cm} T^{(\beta,f)}\cdot\mathcal{A}(P^{\lambda}_{n-2})=0\,.
\end{equation}
In other words, $\mathcal{A}(P^{\lambda}_{n-2})$ preserves the isotypic component $(D^\lambda)^{\oplus m_\lambda}$.
\end{lemma}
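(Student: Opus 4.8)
The plan is to reduce the action of $\mathcal{A}(P^{\lambda}_{n-2})$ to a single trace followed by the lower-level projector $P^{\lambda}_{n-2}$, and then to exploit that $P^{\lambda}_{n-2}$ annihilates every isotypic component of $V^{\otimes(n-2)}$ except $(D^{\lambda})^{\oplus m_\lambda}$; the whole argument is organised as an induction, with \textbf{Fact 1} serving as the base case $f=1$. First I would dispose of the trivial situation: by \textbf{Fact 0} the quantity $T^{(\beta,f)}\cdot\mathcal{A}(P^{\lambda}_{n-2})$ vanishes unless $f=f_\lambda$, so I may assume $f=f_\lambda=f_\beta$, hence $|\beta|=|\lambda|$, so that $\beta$ and $\lambda$ are genuinely distinct partitions of the same integer $n-2f$.

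Next I would expand $\mathcal{A}(P^{\lambda}_{n-2})=\sum_{1\leqslant i<j\leqslant n}\mathfrak{a}_{ij}(P^{\lambda}_{n-2})$ and analyse a single term. Inserting the symmetric pair of arcs at $(i,j)$ means that, when $T^{(\beta,f)}$ is contracted against $\mathfrak{a}_{ij}(P^{\lambda}_{n-2})$, the inserted lower arc contracts precisely the two legs $i,j$ of $T^{(\beta,f)}$ while the remaining $n-2$ legs are fed into $P^{\lambda}_{n-2}$. Diagrammatically this yields, schematically,
\[
T^{(\beta,f)}\cdot\mathfrak{a}_{ij}(P^{\lambda}_{n-2})=g^{a_ia_j}\otimes\big(\mathrm{tr}_{ij}(T^{(\beta,f)})\cdot P^{\lambda}_{n-2}\big)\,,
\]
with the output metric $g^{a_ia_j}$ a mere spectator. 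The heart of the matter is then the structure of $\mathrm{tr}_{ij}(T^{(\beta,f)})$. Since $T^{(\beta,f)}$ is, schematically, $g^{\otimes f}\otimes\overline{T}^{\beta}_{n-2f}$ with $\overline{T}^{\beta}_{n-2f}$ traceless, the contraction at $(i,j)$ is governed by the elementary metric identities $g_{ab}g^{ab}=\Dim$, $g_{ab}g^{ac}=\delta_b^{\,c}$ and $g_{ab}\overline{T}^{\,ab\cdots}=0$: if both legs lie in the traceless block the term is zero; if at least one leg is an arc endpoint the two metrics either close a loop or merge (these are \textbf{Fact 1} cases $1$--$3$ specialised to $f=1$), leaving $\overline{T}^{\beta}_{n-2f}$ intact and producing an element of $g^{\otimes(f-1)}\otimes\overline{T}^{\beta}_{n-2f}$. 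Hence $\mathrm{tr}_{ij}(T^{(\beta,f)})$ is either $0$ or again an element of the $D^{\beta}$-isotypic component of $V^{\otimes(n-2)}$. Equivalently, this is just the $\Or(\Dim,\C)$-equivariance of the trace map $\mathrm{tr}_{ij}$ together with Schur's lemma.

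It then remains to invoke that $P^{\lambda}_{n-2}$ annihilates the $D^{\beta}$-isotypic component of $V^{\otimes(n-2)}$ whenever $\beta\neq\lambda$. For $f=1$ the partition $\lambda$ is traceless at level $n-2$, and this is Theorem \ref{theo:non_inductive_traceless_projectors}; for $f\geqslant2$ it is the isotypic-projector property of $P^{\lambda}_{n-2}$ at the lower level $n-2$, available by the induction hypothesis. Thus every summand vanishes and $T^{(\beta,f)}\cdot\mathcal{A}(P^{\lambda}_{n-2})=0$. Finally, the reformulation ``$\mathcal{A}(P^{\lambda}_{n-2})$ preserves $(D^{\lambda})^{\oplus m_\lambda}$'' follows because $\mathcal{A}(P^{\lambda}_{n-2})\in B_n$ commutes with the $\Or(\Dim,\C)$-action and therefore maps each isotypic component into itself; having shown it kills every $(D^\beta)^{\oplus m_\beta}$ with $\beta\neq\lambda$, its image is contained in $(D^{\lambda})^{\oplus m_\lambda}$.

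I expect the two delicate points to be, first, the explicit diagrammatic check that in the mixed configurations (one inserted arc endpoint on an arc of $T^{(\beta,f)}$ and the other on the traceless block, or the two endpoints on two distinct arcs of $T^{(\beta,f)}$) the traceless block $\overline{T}^{\beta}_{n-2f}$ truly survives intact so that $\mathrm{tr}_{ij}(T^{(\beta,f)})$ stays in the $D^{\beta}$-isotypic component; and second, arranging the induction so that the annihilation property of $P^{\lambda}_{n-2}$ used above is logically prior to the present statement. An induction on the tensor order $n$ (equivalently on $f$), with \textbf{Fact 1} as the base case, closes this gap. Within such a scheme the elaborate arc-sliding of \textbf{Fact 1} case $3$ is unnecessary, since one may invoke the annihilation property of $P^{\lambda}_{n-2}$ directly rather than reducing it to its arc-free part.
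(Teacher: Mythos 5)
Your proof is correct, and it takes a genuinely different route from the paper's. The paper establishes the lemma by a long diagrammatic case analysis: the three possible placements of the inserted arc, the splitting of $P^{\lambda}_{n-2}$ into parts with a fixed number of arcs, and the ``chains of arcs'' with the unfolding-by-transpositions trick of Facts 1--2 — in essence verifying by hand, configuration by configuration, that contracting two slots of $T^{(\beta,f)}$ leaves it inside the $D^{\beta}$-isotypic component. You replace all of this by two observations: the exact factorization $T\cdot\mathfrak{a}_{ij}(b)=g^{a_ia_j}\otimes\bigl(\mathrm{tr}_{ij}(T)\cdot b\bigr)$, which follows immediately from the component formula \eqref{eq:operator_bn} applied to $\mathfrak{a}_{ij}(b)$ (the inserted lower arc contracts slots $i,j$, the upper arc is a spectator, and $b$ acts on what remains); and the $\Or(\Dim,\C)$-equivariance of $\mathrm{tr}_{ij}$, which by Schur's lemma (an isotypic component absorbs every subrepresentation isomorphic to it) places $\mathrm{tr}_{ij}(T^{(\beta,f)})$ inside the $D^{\beta}$-isotypic component of $V^{\otimes(n-2)}$, after which the defining property of $P^{\lambda}_{n-2}$ at level $n-2$ kills every summand. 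Both arguments consume exactly the same inductive input — the isotypic-projector property of $P^{\lambda}_{n-2}$ one level down, with the $f=1$ case resting on Theorem \ref{theo:non_inductive_traceless_projectors} and $f\geqslant 2$ on the already-built lower level — so your bookkeeping about logical priority matches what the paper does implicitly (its own proof repeatedly invokes $D^{\beta}\cdot P^{\lambda}_{n-2}=0$), and your remark that this renders the arc-sliding of Fact 1, case 3, dispensable is accurate. What your approach buys is brevity and a conceptual explanation of \emph{why} the case analysis closes: the diagrammatics were re-proving equivariance of the trace by hand. What the paper's approach buys is that it remains entirely within Brauer-diagram manipulations, consistent with the computational, diagram-level toolkit the thesis develops.
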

The proof is given in the appendix \ref{subsec:proof_Lemma_arc_induc_tens}.
\begin{proposition}\label{prop:arc_induction} For any $\lambda\, \in \Lambda_n(\Dim)$ and $T\in V^{\otimes n}$
	\begin{equation}
		\begin{aligned}
T\cdot \mathcal{A}(P^{\lambda}_{n-2})&=T\cdot\left( A_n \,P^{\lambda}_n \right)\,,\\
&=T\cdot\left(\,P^{\lambda}_n \, A_n  \right)\,.
		\end{aligned}
	\end{equation}
\end{proposition}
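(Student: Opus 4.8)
The plan is to prove the identity $T\cdot\mathcal{A}(P^{\lambda}_{n-2})=T\cdot(A_n\,P^{\lambda}_n)=T\cdot(P^{\lambda}_n\,A_n)$ by decomposing an arbitrary tensor $T\in V^{\otimes n}$ into its isotypic components with respect to $\Or(\Dim,\C)$ and checking the identity on each piece separately. By completeness of the projectors $P^\lambda_n$ (the partition of unity in \eqref{eq:properties_central_idempotents}), it suffices to verify the claim when $T\in(D^\beta)^{\oplus m_\beta}$ for each $\beta\in\Lambda_n(\Dim)$, and by linearity we may take $T=T^{(\beta,f_\beta)}$ of the standard form \eqref{eq:std_module_farcs}. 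The two right-hand sides are equal automatically: since $A_n\in\cn$ and, in the semisimple regime, $P^{\lambda}_n$ is central in $\bn$, the elements $A_n$ and $P^{\lambda}_n$ commute in their action on $V^{\otimes n}$, so I will note this first and then concentrate on the single equality $T\cdot\mathcal{A}(P^\lambda_{n-2})=T\cdot(A_n P^\lambda_n)$.

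First I would dispose of the case $\beta\neq\lambda$. On the left-hand side, Lemma \ref{lem:arcinduction_tens} gives directly $T^{(\beta,f)}\cdot\mathcal{A}(P^\lambda_{n-2})=0$ for every $f$, since $\mathcal{A}(P^\lambda_{n-2})$ preserves the isotypic component $(D^\lambda)^{\oplus m_\lambda}$ and annihilates everything else. On the right-hand side, $T\in(D^\beta)^{\oplus m_\beta}$ is annihilated by $P^\lambda_n$ when $\beta\neq\lambda$ (pairwise orthogonality together with $T\cdot P^\beta_n=T$), so $T\cdot(A_n P^\lambda_n)=(T\cdot A_n)\cdot P^\lambda_n$; here I use that $A_n\in\cn$ maps the isotypic component $(D^\beta)^{\oplus m_\beta}$ into itself, which follows from Lemma \ref{lem:A_block_diagonal} (the eigenvalue structure of $A_n$ respects the decomposition into simple $\bn$-modules $M^\beta_n$). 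Hence $T\cdot A_n$ still lies in $(D^\beta)^{\oplus m_\beta}$ and is killed by $P^\lambda_n$, giving zero on both sides.

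The substantive case is $\beta=\lambda$. By \textbf{Fact 0} established before Lemma \ref{lem:arcinduction_tens}, only the components with $f=f_\lambda$ survive the action of $\mathcal{A}(P^\lambda_{n-2})$, so I may assume $T=T^{(\lambda,f_\lambda)}$. The strategy is to compute $T\cdot\mathcal{A}(P^\lambda_{n-2})=\sum_{1\leqslant i<j\leqslant n}T\cdot\mathfrak{a}_{ij}(P^\lambda_{n-2})$ and match it term-by-term against $T\cdot(A_n P^\lambda_n)$. Writing $A_n=\sum_{1\leqslant i<j\leqslant n}d_{ij}$, the right-hand side is $\sum_{i<j}T\cdot(d_{ij}\,P^\lambda_n)$, so the goal reduces to showing that inserting a pair of symmetric arcs at positions $(i,j)$ into $P^\lambda_{n-2}$ has the same effect on the irreducible tensor $T^{(\lambda,f_\lambda)}$ as first applying $P^\lambda_n$ and then closing an arc $d_{ij}$. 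The key diagrammatic observation is that, because $T^{(\lambda,f_\lambda)}$ is already $(f_\lambda+1)$-traceless and equals $T^{(\lambda,f_\lambda)}\cdot P^\lambda_n$, acting by $P^\lambda_n$ on the right is transparent, and the arc-insertion map $\mathfrak{a}_{ij}$ simply reconstructs the single additional metric factor created by $d_{ij}$. I would carry this out using the same three-configuration case analysis as in the proof of \textbf{Fact 1} (arc connected directly to the traceless block, arc connected to an existing arc of $T$, or arc straddling the two), tracking how the $\Dim^\ell$ loop factors and the residual permutation diagram $Z^\lambda=P^\lambda_{n-2}+(\text{elements of }J_1)$ contribute.

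The main obstacle will be the bookkeeping in this last matching: one must verify that the combinatorial weights produced by summing over the arc positions $(i,j)$ on the left coincide exactly with those produced on the right, where $A_n$ acts \emph{after} the projector $P^\lambda_n$ has already rearranged $T$ into the standard shape. In particular I expect the delicate point to be the configurations where the inserted arc interacts with the $f_\lambda$ pre-existing arcs of $T^{(\lambda,f_\lambda)}$, since these generate the $1/\Dim$ factors and permutation moves $s_{lj}$ that appeared in \textbf{Fact 1}; one has to confirm that these factors are precisely reproduced by the eigenvalue bookkeeping of $A_n$ via Lemma \ref{lem:A_block_diagonal}, rather than by any explicit diagram expansion. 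Once the $\beta=\lambda$, $f=f_\lambda$ contributions are shown to agree and all other contributions vanish on both sides, summing over the isotypic decomposition of $T$ yields the proposition for arbitrary $T\in V^{\otimes n}$.
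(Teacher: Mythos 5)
Your reduction to isotypic components and your treatment of the case $\beta\neq\lambda$ are fine, but the proof has a genuine gap exactly where the content of the proposition lies: the case $\beta=\lambda$. There you only describe a plan — matching $\sum_{i<j}T\cdot\mathfrak{a}_{ij}(P^{\lambda}_{n-2})$ against $\sum_{i<j}T\cdot(d_{ij}\,P^{\lambda}_n)$ by redoing the three-configuration diagram analysis of \textbf{Fact 1} — and you yourself flag the loop-factor and permutation bookkeeping as an unresolved ``main obstacle.'' That matching is not established, and it is doubtful it can be done term by term: $\mathfrak{a}_{ij}(P^{\lambda}_{n-2})$ and $d_{ij}P^{\lambda}_n$ are genuinely different elements of $\bn$ (the projector sits on $n-2$ strands in one and on all $n$ strands in the other), so at best the \emph{sums} over $(i,j)$ agree, which is precisely what has to be proven. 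As written, the substantive half of the proposition is asserted, not proved.

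The idea you are missing makes this case a two-line argument and requires no diagram bookkeeping at all: apply $\mathcal{A}$ to the partition of unity in $\Bn{n-2}$. Since $\mathcal{A}(\id_{n-2})=A_n$, one has
\begin{equation}
A_n=\sum_{\beta\in\Lambda_{n-2}(\Dim)}\mathcal{A}(P^{\beta}_{n-2})\,,
\end{equation}
so for $T$ in the $\lambda$-isotypic component, Lemma \ref{lem:arcinduction_tens} kills every term with $\beta\neq\lambda$ and gives $T\cdot A_n=T\cdot\mathcal{A}(P^{\lambda}_{n-2})$. Moreover this tensor again lies in $(D^{\lambda})^{\oplus m_\lambda}$ by the same lemma, so $(T\cdot A_n)\cdot P^{\lambda}_n=T\cdot A_n$, while $T\cdot(P^{\lambda}_n A_n)=(T\cdot P^{\lambda}_n)\cdot A_n=T\cdot A_n$; both claimed equalities follow at once. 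This is essentially the route the paper takes (there phrased via the two partitions of unity in $\End(V^{\otimes n-2})$ and $\End(V^{\otimes n})$, using Lemma \ref{lem:diag_ev_A_2} to discard the traceless summands). A further small repair: your justification that the two right-hand sides coincide invokes centrality of $P^{\lambda}_n$, which the paper only guarantees for $\Dim\geqslant n-1$; since the proposition is stated for all $\Dim$, argue instead in $\End(V^{\otimes n})$, where $\mathfrak{r}(A_n)$ commutes with the $\Or(\Dim,\C)$-action, hence preserves every isotypic component and therefore commutes with each isotypic projector $\mathfrak{r}(P^{\lambda}_n)$.
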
	
\vskip 6pt
\begin{proof}
We have the following partition of unity in $\End(V^{\otimes n-2})$, and in $\End(V^{\otimes n})$
\begin{equation}
	\mathfrak{r}(\id_{n-2})=\sum_{\beta\in \Lambda_{n-2}(\Dim)}\mathfrak{r}(P_{n-2}^\beta)\,,\hspace{1cm}\text{and} \hspace{1cm}\mathfrak{r}(\id_n)=\sum_{\beta\in \Lambda_n(\Dim)}\mathfrak{r}(P_{n}^\beta)\,.
\end{equation}
Acting with the arc induction map on the first partition of unity and multiplying by $\mathfrak{r}(A_n)$ the second partition of unity we have: 
\begin{equation}
	\mathfrak{r}(A_n)=\sum_{\beta\in \Lambda_{n-2}(\Dim)}\mathfrak{r}(\mathcal{A}(P_{n-2}^\beta))\,,\hspace{1cm}\text{and} \hspace{1cm}\mathfrak{r}(A_n)=\sum_{\beta\in \Lambda_{n-2}(\Dim)}\mathfrak{r}(P_{n}^\beta A_n)\,,
\end{equation}
where we have use the fact that $\mathcal{A}(\id_{n-2})=A_n$, and that the kernel of $A_n$ is exactly the space of traceless tensors (see Lemma \eqref{lem:diag_ev_A_2}). Hence we have 
\begin{equation}
	\sum_{\beta\in \Lambda_{n-2}(\Dim)}\mathfrak{r}(\mathcal{A}(P_{n-2}^\beta)\,=\sum_{\beta\in \Lambda_{n-2}(\Dim)}\mathfrak{r}(P_{n}^\beta A_n)\,.
\end{equation}
Multiplying the previous equation by $\mathfrak{r}({P}_{n}^\lambda)$ with $\lambda\in \Lambda_{n-2}(\Dim)$ we get
\begin{equation}
	\sum_{\beta\in \Lambda_{n-2}(\Dim)}\,\,\mathfrak{r}(\mathcal{A}(P_{n-2}^\beta))\mathfrak{r}(P_{n}^\lambda)\,=\mathfrak{r}(P_{n}^\lambda A_n)\,,
\end{equation}
From Lemma \eqref{lem:arcinduction_tens} one has  $\mathfrak{r}(\mathcal{A}(P_{n-2}^\beta))\in\End((D^\lambda)^{\oplus m_\lambda})$, hence $\mathfrak{r}(\mathcal{A}(P_{n-2}^\lambda))=\mathfrak{r}(P_{n}^\lambda A_n)$ and the result follows.\medskip

\end{proof}

As a direct consequence of the previous proposition and of Lemma \eqref{lem:A_block_diagonal} we have: 
\begin{lemma}\label{lem:eigenvalue_Arc_induction}
\begin{itemize}
\item[i)] For any $n>2$ and $\beta\, \in \Lambda_n(\Dim)$, let $D^\beta$ be an irreducible $\Or(\Dim,\C)$-module that occurs in the decomposition of the irreducible $\GL(\Dim,\C)$-module $V^\mu$ into irreducible summands upon restriction to $\Or(\Dim,\C)$.
\begin{equation}\label{eq:Action_Arc_Induction_O}
	\text{Then for any}\quad T\in D^{\beta}\,, \quad T\cdot\mathcal{A}(P^{\lambda}_{n-2})=\left\{
	\begin{array}{ll}
		a_{\mu\backslash\lambda}\,T\,, &\text{if} \quad \beta=\lambda\,,\\
		\quad 0\, \quad &\text{otherwise.}
	\end{array}
	\right.
	\hspace{0.3cm}
\end{equation}
\item[ii)] For any $n>2$ and $\beta\, \in \Lambda_n(\Dim)$, let $L^\mu$ be an irreducible $\C\sn$-module that occurs in the decomposition of the irreducible $\bn$-module $M^\beta_n$ into irreducible summands upon restriction to $\C\sn$.
\begin{equation}\label{eq:Action_Arc_Induction_B}
	\text{Then for any}\quad v\in L^{\mu}\,, \quad v\cdot\mathcal{A}(P^{\lambda}_{n-2})=\left\{
	\begin{array}{ll}
		a_{\mu\backslash\lambda}\,v\,, &\text{if} \quad \beta=\lambda\,,\\
		\quad 0\, \quad &\text{otherwise.}
	\end{array}
	\right.
	\hspace{0.3cm}
\end{equation}
\end{itemize}
\end{lemma}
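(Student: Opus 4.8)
The plan is to derive Lemma \ref{lem:eigenvalue_Arc_induction} as a direct corollary of Proposition \ref{prop:arc_induction} combined with the eigenvalue result of Lemma \ref{lem:A_block_diagonal}, using Schur-Weyl duality to translate between the tensor-side statement (part i) and the Brauer-algebra-side statement (part ii). The two parts are essentially the same statement viewed through the two sides of the duality, so I would prove part i) first and then obtain part ii) by the standard dictionary, or simply remark that the arguments are identical modulo replacing $T \in D^\beta$ by $v \in L^\mu \subset M^\beta_n$.

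For part i), the key observation is that Proposition \ref{prop:arc_induction} already gives, for any $\lambda \in \Lambda_n(\Dim)$ and any $T \in V^{\otimes n}$,
\begin{equation}
T \cdot \mathcal{A}(P^\lambda_{n-2}) = T \cdot (P^\lambda_n A_n)\,.
\end{equation}
So I would first specialize to $T \in D^\beta$. If $\beta \neq \lambda$, then applying the projector $P^\lambda_n$ to a tensor living in the isotypic component $(D^\beta)^{\oplus m_\beta}$ gives zero by the orthogonality/annihilation property \eqref{eq:properties_central_idempotents} of the isotypic projectors, so $T \cdot (P^\lambda_n A_n) = (T \cdot P^\lambda_n) \cdot A_n = 0$. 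This handles the ``otherwise'' case. If instead $\beta = \lambda$, then $T \cdot P^\lambda_n = T$ (the projector acts as the identity on its own isotypic component), so the expression reduces to $T \cdot A_n$. Here I invoke Lemma \ref{lem:A_block_diagonal}: since $D^\beta = D^\lambda$ occurs in $V^\mu$ upon restriction to $\Or(\Dim,\C)$, the element $A_n$ acts by the scalar $a_{\mu\backslash\lambda} = (\Dim-1)f_\lambda + c(\mu) - c(\lambda)$ on the corresponding copy of $L^\mu$, and hence on $T$ via Schur-Weyl duality. This gives $T \cdot A_n = a_{\mu\backslash\lambda}\, T$, completing the $\beta = \lambda$ case.

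For part ii), I would run the identical argument on the right regular $\bn$-module side. Taking $v \in L^\mu \subset M^\beta_n$, Proposition \ref{prop:arc_induction} (which is an identity in $\End(V^{\otimes n})$, hence also holds for the $\bn$-action since everything is phrased via $\mathfrak{r}$) gives $v \cdot \mathcal{A}(P^\lambda_{n-2}) = v \cdot (P^\lambda_n A_n)$. When $\beta \neq \lambda$, the projector $P^\lambda_n$ annihilates $M^\beta_n$ and the result is zero; when $\beta = \lambda$, it acts as the identity on $M^\lambda_n$ and leaves $v$ unchanged, reducing to $v \cdot A_n = a_{\mu\backslash\lambda}\, v$ by the eigenvalue statement in Lemma \ref{lem:A_block_diagonal}. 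The one subtlety worth a line of care is that these identities are genuinely valid only after passing through $\mathfrak{r}$ (that is, as statements about the action on $V^{\otimes n}$ rather than about abstract elements of $\bn$), since outside the semisimple regime $\mathfrak{r}$ is not injective; but both parts of the lemma are stated as action statements, so this causes no difficulty.

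I do not expect a serious obstacle here: the lemma is explicitly labeled ``a direct consequence'' of Proposition \ref{prop:arc_induction} and Lemma \ref{lem:A_block_diagonal}, and indeed the only content is the bookkeeping of which projector kills which isotypic component and the extraction of the scalar $a_{\mu\backslash\lambda}$. The closest thing to a delicate point is making sure the hypothesis that $D^\beta$ (resp.\ $L^\mu$) actually \emph{occurs} in $V^\mu$ (resp.\ $M^\beta_n$) is used correctly, since the scalar $a_{\mu\backslash\lambda}$ is only meaningful and correct when the relevant restriction multiplicity is nonzero; this is exactly the standing hypothesis of the lemma, so it is automatically in force.
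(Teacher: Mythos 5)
Your proof is correct and takes essentially the same route as the paper: the paper states this lemma without a separate proof, as ``a direct consequence'' of Proposition \ref{prop:arc_induction} and Lemma \ref{lem:A_block_diagonal}, and your argument is exactly that consequence spelled out (orthogonality of $P^{\lambda}_n$ kills the case $\beta\neq\lambda$, identity action plus the $A_n$-eigenvalue from Lemma \ref{lem:A_block_diagonal} handles $\beta=\lambda$, in both the tensor and the $\bn$-module picture). Your closing remark that everything is an identity of actions on $V^{\otimes n}$ (i.e.\ after applying $\mathfrak{r}$), rather than of abstract elements of $\bn$, is also the correct reading of how the paper phrases these statements outside the semisimple regime.
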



\subsection{An inductive formula for $\lbrace P^{\lambda}_n \,, \lambda\vdash n-2f\rbrace$}
For any $n>2$, and integer partition $\lambda$ such that $f_\lambda\geqslant 1$ we define $\mathcal{P}^{\lambda}_{n}$ as 
\begin{equation}\label{eq:isotypic_projectors_2}
\mathcal{P}_n^{\lambda}=\sum_{\mu\,\in \overbar{\mathrm{M}}_{n,\lambda}(\Dim)}P_n^{\mu\backslash\lambda}\,,\hspace{1cm} \text{with} \hspace{1cm}	P_n^{\mu\backslash\lambda}=\,\frac{\mathcal{A}(P^{\lambda}_{n-2})\, Z^\mu}{a_{\mu\backslash\lambda}}\,.
\end{equation}
We recall that the set of integer partitions $\overbar{\mathrm{M}}_{n,\lambda}(\Dim)$ is defined in \eqref{eq:subsets_M_L}.
\begin{theorem}\label{theo:isotypic_projectors_2}
For any $n>2$, and integer partition $\lambda$ such that $f_\lambda\geqslant 1$, the elements $\mathcal{P}^{\lambda}_{n}$ are the $(f_\lambda+1)$-traceless projectors which realize the projection of tensors to the isotypic components $(D^{\lambda})^{\oplus m_\lambda}$:
\begin{equation}
	P^\lambda_n=\mathcal{P}^{\lambda}_{n}\,.
\end{equation}
\end{theorem}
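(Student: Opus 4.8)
The plan is to show that the elements $\mathcal{P}^\lambda_n$ defined in \eqref{eq:isotypic_projectors_2} act as the identity on the isotypic component $(D^\lambda)^{\oplus m_\lambda}$ and annihilate every irreducible $\bn$-module $M^\rho_n$ not isomorphic to $M^\lambda_n$. Since by Schur-Weyl duality projecting to the $\Or(\Dim,\C)$-isotypic component $(D^\lambda)^{\oplus m_\lambda}$ is equivalent to projecting to the $\bn$-isotypic component $(M^\lambda_n)^{\oplus g_\lambda}$, it suffices to establish these two properties on the decomposition of $V^{\otimes n}$ into irreducible $\bn$-modules. The essential tool is Lemma \ref{lem:eigenvalue_Arc_induction}, which controls the action of $\mathcal{A}(P^\lambda_{n-2})$ on the summands $L^\mu$ appearing in the restriction of $M^\rho_n$ to $\C\sn$.

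First I would fix an irreducible $\bn$-module $M^\rho_n\subset V^{\otimes n}$ and take $v\in M^\rho_n$. Upon restriction to $\C\sn$, I decompose $v=\sum_{\mu} v^\mu$ with $v^\mu = v\cdot Z^\mu$, where the sum runs over $\mu\in \overbar{\mathrm{M}}_{n,\rho}(\Dim)$ (using Lemma \ref{lem:lemma_sets} so that $\mathrm{M}_{n,\rho}(\Dim)\subseteq\overbar{\mathrm{M}}_{n,\rho}(\Dim)$, with equality in the semisimple regime). Then I evaluate
\begin{equation}
v\cdot \mathcal{P}^\lambda_n = \sum_{\mu\in\overbar{\mathrm{M}}_{n,\lambda}(\Dim)} \frac{1}{a_{\mu\backslash\lambda}}\, v\cdot\big(\mathcal{A}(P^\lambda_{n-2})\,Z^\mu\big)\,.
\end{equation}
By part \textit{ii)} of Lemma \ref{lem:eigenvalue_Arc_induction}, the operator $\mathcal{A}(P^\lambda_{n-2})$ acts on $L^\mu\subset M^\rho_n$ by the scalar $a_{\mu\backslash\lambda}$ when $\rho=\lambda$ and by zero when $\rho\neq\lambda$. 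Consequently, if $\rho\neq\lambda$ every term vanishes and $v\cdot\mathcal{P}^\lambda_n=0$; while if $\rho=\lambda$, each surviving term $\frac{1}{a_{\mu\backslash\lambda}}\,v^\mu\cdot A_n$ — after using $v^\mu\cdot\mathcal{A}(P^\lambda_{n-2})=a_{\mu\backslash\lambda}v^\mu$ and the orthogonality of the $Z^\mu$ — collapses to $v^\mu$, and summing over $\mu$ reconstructs $v$ itself.

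The main subtlety, and where I expect the real work to lie, is the annihilation claim when $f_\rho=0$ (that is, $\rho\vdash n$). In that case $M^\rho_n$ is a traceless module, so $\mathcal{A}(P^\lambda_{n-2})$, which lies in the ideal $J_{f_\lambda}$ of diagrams with at least $f_\lambda\geqslant 1$ arcs, already annihilates every traceless tensor; this gives $v\cdot\mathcal{P}^\lambda_n=0$ directly. More delicate is the complementary direction: I must verify that no spurious contributions survive when $\mu\in\overbar{\mathrm{M}}_{n,\lambda}(\Dim)$ indexes an $L^\mu$ that is actually \emph{absent} from $M^\lambda_n$ for $\Dim<n-1$ (recall the Remark following Lemma \ref{lem:lemma_sets}, where $v^\mu$ may be identically zero even though $a_{\mu\backslash\lambda}>0$). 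Here I would invoke Lemma \ref{lem:arcinduction_tens}: since $\mathcal{A}(P^\lambda_{n-2})$ preserves $(D^\lambda)^{\oplus m_\lambda}$ and kills every other isotypic component, any $v^\mu$ that does not belong to $M^\lambda_n$ is mapped to zero, so the inflated index set $\overbar{\mathrm{M}}_{n,\lambda}(\Dim)$ produces exactly the same operator as the genuine index set $\mathrm{M}_{n,\lambda}(\Dim)$. Assembling these cases shows $\mathcal{P}^\lambda_n$ acts by the identity on $(D^\lambda)^{\oplus m_\lambda}$ and annihilates every other irreducible summand, which by the discussion above is precisely the defining property \eqref{eq:properties_isotypic_proj} of $P^\lambda_n$, giving $P^\lambda_n=\mathcal{P}^\lambda_n$.
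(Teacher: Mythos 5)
Your proof is correct and takes essentially the same route as the paper's: both decompose $v\in M^\rho_n$ into $\C\sn$-isotypic components via the central Young idempotents $Z^\mu$, invoke Lemma \ref{lem:eigenvalue_Arc_induction} to evaluate the action of $\mathcal{A}(P^\lambda_{n-2})$, and use orthogonality of the $Z^\mu$ to conclude. The extra care you devote to the non-semisimple regime $\Dim<n-1$ (spurious indices in $\overbar{\mathrm{M}}_{n,\lambda}(\Dim)$) is subsumed in the paper's relation $v^{\rho}\cdot P_n^{\mu\backslash\lambda}=\delta_{\mu\rho}\,v^{\rho}$, which holds trivially when $v^{\rho}=0$.
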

%

\begin{proof}
	Take $v\in M^\beta_n$. First, let $\beta$ be such that $\beta\neq \lambda$. Then $\mathcal{P}^{\lambda}_{n}(v)=0$, as a direct consequence of Lemma \ref{lem:eigenvalue_Arc_induction}. Second, let $\beta$ be such that $\beta=\lambda$. Upon restriction of $M^{\lambda}_{n}$ to $\mathbb{C}\Sn{n}$, $v$ decomposes as 
	\begin{equation}
		v=\sum_{\rho\,\in \,\overbar{\mathrm{M}}_{n,\lambda}}v^{\rho}\,, \hspace{0.5cm} \text{with} \hspace{0.5cm} v^{\rho}=v\cdot Z^{\rho}\,,
	\end{equation}
Due to Lemma \ref{lem:eigenvalue_Arc_induction} and to the orthogonality property of the Young central idempotents one has $ v^{\rho}\cdot P_n^{\mu\backslash\lambda}=\delta_{\mu\rho} v^\rho$. As a consequence $\mathcal{P}_n^{\lambda}$ annihilates all irreducible modules not isomorphic to $M^\lambda_n$ and acts by the identity on $M^\lambda_n$.
\end{proof}
\begin{remark}
Formulas \eqref{eq:non_inductive_traceless_projectors} for $f=0$ and \eqref{eq:isotypic_projectors_2} for $f\geqslant 1$, and the following initial data are sufficient to construct the set of elements in $\bn$ realizing the isotypic decomposition of $V^{\otimes n}$.
\begin{equation}\label{initialization_central_Bn}
	P^{(1^n)}_{n}=Z^{(1^n)}\,,\hspace{0.5cm} P^{(2)}_{2}=\frac{1}{\Dim}\,\,\raisebox{-.4\height}{\includegraphics[scale=0.2]{fig/b2.pdf}}\,.
\end{equation}
\end{remark}
\vskip 4pt

The following proposition provides an alternative formula for the isotypic projections operators which uses exclusively the conjugacy class sum $A_n$ and its spectrum. With this formula one can fully leverage the algorithm for optimizing the multiplication of the element $A_n$, which is discussed in the next chapter. From a computer algebra perspective, this formula has proven to be the most efficient and, as a result, it is the one implemented in the \textit{BrauerAlgebra} package.
\begin{proposition}
For any $n>2$, and integer partition $\lambda$ such that $f_\lambda\geqslant 1$, the $(f_\lambda+1)$-traceless projectors which realize the projection of tensors to the isotypic components $(D^{\lambda})^{\oplus m_\lambda}$ admit the following form: 
\begin{equation}\label{eq:branching_f_traceless_central_idempotent_class_A}
	P_n^{\lambda}=\sum_{\scriptsize\alpha\,\in\, \overbar{\mathrm{spec}}^{\,\lambda}(A_n)}\, \, P^{(\lambda\,,\,\alpha)}\,, \hspace{0.4cm} \textit{with} \hspace{0.4cm} P^{(\lambda\,,\,\alpha)}=\frac{\mathcal{A}(P^{\lambda}_{n-2})}{\alpha}\prod_{\begin{array}{c}
			{\scriptstyle \beta\,\in\, \overbar{\mathrm{spec}}^{\,\lambda}(A_n)\,}\\
			{\scriptstyle \beta \,\neq\,\alpha }
	\end{array}}\left(\,\frac{\beta - A_n}{\beta-\alpha}\,\right)\,.
\end{equation}
\end{proposition}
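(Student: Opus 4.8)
The plan is to prove the identity by verifying that the right-hand side of \eqref{eq:branching_f_traceless_central_idempotent_class_A}, which I will abbreviate $R=\sum_{\alpha}P^{(\lambda,\alpha)}$, acts on $V^{\otimes n}$ exactly as the isotypic projector $P^\lambda_n$ characterized in Theorem \ref{theo:isotypic_projectors_2}. Since $P^\lambda_n$ is the unique operator acting as the identity on $(D^\lambda)^{\oplus m_\lambda}\cong (M^\lambda_n)^{\oplus g_\lambda}$ and annihilating every other $M^\beta_n$ with $\beta\neq\lambda$, it suffices to check that $R$ shares both properties as an element of $\End(V^{\otimes n})$. The three ingredients I would invoke are Lemma \ref{lem:eigenvalue_Arc_induction}(ii), which governs the action of $\mathcal{A}(P^\lambda_{n-2})$; Lemma \ref{lem:A_block_diagonal}, which says that $A_n$ acts on the $\C\sn$-submodule $L^\mu\subset M^\lambda_n$ as the scalar $a_{\mu\backslash\lambda}$; and the observation that the product over $\beta$ in $P^{(\lambda,\alpha)}$ is exactly a Lagrange basis polynomial evaluated at $A_n$.

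First I would dispose of the modules $M^\beta_n$ with $\beta\neq\lambda$. In the right-module convention $v\cdot(ab)=(v\cdot a)\cdot b$, so $\mathcal{A}(P^\lambda_{n-2})$, sitting at the left of each $P^{(\lambda,\alpha)}$, is applied first; by Lemma \ref{lem:eigenvalue_Arc_induction}(ii) it annihilates every $v\in M^\beta_n$ with $\beta\neq\lambda$, whence $v\cdot R=0$. On $M^\lambda_n$ I would decompose $v=\sum_{\mu}v^\mu$ via the partition of unity $\id_{\C\sn}=\sum_\mu Z^\mu$, with $v^\mu=v\cdot Z^\mu\in L^\mu$ and $\mu$ ranging over $\mathrm{M}_{n,\lambda}(\Dim)\subseteq\overbar{\mathrm{M}}_{n,\lambda}(\Dim)$. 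Because $\mathcal{A}(P^\lambda_{n-2})\in\cn$ and $A_n\in\cn$ commute with $\sn$, they preserve each $L^\mu$; there $\mathcal{A}(P^\lambda_{n-2})$ acts as $a_{\mu\backslash\lambda}$ (Lemma \ref{lem:eigenvalue_Arc_induction}(ii) with $\beta=\lambda$) and $A_n$ acts as $a_{\mu\backslash\lambda}$ (Lemma \ref{lem:A_block_diagonal}). Writing $x=a_{\mu\backslash\lambda}\in\overbar{\mathrm{spec}}^\lambda(A_n)$, a direct substitution shows that $P^{(\lambda,\alpha)}$ acts on $v^\mu$ by the scalar $\tfrac{x}{\alpha}\prod_{\beta\neq\alpha}\tfrac{\beta-x}{\beta-\alpha}=\tfrac{x}{\alpha}\,L_\alpha(x)$, where $L_\alpha$ denotes the Lagrange basis polynomial for the node set $\overbar{\mathrm{spec}}^\lambda(A_n)$, satisfying $L_\alpha(\gamma)=\delta_{\alpha\gamma}$ on that set.

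The conceptual core is then the elementary Lagrange identity: summing over $\alpha\in\overbar{\mathrm{spec}}^\lambda(A_n)$ and using $x\in\overbar{\mathrm{spec}}^\lambda(A_n)$, only the term $\alpha=x$ survives, contributing $\tfrac{x}{x}L_x(x)=1$. Hence $v^\mu\cdot R=v^\mu$ for every present $\mu$, so $v\cdot R=v$ on $M^\lambda_n$; together with the vanishing established above this yields $R=P^\lambda_n$. I would also note, as the bridge to the earlier formula, that the same computation identifies each summand with $P^{(\lambda,\alpha)}=\sum_{\mu:\,a_{\mu\backslash\lambda}=\alpha}P^{\mu\backslash\lambda}_n$ of \eqref{eq:isotypic_projectors_2}, so the two expressions coincide term by term after grouping by eigenvalue.

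The step requiring the most care is the non-semisimple regime $\Dim<n-1$, where $\overbar{\mathrm{M}}_{n,\lambda}(\Dim)$ may strictly contain $\mathrm{M}_{n,\lambda}(\Dim)$ and $\overbar{\mathrm{spec}}^\lambda(A_n)$ may contain ``phantom'' nodes $\alpha$ that are not genuine eigenvalues of $A_n$ on $M^\lambda_n$. Here I must confirm that these phantom nodes do no harm: for such $\alpha$ every present $\mu$ satisfies $a_{\mu\backslash\lambda}\neq\alpha$, so $L_\alpha(a_{\mu\backslash\lambda})=0$ and $P^{(\lambda,\alpha)}$ simply acts as zero, while the Lagrange identity of the previous paragraph still closes precisely because the genuine value $x$ lies in the enlarged node set $\overbar{\mathrm{spec}}^\lambda(A_n)$. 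I would rely on Lemma \ref{lem:lemma_spec} and Lemma \ref{lem:lemma_sets} to organize this bookkeeping and to justify that passing from $\mathrm{spec}^\lambda$ to $\overbar{\mathrm{spec}}^\lambda$ leaves the action of $R$ unchanged, which is exactly what lets the formula avoid computing the generalized Littlewood--Richardson coefficients.
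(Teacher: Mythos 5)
Your proposal is correct and follows essentially the same route as the paper's own proof: annihilation of $M^\beta_n$ for $\beta\neq\lambda$ via Lemma \ref{lem:eigenvalue_Arc_induction}, decomposition over $\C\sn$-isotypic components inside $M^\lambda_n$, the scalar actions of $\mathcal{A}(P^\lambda_{n-2})$ and $A_n$ from Lemmas \ref{lem:eigenvalue_Arc_induction} and \ref{lem:A_block_diagonal}, and the Lagrange-interpolation identity, with Lemmas \ref{lem:lemma_sets} and \ref{lem:lemma_spec} covering the non-semisimple ``phantom eigenvalue'' bookkeeping. The only cosmetic difference is that the paper groups the components by eigenvalue into the spaces $\mathcal{L}_\gamma$ and states $v_\gamma\cdot P^{(\lambda,\alpha)}=\delta_{\gamma\alpha}v_\gamma$ directly, whereas you work diagram by diagram $\mu$ and make the Lagrange computation explicit; the two organizations are equivalent.
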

\begin{proof}
Take $v\in M^\rho_n$. If  $\rho\neq \lambda$ one has $\mathcal{P}^{\lambda}_{n}(v)=0$ as a direct consequence of Lemma \ref{lem:eigenvalue_Arc_induction}. Let $\rho$ be such that $\rho=\lambda$ and let $\mathcal{L}_\alpha$ denote the direct sum of irreducible $\C\sn$-modules $L^{\mu}\subset M^{\lambda}_n$ which are such that $a_{\mu\backslash \lambda}=\alpha$. Upon restriction of $M^{\lambda}_{n}$ to $\mathbb{C}\Sn{n}$, $v$ decomposes as 
\begin{equation}
	v=\sum_{\scriptsize\gamma\,\in\, \overbar{\mathrm{spec}}^{\lambda}(A_n)}v_{\gamma}\,, \hspace{0.5cm} \text{with} \hspace{0.5cm}  \mathcal{L}_\gamma\ni v_\gamma=\sum_{\begin{array}{c}
			{\scriptstyle \mu\,\in \overbar{\mathrm{M}}_{n,\lambda}(\Dim)\,}\\
			{\scriptstyle a_{\mu\backslash\lambda}=\,\gamma }
	\end{array}} v\cdot Z^\mu\,.
\end{equation}
When $B_n$ is not semisimple ($\Dim<n-1$) it may happen that $v_{\gamma}$ or that some of the vectors $v^\mu=v\cdot Z^\mu$ are identically zero (recall Lemma \ref{lem:lemma_sets} and Lemma \ref{lem:lemma_spec}). By construction (see Lemma \ref{lem:A_block_diagonal} and Lemma \ref{lem:eigenvalue_Arc_induction}), one has $v_{\gamma}\cdot P^{(\lambda\,,\,\alpha)}=\delta_{\gamma\alpha}\, v_{\gamma}$ and hence $v\cdot P^{(\lambda\,,\,\alpha)}=v_{\alpha}$. As a direct consequence one has
$v\cdot \left(\displaystyle{\sum_{\scriptsize\alpha\,\in\, \overbar{\mathrm{spec}}^{\lambda}(A_n)}\, \, P^{(\lambda\,,\,\alpha)}}\right)=v$.
\end{proof}
The following examples are obtained with the Mathematica package \textit{BrauerAlgebra} using the command CentralIdempotent[$n$, $\lambda$, $\Dim$] described below Remark \ref{rem:above_central}.
\begin{example}[Projection operators utilized in sections \ref{sec:Distortion_Decomposition} and \ref{sec:Riemann_Decomposition}.]
\begin{itemize}
\item[\it i)] Projection operator utilized for the irreducible decomposition of the distortion tensor
\begin{equation}\label{eq:proj_Distortion}
	P^{(1)}_3=\raisebox{-0.41\height}{\includegraphics[scale=0.55]{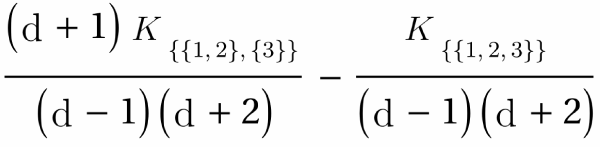}}\,
\end{equation}
\item[\it ii)] Projection operators utilized for the irreducible decomposition of the Riemann tensor
\begin{equation}\label{eq:proj_Riemann}
\hspace{-0.5cm}	\begin{aligned}
	P^{(2)}_4&=\raisebox{-0.73\height}{\includegraphics[scale=1.2]{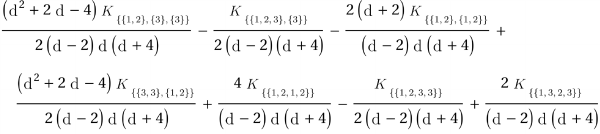}}\,\\
	P^{(11)}_4&=\raisebox{-0.42\height}{\includegraphics[scale=1.1]{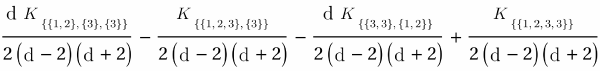}}\,\\
	P^{\emptyset}_4&=\raisebox{-0.41\height}{\includegraphics[scale=0.6]{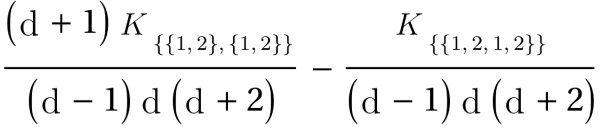}}\,
	\end{aligned}
\end{equation}
\end{itemize}
\end{example}

\newpage
\begin{example}[More projection operators obtained with Mathematica]
	\begin{equation*}
		P^{(2,1)}_5=\raisebox{-0.87\height}{\includegraphics[scale=1.4]{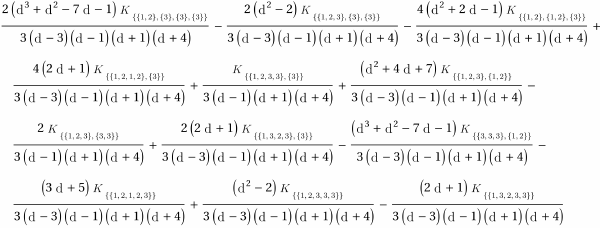}}\,
	\end{equation*}
	\begin{equation*}
		P^{(1,1)}_8=\raisebox{-0.94\height}{\includegraphics[scale=1.4]{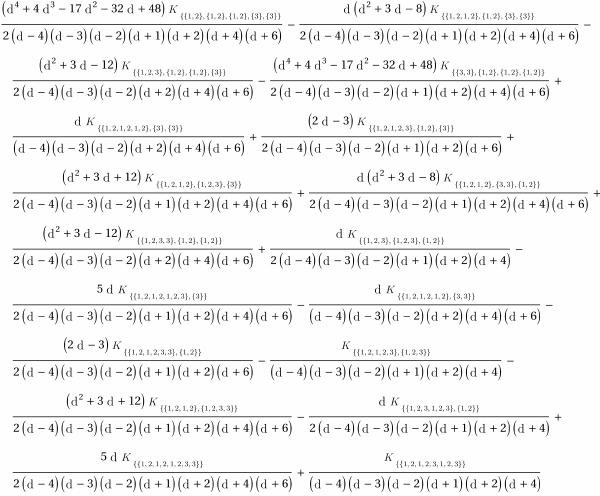}}\,
	\end{equation*}
\end{example}

\paragraph{Scalar invariants and the full trace projector.} The trace decomposition of tensors is obtained via the projector 
\begin{equation}\label{eq:proj_trace_decomposition_bulk}
	P_n^{(f)}=\displaystyle{\sum_{\begin{array}{c}
				{\scriptstyle \lambda\in \Lambda_n(\Dim)}\\
				{\scriptstyle |\lambda|=n-2f}
	\end{array}}} P^{\lambda}_{n}\,.
\end{equation}
\begin{mathematica}[\textit{BrauerAlgebra}]
	CentralIdempotent[\,$n$\,,\,$f$\,,\,$\Dim$\,]\,\\
	\textit{For $f<\lfloor \frac{n}{2}\rfloor$, returns the element in $\Bn{n}(\Dim)$ realizing the $(f+1)$-traceless projection for tensors of order $n$. For $f=\lfloor \frac{n}{2}\rfloor$, $\,$returns the element in $\Bn{n}(\Dim)$ realizing the full trace projection for tensors of order $n$. When $\Bn{n}(\Dim)$ is semisimple this element is central and idempotent.}
\end{mathematica}
Here we restrict our attention to the semisimple regime of $\bn$ and to the case $n$ even. The full trace projector is obtained for $f=\frac{n}{2}$ and one has
\begin{equation}
P_n^{(\lfloor\frac{n}{2}\rfloor)}=P_n^{\emptyset}\,.
\end{equation}
The Littlewood-Richardson coefficients entering $\mathrm{M}_{n,\emptyset}$ are such that
\begin{equation}\label{eq:multi_scalar_invariant}
\tensor{C}{^\mu_\nu_\emptyset}=\delta_{\mu\nu}\tensor{C}{^\mu_\emptyset}=\delta_{\mu\nu}
\end{equation}
where $\nu\vdash n$ is an even partition and we recall that $\tensor{C}{^\mu_\emptyset}$ denotes is the multiplicity of $D^{\emptyset}$ in $V^{\mu}$. As a consequence one has $\mathrm{M}_{n,\emptyset}=\lbrace \mu \vdash n \st \mu \text{ is even}\rbrace$, and the expression for the full trace projector is 
\begin{equation}
P_n^{\emptyset}=\sum_{\text{even } \mu \vdash n}P_n^{\mu\backslash\emptyset}\,.
\end{equation}
Despite not being expert on the following subject, let us just point out that the computation of the full trace projector seems to provide an alternative to the Weingarten calculus \cite{Collins2006,collins2009some,MATSUMOTO,collins2022weingarten} for the orthogonal groups. In appendix \ref{app:Traceprojector} we give the full trace projector $P^{(\emptyset)}_{12}$ to demonstrate that it reproduces the result given at the end of \cite{collins2009some}. We also give the full trace projector $P^{(\emptyset)}_{14}$ and $P^{(\emptyset)}_{16}$ to demonstrate the efficiency of the inductive formula.
\begin{remark}
\begin{itemize}
\item[\it i)] The number of scalar invariants $m_{\emptyset}$ of a tensor of order $n$ with no symmetries of indices is given on the one hand by the number of diagrams in $\Bn{m}$ with $m=\frac{n}{2}$ (for $\Dim\geqslant n$), and, on the other hand by the number of paths which end by the empty partition at level $n$ in the Bratteli diagram for the chain $\Bn{0}\subset\Bn{1}\subset \ldots \subset\Bn{n}$ where $\Bn{0}\cong \C$ (recall the Definition \ref{def:bratteli_diagram} for Bratteli diagrams).\smallskip

According to \eqref{eq:multi_scalar_invariant} $m_{\emptyset}$ is also given by 
\begin{equation}
	m_{\emptyset}=\sum_{\text{even }\mu\in \mathcal{P}_n(\Dim)} m_\mu\,.
\end{equation}
The advantage of the above formula is that it can be generalized to spaces of tensors with symmetries in which some irreducible representations $V^{\mu}$ of $V^{\otimes n}$ may be absent or present with smaller multiplicities.
\item[\it ii)] The computation of the number of the scalar invariants of the $p^{th}$ tensor product of a given tensor of order $n$ (with or without symmetries of indices) is a complicated problem which is related to the notion of \textit{plethysm} of Schur functions. For a discussion on this subject applied to the computation of the number of scalar invariants of the $p^{th}$ tensor product of the metric Riemann tensor see \cite{S_A_Fulling_1992,Lachaume2019}.
\end{itemize}
\begin{mathematica}[\textit{SymmetricFunctions}]
	Plethysm[$f[\rho]$, $h[\beta]$]\,\\
	\textit{Returns the plethysm $f_\rho[h_\beta]$, where $f$ and $h$ are symmetric functions while $\rho$ and $\beta$ are}\\
	\textit{integer partitions.}
\end{mathematica}
\end{remark}

%

\paragraph{Final remarks.} 

As for the symmetric group, there is a chain of natural embeddings
\begin{equation}\label{eq:chain_bn}
	\C\cong B_{0}\subseteq B_{1} \subset B_{2} \subset \dots \subset B_{n}
\end{equation}
realized at each step by the homomorphism $\iota:B_{n-1}\to B_{n}$ defined by the insertion of a vertical line at the right end of any diagram. As a consequence of the multiplicity free property of the branching rules $B_{n}\downarrow B_{n-1}$ \cite{Wenzl_structure-Br_1988} the approach of Okunkov and Vershik to the representation theory of $\mathbb{C}\mathfrak{S}_n$ can be generalized to $\bn$~\cite{doty2019canonical}. As a result, the projectors $P^{\tab}_n$ \eqref{eq:irreducible_decomposition_O} which realize the irreducible decomposition of tensors with respect to $\Or(\Dim,\C)$ can be obtained following the same scheme as for the primitive pairwise orthogonal idempotents of $\C\sn$:
\begin{equation}\label{eq:irreducible_projector_O}
	P^{\,\tab}_n=\prod_{k=0}^{n} P_{k}^{\lambda_k} \,,
\end{equation}
where $\lambda_k$ is the $k^{th}$ Young diagram in the path $\tab$ (also referred to as \textit{up-down tableaux} in \cite{doty2019canonical} or \textit{oscillating tableaux} in \cite{Isaev_2020}) in the Bratteli diagram for the chain \eqref{eq:chain_bn} and $P_{0}^{\emptyset}=0$.\footnote{The element $P^{\,\tab}_n$ is identically zero if the path $\tab$ contains at least one Young diagram which is not in $\Lambda_n(\Dim)$ \cite{Nazarov}.} Note that their exists an alternative construction for $P^{\tab}_n$ using Jucys-Murphy elements (see for example \cite[eq.$\,3.31$]{Isaev_2020}  and \cite{Nazarov}).

\begin{mathematicas}[\textit{BrauerAlgebra}]
	BratteliDiagramBn[\,$n$\,]\,\\
	\textit{Returns the Bratteli diagram for the chain $\Bn{1}\subset \ldots \subset\Bn{n}$ as a List of paths.}\\
	
	BratteliDiagramBn[\,$n$\,,\,Output$\rightarrow$Graph\,]\,\\
	\textit{Returns the Bratteli diagram for the chain $\Bn{0}\subset\Bn{1}\subset \ldots \subset\Bn{n}$ as a Graph\,.}\\
	
	BratteliPathBn[$n$, $\lambda$]\,\\
	\textit{Returns the set of paths $\tab$ in the Bratteli diagram for the chain $\Bn{1}\subset \ldots \subset\Bn{n}$ which end with the integer partition $\lambda$.}
\end{mathematicas}
\begin{example}[The primitive pairwise orthogonal idempotents of $\Bn{3}$]
	\begin{figure}[H]
	\centering
		\includegraphics[scale=0.45]{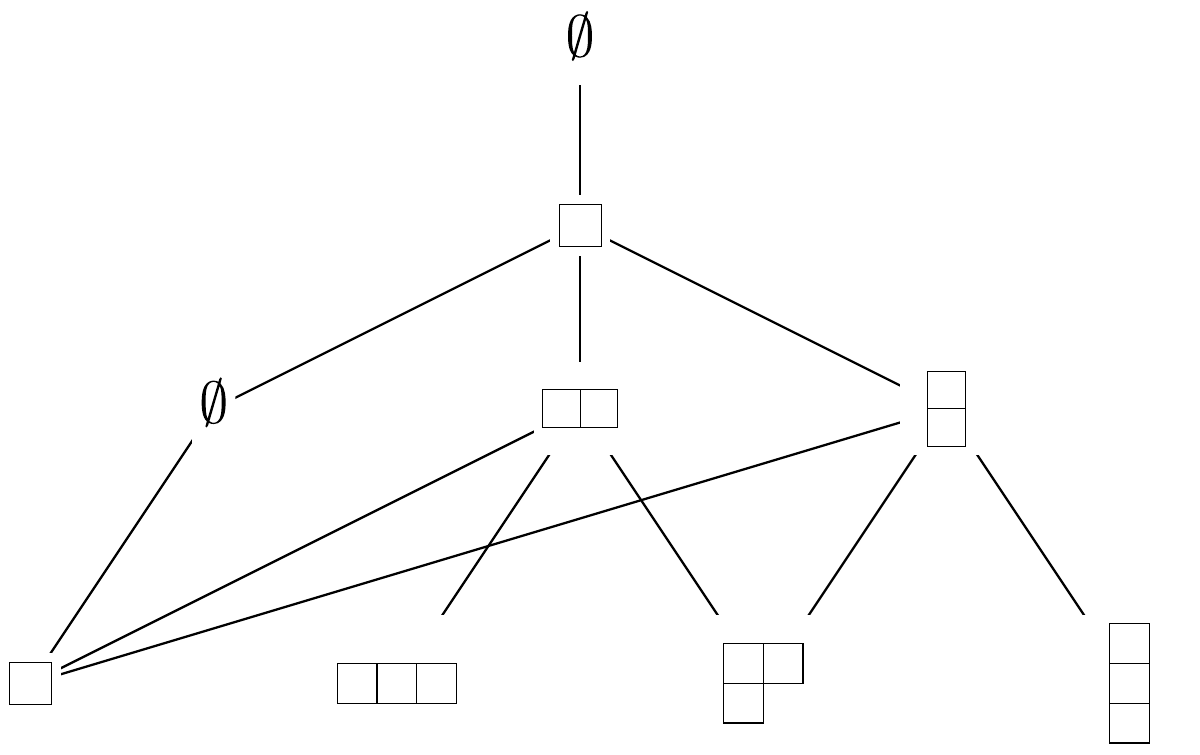}
	\caption{The Bratteli diagram for the chain $\Bn{0}\subset\Bn{1}\subset\Bn{2}\subset\Bn{3}$.}
	\label{fig_proj:bratteli_B3}
\end{figure}
	\noindent The \textit{oscillating tableaux} starting by $\emptyset$ at level zero and ending by an integer partition $\lambda$ at level $3$ are: 
	\begin{equation*}
		\begin{array}{llll}
		\tab_1=\lbrace\,\emptyset\,,\, \Yboxdim{7pt}\yng(1),\,\emptyset,\,\Yboxdim{7pt}\yng(1) \,\rbrace\,, \hspace{0.3cm}&
		\tab_2=\lbrace\,\emptyset\,,\, \Yboxdim{7pt}\yng(1),\,\Yboxdim{7pt}\yng(2),\,\Yboxdim{7pt}\yng(1)\,\rbrace\,, \hspace{0.3cm}&
		\tab_3=\lbrace\,\emptyset\,,\, \Yboxdim{7pt}\yng(1),\,\Yboxdim{7pt}\yng(1,1),\,\Yboxdim{7pt}\yng(1)\,\rbrace\,, \hspace{0.3cm}& \\
		\tab_4=\lbrace \,\emptyset\,,\,\Yboxdim{7pt}\yng(1),\,\Yboxdim{7pt}\yng(2),\,\Yboxdim{7pt}\yng(3) \,\rbrace\,, \hspace{0.3cm}&
		\tab_5=\lbrace \,\emptyset\,,\,\Yboxdim{7pt}\yng(1),\,\Yboxdim{7pt}\yng(2),\,\Yboxdim{7pt}\yng(2,1) \,\rbrace\,, \hspace{0.3cm}&
		\tab_6=\lbrace \,\emptyset\,,\,\Yboxdim{7pt}\yng(1),\,\Yboxdim{7pt}\yng(2),\,\Yboxdim{7pt}\yng(2,1) \,\rbrace\,, \hspace{0.3cm}&
		\tab_7=\lbrace\,\emptyset\,,\, \Yboxdim{7pt}\yng(1),\,\Yboxdim{7pt}\yng(1,1),\,\Yboxdim{7pt}\yng(1,1,1) \,\rbrace\,. 
		\end{array}
	\end{equation*}
	Applying formula \eqref{eq:irreducible_projector_O} with $P^\emptyset_0=P^{\Yboxdim{3pt}\yng(1)}_1=1$, the pairwise orthogonal idempotents of $\Bn{3}$ for $\Dim\geqslant 2$ (semisimple regime of $B_3$) are given by:
	\begin{equation*}
	\begin{array}{llll}
		P^{\tab_1}_3=P^{\emptyset}_2\,P^{(1)}_3\,, \hspace{0.3cm}&
		P^{\tab_2}_3=P^{(2)}_2\,P^{(1)}_3\,, \hspace{0.3cm}&
		P^{\tab_3}_3=P^{(1^2)}_2\,P^{(1)}_3\,, \hspace{0.3cm}& \\[10pt]
		P^{\tab_4}_3=P^{(2)}_2\,P^{(3)}_3\,,  \hspace{0.3cm}&
		P^{\tab_5}_3=P^{(2)}_2\,P^{(2,1)}_3\,, \hspace{0.3cm}&
		P^{\tab_6}_3=P^{(1^2)}_2\,P^{(2,1)}_3\,, \hspace{0.3cm}&
		P^{\tab_7}_3=P^{(1^2)}_2\,P^{(1^3)}_3\,\,.
	\end{array}
	\end{equation*}
For $\Dim=2$, because $\Yboxdim{5pt}\yng(2,1)\,,\Yboxdim{5pt}\yng(1,1,1)\notin\Lambda_{3}(2)$ one has $\mathfrak{r}(P^{\tab_5}_3)=\mathfrak{r}(P^{\tab_6}_3)=\mathfrak{r}(P^{\tab_7}_3)=0$, while for $\Dim=1$, because $\Yboxdim{5pt}\yng(1),\Yboxdim{5pt}\yng(2,1)\,,\Yboxdim{5pt}\yng(1,1,1)\notin\Lambda_{3}(1)$  one has $\mathfrak{r}(P^{\tab_1}_3)=\mathfrak{r}(P^{\tab_2}_3)=\mathfrak{r}(P^{\tab_3}_3)=\mathfrak{r}(P^{\tab_5}_3)=\mathfrak{r}(P^{\tab_6}_3)=\mathfrak{r}(P^{\tab_7}_3)=0$. Note that this is coherent with the irreducible decomposition of the distortion tensor for small dimensions obtained in chapter \ref{chap:Irreducible_MAG} (see Proposition \ref{prop:small_d_distortion}).
\end{example}

\chapter{The centralizer $\cn$ of $\sn$ in $B_n$}\label{chap:cn}
\section{Introduction}

The isotypic projection operators $Z^\mu\in \sn$ and $P^{\lambda}_n \in \bn$ constructed respectively in chapter \ref{chap:projectors_GL} and \ref{chap:projectors_O} commute with the symmetric group $\sn$, and as such belong to subalgebra $\cn\subset \bn$ of all elements in $\bn$ which commute with $\sn$. The Young central idempotents and the traceless isotypic projection operators are given by:\smallskip
\begin{equation}\label{eq:ctYoung_traceless5}
Z^{\mu}=\frac{\mathcal{L}(Z^{\nu})}{z_{\mu\backslash\nu}}\prod_{\begin{array}{c}
		{\scriptstyle \rho\in \mathcal{A}_\nu}\\
		{\scriptstyle \rho\neq\mu}
\end{array}}\dfrac{c_\rho-T_n}{c_\rho-c_\mu}\,,
\hspace{2cm}
P^{\lambda}_{n}=Z^\lambda \prod_{\begin{array}{c}
		{\scriptstyle \beta \in \overbar{\Lambda}_\lambda(\Dim)}\\
\end{array}}\left(1 - \frac{A_n}{a_{\lambda\backslash\beta}}\,\right)\,,
\end{equation}  
while for the $(f+1)$-traceless isotypic projection operators one has: 
\begin{equation}\label{eq:branching_f_traceless_central_idempotent_class_A25}
P_n^{\lambda}=\sum_{\scriptsize\alpha\,\in\, \overbar{\mathrm{spec}}^{\,\lambda}(A_n)}\, \, P^{(\lambda\,,\,\alpha)}\,, \hspace{0.4cm} \textit{with} \hspace{0.4cm} P^{(\lambda\,,\,\alpha)}=\frac{\mathcal{A}(P^{\lambda}_{n-2})}{\alpha}\prod_{\begin{array}{c}
		{\scriptstyle \beta\,\in\, \overbar{\mathrm{spec}}^{\,\lambda}(A_n)\,}\\
		{\scriptstyle \beta \,\neq\,\alpha }
\end{array}}\left(\,\frac{\beta - A_n}{\beta-\alpha}\,\right)\,.
\end{equation}

For the purpose of applications, the main computational difficulty resides in expanding the above factorized formulas. Doing so at the level of the diagrams constituting $T_n$ and $A_n$ appears to be very inefficient as the number of diagrams and hence the computational time grows drastically with $n$. As any element in the subalgebra $\cn$ can be decomposed over a given basis, it is reasonable to expect that the multiplication of two elements of $\cn$ could be done without having to resort to diagram-wise computations. In the case of $\C\sn$, we have seen in chapter \ref{chap:projectors_GL} that the product of two conjugacy class sum are given by the connection coefficient of $\C\sn$ which we recall can be obtained from the irreducible characters \eqref{eq:struc_constants_Sn}. Unfortunately, such a formula is not available in the context of the Brauer algebra, which led us to propose an alternative approach.

In the first section, we present the parametrization of the diagrams in $\bn$ which is used in the package $\textit{BrauerAlgebra}$ and we give a generalization of the cycle notation for permutations to the elements of $\bn$. We then describe the parametrization of the conjugacy classes of $\bn$ in terms of particular ternary bracelets. In the second section we describe the conjugacy class sum basis of $\cn$ and propose a technique which allows one to avoid diagram-wise computations in the expansion of the formulas \eqref{eq:ctYoung_traceless5} and \eqref{eq:branching_f_traceless_central_idempotent_class_A25}. With this technique the products of $A_n,\, T_n$ and $X_n$ with the conjugacy class sums of $\cn$ are obtained from particular second order differential operators on ternary bracelets. Finally, in the last section we give a detailed application of the technique, demonstrating its use in expanding the Young central idempotents for both $\Sn{3}$ and $\Sn{4}$, along with the traceless projector in $\Bn{4}$.

\vskip 2 pt

\section{Parameterization of $\cn$}\label{sec:parametrization_cn}
\subsection{Brauer diagrams in more details}\label{subsec:diagrams}
\paragraph{Brauer diagrams: pairings of $2n$ elements.} Denote by $\dbn$ the diagrammatic basis of $\bn$. Consider $2n$ vertices arranged in two rows: the upper row with $n$ vertices labeled by the integer $\{1,\dots,n\}$ with \textit{type up} and the bottom row with $n$ vertices labeled by the integers $\{\underline{1},\dots,\underline{n}\}$ with \textit{type down}. The elements of $\mathcal{B}_n$ are parameterized by pairings of the set  $\{1,\dots,n\}\cup\{\underline{1},\dots,\underline{n}\}$. 

\begin{example}
	For the permutations $s_1=(123)(465)$ and $s_2=(13)(25)$:
	\begin{equation}
		\begin{aligned}
			&s_1=\raisebox{-.4\height}{\includegraphics[scale=0.81]{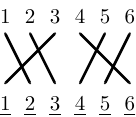}}=\blist{\blist{1,\down{2}},\blist{2,\down{3}},\blist{3,\down{1}},\blist{4,\down{6}},\blist{5,\down{4}},\blist{6,\down{5}}}\,,\\[6pt]
			&s_2=\raisebox{-.4\height}{\includegraphics[scale=0.81]{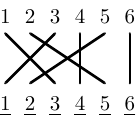}}=\blist{\blist{1,\down{3}},\blist{2,\down{5}},\blist{3,\down{1}},\blist{4,\down{4}},\blist{5,\down{2}},\blist{6,\down{6}}}\,.
		\end{aligned}
	\end{equation}
	For the diagrams in \eqref{eq:multiplication_Bn}:
	\begin{equation}
		\begin{aligned}
			&b_1=\raisebox{-.4\height}{\includegraphics[scale=0.75]{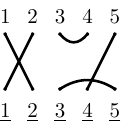}}=\blist{\blist{1,\down{2}},\blist{2,\down{1}},\blist{3,4},\blist{5,\down{4}},\blist{\down{3},\down{5}}}\,,\\[6pt]
			&b_2=\raisebox{-.4\height}{\includegraphics[scale=0.75]{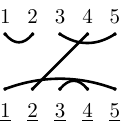}}=\blist{\blist{1,2},\blist{3,5},\blist{4,\down{2}},\blist{\down{1},\down{5}},\blist{\down{3},\down{4}}}\,.
		\end{aligned}
	\end{equation}
\end{example}
Two lists of pairing of $\{1,\dots,n\}\cup\{\underline{1},\dots,\underline{n}\}$ which differ by the order of the labels within a pair or by the position of the pairs within the list are equivalent as they represent the same pairing and hence the same diagram.
\paragraph{Cycle notation for Brauer diagrams.} To obtain the decomposition of a diagram $b\in\dbn$ into a product of $N$ disjoint cycles we will construct a set $\{C_k \st k=1,\ldots, N\}$ of 2 rows matrices, where the first row of $C_k$ correspond to the $k^{th}$ cycle of $b$ in its decomposition. For this, label the upper vertices with  $\{1,\dots,n\}$ and the bottom vertices with  $\{\underline{1},\dots,\underline{n}\}$ and perform the following iterative procedure.\medskip

\textit{
\text{Start with $k=1$:}
	\begin{itemize}
	\item[\textit{i)}] Construct the set $X_k$ of pairs of labels of connected vertices of $b$ which are not of the form $\{i,\underline{i}\}$, that is vertical passing lines in $b$ are discarded from the set $X_k$.
	\item[\it ii)] Construct $C_k$ as follows
	\begin{itemize}
		\item[$\bullet$] The upper left most element of $C_k$ is equal to the minimum integer, with type up, among the elements of $X_k$, and the bottom right element is equal to to the minimum integer, with type down, among the elements of $X_k$.
		\item[$\bullet$] Each column of $C_k$ is filled with entries of one element of $\overbar{X}_k\subseteq X_k$. 
		\item[$\bullet$] The elements at position $(2,j)$ and $(1,j+1)$ of $C_k$, viewed as integers are equal. 
	\end{itemize}
	If $\overbar{X}_k\subset X_k$ define $X_{k+1}=X_{k} \backslash \overbar{X}_{k}$ and repeated step \textit{ii)} with $k=k+1$ until $\overbar{X}_k= X_k$ and then stop.
\end{itemize}
}

\begin{example}
	For the permutation diagram $s_1=\raisebox{-.4\height}{\includegraphics[scale=0.7]{fig/cycle123_465_indexed.pdf}}$:
	\begin{equation}
		C_1=\begin{pmatrix} 
			1 & 2 & 3 \\
			\down{2} & \down{3} & \down{1}\\
		\end{pmatrix}\,,
		\hspace{1cm}
		C_2=\begin{pmatrix} 
			4 & 6 & 5 \\
			\down{6} & \down{5} & \down{4}\\
		\end{pmatrix}\,,
		\hspace{0.5cm}\text{and we have} \hspace{0.5cm} s_1=(1\,2\,3)(4\,6\,5)\,.
	\end{equation}
	For the permutation diagram $s_2=\raisebox{-.4\height}{\includegraphics[scale=0.7]{fig/cycle13_25_indexed.pdf}}$:
	\begin{equation}
		C_1=\begin{pmatrix} 
			1 & 3 \\
			\down{3} & \down{1} \\
		\end{pmatrix}\,,
		\hspace{1cm}
		C_2=\begin{pmatrix} 
			2 & 5 \\
			\down{5} & \down{2}\\
		\end{pmatrix}\,,
		\hspace{0.5cm}\text{and we  have} \hspace{0.5cm} s_1=(1\,3)(2\,5)\,.
	\end{equation}
	For the diagram $b_1=\raisebox{-.4\height}{\includegraphics[scale=0.7]{fig/d1_indexed.pdf}}$:
	\begin{equation}
		C_1=\begin{pmatrix} 
			1 & 2\\
			\down{2} & \down{1} \\
		\end{pmatrix}\,,
		\hspace{1cm}
		C_2=\begin{pmatrix} 
			3 & \down{4} & \down{5} \\
			4& 5 & \down{3}\\
		\end{pmatrix}\,,
		\hspace{0.5cm}\text{and we have} \hspace{0.5cm} b_1=(3\,\down{4}\,\down{5})(1\,2)\,.
	\end{equation}
	For the diagram $b_2=\raisebox{-.4\height}{\includegraphics[scale=0.7]{fig/d2_indexed.pdf}}$:
	\begin{equation}
		C_1=\begin{pmatrix} 
			1 & \down{2} &\down{4}&3&\down{5}\\
			2 & 4 & \down{3}&5&\down{1}\\
		\end{pmatrix}\,,
		\hspace{0.5cm}\text{and we have} \hspace{0.5cm} b_2=(1\,\down{2}\,\down{4}\,3\,\down{5})\,.
	\end{equation}
	For the diagram $b_3=\raisebox{-.4\height}{\includegraphics[scale=0.7]{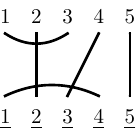}}$:
	\begin{equation}
	C_1=\begin{pmatrix} 
		1 & \down{3} &\down{4}\\
		3 & 4 & \down{1} \\
	\end{pmatrix}\,,
	\hspace{0.5cm}\text{and we have} \hspace{0.5cm} b_3=(1\,\down{3}\,\down{4})\,.
	\end{equation}
\end{example}

\begin{remark}
\begin{itemize}
	\item[\it i)] By construction a cycle always start with an integer with type up. 
	\item[\it ii)] To the best of our knowledge the cycle notation for Brauer diagrams is not presented in the literature.
\end{itemize}
\end{remark}

\begin{mathematicas}[\textit{BrauerAlgebra}]
	DownInteger\,\\
	\textit{\textup{DownInteger} is the head for the integer parameterizing the bottom row of a Brauer diagram. Usage: \textup{DownInteger[$k$]} where $k$ is an integer.}\\
	
	BrauerCycles[b]\,\\
	\textit{\textup{BrauerCycles[$\lbrace$$cycle_1$, $cycle_2$, \ldots $\rbrace$]} represents an element of the Brauer algebra in the cycle notation. This function generalizes the built-in \textup{Cycles} command of Mathematica for permutations.}\\
	
	BrauerElementsCycles[$n$]\,\\
	\textit{Returns a list of all basis element of $\bn$ with head \textup{BrauerCycles}.}
\end{mathematicas}

We recall that Mathematicas commands related to the implementation of Brauer diagrams were already given in the first chapter below equation \eqref{math:brauerdiagram}.

\subsection{Conjugacy classes and ternary bracelets}\label{sec:classes-bracelets_map}
\paragraph{Conjugacy classes.}Two diagrams $b,b^{\prime}\in \dbn$, are said to be {\it conjugate}, which is denoted $b\sim b^{\prime}$, if there exists an element $s\in\Sn{n}$ such that $b^{\prime} = s b s^{\shortminus1}$. The conjugacy class $C_{b}$ of a diagram $b\in \dbn$ is the orbit of $b$ under the conjugate action of $\sn$ on $\dbn$ 
\begin{equation}
C_{b}=\{s b s^{\shortminus1} \st s\in \sn \}\,.
\end{equation}
Also, let $\Stab_{\sn}(b)$ denote the stabilizer of a diagram $b\in \dbn $, {\it i.e.} the set of elements $s\in\Sn{n}$ such that $sbs^{\shortminus 1} = b$. From the orbit-stabilizer theorem one has 
\begin{equation}
|C_{b}|=\frac{n!}{|\Stab_{\sn}(b)|}\,.
\end{equation}
\paragraph{Cycle type of a diagram.} Assign to each cycle of a diagram (in its cycle notation) the absolute value of the difference between the number of type up integers and the number of type down integers. This sequence of numbers determines an integer partition allowing zeros. The \textit{cycle type} of a diagram $b\in \dbn$ denoted $\CT(b)$ is the aforementioned integer partition, which naturally generalize the notion introduced in chapter \ref{chap:projectors_GL}. This definition is equivalent to the one introduced by Arun Ram for the evalutation of irreducible characters of the Brauer algebra \cite{Ram_characters} (see also \cite{SHALILE_classes}). For the diagram $b_1,\, b_2$ and $b_3$ above one has 
\begin{equation}
\CT(b_1)=(2,1)\,,\hspace{1cm} \CT(b_2)=(1)\,,\hspace{1cm} \CT(b_3)=(1^3)\,.
\end{equation}
\begin{remark}
$\CT$ is constant on conjugacy classes.
\end{remark}

Contrary to the case of $\sn$, the cycle type of diagrams in $\dbn$, which are integer partitions, do not parametrize the conjugacy classes. Instead, the latter are labeled by a combinatorial object of a different nature: ternary bracelets. For permutations diagram in $\dbn$ we will see that these bracelets reduce to the unary bracelets presented in chapter \ref{chap:projectors_GL} which are in one to one correspondence with the integer partitions.

\paragraph{Ternary bracelets.} In order to parametrise the conjugacy class of $\dbn$  we consider an equivalent reformulation of the one described in \cite{SHALILE_classes,Shalile_Br-center} (see also \cite{KMP_central_idempotents}). Namely, the conjugacy classes are in one-to-one correspondence with a particular subset of the so-called ternary bracelets. A {\it ternary bracelet} is an equivalence class of non-empty words over the ternary alphabet $\mathcal{A} = \{\bb{n},\bb{s},\bb{p}\}$ related by cyclic permutations and inversions (dihedral symmetry), {\it i.e.} can be viewed as a word with its letters written along a closed loop without specifying the direction of reading. In the sequel, we will write $[w]$ to denote a bracelet containing a representative $w$ (a word over $\mathcal{A}$). For the reverse of $w$ we will write $I(w)$, so according to the definition of bracelets one has $[w] = [I(w)]$. The length of a bracelet is defined as the length of any among its representatives, which is written as $\vert w\vert$.
\vskip 2 pt

Denote $\mathfrak{b}(\mathcal{A})$ the set of non-empty ternary bracelets with the same number of occurrences of the letters $\bb{n}$ and $\bb{s}$ (which is allowed to be $0$), with the additional requirement that for any representative, if there is a pair of letters $\bb{n}$ (respectively, $\bb{s}$), there is necessarily the letter $\bb{s}$ (respectively, $\bb{n}$) in between. For example, $[\bb{s}],[\bb{n}\bb{n}\bb{s}\bb{s}]\notin \mathfrak{b}(\mathcal{A})$ and 
\begin{equation}
	\def\arraystretch{1.4}
	\begin{array}{l}
		[\bb{n}\bb{s}\bb{p}\bb{p}] = [\bb{s}\bb{p}\bb{p}\bb{n}] = [\bb{n}\bb{p}\bb{p}\bb{s}] = [\bb{p}\bb{p}\bb{n}\bb{s}] = [\bb{s}\bb{n}\bb{p}\bb{p}] = [\bb{p}\bb{n}\bb{s}\bb{p}] = [\bb{p}\bb{s}\bb{n}\bb{p}] \in \mathfrak{b}(\mathcal{A})\,,\\
		\text{with}\quad\vert\bb{n}\bb{s}\bb{p}\bb{p}\vert = 4\,.
	\end{array}
\end{equation}
\vskip 2 pt

Consider the polynomial algebra $\mathbb{C}[\mathfrak{b}(\mathcal{A})]$, {\it i.e.} the $\mathbb{C}$-span over the basis $\mathfrak{B}$ of monomials $[w_1]\dots [w_r]$ (with all $[w_j]\in \mathfrak{b}(\mathcal{A}))$ and $1$. We will write $[w]^m = \underbrace{[w]\dots [w]}_{m}$ for brevity, and also occasionally $\bb{a}^{m} = \underbrace{\bb{a}\dots\bb{a}}_{m}$ for a letter $\bb{a}\in\mathcal{A}$. The degree of a monomial is defined as the sum of the lengths of the bracelets, $\deg\left([w_{1}]\dots [w_{r}]\right) = \vert w_{1}\vert + \dots + \vert w_{r}\vert$ (where by definition one puts $\deg(1) = 0$), and we denote by $\mathbb{C}[\mathfrak{b}(\mathcal{A})]_{n} \subset \mathbb{C}[\mathfrak{b}(\mathcal{A})]$ the subset of polynomials of a given degree $n$ with basis of monomials denoted $\mathfrak{B}_n$.
\begin{example}
\begin{equation*}
\begin{aligned}
&\mathfrak{B}_2=\left\{[\bb p]^2,[\bb{pp}],[\bb{ns}]\right\}\,, \hspace{0.3cm} \mathfrak{B}_3=\left\{[\bb p]^3,[\bb{pp}][\bb{p}],[\bb{ppp}],[\bb{ns}][\bb{p}],[\bb{nsp}]\right\},\\[6pt]
&\mathfrak{B}_4=\left\{[\bb p]^4,[\bb{pp}][\bb{p}]^2,[\bb{pp}]^2,\,[\bb{ppp}][\bb{p}],[\bb{pppp}]\,,[\bb{ns}][\bb{p}]^2,[\bb{nsp}][\bb{p}]\,,[\bb{nspp}]\,,[\bb{npsp}]\,,
[\bb{ns}]^2,[\bb{nsns}]\right\}.
\end{aligned}
\end{equation*}
\end{example}
\paragraph{Generalized cycle type.} For any diagram $b\in \dbn$ define its \textit{generalized cycle type} $\GCT(b)\in  \mathbb{C}[\mathfrak{b}(\mathcal{A})]_n$ via the following procedure:
\begin{itemize}
	\item[{\it 1)}] label each line of the diagram $b$ by a letter from $\mathcal{A}$: the arcs in the upper (respectively, lower) row by $\bb{n}$ (respectively, $\bb{s}$), the passing lines by $\bb{p}$;
	\item[{\it 2)}] identify the upper nodes with the corresponding lower nodes and straighten the obtained loops, which results in a set of bracelets $[w_{1}],\dots,[w_{r}]\in \mathfrak{b}(\mathcal{A})$;
	\item[{\it 3)}] set $\GCT(b) = \,[w_1] \dots [w_r]\in \mathbb{C}[\mathfrak{b}(\mathcal{A})]$.
\end{itemize}
\noindent For the diagram $b=\raisebox{-.45\height}{\includegraphics[scale=0.4]{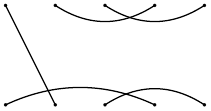}}$ one has the following sequence of transformations:

\begin{equation*}
	\raisebox{-.45\height}{\includegraphics[scale=0.6]{fig/b1.pdf}}\mapsto \raisebox{-.45\height}{\includegraphics[scale=0.7]{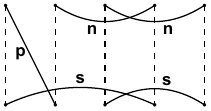}}\mapsto \Bigg\lbrace{\raisebox{-.35\height}{\includegraphics[scale=0.14]{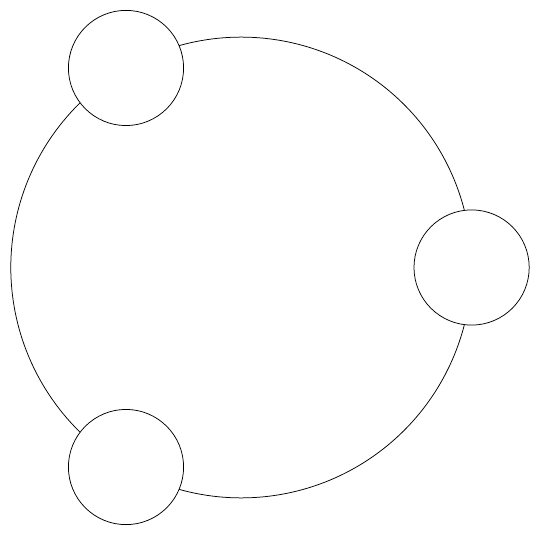}}\put(-30.5,17){${\scriptstyle \bb{n}}$}\put(-6,4){${\scriptstyle\bb{s}}$}\put(-31,-9.8){${\scriptstyle \bb{p}}$}\, , \, \raisebox{-.45\height}{\includegraphics[scale=0.14]{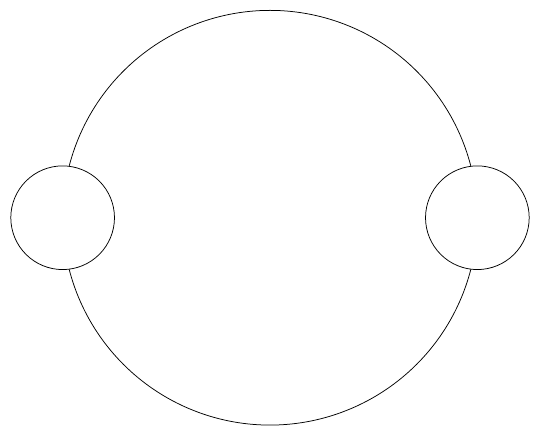}}\put(-35,0){${\scriptstyle \bb{n}}$}\put(-6.5,0){${\scriptstyle \bb{s}}$}\Bigg\rbrace}\mapsto \,[\bb{nsp}][\bb{ns}]
\end{equation*}
\begin{example}
Assigning the colors $\{\textcolor{black}{\bullet}\,,\textcolor{red}{\bullet}\,,\textcolor{orangebracelet}{\bullet} \}$ to the letters $\{\bb{p},\bb{n},\bb{s}\}$ respectively, one has for the diagram of section \ref{subsec:diagrams}:
\begin{equation*}
\begin{array}{ll}
\GCT(s_1)=[\bb{ppp}][\bb{ppp}]=\left\{\raisebox{-.4\height}{\includegraphics[scale=0.7]{fig/unary_bracelets_3.pdf}}\,,\raisebox{-.4\height}{\includegraphics[scale=0.65]{fig/unary_bracelets_3.pdf}}\right\}\,, &\hspace{-0.6cm}\GCT(s_2)=[\bb{pp}]^2[\bb{p}]^2=\left\{\raisebox{-.4\height}{\includegraphics[scale=0.7]{fig/unary_bracelets_2.pdf}}\,,\raisebox{-.4\height}{\includegraphics[scale=0.7]{fig/unary_bracelets_2.pdf}}\,,\raisebox{-.4\height}{\includegraphics[scale=0.7]{fig/unary_bracelets_1.pdf}}\,,\raisebox{-.4\height}{\includegraphics[scale=0.7]{fig/unary_bracelets_1.pdf}}\right\}\,,\\[10pt]
\GCT(b_1)=[\bb{n}\bb{s}\bb{p}][\bb{pp}]=\left\{\raisebox{-.4\height}{\includegraphics[scale=0.7]{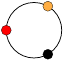}}\,,\raisebox{-.4\height}{\includegraphics[scale=0.7]{fig/unary_bracelets_2.pdf}}\right\}\,,&\hspace{-0.6cm}\GCT(b_2)=[\bb{nsnsp}]=\left\{\raisebox{-.4\height}{\includegraphics[scale=0.7]{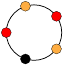}}\,\right\}\,,\\[10pt]
\GCT(b_3)=[\bb{nsp}][\bb{p}]^2=\left\{\raisebox{-.4\height}{\includegraphics[scale=0.7]{fig/nsp.pdf}}\,,\raisebox{-.4\height}{\includegraphics[scale=0.7]{fig/unary_bracelets_1.pdf}}\,,\raisebox{-.4\height}{\includegraphics[scale=0.7]{fig/unary_bracelets_1.pdf}}\right\}\,.
\end{array}
\end{equation*}
\end{example}
\noindent Note that for a permutation $s\in\Sn{n}$ one has $\GCT(s) = [\bb{p}^{\lambda_1}] \dots [\bb{p}^{\lambda_l}]$. Without loss of generality, $\lambda_1 \geqslant \dots \geqslant \lambda_l$, so one arrives at a partition of $n$. In other words for permutations the generalized cycle type reduces to the cycle type as the concatenation of unary bracelets.

\begin{theorem}[{\cite[Theo. 2.6 ]{SHALILE_classes}}]\label{theo:classes-bracelets}
	Two diagrams $b$ and $b'$ are conjugate if and only if they have the same generalized cycle type. 
\end{theorem}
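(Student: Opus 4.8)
The plan is to prove that two Brauer diagrams $b$ and $b'$ are conjugate under $\Sn{n}$ if and only if they have the same generalized cycle type, $\GCT(b) = \GCT(b')$. Since $\GCT$ is manifestly invariant under conjugation (conjugation by a permutation only relabels the vertices, which neither changes the type of any line --- arc in the upper row, arc in the lower row, or passing line --- nor alters the cyclic/dihedral structure of the loops obtained after identifying the top and bottom rows), one direction is immediate: conjugate diagrams have equal generalized cycle type. The substance is the converse, namely that $\GCT(b) = \GCT(b')$ \emph{implies} the existence of $s \in \Sn{n}$ with $b' = s\, b\, s^{\shortminus 1}$.

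For the converse, first I would set up the combinatorial correspondence underlying the $\GCT$ procedure more carefully. Each bracelet $[w_k]$ produced in step 2) records, along a closed loop, the cyclic sequence of line-types one encounters when tracing through a connected component of the ``doubled'' diagram (the diagram with its top and bottom rows identified). The crucial observation is that a Brauer diagram $b$ is \emph{completely determined, up to a relabeling of its $2n$ vertices, by this multiset of loop-words}: the word $[w_k]$ tells us exactly how arcs and passing lines alternate along the $k$-th loop, and the dihedral equivalence precisely captures the freedom in choosing a starting vertex and an orientation for tracing the loop. So the strategy is to show that from the data $\GCT(b)$ one can reconstruct $b$ up to the simultaneous relabeling of upper and lower vertices induced by a single permutation $s \in \Sn{n}$.

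The key steps, in order, would be as follows. First I would argue that a permutation $s \in \Sn{n}$ acts on a diagram $b$ precisely by relabeling vertex $i \mapsto s(i)$ on the top and $\underline{i}\mapsto \underline{s(i)}$ on the bottom \emph{simultaneously}, so conjugacy classes are orbits of the pairing structure under this diagonal relabeling. Second, given two diagrams with $\GCT(b) = \GCT(b')$, I would construct $s$ loop-by-loop: since the two multisets of bracelets coincide, I can pair up each loop of $b$ with a loop of $b'$ having an identical word; within each matched pair, the dihedral equivalence guarantees a choice of traversal that aligns the sequence of arcs and passing lines, and this alignment dictates a bijection between the vertices of that loop of $b$ and the corresponding loop of $b'$. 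Third, I would verify that these partial bijections, assembled over all loops, respect the up/down type of every vertex --- this is exactly where the constraint defining $\mathfrak{b}(\mathcal{A})$ (that between any two $\bb{n}$'s there is an $\bb{s}$, and vice versa) does its work, ensuring that the alternation of upper-arcs and lower-arcs along a loop is consistent, so the reconstructed map sends top vertices to top vertices and bottom vertices to bottom vertices in a compatible way. Finally, I would check that the resulting map is of the diagonal form $i \mapsto s(i)$, $\underline{i}\mapsto \underline{s(i)}$ for a single $s$, and conclude $b' = s\, b\, s^{\shortminus 1}$.

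The main obstacle I anticipate is the third step: showing that the loop-by-loop bijections can be chosen \emph{coherently} so that they assemble into a genuine diagonal permutation rather than an arbitrary relabeling of the $2n$ vertices that might move a top vertex to a bottom vertex. The $\bb{n}/\bb{s}$ alternation condition is the hinge here, and I would invoke Theorem~\ref{theo:classes-bracelets} as cited --- the result of \cite{SHALILE_classes} --- for the detailed combinatorial bookkeeping, since that reference establishes exactly this normal-form/reconstruction argument. A secondary subtlety is handling passing lines ($\bb{p}$) correctly: a passing line connects a top vertex $i$ to a bottom vertex $\underline{j}$, so tracing a loop forces compatible top/bottom alternation, and I would make sure the reconstruction never demands conflicting assignments at a vertex shared between two line-types.
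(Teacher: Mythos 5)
Your proposal is circular at precisely the point where the real work lies. The easy direction (invariance of $\GCT$ under conjugation) is fine, and your reconstruction strategy for the converse --- match loops of the doubled diagrams carrying equal bracelets, align traversals using the dihedral freedom, read off a vertex bijection --- is the right idea. But at what you yourself identify as the main obstacle (assembling the loop-by-loop bijections into a genuine conjugating permutation), you discharge the step by invoking Theorem~\ref{theo:classes-bracelets} as cited from \cite{SHALILE_classes}. That is the very statement being proved; citing it is not a proof step. Be aware that the paper offers no proof of this theorem either --- it imports it from Shalile --- so there is no argument in the text for your sketch to lean on (the closest the paper comes is the adjacency-preservation reasoning in the proof of Lemma~\ref{lem:sym}, which presupposes the classification rather than establishing it); the combinatorial bookkeeping is exactly what a blind proof must supply.

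The gap is fillable, and more easily than your obstacle paragraph suggests. After identifying top vertex $i$ with bottom vertex $\underline{i}$, the nodes of the doubled diagram are the pairs $(i,\underline{i})$, i.e. they are indexed by $\{1,\dots,n\}$; consequently \emph{any} bijection of nodes is automatically of the diagonal form $i\mapsto s(i)$, $\underline{i}\mapsto \underline{s(i)}$, so a top vertex can never be sent to a bottom vertex. What actually needs checking is finer: that the bijection induced by aligning two identical words sends each line of $b$ to a line of $b'$ of the same type \emph{attached at the same levels} (an upper-arc end to an upper-arc end, the top end of a passing line to a top end, and so on). This holds because the level of every line-end is forced, letter by letter, by the word itself: an $\bb{n}$ occupies two top ends, an $\bb{s}$ two bottom ends, a $\bb{p}$ one of each, and passing through a node always switches level; hence identical words have identical level sequences along matched traversals, and the induced $s\in\Sn{n}$ satisfies $b'=s\,b\,s^{\shortminus 1}$ loop by loop. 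The one case this argument does not pin down is a pure-$\bb{p}$ loop, where both traversal orientations realize the same word; there one invokes the elementary fact that cycles of equal length (a cycle and its inverse included) are conjugate in $\Sn{n}$. Spelling this out, instead of pointing back at the citation, is what would turn your outline into a proof.
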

We then  denote by $C_\xi$ for the conjugacy class associated to any diagram $b\in \dbn$ with generalized cycle type $\xi$. Also one has that the cardinality of the basis $\mathfrak{B}_n$ of $\mathbb{C}[\mathfrak{b}(\mathcal{A})]_n$ is the number of conjugacy class of $\dbn$. 
\begin{remark}
\begin{itemize}
\item[\it i)] This theorem is a generalization of Theorem \ref{theo:classes_partition}\,.
\item[\it ii)] In the mathematica packages, the letters $\bb{n}$, $\bb{s}$ and $\bb{p}$ are replaced by the letters $1, 2$ and $3$ respectively. 
\end{itemize}
\end{remark}

\paragraph{Generalized cycle type from the cycle notation.} The generalized cycle type of a diagram $b\in\dbn$ can be easily determined starting from its cycle notation. Associate a letter from the set $\mathcal{A}$ to each pair of consecutive integers in the cycle (the first and last elements of the cycle considered consecutive). If the elements of a pair share the same integer type, assign it to the letter $\bb p$. Otherwise, assign each pair to the letter $\bb n$ or $\bb s$.  The first pair of consecutive integers with different types, reading the cycle from left to right, is mapped to $\bb n$. Recall that there can not be two consecutive letters $\bb{n}$ or $\bb{s}$ in a bracelet. 
\begin{example}
Again, for the diagrams $b_1$, $b_2$, and $b_3$ one has:
\begin{equation*}
\hspace{-0.2cm}\GCT((3\,\down{4}\,\down{5})(1\,2))=[\bb{nps}][\bb{pp}], \hspace{0.2cm} \GCT((1\,\down{2}\,\down{4}\,3\,\down{5}))=[\bb{npsns}], \hspace{0.2cm} \GCT((1\,\down{3}\,\down{4}))=[\bb{nps}][\bb{p}]^2,
\end{equation*}
which coincide with the result of the previous example.
\end{example}

\begin{mathematicas}[\textit{BrauerAlgebra}]
	Bracelets\,\\
	\textit{\textup{Bracelets} is the head for the ternary bracelets parameterizing the conjugacy classes of the Brauer algebra. Usage: \textup{Bracelets[$\lbrace bracelet_1$, $bracelet_2$, $\ldots$$\rbrace$]}}\\
	
	BrauerBracelets[n]\,\\
	\textit{Returns a list of all the ternary bracelets parameterizing the conjugacy classes of the Brauer algebra $\bn$.}
\end{mathematicas}

\paragraph{Cardinality $\vert \Stab_{\Sn{n}}(b)\vert$ via symmetries of $\GCT(b)$.} 

Consider the group of cyclic permutations $\mathbb{Z}_{\ell}$ which acts on the words of the length $\ell$. Recall that $I(w)$ denotes the inversion of a word $w$. Given any word $w$ of length $\ell$, define its {\it turnover stabilizer}:
\begin{equation}\label{eq:stabilizer_word}
	S(w) = \left\{h\in \mathbb{Z}_{\ell}\;\; : \;\; \text{either}\;\; h(w) = w\;\; \text{or}\;\; h(w) = I(w)\right\}\,.
\end{equation}
For any bracelet $[w]\in\mathfrak{b}(\mathcal{A})$ define its \textit{stability index} as 
\begin{equation}
\mathrm{st}\big([w]\big) = \vert S(w)\vert
\end{equation}
Let $w$ and $w'$ be two words in a ternary bracelet\footnote{Recall that ternary bracelets are equivalent class of words on $\mathcal{A}$}. Then there exist $s\in \sn$ such that $w=s(w')$, and for any $h\in S(w)$ such that $h(w)=w$, one has $s^{-1}hs(w')=w'$. Hence, the cardinality $\vert S(w)\vert$ does not depend on a particular choice of a representative, the stability index is well defined.
Note that both conditions in \eqref{eq:stabilizer_word} are simultaneously satisfied by an element $h\in\mathbb{Z}_{\ell}$ only if $w = I(w)$, {\it i.e.} the word is inversion-invariant which in $\mathfrak{b}(\mathcal{A})$ happens only for bracelet $[\bb{p}\dots\bb{p}]$ associated to a permutation cycle. In this case $\big\vert S(\bb{p}^{\ell})\big\vert = \ell$ as already noted in section \ref{subsec:conjugacyclassSn}. 
\vskip 2 pt
For a monomial $\xi=[\xi_1]^{m_1}\dots [\xi_r]^{m_r}$ in $\mathbb{C}[\mathfrak{b}(\mathcal{A})]$ define its {\it stability index} as follows:
\begin{equation}\label{eq:stab_bracelets}
	\mathrm{st}\big(\xi \big) = \prod_{j=1}^{r} \mathrm{st}\big([\xi_j]\big)^{m_j}\,m_{j}!\,
\end{equation}
This definition is the generalization of \eqref{eq:stability_index_symmetric_groups} to the conjugacy class of $\bn$ and the following Lemma gives a convenient method for computing the cardinality of $\Stab_{\Sn{n}}(b)$ and $C_{b}$.
\begin{lemma}\label{lem:sym} 
	For any diagram $b\in  \dbn$ with $\GCT(b)=\xi$ holds $\vert \Stab_{\Sn{n}}(b)\vert = \mathrm{st}\big(\xi\big)$. Hence 
\begin{equation}
	|C_{b}|=\frac{n!}{\mathrm{st}\big(\xi\big)}\,.
\end{equation}
\end{lemma}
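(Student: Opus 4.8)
The plan is to obtain the cardinality formula for $|C_b|$ as an immediate consequence of the identity $|\Stab_{\Sn{n}}(b)| = \mathrm{st}(\xi)$ together with the orbit--stabilizer theorem already invoked above, so the whole task reduces to computing the order of the stabilizer. First I would make the conjugation action explicit on pairings: writing $b$ as a pairing of $\{1,\dots,n\}\cup\{\down 1,\dots,\down n\}$, an element $s\in\Sn{n}$ sends a pair $\{x,y\}$ to $\{s\cdot x,\, s\cdot y\}$, where $s$ acts on a label by permuting its integer part and preserving its type (up or down). Thus $s\in\Stab_{\Sn{n}}(b)$ precisely when this relabelling returns the same pairing.

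Next I would pass to the loop picture underlying the definition of $\GCT(b)$. Identifying the upper node $i$ with the lower node $\down i$ merges the $2n$ vertices into $n$ points, and the lines of $b$ (coloured $\bb n$, $\bb s$, $\bb p$) close up into a disjoint union of coloured loops whose reading words are the bracelets constituting $\xi=\GCT(b)$. Since $s$ sends both $i$ and $\down i$ to the identically-typed labels $s(i)$ and $\down{s(i)}$, conjugation by $s$ descends to an honest permutation of the $n$ points. The key reduction is then that $s$ stabilises $b$ if and only if this induced permutation is a colour-preserving automorphism of the coloured-loop structure; I would prove both implications directly from the pairing description of the previous paragraph. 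This identifies $\Stab_{\Sn{n}}(b)$ with the automorphism group of the coloured loops realised on the labelled point set.

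The order of this automorphism group I would compute by a wreath-product decomposition, exactly generalising the argument behind the identity $|\Stab_{\Sn{n}}(s)| = \mathrm{st}(\mu)$ for a permutation. An automorphism must carry each loop to a loop bearing the same bracelet, so writing $\xi = [\xi_1]^{m_1}\cdots[\xi_r]^{m_r}$ it factors as a permutation of the $m_j$ loops of type $[\xi_j]$ among themselves, contributing $m_j!$, followed, on each individual loop, by an internal colour-preserving symmetry. Because distinct loops are supported on disjoint point sets, these choices are independent, giving $|\Stab_{\Sn{n}}(b)| = \prod_{j} \mathrm{st}\big([\xi_j]\big)^{m_j}\, m_j! = \mathrm{st}(\xi)$, once one knows that the number of internal symmetries of a loop with reading word $w$ equals $\mathrm{st}([w]) = |S(w)|$.

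The delicate step --- and the one I expect to be the main obstacle --- is precisely this last count of internal loop symmetries, where the subtlety is orientation. A loop built only from passing lines ($w = \bb p^{\ell}$) carries a canonical direction induced by the upper/lower slot structure, so only the $\ell$ rotations survive, and a reflection would send the diagram to its distinct inverse; this is exactly why $S(\bb p^{\ell}) = \mathbb Z_{\ell}$ has order $\ell$. A loop containing arcs, by contrast, has no global orientation, since each arc acts as a turning point reversing the up/down slot pattern, so both the rotations $\{h : h(w) = w\}$ and the reflections $\{h : h(w) = I(w)\}$ can preserve colours, which is why the turnover stabiliser $S(w)$ is defined as the union of the two inside $\mathbb Z_{\ell}$. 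I would verify carefully that for every $w\in\mathfrak b(\mathcal A)$ these two sets are tallied without overlap --- they can coincide only for an orientation-invariant word, where reflections are absent anyway --- so that $|S(w)|$ is neither an over- nor an undercount of the genuine loop symmetries, the independence of this quantity from the chosen representative having already been recorded just after the definition of $\mathrm{st}([w])$. Granting this, the wreath-product computation closes the proof.
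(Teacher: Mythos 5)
Your proposal is correct, and at its core it rests on the same identification as the paper's proof: the restriction of a stabilizing permutation to a single loop is either a rotation or a rotation composed with an inversion, and these are exactly what the turnover stabilizer $S(w)$ counts. The two routes differ in execution, and yours is in one respect more complete. The paper first notes that $\vert \Stab_{\Sn{n}}(b)\vert$ is constant on conjugacy classes, replaces $b$ by a convenient representative (arcs on consecutive nodes, non-crossing passing lines), argues via preservation of line-adjacency that any stabilizer element has the form $c^{m}r^{p}$ for the standard cycle $c$ and reflection $r$, and then matches these against $S(w_b)$ --- but it carries this out explicitly only when $\GCT(b)$ is a single bracelet, leaving the multiplicity factors $m_j!$ of the general case implicit. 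You instead work representative-free through the merged-point loop picture and make the wreath-product decomposition over loops of equal type explicit, which supplies precisely the bookkeeping behind $\mathrm{st}(\xi)=\prod_j \mathrm{st}\big([\xi_j]\big)^{m_j}m_j!$ that the paper's written proof skips. One point of precision you should repair: in your second paragraph, the equivalence ``$s$ stabilizes $b$ iff the induced permutation is a colour-preserving automorphism of the coloured loops'' is false if ``automorphism'' means automorphism of the edge-coloured graph alone. For $b=(123)\in\Sn{3}$ the loop is a triangle with all edges coloured $\bb{p}$, and the reflection $(12)$ preserves colours yet does not stabilize $b$, because it reverses the canonical direction that the upper/lower slots impose on passing lines. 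The relevant automorphisms must also preserve the slot structure (which end of each line sits in an upper slot and which in a lower one); your final paragraph's orientation analysis supplies exactly this refinement, so the proof as a whole closes, but the equivalence should be stated with the slot data included from the start rather than recovered afterwards.
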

For the proof see appendix \eqref{app:Lemma_sym}.

\begin{mathematica}[\textit{BrauerAlgebra}]
	StabilityIndex[$\xi$]\,\\
	\textit{Returns the stability index $\stab (\xi)$ associated with the ternary bracelets $\xi$ with head \textup{Bracelets}.}
\end{mathematica}

\section{Conjugacy class sum in $\bn$}
\subsection{A basis for $\cn$}
The \textit{normal conjugacy class sum} (or simply \textit{conjugacy class sum}) of a diagram $b\in \dbn$ with $\GCT(b)=\xi$, denoted $K_\xi$, is the formal sum of the elements of $C_\xi$
\begin{equation}
	K_\xi= \sum_{b \,\in C_\xi} b\,.
\end{equation}
We also define the $\textit{averaged}$ conjugacy class sum $\bar{K}_\xi$ as 
\begin{equation}\label{eq:averaged_class_sum}
	\bar{K}_\xi=\sum_{s \in \sn } s \, b \, s^{\shortminus 1}\,, \hspace{0.5cm} \text{for any $b\in \dbn$ such that $\GCT(b)=\xi$}.
\end{equation}
The cardinalities of the stabilizers of two conjugate diagrams coincide, so by Lemma \ref{lem:sym}
\begin{equation}\label{eq:normalised_classes}
	\bar K_\xi= \stab(\xi)\,K_\xi \,.
\end{equation}
\begin{lemma}
The set of conjugacy class sums $\lbrace K_\xi \st \xi\in  \mathfrak{B}_n\rbrace$ forms a basis of the centralizer $\mathcal{C}_n$ of $\mathfrak{S}_n$ in $\bn$:
	\begin{equation}\label{eq:centraliser}
		\text{for any } s\in\Sn{n}\;\; \text{holds}\;\; us = su\quad\Leftrightarrow\quad  u = \sum_{\xi\in \mathfrak{B}_n }c_\xi \, K_{\xi}\,.  
	\end{equation}
\end{lemma}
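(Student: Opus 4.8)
The plan is to identify $\cn$ with the fixed-point space of the conjugation action of $\sn$ on the diagrammatic basis $\dbn$, and then to recognize the conjugacy class sums $K_\xi$ as the orbit sums of this action. This reduces the lemma to the classical orbit-sum argument for class functions, adapted to the fact that here we conjugate only by $\sn$ rather than by all of $\bn$.

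First I would check the easy inclusion $K_\xi \in \cn$. By definition $C_\xi$ is an orbit of $\dbn$ under conjugation by $\sn$, so for any $s \in \sn$ the map $b \mapsto s\, b\, s^{\shortminus 1}$ is a bijection of $C_\xi$ onto itself. Summing over the orbit gives $s K_\xi s^{\shortminus 1} = K_\xi$, equivalently $s K_\xi = K_\xi s$; since this holds for every $s \in \sn$ we conclude $K_\xi \in \cn$. Linear independence of $\{K_\xi \st \xi \in \mathfrak{B}_n\}$ is then immediate: distinct conjugacy classes are disjoint subsets of $\dbn$, so the $K_\xi$ are supported on pairwise disjoint subsets of the (linearly independent) diagrammatic basis.

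The substance is the spanning statement. The key structural observation, which I would isolate as a preliminary remark, is that conjugation by a permutation is a genuine permutation of the basis $\dbn$: for $s \in \sn$ and $b \in \dbn$ the product $s\, b\, s^{\shortminus 1}$ is again a single diagram with coefficient $1$, because $s$ and $s^{\shortminus 1}$ are permutation diagrams (only through-lines) and their composition with $b$ creates no closed loops, so no factor of $\Dim$ is produced. Granting this, write an arbitrary $u \in \cn$ in the diagrammatic basis as $u = \sum_{b \in \dbn} c_b\, b$. From $s u s^{\shortminus 1} = u$ together with the fact that $b \mapsto s\, b\, s^{\shortminus 1}$ merely permutes the basis, comparison of coefficients yields $c_{s b s^{\shortminus 1}} = c_b$ for all $s \in \sn$ and all $b \in \dbn$. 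Thus $b \mapsto c_b$ is constant along each $\sn$-orbit, i.e.\ along each conjugacy class $C_\xi$; denoting this common value $c_\xi$, we obtain $u = \sum_{\xi \in \mathfrak{B}_n} c_\xi K_{\xi}$. Combined with the first step this establishes both implications in \eqref{eq:centraliser}.

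I expect the only delicate point to be the preliminary observation that conjugation by $\sn$ introduces no loop factors (so that the conjugation action is a bona fide permutation of $\dbn$ and the fixed space is exactly the span of orbit sums); everything else is routine. I would also record explicitly that Theorem \ref{theo:classes-bracelets} is what guarantees the $\sn$-orbits are indexed precisely by $\mathfrak{B}_n$, so that the index set appearing in the statement is the correct one.
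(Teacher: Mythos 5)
Your proof is correct and follows essentially the same route as the paper: both arguments rest on the fact that conjugation by $\sn$ permutes the diagram basis $\dbn$, so that the centralizer is exactly the span of the orbit sums $K_\xi$. The paper renders this via the averaging identity $u=\frac{1}{n!}\sum_{s\in \sn}s\,u\,s^{\shortminus 1}$ and the averaged class sums $\bar{K}_{\GCT(b)}$ instead of your coefficient-constancy comparison, and it leaves implicit the no-loop observation and the linear independence that you spell out, but these are cosmetic differences of presentation rather than of substance.
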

\begin{proof}
	The proof of the above implication from right to left is straightforward, while for the opposite implication note that the condition $us = su$ for any $s\in \Sn{n}$ implies
	\begin{equation}
		u=\dfrac{1}{n!}\sum_{s\in \sn}s\, u\, s^{\shortminus 1}.
	\end{equation}
	Decomposing u over the diagram basis of $\bn$,
	\begin{equation}
		u = \sum_{b\in \dbn} c_{b}\,b\,, \hspace{0.5cm} \text{yields} \hspace{0.5cm} u=\dfrac{1}{n!}\sum_{b\in \dbn} c_b \bar{K}_{\GCT(b)}\,.
	\end{equation}
\end{proof}
\begin{example}
One has the following basis in $\mathcal{C}_{3}$ parametrized by monomials in $\mathfrak{B}_{3}$:
\begin{equation}\label{eq:classes_Br_3}
\begin{aligned}
K_{[\bb{p}]^3}  &= \raisebox{-.4\height}{\includegraphics[scale=0.32]{fig/id3.pdf}}\,, \hspace{0.5cm}		
K_{[\bb{pp}][\bb{p}]}= \, \raisebox{-.4\height}{\includegraphics[scale=0.32]{fig/s3a.pdf}}+\raisebox{-.4\height}{\includegraphics[scale=0.32]{fig/s3b.pdf}}+\raisebox{-.4\height}{\includegraphics[scale=0.32]{fig/s3c.pdf}}\,, \hspace{0.5cm}	
K_{[\bb{ppp}]}=\, \raisebox{-.4\height}{\includegraphics[scale=0.32]{fig/s3d.pdf}}+\raisebox{-.4\height}{\includegraphics[scale=0.32]{fig/s3e.pdf}}\,,
\\[6pt]
K_{[\bb{ns}][\bb{p}]}&=  \raisebox{-.4\height}{\includegraphics[scale=0.32]{fig/d3a.pdf}}+\raisebox{-.4\height}{\includegraphics[scale=0.32]{fig/d3b.pdf}}+\raisebox{-.4\height}{\includegraphics[scale=0.32]{fig/d3c.pdf}}\,,\hspace{0,5cm}
K_{[\bb{nsp}]} = \raisebox{-.4\height}{\includegraphics[scale=0.32]{fig/d3d.pdf}}+\raisebox{-.4\height}{\includegraphics[scale=0.32]{fig/d3e.pdf}}+\raisebox{-.4\height}{\includegraphics[scale=0.32]{fig/d3f.pdf}}+\raisebox{-.4\height}{\includegraphics[scale=0.32]{fig/d3g.pdf}}+\raisebox{-.4\height}{\includegraphics[scale=0.32]{fig/d3h.pdf}}+\raisebox{-.4\height}{\includegraphics[scale=0.32]{fig/d3i.pdf}}\,.
\end{aligned}
\end{equation}
\end{example}

\begin{remark}
The elements $T_n$, $A_n$ which enters the formulas \eqref{eq:ctYoung_traceless5}-\eqref{eq:branching_f_traceless_central_idempotent_class_A25} and the element $X_n$ are all in $\cn$. In chapter \ref{chap:projectors_GL} we have already seen that $T_n$ is the conjugacy class sum 
\begin{equation}
	T_n=K_{[\bb{pp}][\bb p]^{n-2}}\,.
\end{equation}
For any $1\leqslant i<j\leqslant n$ one has $\GCT(d_{ij})=[\bb{ns}][\bb{p}]^{n-2}$, so
\begin{equation}\label{class_A}
	A_n=K_{[\bb{ns}][\bb{p}]^{n-2}}\,,\hspace{1cm} X_n=K_{[\bb{pp}][\bb p]^{n-2}}-K_{[\bb{ns}][\bb{p}]^{n-2}}\,.
\end{equation}
\end{remark}

\subsection{The dimension of $\cn$} An efficient algorithm for counting the number of conjugacy class of $\dbn$, denoted $d_n$, is described implicitly in \cite{WILLENBRING2001_Stable_Branche_Brauer}. In there the author is interested in the space of $O(\Dim,\C)$-invariant homogeneous polynomials of degree $n$ of $\Dim\times \Dim$ matrices, denoted $\mathcal{P}^n(M_{\Dim})^{O(\Dim,\C))}$. In particular it is shown that $\dim(\mathcal{P}^n(M_{\Dim})^{O(\Dim,\C))})$ is the coefficient of $q^n$ in the expansion of
\begin{equation}\label{eq:produc_dn}
	\prod_{k=1}^{\infty}\left(\frac{1}{1-q^k}\right)^{c_k}\,,
\end{equation} 
where $c_k$ is the number of $k$ vertex cyclic graphs with directed edges counted up to dihedral symmetry. These combinatorial objects called \textit{directed cycles} are in one to one correspondence with the ternary bracelets described above \cite{KIM2021292}. A directed cycle with $
f$ \textit{sink vertices} $\raisebox{-.2\height}{\includegraphics[scale=1]{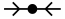}}$, $f$ \textit{source vertices} $\raisebox{-.2\height}{\includegraphics[scale=1]{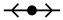}}$, and $k-2h$ \textit{flow vertices} $\raisebox{-.2\height}{\includegraphics[scale=1]{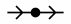}}$, correspond to a ternary bracelet with respectively $f$ letters $\bb{n}$, $f$ letters $\bb{s}$ and $k-2f$ letters $\bb{p}$. In \cite{KIM2021292} this correspondence is used to show that $\dim(\mathcal{P}^n(M_{\Dim})^{O(\Dim,\C))})=\dim(\cn)$, that is, the aforementioned coefficient in the product expansion of  \eqref{eq:produc_dn} is $d_n$.\medskip

Hence an algorithm computing $c_k$ leads to an algorithm for computing $d_n$ by expanding the product \eqref{eq:produc_dn}. For completeness, we give the algorithm presented in section $6.1.$ of \cite{WILLENBRING2001_Stable_Branche_Brauer}. Let,
\begin{equation}
	C_{f}(x)=\sum_{K=0}^{\infty} c_{k,f} \, x^{K}\,, \hspace{1cm} \text{$K=k-2f$}\,,
\end{equation}
be the generating function in which the coefficient $c_{k,f}$ is the number of directed cycles with $f$ letters $\bb{n}$, $f$ letters $\bb{s}$ and $k-2f$ letters $\bb{p}$. It can be shown that 
\begin{equation}
	C_0(x)=\frac{1}{1-x}\,, \hspace{1cm} C_f(x)=\frac{1}{2(1-x^2)^f}+\frac{1}{2f}\sum_{d|f}\frac{\phi(d)}{(1-x^d)^{\frac{2f}{d}}}\,, \hspace{0.5cm} \text{for $f\geqslant 1$}\,,
\end{equation}
where $\phi(d)$ is the Euler phi function. Then the coefficient $c_k$ is given by 
\begin{equation}
	c_k=\sum_{h=0}^{\lfloor k/2 \rfloor} c_{k,f}\,.
\end{equation}
It seems that finding a more explicit formula for the coefficients $c_k$ is an open problem.

\subsection{Connection coefficients}
Let $\mu$ and $\zeta$ be two bracelets in $\mathfrak{B}_n$. The product of two class sums $K_\mu$ and $K_\zeta$ in $\cn$ is given by 
\begin{equation}\label{eq:struc_constants_Bn}
K_\mu K_\zeta= \sum_{\xi\in\mathfrak{B}_n} \tensor{C}{^\xi_\mu_\zeta} K_{\xi}\,, 
\end{equation}
where $\tensor{C}{^\xi_\mu_\zeta}$ are the \textit{connection coefficients} or \textit{structure constants} of $\cn$\footnote{Here we extend the terminology used for the conjugacy class sums of $\C\sn$ \cite{goupil1998factoring,goulden2000transitive,corteel2004content}.}. We recall that in the restricted framework of the center $\mathcal{Z}_n\subset \cn$ of $\C\sn$, the coefficient $ \tensor{C}{^\xi_\mu_\zeta}$ can be computed via the irreducible character of $\sn$ \eqref{eq:struc_constants_Sn}.\medskip

For the Brauer algebra, the irreducible characters are not directly related to the conjugacy class sums, and a formula which would generalize $\eqref{eq:struc_constants_Sn}$ to $\bn$ is not present in the literature.\medskip



\begin{mathematica}[\textit{BrauerAlgebra}]
	ConjugacyClassProduct[ClassSum[$\xi$], ClassSum[$\zeta$]]\, \\
	\textit{Returns the product of $K_\xi$ with $K_\zeta$ where $\xi$ and $\zeta$ are ternary bracelets with head \textup{Bracelets}. Note: the head ClassSum is optional.}
\end{mathematica}

\section{$A_n$, $T_n$, and $X_n$ as second-order differential operators on $\mathbb{C}[\mathfrak{b}(\mathcal{A})]$}\label{sec:differential_operators}
For our purpose it appears to be more convenient to consider the action of $A_n$, $T_n$ and $X_n$ on the averaged basis $\lbrace \bar K_\xi \st \xi\in  \mathfrak{B}_n\rbrace$ of $\cn$ which is expressed in terms of linear operators $\Delta_a$, $\Delta_t$ and $\Delta_x=\Delta_t-\Delta_a$ on $\mathbb{C}[\mathfrak{b}(\mathcal{A})]_n$ as follows:
\begin{equation}\label{eq:A_T_X_oeprator_def}
	A_n \,\bar K_{\zeta}= \sum_{\xi \in \mathfrak{B}_n}  \left(\Delta_a(\zeta)\right)_\xi \bar K_{\xi}\,,\hspace{1cm}
	T_n \,\bar K_{\zeta} = \sum_{\xi \in \mathfrak{B}_n}  \left(\Delta_t(\zeta)\right)_\xi \bar K_{\xi}\,.
\end{equation}
The above formulas serve as definitions which allow one to construct $\Delta_a,\,\Delta_t$ and $\Delta_x$ by evaluating the products of Brauer diagrams in the left-hand-sides of \eqref{eq:A_T_X_oeprator_def}. Note that in terms of the components of $\Delta_a$ and $\Delta_t$ the associated connection coefficients are giving by 
\begin{equation}
	\tensor{C}{^\xi_a_\zeta}=\frac{\stab(\xi)}{\stab(\zeta)}\left(\Delta_a(\zeta)\right)_\xi\,,\hspace{2cm}\tensor{C}{^\xi_t_\zeta}=\frac{\stab(\xi)}{\stab(\zeta)}\left(\Delta_t(\zeta)\right)_\xi\,.
\end{equation}
\begin{example}
	By direct diagrammatic computation one finds the action of $A_3$ on $\lbrace \bar K_\xi \st \xi\in  \mathfrak{B}_3\rbrace$, which in turn fixes the action of $\Delta_a$ on $\mathbb{C}[\mathfrak{b}(\mathcal{A})]_3$. For example: 
	\begin{equation}\label{eq:delta_a_direct}
		\def\arraystretch{1.3}
		\begin{array}{rlcl}
			&A_3\, \bar{K}_{[\bb{p}]^3}  = 3\,\bar{K}_{[\bb{ns}][\bb{p}]}\,, & \multirow{1}{*}{$\Rightarrow$} & \Delta_a\big([\bb{p}]^3\big) =  3\,[\bb{ns}][\bb{p}]\,,\\
		\end{array}
	\end{equation}
	Similarly the action of $T_3$ fixes $\Delta_t$:
	\begin{equation}\label{eq:delta_t_direct}
		\def\arraystretch{1.3}
		\begin{array}{rlcl}
			&T_3\, \bar{K}_{[\bb{p}]^3}  = 3\,\bar{K}_{[\bb{pp}][\bb{p}]}\,, & \multirow{1}{*}{$\Rightarrow$} & \Delta_t\big([\bb{p}]^3\big) =  3\,[\bb{pp}][\bb{p}]\,,\\
		\end{array}
	\end{equation}
	And from $\Delta_x=\Delta_t-\Delta_a$ one has:
	\begin{equation}\label{eq:delta_x_direct}
		\begin{aligned}
			&\Delta_x\big([\bb{p}]^3\big) =  3\left([\bb{pp}][\bb{p}]-[\bb{ns}][\bb{p}]\right)\,.
		\end{aligned}
	\end{equation}
\end{example}

Our next goal consists in describing the action of $\Delta_a$, $\Delta_t$ and $\Delta_x$ directly in terms of bracelets, which will allow us to treat \eqref{eq:A_T_X_oeprator_def} the other way around and to read off the action of the corresponding conjugacy classes $A_n$, $T_n$ and $X_n$ on $\cn$ without addressing to diagram computations.
\subsection{Derivation of bracelets} 

Consider the polynomial algebra generated by bracelets over the extended alphabet $\bar{\mathcal{A}} = \mathcal{A}\cup \dot{\mathcal{A}} \cup \ddot{\mathcal{A}}$, where $\dot{\mathcal{A}} = \{\db{s},\db{p}\}$, $\ddot{\mathcal{A}} = \{\dd{s}\}$. Introduce the (linear) derivation map $\partial$ which acts via Leibniz rule: on any monomial $[w_1]\dots [w_k]$ as
\begin{equation}
	\partial \big([w_1]\dots [w_p]\big) = \sum_{j=1}^{p} [w_1]\dots\partial\big([w_j]\big)\dots [w_p]\,,
\end{equation}
and on each bracelet $[w] = [\bb{a}_1\dots \bb{a}_{\ell}]$ ($\bb{a}_j\in \bar{\mathcal{A}}$ for all $j = 1,\dots,\ell$) as
\begin{equation}
	\partial[\bb{a}_1\dots \bb{a}_{\ell}] = \sum_{j=1}^{\ell} [\bb{a}_1\dots \partial (\bb{a}_j)\dots \bb{a}_{\ell}]\,.
\end{equation}
To fix $\partial$, we set
\begin{equation}
	\def\arraystretch{1.4}
	\begin{array}{l}
		\partial(\bb{s}) = \db{s}\,,\;\;\partial(\bb{p}) = \db{p}\,,\;\; \partial(\db{s}) = \dd{s}\,,\quad \text{and}\quad  \partial(\bb{n}) = \partial(\db{p}) = \partial(\dd{s}) = \partial(1) = 0\\
		\text{(any bracelet where $0$ occurs is put to $0$)}\,.
	\end{array}
\end{equation}
Define the set $\mathfrak{b}(\bar{\mathcal{A}})\supset \mathfrak{b}(\mathcal{A})$ by extending $\mathfrak{b}(\mathcal{A})$ by all possibilities to substitute the undotted letters $\bb{s}$, $\bb{p}$ at some positions by their dotted counterparts in $\bar{\mathcal{A}}$. For example, the bracelet $[\bb{nsp}]$ gives rise to the following bracelets $[\bb{n}\db{s}\bb{p}], [\bb{n}\bb{s}\db{p}], [\bb{n}\db{s}\db{p}], [\bb{n}\dd{s}\bb{p}], [\bb{n}\dd{s}\db{p}]\in \mathfrak{b}(\bar{\mathcal{A}})$. As a result, $\mathbb{C}[\mathfrak{b}(\bar{\mathcal{A}})]$ contains all images of $\mathbb{C}[\mathfrak{b}(\mathcal{A})]$ upon consecutive application of $\partial$. The algebra $\mathbb{C}[\mathfrak{b}(\bar{\mathcal{A}})]$ is bi-graded: 
\begin{equation}\label{eq:bracelets_bi-grading}
	\mathbb{C}[\mathfrak{b}(\bar{\mathcal{A}})] = \bigoplus_{p\geqslant 0}\bigoplus_{q\leqslant p} \mathbb{C}[\mathfrak{b}(\bar{\mathcal{A}})]^{(q)}_{p}\,,
\end{equation}
where a monomial $[w_1] \dots [w_r]\in \mathbb{C}[\mathfrak{b}(\bar{\mathcal{A}})]^{(q)}_{p}$ has the total length $p$ ({\it i.e.} $\vert w_1\vert+\dots+\vert w_r\vert = p$) and carries the total amount $q$ of dots above the letters. For small $q$ the degree $(q)$ will be indicated by $q$ times the symbol $\prime$, and $\mathbb{C}[\mathfrak{b}(\bar{\mathcal{A}})]^{(0)}_{n} = \mathbb{C}[\mathfrak{b}(\mathcal{A})]_{n}$. In the sequel, omitting one of the bi-degrees of a component in \eqref{eq:bracelets_bi-grading} will imply the direct sum over all possible values of the omitted component.  For example, $[\db{p}], [\bb{n}\db{s}], [\bb{n}\db{s}][\bb{p}]\in \mathbb{C}[\mathfrak{b}(\bar{\mathcal{A})}]^{\prime}$, $[\bb{n}\dd{s}], [\bb{n}\dd{s}\bb{p}], [\bb{n}\db{s}][\db{p}\bb{p}]\in \mathbb{C}[\mathfrak{b}(\bar{\mathcal{A}})]^{\prime\prime}$ and $[\bb{n}\db{s}\bb{p}], [\bb{n}\dd{s}\bb{p}], [\bb{n}\db{s}][\db{p}] \in \mathbb{C}[\mathfrak{b}(\bar{\mathcal{A}})]_3$. The next lemma is a simple consequence of the definition of $\partial$ and the structure of $\mathfrak{b}(\bar{\mathcal{A}})$.

\subsection{Contraction operations} 

To introduce the final ingredient for the construction of $\Delta_a$ and $\Delta_t$, we consider the following $\mathbb{C}[\mathfrak{b}(\mathcal{A})]$-linear operations:
\begin{equation}
	\tau_{a} : \mathbb{C}[\mathfrak{b}(\bar{\mathcal{A}})]^{\prime\prime} \to \mathbb{C}[\mathfrak{b}(\mathcal{A})]\,,\hspace{1cm}\tau_{t} : \mathbb{C}[\mathfrak{b}(\bar{\mathcal{A}})]^{\prime\prime} \to \mathbb{C}[\mathfrak{b}(\mathcal{A})],
\end{equation}
which is defined via a set of rules given below. To formulate them, we accept a number of notations: {\it i)} we will write, for example, $[\bb{a}v]$ or $[\bb{a}u\bb{b}v]$ to specify particular letters $\bb{a},\bb{b}\in \bar{\mathcal{A}}$ in a bracelet, with the subwords $u,v$ being either empty or containing only letters from $\mathcal{A}$, {\it ii)} we will write $\vert w\vert_{\bb{a}}$ for the number of occurrences of the letter $\bb{a}$ in $w$ {\it iii)} we will say that $w$ is {\it fit} if it is either empty or $[w]\in\mathfrak{b}(\mathcal{A})$, and if each occurrence of $\bb{s}$ (if any) is followed by an occurrence of $\bb{n}$ at some position on the right. In the following expressions, the (sub)words $\db{p}u,\db{p}v$ on the left-hand-sides are assumed to have $u,v$ fit unless else is specified.
\vspace{-0.4cm}
\paragraph{Contraction operation for class $A_n$.} 
\begin{align}
&\tau_{a} : [\dd{s}u] \mapsto 2\Dim\,[\bb{s}u] \,,  \hfill\label{eq:trace_rulea_1} \\[8pt]  
&\tau_{a} : [\db{s}u\db{s}v] \mapsto 2\,\big([\bb{s}u\bb{s}\,I(v)] + [\bb{s}u][\bb{s}v]\big)\,, \hspace{1.4cm}  &&\tau_{a} : [\db{s}u][\db{s}v] \mapsto 2\,\big([\bb{s}u\bb{s}v] + [\bb{s}u\bb{s}\,I(v)]\big)\,,\label{eq:trace_rulea_2}\\[8pt]
&\tau_{a} : [\db{p}u\db{s}v] \mapsto [\bb{p}u\bb{s}\,I(v)] + [\bb{p}u][\bb{s}v]\,,  &&\tau_{a} : [\db{p}u][\db{s}v] \mapsto [\bb{p}u\bb{s}v] + [\bb{p}u\bb{s}\,I(v)]\,,\label{eq:trace_rulea_3}\\[8pt]
&\tau_{a} : [\db{p}u\db{p}v] \mapsto \left\{
\begin{array}{ll}
	[\bb{n}u\bb{s}I(v)]\,, & \text{if $u,v$ are fit}\, \\
	\left[\bb{n}u][\bb{s}v\right]\,, & \text{if $|u|_{\bb{s}} > |u|_{\bb{n}}$}
\end{array}
\right.\,,  &&\tau_{a} : [\db{p}u][\db{p}v] \mapsto [\bb{n}u\bb{s}I(v)]\,.\label{eq:trace_rulea_4}
\end{align}
\paragraph{Contraction operation for class $T_n$.} 
\begin{align}
	&\tau_{t} : [\dd{s}u] \mapsto 2\,[\bb{s}u] \,,  \hfill  \label{eq:trace_rulet_1}\\[8pt]
	&\tau_{t} : [\db{s}u\db{s}v] \mapsto \tau_{a}([\db{s}u\db{s}v]) \,, \hspace{1.4cm}  &&\tau_{t} : [\db{s}u][\db{s}v] \mapsto \tau_{a}([\db{s}u][\db{s}v]),\label{eq:trace_rulet_2}\\[8pt]
	&\tau_{t} : [\db{p}u\db{s}v] \mapsto \tau_{a}([\db{p}u\db{s}v])\,,  &&\tau_{t} : [\db{p}u][\db{s}v] \mapsto  \tau_{a}([\db{p}u][\db{s}v])\,,\label{eq:trace_rulet_3}\\[8pt]
	&\tau_{t} : [\db{p}u\db{p}v] \mapsto \left\{
	\begin{array}{ll}
			\left[\bb{p}u][\bb{p}v\right]\,, & \text{if $u,v$ are fit}\, \\
				\left[\bb{p}u\bb{p}I(v)\right]\,, & \text{if $|u|_{\bb{s}} > |u|_{\bb{n}}$}
		\end{array}
	\right.\,,  &&\tau_{t} : [\db{p}u][\db{p}v] \mapsto [\bb{p}u\bb{p}v]\,.\label{eq:trace_rulet_4}
\end{align}
\vskip 4pt

\begin{lemma}\label{lem:trace_rules}
	The rules \eqref{eq:trace_rulea_1}-\eqref{eq:trace_rulea_4} and \eqref{eq:trace_rulet_1}-\eqref{eq:trace_rulet_4} are correct and define $\tau_{a}$ and $\tau_{t}$ unambiguously. Namely, {\it i)} the monomials entering the left-hand-sides of the rules form a $\mathbb{C}[\mathfrak{b}(\mathcal{A})]$-basis\footnote{In other words, any element in $\mathbb{C}[\mathfrak{b}(\bar{\mathcal{A}})]^{\prime\prime}$ is given by a unique combination of the left-hand-sides of \eqref{eq:trace_rulea_1}-\eqref{eq:trace_rulea_4}  with the coefficients in $\mathbb{C}[\mathfrak{b}(\mathcal{A})]$.} in $\mathbb{C}[\mathfrak{b}(\bar{\mathcal{A}})]^{\prime\prime}$,
	{\it ii)} if there are several representatives of a kind, the rules nevertheless lead to the same result.
\end{lemma}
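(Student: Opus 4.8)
The plan is to establish the two assertions of Lemma~\ref{lem:trace_rules} independently, beginning with the basis claim~(i) and then turning to the representative-independence~(ii).

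For~(i), I would first use the grading \eqref{eq:bracelets_bi-grading} to note that $\mathbb{C}[\mathfrak{b}(\bar{\mathcal{A}})]^{\prime\prime}$ is spanned by the monomials carrying exactly two dots. By the definition of $\partial$, a dot can only be created on an $\bb{s}$ (via $\bb{s}\mapsto\db{s}\mapsto\dd{s}$) or on a $\bb{p}$ (via $\bb{p}\mapsto\db{p}$, with no further dot allowed), and no dotted $\bb{n}$ ever occurs. Consequently the two dots of such a monomial are distributed in exactly one of the following ways: a single $\dd{s}$, or an unordered pair drawn from $\{\db{s},\db{p}\}$, the pair sitting either inside one bracelet or across two distinct brackets. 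These possibilities are precisely the seven shapes appearing on the left-hand sides of \eqref{eq:trace_rulea_1}--\eqref{eq:trace_rulea_4}. It then remains to produce a normal form: for a single dotted bracelet I would rotate so that a dotted letter is read first and then use the reflection $[w]=[I(w)]$ together with the \emph{fit} requirement on the intervening subwords to fix the representative uniquely; for a product of two singly dotted brackets I would orient each factor by the same convention. Distinct normal forms are distinct basis monomials of $\mathfrak{b}(\bar{\mathcal{A}})$, hence linearly independent, and every two-dot monomial reduces to one, so the shapes span; the remaining undotted brackets of the monomial are absorbed into the $\mathbb{C}[\mathfrak{b}(\mathcal{A})]$-coefficient.

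For~(ii), the only residual freedom once a shape is fixed is the choice of word representing each bracelet on its left-hand side, namely which of the two dotted letters is read first and whether one reads a word or its reverse. The task thus reduces to checking, rule by rule, that each right-hand side is invariant under (a)~swapping the two dotted letters and (b)~inversion, using only cyclic invariance, $[w]=[I(w)]$, and commutativity of the bracelet product. I would illustrate the mechanism on \eqref{eq:trace_rulea_2}: the swap sends $[\db{s}u\db{s}v]$ to $[\db{s}v\db{s}u]$, and one verifies $[\bb{s}u\bb{s}I(v)]=[\bb{s}v\bb{s}I(u)]$ and $[\bb{s}u][\bb{s}v]=[\bb{s}v][\bb{s}u]$ as bracelets, while inversion is absorbed by $[\bb{s}v]=[\bb{s}I(v)]$ and a cyclic rotation. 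The mixed shapes \eqref{eq:trace_rulea_3} carry distinguishable dotted letters $\db{p}$ and $\db{s}$, so only invariance under inversion must be checked there. A useful economy is that \eqref{eq:trace_rulet_2} and \eqref{eq:trace_rulet_3} simply defer to $\tau_{a}$, so for $\tau_{t}$ the genuinely new work is confined to \eqref{eq:trace_rulet_1} and \eqref{eq:trace_rulet_4}.

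The main obstacle is the $[\db{p}u\db{p}v]$ rules \eqref{eq:trace_rulea_4} and \eqref{eq:trace_rulet_4}, where contracting two marked passing lines manufactures a fresh arc (the letters $\bb{n}$ and $\bb{s}$) and the output genuinely depends on the relative orientation of the two $\db{p}$'s within the bracelet. Here I must verify three things: that the branch conditions ``$u,v$ fit'' and ``$|u|_{\bb{s}}>|u|_{\bb{n}}$'' are mutually exclusive and jointly exhaustive on the two-dot monomials that actually arise; that they are stable under swapping the two $\db{p}$'s and under inversion, so that the applicable branch is representative-independent; and that within each branch the prescribed right-hand side is itself invariant under these operations. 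The delicate point is that inversion interchanges the top/bottom roles encoded by $\bb{n}$ and $\bb{s}$, so one has to track how $I(\cdot)$ acts on the newly created arc and confirm that the single $\bb{n}$--$\bb{s}$ pair emerges with a consistent orientation in every reading; the balance between $\bb{n}$ and $\bb{s}$ forced by membership in $\mathfrak{b}(\mathcal{A})$ is exactly what makes the two branch conditions complementary and pins down the orientation. Once these invariances are settled, everything else is a finite, mechanical verification.
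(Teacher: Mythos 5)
Your overall route is the same as the paper's: enumerate the seven two-dot shapes allowed by $\partial(\bb{n})=\partial(\db{p})=\partial(\dd{s})=0$, use the fitness/balance constraints to decide which readings of a doubly-dotted bracelet are admissible left-hand sides, and then verify case by case that all admissible readings give the same right-hand side under cyclic re-reading and inversion. Your checks for \eqref{eq:trace_rulea_1}, \eqref{eq:trace_rulea_2} and the branch analysis for \eqref{eq:trace_rulea_4} are exactly the computations in the paper's proof, and your observation that \eqref{eq:trace_rulet_2}--\eqref{eq:trace_rulet_3} defer to $\tau_a$ is correct.

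However, one step would fail as you state it: for the mixed rule \eqref{eq:trace_rulea_3} you propose to check ``invariance under inversion,'' but the right-hand side of that rule is \emph{not} inversion-invariant. Reading $[\db{p}u\db{s}v]$ backwards gives $[\db{p}I(v)\db{s}I(u)]$, and applying the rule to that representative yields $[\bb{p}I(v)\bb{s}u]+[\bb{p}v][\bb{s}u]$, which generically differs from $[\bb{p}u\bb{s}I(v)]+[\bb{p}u][\bb{s}v]$ (note in particular how $u$ and $v$ trade places in the disconnected term). Well-definedness here is rescued not by invariance but by \emph{uniqueness of the admissible representative}: since $[\bb{p}u\bb{s}v]\in\mathfrak{b}(\mathcal{A})$ one has $|u|_{\bb{n}}+|v|_{\bb{n}}=|u|_{\bb{s}}+|v|_{\bb{s}}+1$, so $u$ and $I(v)$ cannot both be balanced, hence cannot both be fit; combined with the paper's Fact~2 (at least one of them is fit), exactly one of the two readings is an admissible left-hand side and there is nothing to compare. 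This is precisely how the paper disposes of this case (``the admissible representatives \dots are fixed unambiguously''). The same caveat infects your phrase for \eqref{eq:trace_rulea_4} that the branch conditions are ``stable under swapping the two $\db{p}$'s and under inversion'': the condition $|u|_{\bb{s}}>|u|_{\bb{n}}$ is not stable under re-reading from the other $\db{p}$ (that reading is simply inadmissible, since then the first subword has an excess of $\bb{n}$'s). The statement you actually need, and the one the paper proves, is that every monomial admits at least one admissible reading and that any two admissible readings produce equal outputs.
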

See appendix \ref{app:proof_lem_rules} for the proof.
\begin{remark} The contraction operation for the action of $T_n$ on the conjugacy class sum of $\C\sn$ is defined by the rules
\begin{equation}\label{eq:trace_ruletn_sn}
\tau_{t} : [\db{p}u\db{p}v] \mapsto [\bb{p}u][\bb{p}v]\,, \hspace{1cm} \tau_{t} : [\db{p}u][\db{p}v] \mapsto [\bb{p}u\bb{p}v]\,,
\end{equation}
where the first rule correspond to the contribution of transpositions that split two cycles and the second rule to the contribution of transpositions that joins two cycles.
\end{remark}
\vspace{-0.4cm}
\paragraph{Contraction operation for class $X_n$.} Because $\Delta_x=\Delta_t-\Delta_a$, we define the contraction rules for class $X_n$ as $\tau_{x}:=\tau_{t}-\tau_{a}$. They are given by
\begin{align}
	&\tau_{x} : [\dd{s}u] \mapsto 2\left(1-\Dim\right)\,[\bb{s}u] \,,  \hfill  \label{eq:trace_rulex_1}\\[8pt]
	&\tau_{x} : [\db{s}u\db{s}v] \mapsto 0\,, \hspace{1.4cm}  &&\tau_{x} : [\db{s}u][\db{s}v] \mapsto 0\,,\label{eq:trace_rulex_2}\\[8pt]
	&\tau_{x} : [\db{p}u\db{s}v] \mapsto 0\,,  &&\tau_{x} : [\db{p}u][\db{s}v] \mapsto 0\,,\label{eq:trace_rulex_3}\\[8pt]
	&\tau_{x} : [\db{p}u\db{p}v] \mapsto \left\{
	\begin{array}{ll}
		\left[\bb{p}u][\bb{p} v\right]-[\bb{n}u\bb{s}I(v)]\, & \text{if $u,v$ are fit}\,  \\
		\left[\bb{p}u \bb{p}I(v)\right]-[\bb{n}u][\bb{s}v]\, & \text{if $|u|_{\bb{s}} > |u|_{\bb{n}}$}
	\end{array}
	\right.,  &&\tau_{x} : [\db{p}u][\db{p}v] \mapsto [\bb{p}u\bb{p}v]-[\bb{n}u\bb{s}I(v)]\,.\label{eq:trace_rulex_4}
\end{align}

We are now in position to express the operators $\Delta_a$, $\Delta_t$ and $\Delta_x$ defined in \eqref{eq:A_T_X_oeprator_def} as a second-order differential operator on bracelets (see Appendix \ref{app:proof_Laplace} for proof).
\begin{theorem}\label{thm:Laplace}
	The operators $\Delta_a$\,, $\Delta_t$ and $\Delta_x$ defined in \eqref{eq:A_T_X_oeprator_def} are given by
	\begin{equation}\label{eq:Laplace}
		\Delta_a = \frac{1}{2}\,\tau_{a}\circ \partial^{2}\,,\hspace{1cm}\Delta_t = \frac{1}{2}\,\tau_t\circ \partial^{2}\,,\hspace{1cm}\Delta_x = \frac{1}{2}\,\tau_x\circ \partial^{2}\,.
	\end{equation}
\end{theorem}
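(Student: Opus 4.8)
The plan is to trace through the diagrammatic meaning of the defining relations \eqref{eq:A_T_X_oeprator_def} and to match, term by term, the combinatorics of inserting a single generator of $A_n$ or $T_n$ with the composite operation $\tfrac12\,\tau\circ\partial^2$ on bracelets. First I would reduce to a single diagram and a single generator. Since $A_n$ and $T_n$ commute with $\Sn{n}$, multiplication by either preserves $\cn$, and it suffices to compute $A_n\,b$ and $T_n\,b$ for one representative $b\in\dbn$ with $\GCT(b)=\zeta$ and then average over conjugation; working with the \emph{averaged} basis $\bar K_\xi$ rather than $K_\xi$ is precisely what removes the stability-index ratio $\mathrm{st}(\xi)/\mathrm{st}(\zeta)$ from the connection coefficients and leaves the clean combinatorial factors appearing in the rules, via \eqref{eq:normalised_classes}. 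Writing $A_n=\sum_{1\leqslant i<j\leqslant n}d_{ij}$, each summand $d_{ij}\,b$ solders the upper arc of $d_{ij}$ onto two nodes of the bottom row of $b$, while the lower arc of $d_{ij}$ becomes a new lower arc of the product, per the multiplication rule \eqref{eq:multiplication_Bn}. The central observation is the dictionary: a bottom node of $b$ lies either on a passing line (letter $\bb{p}$) or on a lower arc (letter $\bb{s}$), whereas an upper arc (letter $\bb{n}$) carries \emph{no} bottom node. Thus summing over $(i,j)$ is the same as summing over all unordered pairs of bottom nodes of the strands of $b$.

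Next I would identify the derivation $\partial$ as the bookkeeping device for selecting one bottom node. By construction $\partial(\bb{p})=\db{p}$, $\partial(\db{p})=0$ record that a passing line has exactly one bottom node; $\partial(\bb{s})=\db{s}$, $\partial(\db{s})=\dd{s}$, $\partial(\dd{s})=0$ record that a lower arc has exactly two; and $\partial(\bb{n})=0$ records that an upper arc has none. Consequently $\partial^2$, expanded by Leibniz over the monomial $\zeta=[w_1]\cdots[w_r]$, enumerates exactly the ways of marking two (possibly coincident) bottom nodes, the two dots corresponding to the endpoints $i,j$ of the inserted arc. A pair of distinct marked positions arises with multiplicity two (the two orders of applying $\partial$), while the doubly-dotted single letter $\dd{s}$ arises once; the prefactor $\tfrac12$ compensates the former, and in the latter case it is reconciled against the factor $2$ and the loop weight $\Dim$ in rule \eqref{eq:trace_rulea_1}. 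The cross terms of Leibniz—two marks on different bracelets—are the split-bracelet left-hand sides $[\db{s}u][\db{s}v]$, $[\db{p}u][\db{p}v]$, etc., and the within-bracelet terms are $[\db{s}u\db{s}v]$, $[\db{p}u\db{p}v]$, etc.

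The heart of the argument is to verify that the contraction $\tau_a$ (respectively $\tau_t$) returns the generalized cycle type of the diagram $d_{ij}\,b$ (respectively $s_{ij}\,b$) obtained after the surgery. I would carry this out by a case analysis on the strand types at the two marked nodes, checking each rule \eqref{eq:trace_rulea_1}--\eqref{eq:trace_rulea_4} and \eqref{eq:trace_rulet_1}--\eqref{eq:trace_rulet_4} against the diagram: two marks on one lower arc re-close that arc into a loop, yielding the weight $\Dim$ of \eqref{eq:trace_rulea_1}; two marks on two lower arcs, or on a passing line and a lower arc, reconnect or split the loops carrying them \eqref{eq:trace_rulea_2}--\eqref{eq:trace_rulea_3}; and two marks on passing lines cause the upper arc of $d_{ij}$ to fuse the two passing segments into an upper arc (the output letter $\bb{n}$) while its lower arc supplies a new $\bb{s}$, giving \eqref{eq:trace_rulea_4}—whereas for a transposition $s_{ij}$, which carries no arcs, the two passing lines are merely re-spliced so that the cycles join or split with no new arc \eqref{eq:trace_rulet_4}, exactly the symmetric-group specialization \eqref{eq:trace_ruletn_sn}. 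The subtle points, which I expect to be the main obstacle, are the reversal $I(v)$—it records the change of reading orientation of a loop segment when two strands are rejoined—and the \emph{fit}/non-fit dichotomy governing whether a lower arc remains within a single loop or gets separated; both demand careful orientation tracking, but the consistency of the resulting assignment, independent of the representative word chosen for a bracelet, is exactly guaranteed by Lemma \ref{lem:trace_rules}.

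Finally, assembling the enumeration of marked pairs by $\partial^2$ with the surgery computed by $\tau$ gives $A_n\,\bar K_\zeta=\sum_{\xi}\big(\tfrac12\,\tau_a(\partial^2\zeta)\big)_\xi\,\bar K_\xi$ and likewise for $T_n$ with $\tau_t$, which is precisely the content of $\Delta_a=\tfrac12\,\tau_a\circ\partial^2$ and $\Delta_t=\tfrac12\,\tau_t\circ\partial^2$ in the sense of \eqref{eq:A_T_X_oeprator_def}. The statement for $\Delta_x$ then follows immediately by linearity from $X_n=T_n-A_n$ together with the definition $\tau_x:=\tau_t-\tau_a$ and the induced rules \eqref{eq:trace_rulex_1}--\eqref{eq:trace_rulex_4}.
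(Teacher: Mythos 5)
Your proposal is correct and follows essentially the same route as the paper's proof: reduce to a single convenient representative via the averaged class sums, interpret multiplication by $d_{ij}$ (resp.\ $s_{ij}$) as summing over pairs of bottom nodes, realize this enumeration through $\tfrac12\,\partial^2$ with the dot-counting dictionary $\partial(\bb{n})=\partial(\db{p})=\partial(\dd{s})=0$, verify by case analysis that $\tau_a$ (resp.\ $\tau_t$) reproduces the generalized cycle type of the surgered diagram, and invoke Lemma~\ref{lem:trace_rules} for sufficiency of the cases, with $\Delta_x$ following by linearity from $X_n=T_n-A_n$. Your accounting of the prefactor $\tfrac12$ against the Leibniz multiplicity two for distinct marks and against the $2\Dim$ in rule~\eqref{eq:trace_rulea_1} for the coincident case matches the paper's treatment exactly.
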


For example, consider the algebra $\mathcal{C}_{3}\subset B_{3}$. As a matter of consistency let us check that $A_3  \bar{K}_{[\bb{p}]^3} = 6\,A_3 = 3\, \bar{K}_{[\bb{ns}][\bb{p}]}$. Indeed,
\begin{equation}\label{eq:Laplace_exp_Br_3}
	\partial^2 [\bb{p}]^3 = 6\,[\db{p}]^2[\bb{p}]\quad \Rightarrow \quad \Delta_a\left([\bb{p}]^3\right) = 3\,[\bb{n}\bb{s}][\bb{p}]\,.
\end{equation}

\vskip 2 pt

\section{Applications}\label{sec:applications}

We sum up the proposed approach by constructing the expanded forms of central Young idempotents of examples \ref{ex:CYI_n3} - \ref{ex:CYI_n4}, and of the traceless projector of examples \ref{ex:traceless_proj_n3} - \ref{ex:traceless_proj_n4}. 

\subsection{The central Young idempotents for $n=3$ and $n=4$}\label{sec:expansion_central_Youngs}

\paragraph{Central Young idempotents for $n=3$.} For convenience we reproduce the factorized expressions of the central Young idempotents for $n=3$ obtained in example \ref{ex:CYI_n3}:
\begin{equation}\label{eq:factorized_Z3_chap5}
	Z^{\Yboxdim{3pt}\yng(3)}=\frac{1}{18}\, T_3\,\left(3+T_3\right)\,,\hspace{0.4cm} Z^{\Yboxdim{3pt}\yng(2,1)}=\frac{1}{9}\left(3-T_3\right)\left( 3+T_3 \right)\,,\hspace{0.4cm}
	Z^{\Yboxdim{3pt}\yng(1,1,1)}=-\frac{1}{18}T_3\, \left(3-T_3\right)\,.
\end{equation}
In order to expand these formulas it is sufficient to determine the action of $T_3=K_{[\bb{pp}][\bb{p}]}=K_{(2,1)}$ on itself.\medskip 

The second derivative of $[\bb{pp}][\bb{p}]$ is
\begin{equation}
\partial^2 [\bb{pp}][\bb{p}] = 2\, [\bb{\dot{p}\dot{p}}][\bb{p}] + 4\, [\bb{\dot{p}p}][\bb{\dot{p}}]\,.
\end{equation}
Applying the contraction rules $\tau_t$ restricted to the product of classes of $\C\sn$ \eqref{eq:trace_ruletn_sn}, and Theorem \ref{thm:Laplace} yields
\begin{equation}\label{eq:trace_ruletn_sn2_app}
	\tau_{t} : [\db{p}u\db{p}v] \mapsto [\bb{p}u][\bb{p}v]\,, \hspace{1cm} \tau_{t} : [\db{p}u][\db{p}v] \mapsto [\bb{p}u\bb{p}v]\,,
\end{equation}
and  yields:
\begin{equation}
	T_3\,\bar{K}_{[\bb{pp}][\bb{p}]}= \, \bar{K}_{[\bb{p}]^3} + 2\, \bar{K}_{[\bb{p}\bb{p}\bb{p}]}\,.
\end{equation}
\vskip 6pt
From, 
\begin{equation}
\stab([\bb{pp}][\bb{p}])=2\,,\hspace{0.5cm}
\stab([\bb{p}]^3)=6\,,\hspace{0.5cm}
\stab([\bb{p}\bb{p}\bb{p}])=3\,.
\end{equation}
one reads off the action of $T_3$ on itself:
\begin{equation}\label{eq:rulesT3app}
(T_3)^2=T_3 K_{[\bb{pp}][\bb{p}]}=3\, K_{[\bb{p}]^3} +\, 3\, K_{[\bb{p}\bb{p}\bb{p}]}\,.
\end{equation}
Expanding equations \eqref{eq:factorized_Z3_chap5} with the rule \eqref{eq:rulesT3app} leads to 
\begin{equation}\label{eq:centralYoung_3}
\begin{array}{ll}
Z^{\Yboxdim{3pt}\yng(3)}=\dfrac{1}{6}\left(K_{[\bb{p}]^3}+K_{[\bb{p}\bb{p}][\bb{p}]}+K_{[\bb{p}\bb{p}\bb{p}]}\right)\,,\hspace{1cm}
&Z^{\Yboxdim{3pt}\yng(1,1,1)}=\dfrac{1}{6}\left(K_{[\bb{p}]^3}-K_{[\bb{p}\bb{p}][\bb{p}]}+K_{[\bb{p}\bb{p}\bb{p}]}\right)\,,\\[6pt]
Z^{\Yboxdim{3pt}\yng(2,1)}=\dfrac{1}{3}\left(2\, K_{[\bb{p}]^3}-K_{[\bb{p}\bb{p}\bb{p}]}\right)\,.
\end{array}
\end{equation}

\paragraph{Central Young idempotents for $n=4$.} For convenience we reproduce the factorized expressions of the central Young idempotents for $n=4$ obtained in example \ref{ex:CYI_n4}:
\begin{equation}\label{eq:factorized_Z4_chap5}
	\begin{array}{ll}
		Z^{\Yboxdim{3pt}\yng(4)}=\dfrac{1}{2304}\, T_4\left(6+T_4\right)\left(2+T_4\right)\left(T_4-2\right)\,,
		&Z^{\Yboxdim{3pt}\yng(1,1,1,1)}=-\dfrac{1}{2304}\, T_4\left(6-T_4\right)\left(2+T_4\right)\left(T_4-2\right),\\[10pt]
		Z^{\Yboxdim{3pt}\yng(3,1)}=\dfrac{1}{256}\, T_4\left(6-T_4\right)\left(2+T_4\right)\left(6+T_4\right)\,,
		&Z^{\Yboxdim{3pt}\yng(2,1,1)}=-\dfrac{1}{256}\, T_4\left(6-T_4\right)\left(2-T_4\right)\left(6+T_4\right)\,,\\[10pt]
		Z^{\Yboxdim{3pt}\yng(2,2)}=\dfrac{1}{144}\, \left(6-T_4\right)\left(2-T_4\right)\left(2+T_4\right)\left(6+T_4\right)\,.
	\end{array}
\end{equation} 
In order to expand these formulas one needs to determine the action of $T_4=K_{[\bb{pp}][\bb{p}]^2}=K_{(2,1^2)}$ on all other classes.\medskip 

The second derivative of the unary bracelets in $\mathcal{B}_4$ are:
\begin{equation}
\begin{array}{ll}
\partial^2 \left(\,[\bb{pp}][\bb{p}]^2\,\right) = 2\, [\bb{\dot{p}\dot{p}}][\bb{p}][\bb{p}] + 8\, [\bb{\dot{p}p}][\bb{\dot{p}}][\bb{p}]+ 8\, [\bb{pp}][\bb{\dot{p}}][\bb{\dot{p}}]\,,
&\partial^2 \left(\,[\bb{p}\bb{p}]^2\,\right) = 4\, [\bb{\dot{p}\dot{p}p}][\bb{p}] + 8\, [\bb{\dot{p}pp}][\bb{\dot{p}}]\,,\\[8pt]
\partial^2 \left(\,[\bb{p}\bb{p}\bb{p}][\bb{p}]\right) = 6\left([\bb{\dot{p}\dot{p}}\bb{p}][\bb{p}]+[\bb{\dot{p}p}\bb{p}][\bb{\dot{p}}]\right)\,,
&\partial^2 \left(\,[\bb{p}\bb{p}\bb{p}\bb{p}]\right) = 8\, [\bb{\dot{p}\dot{p}pp}]+ 4\, [\bb{\dot{p}p\dot{p}p}]\,.
\end{array}
\end{equation}
Applying the contraction rules $\tau_t$ \eqref{eq:trace_ruletn_sn2_app} and Theorem \ref{thm:Laplace} yields
\begin{equation}
	\begin{array}{ll}
	T_4\,\bar{K}_{[\bb{pp}][\bb{p}]^2}= \bar{K}_{[\bb{p}]^4} + \bar{K}_{[\bb{pp}]^2}+ 4\, \bar{K}_{[\bb{p}\bb{p}\bb{p}][\bb{p}]}\,,
	&T_4\,\bar{K}_{[\bb{pp}]^2}= 2 \bar{K}_{[\bb{pp}][\bb{p}]^2}+ 4\,\bar{K}_{[\bb{p}\bb{p}\bb{p}\bb{p}]}\,,\\[6pt]
	T_4\,\bar{K}_{[\bb{ppp}][\bb{p}]}= 3\bar{K}_{[\bb{pp}][\bb{p}]^2}+ 3\,\bar{K}_{[\bb{p}\bb{p}\bb{p}\bb{p}]}\,,
	&T_4\,\bar{K}_{[\bb{pppp}]}= 2 \bar{K}_{[\bb{pp}]^2}+ 4\,\bar{K}_{[\bb{p}\bb{p}\bb{p}][\bb{p}]}\,.
	\end{array}
\end{equation}
\vskip 6pt
\noindent The stability index of the classes are 
\begin{equation}
\stab([\bb{p}]^4)=24\,,\hspace{0.3cm}
\stab([\bb{pp}][\bb{p}]^2)=4\,,\hspace{0.3cm}
\stab([\bb{pp}]^2)=8\,,\hspace{0.3cm}
\stab([\bb{ppp}][\bb{p}])=3\,,\hspace{0.3cm}
\stab([\bb{pppp}])=4\,,
\end{equation}
so that the action of $T_4$ and the normalized conjugacy class sums of $\Bn{4}$ is
\begin{equation}\label{eq:rulesT4}
	\begin{array}{ll}
		T_4\,K_{[\bb{pp}][\bb{p}]^2}= 6\, K_{[\bb{p}]^4} + 2\, K_{[\bb{pp}]^2}+ 3\, K_{[\bb{p}\bb{p}\bb{p}][\bb{p}]}\,,
		&T_4\,K_{[\bb{pp}]^2}= K_{[\bb{pp}][\bb{p}]^2}+ 2\,K_{[\bb{p}\bb{p}\bb{p}\bb{p}]}\,,\\[6pt]
		T_4\,K_{[\bb{ppp}][\bb{p}]}= 4\, K_{[\bb{pp}][\bb{p}]^2}+ 4\, K_{[\bb{p}\bb{p}\bb{p}\bb{p}]}\,,
		&T_4\,K_{[\bb{pppp}]}= 4\, K_{[\bb{pp}]^2}+ 3\, K_{[\bb{p}\bb{p}\bb{p}][\bb{p}]}\,.
	\end{array}
\end{equation}
Expanding equations \eqref{eq:factorized_Z4_chap5} sequentially with the rules $\eqref{eq:rulesT4}$ leads to 
\begin{equation}\label{eq:centralYoung_4}
	\begin{aligned}
		Z^{\Yboxdim{3pt}\yng(4)}&=\frac{1}{24}\left(K_{[\bb{p}]^4}+K_{[\bb{p}\bb{p}][\bb{p}]^2}+K_{[\bb{p}\bb{p}]^2}+K_{[\bb{p}\bb{p}\bb{p}][\bb{p}]}+K_{[\bb{p}\bb{p}\bb{p}\bb{p}]}\right)\,,\\[6pt]
		Z^{\Yboxdim{3pt}\yng(3,1)}&=\frac{1}{8}\left(3\, K_{[\bb{p}]^4}+K_{[\bb{p}\bb{p}][\bb{p}]^2}-K_{[\bb{p}\bb{p}]^2}-K_{[\bb{p}\bb{p}\bb{p}\bb{p}]}\right)\,,\\[6pt]
		Z^{\Yboxdim{3pt}\yng(2,2)}&=\frac{1}{12}\left(2\, K_{[\bb{p}]^4}+2\, K_{[\bb{p}\bb{p}]^2}-K_{[\bb{p}\bb{p}\bb{p}][\bb{p}]}\right)\,,\\[6pt]
		Z^{\Yboxdim{3pt}\yng(2,1,1)}&=\frac{1}{8}\left(3\, K_{[\bb{p}]^4}-K_{[\bb{p}\bb{p}][\bb{p}]^2}-K_{[\bb{p}\bb{p}]^2}+K_{[\bb{p}\bb{p}\bb{p}\bb{p}]}\right)\,,\\[6pt]
		Z^{\Yboxdim{3pt}\yng(1,1,1,1)}&=\frac{1}{24}\left(K_{[\bb{p}]^4}-K_{[\bb{p}\bb{p}][\bb{p}]^2}+K_{[\bb{p}\bb{p}]^2}+K_{[\bb{p}\bb{p}\bb{p}][\bb{p}]}-K_{[\bb{p}\bb{p}\bb{p}\bb{p}]}\right)\,.
	\end{aligned}
\end{equation}

\subsection{The traceless projectors for $n=3$ and $n=4$\,}\label{sec:expansion_traceless_projector}

Throughout this section we assume $\Dim\geqslant n-1$, such that the algebra $B_{n}$ is semisimple.
\vspace{-0.4cm}
\paragraph{Traceless projector for $n=3$.} 
The eigenvalues of $A_3$ are given in the example \ref{ex:traceless_proj_n3}, which lead to the following polynomial expression in $A_3$ for the traceless projector $P_3$:\medskip

\begin{equation}\label{eq:traceless_P3_Poly}
	P_3=\left(1-\frac{A_3}{\Dim-1}\,\right)\left(1-\frac{A_3}{\Dim+2}\,\right)\,.
\end{equation}
\vskip 6pt

In order to expand this formula it is sufficient to determine the action of $A_3=K_{[\bb{ns}][\bb{p}]}$ on itself.\medskip 

The second derivative of $[\bb{ns}][\bb{p}]$ is
\begin{equation}
	\partial^2 [\bb{ns}][\bb{p}] = 2\, [\bb{n}\db{\bb{s}}][\dot{\bb{p}}] + 4\, [\bb{n}\dd{\bb{s}}][\bb{p}]\,.
\end{equation}
Applying the contraction rules $\tau_a$ \eqref{eq:trace_rulea_1}-\eqref{eq:trace_rulea_3} and Theorem \ref{thm:Laplace} yields:
\begin{equation}
	A_3\,\bar{K}_{[\bb{ns}][\bb{p}]}= \Dim \, \bar{K}_{[\bb{ns}][\bb{p}]} + 2\,\bar{K}_{[\bb{nsp}]}\,.
\end{equation}
\vskip 6pt
From, 
\begin{equation}
	\stab([\bb{ns}][\bb{p}])=2\,,\hspace{0.5cm}
	\stab([\bb{n}\bb{s}\bb{p}])=1\,.
\end{equation}
one reads off the action of $A_3$ on itself:
\begin{equation}\label{eq:rulesA3app}
	(A_3)^2=A_3 K_{[\bb{ns}][\bb{p}]}=\,\Dim\, K_{[\bb{ns}][\bb{p}]} +\, K_{[\bb{n}\bb{s}\bb{p}]}\,.
\end{equation}
Expanding \eqref{eq:traceless_P3_Poly} with the rule $\eqref{eq:rulesA3app}$ leads to 
\begin{equation}\label{eq:traceless_3_class}
P_3=K_{[\bb{p}][\bb{p}][\bb{p}]}-\frac{(\Dim +1)}{(\Dim-1)(\Dim+2)}\, K_{[\bb{ns}][\bb{p}]}+\frac{1}{(\Dim-1)(\Dim+2)}\, K_{[\bb{ns}][\bb{p}]}\,,
\end{equation}
which correspond indeed to the diagrammatic expression \eqref{eq:projector_Br_3}.
\paragraph{Traceless projector for $n=4$.} 
The eigenvalues of $A_4$ were given in the example \ref{ex:traceless_proj_n4}, which lead to the following polynomial expression in $A_4$ for the traceless projector $P_4$:\medskip

\begin{equation}\label{eq:traceless_P4_Poly}
	\hspace{-0.5cm}	P_4=\scalemath{0.98}{\left(1-\frac{A_4}{\Dim+4}\,\right)\left(1-\frac{A_4}{2(\Dim+2)}\,\right)\left(1-\frac{A_4}{\Dim}\,\right)\left(1-\frac{A_4}{\Dim+2}\,\right) \left(1-\frac{A_4}{\Dim-2}\,\right)\left(1-\frac{A_4}{2(\Dim-1)}\,\right)\,.}
\end{equation}
\vskip 6pt

Here, $P_4$ is polynomial of order $6$ in $A_n$, and in order to expand it we need to determine the rules describing the action of $A_4$ on all conjugacy class sums with at least one arc. The second derivative of the elements in $\mathbb{C}[\mathfrak{b}(\mathcal{A})]$ relevant for the construction of $P_4$ are: 
\begin{equation*}
\begin{aligned}
&\partial^2 \left[\bb{n}\bb{s}\right]\left[\bb{p}\right]\left[\bb{p}\right] = \left[\bb{n}\dd{s}\right]\left[\bb{p}\right]\left[\bb{p}\right] + 4\,\left[\bb{n}\db{s}\right]\left[\db{p}\right]\left[\bb{p}\right]+ 2\,\left[\bb{n}\bb{s}\right]\left[\db{p}\right]\left[\db{p}\right]\,,\\
&\partial^2\left[\bb{n}\bb{s}\right]\left[\bb{p}\bb{p}\right] = \left[\bb{n}\dd{s}\right]\left[\bb{p}\bb{p}\right] + 4\,\left[\bb{n}\db{s}\right]\left[\db{p}\bb{p}\right]+2\left[\bb{n}\bb{s}\right]\left[\db{p}\db{p}\right]\,,\\
&\partial^2\left[\bb{n}\bb{s}\bb{p}\right]\left[\bb{p}\right] = \left[\bb{n}\dd{s}\bb{p}\right]\left[\bb{p}\right] + 2\,(\,\left[\bb{n}\db{s}\db{p}\right]\left[\bb{p}\right]+\left[\bb{n}\db{s}\bb{p}\right]\left[\db{p}\right]+\left[\bb{n}\bb{s}\db{p}\right]\left[\db{p}\right]\,) \,,\\ 
&\partial^2\left[\bb{n}\bb{s}\bb{p}\bb{p}\right] = \left[\bb{n}\dd{s}\bb{p}\bb{p}\right] + 2\,(\left[\bb{n}\db{s}\db{p}\bb{p}\right]+\left[\bb{n}\db{s}\bb{p}\db{p}\right]+\left[\bb{n}\bb{s}\db{p}\db{p}\right])\,,\\
&\partial^2\left[\bb{n}\bb{p}\bb{s}\bb{p}\right] = \left[\bb{n}\bb{p}\dd{s}\bb{p}\right] + 4\, \left[\bb{n}\bb{p}\db{s}\db{p}\right] + 2\,\left[\bb{n}\db{p}\bb{s}\db{p}\right]\,,\\
&\partial^2\left[\bb{n}\bb{s}\right]\left[\bb{n}\bb{s}\right] =2\,(\left[\bb{n}\dd{s}\right]\left[\bb{n}\bb{s}\right]+\left[\bb{n}\db{s}\right]\left[\bb{n}\db{s}\right])\,,\\
&\partial^2\left[\bb{n}\bb{s}\bb{n}\bb{s}\right] = 2\,(\left[\bb{n}\dd{s}\bb{n}\bb{s}\right] +\left[\bb{n}\db{s}\bb{n}\db{s}\right])\,.\\
\end{aligned}
\end{equation*}
Applying $\tau_a$ \eqref{eq:trace_rulea_1}-\eqref{eq:trace_rulea_4} on the previous expressions and using Theorem \ref{thm:Laplace} yields: 

\begin{equation}\label{eq:Laplace_Br_4}
	\begin{aligned}
		& A_n \, \bar{K}_{\left[\bb{n}\bb{s}\right]\left[\bb{p}\right]\left[\bb{p}\right]} = \Dim\,\bar{K}_{\left[\bb{n}\bb{s}\right]\left[\bb{p}\right]\left[\bb{p}\right]} + 4\,\bar{K}_{\left[\bb{n}\bb{s}\bb{p}\right]\left[\bb{p}\right]}+\bar{K}_{\left[\bb{n}\bb{s}\right]\left[\bb{n}\bb{s}\right]}\,,  \\
		&A_n \, \bar{K}_{\left[\bb{n}\bb{s}\right]\left[\bb{p}\bb{p}\right]} =\Dim\,\bar{K}_{\left[\bb{n}\bb{s}\right]\left[\bb{p}\bb{p}\right]}+ 4\,\bar{K}_{\left[\bb{n}\bb{s}\bb{p}\bb{p}\right]}+\bar{K}_{\left[\bb{n}\bb{s}\right]\left[\bb{n}\bb{s}\right]}\,,\\
		&A_n \, \bar{K}_{\left[\bb{n}\bb{s}\bb{p}\right]\left[\bb{p}\right]} =\left( \Dim+1 \right)\, \bar{K}_{\left[\bb{n}\bb{s}\bb{p}\right]\left[\bb{p}\right]} + \bar{K}_{\left[\bb{n}\bb{s}\right]\left[\bb{p}\right]\left[\bb{p}\right]}+\bar{K}_{\left[\bb{n}\bb{s}\bb{p}\bb{p}\right]}+\bar{K}_{\left[\bb{n}\bb{p}\bb{s}\bb{p}\right]}+\bar{K}_{\left[\bb{n}\bb{s}\bb{n}\bb{s}\right]}\,,\\
		&A_n \, \bar{K}_{\left[\bb{n}\bb{s}\bb{p}\bb{p}\right]} =\left(\Dim+1\right)\, \bar{K}_{\left[\bb{n}\bb{s}\bb{p}\bb{p}\right]} +\bar{K}_{\left[\bb{n}\bb{s}\right]\left[\bb{p}\bb{p}\right]} +\bar{K}_{\left[\bb{n}\bb{s}\bb{p}\right]\left[\bb{p}\right]} +\bar{K}_{\left[\bb{n}\bb{p}\bb{s}\bb{p}\right]}+\bar{K}_{\left[\bb{n}\bb{s}\bb{n}\bb{s}\right]}\,,\\
		&A_n \, \bar{K}_{\left[\bb{n}\bb{p}\bb{s}\bb{p}\right]} =\Dim\, \bar{K}_{\left[\bb{n}\bb{p}\bb{s}\bb{p}\right]} +2\, \bar{K}_{\left[\bb{n}\bb{s}\bb{p}\right]\left[\bb{p}\right]} + 2\,\bar{K}_{\left[\bb{n}\bb{s}\bb{p}\bb{p}\right]}+\bar{K}_{\left[\bb{n}\bb{s}\right]\left[\bb{n}\bb{s}\right]}\,,\\
		&A_n \, \bar{K}_{\left[\bb{n}\bb{s}\right]\left[\bb{n}\bb{s}\right]} = \, 2\Dim\, \bar{K}_{\left[\bb{n}\bb{s}\right]\left[\bb{n}\bb{s}\right]} + 4\,\bar{K}_{\left[\bb{n}\bb{s}\bb{n}\bb{s}\right]}\,,\\
		&A_n \, \bar{K}_{\left[\bb{n}\bb{s}\bb{n}\bb{s}\right]} = \,2(\Dim+1)\,\bar{K}_{\left[\bb{n}\bb{s}\bb{n}\bb{s}\right]}+ 2\,\bar{K}_{\left[\bb{n}\bb{s}\right]\left[\bb{n}\bb{s}\right]}\,.
	\end{aligned}
\end{equation}

Before expanding the factorized form of the projector we need to express the above expression in the normalized basis $K_{\zeta}=\dfrac{1}{\mathrm{st}(\zeta)}\,\bar{K}_{\zeta}$.
The stability index of each basis element of $\mathbb{C}[\mathfrak{b}(\mathcal{A})]_4$ reads:
\begin{equation*}
	\def\arraystretch{1.4}
	\begin{array}{llllc}
		\mathrm{st}(\left[\bb{n}\bb{s}\right]\left[\bb{p}\right]\left[\bb{p}\right])=4\,, & \mathrm{st}(\left[\bb{n}\bb{s}\right]\left[\bb{p}\bb{p}\right])=4\,,  & 
		\mathrm{st}(\left[\bb{n}\bb{s}\bb{p}\right]\left[\bb{p}\right])=1\,, &
		\mathrm{st}(\left[\bb{n}\bb{s}\bb{p}\bb{p}\right])=1\,, \\ \mathrm{st}(\left[\bb{n}\bb{p}\bb{s}\bb{p}\right])=2\,, &
		\mathrm{st}(\left[\bb{n}\bb{s}\right]\left[\bb{n}\bb{s}\right])=8\,,&
		\mathrm{st}(\left[\bb{n}\bb{s}\bb{n}\bb{s}\right])=4\,. &\\
	\end{array}
\end{equation*}
Therefore, in the normalized basis the relations \eqref{eq:Laplace_Br_4} become: 
\begin{equation}\label{eq:A_action_normalized}
	\begin{aligned}
		& A_n \, K_{\left[\bb{n}\bb{s}\right]\left[\bb{p}\right]\left[\bb{p}\right]} = \Dim\,K_{\left[\bb{n}\bb{s}\right]\left[\bb{p}\right]\left[\bb{p}\right]} + K_{\left[\bb{n}\bb{s}\bb{p}\right]\left[\bb{p}\right]}+2 \, K_{\left[\bb{n}\bb{s}\right]\left[\bb{n}\bb{s}\right]}\,,  \\
		&A_n \, K_{\left[\bb{n}\bb{s}\right]\left[\bb{p}\bb{p}\right]} =\Dim\,K_{\left[\bb{n}\bb{s}\right]\left[\bb{p}\bb{p}\right]}+ K_{\left[\bb{n}\bb{s}\bb{p}\bb{p}\right]}+2\, K_{\left[\bb{n}\bb{s}\right]\left[\bb{n}\bb{s}\right]}\,,\\
		&A_n \, K_{\left[\bb{n}\bb{s}\bb{p}\right]\left[\bb{p}\right]} =\left( \Dim+1 \right)\, K_{\left[\bb{n}\bb{s}\bb{p}\right]\left[\bb{p}\right]} + 4\, K_{\left[\bb{n}\bb{s}\right]\left[\bb{p}\right]\left[\bb{p}\right]}+K_{\left[\bb{n}\bb{s}\bb{p}\bb{p}\right]}+2\, K_{\left[\bb{n}\bb{p}\bb{s}\bb{p}\right]}+4\, K_{\left[\bb{n}\bb{s}\bb{n}\bb{s}\right]}\,,\\
		&A_n \, K_{\left[\bb{n}\bb{s}\bb{p}\bb{p}\right]} =\left(\Dim+1\right)\, K_{\left[\bb{n}\bb{s}\bb{p}\bb{p}\right]} +4\,K_{\left[\bb{n}\bb{s}\right]\left[\bb{p}\bb{p}\right]} +K_{\left[\bb{n}\bb{s}\bb{p}\right]\left[\bb{p}\right]} +2\,K_{\left[\bb{n}\bb{p}\bb{s}\bb{p}\right]}+4\,K_{\left[\bb{n}\bb{s}\bb{n}\bb{s}\right]}\,,\\
		&A_n \, K_{\left[\bb{n}\bb{p}\bb{s}\bb{p}\right]} =\Dim\, K_{\left[\bb{n}\bb{p}\bb{s}\bb{p}\right]} + K_{\left[\bb{n}\bb{s}\bb{p}\right]\left[\bb{p}\right]} +K_{\left[\bb{n}\bb{s}\bb{p}\bb{p}\right]}+4\,K_{\left[\bb{n}\bb{s}\right]\left[\bb{n}\bb{s}\right]}\,,\\
		&A_n \, K_{\left[\bb{n}\bb{s}\right]\left[\bb{n}\bb{s}\right]} = \, 2\Dim\, K_{\left[\bb{n}\bb{s}\right]\left[\bb{n}\bb{s}\right]} + 2\,K_{\left[\bb{n}\bb{s}\bb{n}\bb{s}\right]}\,,\\
		&A_n \, K_{\left[\bb{n}\bb{s}\bb{n}\bb{s}\right]} = \,2(\Dim+1)\,K_{\left[\bb{n}\bb{s}\bb{n}\bb{s}\right]}+ 4\,K_{\left[\bb{n}\bb{s}\right]\left[\bb{n}\bb{s}\right]}\,.\\  
	\end{aligned}
\end{equation}
We have all the rules necessary to expand the factorized expression \eqref{eq:traceless_P4_Poly} of the traceless projector $P_4$. We omit calculations of the expansion and pass directly to the final result:
\begin{equation}\label{eq:projector4}
	P_4 = 1 + \sum_{\zeta\in \mathfrak{B}_4}\, k_\zeta \, K_{\zeta}\,,
\end{equation}
with,
\begin{equation*}
	\def\arraystretch{1.8}
	\begin{array}{rclrcl}
		k_{\left[\bb{n}\bb{s}\right]\left[\bb{p}\right]\left[\bb{p}\right]} & = & -\,\dfrac{\Dim^2(\Dim+4)-4}{(\Dim-2)\Dim(\Dim+2)(\Dim+4)}\,, &
		k_{\left[\bb{n}\bb{s}\right]\left[\bb{p}\bb{p}\right]} & = &\dfrac{4}{(\Dim-2)\Dim(\Dim+2)(\Dim+4)}\,, \\[14pt]
		k_{\left[\bb{n}\bb{s}\bb{p}\right]\left[\bb{p}\right]} & = & \dfrac{\Dim+3}{(\Dim-2)(\Dim+2)(\Dim+4)}\,, &
		k_{\left[\bb{n}\bb{s}\bb{p}\bb{p}\right]} & = & -\,\dfrac{1}{(\Dim-2)(\Dim+2)(\Dim+4)}\,, \\[14pt]
		k_{\left[\bb{n}\bb{p}\bb{s}\bb{p}\right]} & = & -\,\dfrac{2}{(\Dim-2)\Dim(\Dim+4)}\,, &
		k_{\left[\bb{n}\bb{s}\right]\left[\bb{n}\bb{s}\right]} & = & \dfrac{\Dim(\Dim+3)+6}{(\Dim-2)(\Dim-1)(\Dim+2)(\Dim+4)}\,, \\[14pt]
		\multicolumn{6}{c}{k_{\left[\bb{n}\bb{s}\bb{n}\bb{s}\right]}=-\,\dfrac{3\Dim+2}{(\Dim-2)(\Dim-1)(\Dim+2)(\Dim+4)}\,,}
	\end{array}
\end{equation*}
and all other coefficients being zero.\medskip

The expressions of the traceless projectors for $n=5$ and $n=6$ can be found in the appendix of \cite{KMP_central_idempotents}, and in the Mathematica notebook joined with the article \cite{bulgakova2022construction} where there is also the expression for $n=7$.



%
	%

	%
	\appendix 
\chapter{Irreducible decomposition of Riemann tensors}\label{app:Irreducible_Riemanns}
\numberwithin{equation}{section}
The computations for the irreducible decompositions of the Riemann tensors presented below are detailed in the Mathematica notebook \cite{xMAGRiemann}.
\section{The case of zero non-metricity}\label{app:IrredRiemannT}

We consider here the irreducible decomposition of the Riemann tensor when $\tensor{Q}{_a_b_c}=0$ and $\tensor{T}{^a_b_c}\neq0$. In this case, the Riemann tensor is antisymmetric with respect to the first and second pairs of indices:
\begin{equation*}
	\tensor{\mathcal{R}}{_{[ab][cd]}}=\tensor{\mathcal{R}}{_{abcd}}\,.
\end{equation*}
As for the general case the algebraic Bianchi identity is not a symmetry of the Riemann tensor because it involves the covariant derivative of the torsion tensor.
The space of Riemann tensor $V_{\mathcal{R}}$ with zero non-metricity and non zero torsion decomposes as the following direct sum of $\GL(\Dim,\mathbb{C})$-irreducible representation:
\begin{equation*}\label{eq:centralGL_decompositionVRT}
	V_{\mathcal{R}}=\, V^{\Yboxdim{3pt}\yng(1,1)}\otimes V^{\Yboxdim{3pt}\yng(1,1)}=V^{\Yboxdim{3pt}\yng(2,2)}\oplus V^{\Yboxdim{3pt}\yng(2,1,1)}\oplus V^{\Yboxdim{3pt}\yng(1,1,1,1)}\,,
\end{equation*}
where we have used the Littlewood-Richardson rule. This decomposition is unique and more explicitly we have: 
\begin{equation}
	\tensor{\mathcal{R}}{_{abcd}}=\tensor{{\accentset{\left(\Yboxdim{2.5pt}\yng(2,2)\right)}{\mathcal{R}}}}{_{abcd}}+\tensor{{\accentset{\left(\Yboxdim{2.5pt}\yng(2,1,1)\right)}{\mathcal{R}}}}{_{abcd}}+\tensor{{\accentset{\left(\Yboxdim{2.5pt}\yng(1,1,1,1)\right)}{\mathcal{R}}}}{_{abcd}}\, 
\end{equation}
with
\begin{equation}
	\begin{aligned}
		&\tensor{{\accentset{\left(\Yboxdim{2.5pt}\yng(2,2)\right)}{\mathcal{R}}}}{_{abcd}}:=\tensor{(\mathcal{R}\cdot Z^{\Yboxdim{2.5pt}\yng(2,2)})}{_{abcd}}=\frac{1}{2}\left(\tensor{\mathcal{R}}{_{abcd}}+\tensor{\mathcal{R}}{_{cdab}}\right)-\tensor{\mathcal{R}}{_{[abcd]}}\,, \hspace{1cm} 
		\tensor{{\accentset{\left(\Yboxdim{2.5pt}\yng(1,1,1,1)\right)}{\mathcal{R}}}}{_{abcd}}:=\tensor{(\mathcal{R}\cdot Z^{\Yboxdim{2.5pt}\yng(1,1,1,1)} )}{_{abcd}}=\tensor{\mathcal{R}}{_{[abcd]}}\,,\\
		&\tensor{{\accentset{\left(\Yboxdim{2.5pt}\yng(2,1,1)\right)}{\mathcal{R}}}}{_{abcd}}:=\tensor{(\mathcal{R}\cdot Z^{\Yboxdim{2.5pt}\yng(2,1,1)})}{_{abcd}}=\frac{1}{2}\left(\tensor{\mathcal{R}}{_{abcd}}-\tensor{\mathcal{R}}{_{cdab}}\right)\,.
	\end{aligned}
\end{equation}
The explicit expressions of the projectors $Z^{\mu}$, expanded in the conjugacy class sum basis of $\mathcal{Z}_n$, are given in \eqref{eq:centralYoung_4}.\smallskip

The branching rules from $\GL(\Dim,\mathbb{C})$ to $\Or(1,\Dim-1)$ are 
\begin{equation*}
	V^{\Yboxdim{3pt}\yng(2,2)}=D^{\Yboxdim{3pt}\yng(2,2)}\oplus D^{\Yboxdim{3pt}\yng(2)} \oplus D^{\Yboxdim{3pt}\emptyset}\,,\hspace{1cm} V^{\Yboxdim{3pt}\yng(2,1,1)}=D^{\Yboxdim{3pt}\yng(2,1,1)}\oplus D^{\Yboxdim{3pt}\yng(1,1)}\,, \hspace{1cm} V^{\Yboxdim{3pt}\yng(1,1,1,1)}=D^{\Yboxdim{3pt}\yng(1,1,1,1)}\,,
\end{equation*}
and one has $V_\mathcal{R}=D^{\Yboxdim{3pt}\yng(2,2)}\oplus D^{\Yboxdim{3pt}\yng(2,1,1)} \oplus D^{\Yboxdim{3pt}\yng(1,1,1,1)} \oplus D^{\Yboxdim{3pt}\yng(2)}\oplus D^{\Yboxdim{3pt}\yng(1,1)}\oplus D^{\Yboxdim{3pt}\emptyset}$. This decomposition is multiplicity free and we therefore have a unique $\Or(1,\Dim-1)$ irreducible decomposition of the Riemann tensor. The decomposition of each $\GL(\Dim,\mathbb{C})$ irreducible sector is given by 
\begin{equation*}
	\tensor{{\accentset{\left(\Yboxdim{3pt}\yng(2,2)\right)}{\mathcal{R}}}}{_{abcd}}=\tensor{{\accentset{\left(\Yboxdim{3pt}\yng(2,2)\right)}{\underline{\mathcal{R}}}}}{_{\, abcd}}+\tensor{{\accentset{\left(\Yboxdim{3pt}\yng(2,2),\,\Yboxdim{3pt}\yng(2)\right)}{\mathcal{R}}}}{_{\, abcd}}+\tensor{{\accentset{\left(\Yboxdim{3pt}\yng(2,2),\,\emptyset\right)}{\mathcal{R}}}}{_{\, abcd}}\,,\hspace{1cm}
	\tensor{{\accentset{\left(\Yboxdim{3pt}\yng(2,1,1)\right)}{{\mathcal{R}}}}}{_{abcd}}=\tensor{{\accentset{\left(\Yboxdim{3pt}\yng(2,1,1)\right)}{\underline{\mathcal{R}}}}}{_{\,abcd}}+\,\,\,\tensor{{\accentset{\left(\Yboxdim{3pt}\yng(2,1,1),\,\Yboxdim{3pt}\yng(1,1)\right)}{{\mathcal{R}}}}}{_{abcd}}\,,\hspace{1cm}
	\tensor{{\accentset{\left(\Yboxdim{3pt}\yng(1,1,1,1)\right)}{{\mathcal{R}}}}}{_{abcd}}=\tensor{{\accentset{\left(\Yboxdim{3pt}\yng(1,1,1,1)\right)}{\underline{\mathcal{R}}}}}{_{\, abcd}}\,,
\end{equation*}
where the traceless tensors of the decompositions are given by $\tensor{{\accentset{\left(\lambda\right)}{\underline{\mathcal{R}}}}}{_{\, abcd}}=\big(\accentset{\left(\lambda\,\right)}{\mathcal{R}}\cdot P^{\lambda}_4\big)_{abcd}$ with $|\lambda|=4$, while the $2$-traceless tensors are given by   are given by $\tensor{{\accentset{\left(\mu,\,\lambda\right)}{\mathcal{R}}}}{_{\, abcd}}=\big(\accentset{\left(\mu\,\right)}{\mathcal{R}}\cdot P^{\lambda}_4\big)_{abcd}$ with $|\lambda|=2$, and $\tensor{{\accentset{\left(\Yboxdim{3pt}\yng(2,2),\,\emptyset\right)}{\mathcal{R}}}}{_{\, abcd}}=\big(\accentset{\left(\Yboxdim{3pt}\yng(2,2)\,\right)}{\mathcal{R}}\cdot P^{\emptyset}_4\big)_{abcd}$. The explicit expressions of the projectors $P^{\lambda}_4$, expanded in the conjugacy class sum basis of $\mathcal{C}_n$, are given in \eqref{eq:proj_4TL} and \eqref{eq:proj_Riemann}.\smallskip

The $\Or(1,\Dim-1)$-irreducible decomposition of the Riemann tensor is then given by
\begin{equation}\label{eq:Irreducible_RiemannT}
	\Scale[0.98]{\tensor{\mathcal{R}}{_{abcd}}=\underbrace{\tensor{{\accentset{\left(\Yboxdim{3pt}\yng(2,2)\right)}{\underline{\mathcal{R}}}}}{_{\, abcd}}+\tensor{{\accentset{\left(\Yboxdim{3pt}\yng(2,1,1)\right)}{\underline{\mathcal{R}}}}}{_{\,abcd}}+\tensor{{\accentset{\left(\Yboxdim{3pt}\yng(1,1,1,1)\right)}{\underline{\mathcal{R}}}}}{_{\, abcd}}}_{\text{traceless}}+\,\,\underbrace{\tensor{{\accentset{\left(\Yboxdim{3pt}\yng(2,2),\,\Yboxdim{3pt}\yng(2)\right)}{\mathcal{R}}}}{_{\, abcd}}+\,\,\,\tensor{{\accentset{\left(\Yboxdim{3pt}\yng(2,1,1),\,\Yboxdim{3pt}\yng(1,1)\right)}{{\mathcal{R}}}}}{_{abcd}}}_{\text{2-traceless}}+\underbrace{\tensor{{\accentset{\left(\Yboxdim{3pt}\yng(2,2),\,\emptyset\right)}{\mathcal{R}}}}{_{\, abcd}}}_{\text{full trace}}\,.}
\end{equation} 
Applying the projection operators \eqref{eq:main_res_f_traceless_central_idempotent} to the Riemann tensor we obtain
\begin{equation}
	\begin{aligned}
		&\tensor{{\accentset{\left(\Yboxdim{3pt}\yng(2,2),\,\Yboxdim{3pt}\yng(2)\right)}{\mathcal{R}}}}{_{\, abcd}}=\frac{2}{\Dim-2}\left(\, \tensor{\accentset{(1)}{\mathcal B}}{_{[a|c|}}\tensor{g}{_{b]d}} - \tensor{\accentset{(1)}{\mathcal B}}{_{[a|d|}}\tensor{g}{_{b]c}} \,\right)\,,  \hspace{1.cm}
		\tensor{{\accentset{\left(\Yboxdim{3pt}\yng(2,2),\,\emptyset\right)}{\mathcal{R}}}}{_{\, abcd}}=\frac{\mathcal{R}}{\Dim \left( \Dim-1  \right)} \Bigl(\tensor{g}{_{ac}}\tensor{g}{_{bd}}-\tensor{g}{_{ad}}\tensor{g}{_{bc}}\Bigr)\,, \\
		&\tensor{{\accentset{\left(\Yboxdim{3pt}\yng(2,1,1),\,\Yboxdim{3pt}\yng(1,1)\right)}{{\mathcal{R}}}}}{_{abcd}}=\frac{2}{\Dim-2}\left(\, \tensor{\accentset{(2)}{\mathcal B}}{_{[a|c|}}\tensor{g}{_{b]d}} - \tensor{\accentset{(2)}{\mathcal B}}{_{[a|d|}}\tensor{g}{_{b]c}} \,\right)\,,
	\end{aligned}
\end{equation}
where we have introduced the traceless building blocks
\begin{equation*}
	\tensor{\accentset{(1)}{\mathcal B}}{_{ab}}=\tensor{\underline{\accentset{(1)}{\mathcal R}}}{_{(ab)}}\,, \hspace{1cm} \tensor{\accentset{(2)}{\mathcal B}}{_{ab}}=\tensor{\accentset{(1)}{\mathcal R}}{_{[ab]}}\,.
\end{equation*}
This decomposition is identical to the one given in \cite{Hayashi_1980}.

\section{The case of zero torsion}\label{app:IrredRiemannQ}

We consider here the irreducible decomposition of the Riemann tensor when $\tensor{T}{^a_b_c}=0$ and $\tensor{Q}{_a_b_c}\neq0$. We will see that in this case the decomposition in not unique. 


When $\tensor{T}{^a_b_c}=0$, the symmetries of indices of the Riemann tensor are the same as in the general setting: antisymmetric in the first pair of indices. However we have to take into account the algebraic Bianchi identity which is a Young symmetry of the Riemann tensor: 
\begin{equation}\label{eq:bianchi_alg}
	\tensor{\mathcal{R}}{_{[abc]d}}=0\,.
\end{equation}
This identity implies directly that the totally antisymmetric part of the Riemann tensor is zero. Besides, from the structure of the Young seminormal idempotents \eqref{eq:seminormalYoung4} it is clear that
\begin{equation}
	\tensor{(\mathcal{R}\cdot  Y^{\,\Scale[0.4]{\young(12,3,4)}})}{_{abcd}}= \tensor{(\mathcal{R}\cdot Y^{\,\Scale[0.4]{\young(14,2,3)}})}{_{abcd}}=0\,.
\end{equation}
Hence, the space of Riemann tensor $V_{\mathcal{R}}$ with zero torsion and non zero non-metricity decomposes as the following direct sum of $GL(\Dim,\mathbb{C})$-irreducible representation:
\begin{equation*}
	V_{\mathcal{R}}=\, V^{\Yboxdim{3pt}\yng(3,1)}\oplus \, V^{\Yboxdim{3pt}\yng(2,2)}\oplus \,\, V^{\Yboxdim{3pt}\yng(2,1,1)} \, .
\end{equation*}
Each irreducible representation appears without multiplicity. Therefore we can infer that the $GL(\Dim,\mathbb{C})$ irreducible decomposition of the Riemann tensor is unique, more precisely one has:
\begin{equation}
	\tensor{\mathcal{R}}{_{abcd}}=\tensor{{\accentset{\left(\Yboxdim{2.5pt}\yng(3,1)\right)}{\mathcal{R}}}}{_{abcd}}+\tensor{{\accentset{\left(\Yboxdim{2.5pt}\yng(2,2)\right)}{\mathcal{R}}}}{_{abcd}}+\tensor{{\accentset{\left(\Yboxdim{2.5pt}\yng(2,1,1)\right)}{\mathcal{R}}}}{_{abcd}}\,,
\end{equation} 
with 
\begin{equation}
	\begin{aligned}
		&\tensor{\accentset{\left(\Yboxdim{2.5pt}\yng(2,1,1)\right)}{\mathcal{R}}}{_{abcd}}:=\tensor{(\mathcal{R}\cdot Z^{\Yboxdim{2.5pt}\yng(2,1,1)})}{_{abcd}}=\frac{3}{4}\left(\tensor{{\mathcal{R}}}{_{[ab|c|d]}}+\tensor{{\mathcal{R}}}{_{[a|c|bd]}}-\tensor{{\mathcal{R}}}{_{[a|d|bc]}}\right)\,,\\
		&\tensor{{\accentset{\left(\Yboxdim{2.5pt}\yng(3,1)\right)}{\mathcal{R}}}}{_{abcd}}:=\tensor{(\mathcal{R}\cdot Z^{\Yboxdim{2.5pt}\yng(3,1)})}{_{abcd}}=\frac{3}{4}\left(\tensor{{\mathcal{R}}}{_{a(bcd)}}+\tensor{{\mathcal{R}}}{_{(a|b|cd)}}\right) \,,\\
		&\tensor{{\accentset{\left(\Yboxdim{2.5pt}\yng(2,2)\right)}{\mathcal{R}}}}{_{abcd}}:=\tensor{(\mathcal{R}\cdot Z^{\Yboxdim{2.5pt}\yng(2,2)})}{_{abcd}}=\tensor{{\mathcal{R}}}{_{[ab]cd}}-\tensor{{\accentset{\left(\Yboxdim{2.5pt}\yng(2,1,1)\right)}{\mathcal{R}}}}{_{abcd}}-\tensor{{\accentset{\left(\Yboxdim{2.5pt}\yng(3,1)\right)}{\mathcal{R}}}}{_{abcd}}\,.
	\end{aligned}
\end{equation}
The explicit expressions of the projectors $Z^{\mu}$, expanded in the conjugacy class sum basis of $\mathcal{Z}_n$, are given in \eqref{eq:centralYoung_4}. Before proceeding to the $\Or(1,\Dim-1)$ irreducible decomposition, let us mention that the metric contraction of the Bianchi identity \eqref{eq:bianchi_alg} yield the following relation between traces of the Riemann tensor: 
\begin{equation}
	\tensor{\accentset{(3)}{\mathcal{R}}}{_{ab}}=2 \,\tensor{\accentset{(1)}{\mathcal{R}}}{_{[ab]}}\,.
\end{equation}

The branching rules of interest from $\GL(\Dim,\mathbb{C})$ to $\Or(1,\Dim-1)$ are 
\begin{equation*}
	V^{\Yboxdim{3pt}\yng(3,1)}=D^{\Yboxdim{3pt}\yng(3,1)}\oplus D^{\Yboxdim{3pt}\yng(2)} \oplus D^{\Yboxdim{3pt}\yng(1,1)}\,\hspace{1cm}
	V^{\Yboxdim{3pt}\yng(2,2)}=D^{\Yboxdim{3pt}\yng(2,2)}\oplus D^{\Yboxdim{3pt}\yng(2)} \oplus D^{\Yboxdim{3pt}\emptyset}\,\hspace{1cm} V^{\Yboxdim{3pt}\yng(2,1,1)}=D^{\Yboxdim{3pt}\yng(2,1,1)}\oplus D^{\Yboxdim{3pt}\yng(1,1)}\,,
\end{equation*}
and one has $V_\mathcal{R}=D^{\Yboxdim{3pt}\yng(3,1)}\oplus D^{\Yboxdim{3pt}\yng(2,2)}\oplus D^{\Yboxdim{3pt}\yng(2,1,1)} \oplus 2\, D^{\Yboxdim{3pt}\yng(2)}\oplus 2\, D^{\Yboxdim{3pt}\yng(1,1)}\oplus D^{\Yboxdim{3pt}\emptyset}$. The $O(1,\Dim-1)$-irreducible decomposition of the Riemann tensor is not unique because some irreducible representation appear with multiplicities. Indeed, $m(\Yboxdim{5pt}\yng(2))=m(\Yboxdim{5pt}\yng(1,1))=2$. Nevertheless, the $\Or(1,\Dim-1)$-irreducible representations appearing in each $\GL(\Dim,\mathbb{C})$ irreducible sector are multiplicity free. Therefore the two step irreducible decomposition is unique and one has:
\begin{equation}
	\begin{aligned}
		&\tensor{{\accentset{\left(\Yboxdim{3pt}\yng(3,1)\right)}{\mathcal{R}}}}{_{abcd}}=\tensor{{\accentset{\left(\Yboxdim{3pt}\yng(3,1)\right)}{\underline{\mathcal{R}}}}}{_{\, abcd}}+\,\,\tensor{{\accentset{\left(\Yboxdim{3pt}\yng(3,1),\,\Yboxdim{3pt}\yng(2)\right)}{\mathcal{R}}}}{_{\, abcd}}+\,\,\tensor{{\accentset{\left(\Yboxdim{3pt}\yng(3,1),\,\Yboxdim{3pt}\yng(1,1)\right)}{\mathcal{R}}}}{_{\, abcd}}\,,\hspace{2.5cm}
		\tensor{{\accentset{\left(\Yboxdim{3pt}\yng(2,2)\right)}{\mathcal{R}}}}{_{abcd}}=\tensor{{\accentset{\left(\Yboxdim{3pt}\yng(2,2)\right)}{\underline{\mathcal{R}}}}}{_{\, abcd}}+\,\,\tensor{{\accentset{\left(\Yboxdim{3pt}\yng(2,2),\,\Yboxdim{3pt}\yng(2)\right)}{\mathcal{R}}}}{_{\, abcd}}+\,\,\tensor{{\accentset{\left(\Yboxdim{3pt}\yng(2,2),\,\emptyset\right)}{\mathcal{R}}}}{_{\, abcd}}\,,
		\\
		&
		\tensor{{\accentset{\left(\Yboxdim{3pt}\yng(2,1,1)\right)}{{\mathcal{R}}}}}{_{abcd}}=\tensor{{\accentset{\left(\Yboxdim{3pt}\yng(2,1,1)\right)}{\underline{\mathcal{R}}}}}{_{\,abcd}}+\,\,\,\tensor{{\accentset{\left(\Yboxdim{3pt}\yng(2,1,1),\,\Yboxdim{3pt}\yng(1,1)\right)}{{\mathcal{R}}}}}{_{abcd}}\,,
	\end{aligned}
\end{equation}
where the traceless tensors of the decompositions are given by $\tensor{{\accentset{\left(\lambda\right)}{\underline{\mathcal{R}}}}}{_{\, abcd}}=\big(\accentset{\left(\lambda\,\right)}{\mathcal{R}}\cdot P^{\lambda}_4\big)_{abcd}$ with $|\lambda|=4$, while the $2$-traceless tensors are given by   are given by $\tensor{{\accentset{\left(\mu,\,\lambda\right)}{\mathcal{R}}}}{_{\, abcd}}=\big(\accentset{\left(\mu\,\right)}{\mathcal{R}}\cdot P^{\lambda}_4\big)_{abcd}$ with $|\lambda|=2$, and $\tensor{{\accentset{\left(\Yboxdim{3pt}\yng(2,2),\,\emptyset\right)}{\mathcal{R}}}}{_{\, abcd}}=\big(\accentset{\left(\Yboxdim{3pt}\yng(2,2)\,\right)}{\mathcal{R}}\cdot P^{\emptyset}_4\big)_{abcd}$. The explicit expressions of the projectors $P^{\lambda}_4$, expanded in the conjugacy class sum basis of $\mathcal{C}_n$, are given in \eqref{eq:proj_4TL} and \eqref{eq:proj_Riemann}.\smallskip

The  two-step $\Or(1,\Dim-1)$-irreducible decomposition of the Riemann tensor is then given by
\begin{equation}\label{eq:Irreducible_RiemannQ}
	\Scale[0.98]{\tensor{\mathcal{R}}{_{abcd}}=\underbrace{\tensor{{\accentset{\left(\Yboxdim{3pt}\yng(3,1)\right)}{\underline{\mathcal{R}}}}}{_{\, abcd}}+\tensor{{\accentset{\left(\Yboxdim{3pt}\yng(2,2)\right)}{\underline{\mathcal{R}}}}}{_{\, abcd}}+\tensor{{\accentset{\left(\Yboxdim{3pt}\yng(2,1,1)\right)}{\underline{\mathcal{R}}}}}{_{\,abcd}}}_{\text{traceless}}+\,\,\underbrace{\,\,\tensor{{\accentset{\left(\Yboxdim{3pt}\yng(3,1),\,\Yboxdim{3pt}\yng(2)\right)}{\mathcal{R}}}}{_{\, abcd}}+\,\,\tensor{{\accentset{\left(\Yboxdim{3pt}\yng(2,2),\,\Yboxdim{3pt}\yng(2)\right)}{\mathcal{R}}}}{_{\, abcd}}+\,\,\tensor{{\accentset{\left(\Yboxdim{3pt}\yng(3,1),\,\Yboxdim{3pt}\yng(1,1)\right)}{\mathcal{R}}}}{_{\, abcd}}+\,\,\,\tensor{{\accentset{\left(\Yboxdim{3pt}\yng(2,1,1),\,\Yboxdim{3pt}\yng(1,1)\right)}{{\mathcal{R}}}}}{_{abcd}}}_{\text{2-traceless}}+\underbrace{\tensor{{\accentset{\left(\Yboxdim{3pt}\yng(2,2),\,\emptyset\right)}{\mathcal{R}}}}{_{\, abcd}}}_{\text{full trace}}\,.}
\end{equation}
Applying the projection operators \eqref{eq:main_res_f_traceless_central_idempotent} to each $\GL(\Dim,\mathbb{C})$ irreducible tensor we obtain
\begin{equation}
	\begin{aligned}
		&\tensor{{\accentset{\left(\Yboxdim{3pt}\yng(3,1),\,\Yboxdim{3pt}\yng(2)\right)}{\mathcal{R}}}}{_{\, abcd}}=\frac{1}{\Dim} \left(\, \tensor{\accentset{(1)}{\mathcal B}}{_{[a|d|}}\tensor{g}{_{b]c}} +\tensor{\accentset{(1)}{\mathcal B}}{_{[a|c|}}\tensor{g}{_{b]d}}  \, \right)\,, \hspace{1.55cm} \tensor{{\accentset{\left(\Yboxdim{3pt}\yng(3,1),\,\Yboxdim{3pt}\yng(1,1)\right)}{\mathcal{R}}}}{_{\, abcd}}=\frac{1}{2\left(\, \Dim+2 \, \right)} \left( \tensor{\accentset{(4)}{\mathcal B}}{_{ab}}\tensor{g}{_{cd}}+\tensor{\accentset{(4)}{\mathcal B}}{_{[a|d|}}\tensor{g}{_{b]c}}+\tensor{\accentset{(4)}{\mathcal B}}{_{[a|c|}}\tensor{g}{_{b]d}} \right)\,, \\
		&\tensor{{\accentset{\left(\Yboxdim{3pt}\yng(2,2),\,\Yboxdim{3pt}\yng(2)\right)}{\mathcal{R}}}}{_{\, abcd}}=\frac{1}{\Dim-2}\left(\, \tensor{\accentset{(2)}{\mathcal B}}{_{[a|c|}}\tensor{g}{_{b]d}} - \tensor{\accentset{(2)}{\mathcal B}}{_{[a|d|}}\tensor{g}{_{b]c}} \,\right)\,,  \hspace{1.cm}
		\tensor{{\accentset{\left(\Yboxdim{3pt}\yng(2,2),\,\emptyset\right)}{\mathcal{R}}}}{_{\, abcd}}=\frac{\mathcal{R}}{\Dim \left( \Dim-1  \right)} \Bigl(\tensor{g}{_{ac}}\tensor{g}{_{bd}}-\tensor{g}{_{ad}}\tensor{g}{_{bc}}\Bigr)\,, \\
		&\tensor{{\accentset{\left(\Yboxdim{3pt}\yng(2,1,1),\,\Yboxdim{3pt}\yng(1,1)\right)}{{\mathcal{R}}}}}{_{abcd}}=\frac{1}{2(\Dim-2)}\left(\tensor{\accentset{(3)}{\mathcal{B}}}{_{ab}}\tensor{g}{_{cd}}+\tensor{\accentset{(3)}{\mathcal B}}{_{[a|c|}}\tensor{g}{_{b]d}}-3\,\tensor{\accentset{(3)}{\mathcal B}}{_{[a|d|}}\tensor{g}{_{b]c}}\right)\,,
	\end{aligned}
\end{equation}
where we have introduced the traceless building blocks
\begin{equation}
	\begin{aligned}
		&\tensor{\accentset{(1)}{\mathcal B}}{_{ab}}=\tensor{\accentset{(1)}{\mathcal{R}}}{_{(ab)}}-\tensor{\accentset{(2)}{\mathcal{R}}}{_{(ab)}}\,, \hspace{3cm} \tensor{\accentset{(2)}{\mathcal B}}{_{ab}}=\tensor{\underline{\accentset{(1)}{\mathcal{R}}}}{_{(ab)}}+\tensor{\underline{\accentset{(2)}{\mathcal{R}}}}{_{(ab)}}\,, \\
		&\tensor{\accentset{(3)}{\mathcal B}}{_{ab}}=\tensor{\accentset{(1)}{\mathcal{R}}}{_{[ab]}}+\tensor{\accentset{(2)}{\mathcal{R}}}{_{[ab]}}\,, \hspace{3.1cm} \tensor{\accentset{(4)}{\mathcal B}}{_{ab}}=3\,\tensor{\accentset{(1)}{\mathcal{R}}}{_{[ab]}}-\tensor{\accentset{(2)}{\mathcal{R}}}{_{[ab]}}\,.
	\end{aligned}
\end{equation} 
	\chapter{Mathematical tools : differential geometry and representation theory}\label{app:maths}
\section{General notions}\label{sec:general_definitions}

\subsection{Differential geometry}\label{subsec:definitions_geo_diff}

\paragraph{Coordinate frame.} Let $\lbrace x^\alpha \rbrace$ with $\alpha=1,\ldots,\, \Dim$, be the set of coordinate functions $x^\alpha\,:\, U \mapsto \mathbb{R}$ on an open subset $U\subset \mathcal{M}$ related to the coordinate map $x \,: U\to V\subset \mathbb{R}^\Dim\, $.\smallskip

A coordinate frame (resp. co-frame) is the data of a coordinate basis of the tangent space $T_p\mathcal{M}$ (resp. co-tangent space $T^{*}_p\mathcal{M}$) for each $p\in U$. The local coordinate system (chart) $(U\,,\, x)$ gives rise to the  coordinate (holonomic) frame $\lbrace \dfrac{\partial}{\partial x^\alpha}\,, \alpha=1,\ldots,\, \Dim\rbrace$.\footnote{We also use the notation  $\partial_{\alpha}$.} The coordinate co-frame $\lbrace \mathrm{d}x^\alpha\rbrace$ on $U$ is defined by the relation $\mathrm{d}x^\alpha(\partial_{\beta})=\tensor{\delta}{^\alpha_\beta}$. \medskip 

For any smooth function $f\,:\,\mathcal{M}\to \mathbb{R}^\Dim$ the action of the operator $\dfrac{\partial}{\partial x^\alpha}$ is defined by  

\begin{equation}
	\dfrac{\partial f}{\partial x^\alpha}\Big{|}_p:=\dfrac{\partial \left(f\circ x^{-1}\right)}{\partial \textit{z}^{\,\alpha}}\Big{|}_{\textit{z}}
\end{equation}
where $\textit{z}=x(p)$, and $\dfrac{\partial}{\partial \textit{z}^{\,\alpha}}$ is the usual partial derivative of functions on $\mathbb{R}^\Dim$.


\paragraph{Linear connection on $\mathcal{M}$.}
	A linear connection on $\mathcal{M}$ is a mapping (see for example \cite[Chapter I.7]{choquet2015introduction} and also \cite[Chapter V. Section B]{bruhat1982analysis})
	\begin{equation}
		\nabla : \ X\mapsto \nabla(X)
	\end{equation}
	from smooth vector fields $X$ on $\mathcal{M}$ to differentiable tensor fields of type (1,1) on $\mathcal{M}$ such that 
	\begin{equation}\label{eq:defining_prop_connection}
		\begin{array}{ll}
			\nabla (X + Z)= \nabla X +\nabla  Z& \hspace{1.5cm} \textit{Linearity}\\[10pt]
			\nabla (fX)= \nabla f \otimes X+ f\nabla X& \hspace{1.5cm} \textit{Leibniz rule}
		\end{array}
	\end{equation}
	where $f$ is a differentiable function on $\mathcal{M}$ and $\nabla (f):=\diff f$ with $\mathrm{d} f= \partial_\alpha (f) \mathrm{d}x^\alpha$. The tensor  $\nabla X$ is called the \textit{absolute covariant derivative} of $X$.\medskip
	
\paragraph{Covariant derivative.} The covariant derivative $\nabla_Y X$ of $X$ along $Y$ is a vector field defined by 
	\begin{equation*}
		\nabla_Y X:=\nabla X (Y) \quad \text{or} \quad \nabla_Y X(\cdot):=\nabla X(Y,\cdot).
	\end{equation*}
	From the defining properties \eqref{eq:defining_prop_connection} of $\nabla$ one has
	\begin{equation}\label{eq:defining_prop_connection2}
		\begin{array}{ll}
			\nabla_Y (X + Z)= \nabla_Y X +\nabla_Y Z& \hspace{1.5cm} \textit{Linearity}\\[10pt]
			\nabla_Y (fX)= Y(f) X+ f \, \nabla_Y X& \hspace{1.5cm} \textit{Leibniz rule}
		\end{array}
	\end{equation}
where
	\begin{equation}
	Y(f)=\nabla_Y(f)\,.
	\end{equation}
For two arbitrary tensors $T$ and $S$ we have 
	\begin{equation}
		\begin{array}{ll}
			\nabla_Y(\, T+S \,)=\nabla_Y(\, T \,)+\nabla_Y(\, S \,)\,& \hspace{0.5cm} \textit{Linearity}\\
			\nabla_Y(\, T\otimes S \,)=\nabla_Y(\, T\,)\otimes S+ T \otimes \nabla_Y(\, S \,)\, &\hspace{0.5cm} \textit{Leibniz rule}
		\end{array}
	\end{equation}

\paragraph{Connection coefficients (Christoffel symbols).}	The connection coefficients $\tensor{\Gamma}{^{\sigma}_{\alpha\beta}}$ of $\nabla$ in a coordinate frame are defined by the relation
	\begin{equation}\label{eq:def_connection_coefficients_coordinate}
		\nabla (\partial_\beta):=\tensor{\Gamma}{^{\sigma}_{\alpha\beta}}\, \diff x^\alpha\otimes\partial_\sigma\,, \hspace{0.5cm}
		\text{or equivalently}\hspace{0.5cm} \nabla_\alpha (\partial_\beta):=\tensor{\Gamma}{^{\sigma}_{\alpha\beta}}\, \partial_\sigma\,.
	\end{equation} 
	Hence for the covariant derivative $\nabla\,X$ of a vector field $X=X^\sigma\partial_\sigma$ we have:
	\begin{equation}
		\nabla\, X =\left(\nabla_\alpha X^{\sigma} \right) \diff x^\alpha\otimes \partial_\sigma \hspace{0.5cm}\text{with}\hspace{0.5cm} \nabla_\alpha X^{\sigma} =\partial_\alpha X^{\sigma} +  \tensor{\Gamma}{^{\sigma}_{\alpha\beta}}X^{\beta}\,,
	\end{equation}
	and for the covariant derivative $\nabla_Y\,X$ of $X$ along $Y$ we have
	\begin{equation}
		\nabla_Y X =\Big(Y^\alpha\nabla_\alpha X^\sigma \Big)\partial_\sigma.
	\end{equation}
With the above definitions the covariant derivative of a one form $A=A_\mu \, \mathrm{d}x^\mu$ along $Y$ is given by 
\begin{equation}
		\nabla_Y A =\Big(Y^\beta\nabla_\beta \, A_\sigma \Big)\mathrm{d}x^\sigma\, \hspace{0.5cm}\text{with}\hspace{0.5cm} \nabla_\beta A_{\sigma}=\partial_\beta A_{\sigma} - \tensor{\Gamma}{^{\alpha}_{\beta\sigma}}A_{\alpha}\,.
\end{equation}
For a tensor $T=\tensor{T}{^{\alpha_1\ldots \alpha_p}_{\beta_1\ldots\beta_q}}\partial_{\alpha_1}\otimes \ldots\otimes\partial_{\alpha_p}\otimes \mathrm{d}x^{\beta_1}\otimes\ldots\otimes\mathrm{d}x^{\beta_q}$ of type $(p,q)$ one has 
\begin{equation}
	\nabla_Y T =\Big(Y^\rho\nabla_\rho \, \tensor{T}{^{\alpha_1\ldots \alpha_p}_{\beta_1\ldots\beta_q}} \Big)\partial_{\alpha_1}\otimes \ldots\otimes\partial_{\alpha_p}\otimes \mathrm{d}x^{\beta_1}\otimes\ldots\otimes\mathrm{d}x^{\beta_q}\,,
\end{equation}
where
\begin{equation*}
\begin{aligned}
\nabla_\rho \, \tensor{T}{^{\alpha_1\ldots \alpha_p}_{\beta_1\ldots\beta_q}} = \partial_\rho \, \tensor{T}{^{\alpha_1\ldots \alpha_p}_{\beta_1\ldots\beta_q}} &+\tensor{\Gamma}{^{\alpha_1}_{\rho\mu}}\tensor{T}{^{\mu \ldots \alpha_p}_{\beta_1\ldots\beta_q}}+\ldots+\tensor{\Gamma}{^{\alpha_p}_{\rho\mu}}\tensor{T}{^{\alpha_1 \ldots \mu}_{\beta_1\ldots\beta_q}}\\
&-\tensor{\Gamma}{^{\mu}_{\rho\beta_1}}\tensor{T}{^{\alpha_1 \ldots \alpha_p}_{\mu\ldots\beta_q}}-\ldots- \tensor{\Gamma}{^{\mu}_{\rho\beta_q}}\tensor{T}{^{\alpha_1 \ldots \alpha_p}_{\beta_q\ldots\mu}}\,.
\end{aligned}
\end{equation*}
\paragraph{Covariant derivative in a moving frame.} In a moving frame  $\lbrace \bm{e}_a \st a=1\,,\ldots \Dim \rbrace$ (resp. coframe $\lbrace \bm{\theta}^a\st a=1\,,\ldots \Dim \rbrace$) on an open subset $U\subset\mathcal{M}$ we denote by $\tensor{\omega}{^{a}_{bc}}$ the connection coefficients of $\nabla$:
\begin{equation}
\nabla(\bm{e}_c):=\tensor{\omega}{^a_b_c}\, \bm{e}_a \otimes\bm{\theta}^b\,.
\end{equation}
	For the covariant derivative $\nabla\,X$ of a vector field $X=X^b\, \bm{e}_b$ expressed in a moving frame we have:
	\begin{equation}
		\nabla\, X =\left(\nabla_a X^{b} \right) \bm{\theta}^a\otimes \bm{e}_b,\hspace{0.5cm}\text{with}\hspace{0.5cm} \nabla_a X^{b}=\bm{e}_a(X^{b}) + \,\tensor{\omega}{^{b}_{ac}} X^{c}
	\end{equation}
	For the covariant derivative $\nabla_Y\,X$ of $X$ along $Y$ expressed in a linear frame we have:
	\begin{equation}
		\nabla_Y X =Y^a\left(\nabla_a X^{b} \right) \bm{e}_b\,.
	\end{equation}
Similarly for a one form $A=A_a \,\bm{\theta}^a$ one has 
\begin{equation}
		\nabla_Y A =Y^a\left(\nabla_a A_{b} \right) \bm{\theta}^b\, \hspace{0.5cm}\text{with}\hspace{0.5cm} \nabla_a A_{b}=\bm{e}_a(A_{b}) - \,\tensor{\omega}{^{c}_{ab}}\, A_{c}\,.
\end{equation}
These expressions generalize naturally to tensors of arbitrary types.


\subsection{Miscellaneous}\label{subsec:definitions}
\paragraph{Direct sum.} Let $V$ be a vector space and let $V_k$ for $k=1\,,\ldots\,,N$ be subspaces of $V$. The \textit{sum} of $V_1\,,\ldots\,,V_N$ denoted $V_1+\ldots+V_N$ is defined to be set of all possible sums of elements of $V_1\,,\ldots\,,V_N$:
\begin{equation}
V_1+\ldots+V_N=\lbrace \, v_1+\ldots+v_N\,\st\, v_k\in V_k \hspace{0.2cm}\text{for} \hspace{0.2cm} k=1\,,\ldots\,,N \,\rbrace\,.
\end{equation}

We say that $V$ is the direct sum of $V_1\,,\ldots,V_N$ written
\begin{equation}
	V=\bigoplus_{k=1}^{N}V_k
\end{equation}
if every $v\in V$ can be decomposed \textit{uniquely} as 
\begin{equation}
	v=v_1+\ldots+v_N\,, \hspace{0.5cm} \text{with} \hspace{0.5cm} v_k\in V_k \,.
\end{equation}

\begin{lemma}[Lem 4.7 \cite{dym2013linear}]
 The sum $V=V_1+\ldots+V_N$ is direct if and only if each set of nonzero vectors $\lbrace \, v_1\,,\,\ldots\,,\, v_N \rbrace$ with $v_k\in V_k$ for $k=1\,,\ldots\,,N$ is a linearly independent set of vectors. 
\end{lemma}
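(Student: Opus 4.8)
The plan is to reduce the stated equivalence to the standard criterion that a sum is direct precisely when the zero vector admits only the trivial decomposition, and then to translate that criterion into the language of linear independence. First I would record the elementary fact that uniqueness of the decomposition of \emph{every} $v\in V$ is equivalent to uniqueness of the decomposition of $0$: indeed, if $v=v_1+\ldots+v_N=v_1'+\ldots+v_N'$ with $v_k,v_k'\in V_k$, then $0=(v_1-v_1')+\ldots+(v_N-v_N')$ with each $v_k-v_k'\in V_k$, so triviality of the zero decomposition forces $v_k=v_k'$. Thus it suffices to relate the linear-independence hypothesis to the statement that $0=v_1+\ldots+v_N$ with $v_k\in V_k$ implies all $v_k=0$.

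For the forward implication I would assume the sum is direct and take any family of nonzero vectors $v_k\in V_k$ together with scalars $\alpha_k$ satisfying $\sum_k\alpha_k v_k=0$. Since $\alpha_k v_k\in V_k$, this exhibits $0$ as a decomposition over the subspaces $V_k$; by the triviality just established each $\alpha_k v_k=0$, and as $v_k\neq 0$ this gives $\alpha_k=0$. Hence the family is linearly independent, which is the desired conclusion.

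For the reverse implication I would assume that every family of nonzero vectors, one chosen from each $V_k$, is linearly independent, and deduce triviality of the zero decomposition. Suppose $0=v_1+\ldots+v_N$ with $v_k\in V_k$ not all zero, and set $S=\{\,k : v_k\neq 0\,\}$, which is nonempty. Completing the partial family $(v_k)_{k\in S}$ to a full family of nonzero vectors, by choosing an arbitrary nonzero $w_k\in V_k$ for each $k\notin S$, yields a linearly independent family; yet $\sum_{k\in S}1\cdot v_k+\sum_{k\notin S}0\cdot w_k=0$ is a nontrivial vanishing combination, a contradiction. Therefore all $v_k=0$, the zero decomposition is trivial, and by the reduction the sum is direct.

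The main obstacle is the bookkeeping in this last step rather than any deep idea: one must interpret ``set of nonzero vectors'' as an indexed family, so that coincidences between vectors lying in different $V_k$ cause no trouble, and one must handle the degenerate possibility that some $V_k=\{0\}$, in which case that index never appears among the nonzero components and may simply be discarded from the argument. Once these points are settled, the equivalence follows from the reduction to the zero decomposition together with the two short arguments above.
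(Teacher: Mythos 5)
The paper does not prove this lemma at all: it is stated with a citation to Dym's textbook, so your proposal is not paralleling or diverging from an argument in the text --- it supplies one. Your route is the standard one and its core is correct: directness is equivalent to the zero vector having only the trivial decomposition, the forward direction follows by absorbing scalars into $\alpha_k v_k \in V_k$, and the reverse direction by padding a nontrivial zero decomposition out to a full family of nonzero vectors and exhibiting a vanishing combination with coefficients $1$ on the nonempty set $S$. Your reading of ``set'' as an indexed family is also the right one, though if you want to deduce the family version from the literal set version you need one more line: a coincidence $v_i=v_j$ with $i\neq j$ forces a nonzero $w\in V_i\cap V_j$, and then the choices $v_i=w$, $v_j=2w$ give a genuinely two-element, linearly dependent set, so over $\mathbb{R}$ or $\mathbb{C}$ the two readings of the hypothesis are equivalent.

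The one real defect is your treatment of the degenerate case $V_k=\{0\}$, which cannot be repaired the way you suggest. If some $V_k=\{0\}$, there are no families of nonzero vectors with one entry in each subspace, so the independence hypothesis is vacuously true and yields nothing about the partial family indexed by $S$; ``discarding'' the index $k$ would require independence of families drawn from the remaining subspaces, which the stated hypothesis does not provide. Indeed, no argument can close this case, because the lemma as literally stated is false there: take $V_1=V_2=\mathrm{span}(e_1)$ and $V_3=\{0\}$, so the hypothesis holds vacuously, yet $e_1+(-e_1)+0=0$ exhibits two decompositions of zero and the sum is not direct. The honest fix is to add the (harmless, surely intended) standing assumption that every $V_k\neq\{0\}$; with that assumption in place your proof is complete.
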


\begin{proposition}[Prop 1.9 \cite{axler2023linear}] Suppose that $V_1$ and $V_2$ are subspaces of $V$. Then $V=V_1 \oplus V_2$ if and only if $V=V_1+V_2$ and $V_1\cap V_2 =\lbrace 0\rbrace$.
\end{proposition}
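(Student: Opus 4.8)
The plan is to prove the two implications of the biconditional separately, relying only on the definition of direct sum specialized from the $N$-fold case in Subsection \ref{subsec:definitions} to $N=2$: namely, $V=V_1\oplus V_2$ means that every $v\in V$ admits a \emph{unique} decomposition $v=v_1+v_2$ with $v_1\in V_1$ and $v_2\in V_2$. The whole argument is an exercise in unpacking this uniqueness condition and translating it into the pair of conditions $V=V_1+V_2$ and $V_1\cap V_2=\lbrace 0\rbrace$.

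First I would treat the forward direction. Assuming $V=V_1\oplus V_2$, the equality $V=V_1+V_2$ is immediate, since existence of a decomposition $v=v_1+v_2$ for every $v$ is precisely the statement that $V$ is the sum of $V_1$ and $V_2$. To obtain $V_1\cap V_2=\lbrace 0\rbrace$, I would take an arbitrary $w\in V_1\cap V_2$ and exhibit two decompositions of $w$ as an element of $V_1$ plus an element of $V_2$: writing $w=w+0$ (with $w\in V_1$, $0\in V_2$) and $w=0+w$ (with $0\in V_1$, $w\in V_2$). By the uniqueness built into the definition of the direct sum, these two decompositions must coincide, forcing $w=0$; hence the intersection is trivial.

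For the converse, I would assume $V=V_1+V_2$ and $V_1\cap V_2=\lbrace 0\rbrace$ and show that the decomposition guaranteed by the first hypothesis is unique. Given $v\in V$ with two decompositions $v=v_1+v_2=v_1'+v_2'$ where $v_1,v_1'\in V_1$ and $v_2,v_2'\in V_2$, I would rearrange to $v_1-v_1'=v_2'-v_2$. The left-hand side lies in $V_1$ and the right-hand side lies in $V_2$, so their common value lies in $V_1\cap V_2=\lbrace 0\rbrace$. This yields $v_1=v_1'$ and $v_2=v_2'$, which is exactly the uniqueness needed to conclude $V=V_1\oplus V_2$.

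There is no serious obstacle here; the result is elementary and the only point requiring minor care is to keep the logical bookkeeping clean, in particular to invoke the uniqueness clause of the direct-sum definition correctly in the forward direction and to exploit the triviality of the intersection at the single decisive step of the converse. The argument uses nothing beyond the definitions recalled in the excerpt, so I would not appeal to the preceding lemma on linear independence, although that lemma provides an alternative route to the converse if desired.
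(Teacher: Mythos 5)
Your proof is correct and complete; it is precisely the standard uniqueness argument from the cited reference (Axler, Prop.~1.9), which is also what the paper implicitly relies on, since the paper states this proposition with a citation rather than proving it. Both directions are handled cleanly --- uniqueness forces $V_1\cap V_2=\lbrace 0\rbrace$ via the two decompositions $w=w+0=0+w$, and conversely the trivial intersection upgrades existence to uniqueness --- so nothing further is needed.
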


\begin{proposition} Suppose $V$ admits the following direct sum decomposition: 
	\begin{equation}\label{eq:decomp0}
		V=\bigoplus_{k=1}^{N} V_k\,.
	\end{equation}
Then, for any $i\neq j$ holds $V_i\cap V_j=\lbrace 0\rbrace$.
\end{proposition}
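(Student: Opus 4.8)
The plan is to exploit directly the defining property of the direct sum recalled just above the statement: every $v\in V$ has a \emph{unique} decomposition $v=v_1+\ldots+v_N$ with $v_k\in V_k$. The whole argument reduces to producing two such decompositions of a single vector and comparing them slot by slot.

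Concretely, I would fix $i\neq j$ and take an arbitrary $w\in V_i\cap V_j$, with the goal of showing $w=0$. Viewing $w$ as an element of $V$, I would write down two of its decompositions into summands from the $V_k$. Since $w\in V_i$, one admissible decomposition places $w$ in the $i$-th slot and $0$ everywhere else; since $w\in V_j$, another admissible decomposition places $w$ in the $j$-th slot and $0$ everywhere else. Both are legitimate expressions of the form $\sum_k v_k$ with $v_k\in V_k$, because each summand indeed lies in its prescribed subspace.

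The key step is then to invoke uniqueness: the two decompositions of the same vector $w$ must coincide componentwise. Inspecting the $j$-th component, the first decomposition gives $0$ (as $j\neq i$) while the second gives $w$, forcing $w=0$; inspecting instead the $i$-th component yields the same conclusion. Hence $V_i\cap V_j=\lbrace 0\rbrace$. Since $w$ was arbitrary and $i,j$ were an arbitrary pair of distinct indices, this completes the argument. There is no real obstacle here — the only point demanding a moment of care is to make sure that uniqueness is applied to a \emph{single} vector presented in two ways, rather than being confused with mere existence of a decomposition; once that is phrased cleanly, the conclusion is immediate. (One could alternatively deduce the claim from the two-summand criterion of the preceding Proposition by grouping $\bigoplus_{k\neq i}V_k$ into one complement, but the direct uniqueness argument is shorter and self-contained.)
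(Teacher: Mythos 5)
Your proof is correct: fixing $w\in V_i\cap V_j$ and comparing the two decompositions $w = \cdots + \underbrace{w}_{i\text{-th slot}} + \cdots$ and $w = \cdots + \underbrace{w}_{j\text{-th slot}} + \cdots$ componentwise via the uniqueness clause in the definition of direct sum immediately forces $w=0$. Note that the paper states this proposition without proof (it appears in the appendix alongside cited standard results), so your argument simply supplies the canonical justification the paper implicitly relies on, using exactly the uniqueness property recalled just above the statement.
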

\paragraph{Projections.} A linear transformation $P\in\text{End}(V)$ is a \textit{projection} if it is idempotent: $P^2=P$. Alternatively, $P$ is a \textit{projection} if and only if for all $w\in\text{Im}(P)$ we have $P(w)=w$. The vector space $V$ decomposes as:
\begin{equation*}
	V=\text{Ker}(P)\oplus\text{Im}(P)\,,
\end{equation*}
where $P$ is the identity on $\text{Im}(P)$.

Let now $V$ be an complex vector space with a symmetric non-degenerate inner product $\langle \cdot \,,\, \cdot\rangle\,:\, V\times V\, \to \C$. For a linear transformation $T\in\End(V)$ the adjoint operator $T^{\,\dagger}$ is defined by the relation 
\begin{equation}
	\langle T^{\,\dagger}(v_1)\,,\, v_2\rangle=\langle v_1\,,\, T(v_2)\rangle \hspace{0.5cm} \text{for all } \, v_1\,,\,v_2\,\in \, V \,.
\end{equation}

An \textit{orthogonal projection} of $V$ onto a subspace $W$ is a projection $P \,:\, V\to W$ such that $\text{Im}(P)=W$ and $\text{Ker}(P)$ is its orthogonal complement. That is, for any $w\in W=\text{Im}(P)$
\begin{equation}
\langle w \,,\, u\rangle=0\, \hspace{0.5cm} \text{for all } u\in \text{Ker}(P)\,.
\end{equation}

\begin{lemma}\label{lem:orthogonal_selfadjoint}
A projection $P$ is orthogonal if and only if it is self-adjoint: $P=P^{\dagger}$.
\end{lemma}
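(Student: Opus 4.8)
The plan is to prove both implications directly from the orthogonal decomposition $V = \text{Ker}(P) \oplus \text{Im}(P)$, which holds for any projection, together with the defining feature that $P$ restricts to the identity on $\text{Im}(P)$. Throughout I would use the non-degeneracy of the symmetric bilinear form $\langle \cdot\,,\,\cdot \rangle$ and the defining relation of the adjoint, $\langle P^{\dagger}(v_1)\,,\, v_2 \rangle = \langle v_1\,,\, P(v_2) \rangle$ for all $v_1, v_2 \in V$.

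For the implication \emph{orthogonal $\Rightarrow$ self-adjoint}, I would take arbitrary $v_1, v_2 \in V$ and write their unique decompositions $v_i = w_i + u_i$ with $w_i \in \text{Im}(P)$ and $u_i \in \text{Ker}(P)$, so that $P(v_i) = w_i$. Expanding $\langle P(v_1)\,,\, v_2 \rangle = \langle w_1\,,\, w_2 + u_2 \rangle$ and $\langle v_1\,,\, P(v_2) \rangle = \langle w_1 + u_1\,,\, w_2 \rangle$ by bilinearity, the orthogonality hypothesis $\text{Im}(P) \perp \text{Ker}(P)$ annihilates the mixed terms $\langle w_1\,,\, u_2 \rangle$ and $\langle u_1\,,\, w_2 \rangle$, leaving both expressions equal to $\langle w_1\,,\, w_2 \rangle$. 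Hence $\langle P(v_1)\,,\, v_2 \rangle = \langle v_1\,,\, P(v_2) \rangle = \langle P^{\dagger}(v_1)\,,\, v_2 \rangle$ for all $v_2$, and non-degeneracy of the form forces $P(v_1) = P^{\dagger}(v_1)$ for every $v_1$, that is $P = P^{\dagger}$.

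For the converse \emph{self-adjoint $\Rightarrow$ orthogonal}, I would fix $w \in \text{Im}(P)$ and $u \in \text{Ker}(P)$ and simply slide $P$ across the pairing: since $P$ acts as the identity on its image, $\langle w\,,\, u \rangle = \langle P(w)\,,\, u \rangle = \langle w\,,\, P^{\dagger}(u) \rangle = \langle w\,,\, P(u) \rangle = 0$, where the middle equality is the adjoint relation, the third uses $P^{\dagger} = P$, and the last uses $u \in \text{Ker}(P)$. This shows $\text{Im}(P) \perp \text{Ker}(P)$, which is precisely the condition that the projection onto $\text{Im}(P)$ along $\text{Ker}(P)$ is orthogonal. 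The computations are entirely routine; the only point deserving care is that $\langle \cdot\,,\,\cdot \rangle$ is a symmetric bilinear (rather than Hermitian positive-definite) form, so the passage from equal pairings to equal operators rests squarely on non-degeneracy rather than on positivity. Ensuring this hypothesis is invoked at the right moment is the crux that makes the argument valid in the present, possibly indefinite, setting.
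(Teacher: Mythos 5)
Your proof is correct and follows essentially the same route as the paper's: both directions reduce to pairing $\mathrm{Im}(P)$ against $\mathrm{Ker}(P)$ (the paper parametrizes these as $P(v_1)$ and $v_2 - P(v_2)$, you use the direct-sum decomposition $v = w + u$, which is the same thing), with the converse obtained by sliding $P$ across the form and using $P^2 = P$. Your version is slightly more explicit than the paper's in invoking non-degeneracy to pass from equality of pairings to equality of operators, but the argument is identical in substance.
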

\begin{proof}
For the implication from left to right, let $P$ be an orthogonal projection. The latter is equivalent to the following: for any $v_1\,,\,v_2\,\in\, V$, $\langle P(v_1)\,,\, v_2-P(v_2)\rangle=\langle v_1-P(v_1)\,,\, P(v_2)\rangle=0$. This may be written as
\begin{equation}
\langle P(v_1)\,,\, v_2 \rangle = \langle v_1\,,\, P(v_2) \rangle =\langle P(v_1)\,,\, P(v_2) \rangle\,.
\end{equation}
Hence we conclude that $P=P^{\dagger}$.\smallskip

Let $P$ be a self-adjoint projector. Then for any $v_1\,,\,v_2\,\in\, V$ one has $\langle P(v_1)\,,\, v_2-P(v_2)\rangle=\langle v_1\,,\, P^{\dagger}(v_2)-P^{\dagger}P(v_2)\rangle=\langle v_1\,,\, (P-P^2)(v_2)\rangle=0$.
\end{proof}
\begin{lemma}\label{lem:orthogonal_direct_sum_decomposition}
Suppose $V$ admits the following direct sum decomposition: 
\begin{equation}\label{eq:decomp}
	V=\bigoplus_{k=1}^{N} V_k\,.
\end{equation}
Let $P_k$ be the projectors onto the subspaces $V_k$. Then, the following assertions are equivalent, 
\begin{itemize}
	\item[i)] The subspaces $V_k$ are orthogonal.
	\item[ii)] The projectors $P_k$ are self-adjoint.
\end{itemize}
Besides, if \eqref{eq:decomp} is an orthogonal decomposition then the projectors $P_k$ are pairwise orthogonal ($P_iP_j=0$ for $i\neq j$) and form a partition of unity in $\End(V)$: 
\begin{equation}
\id_{V}=\sum_{k=1}^N P_k \,.
\end{equation}
\end{lemma}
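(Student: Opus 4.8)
Suppose $V = \bigoplus_{k=1}^N V_k$, with $P_k$ the projection onto $V_k$ (annihilating the other summands). Show: (i) the $V_k$ are mutually orthogonal if and only if (ii) the $P_k$ are self-adjoint; and moreover that orthogonality implies the $P_k$ are pairwise orthogonal idempotents summing to $\id_V$.

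\medskip

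The plan is to reduce everything to the single-projector characterization already established in Lemma \ref{lem:orthogonal_selfadjoint}, namely that a projection is orthogonal (image orthogonal to kernel) if and only if it is self-adjoint. First I would fix notation: from the direct sum decomposition, every $v\in V$ has a unique expression $v=\sum_k v_k$ with $v_k\in V_k$, and $P_k$ is defined by $P_k(v)=v_k$. I would record at the outset the two immediate consequences of uniqueness of the decomposition: the $P_k$ are idempotent with $\operatorname{Im}(P_k)=V_k$, and they form a partition of unity $\sum_k P_k=\id_V$, since $\sum_k P_k(v)=\sum_k v_k=v$. The partition-of-unity statement therefore requires no hypothesis of orthogonality and can be dispatched directly from the definition of the direct sum.

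For the equivalence (i)$\Leftrightarrow$(ii), the key observation is that $\operatorname{Ker}(P_k)=\bigoplus_{j\neq k} V_j$. Hence $P_k$ is an \emph{orthogonal} projection in the sense of Lemma \ref{lem:orthogonal_selfadjoint} precisely when $V_k=\operatorname{Im}(P_k)$ is orthogonal to $\bigoplus_{j\neq k}V_j=\operatorname{Ker}(P_k)$. For the direction (i)$\Rightarrow$(ii): if all the $V_j$ are mutually orthogonal, then for each $k$ the subspace $V_k$ is orthogonal to every $V_j$ with $j\neq k$, hence orthogonal to their sum $\operatorname{Ker}(P_k)$; so each $P_k$ is an orthogonal projection and thus self-adjoint by Lemma \ref{lem:orthogonal_selfadjoint}. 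For the converse (ii)$\Rightarrow$(i): if each $P_k$ is self-adjoint, then by the same lemma each $P_k$ is orthogonal, so $V_k\perp\operatorname{Ker}(P_k)=\bigoplus_{j\neq k}V_j$; in particular $V_k\perp V_j$ for every $j\neq k$, which is exactly mutual orthogonality of the summands.

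Finally, for pairwise orthogonality of the projectors under hypothesis (i), I would argue directly from the definitions: for $v\in V$ and $i\neq j$, the vector $P_j(v)=v_j$ lies in $V_j\subseteq\operatorname{Ker}(P_i)$ (using $\operatorname{Ker}(P_i)=\bigoplus_{\ell\neq i}V_\ell$), so $P_iP_j(v)=P_i(v_j)=0$; hence $P_iP_j=0$. Note this last conclusion uses only the direct-sum structure and the identification of the kernel, not orthogonality \emph{per se}—so I would phrase it carefully. The main obstacle, such as it is, is conceptual rather than computational: correctly identifying $\operatorname{Ker}(P_k)$ with the complementary sum of summands and then applying Lemma \ref{lem:orthogonal_selfadjoint} componentwise. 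The only subtlety to watch is that ``the $V_k$ are orthogonal'' must be read as mutual (pairwise) orthogonality, and that passing from ``$V_k\perp\bigoplus_{j\neq k}V_j$ for all $k$'' back to ``$V_i\perp V_j$ for all $i\neq j$'' is an equivalence, which makes both directions of the biconditional symmetric and clean. No heavy machinery is needed; the whole proof is an assembly of Lemma \ref{lem:orthogonal_selfadjoint} with the bookkeeping of the direct-sum projections.
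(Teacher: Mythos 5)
Your proof is correct, and for the equivalence (i)$\Leftrightarrow$(ii) it follows the same route as the paper: everything is funneled through Lemma \ref{lem:orthogonal_selfadjoint}, with your explicit identification $\operatorname{Ker}(P_k)=\bigoplus_{j\neq k}V_j$ supplying exactly the details that the paper compresses into ``a direct consequence of Lemma \ref{lem:orthogonal_selfadjoint}.'' Where you genuinely diverge is the final clause. The paper proves $P_iP_j=0$ by an inner-product computation that uses both hypotheses at once, $0=\langle v_i,v_j\rangle=\langle P_i(v_i),v_j\rangle=\langle v_i,P_i(v_j)\rangle$, concludes $P_i(v_j)=0$, and only then gets the partition of unity by applying $P_i$ to the decomposition of $v$. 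You instead derive both $P_iP_j=0$ and $\sum_k P_k=\id_V$ purely from the direct-sum bookkeeping ($P_j(v)=v_j\in V_j\subseteq\operatorname{Ker}(P_i)$), never invoking orthogonality. Your route buys two things. First, it makes visible that the ``Besides'' clause holds for \emph{any} direct sum decomposition once $P_k$ is the projector determined by it, so the orthogonality hypothesis there is superfluous (you are right to flag that this should be phrased carefully, since the lemma as stated includes it). Second, it sidesteps a small implicit step in the paper's argument: inferring $P_i(v_j)=0$ from $\langle v_i,P_i(v_j)\rangle=0$ for all $v_i\in V_i$ requires the restriction of the (possibly indefinite, e.g.\ Lorentzian) form to $V_i$ to be non-degenerate --- which does follow from global non-degeneracy plus orthogonality of the decomposition, but is nowhere stated in the paper --- whereas your kernel argument needs no such input.
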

\begin{proof}
The first part is a direct consequence of Lemma \ref{lem:orthogonal_selfadjoint}. Let \eqref{eq:decomp} be an orthogonal direct sum decomposition: for any $v\in V$ their exists a unique decomposition 
\begin{equation}\label{eq:decomp_v}
v=v_1+\ldots+v_N
\end{equation}
with $v_k\in V_k$ such that $\langle v_i\,,\, v_j \rangle=0$ for $i\neq j$. Then for any $i\neq j$ one has 

\begin{equation}
	\begin{aligned}
		0 &=\langle v_i\,,\, v_j \rangle  \\						
		&=\langle P_i(v_i)\,,\, v_j \rangle\\	
		&=\langle v_i\,,\, P_i(v_j) \rangle\,.	
	\end{aligned}
\end{equation}
Hence, $P_i(v_j)=0$ and then $P_iP_j=0$. Applying $P_i$ to \eqref{eq:decomp_v} one has $P_i(v)=v_i$ and hence the projectors $P_k$ form a partition of unity.
\end{proof}

\paragraph{Tensor product.} The tensor product $V\otimes W$ of two vector spaces $V$ and $W$ over a field $\mathbb{F}$ is a vector space to which is associated a bilinear map $\otimes\,:\, V\times W\to V\otimes W$ that maps the pair $(v,w)$, to an element of $V\otimes W$ denoted $v\otimes w$. That is for any $\lambda\in \mathbb{F}$, $v\in V$ and $w\in W$ one has 
\begin{equation}
(\lambda v)\otimes w = v\otimes (\lambda w)=\lambda( v\otimes w )\,.
\end{equation}
Besides, for every bilinear map $h :\, V\times W \to Z$ where $Z$ is a vector space, there is a unique linear map $\tilde{h}\,:\, V\otimes W \to Z$, such that $h(v,w)=\tilde{h}(v\otimes w)$. This is called the universal property of the tensor product. 
\paragraph{Lagrange interpolation.} The Lagrange polynomial $P(x)$ associated with a data set $(x_i,y_i)$, $0<\, i \leqslant k$ is the unique polynomial of lowest degree $\leqslant k$ that interpolates the data set $(x_i,y_i)$. It is given by 
\begin{equation}
	P(x)=\sum_{j=1}^{k} P_j(x)\, \hspace{0.5cm} \text{with} \hspace{0.5cm} P_j(x)= y_j \, \prod_{\begin{array}{c}
			{\scriptstyle i=1}\\
			{\scriptstyle i\neq j}
	\end{array}}^{k}\frac{x_j-x}{x_j-x_i}\,.
\end{equation}
One can easily verify that $P(x)$ indeed interpolates $(x_i,y_i)$:
\begin{equation}
	P_j(x_i)=\delta_{ij} y_j \,\, \Rightarrow \,\,P(x_i)=y_i\,.
\end{equation}


\section{Representation theory}\label{sec:rep_theory}

The content of the following presentation can be found with more details for example in \cite{travis_lectures_2021,ceccherini2010representation} (see also the Ph.D manuscript \cite[Chap. $2$]{yegor_thesis}).

\subsection{Representation theory of a finite group $G$.}

\begin{theorem}[Maschke's theoreom]\label{theo:Maschke}
	Let $(\rho,V)$ be a representation of a finite group $G$. Then it can be decomposed into a direct sum of irreducible representations: 
	\begin{equation}
		V=\bigoplus_{i=1}^{N} V_i\,.
	\end{equation}
\end{theorem}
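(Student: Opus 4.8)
The plan is to prove complete reducibility by induction on $\dim V$, with the entire argument resting on a single structural fact: every $G$-invariant subspace $W\subseteq V$ admits a $G$-invariant complement $W'$, so that $V=W\oplus W'$ as representations of $G$. Granting this, the proof closes quickly. If $V$ is already irreducible (in particular if $\dim V\leqslant 1$), the claimed decomposition holds trivially with $N=1$. Otherwise there exists a proper nonzero invariant subspace $W$; applying the complement fact gives $V=W\oplus W'$ with both $W$ and $W'$ of strictly smaller dimension and stable under $\rho$. By the induction hypothesis each of $W$ and $W'$ decomposes as a direct sum of irreducibles, and concatenating these decompositions yields the desired $V=\bigoplus_{i=1}^{N}V_i$.

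The heart of the matter is thus the construction of the invariant complement, and here I would use the averaging trick adapted to the complex setting of this appendix. First I would equip $V$ with \emph{any} Hermitian inner product $\langle\cdot,\cdot\rangle_0$ and then average it over the group:
\begin{equation}
\langle v,w\rangle=\frac{1}{|G|}\sum_{g\in G}\langle \rho(g)v,\rho(g)w\rangle_0\,.
\end{equation}
This is again a positive-definite Hermitian form, and a direct reindexing of the sum shows $\langle \rho(h)v,\rho(h)w\rangle=\langle v,w\rangle$ for all $h\in G$, i.e.\ every $\rho(h)$ is unitary for $\langle\cdot,\cdot\rangle$. Given an invariant subspace $W$, I would then take its orthogonal complement $W^{\perp}$ with respect to this invariant inner product. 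For $u\in W^{\perp}$, $w\in W$ and $g\in G$ one computes $\langle \rho(g)u,w\rangle=\langle u,\rho(g)^{-1}w\rangle=\langle u,\rho(g^{-1})w\rangle=0$, since $\rho(g^{-1})w\in W$; hence $\rho(g)u\in W^{\perp}$ and $W^{\perp}$ is invariant. Standard linear algebra gives $V=W\oplus W^{\perp}$, which is the required decomposition into subrepresentations, with the associated projectors being the orthogonal (self-adjoint) ones of Lemma \ref{lem:orthogonal_direct_sum_decomposition}.

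The main obstacle, and the only place where the hypotheses are genuinely used, is precisely the averaging step: it requires both that $G$ be finite (so the sum is well defined) and that $|G|$ be invertible in the field (automatic over $\mathbb{C}$), for otherwise the averaged form need not be positive definite and the invariant complement can fail to exist. Everything else is formal bookkeeping: verifying the averaged form is a genuine inner product, checking unitarity, and running the dimension induction to termination. I would therefore organise the write-up as \emph{(i)} averaging and invariance of the inner product, \emph{(ii)} invariance of $W^{\perp}$ and the splitting $V=W\oplus W^{\perp}$, and \emph{(iii)} the induction on $\dim V$, flagging at step \emph{(i)} that the finiteness of $G$ is the essential ingredient.
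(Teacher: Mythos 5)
Your proof is correct, but note that the paper never proves this statement: Maschke's theorem appears in the appendix purely as background (Theorem \ref{theo:Maschke}), stated without proof, so there is no in-paper argument to compare against. What you give is the standard proof over $\mathbb{C}$ via Weyl's unitarian trick, and every step checks out: the averaged form is positive definite because each summand is non-negative and the $g=e$ term is positive; invariance follows from the reindexing $g\mapsto gh$; unitarity then makes $W^{\perp}$ invariant exactly by the computation you write; and the induction on dimension terminates since $W$ proper and nonzero forces both $W$ and $W^{\perp}$ to have strictly smaller dimension. Two small remarks. First, your citation of Lemma \ref{lem:orthogonal_direct_sum_decomposition} is not a perfect fit: that lemma and the paper's ``Projections'' paragraph are phrased for a \emph{symmetric} non-degenerate bilinear form, whereas your averaged form is Hermitian (sesquilinear); the splitting $V=W\oplus W^{\perp}$ for a positive-definite Hermitian form is standard linear algebra, so nothing breaks, but you should either prove that splitting directly or adapt the lemma rather than cite it as-is. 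Second, your closing remark on hypotheses is worth keeping and sharpening: the inner-product averaging is genuinely tied to $\mathbb{C}$ (or $\mathbb{R}$), while the paper's other version, Theorem \ref{theo:Maschke2} for group algebras over a field whose characteristic does not divide $|G|$, requires the variant of the averaging trick applied to a projection operator onto the invariant subspace rather than to an inner product; your argument proves the complex case stated in Theorem \ref{theo:Maschke}, which is all that is claimed here.
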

\noindent This property is called \textit{complete reducibility}, or \textit{semisimplicity}.\medskip

Let $(\rho,V)$ and $(\lambda,W)$ be two representations of $G$, and let $\text{Hom}(V,W)$ denote the set of all linear transformation $T\,:\, V\to W$. If the representation $\rho$ and $\lambda$ are such that 
\begin{equation}
	T\rho(g)=\lambda(g) T, \quad \text{for all } g\in G
\end{equation}
we say that $T$ \textit{intertwines} $\rho$ and $\lambda$. We denote by $\text{Hom}_{G}(V,W)$ the set of all such transformations which are sometimes refered to as \textit{$G$-linear map} or \textit{$G$-module homomorphisms}. Two representations $(\rho,V)$ and $(\lambda,W)$ are said to be $\textit{equivalent}$ if there exist $T\in\text{Hom}_{G}(V,W)$ which is bijective.

\begin{lemma}[Schur's lemma]
	Let $(\lambda,V)$ and $(\sigma,W)$ be two irreducible representations of $G$ over the complex field $\C$.
	\begin{itemize}
		\item[i)] If $T\in\textup{Hom}_{G}(V,W)$ then either $T=0$ or $T$ is an isomorphism (the representation $\rho$ and $\lambda$ are equivalent):
		$T$ is an invertible square matrix
		\item[ii)] $\textup{Hom}_{G}(V,V)=\mathbb{C}\id_{V}.$\\
		In other words any operator $T$ which commutes with $\rho(g)$ is proportional to the identity matrix.
	\end{itemize}        
\end{lemma}
\begin{proof}
	\begin{itemize}
		\item[\it{i)}] Let $T\in \text{Hom}_G(V,W)$, then $\text{Ker}(T)\subseteq V$ and $\text{Im}(T)\subseteq W$ are $G$-invariant. Besides, by irreducibility of $(\lambda,V)$ and $(\sigma,W)$ they have no non trivial $G$- invariant subspaces. So either $\text{Ker}(T)=V$ so that $T=0$ or $\text{Ker}(T)=\lbrace 0 \rbrace$ and necessarily $\text{Im}(T)=W$ so that $T$ is an isomorphism.
		\item[\it{ii)}] Let $T\in \text{Hom}_G(V,V)$. Since $\mathbb{C}$ is algebraically closed, $T$ has at least one eigenvalue. Let $v$ be an eigenvector of $T$ with eigenvalue $\alpha$. Then $(T-\alpha I_V)v=0$ and $\text{Ker}(T-\alpha I_V)$ is non trivial. Besides $T-\alpha I_V\in \text{Hom}_G(V,V)$ and by \textit{i)}
		$T=\alpha \id_V$. 
	\end{itemize}
\end{proof}
The algebra $\text{Hom}_{G}(V,V)\subseteq \text{End}(V)$ is called the \textit{centralizer algebra} of the representation $(\rho,V)$.
\begin{corollary}\label{cor:schur}
	If $(\rho_V,V)$ and $(\rho_W,W)$ are irreducible representations of G then 
	\begin{equation}
		\textup{dim}(\textup{Hom}_G(V,W))=\left\{
		\begin{array}{ll}
			1\quad \text{if } \rho_V\sim\rho_W,\\
			0\quad \text{if } \,\rho_V\nsim\rho_W.
		\end{array}
		\right.
	\end{equation}
\end{corollary}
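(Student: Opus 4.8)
The plan is to derive this statement directly from Schur's lemma, which has just been established in its two-part form. The two cases of the corollary correspond precisely to the two parts of the lemma, so the work is essentially bookkeeping: translating the existence/non-existence of an intertwiner into a statement about the dimension of the space of intertwiners.

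First I would dispose of the inequivalent case. Suppose $\rho_V\nsim\rho_W$. I would take an arbitrary $T\in\textup{Hom}_G(V,W)$ and apply part \textit{i)} of Schur's lemma, which asserts that $T$ is either $0$ or an isomorphism. An isomorphism would make $\rho_V$ and $\rho_W$ equivalent, contradicting the hypothesis; hence $T=0$. Since every element of $\textup{Hom}_G(V,W)$ is forced to be zero, the space is trivial and $\dim\bigl(\textup{Hom}_G(V,W)\bigr)=0$.

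Next I would treat the equivalent case $\rho_V\sim\rho_W$. By definition of equivalence there exists a bijective $T_0\in\textup{Hom}_G(V,W)$, so the space is at least one-dimensional. To show it is exactly one-dimensional, I would take any $T\in\textup{Hom}_G(V,W)$ and form the composite $T_0^{\shortminus1}\circ T$. A short check shows this composite lies in $\textup{Hom}_G(V,V)$, since both $T$ and $T_0^{\shortminus1}$ intertwine the respective representations. Part \textit{ii)} of Schur's lemma then gives $T_0^{\shortminus1}\circ T=\alpha\,\id_V$ for some $\alpha\in\mathbb{C}$, whence $T=\alpha\,T_0$. Thus every intertwiner is a scalar multiple of the fixed isomorphism $T_0$, so $\textup{Hom}_G(V,W)=\mathbb{C}\,T_0$ is one-dimensional.

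There is essentially no obstacle here: the only point requiring a moment's care is verifying that $T_0^{\shortminus1}\circ T$ genuinely commutes with the group action so that part \textit{ii)} applies, and that $T_0^{\shortminus1}$ is itself $G$-linear (which follows from $G$-linearity of $T_0$ together with invertibility). I would present these verifications as one or two lines of routine computation rather than belabor them, since the conceptual content is entirely contained in the preceding lemma.
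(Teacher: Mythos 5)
Your proof is correct and follows essentially the same route as the paper's: in the equivalent case the paper composes two intertwiners as $T_2\,T_1^{\shortminus 1}\in\textup{Hom}_G(W,W)$ and invokes part \textit{ii)} of Schur's lemma, which is the mirror image of your $T_0^{\shortminus 1}\circ T\in\textup{Hom}_G(V,V)$. Your treatment is in fact slightly more complete, since you also write out the inequivalent case (which the paper leaves implicit, it being immediate from part \textit{i)}) and verify the $G$-linearity of the inverse.
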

\begin{proof}
	Let $V$ and $W$ be two equivalent representation. Choose any two intertwining operators $T_1\,:\, V\to W$ and $T_2\,:\, V\to W$,
	\begin{equation}
		T_1\,\rho_V(g)=\rho_W(g) T_1\,,\hspace{0.5cm}T_2\,\rho_V(g)=\rho_W(g)\,T_2\,.
	\end{equation}
	Then,
	\begin{equation}
		T_2\,T_1^{\shortminus 1}\rho_W(g)=\rho_W(g) T_2\,T_1^{\shortminus 1}\,.
	\end{equation}
	That is  $T_2\,T_1^{\shortminus 1}=\alpha \id_W$, so $T_2=\alpha\, T_1$.
\end{proof}

\subsection{Representation theory of a finite dimensional algebra $\mathcal{A}$.}\label{subsec:rep_theo_Algebra}

In this section we assume that the finite dimensional algebra $\mathcal{A}$ is associative and unital. 

\paragraph{Modules.} In the context of algebras, a $\mathcal{A}$-module is a vector space acted upon by the algebra $\mathcal{A}$, in other words it is a representation of $\mathcal{A}$.

\paragraph{Semisimplicity.} A module isomorphic to a direct sum of irreducible modules is called \textit{semisimple}. An algebra $\mathcal{A}$ is semisimple if its left (or right) regular module is semisimple. 

\begin{theorem}[Maschke's theoreom]\label{theo:Maschke2}
	Let $G$ be a finite group and $\mathbb{F}$ a field whose characteristic does not divide the order of $G$. Then, the group algebra $\mathbb{F}G$ is semisimple. 
\end{theorem}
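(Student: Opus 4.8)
The plan is to prove the classical Maschke theorem by the averaging argument, organised so that semisimplicity of the regular module follows from a complementation property. The strategy is to reduce the statement to the following claim: for \emph{every} finite-dimensional $\mathbb{F}G$-module $V$ and every submodule $W\subseteq V$, there is a submodule $W'\subseteq V$ with $V = W\oplus W'$ (a $G$-invariant complement). Granting this claim, I would recover semisimplicity by induction on $\dim_{\mathbb{F}}V$: if $V$ is irreducible it is already a sum of irreducibles; otherwise pick a proper nonzero submodule $W$, split $V = W\oplus W'$, and apply the inductive hypothesis to $W$ and $W'$, each of strictly smaller dimension. Applying this to the left regular module $V = \mathbb{F}G$ shows it decomposes as a direct sum of irreducible modules, which is exactly the definition of semisimplicity of $\mathbb{F}G$ recalled in the excerpt.

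The heart of the argument is the construction of the complement via averaging. First I would choose, by elementary linear algebra, an $\mathbb{F}$-linear projection $\pi:V\to V$ with image $W$ and $\pi|_{W}=\mathrm{id}_W$ (this exists since $W$ is a subspace, ignoring the $G$-action for the moment). Writing $\rho$ for the representation of $G$ on $V$, I would then symmetrise $\pi$ over the group:
\[
\bar{\pi} \;=\; \frac{1}{|G|}\sum_{g\in G}\rho(g)\,\pi\,\rho(g)^{-1}.
\]
The scalar $\tfrac{1}{|G|}$ is well defined precisely because $\operatorname{char}(\mathbb{F})$ does not divide $|G|$, so $|G|$ is invertible in $\mathbb{F}$; this is the single place where the hypothesis on the characteristic enters, and it is the genuine obstruction one must respect (over a field whose characteristic divides $|G|$ the construction breaks down and the conclusion is false).

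It then remains to check three routine points: that $\bar{\pi}$ still has image contained in $W$ and restricts to the identity on $W$ (because $W$ is $G$-stable, each summand $\rho(g)\pi\rho(g)^{-1}$ maps into $W$, and on $W$ the average of identities is the identity), so that $\bar{\pi}$ is a projection onto $W$; that $\bar{\pi}$ is $G$-equivariant, i.e. $\bar{\pi}\,\rho(h)=\rho(h)\,\bar{\pi}$ for all $h\in G$, by the usual reindexing $g\mapsto h^{-1}gh$ of the summation; and consequently that $W' := \ker\bar{\pi}$ is a $G$-invariant subspace with $V = W\oplus W'$. Since $\bar{\pi}$ commutes with the $G$-action, $W'$ is a submodule, giving the desired complement and completing the claim. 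I expect the averaging step and the invertibility of $|G|$ to be the conceptual crux, while the equivariance and complementation verifications are short formal computations; the reduction to the regular module via induction on dimension is standard and I would state it briefly rather than belabour it.
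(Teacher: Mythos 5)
Your proposal is correct and complete in outline. The averaging operator $\bar{\pi}=\tfrac{1}{|G|}\sum_{g\in G}\rho(g)\,\pi\,\rho(g)^{-1}$ is well defined exactly because $|G|$ is invertible in $\mathbb{F}$; your three verifications (image contained in $W$ and restriction to $W$ being the identity, equivariance by reindexing $g\mapsto h^{-1}gh$, and $\ker\bar{\pi}$ being a complementary submodule) are all sound; and the induction on dimension applied to the left regular module $\mathbb{F}G$ (finite dimensional since $G$ is finite) gives precisely the notion of semisimplicity the paper uses, namely that the regular module decomposes into a direct sum of irreducible modules. For comparison with the paper: there is nothing to compare, because the paper does not prove this theorem at all — it is stated in the appendix as classical background, alongside the representation-theoretic form (Theorem \ref{theo:Maschke}), with the reader implicitly referred to standard texts such as those cited there. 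Your argument is the standard complementation-by-averaging proof, and the one conceptual point worth keeping explicit is the one you already isolated: the hypothesis on $\operatorname{char}(\mathbb{F})$ enters solely through the invertibility of $|G|$, and without it the construction (and the theorem) fails.
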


\paragraph{Ideals.} A left ideal of an algebra $\mathcal{A}$ is a subalgebra $I\subseteq \mathcal{A}$ such that $a\,I\subseteq I$ for all $a\in A$. Similarly, a right ideal of an algebra $\mathcal{A}$ is a subalgebra $I\subseteq \mathcal{A}$ such that $I\, a\subseteq I$ for all $a\in \mathcal{A}$. A two-sided ideal is a subspace that is both a left and right ideal. 

\begin{theorem} Any finite dimensional semisimple algebra $\mathcal{A}$ is a direct sum of its minimal left (or right) ideals $I_{i}$ ($i=1\,,\ldots\,,N$)
\begin{equation}\label{eq:left_ideals_decomp}
\mathcal{A}=\bigoplus_{i=1}^{N} \, I_i\,.
\end{equation}
\end{theorem}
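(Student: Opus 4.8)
The plan is to reduce the statement directly to the definition of semisimplicity applied to the left regular module, together with the elementary dictionary between submodules of the regular module and left ideals of $\mathcal{A}$. No genuinely hard analysis is needed here; the work is entirely in making the correspondence precise and checking that the abstract decomposition furnished by semisimplicity is realized \emph{internally} inside $\mathcal{A}$.

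First I would recall that the left regular module is the vector space $\mathcal{A}$ equipped with the action of $\mathcal{A}$ on itself by left multiplication, and observe that a subspace $I\subseteq\mathcal{A}$ is a submodule of this module if and only if $a\,I\subseteq I$ for all $a\in\mathcal{A}$, that is, precisely when $I$ is a left ideal of $\mathcal{A}$. Under this correspondence, a submodule is irreducible exactly when it is a nonzero left ideal containing no proper nonzero left ideal, i.e. a minimal left ideal. This is the key translation step, and I would state it as a short lemma so that the rest of the argument is purely formal.

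Next, by hypothesis $\mathcal{A}$ is semisimple, which by the definition recalled above means that its left regular module is a semisimple module, hence a direct sum of irreducible submodules. Applying the dictionary of the previous step, this is exactly a decomposition $\mathcal{A}=\bigoplus_i I_i$ of $\mathcal{A}$ into minimal left ideals. Since $\mathcal{A}$ is finite dimensional and each $I_i$ is nonzero, the summands are a finite family of linearly independent subspaces, so the number of summands is bounded by $\dim\mathcal{A}$ and the sum is finite, giving $\mathcal{A}=\bigoplus_{i=1}^{N} I_i$. The same argument applied to the right regular module and to right ideals yields the parenthetical "right" version of the statement.

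The main obstacle — more a point requiring care than a true difficulty — is ensuring that the decomposition produced by semisimplicity is an \emph{internal} direct sum of honest left ideals of $\mathcal{A}$, and not merely an isomorphism with an external direct sum of abstract simple modules. I would handle this by invoking complete reducibility in its internal form: the regular module, being semisimple, is the internal direct sum of its irreducible submodules, each of which is a minimal left ideal by the lemma. This is precisely the content needed to identify the summands $I_i$ as subspaces of $\mathcal{A}$ that are simultaneously minimal left ideals and direct summands.
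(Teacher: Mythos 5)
Your proof is correct and coincides with the approach the paper itself has in mind: the theorem is stated in the appendix without proof, but the paper's definition of a semisimple algebra via its regular module, together with its remark (in the chapter on $\mathbb{C}\mathfrak{S}_n$) that ideals are exactly the submodules of the regular module and minimal ideals its irreducible submodules, is precisely your dictionary. The internal-versus-external subtlety you flag is also handled correctly, since a module isomorphism onto an external direct sum of irreducibles transports it to an internal direct sum of irreducible submodules, i.e.\ minimal left ideals, and finite dimensionality bounds the number of summands.
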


We denote by $\Irr(\mathcal{A})$ the set of all classes of isomorphic irreducible $\mathcal{A}$-modules.
\begin{corollary} Let $V^{\lambda}$, with $\lambda\in\Irr(\mathcal{A})$, denote the pairwise inequivalent irreducible modules over a finite dimensional semisimple algebra $\mathcal{A}$. Then $\mathcal{A}$ admits the following decomposition with $m_\lambda\geqslant 1$: 
\begin{equation}
\mathcal{A}\cong \bigoplus_{\lambda\in \Irr(\mathcal{A})} \left(V^{\lambda}\right)^{\oplus m_\lambda}\,,
\end{equation}
where $m_\lambda$ is called the multiplicity of $V^{\lambda}$ in $\mathcal{A}$.
\end{corollary}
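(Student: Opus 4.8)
The plan is to derive the corollary directly from the preceding theorem, which furnishes a decomposition $\mathcal{A}=\bigoplus_{i=1}^{N}I_i$ of $\mathcal{A}$ into minimal left ideals, together with the standard fact that every irreducible module of a finite dimensional algebra occurs in its regular module. First I would observe that a minimal left ideal $I\subseteq\mathcal{A}$ is precisely an irreducible submodule of the \emph{left regular} $\mathcal{A}$-module: the left-ideal structure on $I$ is exactly the $\mathcal{A}$-module structure induced by left multiplication, and minimality of $I$ among left ideals is the same statement as irreducibility of $I$ as a submodule of $\mathcal{A}$. Hence the decomposition supplied by the theorem on minimal left ideals is already a decomposition of the regular module into irreducible summands.

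Next I would group the summands by isomorphism type. Each $I_i$, being irreducible, is isomorphic to exactly one representative $V^{\lambda}$ with $\lambda\in\Irr(\mathcal{A})$; setting $m_\lambda$ equal to the number of indices $i$ for which $I_i\cong V^{\lambda}$ then yields the asserted isomorphism $\mathcal{A}\cong\bigoplus_{\lambda\in\Irr(\mathcal{A})}(V^{\lambda})^{\oplus m_\lambda}$. This step is pure bookkeeping: it requires nothing beyond the definition of the multiplicity $m_\lambda$ and the fact that the $V^{\lambda}$ form a complete list of pairwise inequivalent irreducibles.

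The one genuine point to establish is that $m_\lambda\geqslant 1$ for every $\lambda\in\Irr(\mathcal{A})$, that is, that no irreducible class is missing from the regular module. For this I would fix any irreducible module $V^{\lambda}$ and any nonzero $v\in V^{\lambda}$, and consider the map $\varphi:\mathcal{A}\to V^{\lambda}$ given by $\varphi(a)=a\cdot v$. This is an $\mathcal{A}$-module homomorphism whose image is a nonzero submodule of $V^{\lambda}$, hence all of $V^{\lambda}$ by irreducibility, so $V^{\lambda}$ is a quotient of the regular module. Since $\mathcal{A}$ is semisimple, the submodule $\mathrm{Ker}(\varphi)\subseteq\mathcal{A}$ admits a complement; restricting $\varphi$ to that complement shows that $V^{\lambda}$ is isomorphic to a direct summand of $\mathcal{A}$. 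Comparing with the decomposition above then forces $V^{\lambda}$ to occur among the $I_i$, i.e. $m_\lambda\geqslant 1$.

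The main obstacle is precisely this last surjectivity-and-splitting argument: the decomposition into irreducibles and the subsequent grouping are immediate, whereas the claim $m_\lambda\geqslant 1$ is the only place where semisimplicity enters in an essential way, being used to convert ``$V^{\lambda}$ is a quotient of $\mathcal{A}$'' into ``$V^{\lambda}$ is a summand of $\mathcal{A}$''. I would therefore expect to invoke complete reducibility (Maschke-type, Theorem \ref{theo:Maschke2}, in the form that every submodule of a semisimple module has a complement) at this point, rather than anything beyond the results already available in the excerpt.
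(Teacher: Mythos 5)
Your proposal is correct and takes essentially the route the paper intends: the corollary is stated in the appendix without an explicit proof, as an immediate consequence of the preceding theorem that $\mathcal{A}=\bigoplus_{i=1}^{N}I_i$ decomposes into minimal left ideals (which, as you note, are exactly the irreducible submodules of the regular module), followed by grouping the summands by isomorphism class. The only substantive point, $m_\lambda\geqslant 1$, is a standard fact the paper takes for granted (it is asserted in the body as ``the regular module over a semisimple algebra contains all its pairwise inequivalent irreducible modules''), and your argument for it via the evaluation map $a\mapsto a\cdot v$, complete reducibility to split off $\mathrm{Ker}(\varphi)$, and Schur's lemma to match the resulting summand with one of the $I_i$, is correct.
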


\paragraph{Complete set of pairwise orthogonal primitive idempotents.} Let $\mathcal{A}$ be a unital finite dimensional algebra over a field $\mathbb{F}$. If $e_1$ and $e_2$ are idempotents we say that they are \textit{orthogonal} if and only if $e_1e_2=e_2e_1=0$. An idempotent $e\in \mathcal{A}$ is called \textit{primitive} if and only if $e$ cannot be written as a sum of non zero idempotents $e_1$ and $e_2$ in $\mathcal{A}$.\medskip

A \textit{complete set of pairwise orthogonal primitive idempotents} is a set $\lbrace e_i\rbrace_{i\in I}$ of elements satisfying $\sum_{i\in I}e_i=1_\mathcal{A}$ with $e_ie_j=\delta_{ij}e_i$ for $i,j\in I$, with $|I|$ maximal. The left ideals in \eqref{eq:left_ideals_decomp} can be realized as $I_i=\mathcal{A}\, e_i$.

\paragraph{Isotypic components and isotypic decomposition (\cite[Def. $1.2.9$]{ceccherini2010representation}, \cite[Def. $7.2.7$]{ceccherini2010representation}).} Let $\mathcal{A}$ be a finite dimensional algebra and let $V$ be an $\mathcal{A}$-module. We denote by $\Irr(\mathcal{A})$ the set of all classes of isomorphic irreducible $\mathcal{A}$-modules, and by $W^\lambda$ a representative of the class $\lambda\in \Irr(\mathcal{A})$ appearing in $V$ with multiplicity $m_\lambda$.\smallskip

For each $\lambda\in \Irr(\mathcal{A})$ the $\lambda$-isotypic component of $V$ is defined as  
\begin{equation}
	\mathcal{W}^\lambda=m_\lambda W^\lambda=\left(W^\lambda\right)^{\oplus m_\lambda}\,,
\end{equation}
while the $\mathcal{A}$-isotypic decomposition of $V$ is defined as 
\begin{equation}
V=\bigoplus_{\lambda\in \Irr(\mathcal{A})}\mathcal{W}^\lambda\,.
\end{equation}

\paragraph{Matrix rings, matrix algebras.} A matrix ring is a set of matrices with entries in a ring $R$ that form a ring under matrix addition and matrix multiplication. The set of $\Dim \times \Dim$ matrices with entries in $R$ is a matrix ring denoted $M_\Dim(R)$. When $R$ is a commutative ring, the matrix ring $M_{\Dim}(R)$ is an associative algebra over $R$ and $M_{\Dim}(R)$ is called a matrix algebra.\medskip

Consider the algebra of complex matrices $M_\Dim(\C)$ of size $\Dim\times\Dim$.  Any two vectors in $\C^\Dim$ are related by matrix multiplication of an element in  $M_\Dim(\C)$. Hence, $\C^\Dim$ is an irreducible $M_\Dim(\C)$-module. 

\begin{theorem}[Wedderburn-Artin Theorem]\label{theo:Wedderburn}
	Let  $\mathcal{A}$ be a finite dimensional algebra over $\C$, and let $V^\lambda$ with $\lambda\in \Irr(\mathcal{A})$ denote all pairwise inequivalent irreducible $\mathcal{A}$-modules. Then the following properties are equivalent:
	\begin{itemize}
	\item[i)] $\mathcal{A}$ is semisimple.
	\item[ii)] $\mathcal{A}$ is isomorphic to a direct sum of full matrix algebras
	\begin{equation}
	\mathcal{A}\cong \bigoplus_{\lambda\in \Irr(\mathcal{A})} M_{\Dim_\lambda}(\C)\,, \hspace{0.5cm} \text{with}\hspace{0.5cm} \Dim_\lambda=\dim(V^\lambda)\,.
	\end{equation}
	\item[iii)] \begin{equation}
	\mathcal{A}\cong \bigoplus_{\lambda\in \Irr(\mathcal{A})} \End(V^\lambda)\,.
	\end{equation}
	\end{itemize}
\end{theorem}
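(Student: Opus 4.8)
The plan is to prove the three equivalences by first disposing of the easy implications and the trivial equivalence, then concentrating the real work on the direction $(i)\Rightarrow(iii)$, which carries all the structural content. Throughout I fix a basis in each $V^\lambda$ so that $\End(V^\lambda)\cong M_{\Dim_\lambda}(\C)$ as $\C$-algebras with $\Dim_\lambda=\dim(V^\lambda)$. This identification is canonical up to inner automorphism and makes $(ii)$ and $(iii)$ literally the same statement, so $(ii)\Leftrightarrow(iii)$ needs no argument and I will only have to produce one of them.

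For $(ii)\Rightarrow(i)$ I would argue that each full matrix algebra $M_{d}(\C)$ is semisimple: viewing $M_{d}(\C)$ as a right module over itself, the subspaces of matrices supported on a single row are submodules, each isomorphic to the irreducible row module $\C^{d}$, and their direct sum is all of $M_{d}(\C)$. Hence the regular module of $M_{d}(\C)$ is semisimple. Since the regular module of a finite direct sum of algebras is the direct sum of the regular modules, and a direct sum of semisimple modules is semisimple, $\bigoplus_\lambda M_{\Dim_\lambda}(\C)$ is semisimple, which is exactly $(i)$.

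The core step is $(i)\Rightarrow(iii)$. Assuming $\mathcal{A}$ semisimple, its right regular module decomposes (Maschke, Theorem~\ref{theo:Maschke2}) into a direct sum of irreducibles, and every class in $\Irr(\mathcal{A})$ occurs: for $0\neq v\in V^\lambda$ the map $a\mapsto v\cdot a$ is a surjective module homomorphism $\mathcal{A}\to V^\lambda$, so $V^\lambda$ is a quotient and hence, by semisimplicity, a summand. Writing $\mathcal{A}\cong\bigoplus_{\lambda\in\Irr(\mathcal{A})}(V^\lambda)^{\oplus m_\lambda}$ with $m_\lambda\geqslant 1$, I would then compute the algebra $\End_\mathcal{A}(\mathcal{A})$ of right-module endomorphisms in two ways. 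On one hand, any $\phi\in\End_\mathcal{A}(\mathcal{A})$ satisfies $\phi(a)=\phi(\id)\,a$, so $\phi$ is left multiplication by $\phi(\id)$, and the assignment $b\mapsto L_b$ is an algebra isomorphism $\mathcal{A}\xrightarrow{\sim}\End_\mathcal{A}(\mathcal{A})$. On the other hand, applying Schur's lemma in the form of Corollary~\ref{cor:schur}, so that $\Hom_\mathcal{A}(V^\lambda,V^\mu)=\delta_{\lambda\mu}\,\C$, makes all off-diagonal blocks of the isotypic decomposition vanish and identifies $\End_\mathcal{A}((V^\lambda)^{\oplus m_\lambda})\cong M_{m_\lambda}(\C)$, whence $\End_\mathcal{A}(\mathcal{A})\cong\bigoplus_\lambda M_{m_\lambda}(\C)$. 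Combining the two computations yields $\mathcal{A}\cong\bigoplus_\lambda M_{m_\lambda}(\C)$. Finally, the irreducible modules of $\bigoplus_\lambda M_{m_\lambda}(\C)$ are the column modules $\C^{m_\lambda}$, so the irreducible $\mathcal{A}$-module $V^\lambda$ has $\dim(V^\lambda)=m_\lambda$, i.e. $m_\lambda=\Dim_\lambda$; this gives $\mathcal{A}\cong\bigoplus_\lambda M_{\Dim_\lambda}(\C)\cong\bigoplus_\lambda\End(V^\lambda)$, establishing $(iii)$.

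The main obstacle I anticipate is the bookkeeping in this last step rather than any conceptual depth. One must justify the isomorphism $\End_\mathcal{A}(\mathcal{A})\cong\mathcal{A}$ with the correct handedness — working with the \emph{right} regular module is what avoids introducing the opposite algebra, which would otherwise intervene through the anti-homomorphism property of right multiplications — verify that the Schur block reduction genuinely produces $M_{m_\lambda}(\C)$ and not a transposed variant, and close the loop by reading $m_\lambda=\Dim_\lambda$ off the matrix structure. Keeping all three identifications mutually consistent, so that the final two decompositions are indexed by the same $\Irr(\mathcal{A})$ with matching dimensions, is where the care is required.
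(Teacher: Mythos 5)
The paper offers no proof of this theorem at all — immediately below the statement it defers to the cited lecture notes — so there is no internal argument to compare yours against; your proposal has to stand on its own, and it essentially does. The route you take (identify $\mathcal{A}\cong\End_\mathcal{A}(\mathcal{A})$ for the \emph{right} regular module via left multiplications, reduce $\End_\mathcal{A}\bigl(\bigoplus_\lambda (V^\lambda)^{\oplus m_\lambda}\bigr)$ by Schur's lemma to $\bigoplus_\lambda M_{m_\lambda}(\C)$, then match $m_\lambda$ with $\dim V^\lambda$) is one of the two standard proofs, the other being the direct map $a\mapsto\bigoplus_\lambda\rho_\lambda(a)$ into $\bigoplus_\lambda\End(V^\lambda)$ with surjectivity from a density or dimension argument. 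Your handedness remark is exactly right, and the implication cycle $(i)\Rightarrow(iii)\Leftrightarrow(ii)\Rightarrow(i)$ is logically complete.

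Three repairs, two cosmetic and one substantive. Cosmetically: the decomposition of the regular module in $(i)\Rightarrow(iii)$ is not an appeal to Maschke — Theorem~\ref{theo:Maschke2} concerns group algebras $\mathbb{F}G$, whereas for a general algebra this decomposition \emph{is} the paper's definition of semisimplicity, i.e.\ hypothesis $(i)$ itself; likewise the Schur input should be the algebra version of Schur's lemma stated in appendix~\ref{subsec:Double Centralizer Theorem}, not the group-theoretic Corollary~\ref{cor:schur} (and, since you work with right modules throughout, the simple modules of $M_d(\C)$ are the \emph{row} modules, as in your own $(ii)\Rightarrow(i)$ step, not column modules — immaterial for dimensions). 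Substantively: the step ``the irreducibles of $\bigoplus_\lambda M_{m_\lambda}(\C)$ have dimension $m_\lambda$, \emph{so} $\dim V^\lambda=m_\lambda$'' silently assumes that the isomorphism $\phi:\mathcal{A}\xrightarrow{\sim}\bigoplus_\lambda M_{m_\lambda}(\C)$ matches $V^\lambda$ with the $\lambda$-block; a priori it could permute the blocks. You flag that care is needed here but do not close it. The clean fix is to transport the regular module: the right regular module of $\bigoplus_\lambda M_{m_\lambda}(\C)$ is $\bigoplus_\lambda R_\lambda^{\oplus m_\lambda}$ with $R_\lambda$ the row module, $\dim R_\lambda=m_\lambda$; pulling back along $\phi$ and comparing with $\bigoplus_\lambda (V^\lambda)^{\oplus m_\lambda}$, uniqueness of the isotypic decomposition gives a bijection $\sigma$ of $\Irr(\mathcal{A})$ with $V^{\sigma(\lambda)}\cong \phi^{*}R_\lambda$ \emph{and} $m_{\sigma(\lambda)}=m_\lambda$, whence $\dim V^{\sigma(\lambda)}=m_\lambda=m_{\sigma(\lambda)}$, i.e.\ $\dim V^\mu=m_\mu$ for every $\mu$ regardless of $\sigma$. (Alternatively, the central idempotent of the $\lambda$-block generates precisely the $\lambda$-isotypic two-sided ideal, which pins $\sigma=\mathrm{id}$.) With that paragraph added, your proof is complete.
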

See \cite{travis_lectures_2021} for the proof.

\subsection{Double Centralizer Theorem and Schur-Weyl dualities}\label{subsec:Double Centralizer Theorem}

\begin{lemma}[Schur's lemma]
	Let $(\lambda,V)$ and $(\sigma,W)$ be two irreducible left modules over a finite dimensional $\C$-algebra $\mathcal{A}$.
	\begin{itemize}
		\item[i)] If $T\in\textup{Hom}_{\mathcal{A}}(V,W)$ then either $T=0$ or $T$ is an isomorphism.
		\item[ii)] Suppose $V$ is finite dimensional. If  $T\in \textup{Hom}_{\mathcal{A}}(V,V)$ then $T=\alpha \id_{V}$ for some $\alpha\in \C$.
	\end{itemize}        
\end{lemma}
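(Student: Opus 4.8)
The final statement to prove is the version of Schur's lemma for modules over a finite-dimensional $\C$-algebra $\mathcal{A}$, in two parts.

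\textbf{The plan.} The proof is a direct transcription of the group-theoretic Schur's lemma proved earlier in the excerpt; the only structural difference is that $\mathcal{A}$-module homomorphisms intertwine the action of the algebra $\mathcal{A}$ rather than that of a group $G$. The key observation throughout is that for any $T\in \textup{Hom}_{\mathcal{A}}(V,W)$, both $\textup{Ker}(T)\subseteq V$ and $\textup{Im}(T)\subseteq W$ are $\mathcal{A}$-submodules. This follows because $T(a\cdot v)=a\cdot T(v)$ for all $a\in\mathcal{A}$ and $v\in V$: if $v\in\textup{Ker}(T)$ then $T(a\cdot v)=a\cdot T(v)=0$, so $a\cdot v\in\textup{Ker}(T)$, and similarly any element of $\textup{Im}(T)$ is of the form $T(v)$ and $a\cdot T(v)=T(a\cdot v)\in\textup{Im}(T)$.

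\textbf{Part i).} First I would let $T\in\textup{Hom}_{\mathcal{A}}(V,W)$ be nonzero and invoke the above observation. Since $V$ is irreducible, its only $\mathcal{A}$-submodules are $\lbrace 0\rbrace$ and $V$; hence $\textup{Ker}(T)$ is either $V$ or $\lbrace 0\rbrace$. If $\textup{Ker}(T)=V$ then $T=0$, contradicting our assumption, so $\textup{Ker}(T)=\lbrace 0\rbrace$ and $T$ is injective. Likewise, $\textup{Im}(T)$ is an $\mathcal{A}$-submodule of the irreducible module $W$; since $T\neq 0$ we have $\textup{Im}(T)\neq\lbrace 0\rbrace$, forcing $\textup{Im}(T)=W$, so $T$ is surjective. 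Therefore $T$ is an isomorphism.

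\textbf{Part ii).} For the second part, I would use that $\C$ is algebraically closed together with the assumption that $V$ is finite dimensional. Take $T\in\textup{Hom}_{\mathcal{A}}(V,V)$. As a linear operator on a finite-dimensional complex vector space, $T$ has at least one eigenvalue $\alpha\in\C$. Then $T-\alpha\,\id_{V}$ again lies in $\textup{Hom}_{\mathcal{A}}(V,V)$ (since $\id_V$ is trivially $\mathcal{A}$-linear and $\textup{Hom}_{\mathcal{A}}(V,V)$ is closed under scalar multiplication and subtraction), and it has nontrivial kernel, namely the $\alpha$-eigenspace. By part i) applied with $W=V$, the map $T-\alpha\,\id_{V}$ must be either zero or an isomorphism; as it is not injective, it cannot be an isomorphism, so $T-\alpha\,\id_{V}=0$, i.e. $T=\alpha\,\id_{V}$. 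The only genuinely nontrivial input here is the existence of an eigenvalue, which is precisely where algebraic closedness of $\C$ and finite-dimensionality of $V$ enter; everything else is a routine repetition of the argument already carried out for finite groups. Accordingly, I do not anticipate a real obstacle, since the proof is essentially identical to the one given above with ``$g\in G$'' replaced by ``$a\in\mathcal{A}$''.
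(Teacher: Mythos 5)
Your proof is correct and is essentially the argument the paper itself uses: the paper states this algebra version of Schur's lemma without proof, but it is a verbatim transcription of the proof it gives for the group-theoretic version (kernel and image are invariant subspaces/submodules for part i); existence of an eigenvalue over $\C$ plus part i) applied to $T-\alpha\,\id_V$ for part ii)), which is exactly what you have done. No gaps.
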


\begin{example}
Let $T \in M_\Dim(\C)$ be such that
\begin{equation}
	T\,M=M\, T \hspace{1cm} \text{for all $M\in M_\Dim(\C) $}
\end{equation} 
Then by Schur's lemma $T$ is proportional to the identity matrix. 
\end{example}

\begin{theorem}[Double Centralizer Theorem]\label{theo:double_centralizer_Theorem}
	Given a finite dimensional vector space $V$, let $A$ be a completely reducible (semisimple) subalgebra of $\,\textup{End}(V)$, and $\,B=\textup{End}_A(V)$ the centralizer algebra of $A$ in $\textup{End}(V)$. Then :
	\begin{enumerate}
		\item  $B$ is semisimple 
		\item  $A$ is the centralizer algebra of $B$ in $\textup{End}(V)$: $A=\textup{End}_B(V)$
		\item As a  $B\otimes A$-module (that is under the joint action of $B$ on the left and $A$ on the right) we have the decomposition 
		\begin{equation}
			V\cong \bigoplus_{i} W_{i} \otimes U_{i},
		\end{equation}
		where $U_{i}$ are irreducible modules of $A$ and each $W_{i}\cong \textup{Hom}_{A}(U_i,V)$ is either an irreducible representation of $B$ or zero. Furthermore, the non zero $W_i$'s are all the simple modules of B.\\
	\end{enumerate}
\end{theorem}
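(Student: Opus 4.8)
The plan is to reduce the whole statement to the isotypic decomposition of $V$ as a module over the semisimple algebra $A$, and then extract all three assertions from Schur's lemma together with the Wedderburn-Artin Theorem \ref{theo:Wedderburn}. First I would use complete reducibility of $A$ to decompose $V$ into its $A$-isotypic components, collecting the $m_i$ mutually isomorphic copies of each pairwise inequivalent irreducible $A$-module $U_i$ occurring in $V$. Fixing such an isomorphism, I write
\[
V \;\cong\; \bigoplus_i W_i \otimes U_i\,,
\]
where $W_i$ is a multiplicity space with $\dim W_i = m_i$, and $A$ acts only on the $U_i$ tensor factors. This normal form is the backbone of everything that follows.

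For assertion (1), I would compute the centralizer directly. By Schur's lemma (appendix \ref{subsec:Double Centralizer Theorem}), $\Hom_A(U_i,U_j)=0$ for $i\neq j$ and $\Hom_A(U_i,U_i)=\C\,\mathrm{id}$, where algebraic closedness of $\C$ is essential. Hence any $A$-endomorphism of $V$ preserves each isotypic component and acts as an arbitrary linear map on the multiplicity space, giving
\[
B=\End_A(V)\;\cong\;\bigoplus_i \End(W_i)\;\cong\;\bigoplus_i M_{m_i}(\C)\,.
\]
By the Wedderburn-Artin Theorem \ref{theo:Wedderburn}, this direct sum of full matrix algebras is semisimple, which proves (1); moreover its simple modules are exactly the $W_i$, a fact I record for later use.

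For assertion (2), the inclusion $A\subseteq \End_B(V)$ is immediate from the definition $B=\End_A(V)$. For the reverse inclusion I would now regard $V\cong\bigoplus_i W_i\otimes U_i$ as a $B$-module, with $B$ acting on the $W_i$ factors and the $U_i$ serving as multiplicity spaces; running the same Schur-lemma computation for $B$ yields $\End_B(V)\cong \bigoplus_i \End(U_i)$. On the other hand, semisimplicity of $A$ and faithfulness of its action on $V$ force the image of $A$ in $\End(V)$ to be precisely the diagonal $\bigoplus_i \End(U_i)$ (the $U_i$ being all simple $A$-modules that occur in $V$, with $A$ surjecting onto each block). Matching the two descriptions gives $A=\End_B(V)$. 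Assertion (3) then follows at once: the normal form is exactly the asserted $B\otimes A$-module decomposition, and the identification $W_i\cong\Hom_A(U_i,V)$ comes from $\Hom_A(U_i,\bigoplus_j W_j\otimes U_j)\cong \bigoplus_j W_j\otimes \Hom_A(U_i,U_j)\cong W_i$ by Schur, while the claim that the non-zero $W_i$ are precisely the simple $B$-modules was already settled in the previous paragraph.

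The main obstacle will be the reverse inclusion in (2): showing that an operator commuting with all of $B$ already lies in $A$, rather than in some strictly larger subalgebra. This is the only place where semisimplicity of $A$ is genuinely indispensable, since it guarantees, via Wedderburn-Artin, that $A$ surjects onto each block $\End(U_i)$. The cleanest way to make this rigorous while avoiding an explicit appeal to the Jacobson density theorem is a dimension count: semisimplicity gives $\dim A=\sum_i (\dim U_i)^2=\dim \End_B(V)$, which combined with $A\subseteq\End_B(V)$ forces equality. I would present the density/structure argument as the conceptual reason and keep the dimension count as the economical certificate.
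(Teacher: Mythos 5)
Your proof is correct; note first that the paper itself contains no proof of this theorem --- it defers to \cite{stevens2016schur} and \cite[Theorem 7.3.2]{ceccherini2010representation} --- so the relevant comparison is with the standard argument of those references, which your proposal essentially reproduces: the $A$-isotypic normal form $V\cong\bigoplus_i W_i\otimes U_i$, the Schur-lemma computation $B=\End_A(V)\cong\bigoplus_i\End(W_i)\cong\bigoplus_i M_{m_i}(\C)$, and the Wedderburn--Artin Theorem \ref{theo:Wedderburn} to get (1) and to identify the simple $B$-modules as the $W_i$; part (3) then follows from the same normal form together with $\Hom_A(U_i,V)\cong\bigoplus_j W_j\otimes\Hom_A(U_i,U_j)\cong W_i$. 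Your genuine deviation is in part (2): where most treatments prove $\End_B(V)\subseteq A$ by a density-type or block-surjectivity argument, you substitute the dimension count $\dim A=\sum_i(\dim U_i)^2=\dim\End_B(V)$, which in this finite-dimensional setting over $\C$ is a complete and more economical certificate --- it avoids Jacobson density entirely at the price of leaning on Wedderburn--Artin and on the strong form of Schur's lemma (algebraic closedness), both of which the paper's appendix supplies.

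Two hypotheses should be made explicit for the count to be airtight. First, faithfulness: since $A\subseteq\End(V)$ it is automatic, but it is precisely what forces every Wedderburn block of $A$ (equivalently, every simple $A$-module) to occur in $V$ --- if a simple module were missing, its central idempotent would act by zero on $V$, contradicting $A\subseteq\End(V)$ --- and without it $\dim A$ need not equal $\sum_i(\dim U_i)^2$ with $i$ ranging over the simples occurring in $V$. This also explains the clause ``either an irreducible representation of $B$ or zero'' in the statement: with your indexing no $W_i$ vanishes, the zero case arising only when one enumerates all simple modules of an abstractly given algebra. Second, unitality: the theorem tacitly assumes $\mathrm{id}_V\in A$; for a non-unital semisimple subalgebra it fails (take $A=\C\,e$ for a projection $e$ onto a proper nonzero subspace, so that $B=\End(eV)\oplus\End((1-e)V)$ and $\End_B(V)$ is two-dimensional, strictly larger than $A$), and your isotypic normal form would likewise break on the subspace annihilated by $A$.
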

Recall that $\textup{Hom}_{A}(U_i,V)$ is the set of operators $T: U_i\to V$ such that 
\begin{equation}
T \rho_i(g) = \rho(g) T\,, \hspace{1cm} \text{for all $g\in A$}
\end{equation}
where $\rho_i$ and $\rho$ are the homomorphism associated with representations space $U_{i}$ and $V$ of $A$. See for example \cite{stevens2016schur} and also \cite[Theorem 7.3.2]{ceccherini2010representation} for the proof.\medskip


Originally, the Schur-Weyl duality relates the irreducible tensor representations of $\GL(\Dim,\mathbb{C})$ and $\sn$. One identifies $A$ as the algebra of operators in $\End(V^{\otimes n})$ generated by the action of $\sn$ on $V^{\otimes n}$, while $B$ is the algebra of operators in $\End(V^{\otimes n})$ generated by the action of $\GL(\Dim,\mathbb{C})$ on $V^{\otimes n}$.\medskip

%

\begin{theorem}[Schur-Weyl duality 1]
	Let $V$ be a complex vector space with $\dim(V)=\Dim$. As a representation of $\,\GL(\Dim,\C) \times \sn$, $V^{\otimes n}$ decomposes as 
	\begin{equation}
		V^{\otimes n }\cong \bigoplus_{\mu \in \mathcal{P}_n(\Dim)} V^{\mu}\otimes L^{\mu}
	\end{equation}
	where $V^\mu$ \lp respectively $L^{\mu}$\rp\, are pairwise inequivalent irreducible representation of $\text{GL}(\Dim)$ \lp respectively $\sn$\rp.
\end{theorem}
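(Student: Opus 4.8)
The plan is to deduce the decomposition from the Double Centralizer Theorem \ref{theo:double_centralizer_Theorem}. First I would set $A = \mathfrak{r}(\C\sn) \subseteq \End(V^{\otimes n})$, the image of the symmetric group algebra under the representation homomorphism $\mathfrak{r}$ of \eqref{eq:operator_sn}. Since $\C\sn$ is semisimple by Maschke's Theorem \ref{theo:Maschke2}, the module $V^{\otimes n}$ is completely reducible over $\C\sn$, and since the $A$-action is just the $\C\sn$-action viewed through the quotient, $V^{\otimes n}$ is completely reducible as an $A$-module as well; thus $A$ is a completely reducible subalgebra of $\End(V^{\otimes n})$, which is exactly the hypothesis required to invoke Theorem \ref{theo:double_centralizer_Theorem}. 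Setting $B = \End_A(V^{\otimes n})$ to be the centralizer of $A$, the theorem immediately yields that $B$ is semisimple, that $A = \End_B(V^{\otimes n})$, and that
\begin{equation*}
V^{\otimes n} \cong \bigoplus_i W_i \otimes U_i\,,
\end{equation*}
where the $U_i$ run over the irreducible $\sn$-modules occurring in $V^{\otimes n}$ and $W_i \cong \Hom_A(U_i, V^{\otimes n})$ are the corresponding pairwise inequivalent irreducible $B$-modules.

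The crucial step, which I expect to be the main obstacle, is the identification of the abstract centralizer $B$ with the algebra of operators generated by the action of $\GL(\Dim,\C)$ on $V^{\otimes n}$. I would proceed in two stages. Under the canonical isomorphism $\End(V^{\otimes n}) \cong \End(V)^{\otimes n}$, the action of $\sn$ is by permutation of the tensor factors, so the commutant $B = \End_A(V^{\otimes n})$ consists precisely of the $\sn$-invariant tensors, i.e. the symmetric power $S^n(\End(V))$. The second stage is to show that $S^n(\End(V))$ is linearly spanned by the diagonal powers $\{X^{\otimes n} : X \in \End(V)\}$ (a standard polarization identity), and then that one may restrict $X$ to lie in $\GL(\Dim,\C)$: because $\GL(\Dim,\C)$ is Zariski-dense in $\End(V) = \mathrm{Mat}_\Dim(\C)$ and $X \mapsto X^{\otimes n}$ is polynomial, the spans of $\{X^{\otimes n}\}$ and of $\{g^{\otimes n} : g \in \GL(\Dim,\C)\}$ coincide. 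Together these identify $B$ with the image of the $\GL(\Dim,\C)$ action, so the factors $W_i$ in the decomposition are the irreducible $\GL(\Dim,\C)$-modules.

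It then remains to index the summands. By the double centralizer output the pairs $(W_i, U_i)$ are in bijection with the irreducible $\sn$-modules appearing in $V^{\otimes n}$; writing $L^\mu$ for the irreducible $\sn$-module and $V^\mu$ for the matching irreducible $\GL(\Dim,\C)$-module attached to a partition $\mu$, I would invoke the classical fact that $L^\mu$ occurs in $V^{\otimes n}$ if and only if $\mu$ has at most $\Dim$ rows, i.e. $\mu \in \Par_n(\Dim)$ as defined in \eqref{eq:P_d} (partitions with $\mu_1' \leqslant \Dim$). Relabelling $W_i = V^\mu$ and $U_i = L^\mu$ then gives
\begin{equation*}
V^{\otimes n} \cong \bigoplus_{\mu \in \Par_n(\Dim)} V^\mu \otimes L^\mu\,,
\end{equation*}
with the $V^\mu$ (respectively $L^\mu$) pairwise inequivalent irreducibles of $\GL(\Dim,\C)$ (respectively $\sn$), as required.
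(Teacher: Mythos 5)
Your proposal is correct and follows essentially the same route as the paper: both deduce the decomposition by applying the Double Centralizer Theorem with $A$ the image of $\C\sn$ in $\End(V^{\otimes n})$ and $B$ its centralizer, identified with the algebra generated by the $\GL(\Dim,\C)$-action. The only difference is completeness: where the paper merely cites the mutual-centralizer identification and the classical fact about which partitions occur, you actually prove the commutant identification (commutant of $\sn$ equals the symmetric tensors in $\End(V)^{\otimes n}$, then polarization plus Zariski density of $\GL(\Dim,\C)$ in $\End(V)$), thereby filling in the step the paper delegates to the literature.
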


\begin{theorem}[Schur-Weyl duality 2]
	Let $V$ be a complex vector space equipped with a symmetric bilinear form, with $\dim(V)=\Dim$. As a representation of $\, \Or(\Dim,\C)\times\bn $, $V^{\otimes n}$ decomposes as 
	\begin{equation}
		V^{\otimes n}\cong \bigoplus_{\lambda\in \Lambda_n(\Dim)} D^{\lambda}\otimes M^{\lambda}_n
	\end{equation}
	where $D^{\lambda}$ \lp respectively $M^{\lambda}_n$\rp\, are pairwise inequivalent irreducible representations of $\text{O}(\Dim,\C)$ \lp respectively $\bn(\Dim)$\rp.\\
\end{theorem}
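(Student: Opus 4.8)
The plan is to deduce the statement from the Double Centralizer Theorem \ref{theo:double_centralizer_Theorem}, exactly as the classical Schur--Weyl duality for $\GL(\Dim,\C)$ and $\sn$ is obtained. Set $W = V^{\otimes n}$ and let $A \subseteq \End(W)$ be the subalgebra generated by the image of $\Or(\Dim,\C)$ acting diagonally as in \eqref{eq:GLaction}. First I would record that $A$ is completely reducible: since $\Or(\Dim,\C)$ is a (linearly) reductive group, every finite-dimensional representation — in particular $W$ — decomposes into irreducibles, and the algebra generated by the image of such a group in $\End(W)$ is semisimple. This verifies the hypothesis of the Double Centralizer Theorem.

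The crux is to identify the centralizer $B := \End_A(W)$ with the image $\mathfrak{r}(\bn(\Dim))$ of the Brauer algebra. One inclusion is elementary: every diagram operator \eqref{eq:operator_bn} is assembled from permutations together with copies of the metric $g$ and its inverse, and the defining property \eqref{eq:OrthogonalGroup} says precisely that $g$ is $\Or(\Dim,\C)$-invariant, so each $\mathfrak{r}(b)$ commutes with the $\Or(\Dim,\C)$-action; hence $\mathfrak{r}(\bn(\Dim)) \subseteq B$. The reverse inclusion $B \subseteq \mathfrak{r}(\bn(\Dim))$ is the substantial step. It amounts to the first fundamental theorem of invariant theory for the orthogonal group: every $\Or(\Dim,\C)$-equivariant endomorphism of $V^{\otimes n}$ — equivalently every $\Or(\Dim,\C)$-invariant tensor in $V^{\otimes n}\otimes (V^{*})^{\otimes n}$ — is a linear combination of complete contractions built from the metric, which are exactly the tensors produced by Brauer diagrams. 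This is Weyl's classical result \cite{Weyl}, and it is where the real work lies; I would either invoke it directly or reprove it via the standard polarization argument reducing invariants of many vectors to pairings.

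With $A$ semisimple and $B = \mathfrak{r}(\bn(\Dim))$ its centralizer, the Double Centralizer Theorem applies verbatim. Conclusion (1) gives that $B$ is semisimple, conclusion (2) recovers $A$ as $\End_B(W)$, and conclusion (3) yields the bimodule decomposition $W \cong \bigoplus_i U_i \otimes W_i$ with the $U_i$ the distinct irreducible $\Or(\Dim,\C)$-modules occurring in $W$ and $W_i \cong \Hom_{\Or(\Dim,\C)}(U_i, W)$ the corresponding nonzero simple $B$-modules. Renaming $U_i = D^{\lambda}$ and $W_i = M^{\lambda}_n$, it only remains to match the indexing set with $\Lambda_n(\Dim)$: by Weyl's classification the irreducible tensor representations of $\Or(\Dim,\C)$ appearing in $V^{\otimes n}$ are precisely those labelled by the Young diagrams $\lambda \in \Lambda_n(\Dim)$ of \eqref{eq:Lambda_d}, and since the nonzero $W_i$ exhaust the simple $B$-modules, these $M^{\lambda}_n$ are exactly the simple modules of $\mathfrak{r}(\bn(\Dim))$, i.e. the irreducible $\bn(\Dim)$-representations occurring in $V^{\otimes n}$. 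The main obstacle is unquestionably the surjectivity $B \subseteq \mathfrak{r}(\bn(\Dim))$; everything else is formal once the Double Centralizer Theorem is in hand.
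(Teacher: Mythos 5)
Your proposal is correct and follows essentially the same route as the paper: the theorem is presented there as a direct consequence of the Double Centralizer Theorem \ref{theo:double_centralizer_Theorem}, with the identification of the centralizer of the $\Or(\Dim,\C)$-action as the image of $\bn(\Dim)$ being exactly Brauer's classical result \cite{Brauer_1937} (equivalently, the first fundamental theorem of invariant theory for the orthogonal group, which you correctly flag as the substantial step). Your additional bookkeeping — semisimplicity of the group algebra image via reductivity of $\Or(\Dim,\C)$ and the matching of the index set $\Lambda_n(\Dim)$ via Weyl's classification — fills in precisely the details the paper leaves implicit.
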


\subsection{Littlewood-Richardson rules}\label{subsec:Littlewood_Richardson_rules}
The set of all Young diagrams is weakly ordered by inclusion: $\lambda\subset \mu$ implies that any box of $\lambda$ is also present in $\mu$. For a pair $\lambda\subset\mu $ define the {\it skew-shape Young diagram} $\mu\backslash \lambda$ as a set-theoretical difference of the corresponding Young diagrams. We set by definition $|\mu\backslash \lambda| = |\mu| - |\lambda|$. For example, 
\begin{equation}\label{eq:skew-shape_example}
	\text{given}\quad\mu=\Yboxdim{9pt}\yng(4,2,2,1) \quad\text{and}\quad \lambda=\Yboxdim{9pt}\yng(2,1)\,,\quad\text{one has}\quad \mu\backslash \lambda =\Yboxdim{9pt}\young(\times\times ~~,\times ~,~~,~)
\end{equation}
\vskip 4 pt
\noindent
To each box of a (skew-shape) Young diagram at a position $(i,j)$ we associate its {\it content} $c(i,j) = j - i$. In this respect we define the content of any Young diagram as a sum:
\begin{equation}
	c_\lambda = \sum_{(i,j)\in \lambda} c(i,j)\,.
\end{equation}
One defines the content of a skew-shape Young diagram to be $c_{\mu\backslash\lambda} = c_\mu-c_\lambda$. For example, for the contents of the skew shape in \eqref{eq:skew-shape_example} one has
\begin{equation*}
	\mu\backslash \lambda =\Yboxdim{9pt}\young(\times\times \two\three,\times \zero,\mitwo \mione,\mithree) \quad \Rightarrow \quad c_{\mu\backslash \lambda} = -1\,.
\end{equation*}
\vskip 4 pt
The $\mathbb{N}$-span of Young diagrams is endowed with the structure of an associative commutative monoid with the unit element given by $\emptyset$. For the product of two diagrams $\lambda,\nu$ we shall write 
\begin{equation}\label{eq:LRRule}
	\lambda \LRp \nu = \sum_{\mu} \tensor{C}{^{\,\mu}_\lambda_\nu}\,\mu\,.
\end{equation}
The structure constants $\tensor{C}{^{\,\mu}_\lambda_\nu}$ are referred to as Littlewood-Richardson coefficients. They are calculated via the {\it Littlewood-Richardson rule}, which admits a number of equivalent ways to formulate it in terms of {\it semi-standard tableaux} \cite{Fulton}. 

\paragraph{First formulation of the rule.} Let us recall that a tableau of shape $\mu\backslash\lambda$ is any map which associates a positive integer to each box of $\mu\backslash\lambda$. A tableau is called semi-standard if the numbers in each row (respectively, column) form a weakly (respectively, strongly) increasing sequence. We will say that a partition $\nu = (\nu_1\geqslant \dots\geqslant \nu_r)$ is a {\it weight} of a semi-standard tableau $\mathsf{t}(\mu\backslash\lambda)$ if the latter contains exactly $\nu_i$ occurrences of the entry $i$. To any tableau $\mathsf{t}$ one associates a {\it row word} $w(\mathsf{t})$ by reading the entries of boxes along each line from left to right proceeding from the bottom line to the top one. A word is called {\it Yamanouchi word} (equivalently, Littlewood-Richardson word or reverse lattice word) if any its suffix contains at least as many $1$'s as $2$'s, at least as many $2$'s as $3$'s, {\it etc.} For example,
\begin{equation}\label{eq:example_Yamanouchi}
	\def\arraystretch{1.4}
	\begin{array}{l}
		\text{for}\;\;\mu\backslash \lambda =\Yboxdim{8pt}\young(\times\times~~,\times~,~)\;\;\text{and the weight}\;\;\nu = \Yboxdim{8pt}\young(~~~,~)\\
		\text{one has exactly two semistandard tableaux}\;\; \mathsf{t}_1 = \Yboxdim{8pt}\young(\times\times\one\one,\times\one,\two)\,,\;\;\mathsf{t}_2 = \Yboxdim{8pt}\young(\times\times\one\one,\times\two,\one)\,,\\
		\text{such that the row words are Yamanouchi words}\;\;w(\mathsf{t}_1) = 2111\;\;\text{and}\;\;w(\mathsf{t}_1) = 1211\,.
	\end{array}
\end{equation}
\vskip 4 pt
\noindent With this at hand, one arrives at the following definition of Littlewood-Richardson coefficients:
\begin{equation}\label{eq:definition_LR_1}
	\def\arraystretch{1.4}
	\begin{array}{ll}
		\tensor{C}{^{\,\mu}_\lambda_\nu} & \text{is the number of semi-standard tableaux of the shape}\;\;\mu\backslash\lambda\;\;\text{and weight}\;\;\nu\\
		\hfill & \text{whose row word is a Yamanouchi word.}
	\end{array}
\end{equation}
If either $\lambda\not \subset \mu$ or $|\nu| \neq |\mu| - |\lambda|$, one puts $\tensor{C}{^{\,\mu}_\lambda_\nu} = 0$. It appears that $\tensor{C}{^{\,\mu}_\lambda_\nu} = \tensor{C}{^{\,\mu}_{\nu\lambda}}$, so $\tensor{C}{^{\,\mu}_\lambda_\nu} \neq 0$ also implies $\nu \subset \mu$. From the above example one obtains $C^{(4,2,1)}_{(2,1),(3,1)} = 2$.
\vskip 4 pt

\paragraph{Second formulation of the rule.} 
We will also need an equivalent definition of Littlewood-Richardson coefficients based on the {\it jeu de taquin}. Recall that a corner of a Young diagram is any box with no other boxes on the right and below. An inside corner of a skew-shape $\mu\backslash\lambda$ (with $\lambda \subset \mu$) is a corner of $\lambda$ which is not a corner of $\mu$. Any chosen inner corner of a semi-standard tableau $\mathsf{t}$ (thought as an empty box) can be removed by the following {\it sliding process}: at any step consider the neighbour(s) on the right and below the empty box, then slide the smallest one into the empty box, while if the two are equal, slide the one below. The process continues until the empty box becomes a corner of $\mu$, and is removed afterwards. The resulting tableau is again semi-standard, so one can repeatedly perform the sliding process until the shape of the semi-standard tableau becomes a Young diagram. The whole process is called jeu de taquin, and the resulting semi-standard tableau $\mathsf{t}^{\prime}$ is the same for any order of processing the inner corners. It is called the {\it rectification of} $\mathsf{t}$, $\mathsf{t}^{\prime} = \mathrm{Rect}(\mathsf{t})$. For example, for the tableau $\mathsf{t}_1$ in \eqref{eq:example_Yamanouchi} one has
\begin{equation}\label{eq:example_Rect}
	\def\arraystretch{2.4}
	\begin{array}{ll}
		\mathsf{t}^{\prime}_1 = \mathrm{Rect}\left(\Yboxdim{8pt}\young(\times\times\one\one,\times \one,\two)\right) =  \Yboxdim{8pt}\young(\one\one\one,\two)\,,\;\;\text{namely,} \;\;& \Yboxdim{8pt}\young(\times\times\one\one,\times \one,\two) \;\;\rightarrow \;\;\Yboxdim{8pt}\young(\times \one \one \one,\times \times,\two) \;\; \rightarrow \;\;\Yboxdim{8pt}\young(\times\one \one \one,\times,\two)\,,\\
		\hfill & \Yboxdim{8pt}\young(\times \one \one\one,\times ,\two) \;\;\rightarrow\;\; \young(\times \one \one\one,\two ,\times) \;\; \rightarrow \;\; \young(\times \one \one\one,\two)\;\;,\\
		\hfill & \Yboxdim{8pt}\young(\times \one \one\one,\two)\;\;\rightarrow\;\;\young(\one \times\one \one,\two)\;\;\rightarrow\;\;\young(\one \one \times\one,\two)\;\; \rightarrow\;\;\young(\one \one \one\times,\two)\;\;\rightarrow\;\;\young(\one \one \one,\two)\,.
	\end{array}
\end{equation}
One can check that for the other tableau in \eqref{eq:example_Yamanouchi} one has the same $\mathrm{Rect}(\mathsf{t}_2) = \mathsf{t}^{\prime}_1$.
\vskip 4 pt

For any Young diagram $\nu$ denote $\mathsf{E}(\nu)$ to be the semi-standard tableau of weight $\nu$, {\it i.e.} such that each $i$th row is filled with $i$. Then we arrive at the following equivalent definition of Littlewood-Richardson coefficients:
\begin{equation}\label{eq:definition_LR_2}
	\def\arraystretch{1.4}
	\begin{array}{ll}
		\tensor{C}{^{\,\mu}_\lambda_\nu} & \text{is the number of semi-standard tableaux of the shape}\;\;\mu\backslash\lambda\;\;\text{and weight}\;\;\nu\\
		\hfill & \text{whose rectification is $\mathsf{E}(\nu)$.}
	\end{array}
\end{equation}
By comparing the two examples \eqref{eq:example_Yamanouchi} and \eqref{eq:example_Rect}, one can verify that both definitions \eqref{eq:definition_LR_1} and \eqref{eq:definition_LR_2} lead to the same result $C^{(4,2,1)}_{(2,1)\, (3,1)} = 2$.
\vskip 4 pt

Let us consider a configuration obtained after a number of sliding processes during the jeu de taquin applied to a semi-standard tableau of a shape $\mu\backslash\lambda$, and let us keep the empty boxes at the end of each sliding process. Then one has a chain of three diagrams $\tau \subset \sigma \subset \mu$, where $\mu\backslash\sigma$ is the set of empty boxed resulting from the sliding processes, $\sigma\backslash\tau$ is a semi-standard tableau, and $\tau$ is the set of empty boxes not involved in the performed sliding processes. Let us say that a box of $\mu\backslash\sigma$ is an {\it addable corner} if adding it to $\sigma$ leads to a Young diagram. Note that each addable corner results from a sliding process applied to an inner corner, and that each particular sliding process is invertible. Let us define the {\it reverse sliding process} for any addable corner: at any step consider the neighbour(s) on the left and above the empty box, then slide the greater one into the empty box, while if the two are equal, slide the one above. The process continues until the empty box becomes an inner corner. In this respect, any chain $\tau \subset \sigma \subset \mu$, with a  semi-standard skew-shape diagram $\sigma\backslash\tau$, can be considered as an intermediate configuration of the jeu de taquin, with both types of slidings possible. Upon exhausting direct sliding processes, the unique terminal configuration (the rectification) with $\tau = \emptyset$ was considered above. On the other hand, starting from the terminal configuration of the shape $\mu\backslash\nu$, with the semi-standard tableau $\mathsf{E}(\nu)$, and going backwards by different sequences of reverse slidings until $\sigma = \mu$ leads to different semi-standard tableaux $\mathsf{t}^{(\mathrm{init})}$ of different shapes $\mu\backslash\lambda^{(\mathrm{init})}$.\\

Define $\mu\slashdiv\nu$ to be the set of so obtained diagrams $\lambda^{(\mathrm{init})}$. By construction, the terminal configuration $\mathsf{E}(\nu) = \mathrm{Rect}\big(\mathsf{t}^{(\mathrm{init})}\big)$ is the same for all $\mathsf{t}^{(\mathrm{init})}$, so according to the definition \eqref{eq:definition_LR_2}, the set $\mu\slashdiv\nu$ contains such diagrams $\lambda$ that $\tensor{C}{^{\,\mu}_\lambda_\nu} \neq 0$, and only them. For example, keeping empty boxes upon constructing $\mathrm{Rect}(\mathsf{t}_1)$ in \eqref{eq:example_Rect} and applying different sequences of reverse sliding processes gives three skew-shape diagrams:
\begin{equation}
	\Yboxdim{8pt}\young(\one \one \one \times,\two\times,\times)\;\;\xrightarrow{\text{rev. slides}}\;\; \Yboxdim{8pt}\young(\times\times\times\one ,\one \one,\two)\,,\;\; \Yboxdim{8pt}\young(\times\times\one\one ,\times \one,\two)\,,\;\; \Yboxdim{8pt}\young(\times\times\one\one ,\times \two,\one)\,,\;\; \Yboxdim{8pt}\young(\times\one\one\one ,\times \two,\times)\,,\quad \text{thus}\quad \Yboxdim{8pt}\young(~~~~,~~,~)\slashdiv\young(~~~,~) = \left\{\young(~,~,~)\,,\; \young(~~,~)\,,\;\young(~~~)\right\}\,.
\end{equation}

\chapter{Proofs}

\section{Proofs for chapter \ref{chap:Irreducible_MAG}}\label{app:proof_chap_2}

\subsection{Proof of Proposition \ref{prop:irreducible_distortion_GL}}\label{subsec:proof_of_prop_irreducible_distortion_GL}

\begin{proof}
	Let $\overline{Y}_S,\,\overline{Y}_A\in \mathbb{C}\mathfrak{S}_3 $ be such that 
	\begin{equation}\label{eq:ZAZS}
		\overline{Y}_S= z\, Y_{S} \, z^{\shortminus 1}\,, \hspace{1cm} \overline{Y}_A= z\, Y_{A} \, z^{\shortminus 1}\,, \hspace{0.5cm} \text{for some $z\in \mathbb{C}\mathfrak{S}_3$ such that $z^{\shortminus 1}=z^*$}.
	\end{equation}
	One has that projective invariance of $\, \accentset{\left(\Yboxdim{3pt}\yng(2,1)\right)}{C}\cdot \overline{Y}_A$ is equivalent to $ Y_{S} \, \overline{Y}_A =0$, 
	that is $\, \accentset{\left(\Yboxdim{3pt}\yng(2,1)\right)}{C}\cdot \overline{Y}_A$ must be antisymmetric with respect to permutations of the first and third indices. 
	
	\medskip On the other hand, because $\id_{\C\Sn{3}}=\overline{Y}_S+\overline{Y}_A$, one has that $\, \accentset{\left(\Yboxdim{3pt}\yng(2,1)\right)}{C}\cdot \overline{Y}_S$ must be symmetric with respect to permutations of the first and third indices. The result follows from the uniqueness of the decomposition of an order two tensor into symmetric and antisymmetric parts.
	%
	%
\end{proof}

\subsection{Proof of Proposition \ref{prop:lagrangians}}\label{subsec:proof_of_prop_lagrangians}

\begin{proof}
	Let us introduce some notations. For any $b\in \bn$ we denote by  $\Upsilon_b\,:\, V^{\otimes n} \times V^{\otimes n}\to \mathbb{C}$ the scalar-valued function defined for any $T_1,\, T_2 \in V^{\otimes n}$ by $\Upsilon_b(T_1,T_2)=\langle T_1\,\cdot b \, , \, T_2 \rangle$ and we set $\Upsilon(T_1,T_2)=\langle T_1 \, , \, T_2 \rangle$. 	For any $T_1,\, T_2 \in V^{\otimes n}$ let $b_1\,, b_2 \in \bn$ be such that $\Upsilon_{b_1}(T_1,T_2)=\alpha \Upsilon_{b_2}(T_1,T_2)$ for some non-zero constant $\alpha\in \C$. Then, due to the non-degeneracy of the scalar product, one has $\Upsilon_{b_1}-\alpha\Upsilon_{b_2}=0$ if and only if $\mathfrak{r}(b_1)=\alpha \, \mathfrak{r}(b_2)$.\medskip

	For any diagrams $b_1,\, b_2 \in B_n$, the two functions $\Upsilon_{b_1}$ and $\Upsilon_{b_2}$ are said to be equivalent, which is denoted $\Upsilon_{b_1}\propto \Upsilon_{b_2}$, if for any $T_1, T_2\, \in V^{\otimes n}$ there exist a non-zero constant $\alpha\in \C$ such that $\mathfrak{r}(b_1)=\alpha \, \mathfrak{r}(b_2)$. For any $b \in \bn (\Dim)$ and any two irreducible representations $D^{i}=V^{\otimes n}\cdot \accentset{(i)}{P}$, $D^{j}=V^{\otimes n}\cdot \accentset{(j)}{P}$ of $\Or(\Dim,\C)$ in $V^{\otimes n}$ we denote by $\Upsilon_b\big|_{(i,j)}\,$
	the restriction of $\Upsilon_b$ to $D^{i}\times D^{j}$.	For convenience we recall Schur's lemma applied to our context.\medskip


	\begin{lemma}[Schur's lemma applied to $\Or(\Dim,\mathbb{C})$]
		Let $(\,\rho^i,D^i)$ and $(\,\rho^j,D^j)$ be two irreducible representations of $\Or(\Dim,\mathbb{C})$, and let $\mathfrak{b}_{ij}\in \Hom_{\Or(\Dim,\C)}(D^i,D^j)$, that is:
		\begin{equation}
			\,\mathfrak{b}_{ij}\,:\, D^j\to D^i\, \hspace{0.5cm} \text{is such that}  \hspace{0.5cm} \mathfrak{b}_{ij}\,\rho^i(R)=\rho^j(R)\,\mathfrak{b}_{ij}\,,\hspace{0.5cm}\forall \, R\in \Or(\Dim,\C).
		\end{equation}
		\begin{itemize}
			\item[i)] 	If $i\neq j$, then either $\mathfrak{b}_{ij}=0$ or $\mathfrak{b}_{ij}$ is an isomorphism. In the latter case one says that $D^i$ is equivalent to $D^j$ and $\mathfrak{b}_{ij}$ is called an intertwiner.
			\item[ii)] 	If $i=j$, then $\mathfrak{b}_{ii}$ is proportional to the identity. 
			\item[iii)] If $D^i\cong D^j$ then  any two intertwiners $\mathfrak{b}^{(1)}_{ij}$ and $\mathfrak{b}^{(2)}_{ij}$ are proportional, $\mathfrak{b}^{(1)}_{ij}=\alpha \,\mathfrak{b}^{(2)}_{ij}$ for some $\alpha\in \C$. 
		\end{itemize}
	\end{lemma}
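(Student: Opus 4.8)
The statement to prove is the classical Schur's lemma specialized to irreducible representations of $\Or(\Dim,\C)$, broken into three parts. The whole plan rests on one structural fact: an intertwiner $\mathfrak{b}_{ij} : D^j \to D^i$ commutes with the group action, hence its kernel and image are $\Or(\Dim,\C)$-invariant subspaces. Since $D^i$ and $D^j$ are \emph{irreducible}, these invariant subspaces are forced to be either $\{0\}$ or the whole space. This dichotomy, together with the algebraic closure of $\C$, yields all three assertions.

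For part (i), I would first verify that $\mathrm{Ker}(\mathfrak{b}_{ij}) \subseteq D^j$ is $\Or(\Dim,\C)$-invariant: if $v \in \mathrm{Ker}(\mathfrak{b}_{ij})$, then for any $R$ one computes $\mathfrak{b}_{ij}(\rho^j(R)v) = \rho^i(R)\mathfrak{b}_{ij}(v) = 0$ using the intertwining relation, so $\rho^j(R)v \in \mathrm{Ker}(\mathfrak{b}_{ij})$. Similarly $\mathrm{Im}(\mathfrak{b}_{ij}) \subseteq D^i$ is invariant. By irreducibility of $D^j$, the kernel is either all of $D^j$ (giving $\mathfrak{b}_{ij}=0$) or $\{0\}$; in the latter case $\mathfrak{b}_{ij}$ is injective, and by irreducibility of $D^i$ its image is either $\{0\}$ (impossible if $\mathfrak{b}_{ij}\neq 0$) or all of $D^i$, so $\mathfrak{b}_{ij}$ is surjective. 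A nonzero $\mathfrak{b}_{ij}$ is therefore a bijective intertwiner, i.e. an isomorphism.

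For part (ii), with $i=j$ I would use that $\C$ is algebraically closed, so the endomorphism $\mathfrak{b}_{ii}$ of the finite-dimensional space $D^i$ has an eigenvalue $\alpha \in \C$. Then $\mathfrak{b}_{ii} - \alpha\,\mathrm{id}$ is again an intertwiner (the identity commutes with everything, and a difference of intertwiners is an intertwiner), and it has nontrivial kernel. By part (i) applied to the case $i=i$, a nonzero intertwiner would be an isomorphism and hence have trivial kernel; since $\mathfrak{b}_{ii}-\alpha\,\mathrm{id}$ does not, it must be zero, giving $\mathfrak{b}_{ii} = \alpha\,\mathrm{id}$. For part (iii), given $D^i \cong D^j$ and two intertwiners $\mathfrak{b}^{(1)}_{ij}, \mathfrak{b}^{(2)}_{ij}$, I would fix an isomorphism (say $\mathfrak{b}^{(2)}_{ij}$, which is nonzero and hence invertible by part (i)) and consider $\mathfrak{b}^{(1)}_{ij}\circ (\mathfrak{b}^{(2)}_{ij})^{-1}$, an endomorphism of $D^i$ that is an intertwiner; by part (ii) it equals $\alpha\,\mathrm{id}$ for some $\alpha \in \C$, whence $\mathfrak{b}^{(1)}_{ij} = \alpha\,\mathfrak{b}^{(2)}_{ij}$.

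The argument is entirely standard and presents no genuine obstacle; the only points requiring minor care are the invariance of kernel and image (a direct consequence of the intertwining identity) and the appeal to algebraic closure of $\C$ in part (ii), which is where finite-dimensionality and the complex scalar field are essential. I would present parts (i)–(iii) in this logical order, since (ii) reuses (i) and (iii) reuses both.
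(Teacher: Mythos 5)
Your proof is correct and follows essentially the same route as the paper, which establishes this statement via its appendix lemma on Schur's lemma for finite groups and Corollary~\ref{cor:schur}: invariance of kernel and image under the group action plus irreducibility for part (i), an eigenvalue of $\mathfrak{b}_{ii}$ from the algebraic closure of $\C$ fed back into part (i) for part (ii), and composition of one intertwiner with the inverse of another, reduced to part (ii), for part (iii). No gaps to report.
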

	The result follows from the following facts. For any $\, b_1, b_2 \in B_n$, one has:\medskip

	\noindent \textbf{Fact $1$}: If $D^i\ncong D^j$, $\Upsilon_{b_1}\big|_{(i,j)}=0\,$.\smallskip
	
	\noindent \textbf{Fact $2$}: $\Upsilon_{b_1}\big|_{(i,i)}\, \propto \, \Upsilon\big|_{(i,i)}$ if and only if $\,\accentset{(i)}{P} \, b_1\, \accentset{(i)}{P}\neq 0$.\smallskip
	
	\noindent \textbf{Fact $3$}: If $D^i\cong D^j$, then $\Upsilon_{b_1}\big|_{(i,j)}\, \propto \, \Upsilon_{b_2}\big|_{(i,j)}$ if and only if $\accentset{(i)}{P}\, b_{1}\accentset{(j)}{P}\neq 0$ and $\accentset{(i)}{P} \, b_{2}\, \accentset{(j)}{P}\neq 0$. 

%
\noindent \textbf{Proof of fact $1$}:  From point $\textit{i)}$ of Schur's lemma we have that $\,\accentset{(i)}{P} \, b_1 \,\accentset{(j)}{P}\,$ acts by zero on $D^i$.\smallskip

\noindent \textbf{Proof of fact $2$}: 
If $\,\accentset{(i)}{P} \, b_1 \accentset{(i)}{P} =0$ then $\Upsilon_{b_1}\big|_{(i,i)}=0$ is not equivalent to $\Upsilon\big|_{(i,i)}\neq 0$. If $\,\accentset{(i)}{P} \, b_1 \accentset{(i)}{P} \neq 0$ one has from point $\textit{ii)}$ of Schur's lemma  $\accentset{(i)}{P}\, b_1\, \accentset{(i)}{P}$ acts as a scalar multiple of the identity on the irreducible representation $D^{i}$.\smallskip
	
\noindent\textbf{Proof of fact $3$}: Let us first stress that in this case $\Upsilon\big|_{(i,j)} = 0$ because $\accentset{(i)}{P}\,\accentset{(j)}{P}=0$. If $\accentset{(i)}{P}\, b_1\, \accentset{(j)}{P}\neq 0$ and $\accentset{(i)}{P}\, b_2\, \accentset{(j)}{P}\neq 0$, we have $\Upsilon_{b_1}\big|_{(i,j)}\, \propto \, \Upsilon_{b_2}\big|_{(i,j)}$ as a direct consequence of point $\textit{iii)}$ of Schur's lemma. The image of $\accentset{(i)}{P}\, b_1\, \accentset{(j)}{P}$ in $\Hom_{\Or(\Dim,\C)}(D^i,D^j)$ is an isomorphism. Hence, from the non-degeneracy of the scalar product, if $\accentset{(i)}{P}\,b_{1}\,\accentset{(j)}{P}\neq 0$ then $\Upsilon_{b_1}\big|_{(i,j)}\neq 0$. The implication from left to right is straightforward.\medskip
	
\end{proof}

\section{Proofs for chapter \ref{chap:projectors_GL}}\label{app:proof_chap_3}

\subsection{Proof of Lemma \ref{lem:simple_ev_Tn}}\label{subsec:proof_simple_ev_Tn}

\begin{proof}
	By direct computation of the contents of Young diagrams one can infer that if $n<6$ or $n=7$ the spectrum of $T_n$ is simple. For $n=6$ one has $c_{(4,1^2)}=c_{(3,3)}=3.$
	For $n>7$ there is always at least two Young diagrams which are symmetric with respect to the diagonal line, that is with zero content. For $n$ even a general pair of such diagram are described by the integer partitions $\mu=(n/2,2,1^{n/2-2})$ and $\nu=(n/2-1,3,2,1^{n/2-4})$. For $n$ odd one has $\mu=((n+1)/2,1^{(n-1)/2})$ and $\nu=((n-3)/2,3,3,1^{(n-9)/2})$.
\end{proof}

\subsection{Proof of Lemma \ref{lem:induction_center_Sn}}\label{subsec:proof_induction_center_Sn}

\begin{proof}
	Let us show that the action of $\mathcal{L}$ on the averaged conjugacy class sums is given by $\mathcal{L}(\bar{K}_\nu)=\bar{K}_{\nu_{[1]}}$. By definition \eqref{eq:def_L_li} one has
			\begin{equation}
				\mathcal{L}(\bar{K}_\nu)=\sum_{i=1}^n\sum_{\gamma\in \Sn{n\shortminus1}} c_i \,\gamma \, s \, \gamma^{\shortminus 1} \, c_i^{\shortminus 1}\,,
			\end{equation}
			where $s$ is an element of the conjugacy class $C_\nu$ viewed as a element of $\sn$, hence $s(n)=n$ and $s\in C_{\nu_{[1]}}$. One the other hand one has 
			\begin{equation}
				\bar{K}_{\nu_{[1]}}=\sum_{\sigma\in \sn} \sigma s \sigma^{\shortminus 1}= \sum_{i=1}^{n}\sum_{\begin{array}{c}
						{\scriptstyle \sigma\in \sn\,}\\
						{\scriptstyle\sigma(n)=i}
				\end{array}} \sigma s \sigma^{\shortminus 1}\,.
			\end{equation}
			For any $\sigma\in \sn$, such that $\sigma(n)=i$ one has $(c_i^{\shortminus 1}\sigma)(n)=n$, that is $c_i^{\shortminus 1}\sigma\in \Sn{n\shortminus1}$. Then for any $i\in \lbrace 1\,, 2, \ldots, n\rbrace$
			\begin{equation}
				\sum_{\begin{array}{c}
						{\scriptstyle \sigma\in \sn\,}\\
						{\scriptstyle\sigma(n)=i}
				\end{array}} c_i^{\shortminus 1}\sigma s \sigma^{\shortminus 1}c_i=\sum_{\gamma\in \Sn{n\shortminus1}} \,\gamma \, s \, \gamma^{\shortminus 1} \,,
			\end{equation}
		hence
			\begin{equation}
			\sum_{\begin{array}{c}
					{\scriptstyle \sigma\in \sn\,}\\
					{\scriptstyle\sigma(n)=i}
			\end{array}}\sigma s \sigma^{\shortminus 1}=\sum_{\gamma\in \Sn{n\shortminus1}} \, c_i  \gamma \, s \, \gamma^{\shortminus 1}c_i^{\shortminus 1}\,,
		\end{equation}
and $\mathcal{L}(\bar{K}_\nu)=\bar{K}_{\nu_{[1]}}$.
		\end{proof}

\subsection{Proof of Proposition \ref{prop:line_induction_sn}}\label{subsec:proof_line_induction_sn}
\begin{proof}
	We first expand $\mathcal{L}(Z^{\nu})$ in terms of the conjugacy class sums. From \eqref{eq:centralYoung_Characters} and \eqref{lem:induction_center_Sn} we have 
	\begin{equation}
		\mathcal{L}(Z^{\nu})=\frac{\mathrm{d}_\nu}{(n-1)!}\sum_{\rho\vdash n-1} \frac{\stab(\rho_{[1]})}{\stab(\rho)}\, \chi^{\nu}_\rho K_{\rho_{[1]}}.
	\end{equation}
	Recall remark $iii)$ below \eqref{eq:character_central_idempotent_Sn}
	\begin{equation*}
		K_{\rho_{[1]}} Z^{\mu}=\dfrac{n!}{\mathrm{d}_\mu \stab(\rho_{[1]})}\chi^{\mu}_{\rho_{[1]}}Z^{\mu},
	\end{equation*}
	and then by linearity 
	\begin{equation}
		\mathcal{L}(Z^{\nu})Z^\mu=n\dfrac{\mathrm{d}_\nu}{\mathrm{d}_\mu} \left(\sum_{\rho\vdash n-1}\,\frac{1}{\stab(\rho)}\chi^{\nu}_\rho\chi^{\mu}_{\rho_{[1]}}\right)Z^{\mu}.
	\end{equation}
	Finally, from the branching rule \eqref{eq:branching_restriction_Sn} one has
	\begin{equation*}
		\chi^{\mu}_{\rho_{[1]}}=\sum_{\beta \in \mathcal{R}_\mu}\chi^{\beta}_\rho\,,
	\end{equation*}
	and the result follows from the orthogonality relations of the irreducible characters of $\C\mathfrak{S}_{n-1}$ \eqref{eq:first_ortho_Characters}.
\end{proof}
\section{Proofs for chapter \ref{chap:projectors_O}}\label{app:proof_chap_4}

\subsection{Proof of Lemma \ref{lem:arcinduction_tens}}\label{subsec:proof_Lemma_arc_induc_tens}

\begin{proof}
	
	Let us first demonstrate:\medskip
	\begin{flushleft}
		\begin{tabular}{ll}
			\textbf{Fact 2:} & \hspace{2cm} If $\,|\beta|=|\lambda|\,$ and $\,\beta\neq \lambda\,$ then $\,T^{(\beta,2)}\cdot \mathcal{A}(P^\lambda_{n-2})=0$.\medskip
		\end{tabular}
	\end{flushleft}
	\begin{mdframed}[style=mystyle,frametitle=Proof of Fact 2.]\label{ex:preLemma_Arc_induction_2}
		Consider the following product: 
		\begin{equation}
			T^{(\beta,2)}\cdot \mathcal{A}(P^\lambda_{n-2})=\sum_{1\leqslant i <j \leqslant n} \,\,\, \raisebox{-.42\height}{\includegraphics[scale=0.7]{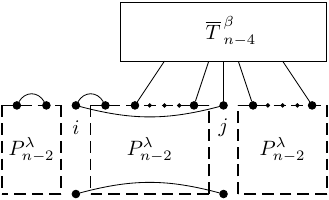}}
		\end{equation}
		There are three possible configurations for the position of the arc $(ij)$:
		\begin{itemize}
			\item[$1)$] The arc $(ij)$ is  directly connected to $\overline{T}^\beta_{n-4}$:\medskip
			
			Since the tensor $\overline{T}^\beta_{n-4}$ is traceless, $T^{(\beta,2)}\cdot \mathfrak{a}_{ij}(P^\lambda_{n-2})=0$. 
			\item[$2)$] The arc $(ij)$ is connected to the arcs of $T^{(\beta,2)}$:\medskip
			
			\forceindent By analogy with case $2)$ of \textbf{Fact $1$}, configurations where the arc $(ij)$ is connected to only one arc (either $(ij)=(12)$ or $(ij)=(34)$) of $T^{(\beta,2)}$ yields zero. Indeed, $g\otimes\overline{T}^\beta_{n-4}\in D^\beta$ and $D^\beta\cdot P^{\lambda}_{n-2}=0$.\medskip 
			
			Hence we are left with cases where two arcs of $T^{(\beta,2)}$ are involved, for example $(ij)=(14)$:
			\begin{equation}
				T^{(\beta,2)}\cdot \mathfrak{a}_{ij}(P^\lambda_{n-2})=\raisebox{-.43\height}{\includegraphics[scale=0.8]{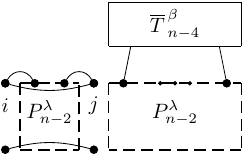}}\,
			\end{equation}
			Let us analyze the structure of $P^{\lambda}_{n-2}$ in more detail. 
			Here, $\lambda\vdash n-4$ and $P^{\lambda}_{n-2}\in J_{1}$ gives a $2$-traceless projection of tensors. All diagrams entering $P^{\lambda}_{n-2}$ which are in $J_2$ annihilate $\overline{T}^\beta_{n-4}$. Hence it is enough to focus on the element $P^{\lambda}_{n-2}\Big{|}_{1}$, the part of $P^{\lambda}_{n-2}$ where diagrams have only $1$ arcs.\medskip
			
			\forceindent All diagrams with two passing lines connected to the resting available arc nodes of $T^{(\beta,2)}$ must have an upper arc connected to $\overline{T}^\beta_{n-4}$, thus resulting in zero contribution.\medskip
			
			\forceindent Consider the part $Z^{\lambda}_{(a_0)}$ within $P^\lambda_{n-2}\Big{|}_{1}$ where diagrams are such that the upper arc $a_0$ connects the arc of $T^{(\beta,2)}$ as depicted below
			\begin{equation}\label{eq:all_arc_connected}
				T^{(\beta,2)}\cdot \mathfrak{a}_{ij}(Z^{\lambda}_{(a_1a_2)})=\raisebox{-.43\height}{\includegraphics[scale=0.8]{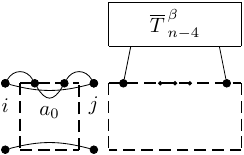}}\,
			\end{equation}  
			Each part of $Z^{\lambda}_{(a_0)}$ with diagrams having a fixed position of the lower arc, acts on $\overline{T}^{\,\beta}_{n-4}$ by permutations resulting in a projection to the space of traceless tensors $D^{\lambda}$.	Hence we conclude that this configuration yields zero. 
			
		
		\forceindent Consider the part $Z^{\lambda}_{(a_0)}$ within $P^\lambda_{n-2}\Big{|}_{1}$ where diagrams are such that their upper arc $a_0$ connects an arc of $T^{(\beta,2)}$ with a node of $\overline{T}^\beta_{n-4}$ as depicted below
		
		\begin{equation}\label{eq:folded_diagram_3tl0}
			T^{(\beta,2)}\cdot \mathfrak{a}_{ij}(Z^{\lambda}_{(a_0)})=\raisebox{-.43\height}{\includegraphics[scale=0.8]{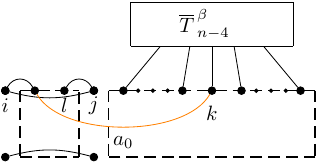}}\,
		\end{equation}  
		The configuration \eqref{eq:folded_diagram_3tl0} can be cast in the following forms
		\begin{equation}\label{eq:unfolded_diagram_3tl0}
			\raisebox{-.43\height}{\includegraphics[scale=0.75]{fig/case_1_Arc_indcution_proof_3_traceless_arc_b.pdf}}=\frac{1}{\Dim}\hspace{0.3cm}	\raisebox{-.43\height}{\includegraphics[scale=0.75]{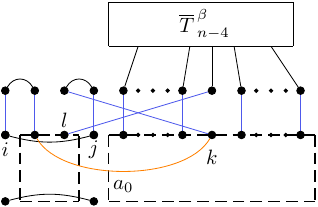}}=\frac{1}{\Dim}\hspace{0.3cm}\raisebox{-.43\height}{\includegraphics[scale=0.75]{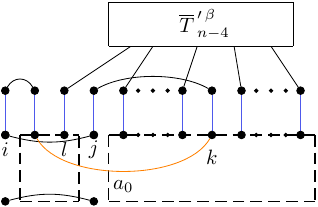}}\,
		\end{equation}
		The above sequence of diagrams translates to
		\begin{equation}
			T^{(\beta,2)}\cdot \mathfrak{a}_{ij}(Z^{\lambda}_{(a_0)})=\dfrac{1}{\Dim}\,T^{(\beta,2)}\cdot \left(s_{lk}\, \mathfrak{a}_{ij}(Z^{\lambda}_{(a_0)}) \right)=\dfrac{1}{\Dim}\,T^{\,\prime\,(\beta,2)}\cdot \mathfrak{a}_{ij}(Z^{\lambda}_{(a_0)})\,,
		\end{equation}
		where 
		\begin{equation}
			T^{\,\prime\,(\beta,2)}=T^{(\beta,2)}\cdot s_{lk}=\raisebox{-.5\height}{\includegraphics[scale=0.8]{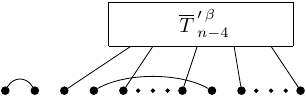}}\,\,\,\in D^\beta\,.
		\end{equation}
		From the right most diagram of \eqref{eq:unfolded_diagram_3tl0} it is clear that all passing lines within $Z^{\lambda}_{(a_0)}$ connects to $\overline{T}^{\,\prime\,\beta}_{n-4}$ resulting in its projection to $D^\lambda$. Hence we have 
		\begin{equation}
			T^{(\beta,2)}\cdot \mathfrak{a}_{ij}(Z^{\lambda}_{(a_0)})=0,
		\end{equation}
		and 
		\begin{equation}
			T^{(\beta,2)}\cdot \mathfrak{a}_{ij}(P^{\lambda}_{n-2})=0.
		\end{equation}
		This leaves us with the last possible configuration for the position of the arc $(ij)$.
		\item[$3)$] One node of the arc $(ij)$ is connected to a node within the arc of $T^{(\beta,2)}$ and the other one is connected to $\overline{T}^\beta_{n-4}$:
		\begin{equation}
			T^{(\beta,2)}\cdot \mathfrak{a}_{ij}(P^\lambda_{n-2})=\raisebox{-.43\height}{\includegraphics[scale=0.7]{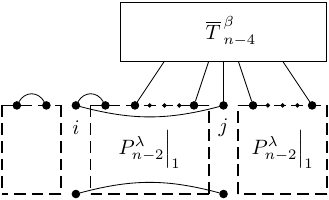}}\,.
		\end{equation}
		This case is similar to case $3)$ of the previous example, exempt we now have to consider diagrams entering $P^{\lambda}_{n-2}$ which have one arc.\medskip
		
		\forceindent Consider the part $Z^{\lambda}_{(a_0)}$ within $P^\lambda_{n-2}\Big{|}_{1}$ where diagrams are such that their upper arc connects with two nodes $a_1$, $a_2$, where $a_2$ is a node from $\overline{T}^{\,\beta}_{n-4}$. For example,
		
		\begin{equation}\label{eq:folded_diagram_3tl}
			T^{(\beta,2)}\cdot \mathfrak{a}_{ij}(Z^{\lambda\,,(1,4)}_{})=\raisebox{-.43\height}{\includegraphics[scale=0.6]{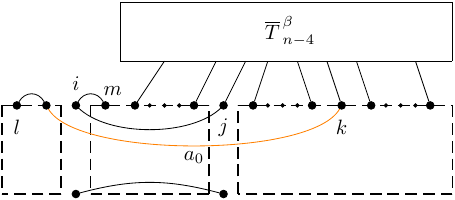}}\,
		\end{equation}
		We can now repeat a similar analysis to the previous example. The configuration \eqref{eq:folded_diagram_3tl} can be cast in following forms
		\begin{equation}\label{eq:2chains_arcs}
			\hspace{-1cm}\raisebox{-.43\height}{\includegraphics[scale=0.58]{fig/case_2_Arc_indcution_proof_3_traceless_arc_2lines.pdf}}=\frac{1}{\Dim^2}\hspace{0.3cm}	\raisebox{-.43\height}{\includegraphics[scale=0.58]{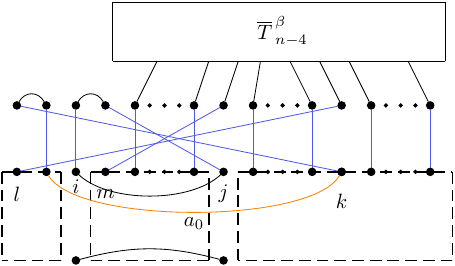}}=\frac{1}{\Dim^2}\hspace{0.3cm}\raisebox{-.43\height}{\includegraphics[scale=0.58]{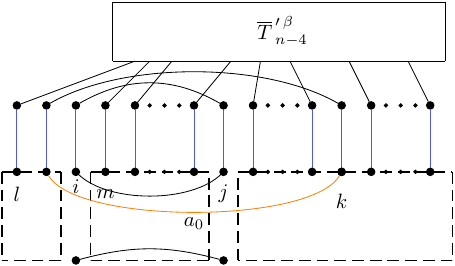}}\,
		\end{equation}
		The above sequence of diagrams translates to
		\begin{equation}
			T^{(\beta,2)}\cdot \mathfrak{a}_{ij}(Z^{\lambda}_{(a_0)})=\dfrac{1}{\Dim}\,T^{(\beta,2)}\cdot \left(s_{lk}s_{mj}\, \mathfrak{a}_{ij}(Z^{\lambda}_{(a_0)}) \right)=\dfrac{1}{\Dim}\,T^{\,\prime\,(\beta,2)}\cdot \mathfrak{a}_{ij}(Z^{\lambda}_{(a_0)})\,,
		\end{equation}
		where 
		\begin{equation}
			T^{\,\prime\,(\beta,2)}=T^{(\beta,2)}\cdot \left(s_{lk}s_{mj}\right)=\raisebox{-.5\height}{\includegraphics[scale=0.8]{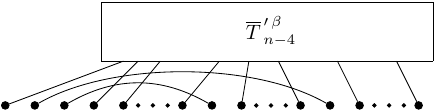}}\,\,\,\in D^\beta\,.
		\end{equation}
		We have that for any such position of the arc $a_0$, permutation lines of $Z^{\lambda}_{(a_0)}$ directly connect to $\overline{T}^{\,\prime\,\beta}_{n-4}$ which gives zero.\\
		
		\forceindent Consider the last part $Z^{\lambda}_{(a_0)}$ within $P^\lambda_{n-2}\Big{|}_{1}$ where diagrams are such that the upper arc $a_0$ connects the arcs of $T^{(\beta,2)}$. For example,
		
		\begin{equation}\label{eq:folded_diagram_3tlb}
			T^{(\beta,2)}\cdot \mathfrak{a}_{ij}(Z^{\lambda}_{(a_0)})=\raisebox{-.43\height}{\includegraphics[scale=0.8]{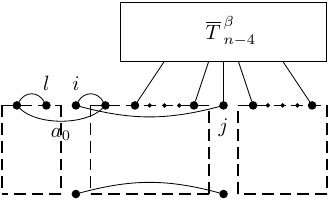}}\,
		\end{equation}
		This  configuration is very similar to \eqref{eq:folded_diagram_3tl0} and by analogy we can directly that \eqref{eq:folded_diagram_3tlb}. Nevertheless, for completeness, we reproduce the analysis.
		The configuration \eqref{eq:folded_diagram_3tlb} can be cast in following forms
		\begin{equation}\label{eq:1chain_arcs}
			\raisebox{-.43\height}{\includegraphics[scale=0.7]{fig/case_2_Arc_indcution_proof_3_traceless_arc_14.pdf}}=\frac{1}{\Dim}\hspace{0.3cm}	\raisebox{-.43\height}{\includegraphics[scale=0.7]{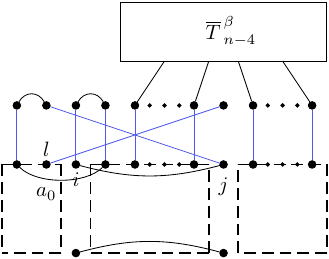}}=\frac{1}{\Dim}\hspace{0.3cm}\raisebox{-.43\height}{\includegraphics[scale=0.7]{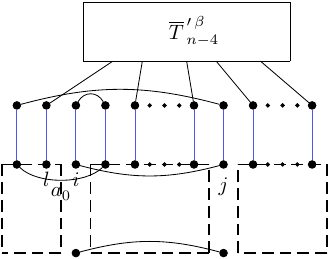}}\,
		\end{equation}
		The above sequence of diagrams translates to
		\begin{equation}
			T^{(\beta,2)}\cdot \mathfrak{a}_{ij}(Z^{\lambda}_{(a_0)})=\dfrac{1}{\Dim}\,T^{(\beta,2)}\cdot \left(s_{lj}\, \mathfrak{a}_{ij}(Z^{\lambda}_{(a_0)}) \right)=\dfrac{1}{\Dim}\,T^{\,\prime\,(\beta,2)}\cdot \mathfrak{a}_{ij}(Z^{\lambda}_{(a_0)})\,,
		\end{equation}
		where 
		\begin{equation}
			T^{\,\prime\,(\beta,2)}=T^{(\beta,2)}\cdot s_{lj}=\raisebox{-.5\height}{\includegraphics[scale=0.7]{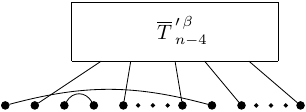}}\,\,\,\in D^\beta\,.
		\end{equation}
		We have that for any such position of the arc $a_0$, permutation lines of $Z^{\lambda}_{(a_0)}$ directly connect to $\overline{T}^{\,\prime\,\beta}_{n-4}$ which gives zero.\medskip
		
		Finally we conclude that 
		\begin{equation}
			T^{(\beta,2)}\cdot\mathcal{A}(P^{\lambda}_{n-2})=0\,.
		\end{equation}
	\end{itemize}
\end{mdframed}
	We now consider the general case. It is enough to restrict our attention to the element $T^{(\beta\,,\,f)}$ \eqref{eq:std_module_farcs}. As already mentioned above, if $f\neq f_\lambda$ one has $T^{(\beta\,,\,f)}\cdot\mathcal{A}(P^{\lambda}_{n-2})=0$, so it only remains to analyze the case $|\lambda|=|\beta|$, so that $f_\lambda=f$.\medskip
	
	Consider the action of $\mathcal{A}(P^{\lambda}_{n-2})$ on $T^{(\beta\,,\,f)}$ 
	
	\begin{equation}\label{eq:T_AP_gen}
		T^{(\beta,f)}\cdot \mathcal{A}(P^{\lambda}_{n-2})=\sum_{1\leqslant i<j \leqslant n}\,\,\raisebox{-.42\height}{\includegraphics[scale=0.7]{fig/Action_AP_tensor_standard_module.pdf}}
	\end{equation}
	\vskip 4pt
	One has $P^{\lambda}_{n-2}\in J_{f-1}$ and therefore $\mathcal{A}(P^{\lambda}_{n-2})\in J_f$. For any diagram $d$ entering $P^{\lambda}_{n-2}$  with at least $f$ arcs holds $T^{(\beta,f)}\cdot \mathfrak{a}_{ij}(d)=0$. Indeed, $\mathfrak{a}_{ij}(d)\in J_{f+1}$ and by construction $T^{(\beta,f)}$ is annihilated by $J_{f+1}$. Hence it will be enough to focus attention on $P^{\lambda}_{n-2}\Big{|}_{f-1}$, the part of $P^{\lambda}_{n-2}$ where diagrams have only $f-1$ arcs.\medskip
	
	There are three possible configurations for the position of the arc $(ij)$:
	\begin{itemize}
		\item[$1)$] The arc $(ij)$ is  directly connected to $\overline{T}^\beta_{n-2f}$:\medskip
		
		Since the tensor $\overline{T}^\beta_{n-2f}$ is traceless, $T^{(\beta,f)}\cdot \mathfrak{a}_{ij}(P^\lambda_{n-2})=0$. 
		
		\item[$2)$] The arc $(ij)$ is connected to the arcs of $T^{(\beta,f)}$:\medskip
		
		\textit{i)} First consider the case where the arc $(ij)$ is connected to an arc of $T^{(\beta,f)}$ as depicted below:
		\begin{equation}
			T^{(\beta,f)}\cdot \mathfrak{a}_{ij}(P^{\lambda}_{n-2})=\,\raisebox{-.44\height}{\includegraphics[scale=0.7]{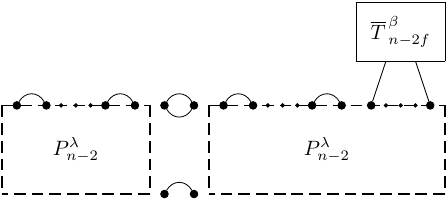}}
		\end{equation}
		Because, $g^{\otimes f-2}\otimes\overline{T}^\beta_{n-2f}\in D^\beta$ and $D^\beta\cdot P^{\lambda}_{n-2}=0$, one has 
		\begin{equation}
			T^{(\beta,f)}\cdot \mathfrak{a}_{ij}(P^{\lambda}_{n-2})=0\,.
		\end{equation}
		\textit{ii)} Next, consider the case where the arc $(ij)$ is connected to two arcs of $T^{(\beta,f)}$ as depicted below:
		\begin{equation}
			T^{(\beta,f)}\cdot \mathfrak{a}_{ij}(P^{\lambda}_{n-2})=\,\raisebox{-.44\height}{\includegraphics[scale=0.7]{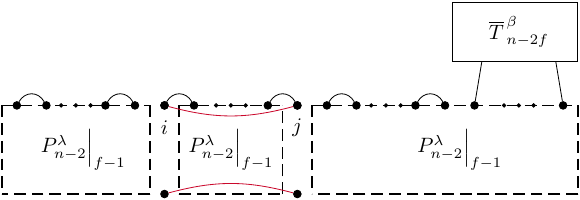}}
		\end{equation}
		As explained below \eqref{eq:T_AP_gen}, we focus on the element $P^{\lambda}_{n-2}\Big{|}_{f-1}$ as all other diagrams annihilate $T^{(\beta,f)}$.\medskip 
		
		Select a configuration of the $f-1$ upper arcs within $P^{\lambda}_{n-2}\Big{|}_{f-1}$ and denote their set by $\mathfrak{a}$. Then define $Z^{\lambda}_{\mathfrak{a}}$ as the part within $P^{\lambda}_{n-2}\Big{|}_{f-1}$ with upper arcs in $\mathfrak{a}$, and consider the following products:  
		
		\begin{equation}
			T^{(\beta,f)}\cdot \mathfrak{a}_{ij}(Z^{\lambda}_{\mathfrak{a}})=\,\raisebox{-.44\height}{\includegraphics[scale=0.7]{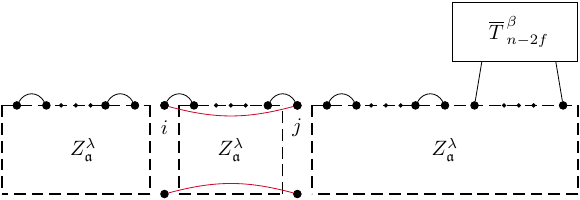}}
		\end{equation}
		Let us show that $T^{(\beta,f)}\cdot \mathfrak{a}_{ij}(Z^{\lambda}_{\mathfrak{a}})=0$ for any set $\mathfrak{a}$.
		Consider the following two types of upper arcs configurations: 
		\begin{itemize}
			\item[\textbf{(a)}] All $f-1$ upper arcs of $Z^{\lambda}_{\mathfrak{a}}$ are connected to the arcs of $T^{(\beta,f)}$.
			
			\item[\textbf{(b)}] At least one arc of $Z^{\lambda}_{\mathfrak{a}}$ connects a node of $\overline{T}^\beta_{n-2f}$ to an arc of $T^{(\beta,f)}$.
		\end{itemize}
		
		Configurations of type \textbf{(a)} are reminiscent of the configuration \eqref{eq:all_arc_connected}. Each part of $Z^{\lambda}_{\mathfrak{a}}$ with diagrams having a fixed configuration of the lower arcs, acts on $\overline{T}^{\,\beta}_{n-2f}$ by permutations resulting in a projection to the space of traceless tensors $D^{\lambda}$.	Hence we conclude that this configuration yields zero.\medskip
		
		Consider the configurations of type \textbf{(b)}. Denote by $a_0$ an arc in $\mathfrak{a}$ which connects a node of $\overline{T}^\beta_{n-2f}$ to an arc $a_1$ of $T^{(\beta,f)}$. 
		One has the following three possibilities:
		\begin{itemize}
			\item[$1.$] $a_1$ is connected to $\overline{T}^\beta_{n-2f}$ via two arcs.  This case results in an arc contraction of $\overline{T}^\beta_{n-2f}$ which yield zero. 
			\begin{equation}
				\raisebox{-.44\height}{\includegraphics[scale=0.7]{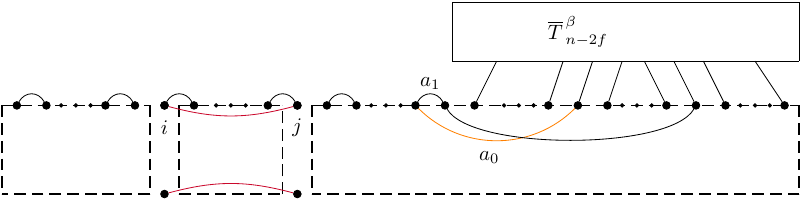}}
			\end{equation}
			\item[$2.$] $a_1$ is connected to a passing line of a diagram within $Z^{\lambda}_{\mathfrak{a}}$.
			\begin{equation}
				\raisebox{-.44\height}{\includegraphics[scale=0.7]{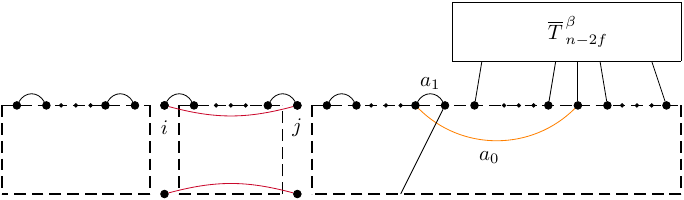}}
			\end{equation}
			\item[$3.$] $a_1$ is connected to both $\overline{T}^\beta_{n-2f}$ and to another arc of $T^{(\beta,f)}$.
			\begin{equation}
				\raisebox{-.44\height}{\includegraphics[scale=0.7]{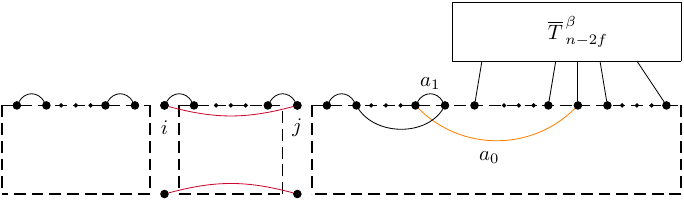}}
			\end{equation}
			This case initiates a \textit{chain of arcs} which either ends up with case $1$, or case $2.$ The former case yields zero while in the latter case one has for example
			\begin{equation}\label{eq:conf_gen_1}
				T^{(\beta,f)}\cdot \mathfrak{a}_{ij}(d)=\,\raisebox{-.44\height}{\includegraphics[scale=0.7]{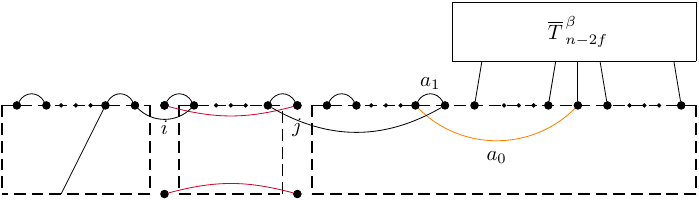}}
			\end{equation}
		\end{itemize}
		
		Any non zero configuration of type \textbf{(b)} will result in $n_c$ chains of arcs. Hence, there are $n-n_c-2f$ nodes of $\overline{T}^\beta_{n-2f}$ which are connected to $n-n_c-2f$ passing lines of the diagrams within $Z^\lambda_{\mathfrak{a}}$. These are exactly all passing lines of the diagrams within $Z^\lambda_{\mathfrak{a}}$ which are not connected to the endpoints of the chains. Indeed, there are $n-2$ lines (lower and upper arcs and passing lines) in each diagram of $Z^\lambda_{\mathfrak{a}}$. Within these $n-2$ lines there are $2(f-1)$ arcs, $n_c$ lines attached to chains of arcs, and hence $n-n_c-2f$ passing lines connected to $\overline{T}^\beta_{n-2f}$.
		As a result, passing lines of diagrams within $Z^{\lambda}_{\mathfrak{a}}$ are directly connected to a traceless tensor $\overline{T}^{\,\prime\,\beta}_{n-2f}\in D^\beta$ where
		\begin{equation}
			\overline{T}^{\,\prime\,\beta}_{n-2f}= \overline{T}^{\,\beta}_{n-2f}\cdot s \,, \hspace{1cm}\text{for some $\,s\in \sn$}\,,
		\end{equation}
		by analogy with \eqref{eq:unfolded_diagram_3tl0} and \eqref{eq:1chain_arcs}.
		Each part of $Z^{\lambda}_{\mathfrak{a}}$ with diagrams having a fixed configuration of the lower arcs, acts on $\overline{T}^{\,\prime\,\beta}_{n-2f}$ by permutations resulting in a projection to the space of traceless tensors $D^{\lambda}$. Hence for upper arc configuration $\mathfrak{a}$,
		\begin{equation}
			T^{(\beta,f)}\cdot \mathfrak{a}_{ij}(Z^{\lambda}_{\mathfrak{a}})=0,
		\end{equation}
		and therefore we conclude that
		\begin{equation}
			T^{(\beta,f)}\cdot \mathfrak{a}_{ij}(P^{\lambda}_{n-2})=0.
		\end{equation}

		\item[$3)$] One node of the arc $(ij)$ is connected to a node within the arc of $T^{(\beta,f)}$ and the other one is connected to $\overline{T}^\beta_{n-2f}$:
		
		\begin{equation}
			T^{(\beta,f)}\cdot \mathfrak{a}_{ij}(P^{\lambda}_{n-2})=\,\raisebox{-.44\height}{\includegraphics[scale=0.7]{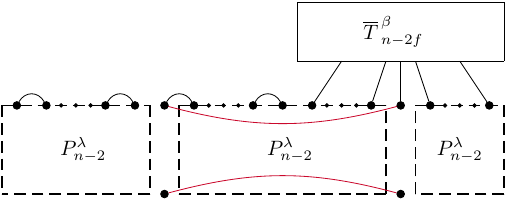}}
		\end{equation}
		
		These configurations can be analyzed similarly to the case $\textbf{(b)}$. The only difference being that the maximum number of chains of arcs is $f$ here, compared to $f-1$ for type $\textbf{(b)}$ configurations. See for example \eqref{eq:2chains_arcs} where the number of chain of arc is equal to $f=2$. Hence we can directly assert that 
		\begin{equation}
			T^{(\beta,f)}\cdot \mathfrak{a}_{ij}(P^{\lambda}_{n-2})=0\,,
		\end{equation}
		which concludes the proof.\medskip
	\end{itemize}
	%
	%
\end{proof}

\section{Proofs for chapter \ref{chap:cn}}\label{app:proof_chap4}

\subsection{Proof of Lemma \ref{lem:trace_rules}}\label{app:proof_lem_rules}

\begin{proof}
	The diversity of expressions on the left-hand-sides of the rules \eqref{eq:trace_rulea_1}-\eqref{eq:trace_rulea_4} which are necessary to fix $\tau_{a}$ (and $\tau_{t}$), is essentially limited by $\mathfrak{b}(\mathcal{A})$-linearity. Representatives of a particular form on the left-hand-sides of the rules \eqref{eq:trace_rulea_1}, \eqref{eq:trace_rulea_2} (similarly \eqref{eq:trace_rulet_1}, \eqref{eq:trace_rulet_2}) are always accessible via cyclic permutations. Taking into account the inversion, the same conclusion is valid for the rules in \eqref{eq:trace_rulea_3}, \eqref{eq:trace_rulea_4} (similarly \eqref{eq:trace_rulet_3}, \eqref{eq:trace_rulet_4})  due to the following simple facts.\medskip
	
	\textbf{Fact 1.} For $[\db{p}u] = [\db{p}I(u)]$ either $u$ or $I(u)$ is fit because $[\bb{p}u]\in \mathfrak{b}(\mathcal{A})$.\medskip
	
	\textbf{Fact 2.} For $[\db{p}u\db{s}v] = [\db{p}I(v)\db{s}I(u)]$ either $u$ or $I(v)$ is fit. Indeed, since $[\bb{p}u\bb{s}v]\in \mathfrak{b}(\mathcal{A})$, either one of the bracelets $[u]$, $[v]$ is in $\mathfrak{b}(\mathcal{A})$ or one of them is empty.\medskip
	
	\textbf{Fact 3.} For $[\db{p}u\db{s}v] = [\db{p}I(v)\db{s}I(u)]$ either $[u],[v]\notin \mathfrak{b}(\mathcal{A})$ or $[u],[v]\in \mathfrak{b}(\mathcal{A})$. In the former case either $|u|_{\bb{n}} > |u|_{\bb{s}}$ or $|v|_{\bb{n}} > |v|_{\bb{s}}$, while in the latter $u,v$ or $I(u), I(v)$ are both fit.\medskip
	
	\noindent Hence we conclude that representatives on the left-hand-sides of \eqref{eq:trace_rulea_1}-\eqref{eq:trace_rulea_4} (resp. \eqref{eq:trace_rulet_1}-\eqref{eq:trace_rulet_4}) are sufficient to define $\tau_{a}$ (resp. $\tau_{t}$).\medskip
	
	\begin{itemize}
		\item[{\it 1)}] For the rules \eqref{eq:trace_rulea_1}, the only alternative representative is $[\dd{s} I(u)]$. Application of \eqref{eq:trace_rulea_1} leads to $\tau_{a}\big([\dd{s} I(u)]\big) = 2\Dim\,[\bb{s} I(u)]$. But the latter equals to $\tau_{a}\big([\dd{s} u]\big)$ by inversion of the representative combined with a cyclic permutation. Hence, \eqref{eq:trace_rulea_1} and \eqref{eq:trace_rulet_1} are defined unambiguously.
		
		\item[{\it 2)}] For the first rule in \eqref{eq:trace_rulea_2}, the equivalent representatives are $[\db{s}v\db{s}u] = [\db{s}I(u)\db{s}I(v)] = [\db{s}I(v)\db{s}I(u)]$. By direct application of \eqref{eq:trace_rulea_2} one calculates $\tau_{a}\big([\db{s}v\db{s}u]\big) = 2\big([\db{s}v\db{s}I(u)] + [\db{s}v][\db{s}u]\big)$. But $[\db{s}v\db{s}I(u)] = [\db{s}u\db{s}I(v)]$ (the two representatives are related by inversion followed by a cyclic permutation), which reproduces $\tau_{a}\big([\db{s}u\db{s}v]\big)$. Other alternatives are analysed along the same lines.
		
		For the second rule in \eqref{eq:trace_rulea_2}, one has to analyse $[\db{s}v][\db{s}u]$, together with the alternatives $[\db{s}I(u)]$ and $[\db{s}I(v)]$ for each factor. By direct application of \eqref{eq:trace_rulea_2} one gets $\tau_{a}\big([\db{s}v][\db{s}u]\big) = 2\big([\db{s}v\db{s}u] + [\db{s}v\db{s}I(u)]\big)$. But $[\db{s}v\db{s}u] = [\db{s}u\db{s}v]$ (by a cyclic permutation) and $[\db{s}v\db{s}I(u)] = [\db{s}u\db{s}I(v)]$ (by composition of inversion and a cyclic permutation), which leads to $\tau_{a}\big([\db{s}u][\db{s}v]\big)$. Consider also $\tau_{a}\big([\db{s}u][\db{s}I(v)]\big) = 2\big([\db{s}u\db{s}I(v)] + [\db{s}u\db{s}v]\big)$ which equals $\tau_{a}\big([\db{s}u][\db{s}v]\big)$ directly. Other alternatives are analysed along the same lines. Hence, \eqref{eq:trace_rulea_2} and \eqref{eq:trace_rulet_2} are defined unambiguously.
		\item[{\it 3)}]Note that the admissible representatives on the left-hand-sides of the rules \eqref{eq:trace_rulea_3} are fixed unambiguously.\medskip
		\item[{\it 4)}] For the first rule in \eqref{eq:trace_rulea_4}, let $u,v$ are fit. Then for the alternative representative one has $\tau_{a}\big([\db{p}v\db{p}u]\big) = [\bb{n}v\bb{s}I(u)] = [\bb{n}u\bb{s}I(v)]$. Else, let $|u|_{\bb{s}} > |u|_{\bb{n}}$. Then one checks $\tau_{a}\big([\db{p}I(u)\db{p}I(v)]\big) = [\bb{n}I(u)][\bb{s}I(v)] = [\bb{n}u][\bb{s}v]$.
		
		For the second rule in \eqref{eq:trace_rulea_4} one considers $\tau_{a}\big([\db{p}v][\db{p}u]\big) = [\bb{n}v\bb{s}I(u)] = [\bb{n}u\bb{s}I(v)]$.
	\end{itemize}
\end{proof}

\subsection{Proof of Theorem \ref{thm:Laplace}}\label{app:proof_Laplace}
\begin{proof}
	We restrict our attention on the operator $\Delta_a$, as everything we present here can be directly transposed to prove that for $\Delta_t = \frac{1}{2}\,\tau_t\circ \partial^{2}$. Recall that from the definition of the average conjugacy class sum \eqref{eq:averaged_class_sum}, one has for any $b\in \dbn$ such that $\GCT(b)=\zeta$
	\begin{equation}\label{eq:An_averaged_class_sum}
		A_n\,\bar{K}_\zeta=\sum_{s \in \sn } s \, A_n b \, s^{\shortminus 1}\,.
	\end{equation}
	The product $A_n\, b = \sum_{1\leqslant i < j\leqslant n} d_{ij}\, b$ consists in summing over all possibilities of placing an arc at a pair of lower nodes of $b$, which is mimicked by imposing Leibniz rule in the definition of $\partial$. Without loss of generality, we assume a convenient representative $b$ in the conjugacy class, such that all bracelets in $\GCT(b)$ come from mutually non-intersecting cycles, and such that each independent cycle is either a cycle permutation, or, if $b$ has at least one arc, there are no intersections among the vertical lines, while each arc in the upper or lower row is incident to a pair of nodes $(i,i+1)$ (for $i \leqslant n-1$) or $(1,n)$ (for example, see \eqref{eq:convenient_representative}). The independent cycles will be schematically illustrated by rectangular blocks, with only particular significant lines specified explicitly. For example, 
	\begin{equation}
		\raisebox{-.4\height}{\includegraphics[scale=0.4]{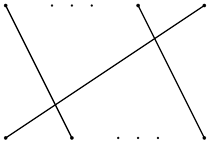}}\rightarrow \raisebox{-.4\height}{\includegraphics[scale=0.4]{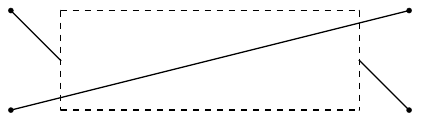}} \;, \hspace{2cm} \raisebox{-.4\height}{\includegraphics[scale=0.4]{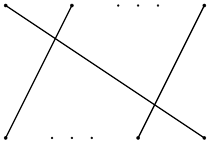}}\rightarrow \raisebox{-.4\height}{\includegraphics[scale=0.4]{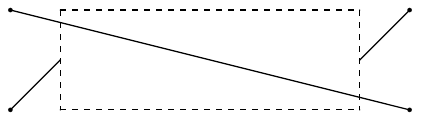}}\;,
	\end{equation}
	\begin{equation}
		\raisebox{-.4\height}{\includegraphics[scale=0.4]{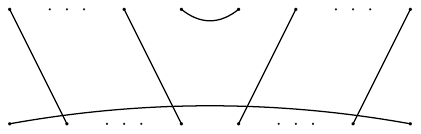}}\rightarrow \raisebox{-.4\height}{\includegraphics[scale=0.4]{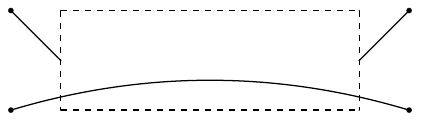}}\;, \hspace{1cm}
		\raisebox{-.4\height}{\includegraphics[scale=0.4]{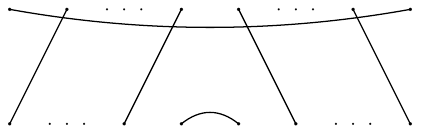}}\rightarrow \raisebox{-.4\height}{\includegraphics[scale=0.4]{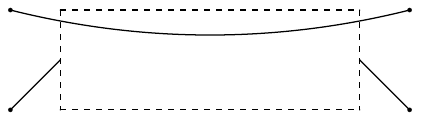}}
	\end{equation}
	There are two major cases how one can place the arc upon multiplication by $d_{ij}$: either it  joins two nodes within a single cycle, or the two nodes belong to two different cycles. Only the bracelets coming from these blocks will be affected by multiplication by $d_{ij}$, so all other blocks can be ignored while analysing each particular $i<j$. This property is manifested in the definition of $\tau$ as a $\mathbb{C}[\mathfrak{b}(\mathcal{A})]$-linear operation. We will work in terms of particular representatives in the bracelets, and read off the resulting representatives coming from the initial ones. To fix a representative, we will highlight the starting point by a circle around a node, such that one starts reading along the adjacent line (if there is a diagram attached below, one ignores it). We always imply the word $u$ read off by following the lines hidden behind the first block, while for the second block (if any) the word is $v$. Additional arrows entering each block fix the direction of reading which leads to the corresponding word. Passing the block in the opposite direction leads to $I(u)$ instead of $u$ and $I(v)$ instead of $v$. For example, from the following schema one reads off the representative $[\bb{n}u\bb{s}I(v)]$:
	\begin{equation}
		\raisebox{-.4\height}{\includegraphics[scale=0.7]{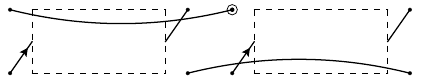}}
	\end{equation}
	
	Each letter in a word parametrising a bracelet corresponds to a line with two endpoints. Upon multiplication $d_{ij}\, b$, put $0$, $1$ or $2$ dots above a letter in each representative depending on how many endpoints are occupied by the $(i,j)$-arc attached to the lower row of $b$. It is clear that $\bb{n}$ can never acquire a dot, $\bb{p}$ can acquire at most $1$ dot, while $\bb{s}$ can acquire up to $2$ dots. This is exactly encoded in the definition of $\partial$ by putting $\partial(\bb{n}) = \partial(\db{p}) = \partial(\dd{s}) = 0$. To read off the rules for $\tau$ such that the formula for $\Delta_a$ in \eqref{eq:Laplace} holds, it is sufficient to consider the following cases for the possibilities of attaching
	the arc.
	\begin{itemize}
		\item[{\it 1)}] The arc is attached to another arc, which leads to a cycle:
		\begin{equation}
			\begin{aligned}
				\dfrac{1}{2}\tau\big(\left[\dd{s}u\right]\big) \;\;=\;\; \hspace{0.1cm}\raisebox{-.4\height}{\includegraphics[scale=0.4]{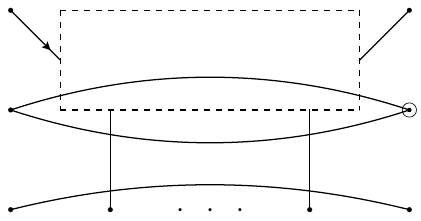}}\;\;=\;\;\Dim\;  \raisebox{-.4\height}{\includegraphics[scale=0.4]{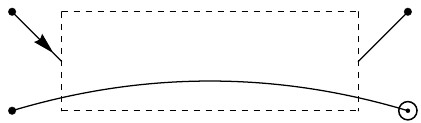}} \;\;=\;\; \Dim\, [\bb{s}u]\,,
			\end{aligned}    
		\end{equation}
		so one arrives at the rule \eqref{eq:trace_rulea_1}.
		
		\item[{\it 2)}] The arc is attached to one of the endpoints of two particular arcs of $b$. One sums over the four possibilities in this case:
		\begin{equation}
			\def\arraystretch{2}
			\begin{array}{ll}
				\tau\big(\left[\db{s}u\db{s}v\right]\big) & =\;\;\hspace{0.1cm}\raisebox{-.4\height}{\includegraphics[scale=0.6]{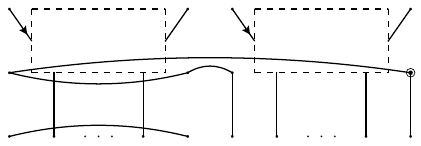}}\;+\;\raisebox{-.4\height}{\includegraphics[scale=0.6]{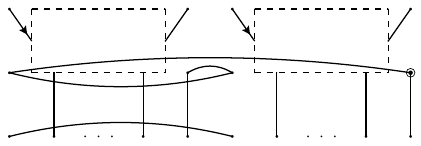}}\;+ \\
				\hfill &\,\hspace{0.6cm}\raisebox{-.4\height}{\includegraphics[scale=0.6]{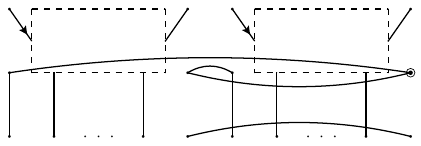}}\;+\;\raisebox{-.4\height}{\includegraphics[scale=0.6]{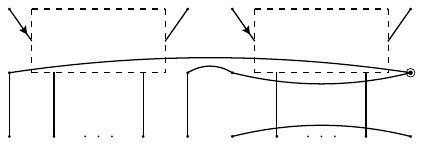}}\\
				\multicolumn{2}{l}{ = \;2\,\raisebox{-.4\height}{\includegraphics[scale=0.6]{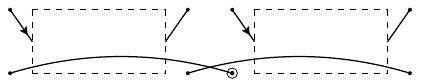}}\; + \;2\,\raisebox{-.4\height}{\includegraphics[scale=0.6]{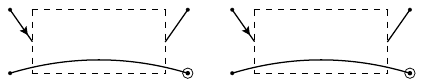}}\;=\;2\,[\bb{s}u\bb{s}I(v)] + 2\,[\bb{s}u][\bb{s}v]\,,}
			\end{array}    
		\end{equation}
		so one reproduces the first rule in \eqref{eq:trace_rulea_2}.
		
		\item[{\it 3)}] Next, let us consider the case when one endpoint of the arc is attached to a passing line, while the other one occupies one of the endpoints of a particular arc in $b$. Then one sums over the two possibilities, where the structure of the representatives assumes that $u$ is fit
		\begin{equation}
			\begin{array}{ll}
				\tau\big(\left[\db{p}u\db{s}v\right]\big) & =\hspace{0.1cm} \raisebox{-.4\height}{\includegraphics[scale=0.6]{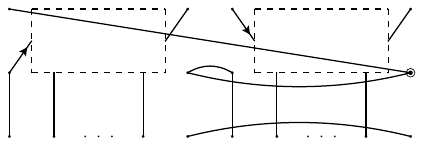}}  \;+\; \raisebox{-.4\height}{\includegraphics[scale=0.6]{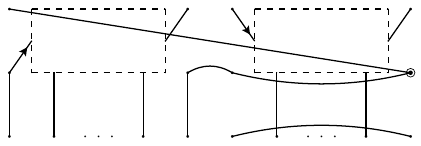}}\\
				\multicolumn{2}{l}{= \raisebox{-.4\height}{\includegraphics[scale=0.6]{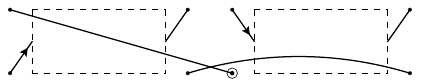}}\;+\;\raisebox{-.4\height}{\includegraphics[scale=0.6]{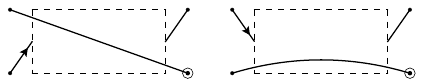}} = [\bb{p}u\bb{s}I(v)]\;+\;[\bb{p}u][\bb{s}v]\,,}
			\end{array}    
		\end{equation}
		so one recovers the first rule in \eqref{eq:trace_rulea_3}.
		\item[{\it 4)}] Finally, the arc can occupy the lower endpoints of two passing lines. First, suppose that $u$ is fit (so is $v$), which leads to
		\begin{equation}
			\begin{aligned}
				\tau\big(\left[\db{p}u\db{p}v\right]\big) & =\hspace{0.1cm}\raisebox{-.4\height}{\includegraphics[scale=0.6]{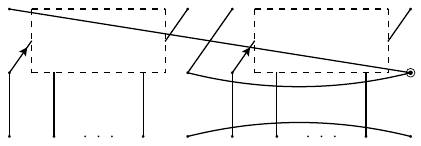}}\;=\;\hspace{0.1cm}\,\raisebox{-.4\height}{\includegraphics[scale=0.6]{fig/theo3_f_r1fit.pdf}}\;=\; [\bb{n}u\bb{s}I(v)]\,,
			\end{aligned}    
		\end{equation}
		in agreement with the first rule in \eqref{eq:trace_rulea_4} for the case of a fit representative. In the other case, when neither $u$ nor $v$ is not fit, with $|u|_{\bb{s}} > |u|_{\bb{n}}$, one has
		\begin{equation}
			\begin{aligned}
				\tau\big(\left[\db{p}u\db{p}v\right]\big) & =\;\hspace{0.1cm}\raisebox{-.4\height}{\includegraphics[scale=0.6]{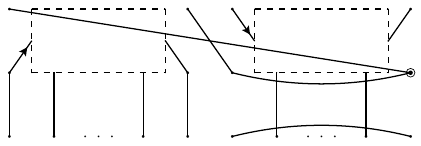}}\;=\;\hspace{0.1cm}\,\raisebox{-.4\height}{\includegraphics[scale=0.6]{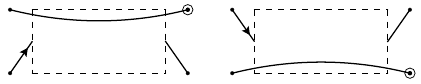}}\;=\; [\bb{n}u][\bb{s}v]\,,
			\end{aligned}    
		\end{equation}
		which is again in agreement with the first rule in \eqref{eq:trace_rulea_4}. 
	\end{itemize}
	We are left with the cases when the arc connects two independent cycles.
	\begin{itemize}
		\item[{\it 5)}] If the arc occupies the endpoints of two particular arcs in $b$, one sums over four possibilities
		\begin{equation}
			\def\arraystretch{2}
			\begin{array}{ll}
				\tau\big(\left[\db{s}u\right]\left[\db{s}v\right]\big) = & \;\raisebox{-.4\height}{\includegraphics[scale=0.6]{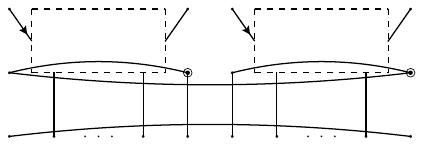}}\;+\;\raisebox{-.4\height}{\includegraphics[scale=0.6]{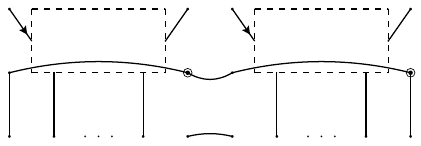}}\\
				\hfill & + \;\raisebox{-.4\height}{\includegraphics[scale=0.6]{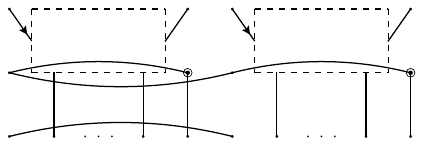}}\;+\;\raisebox{-.4\height}{\includegraphics[scale=0.6]{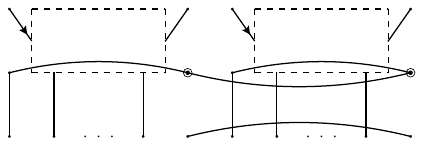}} \\
				\multicolumn{2}{l}{=\hspace{0.1cm}\,2\,\raisebox{-.4\height}{\includegraphics[scale=0.6]{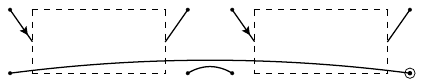}}\,+\,2\,\raisebox{-.4\height}{\includegraphics[scale=0.6]{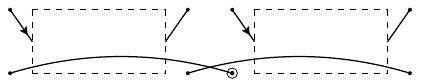}}\;=\; 2\, [\bb{s}u\bb{s}v]\;+\;2\, [\bb{s}u\bb{s}I(v)]\,,}
			\end{array}
		\end{equation}
		which reproduces the second rule in \eqref{eq:trace_rulea_2}.
		\item[{\it 6)}] If the arc occupies the passing line in one cycle and an endpoint of an arc in the other cycle of $b$, one sums over two possibilities ($u$ is fit)
		\begin{equation}
			\def\arraystretch{2}
			\begin{array}{ll}
				\tau\big(\left[\db{p}u\right]\left[\db{s}v\right]\big) = & \raisebox{-.4\height}{\includegraphics[scale=0.6]{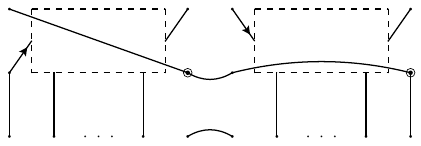}}\;+\;\raisebox{-.4\height}{\includegraphics[scale=0.6]{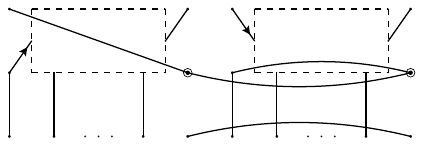}} \\
				\multicolumn{2}{l}{=\hspace{0.1cm}\,\raisebox{-.4\height}{\includegraphics[scale=0.6]{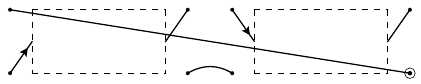}}\,+\,\raisebox{-.4\height}{\includegraphics[scale=0.6]{fig/theo3_e_r2.pdf}}\; = \; [\bb{p}u\bb{s}v] \;+\; [\bb{p}u\bb{s}I(v)]\,,}
			\end{array}    
		\end{equation}
		in agreement with the second rule in \eqref{eq:trace_rulea_3}.
		\item[{\it 7)}] Finally, if the arc occupies two passing lines in two independent cycles of $b$ (with both $u,v$ fit) one has
		\begin{equation}
			\begin{aligned}
				\tau\big(\left[\db{p}u\right]\left[\db{p}v\right]\big) & = \raisebox{-.4\height}{\includegraphics[scale=0.6]{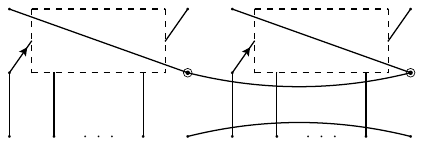}}\;=\;\hspace{0.1cm}\,\raisebox{-.4\height}{\includegraphics[scale=0.6]{fig/theo3_f_r1fit.pdf}}\;=\; [\bb{n}u\bb{s}I(v)]\,,
			\end{aligned}    
		\end{equation}
		which coincides with the second rule in \eqref{eq:trace_rulea_4}.
	\end{itemize}
	Due to Lemma \ref{lem:trace_rules}, the considered cases are sufficient to prove the assertion.
\end{proof}

\subsection{Proof of Lemma \ref{lem:sym}}\label{app:Lemma_sym}

\begin{proof}
	Let us first prove the assertion for $b\in \dbn$ such that $\zeta = \GCT(b)$ is a single bracelet. Let us show that upon a convenient choice of $b$ among the conjugate diagrams, any $t \in \Stab_{\Sn{n}}(b)$ can be written as $t = c^{m} r^{p}$ for some $m = 0,\dots, n-1$ and $p = 0,1$, where
	\begin{equation}
		c\; =\; \raisebox{-.4\height}{\includegraphics[scale=0.6]{fig/csym.pdf}}\;, \hspace{3cm} r\;= \;\raisebox{-.4\height}{\includegraphics[scale=0.6]{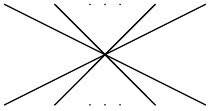}}
	\end{equation}
	(the fact that $c = r$ for $n = 2$ does not lead to any problem). Let us say that two lines in a diagram are {\it adjacent} if there exists a pair of vertically aligned nodes which are endpoints of these lines, {\it i.e.} these endpoints are identified by the step 2 in the definition of the map $\GCT$ in Section \ref{sec:classes-bracelets_map}. For any $s\in \Sn{n}$, conjugation $b\to sb s^{-1}$ preserves the adjacency relation, which is exactly the property encoded by the bracelet $\GCT(b)$ (by placing letters along an unoriented circle, one exactly defines the nearest neighbours). If a conjugation transformation preserves the diagram (with $t\in \Stab_{\Sn{n}}(b)$), then the  result of such transformation can be described as follows: each passing line takes the place of another passing line and each upper (respectively, lower) arc takes the place of another upper (respectively, lower) arc. Without loss of generality, take the convenient diagram in the conjugacy class: $b = c$ if $b$ is a permutation, else take
	\begin{equation}\label{eq:convenient_representative}
		b=\hspace{0.1cm}\raisebox{-.4\height}{\includegraphics[scale=0.7]{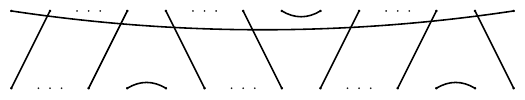}}\;.
	\end{equation}
	In view of preservation of adjacency of lines, the particularly chosen diagram $b$ can by preserved by adjoint transformations composed only by cyclic permutations of nodes and inversion. 
	\vskip 5 pt
	With this at hand, let us establish the bijection between the centraliser $\Stab_{\Sn{n}}(b)$ and the turnover stabiliser $S(w_b)$ for a particular representative $w_b\in \zeta$. Namely, to read off the word $w_b$ (as described in steps 1, 2 in the definition of the map $\GCT$), start at the upper right node and follow the adjacent edge. With a slight abuse of notation, define the action of permutations $c$ and $r$ on words of the length $n$: $c(\bb{a}_1\dots\bb{a}_n) = \bb{a}_{n} \bb{a}_{1}\dots \bb{a}_{n-1}$ and $r(\bb{a}_1\dots\bb{a}_n) = cI(\bb{a}_1\dots\bb{a}_n) = \bb{a}_{1}\bb{a}_{n}\dots\bb{a}_{2}$. Then it is straightforward that $c^{m}r^{p}\in \Stab_{\Sn{n}}(b)$ iff $c^{m}r^{p}\in S(w_b)$, which proves the assertion.
\end{proof} 
	\chapter{Examples of full trace projectors}\label{app:Traceprojector}

\section{The full trace projector $P^{(\emptyset)}_{12}$}\label{sec:P12}
	\begin{equation*}
	P^{(\emptyset)}_{12}=\raisebox{-0.955\height}{\includegraphics[scale=1.2]{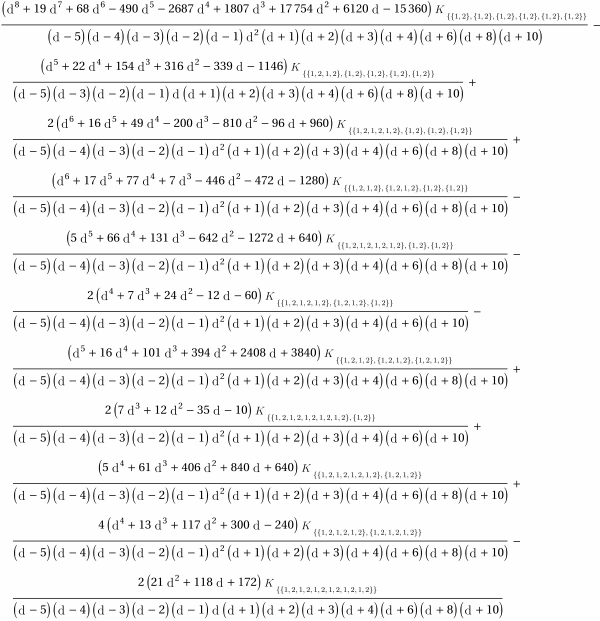}}\,
\end{equation*}

\section{The full trace projector $P^{(\emptyset)}_{14}$}\label{sec:P14}

\begin{equation*}
	P^{\emptyset}_{14}=\raisebox{-0.965\height}{\includegraphics[scale=1.5]{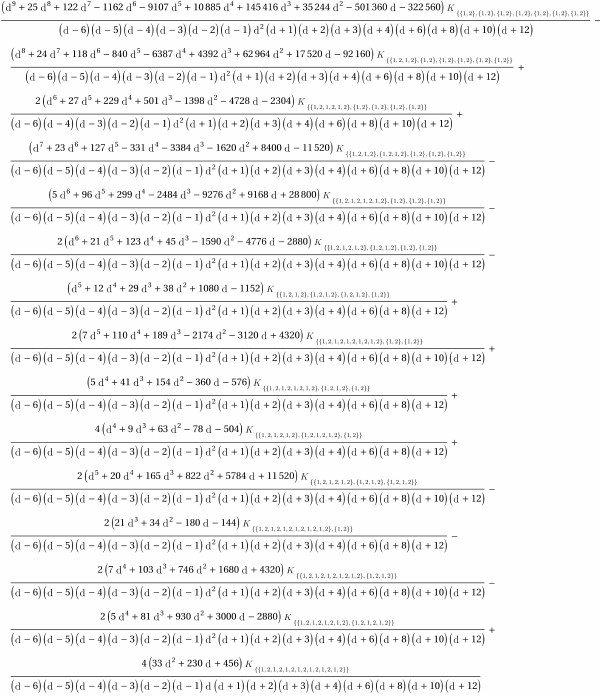}}\,
\end{equation*}

\section{The full trace projector $P^{(\emptyset)}_{16}$}\label{sec:P16}

\begin{equation*}
	P^{\emptyset}_{16}=\raisebox{-0.975\height}{\includegraphics[scale=1.6]{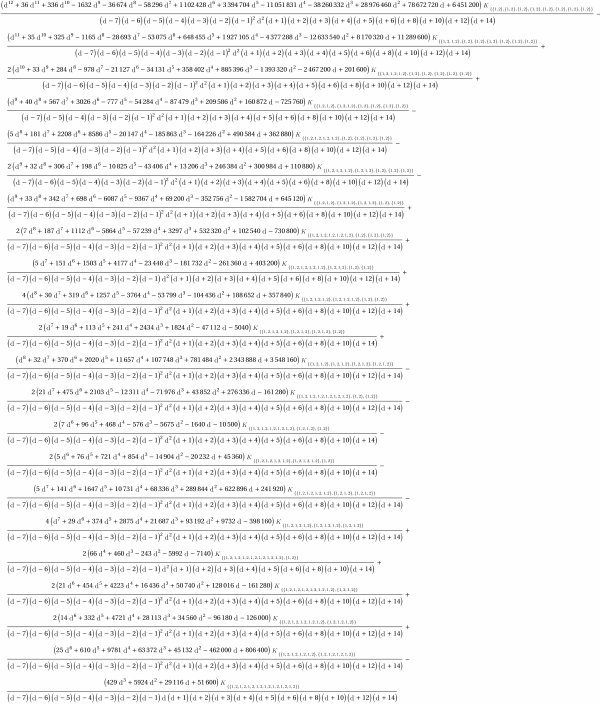}}\,
\end{equation*}

%
	\bibliography{biblio_thesis.bib}
	
\newpage
\thispagestyle{empty}
\newgeometry{hmargin=0.3cm,vmargin=0.3cm}
\hspace{-0.7cm}

\includegraphics[width=0.22\textwidth]{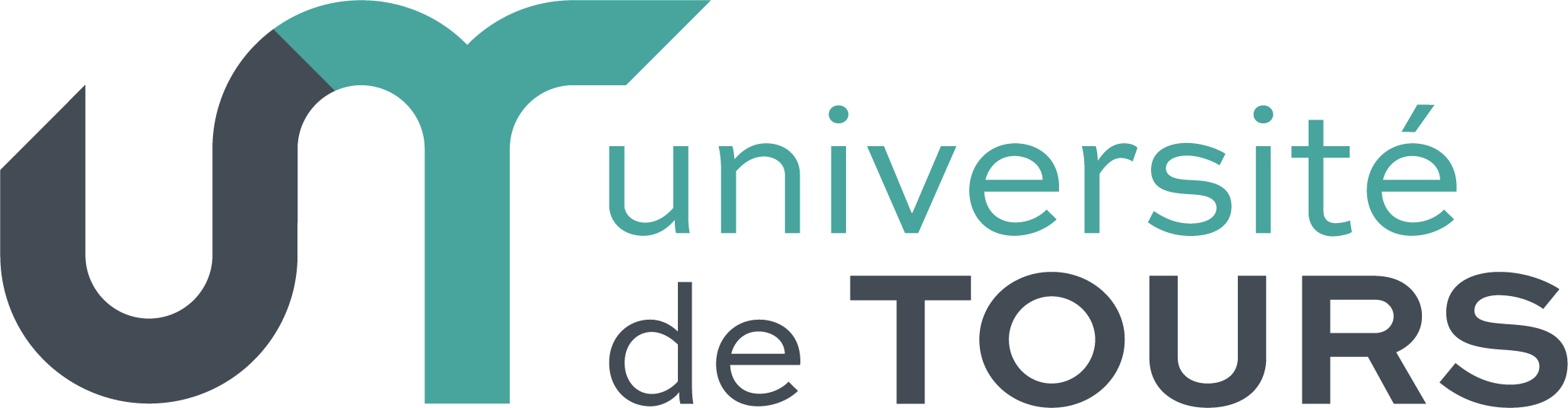}
\hfill
\includegraphics[width=0.13\textwidth]{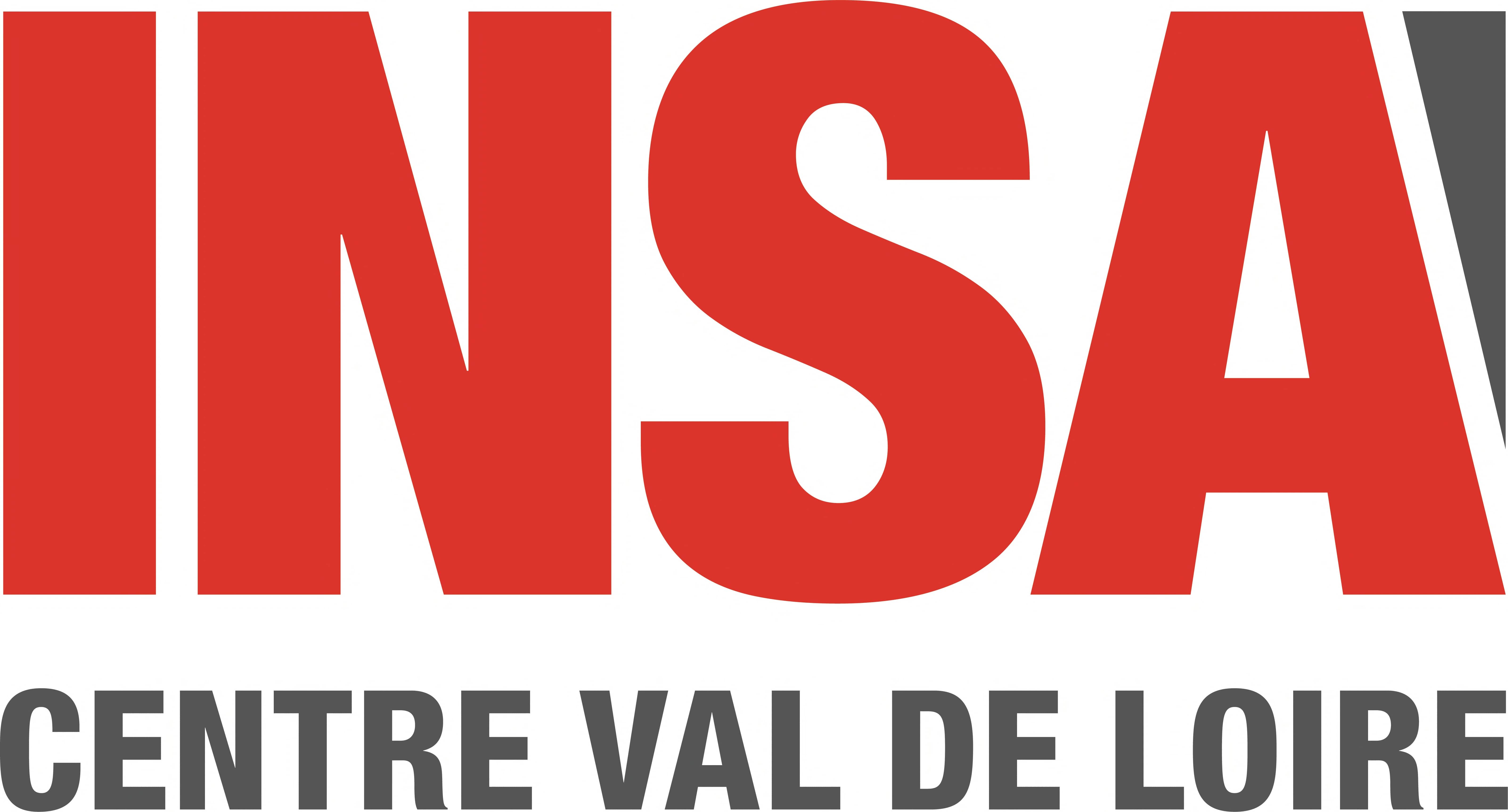}
\hfill
\includegraphics[width=0.15\textwidth]{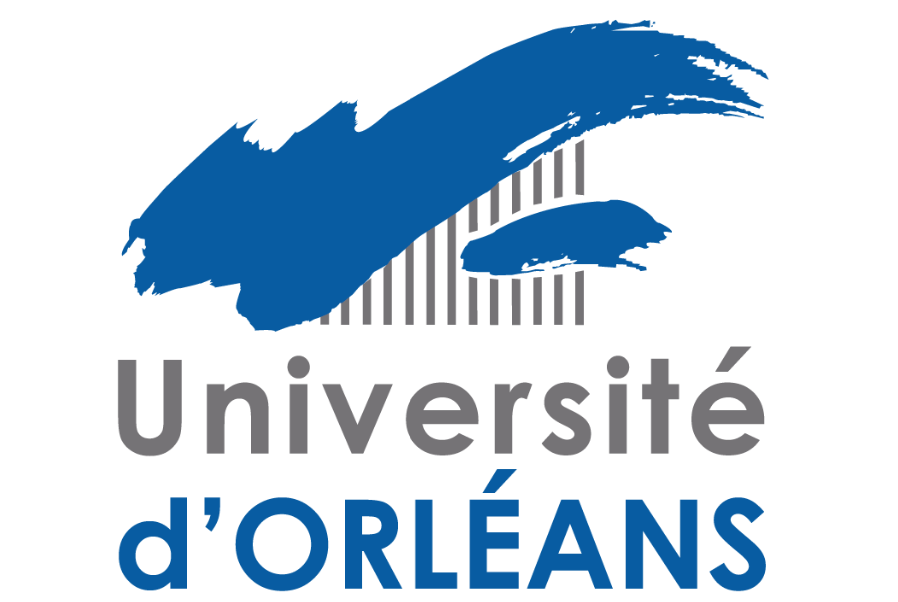}\\
\vspace{-0.4cm}

\begin{sffamily}
	
\begin{center}
	\LARGE{Thomas HELPIN}\\ 
	\textbf{Décompositions irréductibles des tenseurs via l'algèbre de Brauer et applications à la gravitation métrique-affine.}
\end{center}

\vspace{0.1cm}
{
	\fontsize{10pt}{10pt}\selectfont
	
	\fbox{
		\begin{minipage}{19.5cm}
			{\setlength{\parindent}{6pt}
				\noindent\textbf{Résumé} :\smallskip
				
		\noindent Dans la première partie de cette thèse, on utilise la théorie des représentations des groupes et des algèbres afin d'obtenir une décomposition irréductible des tenseurs dans le contexte de la gravitation métrique-affine. En particulier, on considère l'action du groupe orthogonal $\Or(1, \Dim-1)$ sur le tenseur de Riemann associé à une connexion affine, avec torsion et non-métricité, définit sur une variété pseudo-Riemannienne. Cette connexion est l'ingrédient caractéristique de la gravitation métrique-affine.
		
		La décomposition irréductible du tenseur de Riemann effectuée dans cette thèse est conçue pour l'étude des théories invariantes projectives de la gravitation métrique-affine. Les propriétés suivantes impliquent l'unicité de la décomposition. Premièrement, elle s'effectue en deux étapes: on considère l'action de $\GL(\Dim,\mathbb{R})$ puis l'action de $\Or(1,\Dim-1)$. Deuxièmement, le nombre de tenseurs invariants projectifs dans la décomposition est maximal. Troisièmement, la décomposition est orthogonale par rapport au produit scalaire canonique induit par la métrique. La même procédure est appliquée à la distorsion de la connexion affine. Enfin, à partir de ces décompositions, nous obtenons les Lagrangiens quadratiques généraux en la distortion et en la courbure de Riemann.
		
		\medskip
		\noindent Dans la deuxième partie de cette thèse, nous construisons les opérateurs de projection utilisés pour obtenir les décompositions mentionnées précédemment. Ces opérateurs sont réalisés en termes de l'algèbre du groupe symétrique $\C\sn$ et de l'algèbre de Brauer $\bn(\Dim)$, qui sont respectivement liées à l'action de $\GL(\Dim,\C)$ (et sa forme réelle $\GL(\Dim,\mathbb{R}$)) et à l'action de $\Or(\Dim,\C)$ (et sa forme réelle $\Or(1,\Dim-1)$) sur les tenseurs via la dualité de Schur-Weyl.
		
		Tout d'abord, nous proposons une approche alternative aux formules connues pour les idempotents centraux de $\C\sn$. Ces éléments réalisent une décomposition réductible unique, connue sous le nom de décomposition isotypique. Cette décomposition s'avère remarquablement pratique pour aboutir à la décomposition irréductible par rapport à $\GL(\Dim,\mathbb{R})$ recherchée.
		
		Ensuite, nous construisons les éléments de $\bn(\Dim)$ qui réalisent la décomposition isotypique d'un tenseur par rapport à l'action de $\Or(\Dim,\C)$. Cette décomposition est irréductible sous $\Or(\Dim,\C)$ lorsqu'elle est appliquée à un tenseur $\GL(\Dim,\C)$ irréductible d'ordre 5 ou moins. En conséquence directe de la construction, nous proposons une solution au problème de décomposition d'un tenseur arbitraire en sa partie sans trace, doublement sans trace, et ainsi de suite.
		
		Enfin, dans le dernier chapitre, nous présentons une technique qui optimise l'utilisation des opérateurs de projection par des systèmes de calcul formel.  Ces résultats ont conduit au développement de plusieurs packages Mathematica liés au bundle \textit{xAct} pour le calcul tensoriel en théorie des champs. Des fonctions particulières sont présentées tout au long du manuscrit. \smallskip
		
		\noindent\textbf{Mots-clés} : Gravitation métrique-affine, tenseur de Riemann, invariance projective, décomposition irreductible, dualités de Schur-Weyl, algèbre de Brauer.
			}			
		\end{minipage}
	}
	\vspace{0.5cm}
	
	\fbox{
	\begin{minipage}{19.5cm}
		{\setlength{\parindent}{6pt}	
			\noindent\textbf{Abstract}:\smallskip
			
			\noindent In the first part of this thesis, we make use of representation theory of groups and algebras to perform an irreducible decomposition of tensors in the context of metric-affine gravity. In particular, we consider the action of the orthogonal group $\Or(1,\Dim-1)$ on the Riemann tensor associated with an affine connection defined on a $\Dim$-dimensional pseudo-Riemannian manifold. This connection, with torsion and non-metricity, is the characteristic ingredient of metric-affine theories of gravity.  \smallskip
			
			The irreducible decomposition of the Riemann tensor carried out in this thesis is devised for the study of projective invariant theories of metric-affine gravity. The following properties imply the uniqueness of the decomposition. Firstly, it is performed in two steps: we consider the action of $\GL(\Dim,\R)$ and then the action of $\Or(1,\Dim-1)$. Secondly, the number of projective invariant irreducible tensors in the decomposition is maximal. Thirdly, the decomposition is orthogonal with respect to the canonical scalar product induced by the metric. The same procedure is applied to the distortion of the affine connection. Finally, from these decompositions, we derive the general quadratic Lagrangians in the distortion and in the Riemann tensor.
			
			\medskip
			\noindent In the second part of this thesis, we construct the projection operators used for the aforementioned decomposition. They are realized in terms of the symmetric group algebra $\C\sn$ and of the Brauer algebra $\bn(\Dim)$ which are related respectively to the action of $\GL(\Dim,\C)$ (and its real form $\GL(\Dim,\mathbb{R})$) and to the action of $\Or(\Dim,\C)$ (and its real form $\Or(1,\Dim-1)$) on tensors via the Schur-Weyl duality.
			
			\smallskip
			First of all, we give an alternative approach to the known formulas for the central idempotents of $\C\sn$. These elements provide a unique reducible decomposition, known as the isotypic decomposition. For our purposes, this decomposition is remarkably handy to arrive at the sought after irreducible decomposition with respect to $\GL(\Dim,\mathbb{R})$.
			
			\smallskip		
			Then, we construct the elements in $\bn(\Dim)$ which realize the isotypic decomposition of a tensor under the action of $\Or(\Dim,\C)$. This decomposition is irreducible under $\Or(\Dim,\C)$ when applied to an irreducible $\GL(\Dim,\C)$ tensor of order $5$ or less. As a by product of the construction, we give a solution to the problem of decomposing an arbitrary tensor into its traceless part, doubly traceless part and so on.
			
			\smallskip
			Finally, in the last chapter  we present a technique which optimizes the use of the projection operators by computer algebra systems. These results led to the development of several Mathematica packages linked to the \textit{xAct} bundle for tensor calculus in field theory. Particular functions are presented along the manuscript. \smallskip 
			
			\noindent\textbf{Keywords:} Metric-affine gravity, Riemann tensor, projective invariance, irreducible decomposition, Schur-Weyl dualities, Brauer algebra.
		}
	\end{minipage}
	}
	\vfill
}
	
\end{sffamily}

\end{document}